	\def\appendixautorefname{Appendix}%
\pgfplotsset{compat=1.16}
\newtheorem{thm}{Theorem}[section]
\newtheorem{cor}{Corollary}[section]
\newtheorem{lem}{Lemma}[section]
\newtheorem{fact}{Fact}[section]
\newtheorem{prop}{Proposition}[section]
\newtheorem{defn}{Definition}[section]
\renewcommand{\vec}{\bm}
\newcommand{\CA}{\mathcal{A}}
\newcommand{\CC}{\mathcal{C}}
\newcommand{\BC}{\mathbb{C}}
\newcommand{\CD}{\mathcal{D}}
\newcommand{\CF}{\mathcal{F}}
\newcommand{\CH}{\mathcal{H}}
\newcommand{\CL}{\mathcal{L}}
\newcommand{\CO}{\mathcal{O}}
\newcommand{\CR}{\mathcal{R}}
\newcommand{\BR}{\mathbb{R}}
\newcommand{\BZ}{\mathbb{Z}}
\newcommand{\vA}{\bm{A}}
\newcommand{\vB}{\bm{B}}
\newcommand{\vC}{\bm{C}}
\newcommand{\vD}{\bm{D}}
\newcommand{\vF}{\bm{F}}
\newcommand{\vH}{\bm{H}}
\newcommand{\vI}{\bm{I}}
\newcommand{\vL}{\bm{L}}
\newcommand{\vM}{\bm{M}}
\newcommand{\vN}{\bm{N}}
\newcommand{\vO}{\bm{O}}
\newcommand{\vP}{\bm{P}}
\newcommand{\vPi}{\bm{\Pi}}
\newcommand{\vR}{\bm{R}}
\newcommand{\vS}{\bm{S}}
\newcommand{\vT}{\bm{T}}
\newcommand{\bt}{\bar{t}}
\newcommand{\vU}{\bm{U}}
\newcommand{\vV}{\bm{V}}
\newcommand{\vW}{\bm{W}}
\newcommand{\vX}{\bm{X}}
\newcommand{\vY }{\bm{Y }}
\newcommand{\vZ}{\bm{Z}}
\newcommand{\vrho}{\bm{ \rho}}
\renewcommand{\L}{\left}
\newcommand{\R}{\right}
\newcommand{\bnu}{\bar{\nu}}
\newcommand{\bmu}{\bar{\mu}}
\newcommand{\bvH}{\bar{\vH}}
\newcommand{\bE}{\bar{E}}
\newcommand{\bomega}{\bar{\omega}}
\newcommand{\tOmega}{\tilde{\Omega}}
\newcommand{\tCO}{\tilde{\CO}}
\newcommand{\dagg}{\dagger}
\newcommand{\vertiii}[1]{{\left\vert\kern-0.25ex\left\vert\kern-0.25ex\left\vert #1 \right\vert\kern-0.25ex\right\vert\kern-0.25ex\right\vert}}
\newcommand{\norm}[1]{\Vert {#1} \Vert}
\newcommand{\normp}[2]{\norm{#1}_{#2}}
\newcommand{\lnormp}[2]{\lnorm{#1}_{#2}}
\newcommand{\labs}[1]{\left\vert {#1} \right\vert}
\newcommand{\lnorm}[1]{\left\Vert {#1} \right\Vert}
\newcommand{\e}{\mathrm{e}}
\newcommand{\rd}{\mathrm{d}}
\newcommand{\ri}{\mathrm{i}}
\newcommand*{\tr}{\mathrm{Tr}}
\newcommand*{\poly}{\mathrm{Poly}}
\newcommand{\indicator}{\mathbbm{1}}
\newcommand{\Lword}[1]{Lindbladian}
\newcommand{\ceil}[1]{\left\lceil{#1}\right\rceil}
\newcommand{\nrm}[1]{\left\| #1 \right\|}
\newcommand{\ipc}[2]{\left\langle#1,#2\right\rangle}
\DeclareMathOperator{\sinc}{sinc}
\DeclarePairedDelimiterX{\braket}[1]{\langle}{\rangle}{#1}
\DeclarePairedDelimiterX\ketbra[2]{| }{|}{#1 \delimsize\rangle\!\delimsize\langle #2}	
\DeclarePairedDelimiterX\dotp[2]{\langle}{\rangle}{#1, #2}
\newcommand{\bigO}[1]{\mathcal{O}\left( #1 \right)}
\newcommand{\bigOt}[1]{\widetilde{\mathcal{O}}\left( #1 \right)}
\newcommand{\bigOm}[1]{\Omega\left( #1 \right)}
\newcommand{\bigTh}[1]{\Theta\left( #1 \right)}
\DeclareRobustCommand*{\pmzerodot}{%
	\nfss@text{%
		\sbox0{$\vcenter{}$}
		\sbox2{0}%
		\sbox4{0\/}%
		\ooalign{%
			0\cr
			\hidewidth
			\kern\dimexpr\wd4-\wd2\relax 
			\raise\dimexpr(\ht2-\dp2)/2-\ht0\relax\hbox{%
				\if b\expandafter\@car\f@series\@nil\relax
				\mathversion{bold}%
				\fi
				$\cdot\m@th$%
			}%
			\hidewidth
			\cr
			\vphantom{0}
		}%
	}%
}
	\newcounter{tikzfigcntr}
	\newcommand{\authnote}[3]{{\color{#3} {\bf  #1:} #2}}
    \newcommand{\authnote}[3]{}
\def\l@subsection#1#2{}
\def\l@subsubsection#1#2{}
\begin{document}
\renewcommand{\appendixautorefname}{Appendix}
\renewcommand{\chapterautorefname}{Chapter}
\renewcommand{\sectionautorefname}{Section}
\renewcommand{\subsectionautorefname}{Section}
\renewcommand{\subsubsectionautorefname}{Section}	

	\title{Quantum Thermal State Preparation}
	\author{Chi-Fang Chen}
	\email{chifang@caltech.edu}
	\affiliation{Institute for Quantum Information and Matter,
	California Institute of Technology, Pasadena, CA, USA}
        \affiliation{AWS Center for Quantum Computing, Pasadena, CA}
        \author{Michael J. Kastoryano}
        \affiliation{AWS Center for Quantum Computing, Pasadena, CA}
        \affiliation{IT University of Copenhagen, Denmark}
	\author{Fernando G.S.L. Brand\~ao}
	\affiliation{Institute for Quantum Information and Matter,
	California Institute of Technology, Pasadena, CA, USA}
	\affiliation{AWS Center for Quantum Computing, Pasadena, CA}
        \author{András Gilyén}
        \affiliation{Alfréd Rényi Institute of Mathematics, HUN-REN, Budapest, Hungary}
 
	\begin{abstract}
Preparing ground states and thermal states is essential for simulating quantum systems on quantum computers. Despite the hope for practical quantum advantage in quantum simulation, popular state preparation approaches have been challenged. Monte Carlo-style quantum Gibbs samplers have emerged as an alternative, but prior proposals have been unsatisfactory due to technical obstacles rooted in energy-time uncertainty. 
We introduce simple continuous-time quantum Gibbs samplers that overcome these obstacles by efficiently simulating Nature-inspired quantum master equations (\Lword{}s).
In addition, we construct the first provably accurate and efficient algorithm for preparing certain purified Gibbs states (called thermal field double states in high-energy physics) of rapidly thermalizing systems; this algorithm also benefits from a quantum walk speedup.
Our algorithms' costs have a provable dependence on temperature, accuracy, and the mixing time (or spectral gap) of the relevant \Lword{}. We complete the first rigorous proof of finite-time thermalization for physically derived \Lword{}s by developing a general analytic framework for nonasymptotic secular approximation and approximate detailed balance.
Given the success of classical Markov chain Monte Carlo (MCMC) algorithms and the ubiquity of thermodynamics, we anticipate that quantum Gibbs sampling will become indispensable in quantum computing.
\end{abstract}
\maketitle

\tableofcontents\newpage

\section{Introduction}
How do we prepare quantum Gibbs states or ground states on a quantum computer? This initial state preparation problem appears as the obstacle for simulating quantum systems~\cite{feynman1982SimQPhysWithComputers,lloyd1996UnivQSim}-- a popular candidate for practical quantum advantage. This mystery has its roots in the seemingly contradictory teachings of computer science and physics: computational complexity theory tells us that few-body Hamiltonian ground states are generally QMA-hard to prepare~\cite{kitaev2002classical,aharonov2009power,gottesman2009quantum} and thus are likely intractable in general even for quantum computers; on the contrary, thermodynamics asserts that physical systems interacting with a thermal bath are naturally in the thermal states or ground states. How do we draw an appropriate boundary between the two cases?

Practically, recent end-to-end industry resource estimates (e.g.,~\cite{babbush2018low,Chamberland2020BuildingAF,THC_google,2021_Microsoft_catalysis}) of quantum simulation rely on initial state preparation assumptions\footnote{More precisely, they assume the existence of \textit{trial states} with good overlap with the ground state so that running phase estimation provably works~\cite{lin2022heisenberg}. See also~\cite{gharibian2022dequantizing}. }, exposing our ignorance of the complexity of practically relevant states. Often, practitioners turn to heuristic algorithms such as the Variational Quantum Eigensolver (see, e.g.,~\cite{Tilly2021TheVQ}) or the adiabatic algorithm (see, e.g.,~\cite{farhi2000QCompAdiabatic,Albash2016AdiabaticQC}), yet each with concerns for practicality. The former suffers from the so-called Barren Plateau phenomena~\cite{mcClean2018BarrenPlateausInQNN}, and its scalability has been debated; the latter requires a gapped adiabatic path, which appears nontrivial in recent large-scale numerical studies for quantum chemistry applications \cite{lee2022there}. So far, there is a thin consensus on a `go-to' ground state or thermal state quantum algorithm that could work in practice.

This work approaches the state preparation problem via \textit{Quantum Gibbs samplers}. In physics language, this is closely related to open system dynamics where the system of interest is coupled to a thermal bath (see, e.g.,~\cite{Rivas_2012_open_systems}). Here, the conceptual boundary is blurred between the underlying physical process and the algorithm~\cite{terhal2000problem}. If a system thermalizes in nature and our physical model is accurate, we expect the associated quantum Gibbs sampler to converge quickly (i.e., the mixing time or the inverse-spectral gap is small); conversely, proving the latter also gives a rigorous formulation of open-system thermodynamics. This complements the mainstream formulation of closed-system thermodynamics via the Eigenstate Thermalization Hypothesis (see, e.g.,~\cite{ETH_review_2016}), where theoretical progress has been elusive. Practically, our general analysis for open system thermalization could be relevant to analog quantum simulators for Gibbs sampling, although our presentation mainly focuses on fault-tolerant quantum computers. 

In computer science language, quantum Gibbs samplers are the quantum analogs of classical Markov chain Monte Carlo (MCMC) algorithms, most notably Metropolis sampling (see, e.g.,~\cite{Markovchain_mixing}). They proved to be an indispensable pillar in classical computer science, both theoretically and practically, for computational physics and, more recently, optimization problems and machine learning. In a nutshell, the simple yet general idea is a (discrete or continuous) Markov chain whose unique fixed point yields the target distribution $\pi$ (a vector with positive entries); given the energy $E_s$ as a function of the configuration $s$, the Markov chain's transition matrix $\vM$ satisfies
\begin{align}
\vM\pi = \pi \quad \text{where} \quad \pi_s := \frac{\e^{-\beta E_s}}{\sum_{s} \e^{-\beta E_s}}\quad \text{for each}\quad s \quad \text{at temperature}\quad \frac{1}{\beta}. 
\end{align}
The algorithmic cost for preparing a sample from the Gibbs distribution $\pi$ scales directly with the \textit{mixing time}, the number of iterations such that any initial conjugation converges to the stationary distribution $\pi$. The mixing time can be unpredictable and vary wildly depending on the specific problems (e.g.,~\cite{Markovchain_mixing}). Theoretically, rapid mixing can sometimes be proven under suitable assumptions, most notably in lattice Ising models assuming exponential decay of Gibbs state correlation (see, e.g.,~\cite{martinelli1999lectures}). Practically, even when mixing time estimates are elusive, MCMC algorithms often serve as a starting point for more sophisticated algorithms. Given the triumphant impact of classical Gibbs sampling, we argue that Quantum Gibbs samplers have been thus far underexplored in the community and will likely play a central role when more robust quantum computers become available. Indeed, in addition to quantum simulation, quantum Gibbs sampling has been identified as a key subroutine in solving semidefinite programs (SDPs)~\cite{brandao2016QSDPSpeedup,apeldoorn2017QSDPSolvers} and quantum machine learning~\cite{amin2016QBoltzMachine}. To clarify, we will focus on quantum Hamiltonians; quantum algorithms for classical Gibbs states are not in the scope of this work.\footnote{Quantum Gibbs sampler for fast-forwardable Hamiltonian (including commuting Hamiltonians) is already well-defined since one can effectively apply phase estimation to exponential accuracy~\cite{temme2009QuantumMetropolis,wocjan2021szegedy}. The challenges we confront in this work are rooted in the noncommutativity.}

To set the stage for quantum Gibbs sampling, it is instructive to review the classical cousins, which we consider the seminal Metropolis-Hastings algorithm (see, e.g.,~\cite{Markovchain_mixing}) as a representative. This algorithm iterates a \textit{discrete-time Markov chain} as follows: apply a random ``jump'' (or ``update,'' ``move'') $\vA^a$ with probability $p(a)$ (Figure~\ref{fig:metropolis}). If the energy decreases, \textit{accept} the move, otherwise accept only with probability $\e^{-\beta \omega}$ (i.e., rejecting the move with probability $1-\e^{-\beta \omega}$), where $\omega$ being the energy gain. This can be described as a stochastic matrix over pairs of configurations $s's$
\begin{align}\label{eq:discrete_Markov}
    \vM_{s's}:= \underset{\text{``Accept''}}{\underbrace{\sum_{a\in A} p(a)\gamma(E_{s'}-E_{s}) \vA^a_{s's}}} + \underset{\text{``Reject''}}{\underbrace{\vR_{s's}\delta_{s's}}} \quad \text{where} \quad \gamma(\omega):=\min(1, \e^{-\beta \omega}),
\end{align}
and $\vA_{s's}^a$ are stochastic matrices corresponding to each move (e.g., flipping one of the spins). The matrix elements are weighted by the Metropolis factor $\gamma(\omega)$ depending on the energy change. Importantly, the particular function satisfies a particular symmetry (Figure~\ref{fig:metropolis}), known as the \textit{detailed balance} condition 
\begin{align}
    \gamma(\omega)/\gamma(-\omega)=\e^{-\beta\omega}\quad \text{such that}\quad \vM_{s's} \pi_{s} = \pi_{s'} \vM_{s's} \quad\text{for each}\quad s,s' .\label{eq:classical_DB}
\end{align}
Detailed balance ensures that the Gibbs state $\pi$ is a fixed point of the Markov chain $\vM\pi=\pi$. The rejection part $\vR_{s's}$ is a diagonal matrix determined by the probability preserving constraints. Similarly, one may define a \textit{continuous-time Markov chain generator}
\begin{align}
    \vL_{s's} := \sum_{a\in A} p(a)\L(\underset{\text{``transition''}}{\underbrace{\gamma(E_{s'}-E_{s}) \vA^a_{s's}}} - \underset{\text{``decay''}}{\underbrace{\delta_{s's}\sum_{s''} \gamma(E_{s''}-E_{s}) \vA^a_{s''s}}} \R) \quad\text{for each}\quad s,s' .\label{eq:continuous_Markov}
\end{align}
The second term ensures that the generated semi-group $\e^{\vL t}$ preserves probability. The operators $\vA^a$ can be arbitrary nonnegative matrices and need not be stochastic. 

\begin{figure}[t]
	\centering
\includegraphics[width=0.8\textwidth]{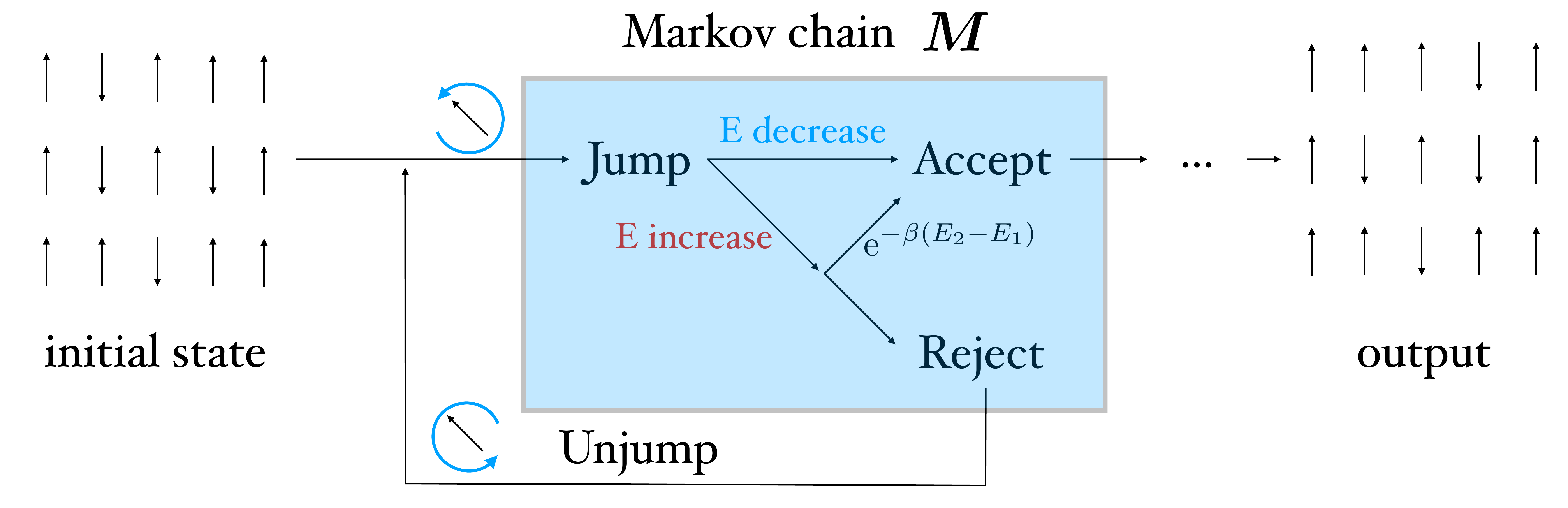}
	\caption{ The Metropolis-Hastings algorithm iterates a Markov chain to sample from the Gibbs distribution. Each step begins with a (random) jump: if the energy decreases, accept; if the energy increases, accept with a carefully chosen probability. Otherwise, reject the move. Remarkably, detailed balance can be enforced in a lazy manner via rejection sampling without storing the whole matrix.\label{fig:metropolis}
	}
\end{figure}
While classical Markov chain Monte Carlo methods have been theoretically and practically mature, the quantum analogs are still in their infancy. The study of Quantum Gibbs sampling currently faces fundamental challenges; surprisingly, even a satisfactory map has not been appropriately defined for general noncommutative Hamiltonians. As the very first step, we need to algorithmically design a quantum analog of Markov chain generator\footnote{We focus on the infinitesimal generators for simplicity. One may alternatively consider discrete quantum channels, also known as completely-positive-trace-preserving (CPTP) maps.} $\CL_{\beta}$, a \emph{\Lword{}},~\cite{wolf2012quantum} whose (unique) fixed point is the \textit{quantum Gibbs state} $\vrho_{\beta}$. More precisely, given a Hamiltonian $\vH$ and an inverse temperature $\beta$ 
\begin{align}
\text{design}\quad \CL_{\beta} \quad\text{such that}\quad  \e^{\CL_{\beta}t}[\vrho_\beta] = \vrho_{\beta}\quad\text{where}\quad \vrho_\beta := \e^{-\beta \vH}/\tr(\e^{-\beta \vH}) \label{eq:main_Gibbs_fixed_point}  
\end{align}
for any $t>0$. Subsequently, we may begin studying the properties of the proposed \Lword{}, especially the mixing time. This work aims to lay the foundation for the first challenge. The second challenge was partially addressed in Ref. \cite{ETH_thermalization_Chen21} using more primitive Gibbs samplers. Unlike the classical case, the construction of quantum Gibbs samplers is nontrivial due to imprecise energy estimates for noncommuting Hamiltonians (i.e., the energy-time uncertainty principle); the fixed point would not be exactly the Gibbs state~\eqref{eq:main_Gibbs_fixed_point}. Previous attempts \cite{terhal2000problem,temme2009QuantumMetropolis,yung2010QuantumQuantumMetropolis,ETH_thermalization_Chen21,wocjan2021szegedy} have their shortcomings, which we discuss in more detail in~\autoref{sec:existing_work} and Table~\ref{table:thermal_algorithms}. Our work, in parallel with the recent paper~\cite{Rall_thermal_22}, provides the first implementable \Lword{} for Gibbs sampling, with provable guarantees and without unrealistic assumptions. To do so, we introduce a robust analytic framework, which additionally
applies to physical \Lword{}s derived in open systems and to \textit{coherent} Gibbs samplers with Szegedy-type speedups.

Our particular construction draws inspiration from thermalization in nature. As the starting point, a system in thermal contact with a bath can be effectively described by the 
so-called \emph{Davies generator} in the Schrödinger Picture in a specific (weak-coupling/infinite-time) limit (\cite{davies74,davies76,Rivas_2012_open_systems}, and see~\cite{Mozgunov2020completelypositive} for a modern discussion)
\begin{equation}\label{eq:davies}
    \mathcal{L}_{\rm Davies}(\vrho) = \sum_{a\in A}\sum_{\nu\in B} \gamma(\nu)\left(\underset{\text{``transition''}}{\underbrace{\vA^a_\nu \vrho (\vA^{a}_\nu})^{\dagger}}-\underset{\text{``decay''}}{\underbrace{\frac{1}{2}((\vA^{a}_\nu)^{\dag}\vA^{a}_\nu\vrho +\vrho (\vA^{a}_\nu)^{\dag} \vA^{a}_\nu)}}\right),
\end{equation}
where $\{\vA^a\}_{a\in A}$ are the set of ``quantum'' jumps and $\nu \in B:=\mathrm{spec}(\vH)-\mathrm{spec}(\vH)$ are the \textit{Bohr frequencies}, the set of energy differences of the Hamiltonian. This resembles its classical Markov chain cousin~\eqref{eq:continuous_Markov}, also featuring two terms: the transition rate and the decay rate. Since we work with density operators $\vrho$ instead of probability vectors, the input $\vrho$ must be formally sandwiched by operators on the left and right. However, if the input states are diagonal in the energy basis and the energy levels are nondegenerate, then the Davies' generator can be faithfully represented as a continuous-time Markov chain~\eqref{eq:continuous_Markov} on the energy eigenstates by \textit{literally} replacing
\begin{align}
s &\rightarrow \ketbra{\psi_i}{\psi_i},\\
\vA^a_{s's} &\rightarrow \labs{\bra{\psi_i} \vA^a \ket{\psi_j}}^2\quad \text{where}\quad \vH =\sum_i E_i\ketbra{\psi_i}{\psi_i} \label{eq:s_to_psi}.
\end{align}

\begin{figure}[t]
	\centering
	\includegraphics[width=0.9\textwidth]{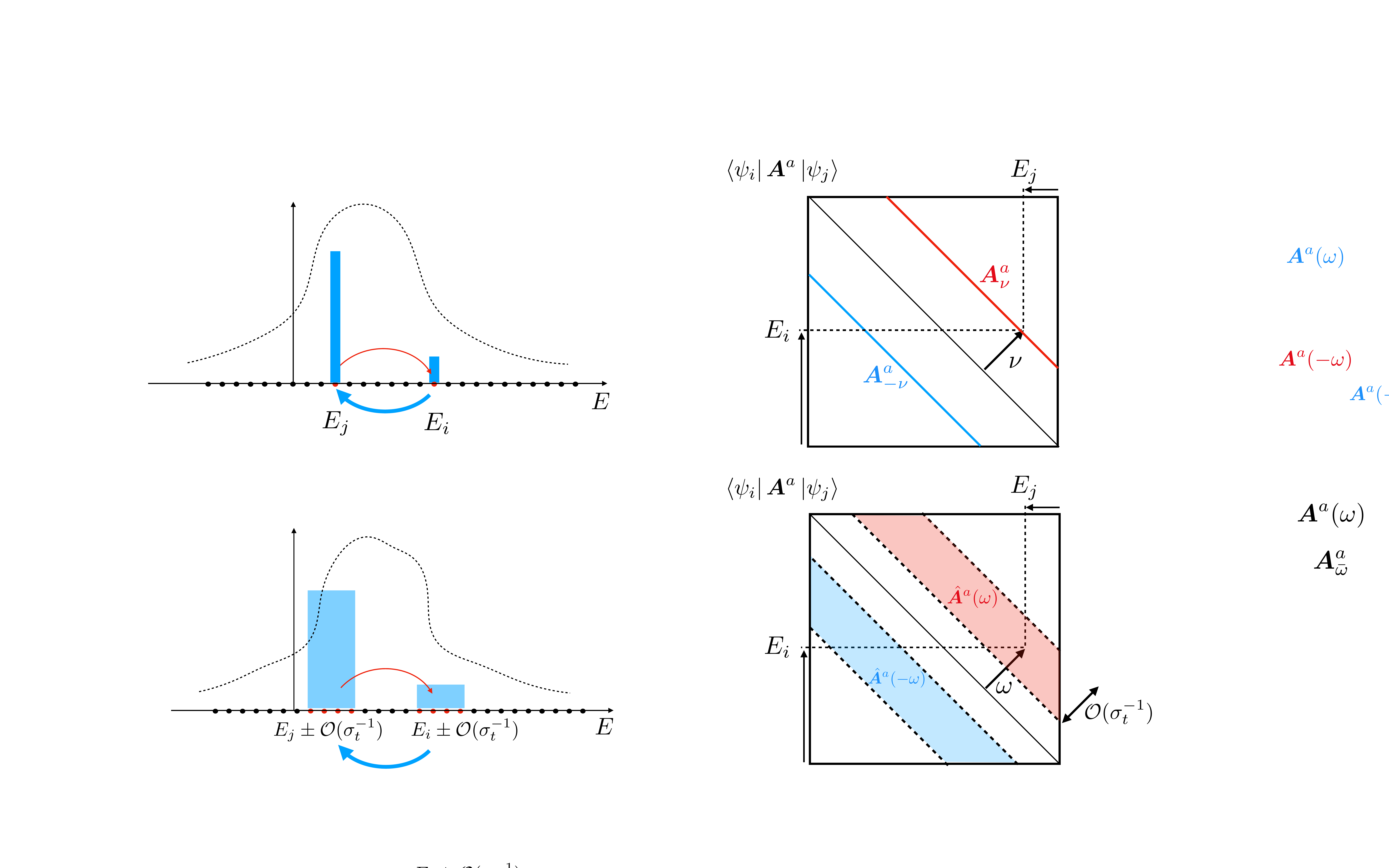}
	\caption{  
		(Up) Davies' generator gives a continuous-time Markov generator on the energy spectrum (assuming the Hamiltonian is nondegenerate and that the input state is diagonal in the energy basis.). The transitions are weighted by $\gamma(\omega)$: the heating transitions (red) are suppressed by a Boltzmann factor relative to the cooling transitions (blue), entailing detailed balance. The operator $\vA_{\nu}^a$ contains the transitions with energy difference $\nu$, which requires an infinite-time Fourier Transform. (Down) Our \Lword{} Gibbs sampler can be considered a ``semi-classical'' random walk where nearby Bohr-frequencies $\omega\pm \CO(\sigma_t^{-1})$ cannot be distinguished. The operator Fourier Transform $\hat{\vA}^a(\omega)$ contains a band of transitions. This breaks the detailed balance condition; the fixed point deviates from the Gibbs state.
	}\label{fig:semi_classical}
	
\end{figure}

A concrete physical example is a geometrically local Hamiltonian on a lattice. The jump operators $\vA^a$ can be one-body Pauli operators on each lattice site, in which case the cardinality of jumps scales with the system size $\labs{A} \propto n$. Of course, the Davies' generator is merely an instance of many possible \Lword{} one can write down (see~\autoref{sec:DB}), which generally may not come from thermodynamics.

In \eqref{eq:davies}, the quantum mechanical transition rate $\vA^a_{\nu}$ is defined as 
\begin{align}
    \vA^a_\nu := \sum_{E_i-E_j = \nu} \vP_{E_i}\vA^a\vP_{E_j} \quad \text{for each Bohr frequency} \quad \nu \in B,
\end{align}
where $\vP_{E}$ denotes energy eigenspace projectors associated with energy $E$ (\autoref{fig:semi_classical}). In general, the dynamics can be inherently quantum-mechanical when the energy subspaces are degenerate; the quantum transition can include \textit{coherent} rotations within the subspaces $\vP_{E}$.

The function $\gamma(\omega)$ depends on the physical model of the bath. Under physical assumptions (thermal bath and Markovianity), the heating transitions are penalized by a Boltzmann factor relative to the cooling transitions\footnote{The sign convention we use (here and also for Fourier Transforms) might differ from that of other works in the open systems literature.} $\gamma(\omega)=\e^{-\omega \beta}\gamma(-\omega)$. Mathematically, this ensures that the Davies' generator $\mathcal{L}_{\rm Davies}$ satisfies the \textit{quantum detailed balance condition} (\autoref{sec:DB}) with respect to the Gibbs state $\vrho_\beta$, implying that $\vrho_\beta$ is a fixed point~\eqref{eq:main_Gibbs_fixed_point}. At first glance, the above properties of Davies generators~\eqref{eq:davies} seem to qualify for a Quantum Gibbs sampler~\eqref{eq:main_Gibbs_fixed_point} - if we were able to simulate it efficiently. 

Unfortunately, the generator \eqref{eq:davies} is generally nonphysical because isolating an exact energy transition $\nu$ requires an \emph{infinite-time} Fourier Transform\footnote{With the exception when the Hamiltonian spectrum takes well-separated discrete values with (roughly) known locations.} over Heisenberg evolution
\begin{equation}\label{eq:Somegadef}
     \vA^a_\nu \propto \int_{-\infty}^\infty \e^{-\ri\omega t} \underset{\vA^a(t):=}{\underbrace{\e^{\ri\vH t} \vA^a \e^{-\ri\vH t}}}\mathrm{d}t  \quad \text{for each}\quad  a \in A\quad \text{and}\quad\nu \in B.
\end{equation} 
This allows the Davies generator to decohere arbitrary close Bohr frequencies $\omega, \omega'$. This contradicts the energy-time uncertainty principle, requiring the runtime to scale inversely with the level spacing, which generally can be \textit{exponentially} small in the system size $n$.
 Unfortunately, at shorter times, the resulting fuzziness of energy resolution breaks detailed balance, which has been central to the analysis of (both classical and quantum) Markov chains. 
Many finite-time versions of the Davies generator have been proposed to capture more realistic physical settings~\cite{Redfield1965, Christian_2013_Coarse_graining, Mozgunov2020completelypositive,ETH_thermalization_Chen21}. Yet, to the best of our knowledge, no \Lword{} arising from a finite-time Fourier Transform has been proven to have a fixed point close to the Gibbs state.\footnote{Ref.~\cite{ETH_thermalization_Chen21} gives a \emph{nonCPTP} generator that does have approximately a Gibbs fixed point.}

Quantum Gibbs sampling algorithms face analogous technical challenges due to a limited algorithmic runtime. Here, the energy-time uncertainty principle incarnates as the statistical uncertainty of the energy measurement via quantum phase estimation. Several works~\cite{rall2021faster,wocjan2021szegedy} evade this issue by imposing a convenient \emph{rounding promise} on the Hamiltonian, requiring its spectrum to be disjoint from certain specific ranges of energy. Such a rounding promise is not physically motivated and does not hold in general but allows for rigorous performance guarantees for Gibbs samplers using boosted phase estimation. Very recently, Ref.~\cite{Rall_thermal_22} circumvents the need for a rounding promise by randomly alternating the phase estimation mesh, but this comes at a high additional algorithmic cost with potentially worsened mixing time due to forbidding certain transitions and seems distant from the physical origins of quantum Gibbs samplers.

In this work, we present quantum Gibbs sampling algorithms inspired by thermalization in Nature. In particular, our construction is a ``smoothed'' version of the Davies' generator~\eqref{eq:davies}. The first algorithm simulates a \Lword{} whose fixed point is approximately a quantum Gibbs state; the second algorithm further ``quantizes'' the \Lword{} to prepare the purified Gibbs state and features a quadratic Szegedy-type speedup. Our algorithms are efficient and comparatively simple to implement while having a provable performance guarantee. The key ingredient in our algorithmic design is to use the weighted \textit{operator Fourier Transform} for the Lindblad operators
\begin{equation}\label{eq:Abomegadef}
\hat{\vA}^a(\omega) = \frac{1}{\sqrt{2\pi}}\int_{-\infty}^{\infty} \e^{-\ri \omega t}f(t)\vA^a(t) \mathrm{d}t \quad \text{for each} \quad  a \in A \quad \text{and}\quad \omega \in \BR.
\end{equation}

In practice, we use a discrete Fourier Transform (which will be denoted by $\hat{\vA}^a(\bomega)$ for discrete frequency label $\bomega$), but for conceptual simplicity, we focus on the continuous case in the introduction. Unlike in ordinary phase estimation where boosting usually adopts median-of-means tricks~\cite{nagaj2009FastAmpQMA}, we weigh the Fourier Transform by a Gaussian distribution $f(t)$ with a tunable width $\sigma_t$. Consequently, the Fourier Transform remains a Gaussian, achieving an analog of a boosted phase estimation with uncertainty $\omega \pm \CO(\sigma_t^{-1})$. 

To give performance guarantees for our construction, the main analytic insight is to define a notion of \textit{approximate detailed balance} (\autoref{sec:error_boltzmann})
\begin{align}
    \vrho_{\beta}^{1/4}\CL^{\dagger}[\vrho_{\beta}^{-1/4}\vO \vrho_{\beta}^{-1/4}]\vrho_{\beta}^{1/4} \approx \vrho_{\beta}^{-1/4}\CL[\vrho_{\beta}^{1/4}\vO \vrho_{\beta}^{1/4}]\vrho_{\beta}^{-1/4} \quad \text{for each operator}\quad \vO
\end{align}
to handle energy uncertainty. In the exact case, this is the quantum generalization of classical detailed balance, where the distribution becomes an operator, and the Markov transition matrix becomes a superoperator. 
 
Our construction and analysis of the Gibbs sampler are physically inspired insofar as it closely resembles the Davies generator of Eqn.\ (\ref{eq:davies}), but we do not know whether it can actually be derived in some physical limit from a weak system-bath coupling. Incidentally, starting from a microscopic system-bath interaction, a recent proposal~\cite{Christian_2013_Coarse_graining, Mozgunov2020completelypositive} specifically derives from first principles a \Lword{} with Lindblad operators
\begin{align} 
\hat{\vA}^a(\omega) :\propto 
\int_{-T/2}^{T/2}\e^{-\ri \omega t}\vA^a(t) \mathrm{d}t  \quad \text{for each} \quad  a \in A \quad\text{and}\quad \omega\in \BR.
\label{eq:CGME}
\end{align}
Here, the Fourier Transform time-scale $T$ sets the energy uncertainty $\omega \pm \CO(T^{-1})$; the fixed point will not be exactly the Gibbs state. Using our new analytic framework, we show that this \Lword{} derived in Refs. \cite{Christian_2013_Coarse_graining, Mozgunov2020completelypositive} have a stationary state close to the Gibbs state. To the best of our knowledge, our work completes the first general proof of many-body Gibbs states in the open system setting (assuming a reasonably short mixing time).

The runtime of both our algorithms has simple dependence on the mixing time or spectral gap of the \Lword{}. In general, the gap will depend sensitively on the details of the physical system, and its calculation for specific Hamiltonians is beyond the scope of this work. We refer to Refs. \cite{kastoryano2013quantum,kastoryano2016commuting, ETH_thermalization_Chen21, capel2021modified} and references therein for a more detailed discussion of mixing times (and spectral gaps) for various Hamiltonian and \Lword{}s. At an intuitive level, we expect the \Lword{} for lattice systems with jump operators on each site to have a constant local \Lword{} gap\footnote{The local gap, in our normalization, is defined as $n \cdot \lambda_{\rm gap}(\CL)$, where $\lambda_{\rm gap}(\CL)$ is the \Lword{} eigenvalue gap and $n$ is the system size. A parallel version of the algorithm could in principle, gain this $n$ factor in the circuit depth. Also, the \Lword{} gap should not be confused with the energy gap of the Hamiltonian, which is not directly relevant to Gibbs sampling at nonvanishing temperatures.} independent of the volume at high enough temperatures or within the same phase. In practice, we believe that quantum Gibbs sampling algorithms will be employed on a case-by-case basis in combination with various heuristics, as is the case with classical Monte Carlo algorithms.


\subsection{Existing work}\label{sec:existing_work}

\begin{table}[h!]\label{table:thermal_algorithms}
\begin{tabular}{|l|l|l|l|}
\hline
Algorithms & Ham. sim. time& assumptions & potential caveats \\ \hline
 \begin{tabular}[c]{@{}l@{}}Quantum Metropolis~\cite{temme2009QuantumMetropolis}:\\ a quantum version of rejection \\ sampling via QPE\end{tabular} & $\poly(\epsilon^{-1}, \beta, n)t_{QPE}\cdot t_{mix} $ & \begin{tabular}[c]{@{}l@{}}shift-invariant, boosted QPE\\  (provably impossible)\end{tabular} & \begin{tabular}[c]{@{}l@{}}without the assumption, \\ the QPE runtime is uncontrolled\end{tabular} \\ \hline
\begin{tabular}[c]{@{}l@{}}\cite{Rall_thermal_22}: simulate a \Lword{} \\ with forbidden energy transitions\end{tabular} & $\tilde{\CO}( \frac{1}{\gamma_{att}}\frac{\beta^3 \tilde{t}_{mix}}{\epsilon^7})$\footnote{Here $\tilde{t}_{mix}$ refers to the mixing time of modified \Lword{}s that forbid certain energies transitions (parameterized by an additional attenuation coefficient $\gamma_{att}$); it is unclear how this restricted connectivity impacts the mixing time. Also, our improved \Lword{} simulation results already improve their complexities from $\tilde{\CO}( \frac{\beta^3 \tilde{t}_{mix}}{\epsilon^7})$ to $\tilde{\CO}( \frac{\beta \tilde{t}_{mix}}{\epsilon^2})$. } & - & \begin{tabular}[c]{@{}l@{}} overhead for randomized rounding;\\ forbidding transitions impacts $\tilde{t}_{mix}$ \end{tabular} \\ \hline
\begin{tabular}[c]{@{}l@{}}\cite{Shtanko2021AlgorithmsforGibbs}: system-bath \\ evolution at weak-coupling\end{tabular} & $\CO(\frac{\beta t_{mix}^3}{\epsilon})$ & 
\begin{tabular}[c]{@{}l@{}}controllable, refreshable bath \\ and ETH\footnote{Ref.~\cite{Shtanko2021AlgorithmsforGibbs} is similar to~\cite{ETH_thermalization_Chen21}, but its proof of convergence assumes the Eigenstate Thermalization Hypothesis (ETH) and a maximally mixed initial state.~\cite{ETH_thermalization_Chen21} has guarantees assuming only the mixing time, and the ETH is one way of bounding the mixing time.}
\end{tabular}
& 
\begin{tabular}[c]{@{}l@{}}large bath; \\ no guarantees without ETH. \end{tabular}
 \\ \hline
 \begin{tabular}[c]{@{}l@{}}\cite{ETH_thermalization_Chen21}: system-bath \\ evolution at weak-coupling\end{tabular} & $\poly(\epsilon^{-1}, \beta, n, t_{mix})$ & controllable, refreshable bath & large bath and large overheads \\ \hline
{\color{blue}\begin{tabular}[c]{@{}l@{}}\autoref{thm:L_correctness}: simulate a \\ \Lword{} via operator FT\end{tabular}} & {\color{blue} $\tCO( \frac{\beta t_{mix}}{\epsilon}\cdot t_{mix})$} & - & -\\ \hline
 \begin{tabular}[c]{@{}l@{}}Quantum${}^2$ Metropolis~\cite{yung2010QuantumQuantumMetropolis}:\\ a coherent version of Quantum \\ Metropolis via QPE\end{tabular} & $\tCO( t_{QPE} \cdot \frac{\beta^2 \braket{\vH^2}_{\beta =0}}{\epsilon\sqrt{\lambda_{gap}} })$\footnote{Ref.~\cite{wocjan2021szegedy} did not include the algorithmic cost of quantum simulated annealing, so we fill in using our modernized version (\autoref{sec:simulated_annealing}). Likewise, the simulated annealing cost of~\cite{yung2010QuantumQuantumMetropolis} could also be improved (still assuming perfect QPE).\label{foot:improvedQSA}
 } & perfect QPE & \begin{tabular}[c]{@{}l@{}}  needs nondegenerate energies,\footnote{The proof of correctness assumes each eigenstate $\vH$ can be perfectly distinguished.}\\ $\e^{\Omega(n)}$ runtime for QPE \end{tabular} \\ \hline
\begin{tabular}[c]{@{}l@{}}\cite{wocjan2021szegedy}:a coherent version of \\ \Lword{} via QPE\end{tabular} & $\tCO( t_{QPE}\cdot \frac{\beta \norm{\vH}}{\sqrt{\lambda_{gap}}})$\textsuperscript{\ref{foot:improvedQSA}} & rounding promise for $\vH$ & \begin{tabular}[c]{@{}l@{}}unknown error for \\ unrounded Hamiltonians\end{tabular} \\ \hline
{\color{blue}\begin{tabular}[c]{@{}l@{}}\autoref{thm:D_correct}: a coherent version \\ of \Lword{} via operator FT\end{tabular}} & {\color{blue}$\tilde{O}(\frac{\beta^2\norm{\vH}}{\lambda_{gap}^{3/2}}+\frac{\beta}{\epsilon\lambda_{gap}^{3/2}})$} & - & -\\ \hline
 & Gate complexity & & \\ \hline
 \begin{tabular}[c]{@{}l@{}}\cite{brandao2019finite_prepare}:\\ patching via recovery maps\end{tabular} & $\e^{ \CO(C^d\ln^d(n/\epsilon))}$\footnote{Assuming the Markov property and clustering quantities both decay exponentially.} & \begin{tabular}[c]{@{}l@{}}Markov and clustering\\ with length scale $C$ \end{tabular} & \begin{tabular}[c]{@{}l@{}}nonconstructive,\\ quasi-local recovery maps \end{tabular}\\ \hline
\begin{tabular}[c]{@{}l@{}}QITE~\cite{Motta_2019_QITE}: imaginary time \\evolution via local tomography\end{tabular} & $\e^{ \tCO( C^d \ln^d(\beta n/\epsilon) ) }$ & correlation length $C$ & \begin{tabular}[c]{@{}l@{}}costly at low-temperature, \\ strong correlation, or high accuracy\end{tabular} \\ \hline
\begin{tabular}[c]{@{}l@{}}\cite{Moussa2019LowDepthQM}: drive transition \\by measurements\end{tabular} & $\tCO( t_{mix} \beta ) $ & certain measurement operator & \begin{tabular}[c]{@{}l@{}} the measurement must have \\ ``good overlap'' with energy basis\footnote{Their exactly detailed-balanced quantum channel seems to qualitatively differ from other quantum MCMC algorithms. It assumes certain efficiently implementable basis measurements that have a \textit{good overlap} with the energy basis. For example if measured in the computational basis, the 1D transverse field Ising model seems to exhibit an \textit{exponential} mixing time at constant temperature, see \cite[Page 5]{Moussa2019LowDepthQM}.}\end{tabular} \\ \hline
\cite{Holmes_2022_thermal}:a perturbative approach & $ \e^{\CO(\beta \norm{\vV})}$ & \begin{tabular}[c]{@{}l@{}}$\vH = \vH_0 + \vV $ \\ for small $\vV$ and simple $\vH_0$\end{tabular} & \begin{tabular}[c]{@{}l@{}}costly at low-temperature \\ or nonperturbative regime\end{tabular} \\ 
\hline
\begin{tabular}[c]{@{}l@{}}QSVT~\cite{gilyen2018QSingValTransf}:\\ directly implementing $\e^{-\beta \vH}$\end{tabular} & $\e^{ \CO( \beta\norm{\vH})}$ & - & not scalable \\ \hline
\end{tabular}
\caption{A comparison of existing thermal state preparation algorithms. We focus on methods with provable guarantees for an $\epsilon$-approximation of the Gibbs state (in trace distance) and list their cost, assumptions, and caveats. We use $\poly(\cdot)$ to denote polynomials and $\tCO(\cdot)$ to absorb logarithmic dependences. 
The first few algorithms are Monte Carlo-style methods, incoherent or coherent; we represent their costs by the total black-box Hamiltonian simulation time. The incoherent ones are based on semi-groups, with complexity being the cost of emulating the semi-group (which is basically dominated by the phase estimation time $t_{QPE}$) multiplied by the mixing time $t_{mix}$. The coherent version instead implements block-encoding for discriminants and prepares the purified Gibbs state via quantum simulated annealing. The number of discriminant calls is $\tilde{\CO}(\beta\norm{\vH} \sqrt{\lambda^{-1}_{gap}})$ (as the counterpart for the mixing time $t_{mix}$) where $\lambda_{\rm gap}$ refers to the minimum gap of discriminants along the adiabatic path. We calculate the costs for~\autoref{thm:L_correctness} and~\autoref{thm:D_correct}, assuming the algorithmic parameters and the mixing time $t_{mix}$ or spectral gap $\lambda_{gap}$ satisfy certain self-consistency constraint. The mixing time and spectral gap can be (loosely) converted to each other, assuming approximate detailed balance. 
Of course, as in all classical MCMC methods, the mixing time $t_{mix}$ or spectral gap $\lambda_{gap}$ can be exponentially small depending on the temperature and the system; indeed, there are systems whose thermal states are expected to have high complexity, such as certain spin glass models, and we do not expect efficient quantum algorithms to exist in general. Optimistically, we often care about the Gibbs state that appears in nature, which suggests the mixing time remains reasonably small.
 }

\end{table}
The first attempt at designing a quantum algorithm for Gibbs sampling with per-update efficiency guarantees is the quantum Metropolis algorithm~\cite{temme2009QuantumMetropolis}, as a quantum analog of~\eqref{eq:discrete_Markov}. The main guiding principle is to ``do Metropolis sampling over the energy spectrum'' in the spirit of~\eqref{eq:s_to_psi}. To do so, two quantitative changes are needed due to quantum mechanics, one algorithmic and one analytic: algorithmically, one needs a subroutine to ``reject'' a quantum state back to the same energy. Classically, one can clone the configuration $s$ before the update. Quantumly, however, we cannot clone the (unknown) quantum state and must be careful not to collapse the quantum state due to energy measurement. Ref. ~\cite{temme2009QuantumMetropolis} handles this issue using the Mariott-Watrous~\cite{marriott2005QAMGames} algorithm (or the ``rewinding'' technique in quantum Cryptography). Crucially, this algorithmic subroutine preserves probability, drawing a distinction from imaginary time evolution or post-selection, but this comes with its limitation and significantly complicates the algorithm. Second, energy measurements based on quantum phase estimation have a finite resolution inversely proportional to the runtime $\delta E \sim 1/T$. Consequently, the detailed balance condition may not generally hold, and one needs to prove that the fixed point remains approximately the Gibbs state. 

The technical results of~\cite{temme2009QuantumMetropolis} contain three approaches based on different phase estimation subroutines: (i) assuming perfect phase estimation (with performance guarantee but with an exponential Hamiltonian simulation time); (ii) un-boosted phase estimation (without performance guarantees); (iii) boosted shift-invariant phase estimation (with performance guarantees).\footnote{The approximate detailed balance argument was later completed in~\cite{ETH_thermalization_Chen21}.} Unfortunately, we recently realized that such a boosted, shift-invariant phase estimation (see~\autoref{sec:impossible}) is provably impossible;\footnote{The authors~\cite{temme2009QuantumMetropolis} communicated with us that there might be ways to fix their algorithm.} we do not know whether Quantum Metropolis sampling~\cite{temme2009QuantumMetropolis}, as stated explicitly, actually works in practice. The quantum metropolis is regarded as an important theoretical milestone, but due to its complicated form (especially due to the rejection subroutine), the particular algorithm largely serves as a high-level inspiration. 

From a physical point of view, one may implement Nature's quantum algorithm~\cite{terhal2000problem} by emulating the global system-bath interaction. However, this black-box approach is a double-edged sword: indeed, this method should work as well as Nature, but as we know from open system thermodynamics, nonasymptotic results are extremely challenging without liberal use of approximations~\cite{terhal2000problem,Rivas_2012_open_systems}, rendering the result qualitative but not quantitative. (For example, it is elusive how big of a bath is needed for the desired accuracy.) Recently, Ref.~\cite{Mozgunov2020completelypositive,ETH_thermalization_Chen21,Shtanko2021AlgorithmsforGibbs} took the physics inspirations seriously and quantitatively studied a system-bath interaction from scratch. 
Ref.~\cite{Mozgunov2020completelypositive} revisits the text-book open system derivation and extracts a \textit{nonasymptotic} version of Davies' generator with explicit error bounds and without unphysical limits. However, it was not known whether the derived \Lword{} has the Gibbs state as the stationary state; ref.~\cite{ETH_thermalization_Chen21} was the first provably polynomial-time algorithm for Gibbs state assuming a reasonably short mixing time, although it assumes good control of the bath and its error bounds are large polynomials and impractical to apply; ref.~\cite{Shtanko2021AlgorithmsforGibbs} is conceptually similar to~\cite{ETH_thermalization_Chen21} but focuses more on near-term feasibility. Technically, its accuracy guarantees require the Eigenstate Thermalization Hypothesis, which significantly simplifies the analysis. Unfortunately, both cases~\cite{ETH_thermalization_Chen21,Shtanko2021AlgorithmsforGibbs} failed to extract a \Lword{} (the generators are not completely positive). In some sense, this motivates us to give a unifying conceptual and analytic perspective on this subject.

Coherent quantum Metropolis sampling~\cite{yung2010QuantumQuantumMetropolis} is a natural generalization of quantum Metropolis sampling~\cite{temme2009QuantumMetropolis} that further gives a quadratic runtime speedup by invoking Szegedy's quantum walk strategy~\cite{szegedy2004QMarkovChainSearch}. Since the dissipative map is quantized as a Hermitian operator, one cannot evolve a semi-group but requires an additional \textit{quantum simulated annealing} step~\cite{somma2007quantum} (a particular adiabatic state preparation along temperatures) to prepare the \textit{purified Gibbs state; see~\autoref{sec:simulated_annealing}}. Unfortunately, Ref.~\cite{yung2010QuantumQuantumMetropolis} assumes perfect phase estimation and a nongenerate Hamiltonian spectrum, and it was unclear how one incorporates imperfect phase estimation in such a coherent algorithm. Ref.~\cite{wocjan2021szegedy} improves and generalizes their result but still makes an unphysical \emph{rounding promise} assumption: the Hamiltonian spectrum has periodic gaps so that phase estimation can be amplified. 

Perhaps inspired by the rounding promise, the recent related work~\cite{Rall_thermal_22} proposes an algorithm that implements a \Lword{} that also provably has approximately Gibbs fixed point using randomized rounding. Their approach is quite different in nature from ours and is not known to enjoy the quadratic speedup. Randomized rounding seems to incur large resource overhead (See Table~\ref{table:thermal_algorithms}) and substantially departs from the physical origin of these ideas.

Our \Lword{} Gibbs samplers build upon the literature for open system simulation~\cite{cleve2016EffLindbladianSim} as well as the coherent Gibbs sampler of~\cite{wocjan2021szegedy}. In both cases, we remove the need for any unphysical assumption (e.g., especially the rounding promise) yet still maintain a simple error bound. This is made possible by identifying the right choice of jump operators in terms of discrete Fourier Transform and refining the analytic technical tool introduced in~\cite{ETH_thermalization_Chen21} (nonasymptotic bounds for secular approximation)\footnote{Some preliminary version of approximate detailed balance was discussed in an earlier version of~\cite{ETH_thermalization_Chen21} regarding quantum Metropolis sampling. That part was completely removed after the authors realized the phase estimation assumption was impossible as stated (\autoref{sec:impossible}).}. In a nutshell, it seems the ``right'' approach to quantum Gibbs sampling is to simulate a (continuous-time) \Lword{}, which Nature implements by default, instead of a (discrete-time) Metropolis-Hastings style quantum channel. The rejection step is handled automatically for any \Lword{}s. We leave for future work to simplify the rejection step in quantum Metropolis sampling~\cite{temme2009QuantumMetropolis} or to design a discrete-time channel with provable guarantees.\footnote{For \Lword{}s, the designer has the freedom to choose \textit{arbitrary} Lindblad operators, and the decay part automatically guarantees trace-preserving. However, it is more challenging to design quantum channels as the trace-preserving condition seems less flexible. Indeed, Quantum Metropolis Sampling~\cite{temme2009QuantumMetropolis} had to invoke Mariot-Wattrous-style rewinding~\cite{marriott2005QAMGames} multiple times to ensure probability is preserved, which unfortunately increases the complexity of the algorithm.} We further note a general distinction between coherent Gibbs samplers and \Lword{} Gibbs samplers regarding obtaining the fixed point. The former relies on the gap of the \Lword{} staying open along the entire adiabatic path from high to low temperature $0\rightarrow \beta$, while the latter (e.g., our construction or~\cite{Rall_thermal_22}) does not. Thus, the two algorithmic costs are not directly comparable. Empirical intuition from classical Gibbs sampling suggests that for ``simple problems'' where the gaps remain largely open throughout the phase of interest, adiabatic and direct sampling methods perform similarly. However, for strongly frustrated systems like spin glasses, adiabatic heuristics are the go-to Monte Carlo method. It is tempting to speculate that the same will be true in the quantum Gibbs sampling case, which lends itself well to our approach.

In addition to Monte Carlo style algorithms, other thermal state preparation algorithms based on quite different principles also exist; we briefly summarize their gate complexities in Table~\ref{table:thermal_algorithms}. We only discuss methods where quantitative arguments are possible and pay less attention to heuristic approaches such as variational circuits~\cite{wu2019variational,zhu2020generation,Martyn2018ProductSA,Sewell2022ThermalME, consiglio2023variational}), energy filtering assuming good initial states, and heuristic quantum assisted Monte Carlo~\cite{Lu2020AlgorithmsFQ,Schuckert2022ProbingFO}.

\subsection{Outline and main results}
Our discussion features the following \Lword{} in the Schrödinger picture
\begin{equation}
    \mathcal{L}_{\beta}[\vrho] := \sum_{a\in A} \int_{-\infty}^{\infty}\gamma(\omega)\left(\hat{\vA}^a(\omega)\vrho \hat{\vA}^a(\omega)^{\dagger}-\frac{1}{2}\L\{\hat{\vA}^a(\omega)^{\dagger}\hat{\vA}^a(\omega),\vrho \R\}\right) \mathrm{d}\omega\label{eq:mainL}
\end{equation}
with the anti-commutator $\{ \vA,\vB\} = \vA\vB+\vB\vA$. We can read out the set of \textit{Lindblad operators} 
\begin{align}
    \{\sqrt{\gamma(\omega)}\hat{\vA}^a(\omega)\}_{a\in A,\omega \in \BR}\quad \text{where}\quad \hat{\vA}^a(\omega) := \frac{1}{\sqrt{2\pi}}\int_{-\infty}^{\infty} \e^{-\ri \omega t} f(t) \vA^a(t) \mathrm{d}t\label{eq:ourLindbladOp}. 
\end{align}
In particular, all \Lword{}s we consider in this work, natural or algorithmic, satisfies the following symmetry and normalization conditions:
\begin{itemize}
    \item The set of \textit{jump operators} $\vA^a$, which ``drives'' the transition, can be arbitrary (and often depends on the Hamiltonian) as long as the set contains their adjoints 
\begin{align}
    \{\vA^a\colon a\in A\}=\{\vA^{a\dagg}\colon a\in A\}\quad \text{and}\quad \nrm{\sum_{a\in A}\vA^{a\dagg} \vA^a}\leq 1.\label{eq:AAdagger}
\end{align}
Indeed, classical Metropolis sampling often starts with a \textit{reversible} Markov chain to algorithmic impose the detailed balance condition; the quantum analog is the adjoint condition. Single-site Pauli operators (which are individually self-adjoint) are handy choices, but few-body operators with arbitrary connectivity are certainly permissible\footnote{In fact, the ability to perform carefully chosen (often not natural) jumps is what empowers classical Gibbs sampling algorithms, e.g., cluster updates.}. The normalization is natural for block-encoding the set of jump operators (\autoref{defn:blockLindladian} and \eqref{eq:blockJumps}). For example, choosing single-site Paulis as jump operators requires dividing them by $\sqrt{|A|}$ (where $\labs{A}$ is the cardinality of the set) to fulfill the normalization requirement.\footnote{ This is slightly different from the physical setting where each jump has operator norm $\norm{\vA^a} =1$. There, the ``strength'' of the \Lword{}~\eqref{eq:L_normalized} scales with the number of jumps $\labs{A}$.}
    \item The Fourier Transform in the time domain is weighted by a \textit{filter} function $f(t)$ that is real and $\ell_2$-normalized 
    \begin{align}
        f^*(t)= f(t) \quad \text{for each}\quad  t \in \BR \quad \text{and}\quad \normp{f}{2}^2 := \int_{-\infty}^{\infty} \labs{f(t)}^2 \rd t =1\label{eq:fnormalized}.
\end{align}
Sometimes, we drop the subscript by $\norm{f}=\norm{f}_2$. The real constraint serves similar purposes as reversibility in classical Gibbs sampling. When considering discrete Fourier Transforms (which is necessary for implementation), we adapt the corresponding (discrete) normalization $\sum_{\bt \in S_{t_0}} \labs{f(\bt)}^2 =1$.
\item The \textit{transition weight} $\gamma(\omega)$ can be any function satisfying the \textit{KMS condition} and the bound
\begin{align}
    \gamma(\omega)/\gamma(-\omega)=\e^{-\beta\omega} \quad \text{and}\quad 0\le \gamma(\omega) \le 1 \quad \text{for fixed}\quad \beta \quad \text{and each}\quad \omega \in \BR\label{eq:gamma_KMS}.
\end{align}
This coincides with the classical recipe for detailed balance~\eqref{eq:classical_DB}. Natural choices include the Metropolis weight $\gamma(\omega)=\min(1, \e^{-\beta \omega})$ or the (smoother) Glauber dynamics weight $\gamma(\omega) = (\e^{\beta \omega}+1)^{-1}$. 
\end{itemize}
To summarize, the above list of symmetry conditions is the key to ensuring (approximate) detailed balance; the above normalization choices are not only natural for implementation but also
conveniently ensures that the ``strength'' of the \Lword{} is normalized 
\begin{align}
\normp{\CL_{\beta}}{1-1} \le 2 \label{eq:L_normalized} 
\end{align}
in the superoperator 1-1 norm.
\subsubsection{Lindbladians from Nature}
Before we discuss Gibbs sampling algorithms, we first address the fundamental question: why do Gibbs states faithfully capture physical systems in thermal equilibrium? In physics, the quantum Gibbs state is often imposed without rigorous justification. As a mathematical physics result, we complete the first proof of open system thermodynamics: the Gibbs state is indeed \textit{approximately} the fixed point of \Lword{}s governing open system dynamics. 

Of course, this further requires a rigorous derivation of \Lword{} from reasonable open system assumptions; this is not the intention of this work, but thankfully, this has been worked out under a Markovian, weak-coupling assumption~\cite{Mozgunov2020completelypositive}. All we need as a black box is that it indeed satisfies the constraints we imposed~\eqref{eq:AAdagger},\eqref{eq:fnormalized}, and \eqref{eq:gamma_KMS}. For simplicity, we have omitted the Hamiltonian part of the \Lword{} and focus only on the dissipative part; see~\autoref{sec:open_system} for the complete results.

\begin{thm}[Gibbs state is thermodynamic] \label{thm:Nature_fixedpoint}
 Any $\CL_{\beta}$ satisfying the symmetry and normalization conditions~\eqref{eq:AAdagger},\eqref{eq:fnormalized}, and \eqref{eq:gamma_KMS} with the particular weight function
\begin{align}
    f(t) = \frac{\indicator(t\le \labs{T}/2)}{\sqrt{T}}
\end{align}
has an approximate Gibbs fixed point
\begin{align}
\normp{\vrho_{fix}( \CL_{\beta} ) - \vrho_{\beta}}{1} &\le \CO\L(\sqrt{\frac{\beta}{T}}t_{mix}(\CL_{\beta})\R)\label{eq:nature_fixedpt_error}.
\end{align}
In particular, (dropping the Hamiltonian term and under suitable normalization) such a \Lword{} can arise from a system (with Hamiltonian $\vH$) interacting weakly with a Markovian bath (with inverse temperature $\beta$) through jump operators $\vA^a$. 
\end{thm}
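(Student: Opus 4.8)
The plan is to split the claim into a \emph{static} estimate — the Gibbs state is approximately stationary, $\normp{\CL_\beta[\vrho_\beta]}{1}=\CO(\sqrt{\beta/T})$ — and a \emph{dynamical} estimate turning this into closeness of the exact fixed point $\vrho_{fix}(\CL_\beta)$ at the cost of one factor of the mixing time. The final sentence of the theorem is imported as a black box: it is precisely the statement that the coarse-grained master equation derived from a weak-coupling Markovian system--bath model in Refs.~\cite{Mozgunov2020completelypositive, Christian_2013_Coarse_graining} has Lindblad operators of the form \eqref{eq:ourLindbladOp} with the window filter $f(t)=\indicator(\labs{t}\le T/2)/\sqrt T$ and a KMS weight $\gamma$, hence satisfies the hypotheses \eqref{eq:AAdagger}, \eqref{eq:fnormalized}, \eqref{eq:gamma_KMS}, so nothing new is proved there.

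For the static estimate I would work in the eigenbasis $\vH=\sum_i E_i\ketbra{\psi_i}{\psi_i}$, using that the weighted operator Fourier transform acts entrywise, $\bra{\psi_i}\hat{\vA}^a(\omega)\ket{\psi_j}=\hat f\!\bigl(\omega-(E_i-E_j)\bigr)\bra{\psi_i}\vA^a\ket{\psi_j}$. The cleanest packaging is through the approximate detailed balance of \autoref{sec:error_boltzmann}: writing $\hat\CL[\vO]:=\vrho_\beta^{-1/4}\CL_\beta[\vrho_\beta^{1/4}\vO\vrho_\beta^{1/4}]\vrho_\beta^{-1/4}$, trace preservation of $\CL_\beta$ gives the exact identity $\hat\CL^{\dagger}[\vrho_\beta^{1/2}]=\vrho_\beta^{1/4}\CL_\beta^{\dagger}[\vI]\vrho_\beta^{1/4}=0$, so that $\CL_\beta[\vrho_\beta]=\vrho_\beta^{1/4}(\hat\CL-\hat\CL^{\dagger})[\vrho_\beta^{1/2}]\vrho_\beta^{1/4}$; the deviation from a Gibbs fixed point is thus \emph{exactly} the failure of $\hat\CL$ to be self-adjoint, which for exact Bohr-frequency jumps ($T=\infty$) vanishes by \eqref{eq:gamma_KMS} together with \eqref{eq:AAdagger}, mirroring the classical identity \eqref{eq:classical_DB}. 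It remains to bound this failure for the sharp window, where $\hat f(\omega)\propto\sinc(\omega T/2)$. Isolating a single transition of Bohr frequency $\nu$, the violation is controlled by the integral of the Boltzmann mismatch $1-\e^{\beta(\omega-\nu)}$ against the weight $\gamma(\omega)\labs{\hat f(\omega-\nu)}^2$; splitting at a cutoff $\Lambda$, the near part contributes $\CO(\beta\Lambda)$ (Taylor expansion of the mismatch, valid since $\labs{\omega-\nu}\le\Lambda$) while the far part contributes $\CO(1/(T\Lambda))$ (the $\sinc^2$ tail mass satisfies $\int_{\labs{x}>\Lambda}\labs{\hat f(x)}^2\rd x=\CO(1/(T\Lambda))$, and the exponential growth of the mismatch in the low-$\omega$ tail is neutralized by the KMS decay of $\gamma$). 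Optimizing $\Lambda\sim 1/\sqrt{\beta T}$ yields the stated $\CO(\sqrt{\beta/T})$; the square root is precisely this near/far balance and is exactly where a sharp window is genuinely worse than a Gaussian filter.

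For the dynamical step, use
\begin{align}
\vrho_{fix}(\CL_\beta)-\vrho_\beta=\int_0^{\infty}\e^{\CL_\beta s}\bigl[\CL_\beta[\vrho_\beta]\bigr]\rd s,
\end{align}
valid because a finite mixing time forces a unique fixed point with $\e^{\CL_\beta s}[\vrho_\beta]\to\vrho_{fix}$. Since $\CL_\beta[\vrho_\beta]$ is traceless and $\e^{\CL_\beta s}$ is a trace-norm contraction that moreover brings every state within, say, $\tfrac14$ of $\vrho_{fix}$ after time $t_{mix}(\CL_\beta)$, one gets $\normp{\e^{\CL_\beta s}[\CL_\beta[\vrho_\beta]]}{1}\le 2^{-\lfloor s/t_{mix}(\CL_\beta)\rfloor}\normp{\CL_\beta[\vrho_\beta]}{1}$; integrating the geometric series gives $\normp{\vrho_{fix}-\vrho_\beta}{1}\le\CO(t_{mix}(\CL_\beta))\cdot\normp{\CL_\beta[\vrho_\beta]}{1}$, and combining with the static bound proves \eqref{eq:nature_fixedpt_error}.

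The main obstacle is the static estimate for the \emph{sharp} window: unlike a Gaussian filter, $\sinc$ has only power-law tails, so detailed balance is violated non-negligibly by large-Bohr-frequency (heating) transitions, and one must verify that the Boltzmann weights $\gamma$ exactly tame these contributions so the relative error still collapses to $\sqrt{\beta/T}$ rather than to a logarithm or worse. Keeping the weighted $\vrho_\beta^{\pm1/4}$ norms clean — so that $\normp{\CL_\beta}{1-1}\le 2$ and $\norm{\vrho_\beta^{1/4}}\le 1$ make all constants dimension-independent — is the necessary bookkeeping around it, and this is precisely the content of the nonasymptotic secular-approximation bounds of \autoref{sec:error_boltzmann}.
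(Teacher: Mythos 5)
Your decomposition into a \emph{static} estimate on $\normp{\CL_\beta[\vrho_\beta]}{1}$ plus a \emph{dynamical} integral
\begin{equation}
\vrho_{fix}(\CL_\beta)-\vrho_\beta=\int_0^\infty \e^{\CL_\beta s}\bigl[\CL_\beta[\vrho_\beta]\bigr]\rd s
\end{equation}
is a valid alternative packaging of the argument, and the geometric-series bound you derive from it ($\normp{\vrho_{fix}-\vrho_\beta}{1}\le 2\,t_{mix}\normp{\CL_\beta[\vrho_\beta]}{1}$) is clean and correct: it fuses the two steps the paper keeps separate (\autoref{fact:mixing_time_to_fixedpoint} and \autoref{cor:fixed_point_error_mix}) into one. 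The reading of the last sentence of the theorem as a black-box import of~\cite{Mozgunov2020completelypositive,Christian_2013_Coarse_graining} is also exactly what the paper intends.

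The gap is in the static estimate. Your near/far split of the Boltzmann mismatch $1-\e^{\beta(\omega-\nu)}$ against $\gamma(\omega)\labs{\hat f(\omega-\nu)}^2$ is faithful to what happens on the \emph{diagonal} of $\CL_\beta[\vrho_\beta]$ in the energy eigenbasis, where indeed the far contribution is proportional to the $L^1$ tail mass $\int_{|x|>\Lambda}|\hat f(x)|^2\rd x = \CO(1/(T\Lambda))$ and the heating blow-up is exactly cancelled by the Gibbs weights $\rho_{kk}$. But $\CL_\beta[\vrho_\beta]$ has off-diagonal blocks (coherences between nearly degenerate levels), and there the relevant weight is the cross term $\hat f(\omega-(E_i-E_k))\hat f^*(\omega-(E_j-E_k))$, not a single $|\hat f|^2$; the neat factorization of $\rho_{kk}$ against the Boltzmann exponentials no longer occurs, and the near/far split as written does not bound these. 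This is precisely why the paper works at the level of operator norms rather than diagonal matrix elements: \autoref{lem:secular} controls the secular approximation error $\normp{\CL_\beta-\CL_{sec}}{1-1}$ by the $L^2$ tail of $\hat f$ (which for $\sinc$ is $\CO(1/\sqrt{\mu T})$, not $\CO(1/(\mu T))$), and \autoref{lem:L_Approx_DB} bounds the anti-Hermitian part of the secular-approximated discriminant. Crucially, combining those two basic bounds and optimizing $\mu$ only gives $(\beta/T)^{1/3}$, not $\sqrt{\beta/T}$; to reach $\sqrt{\beta/T}$ the paper needs the slicing/bootstrap argument (\autoref{lem:slicing}, \autoref{cor:improved_boltzmann}) which upgrades the approximate-detailed-balance estimate from $\CO(\beta\mu)$ to $\CO(\beta\sqrt{\mu/T})$ by decomposing the filter into exponentially growing frequency shells. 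This bootstrap is absent from your sketch; without it your exponent is either heuristically right for the wrong reasons (the far-part estimate $1/(T\Lambda)$ is a diagonal-only shortcut that the off-diagonal arguments do not deliver) or falls short of $\sqrt{\beta/T}$ if replaced by the operator-norm tail $1/\sqrt{\mu T}$.
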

The above introduces the notion of \textit{mixing time} for Linbladians: the time scale for which any pair of initial states become indistinguishable. Of course, the physical interpretation of this time scale depends on how the Lindbladian is normalized; for our cases, we conveniently have that $\normp{\CL_{\beta}}{1-1} \le 2$~\eqref{eq:L_normalized}.

\begin{defn}[\Lword{} mixing time] For any \Lword{} $\CL$, we define the mixing time $t_{mix}(\CL)$ in the Schrödinger picture to be the shortest time for which
\begin{align}
\lnormp{\e^{\CL t_{mix}}[\vrho-\vrho']}{1} \le \frac{1}{2} \normp{\vrho-\vrho'}{1} \quad \text{for any states}\quad \vrho, \vrho'.
\end{align}
\end{defn}
\autoref{thm:Nature_fixedpoint} states that the approximation of Gibbs state degrades at a low temperature, a poor energy resolution (i.e., a short Fourier Transform time $T$), or a long mixing time. While the parameters $\beta, T$ are tunable parameters; the mixing time is generally Hamiltonian dependent. Still, one may obtain a bound using additional assumptions (such as the decay of correlation of commuting Hamiltonian Gibbs states~\cite{Bardet2021EntropyDF,kastoryano2016commuting} or the Eigenstate Thermalization Hypothesis~\cite{ETH_thermalization_Chen21, Shtanko2021AlgorithmsforGibbs}), empirical intuition, or conversion from a numerically-obtained spectral gap (\autoref{prop:gap_to_mixing}).

Assuming some grasp of the mixing time, how large should the time scale $T$ get to obtain a Gibbs sample? Roughly, according to the error bound~\eqref{eq:nature_fixedpt_error}, the time $T$ should scale with the mixing time by 
\begin{align}
T \sim \beta t_{mix}^2 / \epsilon^2. 
\end{align}

More carefully, the RHS~\eqref{eq:nature_fixedpt_error} does not obviously permit a solution for the desired accuracy $\epsilon$ (similarly for~\autoref{thm:L_correctness} and~\autoref{thm:D_correct}). Indeed, the mixing time $t_{mix}$ can generally depend on other parameters, especially the width $T$; heuristically, one may guess that the mixing time $t_{mix}$ depends mildly on the width $T$, but we leave a careful analysis for future works. An optimistic instance is when the Eigenstate Thermalization Hypothesis holds, and the mixing time at finite energy resolution can be related to the infinite resolution ($T \rightarrow \infty$) case~\cite{ETH_thermalization_Chen21,Shtanko2021AlgorithmsforGibbs}, which can be calculated. 

The main analytic challenge to prove~\autoref{thm:Nature_fixedpoint} is how to control the convergence and fixed point of \Lword{}s without exact detailed balance. Our technical contribution is to formulate an \textit{approximate detailed balance} condition using the appropriate superoperator norm 
\begin{align}
    \lnormp{\vrho_{\beta}^{1/4}\CL^{\dagger}[\vrho_{\beta}^{-1/4}\cdot \vrho_{\beta}^{-1/4}]\vrho_{\beta}^{1/4} -\vrho_{\beta}^{-1/4}\CL[\vrho_{\beta}^{1/4}\cdot \vrho_{\beta}^{1/4}]\vrho_{\beta}^{-1/4}}{2-2}\label{eq:main_ADB_22}.
\end{align}
The above two superoperators are each other's adjoints, and thus, the above quantifies the magnitude of certain \textit{anti-Hermitian} component of the \Lword{} under similarity transformation. Indeed, traditionally, the detailed balance condition is convenient as it effectively reduces the mixing time of superoperators to the spectral theory of Hermitian operators, which is conceptually and technically more transparent. Our observation is that the consequences of detailed balance, including Gibbs fixed point and spectral bounds on mixing time, are remarkably stable against perturbation. Much ink is devoted to bound~\eqref{eq:main_ADB_22}, which is yet another technical challenge. Indeed, we are \textit{inverting} the Gibbs state, which has exponentially small weights. The energy uncertainty in the Fourier Transforms $\hat{\vA}^a(\omega)$ could potentially blow up~\eqref{eq:main_ADB_22}. In response, we further introduce an intermediate \Lword{} by applying a rigorous \textit{secular approximation} (related to the \textit{rotating wave approximation}), such that transitions with large energy deviation are truncated.
\begin{align}
    \hat{\vA}^a(\omega) \approx \hat{\vS}^a(\omega) \quad \text{such that} \quad \bra{\psi_i}\hat{\vS}^a(\omega) \ket{\psi_j} = 0 \quad \textrm{whenever} \quad \labs{(E_i - E_j) - \omega} \gg 0.
\end{align}
The secular approximation interplays nicely with the operator Fourier Transform and should be widely applicable in the rigorous, nonasymptotic analysis of open-system Lindbladians.

Conceptually, there are two \textit{opposite} ways to understand~\autoref{thm:Nature_fixedpoint}: pessimistically, the Gibbs state may not be physical if the mixing time is too long, and we might have to simulate the natural \Lword{} to understand its fixed point; optimistically, if the Gibbs state is indeed physical, we might ignore its physical origin and take a short-cut to design even more efficient Gibbs sampling algorithms. The two perspectives are individually addressed in the following sections.

\subsubsection{Simulating Nature}
Taking a step back from Gibbs sampling, how do we simulate open system dynamics in nature? This boils down to the task of \Lword{} simulation, which has been studied largely restricted to the black-box setting~\cite{cleve2016EffLindbladianSim}. However, we have an \textit{explicit} \Lword{} in mind to simulate. As a result, we had to modify existing black-box input models to capture our Lindbladian~\eqref{eq:ourLindbladOp}; this also inspires us to design \textit{even more} efficient \Lword{} simulation algorithm for our access model. First, we define how we want our \Lword{} to be block-encoded.

\begin{defn}[Block-encoding of a \Lword{}]\label{defn:blockLindladian}
	Given a purely irreversible \Lword{} 
\begin{align}
     \CL[\vrho]:=\sum_{j \in J} \L(\vL_j\vrho \vL_j^{\dagger} - \frac{1}{2} \vL_j^{\dagger}\vL_j\vrho - \frac{1}{2}\vrho\vL_j^{\dagger}\vL_j\R), 
\end{align} 
we say that a matrix\footnote{For implementation purposes $\vU$ will be a unitary quantum circuit, but we also consider nonunitary block-encodings for the sake of analysis.} $\vU$ is a block-encoding of the Lindblad operators $\{\vL_j\}_{j\in J}$ if \footnote{In the first register, we could use any orthonormal basis, sticking to computational basis elements $\ket{j}$ is just for ease of presentation. Intuitively, one can think about $b$ as the number of ancilla qubits used for implementing the operators $\vL_j$, while typically $a-b\approx \log|J|$.} 
 \begin{align}
     (\bra{0^b}\otimes \vI)\cdot \vU\cdot(\ket{0^{c}} \otimes \vI)=\sum_{j\in J} \ket{j} \otimes \vL_j \quad \text{for}\quad b, c \in \mathbb{N}.
 \end{align}
\end{defn}
Indeed, because of the many Lindblad operators $j\in J$, other access models are certainly valid (e.g., given block-encoding for each Lindblad operator $\vL_j$~\cite{li2022SimMarkOpen,cleve2016EffLindbladianSim}). Nevertheless, \autoref{defn:blockLindladian} interplays nicely with our \Lword{} (especially the operator Fourier Transform) and the following efficient \Lword{} simulation algorithm. 

\begin{thm}[Linear-time \Lword{} simulation, simplified]\label{thm:LCUSim_main}
	Suppose $\vU$ is a unitary block-encoding of the \Lword{} $\CL$ as in \autoref{defn:blockLindladian}. Let $t>1$ and $\epsilon\leq 1/2$, then we can simulate the map $\e^{t \CL}$ to error $\epsilon$ in diamond norm using 
	\begin{align}	&\bigO{\L(c+\log(\frac{t}{\epsilon}) \R)\log(\frac{t}{\epsilon})}\quad&\text{ (resettable) ancilla qubits},\\ 
	&\bigOt{t} \quad &\text{(controlled) uses of $\vU$ and $\vU^\dagg$},\\
	\text{and}\quad &\bigOt{t(c+1)}\quad &\text{ other two-qubit gates}.
	\end{align}
\end{thm}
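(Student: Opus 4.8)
The plan is to reduce the simulation of $\e^{t\CL}$ to a truncated Taylor/LCU series and then apply oblivious amplitude amplification, in the spirit of the Hamiltonian-simulation LCU method and its extension to Lindbladians by Cleve--Wang. First I would dilate the Lindbladian into a ``single-step channel'' that is easy to block-encode from $\vU$. Writing $\CL = \sum_j (\vL_j \cdot \vL_j^\dagger - \tfrac12\{\vL_j^\dagger \vL_j, \cdot\})$, the standard trick is to note that for a small time step $\tau = t/r$ the channel $\e^{\tau\CL}$ is $\bigO{\tau^2}$-close (in diamond norm) to the channel
\begin{align}
\CN_\tau[\vrho] = \vK_0 \vrho \vK_0^\dagger + \tau \sum_{j\in J} \vL_j \vrho \vL_j^\dagger, \qquad \vK_0 := \vI - \tfrac{\tau}{2}\sum_{j\in J}\vL_j^\dagger \vL_j,
\end{align}
which is a valid (subnormalized) Kraus decomposition because $\nrm{\sum_j \vL_j^\dagger \vL_j}\le 1$ after our normalization. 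A Stinespring dilation of $\CN_\tau$ can be built from one use of $\vU$, one use of $\vU^\dagger$, and a constant-depth ``coherent coin'' that rotates in an ancilla the weight $\sqrt{\tau}$ onto the $\sum_j \ket{j}\otimes\vL_j$ branch versus the identity branch; the register holding $j$ is then traced out (or, in the coherent-purified picture, retained as part of the environment). This gives a block-encoding of the single-step channel using $c + \bigO{1}$ ancillas.

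Next I would amplify to full time $t$. Rather than composing $r = \bigOt{t}$ steps naively — which would cost $\bigOt{t}$ uses of $\vU$ but accumulate error linearly — I would use the ``compression'' idea: expand $\e^{t\CL} = (\e^{\tau\CL})^r \approx (\CN_\tau)^r$ and write the $r$-fold product as a single LCU over ``jump histories'' $(k_1,\dots)$, weighted by the Poisson-like coefficients coming from expanding each $\vK_0$ factor. Truncating at total jump number $K = \bigO{t + \log(1/\epsilon)}$ (a Chernoff bound on the Poisson distribution controls the tail) yields a linear combination of $\bigOt{t}$ unitaries, each a product of $\bigOt{t}$ controlled-$\vU$'s interleaved with the coins; applying oblivious amplitude amplification on the success flag turns the subnormalized implementation into a genuine channel $\bigO{\epsilon}$-close to $\e^{t\CL}$ in diamond norm. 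Counting gates: $\bigOt{t}$ uses of $\vU,\vU^\dagger$, an extra $c$-qubit's worth of work per use (hence $\bigOt{t(c+1)}$ two-qubit gates), and the ancilla count is dominated by the index register for the LCU, $\bigO{(c+\log(t/\epsilon))\log(t/\epsilon)}$, with all non-history ancillas resettable between segments.

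The two places where care is needed, and which I expect to be the main obstacle, are (i) bounding the accumulated \emph{diamond-norm} error of replacing $\e^{t\CL}$ by the truncated jump-history LCU — this requires the short-time expansion of $\e^{\tau\CL}$ to be controlled in diamond norm uniformly (using $\nrm{\CL}_{\diamond}\le\bigO{1}$, which follows from $\nrm{\sum_j \vL_j^\dagger\vL_j}\le 1$) and a telescoping/subadditivity argument so the per-step $\bigO{\tau^2}$ errors sum to $\bigO{t\tau} = \bigO{t^2/r}$, forcing $r = \bigOt{t^2/\epsilon}$ unless one is more clever; the logarithmic-in-$1/\epsilon$ dependence in the theorem comes from instead using the exact series $\e^{\tau\CL}=\sum_{m} \tau^m \CL^m/m!$ and truncating each step at order $\bigO{\log(1/\epsilon)/\log\log(1/\epsilon)}$, and (ii) verifying that oblivious amplitude amplification applies despite $\CN_\tau$ being a \emph{channel} rather than an isometry — this is handled by working with a fixed Stinespring dilation throughout, so the object being amplified is genuinely an isometry on system-plus-environment, and only tracing out at the very end. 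I would also need the elementary fact (provable by a direct Duhamel/interpolation estimate) that $\nrm{\e^{\tau\CL}-\CN_\tau}_{\diamond} = \bigO{\tau^2}$ with the implied constant controlled by $\nrm{\sum_j\vL_j^\dagger\vL_j}$; everything else is bookkeeping of ancilla registers and gate counts.
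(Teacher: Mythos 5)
Your approach is essentially the paper's: the proof in \autoref{apx:improvLind} also starts from the one-step weak-measurement channel $\CN_\tau$ (realized there as the postselective gadget of \autoref{fig:postWeakMeasCircuit}, built from one $\vU$ and one $\vU^\dagg$ plus a controlled coin), exploits the normalization $\nrm{\sum_j\vL_j^\dagg\vL_j}\le 1$ built into \autoref{defn:blockLindladian}, compresses the $r=\Theta(t^2/\epsilon)$-fold repetition via Hamming-weight concentration of $\vY_\delta^{\otimes r}\ket{0^r}$ to $h=\Theta(\log(1/\epsilon)/\log\log(1/\epsilon))$ effective uses of $\vU$ per constant-time segment, amplifies with oblivious amplitude amplification, and concatenates $\lceil t/2\rceil$ segments. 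You correctly anticipate all of these ingredients, including the segmenting and where the $\log\log$ comes from.

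Where you gloss over something real: you describe the compressed sum over jump histories as if it were a plain LCU over $\bigOt{t}$ terms, but the number of weight-$\le h$ histories is combinatorially large, and the whole point of the Berry--et al.\ / Cleve--Wang compression is that one never enumerates them --- one maintains a $\bigO{h(\log r + c)}$-qubit compressed register and applies $\vC'$ ``transversally'', which only works because the gadget acts as the identity on $\ket{0^{c+2}}$ (the ``idle'' property your $\vK_0$-branch must be engineered to satisfy). More importantly, your obstacle (ii) is not fully resolved by ``keeping a fixed Stinespring dilation'': the step that requires genuine care is performing the post-selection measurement and the OAA reflection directly on the compressed encoding. The paper explicitly flags that the compressed verification in Cleve--Wang does not work as stated and supplies a corrected binary-search-based compressed measurement; a proof along your lines must either reproduce such a scheme or decompress before measuring, and decompressing naively would cost $\bigO{r}=\bigO{t^2/\epsilon}$ gates, destroying the near-linear scaling. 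So the skeleton is right, but this particular sub-step is a load-bearing piece that cannot be waved away.
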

See~\autoref{thm:LCUSim} for the complete result and~\autoref{thm:weakMeasSim} for a simpler algorithm with suboptimal asymptotic scaling. Compared with the best existing results as sum-of-norm $\sum_{j \in J} \norm{\vL_{j}^{\dagger}\vL_{j}}$, we achieve a strictly better scaling with the norm-of-sum $\norm{\sum_{j \in J} \vL_{j}^{\dagger}\vL_{j}}$ when the Lindblad operators are altogether block-encoded as in~\autoref{defn:blockLindladian}; if we are only given block-encodings for each Lindblad operators, we can always convert them to our input model (\autoref{defn:blockLindladian}) and recover the existing scaling $\sum_{j \in J} \norm{\vL_{j}^{\dagger}\vL_{j}}$.

It remains to create a unitary block-encoding (\autoref{defn:blockLindladian}) for our particular Lindblad operators~\eqref{eq:ourLindbladOp}. Since our algorithms run on discrete qubits, our implementation requires discretizing the operator Fourier Transform, with a change of the notation
\begin{equation}\label{eq:DiscAMain}
 \hat{\vA}^a(\omega) \rightarrow \hat{\vA}^a(\bomega):= \sum_{ \bt \in S_{t_0} } \e^{-\ri \bomega \bt}f(\bt)\vA^a(\bt)\quad \text{for each} \quad  a \in A \quad \text{and} \quad \bomega \in S_{\omega_0}.
\end{equation}
The discretized frequency and time labels $\omega \rightarrow \bomega \in S_{\omega_0} =\{0, \pm \omega_0,\cdots \}$ and $t \rightarrow \bt \in S_{t_0} = \{0, \pm t_0,\cdots \}$ corresponds to the discrete Fourier Transform (\autoref{sec:operator_FT}) using a finite grid of size $N = \labs{S_{\omega_0}} = \labs{S_{t_0}}$, which can be stored using $\lceil\log(N)\rceil$ additional ancillas.\footnote{We require $N\omega_0\geq 4 \nrm{\vH}+2/\beta$ to store all possible energy transitions.} While the discretization parameters are needed for explicit algorithmic implementation, conceptually, they merely incur \textit{logarithmic} overhead in the runtime and ancillas (\autoref{apx:cont_limit}). We may now concretely present our algorithmic goal: efficiently construct a block-encoding in the form
\begin{align}\label{eq:opFTGens}
    (\bra{0^b}\otimes I)\vU(|0^{c}\rangle\otimes I)=\sum_{a \in A ,\bomega \in S_{\omega_0}} \sqrt{\gamma(\bomega)} \ket{\bomega,a}\otimes\hat{\vA}^a(\bomega).
\end{align}
\begin{lem}[Efficient block-encoding]\label{lem:nature_block_encoding}
    In the setting of~\autoref{thm:Nature_fixedpoint}, a unitary block-encoding $\vU$ for the (discretized) Lindblad operators~\eqref{eq:opFTGens} can be created using one query of $\vec{Prep}$,  $\vW$, $\vV_{jump}$, $\vec{QFT}$, and 
\begin{align}
&\CO(T) \quad &\text{(controlled) Ham. sim. time for $\vH$}\\
&n+1+ \lceil\log_2(\labs{A})\rceil + \lceil\log_2(N)\rceil  & \text{(resettable) qubits}.
\end{align}
\end{lem}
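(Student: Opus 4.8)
The goal is to realize the block-encoding \eqref{eq:opFTGens} as an explicit short circuit acting on the $n$-qubit system register together with three ancilla registers: a time/frequency register $\mathsf{R}_\omega$ of $\lceil\log_2 N\rceil$ qubits (with $N=\labs{S_{t_0}}=\labs{S_{\omega_0}}$), a jump-label register $\mathsf{R}_a$ of $\lceil\log_2\labs{A}\rceil$ qubits, and one extra qubit $\mathsf{R}_\gamma$ that will carry the Boltzmann factor; the Hamiltonian simulation will borrow a further workspace that is uncomputed at the end, hence counted as resettable rather than as part of the persistent budget. First I would prepare the ancillas: $\vW$ puts $\mathsf{R}_\omega$ into the discretized filter state $\sum_{\bt\in S_{t_0}}f(\bt)\ket{\bt}$, a legitimate unit vector by the discrete normalization $\sum_{\bt}\labs{f(\bt)}^2=1$ in \eqref{eq:fnormalized}; $\vec{Prep}$ puts $\mathsf{R}_a$ into the superposition such that, together with $\vV_{jump}$ (the controlled application of the jumps $\vA^a$ indexed by $\mathsf{R}_a$), the pair forms a sub-normalized block-encoding $\sum_{a\in A}\ket{a}\otimes\vA^a$ --- which is possible precisely because $\nrm{\sum_{a\in A}\vA^{a\dagger}\vA^a}\le1$ by \eqref{eq:AAdagger}; and $\mathsf{R}_\gamma$ stays in $\ket{0}$.

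Second, I would implement the Heisenberg evolution of the jumps conditioned on $\mathsf{R}_\omega$. Writing $\bt$ in binary, the controlled evolution $\sum_{\bt}\ketbra{\bt}{\bt}\otimes\e^{-\ri\vH\bt}$ factors into $\lceil\log_2 N\rceil$ controlled evolutions of durations $t_0,2t_0,4t_0,\dots$, so its total (controlled) Hamiltonian-simulation time is $\sum_k 2^k t_0=\CO(Nt_0)=\CO(T)$; this doubling trick is exactly what keeps the cost $\CO(T)$ rather than $\CO(NT)$. Then apply $\vV_{jump}$, then the conjugate controlled evolution $\sum_{\bt}\ketbra{\bt}{\bt}\otimes\e^{+\ri\vH\bt}$ (another $\CO(T)$). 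In the block where the $\vV_{jump}$ workspace, $\mathsf{R}_\gamma$, and the Ham.\ sim.\ workspace all sit in $\ket{0}$, the registers $(\mathsf{R}_\omega,\mathsf{R}_a,\text{system})$ now carry $\sum_{\bt\in S_{t_0},a\in A}f(\bt)\ket{\bt}\ket{a}\otimes\vA^a(\bt)$ with $\vA^a(\bt)=\e^{\ri\vH\bt}\vA^a\e^{-\ri\vH\bt}$, i.e.\ the time-domain summand of \eqref{eq:DiscAMain}.

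Third, applying $\vec{QFT}$ to $\mathsf{R}_\omega$ (with the convention of \autoref{sec:operator_FT}) sends $\ket{\bt}\mapsto N^{-1/2}\sum_{\bomega\in S_{\omega_0}}\e^{-\ri\bomega\bt}\ket{\bomega}$ for the dual grid with $\omega_0 t_0 N=2\pi$, converting the time sum into $N^{-1/2}\sum_{a\in A,\bomega\in S_{\omega_0}}\ket{\bomega}\ket{a}\otimes\hat{\vA}^a(\bomega)$, the $N^{-1/2}$ being folded into the block-encoding subnormalization; the grid condition $N\omega_0\ge4\nrm{\vH}+2/\beta$ ensures every energy difference appearing in $\vA^a(\bt)$ is representable without aliasing, so this matches \eqref{eq:DiscAMain}. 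Finally a single controlled rotation $\vR_\gamma:\ket{\bomega}\ket{0}\mapsto\ket{\bomega}\bigl(\sqrt{\gamma(\bomega)}\ket{0}+\sqrt{1-\gamma(\bomega)}\ket{1}\bigr)$ on $\mathsf{R}_\gamma$, well-defined since $0\le\gamma(\bomega)\le1$ by \eqref{eq:gamma_KMS}, inserts the $\sqrt{\gamma(\bomega)}$ prefactors. Defining $\vU$ as the composition of all the above unitaries and collecting the ancillas that return to $\ket{0}$ into the $\ket{0^b},\ket{0^c}$ registers of \autoref{defn:blockLindladian} gives \eqref{eq:opFTGens}. The resource count is then one use each of $\vec{Prep}$, $\vW$, $\vV_{jump}$, $\vec{QFT}$, a cheap $\vR_\gamma$, $\CO(T)$ controlled Hamiltonian-simulation time, and $n+1+\lceil\log_2\labs{A}\rceil+\lceil\log_2 N\rceil$ qubits (plus the resettable Ham.\ sim.\ workspace).

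\textbf{Main obstacle.} The difficulty is bookkeeping rather than conceptual: pinning down the $\vec{QFT}$ phase convention, the grid spacings $t_0,\omega_0$, and the chain of subnormalization constants (the $N^{-1/2}$ of the discrete transform, the $\vec{Prep}$ weighting, the $\sqrt{\gamma}$ rotation) so that the operator read off the block is \emph{literally} $\hat{\vA}^a(\bomega)$ of \eqref{eq:DiscAMain} and not a rescaled variant --- this is exactly where $N\omega_0\ge4\nrm{\vH}+2/\beta$ is needed, to forbid aliasing of the Bohr frequencies carried by $\e^{\pm\ri\vH\bt}$. The secondary point requiring care is certifying that the controlled Heisenberg evolution really costs only $\CO(T)$ total simulation time via the binary doubling decomposition and that all of its workspace is uncomputed, so it stays resettable and does not enlarge the stated qubit count; and, relatedly, that $\vec{Prep}$ together with \eqref{eq:AAdagger} produces exactly the jump weighting implicit in \eqref{eq:opFTGens}.
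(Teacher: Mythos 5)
Your circuit is essentially the right one — prepare a filter superposition on the frequency/time register, conjugate the block-encoded jump $\vV_{jump}$ by controlled Heisenberg evolutions $\e^{\mp\ri\vH\bt}$, apply $\vec{QFT}$, then a controlled $\vY$-rotation weighted by $\sqrt{\gamma(\bomega)}$ on one extra qubit. That is precisely the paper's construction (see \autoref{fig:L_circuit} and \autoref{lem:block_encoding_L}). However, you have misattributed which named subroutines in the statement implement which step, and this breaks the claimed resource count.

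Concretely, the paper's definitions (stated just before the lemma) are: $\vec{Prep}$ is the state-preparation unitary $\ket{\bar 0}\mapsto \ket{f}=\sum_{\bt}f(\bt)\ket{\bt}$ on the frequency/time register; $\vW=\sum_{\bomega}\vY_{1-\gamma(\bomega)}\otimes\ketbra{\bomega}{\bomega}$ is the controlled Boltzmann rotation; and $\vV_{jump}$ is the \emph{entire} block-encoding $(\bra{0^b}\otimes\vI)\vV_{jump}(\ket{0^c}\otimes\vI)=\sum_{a}\ket{a}\otimes\vA^a$, which already contains any preparation of the jump-label register internally (cf.\ \eqref{eq:exampleJumps}, where $\vB$ is folded into $\vV_{jump}$). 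Your write-up inverts this: you use $\vW$ to prepare $\ket{f}$ (that is $\vec{Prep}$), you use $\vec{Prep}$ to prepare the jump-label superposition (that is $\vB$, internal to $\vV_{jump}$), and you then introduce a fifth gate $\vR_\gamma$ for the $\sqrt{\gamma(\bomega)}$-rotation (that is exactly $\vW$). Under your labels the circuit consumes $\vec{Prep}$, $\vW$, $\vV_{jump}$, $\vec{QFT}$, \emph{and} $\vR_\gamma$ — five primitives, one not in the statement's budget and one double-counted — so the resource count in the lemma is not verified as stated. Relabeling $\vW\leftrightarrow\vec{Prep}$, dropping your $\vR_\gamma$ in favour of the paper's $\vW$, and absorbing your jump-register prep into $\vV_{jump}$ yields precisely \eqref{eq:unitary_for_L_block_encoding} and the paper's proof.

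Two smaller points. First, the qubit count you give matches the lemma only in the special case $b=0$ (e.g.\ unitary jump operators with $c=\lceil\log_2|A|\rceil$), which is implicitly assumed by the statement and worth saying explicitly. Second, your bound "$\sum_k 2^kt_0=\CO(Nt_0)=\CO(T)$" is imprecise: in general $T$ can be much smaller than $Nt_0$; the correct argument is that the filter state is supported on $|\bt|\le T$, so the controlled evolution only needs to be conditioned on the $\lceil\log_2(T/t_0)\rceil$ low-order bits of $\bt$ (acting trivially elsewhere), giving total simulation time $\CO(T)$ regardless of $N$.
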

See~\autoref{lem:block_encoding_L} for the explicit construction. In perceivable usage, we expect the number of Fourier labels registers $N$ to scale polynomially with all other parameters (\autoref{apx:cont_limit})
\begin{align}
    N \sim \poly (n,\beta, \norm{\vH},T,\epsilon^{-1},t, \labs{A})
\end{align}
for a good approximation for the continuous Fourier Transform~\eqref{eq:ourLindbladOp}. Morally, our algorithm extracts the essential functionality of a Markovian bath (which naively may require a substantial number of qubits to implement~\cite{Shtanko2021AlgorithmsforGibbs,ETH_thermalization_Chen21}) by merely polylogarithmic resettable ancillas.

To make the simulation cost transparent, we list the main circuit components required for implementation: the controlled Hamiltonian simulation
\begin{align}
	\sum_{\bt \in S_{t_0}}\ketbra{\bt}{\bt}\otimes e^{\pm \ri \bt \vH},
\end{align}
the unitary for preparing the filter function in superposition 
\begin{align}
    \vec{Prep}: \ket{\bar{0}} \rightarrow \ket{f}:=\sum_{\bar{t}\in S_{t_0}} f(\bt)\ket{\bt},
\end{align}
the controlled rotation for transition weights
\begin{align}
    \vW := \sum_{\bomega\in S_{\omega_0}}  \begin{pmatrix} \sqrt{\gamma(\bomega)} & -\sqrt{1-\gamma(\bomega)}\\ \sqrt{1-\gamma(\bomega)} &  \sqrt{\gamma(\bomega)} \end{pmatrix} \otimes \ketbra{\bomega}{\bomega},
\end{align}
the quantum Fourier Transform $\vec{QFT}$, and the block-encoding $\vV_{jump}$ of the jump operators $\sum_{a\in A} \ket{a} \otimes \vA^a$. In practice, synthesizing the above incurs additional overhead but should be treated as an independent subroutine to study (see~\autoref{sec:block_encodings}); we expect Hamiltonian simulation to be the dominant source of cost, which we present\footnote{To obtain the end-to-end gate complexity, one should specify a Hamiltonian simulation subroutine (e.g.,\cite{lloyd1996UnivQSim,low2016HamSimQSignProc,childs2012HamSimLCU}).} by the accumulated time for the (controlled) unitaries $\e^{\ri \vH \bt}$.

The key idea behind implement the unitary block-encoding \eqref{eq:opFTGens} is the \textit{operator Fourier Transform} (\autoref{sec:block_encodings}) as an alternative to phase estimation (Figure~\ref{fig:OQFT})
\begin{align}
    	\CF[\cdot]: \ket{f}\otimes \vO \rightarrow \sum_{\bomega}\ket{\bomega} \otimes \vO_{\bomega}
\end{align}
which is physically motivated, compatible with our analytic framework, and leads to simple explicit circuits. 

\subsubsection{Improving Nature}
Suppose our goal is to prepare the Gibbs state, then according to~\autoref{thm:Nature_fixedpoint} and~\autoref{lem:nature_block_encoding} we may algorithmically simulate the physical Lindbladian till the mixing time. However, the Fourier Transform occurring in Nature (\autoref{thm:Nature_fixedpoint}), in fact, has a ``heavy tail'' in the frequency domain; this uncertainty in energy may significantly contribute to the Gibbs state error. With full algorithmic freedom, can we do better? In this section, we simply tweak the Lindbladian by considering a nicer Fourier Transform weight
\begin{align}
    f(t) \propto \e^{-\frac{t^2}{4\sigma_t^2 }} \quad \text{with Gaussian width}\quad \sigma_t.
\end{align}
The width sets the Hamiltonian simulation time scale $\sim \sigma_t$ and the energy resolution $\sim \sigma_t^{-1}$.  The Gaussian distribution is particularly nice as it enjoys sharp concentration in \textit{both} time and frequency domains; in principle, other normalized functions are also permissible, such as Kaiser-window functions~\cite{berry2022QuantTopologicalData,McArdle_2022quantumstate}, as long as they can be efficiently generated in superposition $\sum_{\bt \in S_{t_0}} \ket{f(\bt)}$, but we will stick to Gaussians for simplicity.  
\begin{thm}[Approximate Gibbs fixed point]\label{thm:L_correctness}
Any \Lword{} $\CL_{\beta}$~\eqref{eq:mainL} satisfying the symmetry and normalization conditions~\eqref{eq:AAdagger},\eqref{eq:fnormalized}, and \eqref{eq:gamma_KMS} with the normalized Gaussian weight 
\begin{align}
    f(t) \propto \e^{-\frac{t^2}{4\sigma_t^2 }}
\end{align}
has an approximate Gibbs fixed point
\begin{align}
\nrm{ \vrho_{fix}(\CL_\beta) -\vrho_\beta}_1 =\bigOt{\frac{\beta}{\sigma_t} t_{mix}(\CL_{\beta})}
\label{eq:Gaussian_fixedpt_error}.
\end{align}
\end{thm}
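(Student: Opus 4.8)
The plan is to reduce the whole statement to a bound on a single scalar -- the \emph{approximate detailed balance defect} of $\CL_\beta$ measured in the norm~\eqref{eq:main_ADB_22} -- and then to estimate that scalar by exploiting the analyticity of the Gaussian filter together with the KMS condition~\eqref{eq:gamma_KMS}.

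\textbf{Step 1: reduction to the defect.} Abbreviate $\vrho_{fix}:=\vrho_{fix}(\CL_\beta)$. Since $\CL_\beta$ is trace preserving and Hermiticity preserving, $\CL_\beta[\vrho_\beta]$ is a traceless Hermitian operator, so for every $s\ge 0$ the mixing-time hypothesis together with the trace-norm contractivity of CPTP maps gives $\normp{\e^{\CL_\beta s}[\CL_\beta[\vrho_\beta]]}{1}\le 2^{-\floor{s/t_{mix}(\CL_\beta)}}\normp{\CL_\beta[\vrho_\beta]}{1}$. Integrating the Duhamel identity $\e^{\CL_\beta t}[\vrho_\beta]-\vrho_\beta=\int_0^t\e^{\CL_\beta s}[\CL_\beta[\vrho_\beta]]\,\rd s$ and letting $t\to\infty$ (using ergodicity, a consequence of the mixing hypothesis) yields
\begin{align}
\normp{\vrho_{fix}-\vrho_\beta}{1}\le 2\,t_{mix}(\CL_\beta)\,\normp{\CL_\beta[\vrho_\beta]}{1}.
\end{align}
Now introduce $\Phi[X]:=\vrho_\beta^{-1/4}\CL_\beta[\vrho_\beta^{1/4}X\vrho_\beta^{1/4}]\vrho_\beta^{-1/4}$, whose Hilbert--Schmidt adjoint is $\Phi^{\dagger}[Y]=\vrho_\beta^{1/4}\CL_\beta^{\dagger}[\vrho_\beta^{-1/4}Y\vrho_\beta^{-1/4}]\vrho_\beta^{1/4}$, so that $\normp{\Phi-\Phi^{\dagger}}{2-2}$ is precisely~\eqref{eq:main_ADB_22}. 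Trace preservation gives $\CL_\beta^{\dagger}[\vI]=0$, hence $\Phi^{\dagger}[\vrho_\beta^{1/2}]=0$, and therefore $\CL_\beta[\vrho_\beta]=\vrho_\beta^{1/4}(\Phi-\Phi^{\dagger})[\vrho_\beta^{1/2}]\vrho_\beta^{1/4}$. Using Schatten--H\"older ($\normp{\vrho_\beta^{1/4}Y\vrho_\beta^{1/4}}{1}\le\normp{\vrho_\beta^{1/4}}{4}^2\normp{Y}{2}=\normp{Y}{2}$) and $\normp{\vrho_\beta^{1/2}}{2}=1$ one gets $\normp{\CL_\beta[\vrho_\beta]}{1}\le\normp{\Phi-\Phi^{\dagger}}{2-2}$. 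Thus it suffices to show $\normp{\Phi-\Phi^{\dagger}}{2-2}=\bigOt{\beta/\sigma_t}$.

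\textbf{Step 2: the defect is $\bigOt{\beta/\sigma_t}$.} The point of the Gaussian weight is that $f(t)\propto\e^{-t^2/4\sigma_t^2}$ is entire with Gaussian-controlled growth on horizontal lines, so shifting the contour $t\mapsto t-\ri s$ yields the exact identity
\begin{align}
\e^{s\vH}\hat{\vA}^a(\omega)\e^{-s\vH}=\e^{\,\omega s+s^2/(4\sigma_t^2)}\,\hat{\vA}^a\!\L(\omega+\tfrac{s}{2\sigma_t^2}\R)\qquad(s\in\BR).
\end{align}
Rewriting every $\vrho_\beta^{\pm 1/4}$-conjugation appearing in $\Phi$ and $\Phi^{\dagger}$ via this identity with $s=\pm\beta/4$, and using that the jump set is closed under adjoints (so $\hat{\vA}^a(\omega)^{\dagger}=\hat{\vA}^{a'}(-\omega)$ for some $a'\in A$) together with the KMS relation $\gamma(\omega)\e^{\beta\omega}=\gamma(-\omega)$, one sees that $\Phi$ and $\Phi^{\dagger}$ would cancel term by term \emph{exactly} were it not for (i) a uniform frequency offset $\delta=\beta/(4\sigma_t^2)$ separating $\hat{\vA}^a(\omega+\delta)$ from $\hat{\vA}^a(\omega-\delta)$, and (ii) a harmless scalar prefactor $\e^{\beta^2/(64\sigma_t^2)}=1+\CO(\beta^2/\sigma_t^2)$. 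The offset is turned into a quantitative bound by noting that $\hat{\vA}^a(\omega+\delta)-\hat{\vA}^a(\omega)$ is again an operator Fourier transform of $\vA^a$, now with filter $(\e^{\ri\delta t}-1)f(t)$ whose $\ell_2$-mass is at most $|\delta|\cdot\normp{t\,f}{2}=\CO(\beta/\sigma_t)$ (since $\normp{t\,f}{2}=\sigma_t$ for the normalized Gaussian). Feeding this into the dissipator-norm bound for operator-Fourier-transform Lindbladians -- which, by Plancherel, scales like $\normp{f}{2}\cdot\nrm{\sum_{a\in A}\vA^{a\dagger}\vA^a}^{1/2}$ -- and using $\normp{f}{2}=1$ and $\nrm{\sum_{a\in A}\vA^{a\dagger}\vA^a}\le 1$ gives $\normp{\Phi-\Phi^{\dagger}}{2-2}=\bigOt{\beta/\sigma_t}$; combined with Step 1 this is~\eqref{eq:Gaussian_fixedpt_error}.

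\textbf{The main obstacle.} The delicate part is executing Step 2 despite $\nrm{\vrho_\beta^{-1/4}}=\e^{\beta\nrm{\vH}/4}$ being exponentially large: conjugated operators cannot be controlled in operator norm. The prefactors produced by the analytic-continuation identity are tamed by $\sqrt{\gamma(\omega)}\,\e^{\pm\beta\omega/4}\le 1$ (which follows from $\gamma(\omega)\le\min(1,\e^{-\beta\omega})$, itself a consequence of~\eqref{eq:gamma_KMS}), but extracting the $\beta/\sigma_t$ cancellation term by term requires every $\hat{\vA}^a(\omega)$ in sight to be effectively band-limited in energy. I would handle this with a rigorous \emph{secular approximation} $\hat{\vA}^a(\omega)\approx\hat{\vS}^a(\omega)$ that discards matrix elements between eigenstates with $\labs{(E_i-E_j)-\omega}\gtrsim\sigma_t^{-1}\sqrt{\log(\cdot)}$; the truncation error is controlled by the Gaussian tails of $\hat f$, and it is here -- rather than in the analytic-continuation step -- that the Gaussian filter beats the boxcar filter of~\autoref{thm:Nature_fixedpoint}, upgrading $\sqrt{\beta/T}$ to $\beta/\sigma_t$. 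After the truncation each conjugation acts on a band-limited operator, so all the cancellations of Step 2 are term-by-term finite; the logarithms from the truncation window and from discretizing the Fourier transform are what the $\bigOt{\cdot}$ in~\eqref{eq:Gaussian_fixedpt_error} absorbs.
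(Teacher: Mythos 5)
Your proposal is correct and takes a genuinely different route from the paper, which is worth spelling out. The paper's proof passes through an intermediate \emph{secular-approximated} Lindbladian $\CL_{sec}$: it bounds $\normp{\vrho_{fix}(\CL_\beta)-\vrho_{fix}(\CL_{sec})}{1}$ by a Lindbladian perturbation argument and $\normp{\vrho_{fix}(\CL_{sec})-\vrho_\beta}{1}$ by a combinatorial block-band decomposition of the conjugated $\CL_{sec}$ (\autoref{lem:error_from_Boltzmann}), which requires the hard band-limitedness $\bra{\psi_i}\hat{\vS}^a(\omega)\ket{\psi_j}=0$ for $|E_i-E_j-\omega|>\bmu$. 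Your key move is instead the exact contour-shift identity $\e^{s\vH}\hat{\vA}^a(\omega)\e^{-s\vH}=\e^{\omega s+s^2/4\sigma_t^2}\hat{\vA}^a(\omega+s/2\sigma_t^2)$, valid precisely because the Gaussian is entire. Applied with $s=\pm\beta/4$ this turns $\Phi=\vrho_\beta^{-1/4}\CL_\beta[\vrho_\beta^{1/4}\cdot\vrho_\beta^{1/4}]\vrho_\beta^{-1/4}$ into a manifestly bounded Davies-type superoperator with jumps shifted by $+\delta:=\beta/(8\sigma_t^2)$ and $\gamma(\omega)$ replaced by $\gamma(\omega)\e^{\omega\beta/2}\le 1$ (the KMS bound); re-indexing $\Phi^\dagger$ via the adjoint closure and KMS yields the same superoperator with shift $-\delta$. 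The $2\delta=\beta/(4\sigma_t^2)$ offset is the entire defect, and it is controlled by Parseval for the difference filter $(\e^{\ri\delta t}-\e^{-\ri\delta t})f(t)$ together with the Cauchy–Schwarz bound \autoref{lem:tensorSumBound}, giving $\normp{\Phi-\Phi^\dagg}{2-2}=\CO(\beta/\sigma_t)$ without ever inverting $\vrho_\beta$ against an unbounded operator. Combined with your clean Duhamel reduction $\normp{\vrho_{fix}-\vrho_\beta}{1}\le 2t_{mix}\normp{\CL_\beta[\vrho_\beta]}{1}\le 2t_{mix}\normp{\Phi-\Phi^\dagg}{2-2}$ (a streamlined form of \autoref{cor:fixed_point_error_mix}), this gives the theorem.

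One remark: your closing paragraph, where you say the term-by-term cancellation "requires every $\hat{\vA}^a(\omega)$ in sight to be effectively band-limited" and invoke a secular approximation, actually undersells your own Step 2. The contour shift is an \emph{identity}, not an approximation; it already eliminates the $\vrho_\beta^{-1/4}$ factors exactly, so no $\hat{\vS}^a$-truncation is needed and no truncation parameter $\bmu$ needs to be tuned against the Gaussian tail. The paper uses secular truncation precisely because its argument must also cover non-analytic filters (the boxcar of \autoref{thm:Nature_fixedpoint}); your argument buys exactness and simplicity at the cost of being special to entire weight functions — an entirely reasonable trade for this theorem, and arguably the "right" explanation of why the Gaussian beats the boxcar by a full power of the relevant small parameter.
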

See~\autoref{sec:proof_correctness} for the proof. We briefly present the analogous block-encoding costs; the Gaussian width $\sigma_t$ plays a similar role as the time scale $T$, and Gaussians exhibit a better scaling than~\eqref{eq:nature_fixedpt_error} due to its sharp concentration.
\begin{lem}[Efficient block-encoding]
    In the setting of~\autoref{thm:L_correctness}, a unitary block-encoding $\vU$ for the (suitably discretized) Lindblad operators~\eqref{eq:opFTGens} can be constructed using one query each to $\vec{Prep}$,  $\vW$, $\vV_{jump}$, $\vec{QFT}$, and 
\begin{align}
&\tilde{\CO}\L(\sigma_t \R) \quad &\text{(controlled) Ham. sim. time for $\vH$}\\
&n+1+ \lceil\log_2(\labs{A})\rceil + \lceil\log_2(N)\rceil & \text{(resettable) qubits}.
\end{align}
\end{lem}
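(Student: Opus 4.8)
The plan is to reuse the circuit of \autoref{lem:nature_block_encoding} essentially verbatim, the only change being that the window $f$ is now the discretized Gaussian $f(\bt)\propto\e^{-\bt^2/(4\sigma_t^2)}$ rather than a rectangle, and then to re-estimate the Hamiltonian-simulation cost in terms of $\sigma_t$. Concretely, I would assemble $\vU$ on four registers --- the $n$-qubit system, a $\lceil\log_2 N\rceil$-qubit register that plays the role of both ``time'' and ``frequency'', a $\lceil\log_2\labs{A}\rceil$-qubit jump register (which doubles as the block-encoding ancilla of $\vV_{jump}$), and the single flag qubit of $\vW$ --- as the composition: (1) $\vec{Prep}$ writes $\ket{f}=\sum_{\bt\in S_{t_0}}f(\bt)\ket{\bt}$ into the time register; (2) the controlled evolution $\sum_{\bt}\ketbra{\bt}{\bt}\otimes\e^{-\ri\bt\vH}$ acts on the system; (3) $\vV_{jump}$, a block-encoding of $\sum_{a\in A}\ket{a}\otimes\vA^a$, is applied; (4) the controlled evolution $\sum_{\bt}\ketbra{\bt}{\bt}\otimes\e^{+\ri\bt\vH}$ acts on the system, so that the system has now been hit by the Heisenberg-evolved jump $\vA^a(\bt)=\e^{\ri\bt\vH}\vA^a\e^{-\ri\bt\vH}$; (5) $\vec{QFT}$ on the time register sends $\ket{\bt}\mapsto\frac{1}{\sqrt{N}}\sum_{\bomega\in S_{\omega_0}}\e^{-\ri\bomega\bt}\ket{\bomega}$, converting the $\bt$-superposition into the discrete operator Fourier transform $\hat{\vA}^a(\bomega)=\sum_{\bt}\e^{-\ri\bomega\bt}f(\bt)\vA^a(\bt)$; (6) $\vW$, controlled on $\bomega$, rotates the flag qubit so as to introduce the amplitude $\sqrt{\gamma(\bomega)}$. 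Reading off the block in which the flag qubit of $\vW$ (and the block-encoding ancilla, if any, of $\vV_{jump}$) is in $\ket{0}$ --- while the $\bomega$ and $a$ registers survive as the Lindblad index $j=(\bomega,a)$ --- yields exactly the target $\sum_{a\in A,\bomega\in S_{\omega_0}}\sqrt{\gamma(\bomega)}\ket{\bomega,a}\otimes\hat{\vA}^a(\bomega)$ of \eqref{eq:opFTGens}, with each of $\vec{Prep}$, $\vW$, $\vV_{jump}$, $\vec{QFT}$ invoked exactly once.

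The correctness of the construction is a direct bookkeeping exercise: propagate $\ket{0^c}\otimes\ket{\psi}$ through steps (1)--(6) and check the intermediate states, using only that $\vV_{jump}$ is a block-encoding of $\sum_a\ket{a}\otimes\vA^a$ and that $\vec{QFT}$ realizes the $N$-point discrete Fourier transform. The qubit count is then immediate --- $n$ (system) $+\,1$ (flag) $+\,\lceil\log_2\labs{A}\rceil$ (jump register $/$ $\vV_{jump}$ ancilla) $+\,\lceil\log_2 N\rceil$ (time--frequency) --- and all of these are resettable after postselection.

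The only quantitative point that differs from the rectangular-window case is the accumulated Hamiltonian-simulation time in steps (2) and (4). Implementing $\sum_{\bt}\ketbra{\bt}{\bt}\otimes\e^{\pm\ri\bt\vH}$ by controlled powers of $\e^{\pm\ri t_0\vH}$ costs total evolution time $\CO(N t_0)$ per step, so I need $Nt_0=\tilde{\CO}(\sigma_t)$ while still keeping $\omega_0$ small enough and $N\omega_0\ge 4\norm{\vH}+2/\beta$ so the grid covers all Bohr frequencies (cf.\ the footnote after \eqref{eq:DiscAMain}). This is where the Gaussian helps: since $\labs{f(t)}\le\e^{-t^2/(8\sigma_t^2)}$, the $\ell_2$-weight of $f$ outside $\labs{t}\le\CO(\sigma_t\sqrt{\log(1/\epsilon)})$ is below $\epsilon$, so truncating the time grid to that window --- and choosing $\omega_0\sim 1/\sigma_t$ and $N=\tilde{\CO}(\sigma_t(\norm{\vH}+1/\beta))$ accordingly --- perturbs the block-encoded operator \eqref{eq:opFTGens} by at most $\epsilon$ in operator norm, while giving accumulated Hamiltonian-simulation time $\CO(Nt_0)=\CO(1/\omega_0)=\tilde{\CO}(\sigma_t)$, exactly as asserted; the remaining gates $\vec{Prep},\vW,\vec{QFT}$ are $\sigma_t$-independent.

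I expect the only real obstacle to be this last truncation/discretization estimate --- showing that the finite, logarithmically sized grid reproduces the continuous operator Fourier transform $\hat{\vA}^a(\omega)$ of \eqref{eq:ourLindbladOp} (and hence the \Lword{} \eqref{eq:mainL}) to the claimed accuracy. For the rectangular window this is carried out in the proof of \autoref{lem:nature_block_encoding} and \autoref{apx:cont_limit}; for the Gaussian the relevant bounds are if anything sharper, by sub-Gaussian tail estimates in both the time and frequency domains, so the statement follows as a minor variant of \autoref{lem:nature_block_encoding} with the discretization technicalities handled once and for all in \autoref{apx:cont_limit}.
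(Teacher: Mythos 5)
Your circuit, bookkeeping, and qubit tally reproduce the paper's: the proof is essentially a pointer to \autoref{lem:block_encoding_L} (the circuit of \autoref{fig:L_circuit}) together with the discretization analysis of \autoref{apx:cont_limit}, and your core mechanism --- that the Gaussian window lets one truncate the time register to $|\bt|\le T=\tilde\CO(\sigma_t)$ with negligible $\ell_2$ error, certified by \autoref{lem:OpFtDiffBound} and \autoref{cor:fHPerturBound} with $f'\leftarrow f_T$ --- is exactly the paper's.

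One step of your quantitative reasoning, however, does not survive scrutiny, and you should repair it before submitting. You set $\omega_0\sim 1/\sigma_t$, $N=\tilde\CO(\sigma_t(\norm{\vH}+1/\beta))$, and conclude $\CO(Nt_0)=\CO(1/\omega_0)=\tilde\CO(\sigma_t)$. Those parameters are not ``suitably discretized'' in the sense the lemma requires: the fixed-point error in \autoref{thm:L_correctness} contains a term of order $\sigma_t\omega_0\sqrt{\log(1/(\sigma_t\omega_0))}$, which forces $\omega_0=\tilde\CO(\epsilon/\sigma_t)$, and then $Nt_0=2\pi/\omega_0=\tilde\Omega(\sigma_t/\epsilon)$. (Concretely, \autoref{cor:GaussContToDisc} takes $t_0=\sigma_t\sqrt{2\pi/N}$ with $N=\mathrm{poly}(1/\epsilon)$, so $Nt_0=\sigma_t\sqrt{2\pi N}\gg\sigma_t$.) The $\tilde\CO(\sigma_t)$ Hamiltonian-simulation cost therefore cannot be read off $Nt_0$; it comes purely from the truncation you already identified: implement $\sum_{\bt}\ketbra{\bt}{\bt}\otimes\e^{\pm\ri\bt\vH}$ only for those register values with $|\bt|\le T$, leave the remaining register values --- which carry amplitude $<\epsilon$ in $\ell_2$ after $\vec{Prep}$ by the Gaussian tail --- unacted upon, and invoke \autoref{cor:fHPerturBound} to certify that the resulting block-encoding is $\epsilon$-close to \eqref{eq:opFTGens}. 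The lesson is that $N$ (hence $\omega_0$ and the register size $\lceil\log_2 N\rceil$) is set by the accuracy demand of \autoref{thm:L_correctness}, while the Hamiltonian-simulation time is governed independently by the time-domain support $T$ of the window; decoupling these two scales is precisely what the Gaussian buys. With that substitution your argument matches the paper's.
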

See~\autoref{lem:block_encoding_L} for the explicit construction (which is essentially the same circuit leading to~\autoref{lem:nature_block_encoding}) and the required $N$ to ensure a good discretization error (\autoref{apx:cont_limit}). Therefore, the Gaussian width merely needs to scale as
\begin{align}
\sigma_t \sim \beta t_{mix} / \epsilon  
\end{align}
to prepare a Gibbs sample; see Table~\ref{table:thermal_algorithms} for the altogether cost.

\subsubsection{Quantum-walk speedup}

With full algorithmic freedom, we may further depart from physics and seek a Szegedy-type speedup by considering a coherent representation of the \Lword{}~\cite{szegedy2004QMarkovChainSearch,wocjan2021szegedy}.\footnote{For convenience, we define the discriminant such that it is shifted by the identity matrix compared to definitions in earlier work.} Inheriting the notation of~\eqref{eq:mainL}, we consider the following \textit{discriminant proxy} 
\begin{align}\label{eq:D_main}
\vec{\CD}_{\beta}&:=  
\sum_{ a\in A}\int \sqrt{\gamma(\omega)\gamma(-\omega)}  \hat{\vA}^a(\omega) \otimes  \hat{\vA}^{a*}(\omega) - \frac{\gamma(\omega)}{2} \L(\vA^{a}(\omega)^\dagg\hat{\vA}^a(\omega) \otimes \vI + \vI \otimes \hat{\vA}^{a}(\omega)^{\dagg*}\hat{\vA}^{a}(\omega)^*\R)\ \mathrm{d} \omega. 
\end{align}
The superscript $\vO^*$ denotes the entry-wise complex conjugation. Indeed, as required by the quantum walk formalism, this operator is Hermitian (\autoref{cor:DiscProxySelfAdjoint})
\begin{align}
    (\vec{\CD}_{\beta})^{\dagger} = (\vec{\CD}_{\beta}) \quad \text{assuming symmetries}\quad~\eqref{eq:AAdagger},\eqref{eq:fnormalized}.
\end{align}
Analogously to the discriminant of classical Markov chains, the discriminant proxy is approximately the Davies-type \Lword{}~\eqref{eq:mainL} superoperator conjugated by the Gibbs state (\autoref{sec:D_circuit}) but \textit{vectorized} into an operator (\autoref{sec:algortihm_coherent}). This construction comes at the cost of duplicating the Hilbert space but, as a bonus, provides access to the following canonical purification of the Gibbs state
\begin{align}
\ket{\sqrt{\vrho_{\beta}}} \propto \sum_i \e^{-\beta E_i/2} \ket{\psi_i} \otimes \ket{\psi_i^*}\quad \text{where}\quad \vH =\sum_i E_i\ketbra{\psi_i}{\psi_i}
\end{align}
as the (approximate) top eigenvector of $\vec{\CD}_{\beta}$. The superscript $\ket{\psi_i^*}$ denotes entrywise complex conjugate in the computational basis\footnote{The above purified state is independent of which basis one applies complex conjugation to.}. If the Hamiltonian is diagonal in the computation basis, this is essentially equivalent to the purified classical distribution. For general Hamiltonians, the state $\ket{\sqrt{\vrho_{\beta}}}$ is also known as the thermofield double state in quantum gravity (see e.g.,~\cite{Maldacena2013CoolHF}). Taking a partial trace recovers the Gibbs state, but the purification can sometimes be more useful, e.g., for efficient verification by a swap test or faster evaluation of observables (see, e.g.,~\cite{knill2007optimal}). 

\begin{thm}[Approximate purified Gibbs state]\label{thm:D_correct} Instantiate the \Lword{} parameters of~\autoref{thm:L_correctness} with Gaussian width $\sigma_t$ for the corresponding discriminant proxy $\vec{\CD}_\beta$~\eqref{eq:D_main}. Then, the leading eigenvector $\ket{\lambda_1(\vec{\CD}_\beta)}$ well-approximates the Gibbs state 
\begin{align}
\nrm{ \ket{\lambda_1(\vec{\CD}_\beta)} -\ket{\vrho_\beta}} = \bigOt{\frac{\beta}{\sigma_t} \frac{1}{\lambda_{gap}(\vec{\CD}_\beta)}} \label{eq:D_error}.  
\end{align}
The quantity $\lambda_{gap}(\vec{\CD}_\beta) := \lambda_1(\vec{\CD}_\beta)-\lambda_2(\vec{\CD}_\beta)$ is the spectral gap.
\end{thm}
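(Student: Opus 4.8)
The plan is to undo the ``vectorization'' that turns the \Lword{} $\CL_\beta$ of \autoref{thm:L_correctness} into the Hermitian operator $\vec{\CD}_\beta$, and then to run a sharp Hermitian eigenvector-perturbation argument in which the \emph{unperturbed} top eigenvector is the canonical purification $\ket{\sqrt{\vrho_\beta}}$ (denoted $\ket{\vrho_\beta}$ in the statement), i.e.\ the Choi vectorization of $\vrho_\beta^{1/2}$, and the ``perturbation'' is the approximate-detailed-balance defect of size $\bigOt{\beta/\sigma_t}$. Writing $\ket{\vX}$ for the Choi vectorization (so $\nrm{\ket{\vX}}=\fnorm{\vX}$ and $\ket{\sqrt{\vrho_\beta}}=\ket{\vrho_\beta^{1/2}}$ is normalized since $\tr\vrho_\beta=1$) and $\Gamma[\vX]:=\vrho_\beta^{1/4}\vX\vrho_\beta^{1/4}$, one first checks by inspection that $\vec{\CD}_\beta$ is the vectorization of the superoperator $\CM_\beta[\vrho]:=\sum_{a\in A}\int\sqrt{\gamma(\omega)\gamma(-\omega)}\,\hat{\vA}^a(\omega)\vrho\,\hat{\vA}^a(\omega)^\dagger-\tfrac{\gamma(\omega)}{2}\{\hat{\vA}^a(\omega)^\dagger\hat{\vA}^a(\omega),\vrho\}\,\mathrm d\omega$, and that $\vec{\CD}_\beta$ is Hermitian (\autoref{cor:DiscProxySelfAdjoint}). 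Next, since $\hat{\vA}^a(\omega)$ is, up to a secular tail of width $\sim\sigma_t^{-1}$, an energy shift by $\omega$, we have $\vrho_\beta^{\mp 1/4}\hat{\vA}^a(\omega)\vrho_\beta^{\pm 1/4}\approx \e^{\pm\beta\omega/4}\hat{\vA}^a(\omega)$; combined with the KMS identity $\sqrt{\gamma(\omega)\gamma(-\omega)}=\e^{\beta\omega/2}\gamma(\omega)$ this yields the \emph{exact} identity $\Gamma^{-1}\circ\CL_\beta\circ\Gamma=\CM_\beta$ when the $\hat{\vA}^a(\omega)$ are exact energy shifts, and in general $\CM_\beta=\Gamma^{-1}\circ\CL_\beta\circ\Gamma+\CE_{\mathrm{sec}}$ with $\normp{\CE_{\mathrm{sec}}}{2-2}=\bigOt{\beta/\sigma_t}$. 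The bound on $\CE_{\mathrm{sec}}$ is exactly the rigorous secular-approximation plus approximate-detailed-balance estimate \eqref{eq:main_ADB_22} proved for \autoref{thm:L_correctness}, instantiated with the Gaussian weight of width $\sigma_t$; I would import it verbatim.

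The second step is to certify that $\ket{\sqrt{\vrho_\beta}}$ is a good approximate top eigenpair of $\vec{\CD}_\beta$. Applying $\CM_\beta$ to $\vrho_\beta^{1/2}$ and using $\Gamma[\vrho_\beta^{1/2}]=\vrho_\beta$ gives $\CM_\beta[\vrho_\beta^{1/2}]=\vrho_\beta^{-1/4}\CL_\beta[\vrho_\beta]\vrho_\beta^{-1/4}+\CE_{\mathrm{sec}}[\vrho_\beta^{1/2}]$. The second term is $\bigOt{\beta/\sigma_t}$ in Frobenius norm by Step 1; for the first term, the point is that $\CL_\beta[\vrho_\beta]$ is concentrated near the energy diagonal (again by the secular control behind \autoref{thm:L_correctness}), so conjugating it by the exponentially ill-conditioned $\vrho_\beta^{-1/4}$ costs only $\e^{\bigOt{\beta/\sigma_t}}$ rather than $\e^{\bigOm{\beta\nrm{\vH}}}$, and it too is $\bigOt{\beta/\sigma_t}$ in Frobenius norm. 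Hence $\nrm{\vec{\CD}_\beta\ket{\sqrt{\vrho_\beta}}}=\bigOt{\beta/\sigma_t}$. Since $\ket{\sqrt{\vrho_\beta}}$ is normalized, Cauchy--Schwarz gives $\langle\sqrt{\vrho_\beta}|\vec{\CD}_\beta|\sqrt{\vrho_\beta}\rangle=\bigOt{\beta/\sigma_t}$ and hence $\lambda_1(\vec{\CD}_\beta)\geq-\bigOt{\beta/\sigma_t}$; a matching upper bound $\lambda_1(\vec{\CD}_\beta)\leq\bigOt{\beta/\sigma_t}$ follows from near-dissipativity of $\CM_\beta$ (the numerical abscissa of $\Gamma^{-1}\CL_\beta\Gamma$ is nonpositive up to the $\bigOt{\beta/\sigma_t}$ defect, since $\CL_\beta$ is a \Lword{} with approximate fixed point $\vrho_\beta$). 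Therefore the residual $\ket{r}:=\vec{\CD}_\beta\ket{\sqrt{\vrho_\beta}}-\lambda_1(\vec{\CD}_\beta)\ket{\sqrt{\vrho_\beta}}$ satisfies $\nrm{\ket{r}}=\bigOt{\beta/\sigma_t}$.

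The third step is the sharp, gap-dependent perturbation bound. Fix the phase of $\ket{\lambda_1(\vec{\CD}_\beta)}$ and decompose $\ket{\sqrt{\vrho_\beta}}=c_1\ket{\lambda_1(\vec{\CD}_\beta)}+\ket{w}$ with $\ket{w}\perp\ket{\lambda_1(\vec{\CD}_\beta)}$ and $c_1\geq0$. Since $(\vec{\CD}_\beta-\lambda_1(\vec{\CD}_\beta))\ket{\lambda_1(\vec{\CD}_\beta)}=0$ we get $\ket{r}=(\vec{\CD}_\beta-\lambda_1(\vec{\CD}_\beta))\ket{w}$; on the orthogonal complement of $\ket{\lambda_1(\vec{\CD}_\beta)}$ the Hermitian operator $\vec{\CD}_\beta-\lambda_1(\vec{\CD}_\beta)$ has spectrum $\leq-\lambda_{gap}(\vec{\CD}_\beta)$, hence is invertible there with inverse of norm $\leq1/\lambda_{gap}(\vec{\CD}_\beta)$, so $\nrm{\ket{w}}\leq\nrm{\ket{r}}/\lambda_{gap}(\vec{\CD}_\beta)=\bigOt{\tfrac{\beta}{\sigma_t}\tfrac{1}{\lambda_{gap}(\vec{\CD}_\beta)}}$. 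Finally $\nrm{\ket{\lambda_1(\vec{\CD}_\beta)}-\ket{\sqrt{\vrho_\beta}}}^2=(1-c_1)^2+\nrm{\ket{w}}^2\leq2\nrm{\ket{w}}^2$ using $c_1=\sqrt{1-\nrm{\ket{w}}^2}\geq1-\nrm{\ket{w}}^2$, which is \eqref{eq:D_error}.

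\textbf{Main obstacle.} Steps 2 and 3 are soft once Step 1 is available; the crux is the $\bigOt{\beta/\sigma_t}$ control on \emph{both} $\CE_{\mathrm{sec}}$ and $\fnorm{\vrho_\beta^{-1/4}\CL_\beta[\vrho_\beta]\vrho_\beta^{-1/4}}$, i.e.\ showing that the exponentially ill-conditioned similarity transform by $\vrho_\beta^{\pm1/4}$ does not amplify the energy-uncertainty tails of the operator Fourier transform $\hat{\vA}^a(\omega)$. This is precisely where the rigorous secular approximation $\hat{\vA}^a(\omega)\approx\hat{\vS}^a(\omega)$ and the approximate-detailed-balance framework developed for \autoref{thm:L_correctness} must be invoked rather than re-derived, and where essentially all the technical weight of the theorem sits. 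A secondary but genuine subtlety is that obtaining the \emph{linear} (not square-root) dependence on $1/\lambda_{gap}(\vec{\CD}_\beta)$ requires the residual/resolvent argument of Step 3; the cruder variational inequality $(1-|c_1|^2)\lambda_{gap}(\vec{\CD}_\beta)\leq\lambda_1(\vec{\CD}_\beta)-\langle\sqrt{\vrho_\beta}|\vec{\CD}_\beta|\sqrt{\vrho_\beta}\rangle$ would only give $\bigOt{\sqrt{\beta/(\sigma_t\lambda_{gap}(\vec{\CD}_\beta))}}$.
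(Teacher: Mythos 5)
Your Step 3 — decomposing $\ket{\sqrt{\vrho_\beta}}=c_1\ket{\lambda_1}+\ket{w}$, observing that the residual $\ket{r}=(\vec{\CD}_\beta-\lambda_1)\ket{w}$ lies in the orthogonal complement of $\ket{\lambda_1}$, and inverting $\vec{\CD}_\beta-\lambda_1$ there with operator norm $1/\lambda_{gap}$ — is a valid and arguably cleaner elementary alternative to the paper's route (which invokes \autoref{prop:Dfixed_point_error}, built on Bauer--Fike plus a generic nonHermitian eigenvector-perturbation lemma), and it recovers the same linear $1/\lambda_{gap}$ dependence. The gap is in Steps 1 and 2, and it is material. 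You take as your reference the conjugated \emph{full} Lindbladian $\Gamma^{-1}\circ\CL_\beta\circ\Gamma=\CD(\vrho_\beta,\CL_\beta)$ and then need two estimates: $\normp{\CE_{\mathrm{sec}}}{2-2}=\bigOt{\beta/\sigma_t}$ with $\CE_{\mathrm{sec}}=\CM_\beta-\CD(\vrho_\beta,\CL_\beta)$, and $\fnorm{\vrho_\beta^{-1/4}\CL_\beta[\vrho_\beta]\vrho_\beta^{-1/4}}=\bigOt{\beta/\sigma_t}$. Neither is what the paper's machinery actually delivers: \eqref{eq:main_ADB_22}, \autoref{lem:L_Approx_DB}, and \autoref{lem:error_from_Boltzmann} all control the \emph{secular-truncated} Lindbladian $\CL_{sec}$, never $\CL_\beta$ itself. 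Importing them ``verbatim'' therefore leaves you with an uncontrolled term $\CD(\vrho_\beta,\CL_\beta-\CL_{sec})$ — the $\vrho_\beta^{\pm 1/4}$-conjugated Gaussian tail — which is precisely the object the paper is engineered to never write down (the Gaussian decay probably does beat the $\e^{\beta|\nu-\omega|/4}$ amplification for $\sigma_t\gtrsim\beta$, but that is a fresh estimate, not a citation).

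The right reference object is $\vec{\CD}(\vrho_\beta,\CL_{sec})^\dagger$, and the reason is structural: $\vec{\CD}(\vrho_\beta,\CL_{sec})^\dagger\ket{\sqrt{\vrho_\beta}}=0$ holds \emph{exactly}, being the adjoint of trace-preservation for the Lindbladian $\CL_{sec}$ (cf.\ \autoref{prop:Gibbs_fixed_point}), so there is no conjugated-Lindbladian residual to bound at all. With that choice, your residual becomes $\ket{r}=\L(\vec{\CD}_\beta-\vec{\CD}(\vrho_\beta,\CL_{sec})^\dagger\R)\ket{\sqrt{\vrho_\beta}}-\lambda_1(\vec{\CD}_\beta)\ket{\sqrt{\vrho_\beta}}$, and $|\lambda_1(\vec{\CD}_\beta)|\le\nrm{\vec{\CD}_\beta-\vec{\CD}(\vrho_\beta,\CL_{sec})^\dagger}$ by Hermitian eigenvalue perturbation (\autoref{cor:eigenvalue_perturb}). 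The single operator-norm quantity then telescopes through the secular proxy $\vec{\CD}_{sec}$ as $\nrm{\vec{\CD}_\beta-\vec{\CD}_{sec}}+\nrm{\vec{\CD}_{sec}-\vec{\CD}(\vrho_\beta,\CL_{sec})^\dagger}$, which is exactly what \autoref{lem:secular} and \autoref{lem:error_from_Boltzmann} furnish; the conjugated Gaussian tail never appears. That is how the paper's proof proceeds, and it is the one change your argument needs.
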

See~\autoref{sec:proof_coherent} for the proof of \autoref{thm:D_correct}. The top-eigenvector error resembles the \Lword{} case (\autoref{thm:L_correctness}) with the mixing time $t_{mix}(\CL_{\beta})$ replaced by the inverse spectral gap $\lambda_{gap}^{-1}(\vec{\CD}_{\beta})$. Though, unlike its \Lword{} cousin $\CL_{\beta}$, the discriminant proxy $\vec{\CD}_{\beta}$ does not generate a semi-group and does not by itself prepare its gapped eigenvector; we need to additionally perform \textit{quantum simulated annealing} (\autoref{sec:simulated_annealing}), which is basically adiabatic state preparation supplemented with a natural adiabatic path from high to low temperature $ 0 <\beta' < \beta $. This additional step uses
\begin{align}
    \sim \text{(adiabatic path length)}\quad \beta \norm{\vH} \times \text{(worst root-inverse-gap)}\quad \sqrt{\lambda_{gap}^{-1}}\label{eq:path_times_gap}
\end{align}
queries to block-encodings of $\vI+\vec{\CD}_{\beta'}$ across values of $\beta'$. In other words, the quantum-walk speedup against \Lword{}s boils down to replacing the mixing time by $t_{mix} \rightarrow \beta \norm{\vH}\sqrt{\lambda_{gap}^{-1}}$; this speedup comes with the cost of doubling the number of qubits $n \rightarrow 2n$. Of course, to perform adiabatic state preparation, we must algorithmically construct the block-encoding for $\vI + \vec{\CD}_{\beta}$. 

\begin{lem}[Efficient block-encoding]\label{lem:D_cost}
In the setting of \autoref{thm:D_correct}, a unitary block-encoding of $ \vI + \vec{\CD}_{\beta}$ can be constructed up to $\epsilon$ spectral norm error using $\CO(1)$ query each to $\vec{Prep}$,  $\vW$, $\vV_{jump}$, $\vec{QFT}$, $\vF$, $\vP$, and
\begin{align}
&\tilde{\CO}\L(\sigma_t \R) \quad &\text{(controlled) Ham. sim. time for}\quad \vH\\
& 2n+ \lceil\log_2(\labs{A})\rceil + 
\tCO( 1) & \text{(resettable) qubits}.
\end{align}
\end{lem}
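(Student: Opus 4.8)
The plan is to obtain the block-encoding of $\vI+\vec{\CD}_{\beta}$ by ``doubling'' the single-copy Lindbladian block-encoding $\vU$ already supplied by \autoref{lem:block_encoding_L} (the circuit behind \autoref{thm:L_correctness}). Recalling~\eqref{eq:D_main}, the discriminant proxy splits into a \emph{transition} part $\sum_{a}\int\sqrt{\gamma(\omega)\gamma(-\omega)}\,\hat{\vA}^{a}(\omega)\otimes\hat{\vA}^{a*}(\omega)\,\rd\omega$ and a \emph{decay} part $-\tfrac12(\vG\otimes\vI+\vI\otimes\vG^{*})$ with $\vG:=\sum_{a}\int\gamma(\omega)\,\hat{\vA}^{a}(\omega)^{\dagger}\hat{\vA}^{a}(\omega)\,\rd\omega$; both are quadratic forms in the operator-Fourier-transformed jumps acting on two copies of the system. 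I would run $\vU$ on the first copy and its entry-wise conjugate (the same circuit with $\e^{\ri\vH\bt}$ replaced by its conjugate $\e^{-\ri\vH^{*}\bt}$, leaving $\vec{Prep}$ and $\vW$ unchanged since $f$ and $\gamma$ are real) on the mirror copy, while \emph{sharing} the jump register $\ket{a}$ and the Fourier-label register $\ket{\bomega}$ between the two copies so that no extra $\labs{A}$- or $N$-sized registers are spawned. Post-selecting the block-encoding flags then produces exactly the Kronecker pairings $\hat{\vA}^{a}(\bomega)\otimes\hat{\vA}^{a*}(\bomega)$ on the ``jump'' branch and $\hat{\vA}^{a}(\bomega)^{\dagger}\hat{\vA}^{a}(\bomega)\otimes\vI$, $\vI\otimes\big(\hat{\vA}^{a}(\bomega)^{\dagger}\hat{\vA}^{a}(\bomega)\big)^{*}$ on the ``no-jump'' branch obtained via the reflection $\vU^{\dagger}\vU$-type structure.

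Concretely, the steps in order: (i) invoke \autoref{lem:block_encoding_L} for $\vU$ with the Gaussian filter, at cost $\tilde{\CO}(\sigma_{t})$ controlled Hamiltonian-simulation time and $\CO(1)$ uses of $\vec{Prep},\vW,\vV_{jump},\vec{QFT}$; (ii) form the doubled circuit on $2n$ system qubits plus shared $\lceil\log_{2}\labs{A}\rceil+\tilde{\CO}(1)$ ancillas; (iii) on the transition branch, replace the rotation by $\gamma(\bomega)$ with a rotation by the geometric mean $\sqrt{\gamma(\bomega)\gamma(-\bomega)}$, which is still $[0,1]$-valued (and, by~\eqref{eq:gamma_KMS}, even in $\bomega$), hence a bounded controlled rotation; this reweighting — equivalently $\sqrt{\gamma(\omega)\gamma(-\omega)}=\e^{\beta\omega/2}\gamma(\omega)$, which for a Gaussian filter amounts only to a bounded shift of the Gaussian — together with the coherent identification of the two copies' Fourier data is the role I would assign to the auxiliary gadgets $\vF$ and $\vP$; (iv) assemble the transition block, the two decay blocks $-\tfrac12\vG\otimes\vI$ and $-\tfrac12\vI\otimes\vG^{*}$, and $\vI\otimes\vI$ via a fixed-size linear combination of unitaries, where the $+\vI$ shift keeps the result sub-normalized by $\CO(1)$ and places the Gibbs fixed point as the top eigenvector, since $\vec{\CD}_{\beta}$ is — up to discretization — the vectorization of the dissipative $\CL_{\beta}$ conjugated by $\sqrt{\vrho_{\beta}}$, hence $\preceq0$ with $\normp{\vec{\CD}_{\beta}}{}\le\CO(1)$ by~\eqref{eq:L_normalized}; (v) the only inexactness is the truncation and discretization of the operator Fourier transform, pushed below $\epsilon$ in spectral norm by taking $N=\poly(\cdot)$ as in \autoref{apx:cont_limit}. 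Tallying resources reproduces the claim: $2n+\lceil\log_{2}\labs{A}\rceil+\tilde{\CO}(1)$ resettable qubits, $\tilde{\CO}(\sigma_{t})$ Hamiltonian-simulation time (the two conjugate evolutions are controlled on the \emph{same} time register and run in parallel), and $\CO(1)$ queries to each of $\vec{Prep},\vW,\vV_{jump},\vec{QFT},\vF,\vP$.

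The main obstacle I expect is the bookkeeping to show that the post-selected operator is \emph{exactly} $\vI+\vec{\CD}_{\beta}$ — Hermitian, with precisely the cross-terms of~\eqref{eq:D_main} — rather than merely something close: one must carefully track the entry-wise conjugations and the transposes induced by the mirror copy, use the identity $\hat{\vA}^{a}(\bomega)^{\dagger}=\hat{\vA}^{a'}(-\bomega)$ for the adjoint jump $a'$ together with the adjoint-closure~\eqref{eq:AAdagger} and the reality of $f$~\eqref{eq:fnormalized}, and verify that the KMS reweighting converts $\gamma$ to the geometric mean on the transition branch only (and $\gamma$ itself on the decay branches). A secondary subtlety is preserving an $\CO(1)$ overall sub-normalization: naively taking $\vU\otimes\vU^{*}$ with \emph{separate} label registers and Bell-projecting them would cost a spurious $1/\sqrt{\labs{A}N}$, so the shared-register routing — and implementing the $\e^{\beta\omega/2}$ factor as a bounded rotation rather than a post-selection — is essential. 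Once these are in place, the LCU assembly and the discretization estimate are routine given \autoref{thm:LCUSim_main}, \autoref{lem:block_encoding_L}, and \autoref{apx:cont_limit}.
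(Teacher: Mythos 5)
Your high-level plan — double the Hilbert space, run a conjugate copy, exploit the adjoint-closure of the jump set and the reality of $f$, and share label registers to keep the sub-normalization at $\CO(1)$ — is the right instinct, and you correctly spot the fatal $1/\sqrt{\labs{A}N}$ penalty that a naive $\vU\otimes\vU^{*}$ followed by a Bell projection would incur. But the mechanism you give for avoiding it is not actually specified, and this is exactly the hard part. If you share the label registers and apply $\vU$ then $\vU^{*}$ in sequence, you get a composition $\vU^{*}\vU$, not a tensor product; if you give each its own label registers, you have the Bell-projection overhead back. Your proposal does not resolve this dilemma.

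The paper's construction (\autoref{prop:discriminantBlock}, \autoref{fig:D_circuit}) resolves it via a Szegedy-walk conjugation $\vU'^{\dagger}\vR\vU'$ with no LCU at all. The key trick is a single extra selector qubit: $\vU'=\ketbra{+}{+}\otimes\vU\otimes\vI_{sys'}+\ketbra{-}{-}\otimes\vI_{sys}\otimes\vU^{*}$ runs $\vU$ on sys \emph{or} $\vU^{*}$ on sys', so only one branch is active and the label registers are never duplicated. The reflection $\vR=\vI-(\vI\otimes\vPi)+\vR_0$ then does all the work at once: the $\vI$ piece yields the $+\vI$ shift (there is no separate LCU term for the identity), the $-\vI\otimes\vPi$ piece gives the two decay terms $-\tfrac12\vG\otimes\vI-\tfrac12\vI\otimes\vG^{*}$, and $\vR_0=\vZ\otimes|0^{b'}\rangle\!\langle 0^{b'}|\otimes(\vF\otimes\vP)\otimes\vI$ flips $\ket{\pm}$ (so that $\vU'^{\dagger}$ unwinds the \emph{other} copy, producing the cross-term $\vL_j\otimes\vL_{j'}^{*\dagger}$) while simultaneously applying the involution $(\bomega,a)\mapsto(-\bomega,a')$ on the shared label registers. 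This is the actual role of $\vF$ and $\vP$: they are the label permutation inside the walk reflection, not a reweighting or a Fourier-data identification as you suggest.

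Two concrete consequences you missed. First, there is no need to modify $\vW$ to a geometric-mean rotation; $\vU$ keeps its $\sqrt{\gamma(\bomega)}$ weight, and $\sqrt{\gamma(\bomega)\gamma(-\bomega)}$ appears automatically because the cross-term carries one factor of $\sqrt{\gamma(\bomega)}$ from $\vL_j$ and one of $\sqrt{\gamma(-\bomega)}$ from $\vL_{j'}$, with $\vL_{j'}^{\dagger}=\sqrt{\gamma(-\bomega)}\,\hat{\vA}^{a'}(-\bomega)^{\dagger}=\sqrt{\gamma(-\bomega)}\,\hat{\vA}^{a}(\bomega)$ by \eqref{eq:OpFTDaggProperty} and \eqref{eq:AAdagger}. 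Second, since $\vR$ is a unitary (it acts as $\vI$ outside $\mathrm{range}(\vI\otimes\vPi)$ and as the unitary $\vZ\otimes\vF\otimes\vP$ inside it), the whole thing $\vU'^{\dagger}\vR\vU'$ is an exact unitary block-encoding of $\vI+\vec{\CD}_{\beta}$ with sub-normalization exactly one, and the only $\epsilon$-error is discretization, handled in \autoref{apx:cont_limit}. Your LCU-of-four-blocks plan could likely be made to work, but it would need its own resolution of the shared-register problem for the transition block and an additional justification that the four block-encodings have compatible ancilla structure — the $\ket{\pm}$ selector and one reflection is strictly simpler and is what the paper does.
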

See~\autoref{prop:discriminantBlock} for the circuit for~\autoref{lem:D_cost}. In the above, we have implicitly chosen the appropriate discretization $N$ (\autoref{apx:cont_limit}).  Note that to obtain a quantum-walk speedup, we made use of two additional (low-cost) circuit components (see~\autoref{sec:D_circuit}): the reflection on energy
\begin{align}
		\vF := \sum_{\bomega\in S_{\omega_0}} \ketbra{-\bomega}{\bomega}
\end{align}
and a permutation of the jump operator labels 
\begin{align}
		\vP:= \sum_{a \in A} \ketbra{a'}{a} \quad \text{where} \quad \vA^{a'} = (\vA^a)^{\dagger}\quad \text{for each}\quad a \in A.
\end{align}
To grasp the algorithmic cost, we roughly expect the width along the adiabatic path to scale as
\begin{align}
 \sigma_t(\beta') \sim \frac{\beta'}{\epsilon'\lambda_{gap}(\vec{\CD}_{\beta'})} \quad \text{for}\quad \text{$\epsilon' $-approximated}\quad \ket{\sqrt{\rho_{\beta'}}}; 
\end{align}
see Table~\ref{table:thermal_algorithms} and \autoref{sec:simulated_annealing} for an quantitative accumulated cost for the adiabatic algorithm.

The remainder of the main text is organized by the analytic (\autoref{sec:DB}) and the algorithmic parts (\autoref{sec:main_alg}). We begin the analytic exposition by reviewing basic facts circling detailed balance and mixing time, and then introduce consequences of approximate detailed balance. The algorithmic arguments include black-box \Lword{} simulation and a general recipe to quantize a \Lword{}. These abstract algorithms can be understood assuming merely block-encodings for the Lindblad operator, whose explicit circuit construction is laid out in~\autoref{sec:block_encodings}. We conclude the main text by highlighting plausible future directions in~\autoref{sec:discussion}.

The appendices are organized as follows. We begin with a recapitulation of notations (\autoref{sec:recap_notation}), followed by the supporting details for our key analytic and algorithmic argument: \autoref{sec:operator_FT} discusses properties of the operator Fourier Transform and the secular approximation; \autoref{sec:error_boltzmann} proves approximate detailed balance for the constructed \Lword{}s $\CL_{\beta}$ and discriminant proxies $\vec{\CD}_{\beta}$. 

The rest of the appendices consist of isolated topics. \autoref{apx:cont_limit} discusses the relation between continuous Fourier Transforms, which is conceptually simple, and the discrete Fourier Transform, which we implement. Fortunately, the rule of thumb is that the Fourier Transform register merely needs to be poly-logarithmic for a small discretization error. \autoref{sec:open_system} discusses \Lword{}s arising from a microscopic open system derivation and prove their fixed point accuracy; this requires a moderate generalization of the main analytic framework.~\autoref{sec:spectral_bounds_mixing_times} is devoted to supporting approximate detailed balance (\autoref{sec:DB}), especially on perturbation theory for nonHermitian matrices; these facts tend to be intuitively akin to the Hermitian case but we include the (nonstandard) proofs for completeness. \autoref{sec:simulated_annealing} reviews quantum simulated annealing in a modern quantum algorithm language, which we largely employ as a black box.

\section{Approximate stationarity of the Gibbs state}
\label{sec:DB}
We begin our analysis of the generator in Eqn.~\eqref{eq:mainL} by recalling some general properties of detailed balance \Lword{}s before introducing the key notion of \textit{approximate detailed balance}. At the heart of classical Markov chain Monte Carlo algorithms is a rapid mixing Markov chain whose fixed point yields the desired distribution.
In the quantum setting, central to our discussion is the generator of a quantum dynamical semi-group~\cite{Breuer2006open,wolf2019QCLectureNotes}, the \textit{\Lword{}} in the Schrödinger Picture
\begin{align}
    \CL[\vrho] = -\ri [\vH, \vrho ] + \sum_{j \in J} \L(\vL_j\vrho \vL_j^{\dagger} - \frac{1}{2} \{\vL_j^{\dagger}\vL_j,\vrho\}\R) 
\end{align}
parameterized by a set of \textit{Lindblad operators} $\{\vL_j\}_{j \in J}$ and a Hermitian matrix $\vH$. Mathematically, the above elegant form encompasses all possible \Lword{}s, including, but not restricted to, those arising from a microscopic system-bath derivation. In particular, from an algorithmic perspective, we enjoy the additional freedom of choosing favorable Lindblad operators $\vL_j$ with the hope that
\begin{enumerate}
    \item the \Lword{} can be implemented efficiently, \label{item:L_efficient}
    \item the fixed point is unique and yields the desired state $\vrho_\beta$, and \label{item:L_fixed_point}
    \item the \Lword{} converges rapidly. \label{item:L_converge}
\end{enumerate}

The above summarizes the desirable criteria for a quantum Gibbs sampler.\footnote{One can certainly consider discrete-time quantum channels~\cite{temme2009QuantumMetropolis} as Gibbs sampler candidates. However, the continuous-time  \Lword{}, inspired by physics, appears technically nicer for our purposes.} The convergence depends on the particular Hamiltonian of interest and is generally nontrivial to analyze. Fortunately, the \textit{detailed balance condition} enables systematic analysis of quantum dynamical semi-groups, similarly to how detailed balance is central in analyzing classical Markov chains (see, e.g.,~\cite{Markovchain_mixing}).
\begin{defn}[Detailed balance condition]\label{defn:DB}
For a normalized, full-rank state $ \vrho\succ 0$, we say that an endomorphism $\CL$ satisfies $\vrho$-detailed balance whenever the associated {\rm discriminant} is self-adjoint with respect to $\vrho$, i.e., 
\begin{align}
\CD(\vrho, \CL) &:= \vrho^{-1/4}\CL[ \vrho^{1/4}\cdot  \vrho^{1/4}]  \vrho^{-1/4}.\\
  &=\vrho^{1/4}\CL^{\dagger}[ \vrho^{-1/4}\cdot  \vrho^{-1/4}]  \vrho^{1/4}= \CD(\vrho, \CL)^{\dagger}.  
\end{align}
\end{defn}
In the above definition (and in the rest of the paper), we define the adjoint of a superoperator with respect to the Hilbert-Schmidt inner product $\ipc{\vX}{\vY}_{HS}=\tr(\vX^\dagg \vY)$. Explicitly,
\begin{align}
&\mathcal{C}[\cdot]=\sum_j  \alpha_j \vA_j[\cdot]\vB_j \quad \text{implies}\quad \mathcal{C}^{\dagger}[\cdot]=\sum_j  \alpha^*_j \vA^{\dagger}_j[\cdot]\vB^{\dagger}_j\quad \text{for any}\quad \vA_j, \vB_j\\
&\text{since}\quad\tr(\vX^\dagg \mathcal{C}[\vY])=\tr(\vX^\dagg \sum_j  \alpha_j \vA_j\vY\vB_j)
    =\tr((\sum_j \alpha_j^*\vA_j^\dagg\vX\vB_j^\dagg)^\dagg\vY) = \tr[ \L(\mathcal{C}^{\dagger}(\vX) \R)^{\dagger} \vY]\label{eq:superoperator_adjoint}.
\end{align}
This will be revisited when defining the vectorization (\autoref{sec:algortihm_coherent}). In particular, the superoperator adjoint for Lindbladians coincides with converting between the Heisenberg and Schrödinger pictures
\begin{align}
     \tr[\vrho\CL^{\dagger}[\vO]]= \tr[\CL[\vrho]\vO ]\quad \text{for each}\quad \vrho \succeq 0\quad \text{and}\quad\vO
\end{align}
using that $\vrho^{\dagger} = \vrho$ and that Lindbladians $\CL[\vrho] = (\CL[\vrho])^{\dagger}$ preserves Hermiticity.

Analogously to the classical case, the detailed balance condition considers a similarity transformation according to the target distribution weights $\vrho$.\footnote{Technically, an alternative definition of detailed balance may distribute the power somewhat arbitrarily $\vrho^{s}[\cdot]\vrho^{1/2-s}$, but we stick to the symmetric case $\vrho^{1/4}[\cdot]\vrho^{1/4}$ for simplicity.} The detailed balance condition brings about two desirable properties. First, it ensures that the state $\vrho$ is a fixed point (Point~\ref{item:L_fixed_point}).
\begin{prop}[Gibbs fixed point \cite{temme2010chi}]\label{prop:Gibbs_fixed_point}
If a superoperator $\CL$ generates a trace-preserving map and satisfies $\vrho$-detailed balance, then it annihilates the state $\CL[ \vrho]= 0 $, or equivalently, $\CD(\vrho, \CL)[\sqrt{\vrho}] = 0$. 
\end{prop}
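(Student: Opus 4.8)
The plan is to derive the fixed-point property directly from the definition of $\vrho$-detailed balance, bypassing any spectral argument. The key observation is that the discriminant $\CD(\vrho,\CL)$ is obtained from $\CL$ by a similarity transformation (conjugation by $\vrho^{1/4}$ on both sides), so $\CL[\vrho] = 0$ is equivalent to $\CD(\vrho,\CL)[\sqrt{\vrho}] = 0$: indeed, applying $\CL$ to $\vrho = \vrho^{1/4}\sqrt{\vrho}\,\vrho^{1/4}$ and conjugating by $\vrho^{-1/4}$ on each side gives $\vrho^{-1/4}\CL[\vrho]\vrho^{-1/4} = \CD(\vrho,\CL)[\sqrt{\vrho}]$. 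Hence it suffices to show $\CD(\vrho,\CL)[\sqrt{\vrho}]=0$.

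First I would use the trace-preservation hypothesis, which in the Heisenberg picture reads $\CL^{\dagger}[\vI] = 0$ (since $\tr(\CL[\vrho']) = \tr(\vrho'\,\CL^{\dagger}[\vI])$ must vanish for all $\vrho'$). Next, I would feed the identity operator $\vO = \vI$ into the self-adjointness identity of Definition~\ref{defn:DB}:
\begin{align}
\CD(\vrho,\CL)[\sqrt{\vrho}] = \vrho^{-1/4}\CL[\vrho^{1/4}\sqrt{\vrho}\,\vrho^{1/4}]\vrho^{-1/4} \overset{?}{=} \vrho^{1/4}\CL^{\dagger}[\vrho^{-1/4}\,\vI\,\vrho^{-1/4}]\vrho^{1/4}.
\end{align}
Wait — this is not quite the right substitution, since $\sqrt{\vrho} = \vrho^{-1/4}\,\vI\,\vrho^{-1/4}$ only if we set $\vO = \vrho$ on the left side. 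Let me restate: the cleanest route is to evaluate the adjoint form of $\CD$ at the argument $\sqrt{\vrho}$ written as $\vrho^{1/4}\,\vI\,\vrho^{1/4}$, which is not of the shape $\vrho^{-1/4}(\cdot)\vrho^{-1/4}$. So instead I would proceed as follows: by Definition~\ref{defn:DB}, $\CD(\vrho,\CL) = \CD(\vrho,\CL)^{\dagger}$, and the superoperator adjoint of $\vrho^{-1/4}\CL[\vrho^{1/4}\cdot\vrho^{1/4}]\vrho^{-1/4}$ is computed via the composition rule \eqref{eq:superoperator_adjoint}, treating left/right multiplication by $\vrho^{\pm 1/4}$ as self-adjoint maps; this yields $\CD(\vrho,\CL)^{\dagger} = \vrho^{1/4}\CL^{\dagger}[\vrho^{-1/4}\cdot\vrho^{-1/4}]\vrho^{1/4}$, consistent with the stated equality. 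Now apply $\CD(\vrho,\CL)^{\dagger}$ to $\sqrt{\vrho}$: the inner bracket becomes $\vrho^{-1/4}\sqrt{\vrho}\,\vrho^{-1/4} = \vI$, so $\CD(\vrho,\CL)^{\dagger}[\sqrt{\vrho}] = \vrho^{1/4}\CL^{\dagger}[\vI]\vrho^{1/4} = \vrho^{1/4}\cdot 0\cdot\vrho^{1/4} = 0$. Since $\CD(\vrho,\CL) = \CD(\vrho,\CL)^{\dagger}$, we get $\CD(\vrho,\CL)[\sqrt{\vrho}] = 0$, and unwinding the similarity transformation gives $\CL[\vrho] = 0$.

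The only genuinely delicate point is the computation of the superoperator adjoint of $\CD(\vrho,\CL)$: one must verify that conjugation wrapping — $\CL$ sandwiched between $\vrho^{-1/4}(\cdot)\vrho^{-1/4}$ on the outside and $\vrho^{1/4}(\cdot)\vrho^{1/4}$ on the inside — dualizes correctly to the claimed expression under the Hilbert–Schmidt pairing, using $\vrho^{\dagger} = \vrho$ and Hermiticity preservation of $\CL$. This is exactly the content already spelled out in \eqref{eq:superoperator_adjoint}, so the argument is essentially a direct substitution once that bookkeeping is in place; I would present it in two lines. I expect no serious obstacle — the proposition is a formal consequence of the definition plus $\CL^{\dagger}[\vI]=0$.
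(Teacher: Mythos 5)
Your proof is correct and takes essentially the same route as the paper: derive $\CL^\dagger[\vI]=0$ from trace preservation, observe that $\CD(\vrho,\CL)^\dagger[\sqrt{\vrho}] = \vrho^{1/4}\CL^\dagger[\vI]\vrho^{1/4}=0$ since $\vrho^{-1/4}\sqrt{\vrho}\,\vrho^{-1/4}=\vI$, and invoke the self-adjointness from detailed balance to conclude. Your exposition includes a corrected false start before settling on this argument, but the final reasoning matches the paper's two-line proof.
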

\begin{proof}
	We know that the infiniestimal exponential map $\e^{\varepsilon\CL}[\cdot ]= [\cdot] + \varepsilon\CL[\cdot] +\bigO{\varepsilon^2}[\cdot]$ is trace-preserving, thus the leading order term must satisfy $\CL^{\dagger}[\vI] =0$. Therefore, 
\begin{align}
    0=\vrho^{1/4}\CL^{\dagger}[ \vI]  \vrho^{1/4} =\CD(\vrho, \CL)^{\dagger}[\sqrt{\vrho}]=\CD(\vrho, \CL)[\sqrt{\vrho}]
\end{align}  
 using $\vrho$-detailed balance in the last equality.
\end{proof}
Second, it relates the \Lword{} mixing time to the spectral gap (Point~\ref{item:L_converge}). We only state the following result here but later prove a qualitatively more robust statement in \autoref{prop:gap_to_mixing} applicable to the approximate case.

\begin{restatable}[Mixing time from spectral gap \cite{kastoryano2013quantum}]{prop}{mixingtimegapDB}\label{prop:mixDetail}
	If a \Lword{} $\CL$ satisfies $ \vrho$-detailed balance, then 
    \begin{align}
        t_{mix}(\CL) \le \frac{ \ln(2\norm{\vrho^{-1/2}})}{\lambda_{gap}(\CL)},
    \end{align}
    where $\lambda_{gap}(\CL)$ is the eigenvalue gap of the \Lword{}, and the mixing time $t_{mix}$ is the smallest time for which
\begin{align}
\lnormp{\e^{\CL t_{mix}}[\vrho_1-\vrho_2]}{1} \le \frac{1}{2} \normp{\vrho_1-\vrho_2}{1} \quad \text{for any states}\quad \vrho_1, \vrho_2.
\end{align}
\end{restatable}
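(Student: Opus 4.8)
The plan is to pass to the \emph{discriminant} picture, where detailed balance turns $\CL$ into a self-adjoint operator, diagonalize there, and then convert back to the trace norm with Schatten-norm inequalities whose constants are arranged so that their product is exactly $\norm{\vrho^{-1/2}}$. Let $\Gamma_{\vrho}[\vX]:=\vrho^{1/4}\vX\vrho^{1/4}$; since $\vrho\succ 0$ this is invertible with $\Gamma_\vrho^{-1}[\vX]=\vrho^{-1/4}\vX\vrho^{-1/4}$, and by \autoref{defn:DB} the discriminant $\CD:=\CD(\vrho,\CL)=\Gamma_\vrho^{-1}\circ\CL\circ\Gamma_\vrho$ is self-adjoint for the Hilbert--Schmidt inner product, hence orthogonally diagonalizable with real eigenvalues. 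Being similar to $\CL$, it has the same spectrum; and since $\e^{\CL t}$ is positive and trace-preserving, it is a trace-norm contraction on Hermitian operators, which (as $\e^{\CD t}=\Gamma_\vrho^{-1}\e^{\CL t}\Gamma_\vrho$ stays bounded uniformly in $t$, all norms on the finite-dimensional operator space being equivalent) forces every eigenvalue of $\CD$ to be $\le 0$. By \autoref{prop:Gibbs_fixed_point}, $0$ is an eigenvalue with eigenvector $\sqrt{\vrho}$, and once $\lambda_{gap}(\CL)>0$ this is the whole $0$-eigenspace; hence $\e^{\CD t}$ contracts every Hilbert--Schmidt vector orthogonal to $\sqrt{\vrho}$ by the factor $\e^{-\lambda_{gap}(\CL)t}$.

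Next I would feed in the difference of two states and track norms. Given states $\vrho_1,\vrho_2$, put $\vX:=\vrho_1-\vrho_2$ and $\vY:=\Gamma_\vrho^{-1}[\vX]=\vrho^{-1/4}\vX\vrho^{-1/4}$. Since $\vX$ is traceless, $\langle\sqrt{\vrho},\vY\rangle_{HS}=\tr(\vrho^{1/2}\vrho^{-1/4}\vX\vrho^{-1/4})=\tr(\vX)=0$, so $\vY\perp\sqrt{\vrho}$ and $\norm{\e^{\CD t}[\vY]}_2\le \e^{-\lambda_{gap}(\CL)t}\norm{\vY}_2$. The two norm conversions are: on the way in,
\[
\norm{\vY}_2=\norm{\vrho^{-1/4}\vX\vrho^{-1/4}}_2\le \norm{\vrho^{-1/4}}_\infty^{2}\,\norm{\vX}_2=\norm{\vrho^{-1/2}}\,\norm{\vX}_2\le \norm{\vrho^{-1/2}}\,\norm{\vX}_1;
\]
on the way out, Hölder for Schatten norms with exponents $(4,2,4)$ together with $\norm{\vrho^{1/4}}_4=(\tr\vrho)^{1/4}=1$ give $\norm{\Gamma_\vrho[\vZ]}_1=\norm{\vrho^{1/4}\vZ\vrho^{1/4}}_1\le \norm{\vrho^{1/4}}_4^{2}\,\norm{\vZ}_2=\norm{\vZ}_2$. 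Chaining these through $\e^{\CL t}=\Gamma_\vrho\circ\e^{\CD t}\circ\Gamma_\vrho^{-1}$,
\[
\norm{\e^{\CL t}[\vrho_1-\vrho_2]}_1=\norm{\Gamma_\vrho\big[\e^{\CD t}[\vY]\big]}_1\le \norm{\e^{\CD t}[\vY]}_2\le \e^{-\lambda_{gap}(\CL)t}\,\norm{\vrho^{-1/2}}\,\norm{\vrho_1-\vrho_2}_1,
\]
which drops below $\tfrac12\norm{\vrho_1-\vrho_2}_1$ as soon as $\e^{-\lambda_{gap}(\CL)t}\norm{\vrho^{-1/2}}\le\tfrac12$, i.e.\ for all $t\ge \ln(2\norm{\vrho^{-1/2}})/\lambda_{gap}(\CL)$; since $t_{mix}(\CL)$ is by definition the least such time, this is the claim.

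The conceptual content is entirely in the first paragraph; the rest is routine Schatten-norm juggling, and the main thing to watch is that nothing above depends on $\vrho_1,\vrho_2$ except through $\norm{\vrho_1-\vrho_2}_1$, so the contraction factor is uniform over all pairs of states. The two points that deserve a line of justification are (i) that $\CD$ has real, non-positive spectrum --- reality is immediate from self-adjointness, non-positivity from the uniform boundedness of $\e^{\CD t}$ --- and (ii) that the $0$-eigenspace of $\CD$ is exactly $\mathrm{span}(\sqrt{\vrho})$, i.e.\ the Gibbs state is the \emph{unique} fixed point, which is automatic whenever $\lambda_{gap}(\CL)>0$. The robust analogue that only assumes \emph{approximate} detailed balance, where $\CD$ is no longer self-adjoint and one must additionally control its non-normality, is the more substantial \autoref{prop:gap_to_mixing}; the present statement is the clean $\lambda_{gap}(\CL)$-exact warm-up.
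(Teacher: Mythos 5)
Your proof is correct and follows essentially the same route as the paper's: pass to the discriminant picture via $\Gamma_\vrho$, use self-adjointness of $\CD$ and orthogonality of $\vrho^{-1/4}(\vrho_1-\vrho_2)\vrho^{-1/4}$ to $\sqrt{\vrho}$ to contract in Hilbert--Schmidt norm by $\e^{-\lambda_{gap}t}$, and convert to and from the trace norm with the same Hölder bounds $\norm{\vrho^{1/4}}_4^2=1$ and $\norm{\vrho^{-1/4}}_\infty^2=\norm{\vrho^{-1/2}}$. Your extra paragraph justifying the non-positivity of $\mathrm{spec}(\CD)$ is a fine addition but not a different method; the paper simply takes this standard fact about Lindbladians as given.
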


The analysis of a superoperator gap $\lambda_{gap}(\CL)$ is perhaps more tractable than the mixing time $t_{mix}$ but still nontrivial and instance specific.\footnote{For the experts, the gap may not give the tightest possible mixing time bounds; techniques beyond gap-based bounds typically require proving a Log-Sobolev inequality, which can be very challenging in the noncommuting cases.} Otherwise, we see that the detailed balance condition readily addresses two criteria (Point~\ref{item:L_fixed_point} and Point~\ref{item:L_converge}) for Gibbs samplers. 


\subsection{Approximate detailed balance}
\label{sec:L_ADB}

Unfortunately, we do not know of general \textit{efficient} constructions of quantum Gibbs samplers satisfying the detailed balance condition \textit{exactly} (Point~\ref{item:L_efficient})\footnote{This problem is resolved in a follow-up work~\cite{exactDB}.}; this is rooted in the energy-time uncertainty principle where quantum algorithms only access the energies of a quantum system approximately. As our main technical contribution, we formulate the \textit{$\epsilon$-approximate detailed balance condition} that addresses all three requirements for a quantum Gibbs sampler.

\begin{defn}[Approximate detailed balance condition]\label{defn:ADB}
For any \Lword{} $\CL$ and full-rank state $\vrho$, take a similarity transformation and decompose into the Hermitian and the anti-Hermitian parts
\begin{align}\label{eq:almostSA}
    \CD(\vrho, \CL) = \vrho^{-1/4}\CL[\vrho^{1/4}\cdot\vrho^{1/4}]\vrho^{-1/4} &= \CH(\vrho,\CL) + \CA(\vrho,\CL).\\
    \CD(\vrho, \CL)^{\dagger} = \vrho^{1/4}\CL^{\dagger}[\vrho^{-1/4}\cdot \vrho^{-1/4}]\vrho^{1/4} & = \CH(\vrho,\CL) - \CA(\vrho,\CL)
\end{align}    
We say the \Lword{} $\CL$ satisfies the $\epsilon$-approximate $\vrho$-detailed balance condition if the anti-Hermitian part $\CA$ is small
\begin{align}
    \frac{1}{2}\lnormp{\CD(\vrho, \CL)-\CD(\vrho, \CL)^{\dagger}}{2-2} =  \norm{\CA(\vrho,\CL)}_{2-2} \le \epsilon.
\end{align}
\end{defn}
If the anti-Hermitian part vanishes, we recover the exact detailed balance condition $ \CD(\vrho, \CL)^{\dagger} = \CD(\vrho, \CL)$. If not, we show that the fixed point still approximates the state $\vrho$ (Point~\ref{item:L_fixed_point}).
\begin{restatable}[Fixed point accuracy]{cor}{Lfixedpointerrormix}\label{cor:fixed_point_error_mix}
	If a \Lword{} $\CL$ satisfies the $\epsilon$-approximate $\vrho$-detailed balance condition, then its fixed point $\vrho_{fix}(\CL)$ deviates from $\vrho$ by at most
	\begin{align}
		\nrm{\vrho_{fix}(\CL)-\vrho}_1\leq20 t_{mix}(\CL) \epsilon.
	\end{align}	
\end{restatable}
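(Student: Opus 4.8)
The plan is to reduce the claim to two ingredients: a pointwise estimate showing that $\CL[\vrho]$ is itself of size $\bigO{\epsilon}$ in trace norm, and a standard semigroup-perturbation argument upgrading ``$\vrho$ is an approximate fixed point'' to ``the true fixed point is $\bigO{\epsilon\, t_{mix}(\CL)}$-close to $\vrho$''.

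For the first ingredient, a \Lword{} generates a trace-preserving semigroup, so $\CL^{\dagger}[\vI]=0$, exactly as used in the proof of \autoref{prop:Gibbs_fixed_point}. Dressing this with $\vrho^{1/4}$ on both sides gives $\CD(\vrho,\CL)^{\dagger}[\sqrt{\vrho}]=\vrho^{1/4}\CL^{\dagger}[\vI]\vrho^{1/4}=0$; combining with the Hermitian/anti-Hermitian splitting $\CD(\vrho,\CL)^{\dagger}=\CH(\vrho,\CL)-\CA(\vrho,\CL)$ of \autoref{defn:ADB} forces $\CH(\vrho,\CL)[\sqrt{\vrho}]=\CA(\vrho,\CL)[\sqrt{\vrho}]$, so that
\begin{align}
\vrho^{-1/4}\CL[\vrho]\vrho^{-1/4}=\CD(\vrho,\CL)[\sqrt{\vrho}]=(\CH(\vrho,\CL)+\CA(\vrho,\CL))[\sqrt{\vrho}]=2\,\CA(\vrho,\CL)[\sqrt{\vrho}].
\end{align}
Conjugating back and using the three-factor Hölder inequality $\norm{\vrho^{1/4}\vX\vrho^{1/4}}_1\le\norm{\vrho^{1/4}}_4^2\norm{\vX}_2$ with $\norm{\vrho^{1/4}}_4=(\tr\vrho)^{1/4}=1$, together with $\norm{\CA(\vrho,\CL)[\sqrt{\vrho}]}_2\le\norm{\CA(\vrho,\CL)}_{2-2}\norm{\sqrt{\vrho}}_2\le\epsilon$, gives the clean bound $\norm{\CL[\vrho]}_1\le 2\epsilon$.

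For the second ingredient, a finite mixing time makes $\{\e^{\CL k\, t_{mix}(\CL)}[\vrho]\}_{k\in\BN}$ a Cauchy sequence of states converging to the unique fixed point $\vrho_{fix}(\CL)$, which is thus a genuine density operator. Taking $t=t_{mix}(\CL)$ and the triangle inequality,
\begin{align}
\norm{\vrho-\vrho_{fix}(\CL)}_1\le\norm{\vrho-\e^{\CL t}[\vrho]}_1+\norm{\e^{\CL t}[\vrho-\vrho_{fix}(\CL)]}_1.
\end{align}
The first term is $\norm{\int_0^{t}\e^{\CL s}[\CL[\vrho]]\,\mathrm{d}s}_1\le t\,\norm{\CL[\vrho]}_1\le 2\epsilon\, t_{mix}(\CL)$, using that each $\e^{\CL s}$ is CPTP hence trace-norm contractive; the second term is at most $\frac{1}{2}\norm{\vrho-\vrho_{fix}(\CL)}_1$ by the definition of the mixing time, after writing the traceless Hermitian $\vrho-\vrho_{fix}(\CL)$ as a multiple of a difference of states. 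Rearranging yields $\norm{\vrho-\vrho_{fix}(\CL)}_1\le 4\epsilon\, t_{mix}(\CL)$, comfortably within the stated constant $20$, which leaves ample slack for looser intermediate bookkeeping or a more pedestrian treatment of the case where $\vrho_{fix}(\CL)$ is only shown a posteriori to be positive.

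I do not expect a serious obstacle, as this is a corollary with a short proof. The only step needing genuine care is the first ingredient: one must check that the $\vrho^{\pm1/4}$ similarity transformation does not inflate the error, which works precisely because the symmetric quarter-power split turns $\vrho^{1/4}$ into a unit vector in Schatten $4$-norm, so no exponentially small Gibbs eigenvalues are inverted in this particular estimate --- in sharp contrast with the much harder task, handled separately in \autoref{sec:error_boltzmann}, of bounding $\norm{\CA(\vrho,\CL)}_{2-2}$ itself.
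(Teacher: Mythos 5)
Your proof is correct and takes a genuinely different, and arguably cleaner, route than the paper. The paper establishes \autoref{prop:fixed_point_error} via eigenvector perturbation: it treats $\sqrt{\vrho}$ as the null vector of $\CD(\vrho,\CL)^{\dagger}=\CH-\CA$, applies a non-Hermitian eigenvector perturbation bound (\autoref{prop:eigenvector_perturb_gen}) together with Bauer--Fike to relate the kernel of $\CD(\vrho,\CL)$ to $\sqrt{\vrho}$ with distance controlled by $\norm{\CA}_{2-2}/\lambda_{gap}(\CH)$, and then obtains the mixing-time version by converting $\lambda_{gap}(\CH)$ to $\lambda_{\mathrm{Re}(gap)}(\CL)$ and using $\lambda_{\mathrm{Re}(gap)}(\CL)\geq\ln(2)/t_{mix}(\CL)$ (\autoref{prop:mixing_to_gap}); this accounts for the constant $(8\sqrt{2}+2)/\ln 2\approx 19.2<20$. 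You instead observe that $\CD(\vrho,\CL)^{\dagger}[\sqrt{\vrho}]=0$ forces $\CH[\sqrt{\vrho}]=\CA[\sqrt{\vrho}]$, hence $\CD(\vrho,\CL)[\sqrt{\vrho}]=2\CA[\sqrt{\vrho}]$, and then use the three-factor Hölder inequality with $\norm{\vrho^{1/4}}_4=1=\norm{\sqrt{\vrho}}_2$ to get the pointwise estimate $\norm{\CL[\vrho]}_1\le 2\epsilon$ directly; the Duhamel identity $\vrho-\e^{t\CL}[\vrho]=-\int_0^t \e^{s\CL}\CL[\vrho]\,\mathrm ds$ plus trace-norm contractivity and the definition of $t_{mix}$ then closes the bound by a geometric self-improvement. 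All the individual steps check out, and you end up with the sharper constant $4$ in place of $20$. What your argument buys is simplicity and a better constant; what it does not give you is the gap-based form of the statement (\autoref{prop:fixed_point_error}), which the paper needs elsewhere --- so the two derivations are complementary rather than redundant. One tiny nit: the phrase ``after writing the traceless Hermitian $\vrho-\vrho_{fix}(\CL)$ as a multiple of a difference of states'' is unnecessary, since $\vrho-\vrho_{fix}(\CL)$ already is a difference of states and the mixing-time definition applies verbatim.
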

See~\autoref{sec:proof_for_approx_DB} for the proof. We see that the error bound deteriorates if the map has a large anti-Hermitian component $\CA$ or if the \Lword{} mixes slowly.\footnote{We actually prove a stronger statement in \autoref{sec:proof_for_approx_DB} that gives a bound in terms of the gap $\lambda_{gap}(\CH)$ of the Hermitian part.} The anti-Hermitian component involves the inverse Gibbs state $\vrho^{-1}$, and might be difficult to bound directly. As a remedy, it is helpful to introduce an intermediate \Lword{} $\CL'$ for which approximate detailed balance is easier to show. In that case, we can write
\begin{align}
    \normp{\vrho_{fix}(\CL) - \vrho}{1} &\le \normp{\vrho_{fix}(\CL) - \vrho_{fix}(\CL')}{1}+ \normp{\vrho_{fix}(\CL') - \vrho}{1}.
\end{align}
The first term on the RHS does not directly involve the inverse $\vrho^{-1}$ and can be controlled by a \Lword{} perturbation bound as follows.
\begin{restatable}[Fixed point difference]{lem}{mixingtimetofixedpoint}\label{fact:mixing_time_to_fixedpoint}
For any two \Lword{}s $\CL_1$ and $\CL_2$, the difference of their fixed points (in the Schrödinger picture) is bounded by
\begin{align}
    \nrm{ \vrho_{fix}(\CL_1) - \vrho_{fix}(\CL_2)}_1 \le 4 \normp{ \CL_1 - \CL_2 }{1-1}\cdot t_{mix}(\CL_1).
\end{align}
\end{restatable}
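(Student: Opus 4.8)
The plan is to run the standard Duhamel (variation-of-constants) argument for the two semigroups and then close the resulting inequality using the defining contraction property of $t_{mix}(\CL_1)$. Write $\vrho_1:=\vrho_{fix}(\CL_1)$ and $\vrho_2:=\vrho_{fix}(\CL_2)$, so that $\e^{\CL_i t}[\vrho_i]=\vrho_i$ for all $t\ge0$ and $i=1,2$; I take for granted that each Lindbladian admits at least one stationary state (a Brouwer fixed point on the compact convex set of density matrices), and if it is not unique the statement is to be read for any choice. If $t_{mix}(\CL_1)=\infty$ the bound is vacuous, so assume it is finite and set $\tau:=t_{mix}(\CL_1)$. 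The first move is to insert the interpolant $\e^{\CL_1\tau}[\vrho_2]$ by the triangle inequality,
\begin{align}
\normp{\vrho_1-\vrho_2}{1}\le \normp{\vrho_1-\e^{\CL_1\tau}[\vrho_2]}{1}+\normp{\e^{\CL_1\tau}[\vrho_2]-\vrho_2}{1}.
\end{align}

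For the first term I would use $\vrho_1=\e^{\CL_1\tau}[\vrho_1]$ so that it equals $\normp{\e^{\CL_1\tau}[\vrho_1-\vrho_2]}{1}$, and then apply the definition of the mixing time to the pair of states $\vrho_1,\vrho_2$ to bound it by $\tfrac12\normp{\vrho_1-\vrho_2}{1}$. For the second term I would use that $\vrho_2$ is stationary under $\CL_2$, i.e.\ $\e^{\CL_2 s}[\vrho_2]=\vrho_2$ for all $s$, together with the Duhamel identity
\begin{align}
\e^{\CL_1\tau}-\e^{\CL_2\tau}=\int_0^\tau \e^{\CL_1(\tau-s)}\,(\CL_1-\CL_2)\,\e^{\CL_2 s}\,\rd s ,
\end{align}
evaluated on $\vrho_2$; the inner $\e^{\CL_2 s}[\vrho_2]$ collapses to $\vrho_2$, leaving $\int_0^\tau \e^{\CL_1(\tau-s)}\big[(\CL_1-\CL_2)[\vrho_2]\big]\,\rd s$. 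Each $\e^{\CL_1(\tau-s)}$ is a completely-positive trace-preserving map, hence a contraction in trace norm on the Hermitian, traceless operator $(\CL_1-\CL_2)[\vrho_2]$ (split $X=X_+-X_-$ into its positive parts with $\tr X_\pm=\tfrac12\normp{X}{1}$ and use trace preservation), so the second term is at most $\tau\,\normp{(\CL_1-\CL_2)[\vrho_2]}{1}\le\tau\,\normp{\CL_1-\CL_2}{1-1}$ since $\normp{\vrho_2}{1}=1$.

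Combining the two bounds gives $\normp{\vrho_1-\vrho_2}{1}\le\tfrac12\normp{\vrho_1-\vrho_2}{1}+t_{mix}(\CL_1)\normp{\CL_1-\CL_2}{1-1}$, and rearranging yields $\normp{\vrho_1-\vrho_2}{1}\le 2\,t_{mix}(\CL_1)\normp{\CL_1-\CL_2}{1-1}$, which is in fact a factor of two stronger than the claimed bound (so the stated constant $4$ is comfortably safe). I do not expect a genuine obstacle here; the only steps that deserve a line of justification are (i) that $\e^{\CL_1 t}$ is a trace-norm contraction on traceless Hermitian operators — immediate from the Jordan/Hahn decomposition and trace preservation — and (ii) that the Duhamel identity holds at the superoperator level, which follows by differentiating $s\mapsto \e^{\CL_1(\tau-s)}\e^{\CL_2 s}$ and integrating. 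Thus the lemma is essentially a one-line corollary of the definition of the mixing time, with the only modelling care being the existence of the fixed points.
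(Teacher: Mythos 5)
Your proof is correct and is essentially the paper's own argument: insert the interpolant $\e^{\CL_1 t_{mix}}[\vrho_2]$, control the first term by the definition of $t_{mix}(\CL_1)$ and the second by Duhamel plus the $1$-$1$ contractivity of $\e^{\CL_1 s}$, then rearrange. The only cosmetic difference is that you apply Duhamel directly to the state $\vrho_2$ and let stationarity collapse the $\e^{\CL_2 s}[\vrho_2]$ factor, whereas the paper first proves the superoperator bound $\normp{\e^{t\CL_1}-\e^{t\CL_2}}{1-1}\le t\normp{\CL_1-\CL_2}{1-1}$; both routes give the constant $2$, and the paper's stated constant $4$ simply carries an extra (unneeded) factor of $2$ in its last inequality.
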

See \autoref{sec:gap_and_convergence} for the proof. Conveniently, even without detailed balance, the mixing time $t_{mix}$ remains controlled by spectral properties of the Hermitian part (addressing Point~\ref{item:L_converge}):
\begin{restatable}[Mixing time from Hermitian gap]{prop}{gaptomixing}\label{prop:gap_to_mixing}
	For any \Lword{} $\CL$ and a full-rank state $\vrho$, suppose the self-adjoint component $\CH = \CH (\vrho,\CL)$ satisfies
\begin{align}
	\frac{\lambda_1(\CH)}{\lambda_{gap}(\CH)} \le \frac{1}{100}, \quad \text{then}\quad t_{mix}(\CL) \le 3\frac{\ln(3\norm{\vrho^{-1/2}})}{\lambda_{gap}(\CH)}.
\end{align}
\end{restatable}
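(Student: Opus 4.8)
The plan is to conjugate the semigroup into the discriminant picture and run a Rayleigh-quotient/Gr\"onwall argument there, exactly as in the exact-detailed-balance proof of \autoref{prop:mixDetail}, but \emph{without} diagonalizing $\CL$ (which is no longer available). The point will be that the single spectral hypothesis on $\CH(\vrho,\CL)$ forces $\sqrt{\vrho}$ to lie almost along the top eigenvector of $\CH$, while the conjugated dynamics stays confined to the Hilbert--Schmidt orthogonal complement of $\sqrt{\vrho}$; combining these two facts pins $\e^{\CD t}$ into the part of $\CH$'s spectrum lying below the gap.

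\textbf{Step 1: reduce to a contraction of $\e^{\CD t}$ on $\{\sqrt\vrho\}^\perp$.} Let $\Gamma_\vrho[\vX]:=\vrho^{1/4}\vX\vrho^{1/4}$ and $\CD:=\CD(\vrho,\CL)=\Gamma_\vrho^{-1}\circ\CL\circ\Gamma_\vrho$, so that $\e^{\CL t}=\Gamma_\vrho\circ\e^{\CD t}\circ\Gamma_\vrho^{-1}$. Given states $\vrho_1,\vrho_2$, set $\vZ:=\vrho^{-1/4}(\vrho_1-\vrho_2)\vrho^{-1/4}$; then $\ipc{\sqrt\vrho}{\vZ}_{HS}=\tr(\vrho_1-\vrho_2)=0$. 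Two Schatten-norm estimates — $\normp{\vrho^{1/4}\vY\vrho^{1/4}}{1}\le\normp{\vY}{2}$ (H\"older with exponents $4,2,4$ and $\normp{\vrho^{1/4}}{4}=1$) and $\normp{\vrho^{-1/4}\vM\vrho^{-1/4}}{2}\le\norm{\vrho^{-1/2}}\normp{\vM}{2}$ — give
\begin{align}
\lnormp{\e^{\CL t}[\vrho_1-\vrho_2]}{1}\le \lnormp{\e^{\CD t}[\vZ]}{2},\qquad \lnormp{\vZ}{2}\le \norm{\vrho^{-1/2}}\cdot\lnormp{\vrho_1-\vrho_2}{1}.
\end{align}
So it suffices to bound $\lnormp{\e^{\CD t}[\vZ]}{2}$ for $\vZ$ orthogonal to $\sqrt\vrho$.

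\textbf{Step 2: invariant hyperplane and alignment of $\sqrt\vrho$.} Since $\CL$ generates a trace-preserving map, $\CL^\dagg[\vI]=0$ (as in the proof of \autoref{prop:Gibbs_fixed_point}), hence $\CD^\dagg[\sqrt\vrho]=\vrho^{1/4}\CL^\dagg[\vI]\vrho^{1/4}=0$. Therefore $\mathrm{Range}(\CD)\perp\sqrt\vrho$, so $W:=\{\sqrt\vrho\}^\perp$ is $\e^{\CD t}$-invariant and $\e^{\CD t}[\vZ]\in W$ for all $t$; and $\ipc{\sqrt\vrho}{\CH[\sqrt\vrho]}_{HS}=\tfrac12\bigl(\ipc{\sqrt\vrho}{\CD\sqrt\vrho}+\ipc{\sqrt\vrho}{\CD^\dagg\sqrt\vrho}\bigr)=0$ (both terms vanish since $\CD^\dagg\sqrt\vrho=0$), which already gives $\lambda_1(\CH)\ge0$. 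Writing $\sqrt\vrho=\beta_1\bm{v}_1+\bm{w}$ with $\bm{v}_1$ a unit top eigenvector of $\CH$ and $\bm{w}\perp\bm{v}_1$, the vanishing quadratic form forces $|\beta_1|^2\lambda_1(\CH)+(1-|\beta_1|^2)\lambda_2(\CH)\ge0$, i.e.\ $1-|\beta_1|^2\le\lambda_1(\CH)/\lambda_{gap}(\CH)\le1/100$. Consequently every $\vY\in W$ is nearly orthogonal to $\bm{v}_1$: from $\bar\beta_1\ipc{\bm{v}_1}{\vY}+\ipc{\bm{w}}{\vY}=0$ we get $|\ipc{\bm{v}_1}{\vY}|\le(\norm{\bm{w}}/|\beta_1|)\,\normp{\vY}{2}\le\tfrac15\normp{\vY}{2}$.

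\textbf{Step 3: Gr\"onwall and conclusion.} For $\vY_t:=\e^{\CD t}[\vZ]\in W$ the anti-Hermitian part of $\CD$ drops out of the real part of the quadratic form, so $\frac{d}{dt}\lnormp{\vY_t}{2}^2=2\,\ipc{\vY_t}{\CH[\vY_t]}_{HS}$. Splitting $\vY_t$ along $\bm{v}_1$ and its complement, and using Step 2 together with $\lambda_1(\CH)\in[0,\tfrac{1}{100}\lambda_{gap}(\CH)]$ and $\lambda_2(\CH)=\lambda_1(\CH)-\lambda_{gap}(\CH)\le-\tfrac{99}{100}\lambda_{gap}(\CH)$, one obtains $\ipc{\vY_t}{\CH[\vY_t]}_{HS}\le-\tfrac{9}{10}\lambda_{gap}(\CH)\lnormp{\vY_t}{2}^2$, hence $\lnormp{\e^{\CD t}[\vZ]}{2}\le\e^{-\frac{9}{10}\lambda_{gap}(\CH)t}\lnormp{\vZ}{2}$. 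Plugging into Step 1, $\lnormp{\e^{\CL t}[\vrho_1-\vrho_2]}{1}\le\e^{-\frac{9}{10}\lambda_{gap}(\CH)t}\norm{\vrho^{-1/2}}\lnormp{\vrho_1-\vrho_2}{1}$, which is $\le\tfrac12\lnormp{\vrho_1-\vrho_2}{1}$ as soon as $t\ge\frac{\ln(2\norm{\vrho^{-1/2}})}{(9/10)\lambda_{gap}(\CH)}\le\frac{3\ln(3\norm{\vrho^{-1/2}})}{\lambda_{gap}(\CH)}$; this bounds $t_{mix}(\CL)$ (note $\norm{\vrho^{-1/2}}\ge1$, so the logarithms are positive).

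\textbf{Main obstacle.} Steps 1 and 3 are routine — Schatten-norm H\"older estimates and a one-line differential inequality. The crux is Step 2: noticing that $\CL^\dagg[\vI]=0$ \emph{alone} simultaneously places $\sqrt\vrho$ on the invariant hyperplane $W$ \emph{and} makes its $\CH$-Rayleigh quotient vanish, so that the lone spectral hypothesis $\lambda_1(\CH)\le\lambda_{gap}(\CH)/100$ is enough to conclude $\sqrt\vrho$ is $(1-\tfrac1{100})$-aligned with $\CH$'s top eigenvector. This is what lets the argument run with no detailed-balance or fixed-point assumption on $\CL$ itself; everything after that is bookkeeping, and the stated constants $3$ and $3\norm{\vrho^{-1/2}}$ leave ample slack.
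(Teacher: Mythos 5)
Your proof is correct, and it takes a genuinely different route from the paper's. The paper conjugates to the discriminant picture exactly as you do (Step 1 is identical in spirit to the paper's chain of H\"older inequalities), but then invokes the self-contained \autoref{lem:H+A_decay}, which is proved via a Trotter product argument together with an estimate that the oblique spectral projector $\vP_0=\ketbra{R}{L}/\braket{L|R}$ of $\CH+\CA$ at eigenvalue $0$ is close to $\ketbra{\psi_1}{\psi_1}$; this requires controlling both left and right eigenvectors and produces the $\tfrac32\,\e^{\lambda_2(\CH)t/2}$ bound. You instead observe that $\CD^\dagg[\sqrt\vrho]=0$ already hands you the left eigenvector explicitly, so $\{\sqrt\vrho\}^\perp$ is an invariant hyperplane on which you can run a Rayleigh-quotient/Gr\"onwall argument for the Hilbert--Schmidt norm; the alignment of $\sqrt\vrho$ with $\bm{v}_1$ then confines the orthogonal complement's Rayleigh quotient near $\lambda_2(\CH)$. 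This is more elementary (no Trotter product, no estimate on $\|\vP_0-\ketbra{\psi_1}{\psi_1}\|$, no $\tfrac32$ prefactor), and it delivers a sharper decay rate $\tfrac{9}{10}\lambda_{gap}(\CH)$ versus roughly $\tfrac12|\lambda_2(\CH)|$ in the paper. The one thing the paper's route buys is that \autoref{lem:H+A_decay} is stated at the level of abstract Hermitian $+$ anti-Hermitian matrices with a uniform operator-norm conclusion, so it is a reusable building block (it does not assume the explicit form $\sqrt\vrho$ of the left eigenvector), whereas your Step 2 leans specifically on the trace-preservation structure of a Lindbladian. For this proposition that specificity is an asset, not a defect.
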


See \autoref{sec:gap_and_convergence} for the proof. In particular, the top eigenvalue can be bounded by the anti-Hermitian part $\lambda_1(\CH)\le \norm{\CA}_{2-2}$ for any \Lword{}~\eqref{eq:topEigenvalueBound}; therefore, it remains to provide an efficient construction of the \Lword{} (\autoref{sec:block_encodings}) and prove approximate detailed balance (\autoref{sec:error_boltzmann}).

\subsection{Proof of fixed point correctness ({\autoref{thm:L_correctness}}) }
\label{sec:proof_correctness}

We are now in a position to prove our first main theorem; the proximity of the stationary state $\vrho_{fix}$ to the Gibbs state $\vrho_\beta$. Most of the technical definitions and lemmata are relegated to Appendix \ref{sec:secular}. Here, we address the essential features of the proof together with some essential tools. The main technical argument introduces an intermediate \Lword{} $\CL_{sec}$
\begin{align}
    \normp{\vrho_{fix}(\CL_\beta) - \vrho_\beta}{1} &\leq\normp{\vrho_{fix}(\CL_\beta) - \vrho_{fix}(\CL_{sec})}{1} + \normp{\vrho_{fix}(\CL_{sec}) - \vrho_\beta}{1}
\end{align}
and uses the fixed point error bounds (\autoref{fact:mixing_time_to_fixedpoint}) for the first term and (\autoref{prop:fixed_point_error}) for the second term.

The first error arises from the \textit{secular approximation} (\autoref{sec:secular}), defined by truncating the transitions in the frequency domain
\begin{align}
    \hat{\vA}^a(\bomega) \rightarrow \hat{\vS}^a(\bomega) \quad \text{such that} \quad \bra{\psi_i}\hat{\vS}^a(\bomega) \ket{\psi_j} = 0 \quad \textrm{whenever} \quad \labs{(E_i - E_j) - \bomega} > \bmu
\end{align}
for a tunable truncation parameter $\bmu$. See~\eqref{eqn:Aoperator}, \eqref{eq:genSecDef} for the precise definition of the secular approximated jump operators $\hat{\vS}^a(\bomega)$. The purpose of this truncation is to ensure approximate detailed balance: conjugating $\CL_{sec}$ with the Gibbs state $\vrho$, as required in comparing with the similarity transformation, remains well-behaved (\autoref{sec:secular}). The truncation parameter $\bar{\mu}$ is not physical but rather a proof artifact. Intuitively, our choice of Gaussian weight ensures its Fourier Transform to remain (approximately) another Gaussian (see~\autoref{sec:tail_bounds}), which has a rapidly decaying tail. Thus, we expect the error from truncating the Gaussian tail to be small whenever $\bmu \gtrsim \sigma_t^{-1}$. Thus, with the Gaussian weight, the secular approximation incurs a mild error; this error becomes more severe with the step-function weights given by nature, whose Fourier Transform has a heavy tail (\autoref{prop:uniformTail}).
 
The second error is the most technical part, showing that the secular-approximated operator $\CL_{sec}$ satisfies approximate detailed balance (See~\autoref{sec:error_boltzmann}). We highlight the full technical statement as follows.

\begin{restatable}[Approximate detailed balance]{lem}{ADBopensys}\label{lem:L_Approx_DB}
	Consider a \Lword{} in the following form
	\begin{align}
		\CL &= 
		\sum_{a\in A, \bomega\in S_{\omega_0}} \gamma(\bomega)  \hat{\vS}^a(\bomega)[\cdot] \hat{\vS}^a(\bomega)^{\dagger} -\frac{\gamma(\bomega)}{2} \{\hat{\vS}^a(\bomega)^{\dagger} \hat{\vS}^a(\bomega) ,\cdot \},
	\end{align}
where $\gamma(\bomega)/\gamma(-\bomega)=\e^{-\beta\bomega}$ for each $\bomega\in S_{\omega_0}$.
Suppose there exists $\bmu\leq \beta^{-1}$ such that the operators satisfy
\begin{align}
		\bra{\psi_i}\hat{\vS}^a(\bomega) \ket{\psi_j} = 0 \quad \textrm{whenever} \quad \labs{(E_i - E_j) - \bomega} > \bmu \label{eq:PSPass_L}
\end{align}
	for the eigenvalue decomposition of $\vH=\sum_j E_j\ketbra{\psi_j}{\psi_j}$,
	and there is a permutation $\vP\colon a\rightarrow a'$ such that $\hat{\vS}^a(\bomega)^\dagg = \hat{\vS}^{a'}(-\bomega)$ for each $a,\bomega$. Then, for the Gibbs state $\vrho=\e^{-\beta \vH}/\tr[\e^{-\beta \vH}]$ we have
\begin{align}
\frac{1}{2}\lnormp{\CD(\vrho, \CL) - \CD(\vrho, \CL)^{\dagger}}{2-2} \le \CO\L(\beta\bmu 
\nrm{\sum_{ a\in A}\sum_{\bomega\in S_{\omega_0}}\!\!\!\gamma(\bomega) \hat{\vS}^a(\bomega)^{\dagger} \hat{\vS}^a(\bomega)}\R).
\end{align}
\end{restatable}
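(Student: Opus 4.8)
The plan is to write out $\CD(\vrho,\CL)=\vrho^{-1/4}\CL[\vrho^{1/4}\cdot\vrho^{1/4}]\vrho^{-1/4}$ and its adjoint $\CD(\vrho,\CL)^{\dagg}=\vrho^{1/4}\CL^{\dagg}[\vrho^{-1/4}\cdot\vrho^{-1/4}]\vrho^{1/4}$ in terms of the jump operators, isolate the piece of the difference that cancels by an \emph{exact} detailed-balance argument, and show that the residual produced by the secular window of width $\bmu$ is $\CO(\beta\bmu\nrm{\vG})$ with $\vG:=\sum_{a,\bomega}\gamma(\bomega)\hat{\vS}^a(\bomega)^{\dagg}\hat{\vS}^a(\bomega)$. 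The key preliminary observation concerns the Gibbs conjugation: since $\bra{\psi_i}\hat{\vS}^a(\bomega)\ket{\psi_j}\neq 0$ only when $(E_i-E_j)-\bomega=:\delta\in[-\bmu,\bmu]$, the $(E_i,E_j)$-block of $\hat{\vS}^a(\bomega)$ picks up $\e^{\pm\beta(E_i-E_j)/4}=\e^{\pm\beta\bomega/4}\e^{\pm\beta\delta/4}$, so
\begin{align}
	\vrho^{-1/4}\hat{\vS}^a(\bomega)\vrho^{1/4}=\e^{\beta\bomega/4}\,\widetilde{\vS}^a(\bomega),\qquad \widetilde{\vS}^a(\bomega):=\hat{\vS}^a(\bomega)+\vE^a(\bomega),
\end{align}
where $\vE^a(\bomega)$ is the Schur multiplier of $\hat{\vS}^a(\bomega)$ by $x\mapsto\e^{\beta(x-\bomega)/4}-1$ on the window $\labs{x-\bomega}\le\bmu$. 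Recentered, this multiplier is $y\mapsto\e^{\beta y/4}-1$ on $[-\bmu,\bmu]$ --- \emph{independent of $\bomega$ and of $\nrm{\vH}$}, and, by the hypothesis $\bmu\le\beta^{-1}$, of Wiener (Fourier-algebra) norm $\CO(\beta\bmu)$. Hence $\vE^a(\bomega)=\int\hat w(\tau)\,\e^{-\ri\bomega\tau}\e^{\ri\vH\tau}\hat{\vS}^a(\bomega)\e^{-\ri\vH\tau}\,\rd\tau$ for a \emph{fixed} profile $\hat w$ with $\int\labs{\hat w}\,\rd\tau=\CO(\beta\bmu)$, and $\widetilde{\vS}^a(\bomega)$ has the analogous representation with a profile of total mass $1+\CO(\beta\bmu)$. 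This is the decisive point: the factors $\vrho^{\pm1/4}$, which have exponentially small eigenvalues, act under the secular truncation only as benign $1+\CO(\beta\bmu)$ deformations.

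Next I would use the KMS identity as $\gamma(\bomega)\e^{\beta\bomega/2}=\sqrt{\gamma(\bomega)\gamma(-\bomega)}=:c(\bomega)$ (even in $\bomega$). Then the transition parts of $\CD(\vrho,\CL)$ and $\CD(\vrho,\CL)^{\dagg}$ read $\sum_{a,\bomega}c(\bomega)\,\widetilde{\vS}^a(\bomega)[\cdot]\widetilde{\vS}^a(\bomega)^{\dagg}$ and $\sum_{a,\bomega}c(\bomega)\,\widetilde{\vS}^a(\bomega)^{\dagg}[\cdot]\widetilde{\vS}^a(\bomega)$, and their decay parts are $\vO\mapsto-\tfrac12(\vG_L\vO+\vO\vG_R)$ and $\vO\mapsto-\tfrac12(\vG_R\vO+\vO\vG_L)$ with $\vG_L:=\vrho^{-1/4}\vG\vrho^{1/4}$, $\vG_R:=\vG_L^{\dagg}$. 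Replacing $\widetilde{\vS}^a(\bomega)$ by $\hat{\vS}^a(\bomega)$ everywhere makes the two transition parts agree after reindexing $(a,\bomega)\mapsto(a',-\bomega)$ using the permutation $\hat{\vS}^a(\bomega)^{\dagg}=\hat{\vS}^{a'}(-\bomega)$ and the evenness of $c$ --- this is the computation underlying \autoref{prop:Gibbs_fixed_point}. Consequently
\begin{align}
	\CD(\vrho,\CL)-\CD(\vrho,\CL)^{\dagg}&=\sum_{a,\bomega}c(\bomega)\big(\vE^a(\bomega)[\cdot]\widetilde{\vS}^a(\bomega)^{\dagg}+\hat{\vS}^a(\bomega)[\cdot]\vE^a(\bomega)^{\dagg}\\
	&\qquad-\vE^a(\bomega)^{\dagg}[\cdot]\widetilde{\vS}^a(\bomega)-\hat{\vS}^a(\bomega)^{\dagg}[\cdot]\vE^a(\bomega)\big)-\tfrac12\big[\vG_L-\vG_R,\cdot\big].
\end{align}
The decay term is immediate: $\hat{\vS}^a(\bomega)^{\dagg}\hat{\vS}^a(\bomega)$ is supported on energy pairs differing by $\le 2\bmu$, so $\vG_L-\vG_R$ is the Schur multiplier of $\vG$ by $x\mapsto\e^{\beta x/4}-\e^{-\beta x/4}$ on $\labs{x}\le2\bmu$, of norm $\CO(\beta\bmu)\nrm{\vG}$, whence $\tfrac12\normp{[\vG_L-\vG_R,\cdot]}{2-2}\le\nrm{\vG_L-\vG_R}=\CO(\beta\bmu\nrm{\vG})$.

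For the four transition terms, each of the form $\Phi[\cdot]=\sum_{a,\bomega}c(\bomega)\,\vX^a(\bomega)[\cdot]\vY^a(\bomega)$ with exactly one factor an error operator, the trick I would use is to split the \emph{symmetric} weight \emph{asymmetrically}, $c(\bomega)=\sqrt{\gamma(\bomega)}\cdot\sqrt{\gamma(-\bomega)}$, always attaching $\sqrt{\gamma(\bomega)}$ to the ``annihilation-type'' factor (one of $\hat{\vS}^a(\bomega),\widetilde{\vS}^a(\bomega),\vE^a(\bomega)$) and $\sqrt{\gamma(-\bomega)}$ to the ``creation-type'' factor (one of $\hat{\vS}^a(\bomega)^{\dagg},\widetilde{\vS}^a(\bomega)^{\dagg},\vE^a(\bomega)^{\dagg}$), irrespective of which sits on the left. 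Operator Cauchy--Schwarz (after vectorization) then bounds $\normp{\Phi}{2-2}$ by the geometric mean of $\nrm{\sum_{a,\bomega}\gamma(\bomega)\,\vZ^{\dagg}\vZ}$ over the annihilation-type factors $\vZ$ and $\nrm{\sum_{a,\bomega}\gamma(-\bomega)\,\vZ^{\dagg}\vZ}$ over the creation-type factors $\vZ$. Both reduce to $\nrm{\vG}$: on the annihilation side, $\sum_{a,\bomega}\gamma(\bomega)\hat{\vS}^a(\bomega)^{\dagg}\hat{\vS}^a(\bomega)=\vG$, while its $\widetilde{\vS}$- and $\vE$-variants are bounded by $(1+\CO(\beta\bmu))\nrm{\vG}$ and $\CO((\beta\bmu)^2)\nrm{\vG}$ via the integral representations (pull the unitary conjugations $\e^{\pm\ri\vH\tau}$ out and use $\nrm{\sum_k\gamma_k\alpha_k\vZ_k^{\dagg}\vR\vZ_k}\le\nrm{\sum_k\gamma_k\vZ_k^{\dagg}\vZ_k}$ for $\vR$ unitary and $\labs{\alpha_k}\le1$); on the creation side, the permutation turns $\sum_{a,\bomega}\gamma(-\bomega)\hat{\vS}^a(\bomega)\hat{\vS}^a(\bomega)^{\dagg}$ into $\sum_{b,\eta}\gamma(\eta)\hat{\vS}^b(\eta)^{\dagg}\hat{\vS}^b(\eta)=\vG$, again up to the same distortions. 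Each transition term is thus $\CO(\beta\bmu\nrm{\vG})$, and summing them with the decay term gives the claim.

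I expect the main obstacle to be the uniform control of $\vrho^{\pm1/4}\hat{\vS}^a(\bomega)\vrho^{\mp1/4}$: a priori this has norm $\e^{\Omega(\beta\nrm{\vH})}$, and only the secular truncation $\labs{(E_i-E_j)-\bomega}\le\bmu$ --- together with recentering the deforming multiplier at the window center $\bomega$ and the hypothesis $\bmu\le\beta^{-1}$ --- turns it into an honest $1+\CO(\beta\bmu)$ perturbation with an $\bomega$-independent constant. The secondary subtlety is to keep the factor $\gamma(\bomega)$ alive through Cauchy--Schwarz: without the asymmetric split $c(\bomega)=\sqrt{\gamma(\bomega)}\sqrt{\gamma(-\bomega)}$ one would only control the residual by $\sum_a\nrm{\vA^a}^2$ (via $0\le\gamma\le1$) rather than by $\nrm{\vG}$, and the split works only because of the KMS condition and the adjoint/permutation symmetry of the jump operators.
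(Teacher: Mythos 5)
Your proof is correct and gives the same asymptotics, but it follows a genuinely different route from the paper's argument. The paper does not compare $\CD(\vrho,\CL)$ and $\CD(\vrho,\CL)^{\dagg}$ directly: it first bounds $\nrm{\vec{\CD}_{sec}-\vec{\CD}(\vrho,\CL_{sec})}$ in the key technical Lemma~\ref{lem:error_from_Boltzmann} (where $\vec{\CD}_{sec}$ is the implementable discriminant proxy, which is also needed independently for the coherent sampler of Theorem~\ref{thm:D_correct}), and then obtains the anti-Hermitian bound by a triangle inequality through $\CD_{sec}=\CD_{sec}^{\dagg}$. More importantly, the paper controls the Gibbs conjugation $\vrho^{\mp1/4}\,\cdot\,\vrho^{\pm1/4}$ not via a Schur/Wiener multiplier but via explicit band projectors $\vP_i$ of width $\bmu$: it writes $\vrho^{\mp1/4}\vP_j\propto \e^{\pm\beta\bmu j/4}\vP_j\e^{\pm\delta\vH_j}$ with $\nrm{\delta\vH}\le\beta\bmu/8$, combines this with a finite bisection $\hat{\vS}^a(\bomega)=\sum_{\ell=-2}^{2}\sum_i\vP_i\hat{\vS}^a(\bomega)\vP_{i-\lfloor\bomega\rceil+\ell}$, invokes a block-band norm lemma (Lemma~\ref{lem:bandNorm}), and a telescoping decomposition. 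That replaces your Wiener-algebra step with elementary estimates and yields explicit constants. Both proofs share the decisive device of operator Cauchy--Schwarz (Lemma~\ref{lem:tensorSumBound}) with the asymmetric KMS split $\sqrt{\gamma(\bomega)\gamma(-\bomega)}=\sqrt{\gamma(\bomega)}\cdot\sqrt{\gamma(-\bomega)}$, which under the permutation symmetry $\hat{\vS}^a(\bomega)^{\dagg}=\hat{\vS}^{a'}(-\bomega)$ collapses $\sum_{a,\bomega}\gamma(-\bomega)\hat{\vS}^a(\bomega)\hat{\vS}^a(\bomega)^{\dagg}$ and $\sum_{a,\bomega}\gamma(\bomega)\hat{\vS}^a(\bomega)^{\dagg}\hat{\vS}^a(\bomega)$ to the same operator $\vG$ --- this is exactly the $\vS'=\vS''$ step in the paper. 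Your version is more conceptual in that it isolates the only thing the secular truncation buys: a recentered, $\bomega$- and $\nrm{\vH}$-independent multiplier of Wiener norm $1+\CO(\beta\bmu)$. One caveat worth spelling out: you assert but do not prove that the extension of $y\mapsto\e^{\beta y/4}-1$ off $[-\bmu,\bmu]$ has Wiener norm $\CO(\beta\bmu)$ (and likewise for $2\sinh(\beta y/4)$ on $[-2\bmu,2\bmu]$ in the decay term). This is true for $\beta\bmu\le 1$ by, e.g., a smooth cutoff together with either $\nrm{f}_{A}\lesssim\nrm{f}_{L^2}^{1/2}\nrm{f'}_{L^2}^{1/2}$ or power-series submultiplicativity of the Wiener norm, but it is precisely this verification that the paper's bisection-into-bands argument sidesteps.
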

The above is a simplified version of \autoref{lem:error_from_Boltzmann}, which we prove in~\autoref{sec:error_boltzmann}. Our normalization further simplifies the RHS to $\CO(\beta \bmu)$. We now combine the above estimates to prove Theorem~\ref{thm:L_correctness}.  

\begin{proof}[Proof of Theorem~\ref{thm:L_correctness}]
While Theorem~\ref{thm:L_correctness} is stated in the continuum limit $N\rightarrow \infty$~\eqref{eq:mainL}, we give general error bounds at finite $N$~\eqref{eq:LbetaAgain} and then take the $N\rightarrow \infty$ limit~\eqref{eq:mainL}. Introduce the secular-approximated Lindblad operator $\CL_{sec}$ to bound the fixed point error
\begin{align}
    \normp{\vrho_{fix}(\CL_\beta) - \vrho_\beta}{1} &\le \normp{\vrho_{fix}(\CL_\beta) - \vrho_{fix}(\CL_{sec})}{1} + \normp{\vrho_{fix}(\CL_{sec}) - \vrho_\beta}{1}\\
    &\le  2\normp{\CL_\beta - \CL_{sec}}{1-1} t_{mix}(\CL_\beta) + 10\lnorm{\CD(\vrho, \CL_{sec}) - \CD(\vrho, \CL_{sec})^{\dagger}}t_{mix}(\CL_{sec})\\
    &\le \CO\L( \L(\normp{\CL_\beta - \CL_{sec}}{1-1}+\lnorm{\CD(\vrho, \CL_{sec}) - \CD(\vrho, \CL_{sec})^{\dagger}}\R)t_{mix}(\CL_\beta) \R)\label{eq:L_fL_sec}.
\end{align}
The second inequality uses \autoref{fact:mixing_time_to_fixedpoint} for the first term and \autoref{cor:fixed_point_error_mix} for the last term. The third inequality uses that 
$t_{mix}(\CL_{sec}) \le t_{mix}(\CL_\beta)\left\lceil\frac{\ln(1/2)}{\ln(1/2+t_{mix}(\CL_\beta) \normp{\CL_\beta-\CL_{sec}}{1-1})}\right\rceil$ (\autoref{prop:mixingtime_diff}), which further simplifies to $t_{mix}(\CL_{sec}) = \CO(t_{mix}(\CL_\beta))$ since we must have $\normp{\CL_\beta - \CL_{sec}}{1-1} t_{mix}(\CL_\beta) = \CO(1)$ otherwise the trace distance bound becomes vacuous. 

Now, we evaluate approximate detailed balance (\autoref{lem:L_Approx_DB}) and the secular approximation error using~\autoref{lem:secular}, \autoref{prop:Gaussian_tail}, and that the Gaussian tail in the time domain is bounded directly by $\sqrt{\sum_{ \labs{\bt}\ge  T} \labs{f(\bt)}^2} = \CO(\sqrt{T/\sigma_t}^{-1}\e^{-T^2/4\sigma_t^2} )$
\begin{align}
    \eqref{eq:L_fL_sec}&\le \CO\L( \L(T\bomega_0 + \e^{-T^2/4\sigma_t^2} + \e^{ - N^2t_0^2/16\sigma_t^2} +\e^{ - N^2\omega_0^2\sigma_t^2/2}+\e^{ - \bmu^2\sigma_t^2}+\beta \bmu \R)\cdot t_{mix}(\CL_\beta)\R)\\
    &\le \CO\L( (\sigma_t\omega_0 \sqrt{\log(1/(\sigma_t\omega_0))}+ \frac{\beta}{\sigma_t}\sqrt{\log({\sigma_t}/{\beta})} + \e^{ - N^2\omega_0^2\sigma_t^2/2} )\cdot t_{mix} (\CL_{\beta}) \R)
\end{align}
The second inequality chooses the free parameter $T = 2\sigma_t \sqrt{\ln(1/(\sigma_t\omega_0))}$ and $\bmu = \frac{\beta}{\sigma_t} \sqrt{\ln({\sigma_t}/{\beta})}$ and uses that $\e^{ - N^2t_0^2/16\sigma_t^2}=\e^{ - \pi^2/4\omega_0^2\sigma_t^2} = \CO( \sigma_t\omega_0 \sqrt{\log(1/(\sigma_t\omega_0))}) $ to simplify the expression. 
For the continuum case~\eqref{eq:LbetaAgain}, we have the simpler bound
\begin{align}
        \normp{\vrho_{fix}(\CL_\beta) - \vrho_\beta}{1} = \CO\L( \frac{\beta}{\sigma_t}\sqrt{\log({\sigma_t}/{\beta})} \cdot t_{mix} (\CL_{\beta}) \R)\quad \text{if}\quad N \omega_0 \rightarrow \infty,\ \omega_0 \rightarrow 0
\end{align}
where discretization parameter $\omega_0$ and $N$ disappears in the continuum limit. 
\end{proof}

\section{Quantum algorithms for Gibbs sampling}
\label{sec:main_alg}
In this section, we present two algorithms for approximately preparing the Gibbs state $\vrho_\beta$, both of which are inspired by the dynamical semi-group generated by the Lindbladian $\CL_\beta$. This first algorithm, which we call the \textit{incoherent Gibbs sampling algorithm}, directly simulates the time evolution $\e^{\CL_\beta t}$ by introducing ancillas. The second, which we call the \textit{coherent Gibbs sampling algorithm}, is a Szegedy-type quantum walk algorithm. It enables implementing an orthogonal projector onto the coherent Gibbs state $\ketbra{\sqrt{\vrho_\beta}}{\sqrt{\vrho_\beta}}$ with a quadratic speedup with respect to the real spectral gap of the generator $\CL_\beta$. This projector can then be used in conjunction with simulated annealing (\autoref{sec:simulated_annealing}) to prepare the purified Gibbs state.  

In the circuit constructions, we will extensively use the following rotation gates
\begin{align}
\vY_\theta:=\e^{-\ri\arcsin\!\sqrt{\theta}\vY}
=\begin{pmatrix} \sqrt{1-\theta} & -\sqrt{\theta}\\ \sqrt{\theta} &  \sqrt{1-\theta} \end{pmatrix}
\quad \text{with the Pauli-Y matrix}\quad \vY=\begin{pmatrix} 0 & -\ri \\ \ri &  0 \end{pmatrix}. \label{eq:Y_theta}  
\end{align}

\subsection{Our quantum Gibbs sampling algorithms}
We describe two \Lword{} simulation algorithms: the first exhibits Trotter-like scaling and repeatedly uses a simple (randomized) and weak-measurement gadget (\autoref{thm:weakMeasSim}, \autoref{cor:rndWeakMeasSim}); the second is inspired by~\cite{cleve2016EffLindbladianSim} and has asymptotically almost optimal scaling with time and error \autoref{thm:LCUSim} but requiring a more involved circuit and slightly more ancilla qubits. Both arguments are general as they assume merely a block-encoding of the \Lword{} (\autoref{defn:blockLindladian}); the particular block-encoding for our proposed Gibbs sampler are constructed explicitly in another section (\autoref{sec:block_encodings}).

 Further, we ``quantize'' the \Lword{}s and present \textit{coherent} Gibbs sampling algorithms that prepare the (canonical) purification of an approximate Gibbs state via simulated annealing (\autoref{sec:simulated_annealing}). The procedure assumes that we have a block-encoding of the discriminant matrix of our \Lword{}, which then enables a Szegedy-type quadratic speedup in the simulation time. However, the total speedup is only sub-quadratic on the gap dependence because of the cost to block-encode the discriminant matrix.

\subsubsection{Incoherent \Lword{} simulation algorithms}
Following~\cite{cleve2016EffLindbladianSim} we propose two different implementation methods for incoherent (trajectory-based) simulation of the \Lword{}s that describe our Gibbs sampler. 
The first method is based on a product formula and repeatedly uses a weak measurement scheme\footnote{Our weak measurement scheme is very similar to the short-time evolution by the auxiliary Hamiltonian $J$ utilized in~\cite{cleve2016EffLindbladianSim}, however our approach is a bit more direct and made it clear that a block-encoding of the jump operators suffices as input.} for implementing a small time step. The resulting scaling is analogous to the performance of ``vanilla'' Trotter-based Hamiltonian simulation: the complexity for an $\epsilon$-accurate-time-$t$ \Lword{} evolution scales as $t^2/\epsilon$. Our weak measurement scheme gives rise to simple and low-depth circuits for simulating \Lword{}s given block-encoding access.

The usefulness of weak measurements should come as no surprise, as they are also extremely helpful in other noncommutative state preparation tasks as well ( see, e.g.,~\cite{gilyen2016PrepGapHamEffQLLL}), and the very recent independent work of~\cite{cubitt2023DissipativeStatePrep}. The common theme in these applications is the exploitation of some quantum Zeno-like effect,\footnote{In our case, the quantum Zeno-like effect is manifest in the quadratically reduced amplitude of $\ket{0^c \perp}$ in \eqref{eq:weakMeasPsi}.} but on a higher level, these applications also show some conceptual differences. We leave it for future work to explore whether there is a more fundamental connection between our weak measurement scheme and that of \cite{gilyen2016PrepGapHamEffQLLL,cubitt2023DissipativeStatePrep}.

The second method is based on the algorithm of \cite{cleve2016EffLindbladianSim}, which achieves a close-to-optimal scaling with respect to time and accuracy. Although the asymptotical complexity is much improved, the corresponding circuits are more complicated as they use a linear combination of unitaries (LCU), oblivious amplitude amplification, and advanced ``compression'' techniques. We leave it to future work to determine how the two schemes perform in practice. 

For both algorithms, it suffices to assume that a  \emph{purely irreversible} \Lword{} without the Hamiltonian term $\CL[\cdot]=\sum_{j\in J} \vL_j[\cdot]\vL_j^\dagg - \frac{1}{2} \{\vL_j^\dagg \vL_j,\cdot\}$
is provided in the form of a ``block-encoding'' (i.e., dilation) as~\autoref{defn:blockLindladian}.\footnote{Recent work~\cite{li2021EffSimNonMarkov,li2022SimMarkOpen} assumes the \Lword{} jumps are individually block-encoded while we assume the \textit{entire} set of jumps is encoded in a \textit{single} unitary. We give strictly better complexity for simulating \Lword{}s under this input model, which holds for our Gibbs sampling algorithm and that of~\cite{Rall_thermal_22} (leading to direct improvement for the latter). Remarkably, even if the jumps are individually block-encoded~\cite{li2021EffSimNonMarkov,li2022SimMarkOpen}, these can be converted to our input model. Still, even accounting for the conversion overhead, we recover (up to polylogarithmic factors) their complexity for \Lword{} simulation. The main innovation here seems to be the generalization of the input model, as the earlier Lindbladian simulation algorithms also seems to work~\cite{wang2023email} under this more general input assumption.} In particular, recall our proposed \Lword{} Gibbs sampler (as discretization of~\eqref{eq:mainL}) 
\begin{align}\label{eq:LbetaAgain}
	\CL_{\beta}&:=\sum_{a\in A, \bomega\in S_{\omega_0}} \gamma(\bomega) \L( \hat{\vA}^{a}(\bomega)[\cdot] \hat{\vA}^{a}(\bomega)^{\dagger} -\frac{1}{2} \{\hat{\vA}^{a}(\bomega)^{\dagger} \hat{\vA}^{a}(\bomega) ,\cdot \}\R)\\
 &\text{with Lindblad operators}\quad \{\sqrt{\gamma(\bomega)} \hat{\vA}^a(\bomega)\}_{a\in A,\bomega\in \BR},
\end{align}
and its block-encoding can be found in Eqn.~\ref{eqn:Aoperator} in~\autoref{sec:block_encodings}. However, working with abstract block encodings makes our simulation results general and also simplifies our presentation and proofs, as the operator Fourier Transform naturally fits this definition (\autoref{fig:L_circuit}).
Our weak-measurement scheme is not only simple but also improves, e.g., the sparse \Lword{} simulation algorithm of~\cite[Theorem 9]{childs2016SparseLindbladianSim}.\footnote{Indeed, the complexity is improved by about a factor of $k^4$, where $k$ is the sparsity.} Also, the lower bound on the ``total evolution time'' for simple iterative circuits in \cite{cleve2016EffLindbladianSim} suggests that the performance of similar schemes may be optimal.

\begin{figure}[!ht]
\begin{quantikz}[wire types={q,b,b,b},classical	gap=1mm]
 	\lstick{\ket{0}}		&\qw			&\gate{\vY_\delta}	&\octrl{1}		&\meter{}\rstick[wires=3]{\kern6.5mm discard / reset}\\
 	\lstick{\ket{0^b}}		&\gate[3]{\vU}	&\octrl{-1} 	&\gate[3]{\vU^\dagg}&|[meter]| \qw \\
 	\lstick{\ket{0^{c-b}}}	& 				&\qw			&\qw 				&|[meter]| \qw \\
 	\lstick{$\vrho$}		&				&\qw			&\qw				& \qw \rstick{$\approx \e^{\delta \CL}\![\vrho]$}
\end{quantikz}
\caption{Quantum circuit implementation of an approximate $\delta$-time step via a weak measurement scheme.\footnote{ 
		The scheme can be extended to general \Lword{}s that include the coherence term $-\ri[\vH,\vrho]$ by applying $\mathcal{O}(\delta^2)$-precise Hamiltonian time-evolution for time $\delta$ on the system register before the above circuit is applied. For example, one could use Trotterized time-evolution. (In case $\nrm{\vH}>1$, the entire \Lword{} should be first scaled down by a factor of $\nrm{\vH}$.)}}\label{fig:weakMeasCircuit}
\end{figure}

\begin{restatable}[Weak-measurement for incoherent \Lword{} simulation]{thm}{weakMeasSim}\label{thm:weakMeasSim}
	Suppose $\vU$ is a block-encoding of the purely irreversible \Lword{} $\CL$ as in \autoref{defn:blockLindladian}. We can simulate the action of the superoperator $\e^{t \CL}$ to precision $\epsilon$ in diamond norm using 
 \begin{align}
 c+1&\quad \text{(resettable) ancilla qubits}, \\
 \bigO{t^2/\epsilon}& \quad\text{(controlled) uses of}\quad \vU, \vU^\dagg,\\ 
 \text{and}\quad \bigO{(b+1)t^2/\epsilon} &\quad \text{other two-qubit gates}.
 \end{align}
\end{restatable}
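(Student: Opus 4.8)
The plan is to read off from \autoref{fig:weakMeasCircuit} the CPTP channel $\Phi_\delta$ that one $\delta$-step implements on the system register (with $\delta$ the rotation parameter of $\vY_\delta$), to prove $\normp{\Phi_\delta-\e^{\delta\CL}}{\diamond}=\bigO{\delta^2}$, and then to realize $\e^{t\CL}$ as $\Phi_\delta^{\,m}$ with $\delta=t/m$ and $m=\bigTh{t^2/\epsilon}$, bounding the total error by a telescoping sum over the $m$ steps.

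\emph{One step, computed exactly.} Push a system--reference pure state $\ket{\psi}$ through the circuit. By \autoref{defn:blockLindladian}, after the first $\vU$ the (\,$c$-qubit dilation\,)$\otimes$(\,system--reference\,) state is $\ket{0^b}\ket{\phi_0}+\ket{\perp}$, where $\ket{\phi_0}=\sum_{j\in J}\ket{j}\otimes(\vL_j\otimes\vI)\ket{\psi}$, the ket $\ket{0^b}$ labels the $b$ qubits that the block-encoding post-selects on, and $\ket{\perp}$ lies entirely outside that $\ket{0^b}$-subspace. The open-controlled $\vY_\delta$ turns the extra flag qubit into $\sqrt{1-\delta}\ket{0}+\sqrt{\delta}\ket{1}$ on the $\ket{0^b}$-branch only, and the open-controlled $\vU^\dagger$ then acts on the flag-$\ket{0}$-branch, i.e.\ on $\sqrt{1-\delta}\,\ket{0^b}\ket{\phi_0}+\ket{\perp}=\vU\!\left(\ket{0^c}\ket{\psi}\right)-(1-\sqrt{1-\delta})\ket{0^b}\ket{\phi_0}$, which unitarity turns into $\ket{0^c}\ket{\psi}-(1-\sqrt{1-\delta})\,\vU^\dagger\ket{0^b}\ket{\phi_0}$. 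The crucial point is that, writing $\vU\!\left(\ket{0^c}\otimes\vI\right)=\ket{0^b}\otimes\big(\sum_j\ket{j}\otimes\vL_j\big)+\vM$ with $\left(\bra{0^b}\otimes\vI\right)\vM=0$, the operator $\vM^\dagger$ annihilates every vector whose $b$-block equals $\ket{0^b}$, so that $\left(\bra{0^c}\otimes\vI\right)\vU^\dagger\!\left(\ket{0^b}\ket{\phi_0}\right)=\sum_j\vL_j^\dagger\vL_j\ket{\psi}$. Discarding --- equivalently, measuring and resetting --- the flag qubit together with the $c$ dilation qubits then gives, with $\kappa:=1-\sqrt{1-\delta}$ and $\vS:=\sum_j\vL_j^\dagger\vL_j$,
\[
\Phi_\delta[\vrho]=\vrho+\delta\sum_{j}\vL_j\vrho\vL_j^\dagger-\kappa\,\{\vS,\vrho\}+\kappa^2\,\vE_\vrho ,
\]
where $\vE_\vrho\succeq 0$ and $\tr\vE_\vrho\le 1$ (it collects the discarded flag-$\ket{1}$ and ``un-rejected'' pieces). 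This holds for $\vrho=\ketbra{\psi}{\psi}$, extends by linearity to all $\vrho$, and carries over verbatim to $\Phi_\delta\otimes\id$.

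\emph{From one step to $\e^{t\CL}$, and cost.} Since $\vU$ is a contraction, \autoref{defn:blockLindladian} forces $\norm{\vS}\le 1$, so $\normp{\CL}{1-1}=\bigO{1}$ (also for $\CL\otimes\id$); this makes the remainder in $\e^{\delta\CL}[\vrho]=\vrho+\delta\CL[\vrho]+\bigO{\delta^2}$ uniformly $\bigO{\delta^2}$ and gives $\norm{\{\vS,\vrho\}}_1\le 2$. Using $\kappa=\tfrac{\delta}{2}+\bigO{\delta^2}$, $\kappa^2=\bigO{\delta^2}$ and subtracting, $\normp{\Phi_\delta-\e^{\delta\CL}}{\diamond}\le C\delta^2$ for an absolute constant $C$. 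As $\Phi_\delta$ and $\e^{\delta\CL}$ are CPTP, hence diamond-norm contractions, a telescoping estimate gives $\normp{\Phi_\delta^{\,m}-(\e^{\delta\CL})^{m}}{\diamond}\le Cm\delta^2 = Ct^2/m$; taking $m=\ceil{Ct^2/\epsilon}$ (so $\delta=t/m\le 1$) yields $\normp{\Phi_\delta^{\,m}-\e^{t\CL}}{\diamond}\le\epsilon$. Each step costs one (uncontrolled) $\vU$, one ($1$-qubit-controlled) $\vU^\dagger$, one $\vY_\delta$ open-controlled on a $b$-qubit register (an $\bigO{b}$-gate multi-control) plus $\bigO{1}$ gates; multiplying by $m=\bigO{t^2/\epsilon}$ gives the claimed $\bigO{t^2/\epsilon}$ uses of $\vU,\vU^\dagger$, $\bigO{(b+1)t^2/\epsilon}$ other two-qubit gates, and $c+1$ resettable ancillas (the $c$ dilation qubits plus one flag qubit, reset between steps).

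\emph{Main obstacle.} The delicate ingredient is the exact one-step identity --- in particular, confirming that the ``rejected''/decay branch, after the un-computation by $\vU^\dagger$, contributes precisely $-\tfrac{\delta}{2}\{\vS,\vrho\}$ at first order together with a genuinely completely positive $\bigO{\delta^2}$ correction. This needs \autoref{defn:blockLindladian} used both forward (for $\vU$) and backward (for $\vU^\dagger$ restricted to the $\ket{0^b}$-subspace), plus the observation that $\vU$ being merely a contraction already yields $\vS\preceq\vI$, which is exactly what keeps all the $\bigO{\delta^2}$ constants independent of the dimension and of $\labs{J}$. The remaining pieces --- the diamond-norm Taylor bound, the telescoping over $m$ steps, and the gate/ancilla bookkeeping --- are routine.
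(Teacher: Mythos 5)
Your proof is correct and follows essentially the same route as the paper: read off the one-step channel $\Phi_\delta$ implemented by the weak-measurement circuit, show $\normp{\Phi_\delta-\e^{\delta\CL}}{\diamond}=\bigO{\delta^2}$ using the exact identity $\kappa=1-\sqrt{1-\delta}=\delta/2+\bigO{\delta^2}$ and $\nrm{\vS}\le 1$ from the block-encoding, extend to diamond norm via $\vU\otimes\vI$, and telescope over $m=\Theta(t^2/\epsilon)$ steps. Your explicit tracking of the $\kappa^2\vE_\vrho$ remainder (rather than an unstructured $\bigO{\delta^2}$) and the observation that trace preservation forces $\tr\vE_\vrho=\tr[\vS\vrho]\le 1$ via the algebraic identity $\kappa^2=2\kappa-\delta$ is a tidy refinement, but the underlying argument is the same as in the paper.
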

\begin{proof}
	We can simulate an approximate $\delta$-time step by $\CL$ using the following weak-measurement scheme displayed in \autoref{fig:weakMeasCircuit}. 
 \begin{enumerate}
     \item Apply $\vU$.
     \item Append an ancilla qubit in state $\ket{0}$ and rotate it with angle $\arcsin{\sqrt{\delta}}$ controlled on the $\ket{0^b}$ state (indicating the successful application of a jump).
     \item Apply $\vU^\dagger$ controlled on the ancilla qubit being $0$.
     \item Measure and discard all but the system register.
 \end{enumerate}
Assuming the system register is in the pure state $\ket{\psi}$, this circuit $\vC$ acts as follows:
	\begin{align}
		\ket{0}\cdot \ket{0^c}\ket{\psi}
		&\stackrel{(1)}{\rightarrow} \ket{0}\cdot \vU\ket{0^c}\ket{\psi}\nonumber\\&
		\stackrel{(2)}{\rightarrow}  \L(\sqrt{1-\delta}\ket{0}+\sqrt{\delta}\ket{1}\R)\cdot\L(\ketbra{0^b}{0^b}\otimes \vI\R)\vU\ket{0^c}\ket{\psi}
  \ \ + \ \ \ket{0}\cdot (\vI-\ketbra{0^b}{0^b}\otimes \vI) \vU\ket{0^c}\ket{\psi}\nonumber\\&
		=\ket{0}\cdot \vU\ket{0^c}\ket{\psi}\ \ +\ \ \sqrt{\delta} \ket{1}\cdot \ket{0^b}\underset{\ket{\psi'_0} :=}{\underbrace{(\bra{0^b}\otimes \vI)\vU\ket{0^c}\ket{\psi}}} \ \ - \ \ (1-\sqrt{1-\delta})\ket{0}\cdot\L(\ketbra{0^b}{0^b}\otimes \vI\R)\vU\ket{0^c}\ket{\psi}\nonumber\\&
		\stackrel{(3)}{\rightarrow}  \ket{0}\cdot \ket{0^c}\ket{\psi}\ \ + \ \ \sqrt{\delta} \ket{1}\cdot \ket{0^b}\ket{\psi'_0} 
		\ \ - \ \ (1-\sqrt{1-\delta})\ket{0}\cdot \vU^\dagg\L(\ketbra{0^b}{0^b}\otimes \vI\R)\vU\ket{0^c}\ket{\psi}\nonumber\\&
		=\ket{0}\cdot \ket{0^c}\ket{\psi}\ \ +\ \  \sqrt{\delta} \ket{1}\cdot \ket{0^b}\ket{\psi'_0}
		\ \ - \ \  (1-\sqrt{1-\delta})\ket{0}\cdot \ket{0^c}(\bra{0^c}\otimes \vI)\vU^\dagg(\ket{0^b}\otimes \vI)\cdot(\bra{0^b}\otimes \vI)\vU\ket{0^c}\ket{\psi}\nonumber\\&
		\phantom{=\ket{0}\ket{0^c}\ket{\psi} \ \ +\ \  \sqrt{\delta} \ket{1}\cdot \ket{0^b}\ket{\psi'_0}\ \ }\kern1mm
	\ \	-\ \ (1-\sqrt{1-\delta})\ket{0}\cdot (\vI-\ketbra{0^c}{0^c}\otimes \vI) \vU^\dagg\L(\ketbra{0^b}{0^b}\otimes \vI\R)\vU\ket{0^c}\ket{\psi}\nonumber\\&   
		=\ket{0}\cdot \ket{0^c}\L(\vI-\underset{\frac\delta2+\bigO{\delta^2}}{\underbrace{(1-\sqrt{1-\delta})}} \sum_{j\in J}\vL_j^\dagg \vL_j\R) \ket{\psi} + \sqrt{\delta}\ket{1}\cdot\ket{0^b}\sum_{j\in J} \ket{j}\vL_j\ket{\psi} - \underset{\frac\delta2+\bigO{\delta^2}}{\underbrace{(1-\sqrt{1-\delta})}}\ket{0}\cdot \ket{0^c \perp},\label{eq:weakMeasPsi}
	\end{align}
	where $\ket{0^c \perp}$ is some quantum state such that $\nrm{\ket{0^c \perp}}\leq 1$ and $(\bra{0^c}\otimes \vI)\cdot \ket{0^c \perp}=0$. Tracing out the first $a+1$ qubits, we get that the resulting state is $\bigO{\delta^2}$-close to the desired state. Indeed, let $\ket{\psi'}$ denote the final state above in \eqref{eq:weakMeasPsi}; we now show that
	\begin{align}
		\lnormp{(\mathcal{I}+\delta \CL)[\ketbra{\psi}{\psi}]-\tr_{c+1}\L[\ketbra{\psi'}{\psi'}\R]}{1}=\bigO{\delta^2}\label{eq:almostLin}
	\end{align}
	by observing that 
	\begin{align*}
		\tr_{c+1}[\ketbra{\psi'}{\psi'}]&
		= \tr_{c}\bigg[(\bra{0}\otimes \vI)\cdot \ketbra{\psi'}{\psi'}\cdot (\ket{0}\otimes \vI)\bigg] + \tr_{c}\bigg[(\bra{1}\otimes \vI)\cdot \ketbra{\psi'}{\psi'}\cdot (\ket{1}\otimes \vI)\bigg]\\&
		= (\bra{0^{c+1}}\otimes \vI)\cdot\ketbra{\psi'}{\psi'}\cdot(\ket{0^{c+1}}\otimes \vI) 
		+\tr_{c}\bigg[\L(\bra{0}\otimes \vI-\ketbra{0^c}{0^c}\R)\cdot\ketbra{\psi'}{\psi'}\cdot\L(\ket{0}\otimes \vI-\ketbra{0^c}{0^c}\R)\bigg]\\
		&+\delta\sum_{j\in J} \vL_j \ketbra{\psi}{\psi} \vL_j^\dagg\\&	
		=
		\L(\vI-\frac\delta2\sum_{j\in J}\vL_j^\dagg \vL_j+\bigO{\delta^2}\R)\ketbra{\psi}{\psi}\L(\vI-\frac\delta2\sum_{j\in J}\vL_j^\dagg \vL_j+\bigO{\delta^2}\R) \\
  &\quad+ \bigO{\delta^2} \tr_{c}\ketbra{0^c \perp}{0^c \perp}+\delta\sum_{j\in J} \vL_j \ketbra{\psi}{\psi} \vL_j^\dagg\\&	
		=\ketbra{\psi}{\psi} +\delta\sum_{j\in J} \vL_j \ketbra{\psi}{\psi}\vL_j^\dagg 
		- \frac\delta2 \Big\{\sum_{j\in J}\vL_j^\dagg \vL_j,\ketbra{\psi}{\psi}\Big\} + \bigO{\delta^2}\\&	
		=(\mathcal{I}+\delta \CL)[\ketbra{\psi}{\psi}] +\bigO{\delta^2}.
	\end{align*}
	Convexity implies \eqref{eq:almostLin} also holds for mixed input states. To extend to the diamond norm, observe that $\CL[\cdot]\otimes \vI[\cdot]\vI$ has Lindblad operators $\vL_j\otimes \vI$ and therefore $\vU\otimes \vI$ is a block-encoding for $\CL[\cdot]\otimes \vI[\cdot]\vI$.
	This implies that the trace-norm bound of \eqref{eq:almostLin} holds with respect to $\CL[\cdot]\otimes \vI[\cdot]\vI$ as well, and so we can conclude that
	\begin{align}
		\lnormp{(\mathcal{I}+\delta \CL)[\cdot]-\tr_{c+1}\vC\left[\ketbra{0^{c+1}}{0^{c+1}}\otimes \cdot\right]\vC^\dagg}{\Diamond}=\bigO{\delta^2} \label{eq:almostLinDiam}
	\end{align}
	The triangle inequality then implies that the implemented map is $\bigO{\delta^2}$-close in diamond distance to $\e^{\delta \CL}$,  since $\nrm{(\mathcal{I}+\delta \CL)-\e^{\delta \CL}}_\Diamond=\bigO{\delta^2}$ as shown by, e.g., \cite[Appendix B]{cleve2016EffLindbladianSim}.\footnote{Here we implicitly used the fact that a block-encoded \Lword{} has norm at most one. This follows from the observation that $\nrm{\sum_{j\in J}\vL_j^\dagg\vL_j}\leq 1$, which is a direct consequence of \autoref{prop:rejectBlockEncoded}.}
	
	Choosing $\delta=\Theta(\frac{\epsilon}{t})$ ensures that the error in a single time-step is bounded by $\bigO{\frac{\epsilon^2}{t^2}}$, and repeating the process $\Theta(\frac{t^2}{\epsilon})$-times induces an error that is bounded by $\epsilon$ for the entire time-$t$ evolution. The complexity is then $\Theta(\frac{t^2}{\epsilon})$-times the cost of implementing the circuit in \autoref{fig:weakMeasCircuit}.
\end{proof}
In addition to purely irreversible \Lword{}, as noted below (\autoref{fig:weakMeasCircuit}), the above weak measurement scheme can be amended with the Hamiltonian evolution term.

In our Gibbs sampling algorithm, the original random process typically proceeds by a random ``jump'' operator $\vA^a$ for a  uniformly random $a\in A$. We then obtain the final generators by applying the operator Fourier Transform to these ``jump'' operators. Na\"{i}vely applying our weak measurement scheme to such a \Lword{} would require us to use all the ``jump'' operators in each iteration. However, we show in the next corollary that it suffices to randomly pick a single ``jump'' operator in each iteration. In some situations, we could hope for further improvement by parallelization of these jumps if their operator Fourier Transform remains localized.

\begin{cor}[Improved randomized simulation for convex combinations of \Lword{}s]\label{cor:rndWeakMeasSim}
	Suppose that a purely irreversible \Lword{} $\CL[\cdot]=\sum_{i}p_i\CL_i[\cdot]$ is a convex combination of the purely irreversible \Lword{}s $\CL_i[\cdot]$ which are given by their respective block-encodings. In \autoref{thm:weakMeasSim} we can replace each weak-measurement gadget for $\CL[\cdot]$ by an independently sampled weak-measurement gadget for $\CL_i[\cdot]$ according to the distribution $p_i$ while keeping the same asymptotic iteration count  $\bigO{t^2/\epsilon}$.
\end{cor}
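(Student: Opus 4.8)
The plan is to observe that a single weak-measurement gadget is, as a quantum channel, nothing more than the first-order approximant $\mathcal{I} + \delta\CL + \bigO{\delta^2}$, and that such first-order approximants depend linearly on the generator, so they respect the convex combination $\CL = \sum_i p_i\CL_i$.

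Concretely, I would first run the proof of \autoref{thm:weakMeasSim} verbatim for each $\CL_i$ separately, using its given block-encoding: this shows that the weak-measurement gadget associated with $\CL_i$ implements a CPTP channel $\CC_i$ with $\normp{\CC_i - (\mathcal{I} + \delta\CL_i)}{\Diamond} = \bigO{\delta^2}$, where the implied constant is universal since every block-encoded \Lword{} has norm at most one (the fact invoked at the end of the proof of \autoref{thm:weakMeasSim}). Next, note that sampling $i\sim p_i$ and then applying $\CC_i$ at a given step realizes the mixture channel $\bar\CC := \sum_i p_i\CC_i$, which is again CPTP, and by the triangle inequality together with $\sum_i p_i(\mathcal{I}+\delta\CL_i) = \mathcal{I}+\delta\CL$,
\begin{align}
\normp{\bar\CC - \e^{\delta\CL}}{\Diamond} \le \sum_i p_i\normp{\CC_i - (\mathcal{I}+\delta\CL_i)}{\Diamond} + \normp{(\mathcal{I}+\delta\CL) - \e^{\delta\CL}}{\Diamond} = \bigO{\delta^2}.
\end{align}
Thus $\bar\CC$ is exactly as good a one-step approximant to $\e^{\delta\CL}$ as the (hypothetical) gadget built directly from a block-encoding of $\CL$; in particular we never need to construct the latter. (The ancilla count per step is $\max_i c_i + 1$ after resetting, which does not affect the iteration count.)

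It then remains to compose $T = \Theta(t^2/\epsilon)$ independent steps. Because the coin flips are independent across time steps and each $\CC_i$ is a linear superoperator, the channel obtained by averaging over all the classical randomness factorizes as $\BE[\CC_{i_T}\circ\cdots\circ\CC_{i_1}] = \bar\CC\circ\cdots\circ\bar\CC = \bar\CC^{T}$, and this averaged channel is the object we compare against $\e^{t\CL}$ — exactly as in \autoref{thm:weakMeasSim}, where the discarded measurement outcomes are themselves averaged out. A telescoping estimate, using that $\bar\CC$ and $\e^{\delta\CL}$ (and all their powers) are diamond-norm contractions because they are CPTP, gives $\normp{\bar\CC^{T} - (\e^{\delta\CL})^{T}}{\Diamond} \le T\cdot\bigO{\delta^2} = \bigO{\epsilon}$ with the same choice $\delta = \Theta(\epsilon/t)$ as before, and $(\e^{\delta\CL})^{T} = \e^{t\CL}$; extending to the diamond norm (acting on $\CL\otimes\mathcal{I}$) is done as in \autoref{thm:weakMeasSim}. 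Since $\delta$, and hence $T$, are unchanged, the iteration count remains $\bigO{t^2/\epsilon}$.

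The only genuinely delicate point is the factorization $\BE[\CC_{i_T}\circ\cdots\circ\CC_{i_1}] = \bar\CC^{T}$: it relies on independence of the samples across steps plus linearity of composition in each argument, and on the observation that the relevant notion of ``simulating $\e^{t\CL}$'' is precisely the channel obtained after averaging out all classical and quantum ancilla randomness. One should also double-check that the error accumulates additively rather than multiplicatively in the telescoping, which is immediate here because every map in sight is completely positive and trace preserving. Everything else is a termwise re-run of the \autoref{thm:weakMeasSim} analysis.
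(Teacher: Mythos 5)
Your proposal is correct and takes essentially the same route as the paper: establish the one-step diamond-norm bound for the averaged channel $\bar\CC=\sum_i p_i\CC_i$ via $\nrm{\CC_i-(\mathcal{I}+\delta\CL_i)}_\Diamond=\bigO{\delta^2}$ for each $i$, then linearity and the triangle inequality, then $\nrm{(\mathcal{I}+\delta\CL)-\e^{\delta\CL}}_\Diamond=\bigO{\delta^2}$. The only difference is that you spell out the final composition step — the factorization $\BE[\CC_{i_T}\circ\cdots\circ\CC_{i_1}]=\bar\CC^T$ by independence and the additive telescoping via contractivity of CPTP maps — which the paper leaves implicit as a re-run of the \autoref{thm:weakMeasSim} argument; that elaboration is correct and arguably the cleanest way to record the step.
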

\begin{proof}
	It suffices to show that 
		\begin{align}
		\nrm{\e^{\delta \CL}[\cdot]-\sum_{i}p_i\tr_{c+1}\vC_i\L(\ketbra{0^{c+1}}{0^{c+1}}\otimes 		\left[\cdot\right]\R)\vC_i^\dagg}_\Diamond\!\!=\bigO{\delta^2}.\label{eq:almostLinRandom}
	\end{align}		
	From \eqref{eq:almostLinDiam} we know that the weak measurement gadget $\vC_i$ in \autoref{fig:weakMeasCircuit} for $\CL^{\dagger}_i[\cdot]$ satisfies 
	\begin{align}
		\nrm{(\mathcal{I}+\delta \CL_i)[\cdot]-\tr_{c+1}\vC_i\L(\ketbra{0^{c+1}}{0^{c+1}}\otimes \left[\cdot\right]\R)\vC_i^\dagg}_\Diamond=\bigO{\delta^2}.
	\end{align}
	By linearity and the triangle inequality, it follows that 
	\begin{align}
		\nrm{(\mathcal{I}+\delta \sum_{i}p_i\CL_i)[\cdot]-\sum_{i}p_i\tr_{c+1}\vC_i\L(\ketbra{0^{c+1}}{0^{c+1}}\otimes \left[\cdot\right]\R)\vC_i^\dagg}_\Diamond=\bigO{\delta^2}.
	\end{align}
	Since $\nrm{(\mathcal{I}+\delta \CL)-\e^{\delta \CL}}_\Diamond\!\!=\bigO{\delta^2}$, 
	by the triangle inequality, we get the sought inequality in \eqref{eq:almostLinRandom}. 
\end{proof}

Now, we turn to our second incoherent simulation result that is roughly based on the algorithm of~\cite{cleve2016EffLindbladianSim} but contains further improvements and fixes. We obtain improved complexity because we assume that the \Lword{} is provided via a block-encoding, while effectively~\cite{cleve2016EffLindbladianSim} construct a (potentially suboptimal) block-encoding within their algorithm. Their complexity depends on $\sum_{j\in J}\nrm{\vL_j^\dagg \vL_j}$, while our algorithm can in principle achieve a dependence like $\nrm{\sum_{j\in J}\vL_j^\dagg \vL_j}$ when an efficient block-encoding is provided -- which is the case for our explicit block-encodings outlined in the next section (\autoref{sec:block_encodings}). To our knowledge, this is the first \Lword{} simulation algorithm that achieves both near-linear time dependence and a complexity that scales with $\nrm{\sum_{j\in J}\vL_j^\dagg \vL_j}$. Note that this improvement looks similar to how \cite{apeldoorn2022QTomographyWStatePrepUnis} improved over \cite{huggins2021QAlgMultipleExpectationValues} on the complexity of estimating multiple expectation values, but the techniques are very different. Here, the improvement stems from the following efficient block-encoding construction. 

\begin{prop}\label{prop:rejectBlockEncoded}
	Given a block-encoding of a \Lword{} (\autoref{defn:blockLindladian}), we get a block-encoding of 
 \begin{align}
  \sum_{j\in J}\vL_j^\dagg  \vL_j \quad \text{via}\quad \vV:= (\vY_{\frac{1}{2}}\otimes \vU^\dagg)\cdot \L(2\ketbra{0^{b+1}}{0^{b+1}}\otimes\vI-\vI\R)\cdot(\vY_{\frac{1}{2}}\otimes \vU),
 \end{align}
where $\ket{\pm} := (\ket{0}\pm\ket{1})/\sqrt{2}$. 
\end{prop}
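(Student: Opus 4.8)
The plan is to unwind the block-encoding definition (\autoref{defn:blockLindladian}) and recognize $\vV$ as a standard rotation-and-reflection gadget whose flagged block is $\vU^\dagger\Pi\vU$ restricted to the input subspace. Write $\Pi:=\ketbra{0^b}{0^b}\otimes\vI$ for the projector onto the flag subspace of $\vU$ and set $M:=(\bra{0^b}\otimes\vI)\,\vU\,(\ket{0^c}\otimes\vI)=\sum_{j\in J}\ket{j}\otimes\vL_j$, which is a sub-block of the unitary $\vU$ and hence a contraction. Orthonormality of the labels $\ket{j}$ gives $M^\dagger M=\sum_{j,k\in J}\braket{k|j}\,\vL_k^\dagger\vL_j=\sum_{j\in J}\vL_j^\dagger\vL_j$, so it suffices to prove: (i) $\vV$ is unitary, which is immediate as it is a product of the unitaries $\vY_{1/2}$, $\vU^\dagger$, the reflection $2\ketbra{0^{b+1}}{0^{b+1}}\otimes\vI-\vI$, and $\vU$; and (ii) $(\bra{0^{c+1}}\otimes\vI)\,\vV\,(\ket{0^{c+1}}\otimes\vI)=M^\dagger M$, where the $(c{+}1)$-qubit flag of $\vV$ is the fresh $\vY_{1/2}$-qubit together with all $c$ ancillas of $\vU$, inside which sits the $b$-qubit sub-register carrying $\ket{0^b}$ (disjoint from the register holding the index $\ket{j}$); the $(b{+}1)$-qubit flag of the reflection is the fresh qubit plus that $b$-qubit sub-register.

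For (ii) I would carry out two short moves. First, on (fresh qubit)$\,\otimes\,$(ancilla of $\vU$)$\,\otimes\,$(system) the middle reflection is $2\ketbra{0^{b+1}}{0^{b+1}}\otimes\vI-\vI=2\ketbra{0}{0}\otimes\Pi-\vI$, and conjugating it by the $\vI\otimes\vU$ furnished by the outer factors of $\vV$ gives $2\ketbra{0}{0}\otimes P-\vI$ with $P:=\vU^\dagger\Pi\vU$, so
\begin{align}
\vV=\bigl(\vY_{1/2}\otimes\vI\bigr)\bigl(2\ketbra{0}{0}\otimes P-\vI\bigr)\bigl(\vY_{1/2}\otimes\vI\bigr).
\end{align}
Second, using $\vY_{1/2}\ket{0}=\ket{+}$ and $\bra{0}\vY_{1/2}=\bra{-}$ together with $\braket{-|0}=\braket{0|+}=1/\sqrt{2}$ and $\braket{-|+}=0$, the $\ket{0}$-block on the fresh qubit is
\begin{align}
(\bra{0}\otimes\vI)\,\vV\,(\ket{0}\otimes\vI)=2\braket{-|0}\braket{0|+}\,P-\braket{-|+}\,\vI=P=\vU^\dagger\Pi\vU,
\end{align}
the residual $-\vI$ of $2\Pi-\vI$ cancelling precisely because $\ket{+}\perp\ket{-}$. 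Finally, restrict $P$ to the $\ket{0^c}$-block of the $\vU$-ancilla: since $\Pi=\Pi^\dagger\Pi$ and $\Pi\,\vU\,(\ket{0^c}\otimes\vI)=\ket{0^b}\otimes M$ is just the trivial dilation of $M$, one obtains $(\bra{0^c}\otimes\vI)\,P\,(\ket{0^c}\otimes\vI)=M^\dagger M=\sum_{j\in J}\vL_j^\dagger\vL_j$, which is (ii).

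There is no genuine obstacle here beyond correctly tracking the nested flag registers; the only point worth isolating is the cancellation $\braket{-|+}=0$, which is exactly what makes the $\ket{0}$-block equal $P$ on the nose (rather than $P$ shifted by an identity), and hence makes the construction produce $M^\dagger M$ rather than $M^\dagger M-\vI$. As a free byproduct, unitarity of $\vV$ forces the encoded operator to be a contraction, i.e.\ $\nrm{\sum_{j\in J}\vL_j^\dagger\vL_j}\le1$; equivalently this is just $\nrm{M}^2\le1$, and it is the inequality invoked when bounding the norm of a block-encoded \Lword{} in the proof of~\autoref{thm:weakMeasSim}.
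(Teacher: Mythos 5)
Your proof is correct and follows essentially the same route as the paper's: extract the $\ket{0^{c+1}}$ block of $\vV$, use $\vY_{1/2}\ket{0}=\ket{+}$ and $\bra{0}\vY_{1/2}=\bra{-}$ together with $\braket{-|+}=0$ to kill the $-\vI$ term and leave $2\braket{-|0}\braket{0|+}=1$ times $\vU^\dagger(\ketbra{0^b}{0^b}\otimes\vI)\vU$, and then contract with $\ket{0^c}$ to obtain $M^\dagger M=\sum_j\vL_j^\dagger\vL_j$. Naming the intermediate $P=\vU^\dagger\Pi\vU$ is only a cosmetic reorganization of the paper's one-line chain, and your closing remark that unitarity of $\vV$ forces $\nrm{\sum_j\vL_j^\dagger\vL_j}\le 1$ is exactly the byproduct the paper invokes in a footnote to the proof of \autoref{thm:weakMeasSim}.
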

\begin{proof}
We calculate
\begin{align}
	(\bra{0^{c+1}}\otimes \vI)\cdot \vV\cdot (\ket{0^{c+1}}\otimes \vI)
	&=\bigg(\bra{-}\otimes (\bra{0^{c}}\otimes \vI)\vU^\dagg \bigg)\cdot\L(2\ketbra{0^{b+1}}{0^{b+1}}\otimes\vI-\vI\R) \cdot\bigg(\ket{+}\otimes \vU(\ket{0^{c}}\otimes \vI)\bigg)\\&
	=\bigg(\bra{-}\otimes (\bra{0^{c}}\otimes \vI)\vU^\dagg \bigg)\cdot\L(2\ketbra{0^{b+1}}{0^{b+1}}\otimes\vI\R) \cdot \bigg(\ket{+}\otimes \vU(\ket{0^{c}}\otimes \vI)\bigg)\\&
	=(\bra{0^{c}}\otimes \vI)\cdot \vU^\dagg\cdot (\ketbra{0^{b}}{0^{b}}\otimes\vI)\cdot \vU\cdot (\ket{0^{c}}\otimes \vI)\\&
	=\L(\sum_{j\in J}\bra{j}\otimes \vL_j^\dagg \R) \L(\sum_{j'\in J}\ket{j'}\otimes \vL_{j'}\R)
	=\sum_{j\in J}\vL_j^\dagg  \vL_j.\tag*{\qedhere}
\end{align}
\end{proof}
This block-encoding construction and the following generic \Lword{} simulation algorithm answers an open question\footnote{See~\cite[Section 7]{Rall_thermal_22} ``That one special Kraus operator involves all the $L_j$’s. Does there exist any special treatment of this special Kraus operator so that we can leverage the special structure of the oracle $\sum_j \ket{j}\otimes L_j$ to get rid of the $\bigO{m}$ dependence?''} recently posed by Rall, Wang, and Wocjan~\cite{Rall_thermal_22}, and can significantly improve their complexity.
Although we do not use the above block-encoding explicitly, this observation is implicitly used in our weak measurement schemes \autoref{fig:weakMeasCircuit}-\autoref{fig:postWeakMeasCircuit} that enable us to prove the following result, whose proof is presented in~\autoref{apx:improvLind}).
\begin{restatable}[Compressed incoherent \Lword{} simulation algorithm]{thm}{LCUSim}\label{thm:LCUSim}
	Suppose $\vU$ is a block-encoding of the Lindblad operators of a purely irreversible \Lword{arg1} $\CL$ as in \autoref{defn:blockLindladian}. Let $\epsilon\leq 1/2$, then we can simulate the action of the superoperator $\e^{t \CL}$ to precision $\epsilon$ in diamond norm using 
	\begin{align}
	&\bigO{\left(c+\log((t+1)/\epsilon)\right)\log((t+1)/\epsilon)}\quad&\text{ (resettable) ancilla qubits},\\ 
	&\bigO{(t+1)\frac{\log((t+1)/\epsilon)}{\log\log((t+1)/\epsilon)}} \quad &\text{(controlled) uses of $\vU$ and $\vU^\dagg$},\\
	\text{and}\quad &\bigO{(t+1)(c+1)\mathrm{polylog}((t+1)/\epsilon)}\quad &\text{ other two-qubit gates}.
	\end{align}
	If the \Lword{} has a coherent part $-\ri[\vH,\vrho]$, and we have access to a block-encoding of $\vH=(\bra{0^c}\otimes \vI)\vV(\ket{0^c}\otimes \vI)$, then we can simulate $\e^{t \CL}$ with $\bigO{(t+1)\frac{\log((t+1)/\epsilon)}{\log\log((t+1)/\epsilon)}}$ additional (controlled) uses of $\vV$ and $\vV^\dagg$.
\end{restatable}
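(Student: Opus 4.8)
The plan is to adapt the compressed, truncated-series \Lword{} simulation of~\cite{cleve2016EffLindbladianSim} to our input model --- the single block-encoding $\vU$ of \autoref{defn:blockLindladian} --- and to exploit \autoref{prop:rejectBlockEncoded} to access $\Lambda:=\sum_{j\in J}\vL_j^{\dagg}\vL_j$ directly; this substitution is precisely what upgrades the $\sum_{j}\norm{\vL_j^{\dagg}\vL_j}$ dependence of~\cite{cleve2016EffLindbladianSim} to the $\norm{\sum_{j}\vL_j^{\dagg}\vL_j}\le1$ scaling. First I would segment, writing $\e^{t\CL}=(\e^{\delta\CL})^{r}$ with $r=\bigO{t+1}$ and $\delta=t/r=\bigO{1}$, so that by telescoping in the diamond norm (all maps involved being channels) it suffices to implement each $\e^{\delta\CL}$ to diamond error $\bigO{\epsilon/r}$. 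Because $\vU$ being a block-encoding of $\CL$ forces $\norm{\Lambda}\le1$ (via \autoref{prop:rejectBlockEncoded}), hence $\norm{\CL}_{1-1}\le2$, each segment is a constant-time step and the truncations below converge geometrically.

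Within one segment I would realize $\e^{\delta\CL}$ from a truncated quantum-trajectory (Duhamel) expansion
\begin{align}
\e^{\delta\CL}[\vrho]=\sum_{k=0}^{\infty}\int_{0\leq s_1\leq\cdots\leq s_k\leq\delta}\!\!\CN_{\delta-s_k}\,\CJ\,\CN_{s_k-s_{k-1}}\,\CJ\cdots\CJ\,\CN_{s_1}[\vrho]\,\mathrm{d}s_1\cdots\mathrm{d}s_k,
\end{align}
where $\CJ[\vrho]=\sum_{j\in J}\vL_j\vrho\vL_j^{\dagg}$ is the ``jump'' map and $\CN_s[\vrho]=\e^{-s\Lambda/2}\vrho\,\e^{-s\Lambda/2}$ the ``no-jump'' contraction. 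Since $\norm{\CJ}_{1-1}\le1$ and $\norm{\CN_s}_{1-1}\le1$, the $k$-th term has norm at most $\delta^k/k!$, so truncating at order $K=\bigO{\log(r/\epsilon)/\log\log(r/\epsilon)}$ discards only $\bigO{\delta^{K+1}/(K+1)!}\le\bigO{\epsilon/r}$. Each surviving Kraus operator is a product of at most $K$ jumps $\vL_{j_\ell}$ interleaved with contractions $\e^{-s\Lambda/2}$: the jump step $\sum_{j\in J}\ket{j}\otimes\vL_j$ is one call to $\vU$ (a single call covers the entire sum over $j$, just as in the weak-measurement gadget of \autoref{thm:weakMeasSim}), and each $\e^{-s\Lambda/2}$ comes from the block-encoding of $\Lambda$ of \autoref{prop:rejectBlockEncoded} (e.g.\ via a low-degree polynomial approximation of $x\mapsto\e^{-sx/2}$). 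I would then assemble the truncated expansion --- after discretizing the time integrals on a fine mesh, at only polylogarithmic overhead --- as a linear combination over the jump count and trajectory data, realized with one select/prepare pair, and restore trace preservation with $\bigO{1}$ rounds of oblivious amplitude amplification; $\bigO{1}$ rounds suffice precisely because this linear combination has subnormalization $\bigO{1}$, again by $\norm{\Lambda}\le1$.

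The delicate part, as in~\cite{cleve2016EffLindbladianSim}, is to pack an entire segment into $\bigOt{1}$ block-encoding calls using only $\bigO{(c+\log((t+1)/\epsilon))\log((t+1)/\epsilon)}$ \emph{resettable} ancilla qubits. This needs the Cleve--Wang ``compression'': reusing a single jump-index register across the $\le K$ jumps of a segment, storing the running jump count and trajectory labels compactly, and measuring-and-resetting all ancillas after each segment --- which is what makes the ancillas resettable and stops the space from growing with $r$. Summing over the $r=\bigO{t+1}$ segments gives $\bigOt{t+1}=\bigO{(t+1)\log((t+1)/\epsilon)/\log\log((t+1)/\epsilon)}$ (controlled) uses of $\vU,\vU^{\dagg}$, with the select/prepare circuitry, label arithmetic, and amplitude-amplification reflections adding $\bigO{(t+1)(c+1)\,\mathrm{polylog}((t+1)/\epsilon)}$ further two-qubit gates. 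For a \Lword{} with coherent term $-\ri[\vH,\vrho]$ and a block-encoding $\vV$ of $\vH$ (with $\norm{\vH}\le1$ after the rescaling noted in the statement), I would fold $\vH$ into the no-jump generator, replacing $\e^{-s\Lambda/2}$ by the contraction $\e^{-s(\ri\vH+\Lambda/2)}$ --- still contractive since its Hermitian part is $\Lambda/2\succeq0$ --- block-encoded from $\vV$ and the block-encoding of $\Lambda$; this costs $\bigOt{t+1}$ further uses of $\vV,\vV^{\dagg}$ and nothing else asymptotically.

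I expect the bulk of the work to be exactly this bookkeeping: simultaneously pushing the truncation and time-discretization errors below $\epsilon/r$, holding the linear-combination subnormalization at $\bigO{1}$ so amplitude amplification stays $\bigO{1}$ rounds, and executing the ancilla compression --- that is, re-running the argument of~\cite{cleve2016EffLindbladianSim} in the block-encoding input model with the $\norm{\sum_{j}\vL_j^{\dagg}\vL_j}$ scaling threaded through, while patching the gaps in their original presentation. The genuinely new ingredient --- the block-encoding of $\sum_j\vL_j^{\dagg}\vL_j$ from \autoref{prop:rejectBlockEncoded} --- is already established, so the remaining difficulty is largely one of careful re-derivation rather than a new idea.
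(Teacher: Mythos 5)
Your proposal and the paper's proof share the same ingredients—compression à la Berry et al., segmenting with oblivious amplitude amplification, exploiting $\nrm{\sum_j\vL_j^\dagg\vL_j}\le1$—but take genuinely different routes inside a segment. The paper does \emph{not} assemble a Duhamel/Dyson series: it builds a postselective weak-measurement gadget $\vC$ (\autoref{fig:postWeakMeasCircuit}) whose single application already approximates a full $\delta$-time step (jump \emph{and} no-jump together) to error $\bigO{\delta^2}$ with two calls to $\vU,\vU^\dagg$, then takes $\delta=\Theta(\epsilon/t)$ and $r=\Theta(t^2/\epsilon)$ virtual repetitions, and observes that because the initial ancilla state $\vY_\delta^{\otimes r}\ket{0^r}$ concentrates on Hamming weight $h=\bigO{\log(t/\epsilon)/\log\log(t/\epsilon)}$ and the gadget is idle on $\ket{0^{a+2}}$, only $h$ of those repetitions are actually executed. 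Your route is closer to the original Cleve--Wang LCU and is reconstructible, but as written it has a concrete complexity gap.

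The gap is in the no-jump factors. You propose to implement each $\e^{-s\Lambda/2}$ (with $\Lambda=\sum_j\vL_j^\dagg\vL_j$) via a polynomial of the block-encoding from \autoref{prop:rejectBlockEncoded}. Approximating $x\mapsto\e^{-sx/2}$ on $[0,1]$ to error $\epsilon''$ needs a polynomial of degree $\Theta(\log(1/\epsilon''))$, i.e.\ $\Theta(\log(1/\epsilon''))$ calls to $\vU,\vU^\dagg$ per no-jump factor; since each truncated trajectory of order $k$ contains $k+1$ such factors, the per-segment query count becomes $\bigO{K\log((t+1)/\epsilon)}$ rather than $\bigO{K}$, overshooting the claimed
\begin{align}
\bigO{(t+1)\frac{\log((t+1)/\epsilon)}{\log\log((t+1)/\epsilon)}}
\end{align}
by a $\log((t+1)/\epsilon)$ factor. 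To repair this you should not implement the no-jump exponentials as monolithic QSVT blocks; expand directly in $\CL=\CJ-\frac12\{\Lambda,\cdot\}$ (equivalently, Taylor-expand the $\e^{-s\Lambda/2}$'s and fold them into the same truncated series), so that each slot of the select unitary is a \emph{single} application of either the jump isometry or the $\Lambda$ block-encoding, keeping the per-segment cost at $\bigO{K}$. The paper's weak-measurement gadget avoids this bookkeeping entirely because a single application of $\vC'$ already contains both the jump amplitude $\sqrt{\delta}\,\ket{11}\sum_j\ket{j}\vL_j\ket{\psi}$ and the decay $\vI-\frac\delta2\Lambda$.

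One further caution: the paper's appendix explicitly points out (in a footnote with quotations) that the compressed verification / reflection-about-accepted-outcomes step in Cleve--Wang does not work as originally described, and supplies a modified binary-search-based compressed measurement. Your plan to ``re-run the argument of~\cite{cleve2016EffLindbladianSim}\dots while patching the gaps in their original presentation'' is therefore pointing at a real issue, but it is nontrivial and is not resolved merely by the ingredients you list; a complete proof along your route would need to reproduce an analogue of that fix.
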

Crucially, the complexity scales almost linearly with time $t$ and poly-logarithmic with the precision $\epsilon$ while using very few ancillas, representing a large asymptotic speedup compared to the $t^2/\epsilon$ complexity of the weak-measurement scheme (\autoref{thm:weakMeasSim}). 

\subsubsection{Coherent \Lword{} simulation algorithms}
\label{sec:algortihm_coherent}

With a quantum computer, we further ask for a \textit{coherent} Gibbs sampler that outputs the \textit{purified} distribution ($\ket{\sqrt{\vrho_{\beta}}} \propto \sum_i \e^{-\beta E_i/2} \ket{\psi_i} \otimes \ket{\psi_i^*}$) on two copies of the Hilbert space. A desirable coherent Gibbs sampler should satisfy the following conditions:
\begin{enumerate}
    \item A Hermitian operator $\vec{\CC}$ can be efficiently block-encoded on the duplicated Hilbert space,\label{item:coherent_block}
    \item its top-eigenvector is unique and yields the purified state $\ket{\sqrt{\vrho}}$, and\label{item:coherent_fixed}
    \item there exists an adiabatic path of operators $\vec{\CC}(s)$ whose top-eigenvalue-gap remains open. \label{item:coherent_gap}
\end{enumerate}

A general coherent Gibbs sampler may not refer to an existing \Lword{}. However, a natural candidate of the operator $\vec{\CC}$ is to take the \textit{vectorized} discriminant $\vec{\CD}(\vrho,\CL)^{\dagger}$ associated with a detailed balance \Lword{} (\autoref{defn:DB}), as how one quantizes classical Markov chains~\cite{temme2010chi}. Formally, we define \emph{vectorization} of a superoperator by\footnote{One might be tempted to use $\vB^\dagg$ instead in the vectorization, but that definition leads to inconsistencies. Indeed, if we would use $\vB^\dagg$ for vectorization then the two different representations of the scalar $1\otimes 1 = 1 = \ri \otimes -\ri$ would lead to different vectorizations $\pm 1$.}
\begin{align*}
\mathcal{C}[\cdot]=\sum_j  \alpha_j \vA_j[\cdot]\vB_j \rightarrow \vec{\mathcal{C}}=\sum_j \alpha_j \vA_j\otimes\vB^T_j \quad \text{(vectorization)},
\end{align*}
where $\vB^T_j$ denotes the transpose of the matrix $\vB_j$ in the computational basis $\ket{i}$. We use curly fonts $\CC$ for superoperators and bold fonts $\vec{\CC}$ for the vectorized superoperators (which is, a matrix).\footnote{Note that $\vec{\mathcal{C}}^\dagg$ is well defined. The (matrix) adjoint of the vectorized operator is $\sum_j \alpha_j^* \vA_j^\dagg\otimes\vB_j^{*}$. On the other hand, the superoperator adjoint $\mathcal{C}^\dagg[\cdot]$ is $\sum_j \alpha_j^* \vA_j^\dagg[\cdot]\vB_j^\dagg$~\eqref{eq:superoperator_adjoint}, whose vectorization is then the same $\sum_j \alpha_j^* \vA_j^\dagg[\cdot]\vB_j^\dagg$.\label{foot:wellDefinedAdjoint}} For a matrix $\vA$, let us denote its vectorized (or purified) version by 
\begin{align}
\ket{\vA}:= (I \otimes T^{-1})\vA \quad \text{(purification)}
\end{align}
using the ``transpose'' map $T \ket{i}=\bra{i}$. This automatically ensures the correctness of the fixed point (Point~\ref{item:coherent_fixed}). 
\begin{prop}
    For any full-rank state $\vrho$ and any \Lword{} $\CL$, we have that $\vec{\CD}(\vrho,\CL)^{\dagger}\ket{\sqrt{\vrho}}=0$. Further, if $\CL$ satisfies $\vrho$-detailed balance, we also have that $\vec{\CD}(\vrho,\CL) \ket{\sqrt{\vrho}}=0$.
\end{prop}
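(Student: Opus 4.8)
The plan is to reduce both identities to the operator-level statement of \autoref{prop:Gibbs_fixed_point} using the single elementary fact that vectorization turns superoperator action into ordinary matrix--vector multiplication. Concretely, I would first record that $\ket{\vA\vX\vB}=(\vA\otimes\vB^{T})\ket{\vX}$ for all matrices $\vA,\vB,\vX$ --- this is immediate from $T^{-1}\bra{i}=\ket{i}$ together with linearity --- and hence, summing over the Kraus-type terms, that $\vec{\CC}\ket{\vX}=\ket{\CC[\vX]}$ for every superoperator $\CC$ and every matrix $\vX$. This step is completely routine: it is just the defining compatibility of the purification map $\ket{\cdot}=(I\otimes T^{-1})(\cdot)$ with the vectorization $\vec{\CC}=\sum_j\alpha_j\vA_j\otimes\vB_j^{T}$.

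For the first claim I would then apply this with $\vX=\sqrt{\vrho}$ and $\CC=\CD(\vrho,\CL)^{\dagger}$, using footnote~\ref{foot:wellDefinedAdjoint} to identify the matrix adjoint $\vec{\CD}(\vrho,\CL)^{\dagger}$ with the vectorization of the superoperator adjoint $\CD(\vrho,\CL)^{\dagger}$, so that $\vec{\CD}(\vrho,\CL)^{\dagger}\ket{\sqrt{\vrho}}=\ket{\CD(\vrho,\CL)^{\dagger}[\sqrt{\vrho}]}$. Plugging in $\CD(\vrho,\CL)^{\dagger}=\vrho^{1/4}\CL^{\dagger}[\vrho^{-1/4}\cdot\vrho^{-1/4}]\vrho^{1/4}$ from \autoref{defn:DB} collapses the bracket to $\vrho^{1/4}\CL^{\dagger}[\vI]\vrho^{1/4}$, and since any Lindbladian generates a trace-preserving semigroup we have $\CL^{\dagger}[\vI]=0$ --- exactly the opening observation in the proof of \autoref{prop:Gibbs_fixed_point} --- so the vector vanishes.

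For the second claim, $\vrho$-detailed balance is by definition the identity $\CD(\vrho,\CL)=\CD(\vrho,\CL)^{\dagger}$ of superoperators (\autoref{defn:DB}); vectorizing gives $\vec{\CD}(\vrho,\CL)=\vec{\CD}(\vrho,\CL)^{\dagger}$ as matrices, and the previous paragraph finishes it. Equivalently, one may invoke $\CD(\vrho,\CL)[\sqrt{\vrho}]=0$ directly from \autoref{prop:Gibbs_fixed_point} and vectorize. There is no genuine obstacle in this proposition; the only thing demanding care is the convention bookkeeping --- column- versus row-stacking, the linear ``transpose'' map $T$ with $T\ket{i}=\bra{i}$, and the compatibility ``$\dagger$ then vectorize $=$ vectorize then $\dagger$'' --- which is precisely what footnote~\ref{foot:wellDefinedAdjoint} is designed to settle; once those are pinned down the whole argument is two lines of computation.
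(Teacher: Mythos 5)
Your argument is correct and coincides with the paper's (sketched) proof: the paper simply remarks that the claim "follows from a direct calculation using that any Lindbladian is trace-preserving $\CL^{\dagger}[\vI]=0$," which is exactly the content of your second and third paragraphs, filled in with the vectorization identity $\vec{\CC}\ket{\vX}=\ket{\CC[\vX]}$ and the footnote's observation that adjoint and vectorization commute.
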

The above follows from a direct calculation using that any \Lword{} is trace-preserving $\CL^{\dagger}[\vI] = 0$. However, to turn the above into the advertised efficient algorithm (\autoref{thm:D_correct}), we need two key components reminiscent of the incoherent case. First, we need a proxy for the discriminant with efficient block-encoding. As we mentioned~\eqref{eq:D_main}, we consider 
\begin{align}\label{eq:vec_main_result}
	\vec{\CD}_{\beta}&=  \sum_{ a\in A, \bomega\in S_{\omega_0}} \sqrt{\gamma(\bomega)\gamma(-\bomega)}  \hat{\vA}^a(\bomega) \otimes  \hat{\vA}^{a}(\bomega)^{*} - \frac{\gamma(\bomega)}{2} \L(\hat{\vA}^{a}(\bomega)^\dagg\hat{\vA}^a(\bomega)\otimes \vI + \vI \otimes \hat{\vA}^{a}(\bomega)^{*\dagger}\hat{\vA}^{a}(\bomega)^{*}\R)\\ 
 & \text{as a proxy for}\quad \vec{\CD}(\vrho,\CL_{\beta})^{\dagger},
\end{align}
where the $\hat{\vA}^a(\bomega)$ are the same operator Fourier Transforms as in the incoherent case~\eqref{eq:Abomegadef}; the block-encoding for $\vec{\CD}$ can be obtained given the block-encoding for $\hat{\vA}^a(\bomega)$ (\autoref{sec:D_circuit}). The map is self-adjoint $\vec{\CD}_{\beta} = \vec{\CD}_{\beta}^{\dag}$ due to Hermiticity $\vA^a = \vA^{a\dagger}$ and properties of weighted Fourier Transform (\autoref{sec:operator_FT}). More carefully, we do not implement exactly the discriminant $\vec{\CD}_{\beta}$, but merely an approximation $\vec{\CD}_{\rm impl}\approx \vec{\CD}_{\beta}$ due to additional implementation errors for the Gaussian weight and truncation errors for the Gaussian tail. 

Second, we need to formulate a notion of approximate detailed balance for the above discriminant proxy.
\begin{defn}[$\epsilon$-Discriminant proxy]\label{defn:discriminant_proxy}
    We say a Hermitian matrix $\vec{\CD}$ is an $\epsilon$-discriminant proxy for \Lword{} $\CL$ and a full-rank state $\vrho$ if
    \begin{align}
        \norm{\vec{\CD} - \vec{\CD}(\vrho,\CL)^{\dagger}}\le \epsilon.
    \end{align}
\end{defn}
Indeed, this implies approximate detailed balance $\CD(\vrho,\CL)^{\dagger} \approx \CD(\vrho,\CL)$ for the \Lword{} $\CL$ by taking the adjoints. We can think of~\autoref{defn:discriminant_proxy} as a different form of the approximate detailed balance condition (Definition~\ref{defn:ADB}) that controls the top eigenvector error up to the spectral gap (Point~\ref{item:coherent_fixed}).
\begin{restatable}[Fixed point error]{prop}{Dfixedpointerror}\label{prop:Dfixed_point_error}
Suppose a gapped Hermitian operator $\vec{\CD}$ is an $\epsilon$-discriminant proxy for a \Lword{} $\CL$ and a full-rank state $\vrho$. Then, its top eigenvector is approximately the purified state $\ket{\sqrt{\vrho}}$
\begin{align}
    \lnorm{\ket{\lambda_1(\vec{\CD})} - \ket{\sqrt{\vrho}}} \le 4\sqrt{2}\frac{\epsilon}{\lambda_{gap}(\vec{\CD})}.
\end{align}
\end{restatable}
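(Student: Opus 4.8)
The plan is to use the exact identity $\vec{\CD}(\vrho,\CL)^{\dagger}\ket{\sqrt{\vrho}}=0$ from the proposition immediately preceding the statement, so that $\ket{\sqrt{\vrho}}$ is a genuine null vector of the (non-Hermitian) true discriminant, and then treat $\vec{\CD}$ as an $\epsilon$-small Hermitian perturbation of it. Since $\vrho$ is a normalized state, $\norm{\ket{\sqrt{\vrho}}}^2=\tr\vrho=1$, so $\ket{\sqrt{\vrho}}$ is a unit vector, and \autoref{defn:discriminant_proxy} immediately gives $\norm{\vec{\CD}\ket{\sqrt{\vrho}}}=\norm{(\vec{\CD}-\vec{\CD}(\vrho,\CL)^{\dagger})\ket{\sqrt{\vrho}}}\le\epsilon$; that is, $\ket{\sqrt{\vrho}}$ is an $\epsilon$-approximate null vector of the Hermitian matrix $\vec{\CD}$. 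A gap-based (Davis--Kahan-type) stability estimate then pins $\ket{\sqrt{\vrho}}$ close to the top eigenvector $\ket{\lambda_1(\vec{\CD})}$, provided the top eigenvalue $\lambda_1(\vec{\CD})$ is itself of order $\epsilon$ -- which is the one nontrivial input, and the reason the statement is restricted to Lindbladian discriminants.

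To control $\lambda_1(\vec{\CD})$ I would decompose the true discriminant $\CD(\vrho,\CL)=\CH(\vrho,\CL)+\CA(\vrho,\CL)$ into its Hermitian and anti-Hermitian parts as in \autoref{defn:ADB}, and let $\vec{\CH},\vec{\CA}$ be their vectorizations. By the definition of vectorization in \autoref{sec:algortihm_coherent} it intertwines the superoperator adjoint with the matrix adjoint, so $\vec{\CH}$ is Hermitian, $\vec{\CA}$ is anti-Hermitian, $\vec{\CD}(\vrho,\CL)^{\dagger}=\vec{\CH}-\vec{\CA}$, and moreover $\norm{\CC}_{2-2}=\norm{\vec{\CC}}$ for every superoperator $\CC$. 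Since $\vec{\CD}$ is Hermitian, splitting $\vec{\CD}-\vec{\CD}(\vrho,\CL)^{\dagger}=(\vec{\CD}-\vec{\CH})+\vec{\CA}$ into Hermitian and anti-Hermitian parts shows that both $\norm{\vec{\CD}-\vec{\CH}}$ and $\norm{\vec{\CA}}=\norm{\CA(\vrho,\CL)}_{2-2}$ are at most $\norm{\vec{\CD}-\vec{\CD}(\vrho,\CL)^{\dagger}}\le\epsilon$. Combining Weyl's inequality with the general Lindbladian bound $\lambda_1(\CH(\vrho,\CL))\le\norm{\CA(\vrho,\CL)}_{2-2}$ recalled in \autoref{sec:L_ADB} (see~\eqref{eq:topEigenvalueBound}) and the fact that vectorization preserves the spectrum, I obtain
\begin{align}
\lambda_1(\vec{\CD})\le\lambda_1(\vec{\CH})+\norm{\vec{\CD}-\vec{\CH}}=\lambda_1(\CH(\vrho,\CL))+\norm{\vec{\CD}-\vec{\CH}}\le 2\epsilon.
\end{align}

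With this in hand the endgame is elementary. We may assume $\lambda_{gap}(\vec{\CD})>4\epsilon$, since otherwise the asserted bound already exceeds $\sqrt{2}$, which trivially upper-bounds $\norm{\ket{\lambda_1(\vec{\CD})}-\ket{\sqrt{\vrho}}}$ once the global phase of the eigenvector is chosen so that $\langle\lambda_1(\vec{\CD})|\sqrt{\vrho}\rangle$ is real and nonnegative. In that regime every eigenvalue $\lambda_k$ of $\vec{\CD}$ below the top satisfies $\lambda_k\le\lambda_1(\vec{\CD})-\lambda_{gap}(\vec{\CD})\le 2\epsilon-\lambda_{gap}(\vec{\CD})<-\tfrac12\lambda_{gap}(\vec{\CD})$, hence $|\lambda_k|>\tfrac12\lambda_{gap}(\vec{\CD})$. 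Writing $\ket{\sqrt{\vrho}}=c_1\ket{\lambda_1(\vec{\CD})}+\ket{w}$ with $\ket{w}\perp\ket{\lambda_1(\vec{\CD})}$ and $c_1\ge 0$, and noting $\vec{\CD}\ket{w}\perp\ket{\lambda_1(\vec{\CD})}$, we get $\epsilon^2\ge\norm{\vec{\CD}\ket{\sqrt{\vrho}}}^2\ge\norm{\vec{\CD}\ket{w}}^2\ge\tfrac14\lambda_{gap}(\vec{\CD})^2\norm{\ket{w}}^2$, so $\norm{\ket{w}}\le 2\epsilon/\lambda_{gap}(\vec{\CD})$, and finally $\norm{\ket{\lambda_1(\vec{\CD})}-\ket{\sqrt{\vrho}}}^2=2(1-c_1)\le 2(1-c_1^2)=2\norm{\ket{w}}^2\le 8\epsilon^2/\lambda_{gap}(\vec{\CD})^2$, which yields the claimed estimate (with room to spare on the numerical constant).

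The main obstacle is the middle step: upgrading ``$\ket{\sqrt{\vrho}}$ is an approximate null vector of $\vec{\CD}$'' to ``$\ket{\sqrt{\vrho}}$ is close to the \emph{top} eigenvector of $\vec{\CD}$'', which is false for a general Hermitian $\vec{\CD}$ and genuinely uses dissipativity of $\CL$, packaged in the inequality $\lambda_1(\CH)\le\norm{\CA}_{2-2}$. The remaining ingredients -- unit normalization of $\ket{\sqrt{\vrho}}$, the vectorization/norm dictionary, and the phase-optimized comparison of two unit vectors -- are routine.
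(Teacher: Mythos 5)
Your proof is correct, and it reaches the claimed inequality with room to spare (you get $2\sqrt{2}$ rather than $4\sqrt{2}$). It also differs from the argument the paper is set up to give. The paper's machinery points to applying the general non-Hermitian eigenvector perturbation bound \autoref{prop:eigenvector_perturb_gen} with $\vM=\vec{\CD}$, $\ket{v}=\ket{\lambda_1(\vec{\CD})}$, $\lambda=\lambda_1(\vec{\CD})$, $\vM+\vA=\vec{\CD}(\vrho,\CL)^{\dagger}$, $\ket{v'}=\ket{\sqrt{\vrho}}$, $\lambda'=0$; then $\varsigma_{-2}(\vM-\lambda\vI)=\lambda_{gap}(\vec{\CD})$ by Hermiticity of $\vec{\CD}$, and one feeds in $\nrm{\vA}\le\epsilon$ together with a Bauer--Fike bound $|\lambda_1(\vec{\CD})|\le\epsilon$ (using that $\mathrm{Spec}(\vec{\CD}(\vrho,\CL)^{\dagger})=\mathrm{Spec}(\CL)^{*}$ lies in the closed left half-plane). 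This gives exactly $2\sqrt{2}(\epsilon+\epsilon)/\lambda_{gap}(\vec{\CD})=4\sqrt{2}\,\epsilon/\lambda_{gap}(\vec{\CD})$, explaining the constant in the statement.

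You instead run an elementary spectral-decomposition argument that exploits the Hermiticity of $\vec{\CD}$ directly, bypassing \autoref{prop:eigenvector_perturb_gen} entirely: decompose $\ket{\sqrt{\vrho}}$ in the eigenbasis of $\vec{\CD}$, use $\nrm{\vec{\CD}\ket{\sqrt{\vrho}}}\le\epsilon$, and bound the residual weight by the gap. The one nontrivial ingredient is the same in both routes — the dissipativity of $\CL$, i.e., that $\mathrm{Spec}(\CD(\vrho,\CL))$ has nonpositive real part — which you invoke via \eqref{eq:topEigenvalueBound} and Weyl to get $\lambda_1(\vec{\CD})\le 2\epsilon$. (Your intermediate splitting of $\vec{\CD}-\vec{\CD}(\vrho,\CL)^{\dagger}$ into Hermitian and anti-Hermitian pieces of norm $\le\epsilon$ each is valid because the Hermitian/anti-Hermitian projection is a contraction in operator norm.) The upshot: your proof is more self-contained and slightly sharper, while the paper's route recycles the general perturbation lemma that it needs elsewhere for the non-Hermitian $\CL_{sec}$ case; both are legitimate.
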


Here, the gap dependence naturally arises from eigenvalue (\autoref{cor:eigenvalue_perturb}) and eigenvector perturbation (\autoref{prop:eigenvector_perturb_gen}) arguments. Unlike \Lword{}s, the cost for preparing the coherent Gibbs state scales directly with the gap via quantum simulated annealing; the mixing time of the original \Lword{} is not linked directly to the algorithmic cost. 

Thirdly, the block-encoding by itself does not prepare the desired top eigenvector, unlike a \Lword{}; this additionally requires a standard subroutine called \textit{quantum simulated annealing}~\cite{Wocjan_2008_quantum_sampling,yung2010QuantumQuantumMetropolis}: adiabatically change the inverse temperature from $\beta' = 0 \rightarrow \beta' = \beta$. The algorithmic cost is associated with the gaps along the adiabatic path~\cite{boixo2010QAlgTraversingEigStatePaths}; see~\autoref{sec:simulated_annealing}.

\subsection{Explicit block-encodings}\label{sec:block_encodings}
In this section, we lay out the circuit ingredients to construct the advertised \Lword{}s $\CL_\beta$ and discriminants~$\vec{\CD}_\beta$. 
First, we show how to construct a block-encoding of the discretized \Lword{} with Lindblad operators \eqref{eq:DiscAMain} from a block-encoding of the jump operators $\vA^a$.
Then, we further construct a block-encoding of the corresponding discriminant proxy - with the additional assumption that the set of jump operators is self-adjoint ($\{\vA^a\colon a\in A\}=\{\vA^{a\dagg}\colon a\in A\}$) and the Fourier weight function $f$ is real. It is not surprising that implementing the discriminant proxy requires more symmetry constraints, as its definition already implicitly draws from these symmetries.

\subsubsection{Block-encoding $\CL_\beta$}
\label{sec:L_circuit}
For both incoherent algorithms (\autoref{thm:weakMeasSim},\autoref{thm:LCUSim}), we have assumed that a purely-irreversible \Lword{} $\CL[\cdot]=\sum_{j\in J} \vL_j[\cdot]\vL_j^\dagg + \{\vL_j^\dagg \vL_j,\cdot\}$ is given by a unitary block-encoding $\vU$ (\autoref{defn:blockLindladian}). Here, we construct a block-encoding unitary $\vU$ for the advertised Lindblad operators $\sqrt{\gamma(\bomega)}\hat{\vA}^a(\bomega)$ labeled by $a, \bomega$:
\begin{align}
    \sum_{a \in A,\bomega \in S_{\omega_0}}\sqrt{\gamma(\bomega)} \ket{\bomega} \otimes \ket{a}\otimes \hat{\vA}^a(\bomega)
\end{align}
for the advertised \Lword{} (discretization of~\eqref{eq:mainL}, recap of~\eqref{eq:LbetaAgain}) .
 \begin{align} 
 	\CL_{\beta}:=\sum_{a\in A, \bomega\in S_{\omega_0}} \gamma(\bomega) \L( \hat{\vA}^{a}(\bomega)[\cdot] \hat{\vA}^{a}(\bomega)^{\dagger} -\frac{1}{2} \{\hat{\vA}^{a}(\bomega)^{\dagger} \hat{\vA}^{a}(\bomega) ,\cdot \}\R) .
 \end{align}

We begin by laying out the registers explicitly, including the additional ancillae for block-encoding.
\begin{align}
\text{registers:}\quad \underset{\text{Boltz. weight}}{\underbrace{\ket{0}}}\otimes \underset{\text{Bohr freq.}}{\underbrace{\ket{\bomega}}} \otimes \underset{\text{block. enc. anc.}}{\underbrace{\ket{0^b}}} \otimes\underset{\text{jump}}{\underbrace{\ket{a}}}\otimes \underset{\text{system}}{\underbrace{\hat{\vA}^a(\bomega)}}
\end{align}

From right to left, the registers individually correspond to: the physical system of interest; the jump labels $\ket{a}$ and additional ancillae to accommodate block-encoding access (indicating successful application by the all-zero state $\ket{0^b}$); the frequency register is dedicated to the operator Fourier Transform, storing the weight $\ket{f}$ or the Bohr frequencies $\ket{\bomega}$; finally, an ancilla qubit for storing the Bohr-frequency dependent Boltzmann weights in the amplitudes $\sqrt{\gamma(\bomega)}\ket{0}+ \sqrt{1-\gamma(\bomega)}\ket{1}$.

We specify the discrete Fourier Transform parameters that determine the dimension of the Bohr frequency register as follows. The Fourier frequencies $\bomega$ and times $\bt$ are integer multiples of $\omega_0$ and $t_0$ respectively such that 
\begin{align} 
    \omega_0 t_0  = \frac{2\pi}{N}, \quad \text{and} \quad 
    S^{\lceil N \rfloor} &:= \bigg\{ -\left\lceil(N-1)/2\right\rceil, \ldots, -1,0,1,\ldots, \left\lfloor(N-1)/2\right\rfloor \bigg\},\\ 
    \text{and}\quad S^{\lceil N \rfloor}_{\omega_0}&:= \omega_0 \cdot S^{\lceil N \rfloor}, \quad S^{\lceil N \rfloor}_{t_0}:= t_0 \cdot S^{\lceil N \rfloor}.
\end{align} 
We use a ``bar'' to denote discretized variables; the (Bohr) frequency register takes values $\bomega \in S_{\omega_0}$. To implement the Fourier Transform when $N=2^n$, we specify the signed binary representation for the integers $S^{\lceil N \rfloor}$ as follows:
\begin{align}
    10^{n-1},\ldots, 1^{n} \quad &\text{for each}\quad -N/2, \ldots, -1,\\
    0^n,\ldots, 01^{n-1} \quad &\text{for each}\quad  0, 1, \ldots, N/2-1.
\end{align}
Where it does not cause confusion we will drop $\lceil N \rfloor$ from the superscript and will simply write $\bomega \in S_{\omega_0}$ and $\bt \in S_{t_0}$.
 We will set the value of $N$ and $\omega_0$ such that the Bohr frequencies $B$ = spec$(\vH)$ $-$ spec$(\vH)$ are contained within the range of energies
\begin{align}
\norm{\vH} \leq \frac{N}{2}\omega_0.
\end{align}

 The only ``physical'' energy scale in the above is the Hamiltonian strength $\norm{\vH}$. Indeed, setting the scale to $N$ requires only $\log(N)$ qubits in the readout register.
\begin{figure}[t]

\begin{center}
	\newcommand{\scalea}{1.2}
	\newcommand{\scaleb}{1.5}
	\begin{quantikz}[wire types={q,b,b,b,b},classical	gap=1mm]
	 	\lstick{\scalebox{\scalea}{$\ket{0}$}}		&\qw &	\qw		&\qw	&\qw				&\qw &\gate[style={inner xsep=0mm, inner ysep=1mm}]{\scalebox{\scalea}{$\vY_{1-\gamma(\bomega)}$}}&\qw \rstick{\scalebox{\scalea}{$\ket{0}$}}\\		
	 	\lstick{\scalebox{\scalea}{$\ket{\bar{0}}$}}&\gate[style={inner xsep=2mm, inner ysep=2mm}]{\scalebox{\scalea}{$\vec{Prep}$}} \qw&\ctrl{3}	&\qw	&\ctrl{3}			&\gate[style={inner xsep=2mm, inner ysep=2mm}]{\scalebox{\scalea}{\textbf{QFT}}}\qw & \ctrl{-1}&\qw \rstick{\scalebox{\scalea}{$\ket{\bomega}$}}\\
	 	\lstick{\scalebox{\scalea}{$\ket{0^b}$}}		&\qw&\qw	&\gate[3,style={inner xsep=1mm, inner ysep=2mm}]{\scalebox{\scaleb}{$\vV_{jp}$}}	&\qw	&\qw 	&\qw	&\rstick{\scalebox{\scalea}{$\ket{0^b}$}} \qw \\[-1mm]
	 	\lstick{\kern-2mm\scalebox{\scalea}{$\ket{0^{c-b}}$}}	&\qw&\qw	& 					&\qw		&\qw 			& \qw & \qw\rstick{\scalebox{\scalea}{$\ket{a}$}} \\[2mm]
	 	\lstick{\scalebox{\scalea}{$\vrho$}}		&\qw&\gate[style={inner xsep=0mm, inner ysep=2mm}]{\scalebox{\scaleb}{$\e^{-\ri\vH \bar{t}}$}}\qw	&	\qw	&\gate[style={inner xsep=2mm, inner ysep=2mm}]{\scalebox{\scaleb}{$\e^{\ri\vH \bar{t}}$}}\qw		&\qw		&\qw		& \qw \rstick{\scalebox{\scalea}{$\gamma(\bomega)\hat{\vA}^a(\bomega)\vrho\hat{\vA}^a(\bomega)^\dagg$}}
	 \end{quantikz}
\end{center} 
  \caption{Circuit $\vU$ for block-encoding the \Lword{}. Practically, if we use the simpler weak-measurement-based simulation (\autoref{thm:weakMeasSim}), then by~\autoref{cor:rndWeakMeasSim}, we can use a single randomly chosen Lindblad operator $\vA^a$ at a time. Moreover, if $\vA^a$ is unitary, we can simply replace $\vV_{jump}$ with $\vA^a$, implying $b=c=0$, i.e., the third and the forth registers can be omitted, thus $n+\lceil\log(N)\rceil+2$ qubits suffice to simulate the \Lword{} $\e^{\CL t }$}.\label{fig:L_circuit}
\end{figure}
The circuit consists of the following ingredients, each acting on some appropriate subset of the registers:
\begin{itemize}
	\item Block-encoding $\vV_{jump}$ of the jump operators $\vA^a$ in the form of~\autoref{defn:blockLindladian}:
	\begin{align}\label{eq:blockJumps}
		(\bra{0^b}\otimes \vI_{a}\otimes \vI_{sys})]\cdot\vV_{jump}\cdot (\ket{0^c}\otimes \vI_{sys})=\sum_{a\in A} \ket{a} \otimes \vA^a.
	\end{align}
The operators $\vA^a$ need not be self-adjoint nor proportional to a unitary. Still, one may conveniently choose $\sqrt{\labs{A}} \vA^a$ to be unitary for all $a \in A$, (e.g., few-body unitary operators). Then, we can set $b=0$ and choose 
	\begin{align}\label{eq:exampleJumps}
	\vV_{jump} = \L(\sum_{a \in A} \ketbra{a}{a} \otimes \vA^a \R)\cdot ( \vB \otimes \vI_{sys})\quad \text{where}
		\quad \vB\ket{0^c} = \sum_{a \in A}\frac{\ket{a}}{\sqrt{\labs{A}}}. 
	\end{align}
	Note that implementing the \Lword{} does not require the set of jump operators to contain the adjoints $\{\vA^a\colon a\in A\}=\{\vA^{a\dagg}\colon a\in A\}$;
	this assumption is only used for approximate detailed balance and the fixed point correctness (\autoref{thm:L_correctness}).
    \item Controlled Hamiltonian simulation
    \begin{align}
	\sum_{\bt \in S_{t_0}}\ketbra{\bt}{\bt}\otimes e^{\pm \ri \bt \vH}.
\end{align}
    \item Quantum Fourier Transform
    \begin{align}
        \vec{QFT}_N: \ket{\bt} \rightarrow \frac{1}{\sqrt{N}} \sum_{\bomega \in S_{\omega_0}} \e^{-\ri \bomega \bt}\ket{\bomega}.
    \end{align}
        \item State preparation unitary for the Fourier Transform weights, acting on the frequency register
        \begin{align}
            \vec{Prep}_{f}\quad \text{such that}\quad \vec{Prep}_{f}\ket{\bar{0}} = \ket{f}. 
        \end{align}
        Naturally, the weight $f(\bt)$ as amplitudes of 
 a state is normalized
	\begin{align}
	\sum_{\bt \in S_{t_0}} \labs{f(\bt)}^2 = 1.
	\end{align}
        It could be, e.g., an easily preparable step function or a Gaussian whose tail decays rapidly. Gaussian states are attractive because they are relatively easy to prepare~\cite{McArdle_2022quantumstate}, but as a matter of fact, any other so-called \emph{window function} could be used, such as the Kaiser-window~\cite{berry2022QuantTopologicalData,McArdle_2022quantumstate} potentially providing further overhead improvements. 
	\item Controlled filter for the Boltzmann factors acting on the frequency register and the Boltzmann weight register
	\begin{align}
	\vW := \sum_{\bomega\in S_{\omega_0}}  \vY_{1-\gamma(\bomega)}\otimes \ketbra{\bomega}{\bomega} 
	\quad &\text{where} \quad 0 \le \gamma(\bomega) \le 1 \quad\text{and}\quad \gamma(\bomega) = \gamma(-\bomega)\e^{-\beta\bomega}. 
	\end{align}
	The constraint $0\le\gamma(\bomega)\le 1$ ensures the matrix $\vY_{1-\gamma(\bomega)}$ is unitary; the symmetry (i.e., the KMS condition) $\gamma(\bomega) = \gamma(-\bomega)\e^{-\beta\bomega}$ gives lower weights for ``heating'' transitions and is closely related to the detailed balance condition. Important examples of weight functions are 
 \begin{align}
     \text{(Metropolis)}\quad\gamma(\bomega)=\min(1, \e^{-\beta \bomega})\quad \text{and}\quad\text{(Glauber)}\quad\gamma(\bomega) = \frac{1}{\e^{\beta \bomega}+1},
 \end{align}
 which both reduce to the step function in the $\beta \rightarrow \infty$ limit. Note that the range of energy labels $\bomega$ is finite; we choose a large energy readout range $N \omega_0 \ge 4\norm{\vH}+ \frac{2}{\beta}$ to ensure all possible transitions are covered by the discretization range (after secular approximation $\mu \le \frac{1}{\beta}$). We may generally synthesize the controlled filter from elementary gates at cost\footnote{By first applying controlled Hamiltonian simulation for $\vY$ rotation $\sum_{\bomega} \e^{\ri (\bomega/2\norm{\vH})\vY} \otimes \ketbra{\bomega}{\bomega}$ and then apply QSVT to map $\bomega/2\norm{\vH} \rightarrow \arcsin(\sqrt{1-\gamma(\bomega)})$. This is reminiscent of~\cite{haah2018ProdDecPerFuncQSignPRoc}.}
\begin{align}
    \text{ (polynomial degree of $\gamma$)} \times \poly (\log (N), \log(1/\epsilon)), 
\end{align}
which is $\tCO(1+\beta \norm{\vH})$ for the Glauber weight. For the Metropolis weight, one can achieve the same scaling by manually switching between $1$ and $\e^{-\beta \bomega}$ at $\bomega =0$.  In principle, since we are merely controlling a qubit, we may directly implement any efficient computable function (perhaps with terrible overhead).
\end{itemize}

Further, combining the controlled Hamiltonian simulation and Quantum Fourier Transform yields the advertised operator Fourier Transform (\autoref{fig:OQFT}) acting on the frequency and system register \footnote{If the operator $\vO$ maps between larger Hilbert spaces than $\vH$, we formally extend its action trivially so that $\vO(t)= (\vI\otimes\e^{\ri \vH t}) \vO (\vI'\otimes\e^{-\ri \vH t})$. Of course, this formal extension does not incur any additional cost.\label{foot:OFTExtension}}
	\begin{align}
	\CF[\cdot]: \ket{f(\bt)}\otimes \vO \rightarrow \sum_{\bomega \in S_{\omega_0}}\ket{\bomega} \otimes \hat{\vO}_{f}(\bomega) \quad \text{where} \quad \hat{\vO}_{f}(\bomega): = \frac{1}{\sqrt{N}}\sum_{\bt \in S_{t_0}} \e^{-\ri \bomega \bt}f(\bt)\vO(\bt)  \quad \text{and} \quad \vO(t):= \e^{\ri \vH t} \vO \e^{-\ri \vH t}.
	\end{align}
	See \autoref{sec:operator_FT} for basic properties of the operator Fourier Transform. Our implementation is inspired by, but differs from \cite{wocjan2021szegedy}; they sandwich the jump operators with phase estimation and its inverse. The operator picture, inspired by physics, is more natural and tangible. Our construction allows for flexibility in the choice of the weight function $f(\bt)$. 

As shown in \autoref{fig:L_circuit}, we assemble the above ingredients to obtain the unitary $\vU$ such that 
\begin{align}
	\vU(\vI \otimes \ket{\bar{0}}\otimes \vI_{\vV})& = (\vW\otimes \vI_{\vV}) \cdot \L(\vI \otimes \CF[\vec{Prep}_{f}\ket{\bar{0}}\otimes \vV_{jump}]  \R),
 \label{eq:unitary_for_L_block_encoding}
\end{align} 
where $\vI$ is the single qubit identity and $\vI_{\vV}$ is the identity on the registers on which $\vV_{jump}$ acts. As described in Footnote~\ref{foot:OFTExtension}, intuitively speaking, we \textit{only} apply the operator Fourier Transform on the ``system'' register; indeed, in the corresponding circuit (\autoref{fig:OQFT}), the jump label register $\ket{a}$ and the block-encoding ancillae~$\ket{0^b}$ is only affected by $\vV_{jump}$.
\begin{figure}[t]
	\begin{quantikz}[wire types={b,b},classical	gap=1mm]
	 	\lstick{\scalebox{1.2}{\ket{\bar{0}}}}	&\gate[style={inner xsep=2mm, inner ysep=2mm}]{\scalebox{1.2}{$\vec{Prep}$}}&\ctrl{1}	&\qw			&\ctrl{1}				&\gate[style={inner xsep=2mm, inner ysep=2mm}]{\scalebox{1.2}{$\vec{QFT}$}} & \qw \rstick{\scalebox{1.2}{$\ket{\bomega}$}}\\
	 	\lstick{\scalebox{1.2}{$\vrho$}}		&\qw&\gate[style={inner xsep=0mm, inner ysep=2mm}]{\scalebox{1.5}{$\e^{-\ri\vH \bar{t}}$}} &\gate[style={inner xsep=2mm, inner ysep=2.25mm}]{\scalebox{1.5}{$\vO$}}&\gate[style={inner xsep=1mm, inner ysep=2mm}]{\scalebox{1.5}{$\e^{\ri\vH \bar{t}}$}}	&\qw		& \qw \rstick{\scalebox{1.2}{$\vO_{\bomega}\vrho(\vO_{\bomega})^\dagg$}}
	 \end{quantikz}
  \caption{Circuit for operator Fourier Transform $\CF$ for an operator $\vO$ acting on the system $\vrho$. Of course, in our use, the operator may also act nontrivially on other ancillas.
  }\label{fig:OQFT}
\end{figure}
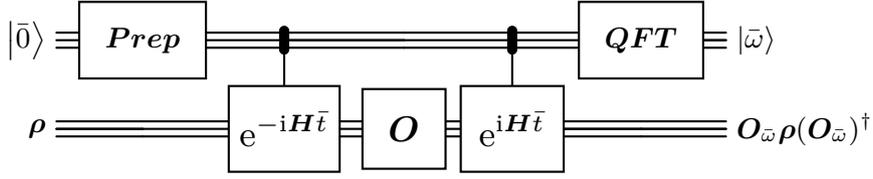

\begin{lem}[Explicit Block-encoding]\label{lem:block_encoding_L}
The untiary $\vU$ in~\eqref{eq:unitary_for_L_block_encoding} gives a block-encoding for our \Lword{}~\eqref{eq:mainL}
    \begin{align}
        \L(\bra{0}\otimes\vI_{\bomega}\otimes \bra{0^b}\otimes \vI_{a} \otimes \vI_{sys}\R)\cdot \vU\cdot \L(\ket{0}\otimes\ket{\bar{0}}\otimes\ket{0^c}\otimes \vI_{sys}\R) =& \sum_{a \in A,\bomega \in S_{\omega_0}}\sqrt{\gamma(\bomega)} \ket{\bomega} \otimes \ket{a}\otimes \hat{\vA}_{f}^a(\bomega).\label{eq:blockEncodedLindblad}
    \end{align}
\end{lem}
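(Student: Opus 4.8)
The plan is to verify the claimed block-encoding identity \eqref{eq:blockEncodedLindblad} directly, by pushing an arbitrary input state $\ket{0}\otimes\ket{\bar 0}\otimes\ket{0^c}\otimes\ket{\psi}$ through the circuit of \autoref{fig:L_circuit} and reading off the amplitude on the flag subspace $\bra{0}\otimes\bra{0^b}$. First I would record that $\vU$ is, by definition, the ordered product of its unitary constituents — $\vec{Prep}_f$ on the frequency register, the controlled evolution $\sum_{\bt\in S_{t_0}}\ketbra{\bt}{\bt}\otimes\e^{-\ri\bt\vH}$, the jump block-encoding $\vV_{jump}$, the controlled evolution $\sum_{\bt\in S_{t_0}}\ketbra{\bt}{\bt}\otimes\e^{+\ri\bt\vH}$, $\vec{QFT}_N$, and finally $\vW$ — so that $\vU$ is manifestly a unitary built from one query to each advertised ingredient. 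Applying it: $\vec{Prep}_f$ prepares $\ket{f}=\sum_{\bt}f(\bt)\ket{\bt}$; the first controlled evolution gives $\sum_{\bt}f(\bt)\ket{\bt}\otimes\ket{0^c}\otimes\e^{-\ri\bt\vH}\ket{\psi}$; and $\vV_{jump}$, via \autoref{defn:blockLindladian}, maps $\ket{0^c}\otimes\e^{-\ri\bt\vH}\ket{\psi}$ to $\ket{0^b}\otimes\sum_{a\in A}\ket{a}\otimes\vA^a\e^{-\ri\bt\vH}\ket{\psi}$ plus a remainder lying entirely in the orthogonal complement of $\ketbra{0^b}{0^b}\otimes\vI$ inside the block-encoding ancilla.

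The key structural observation is that the block-encoding ancilla ($\ket{0^b}$) is touched only by $\vV_{jump}$, and the Boltzmann qubit only by $\vW$. Hence the final projectors $\bra{0^b}$ and $\bra{0}$ commute through every subsequent gate (the second controlled evolution acts as $\vI_{a,\mathrm{block}}\otimes\e^{+\ri\bt\vH}$ on the system factor, per Footnote~\ref{foot:OFTExtension}, and $\vec{QFT}_N$, $\vW$ act only on the frequency/Boltzmann registers). I would therefore apply $\bra{0^b}$ immediately after $\vV_{jump}$, which discards the orthogonal remainder with no further analysis, leaving $\sum_{\bt}f(\bt)\ket{\bt}\otimes\sum_{a}\ket{a}\otimes\vA^a\e^{-\ri\bt\vH}\ket{\psi}$. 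The second controlled evolution conjugates each jump into the Heisenberg picture, $\e^{\ri\bt\vH}\vA^a\e^{-\ri\bt\vH}=\vA^a(\bt)$; then $\vec{QFT}_N\colon\ket{\bt}\mapsto\frac1{\sqrt N}\sum_{\bomega\in S_{\omega_0}}\e^{-\ri\bomega\bt}\ket{\bomega}$ so that summing over $\bt$ assembles precisely $\hat{\vA}^a_f(\bomega)=\frac1{\sqrt N}\sum_{\bt}\e^{-\ri\bomega\bt}f(\bt)\vA^a(\bt)$ in each $\ket{\bomega}$ branch; and finally $\bra{0}\vW\ket{0}=\sum_{\bomega}\sqrt{\gamma(\bomega)}\,\ketbra{\bomega}{\bomega}$ since $\vY_{1-\gamma(\bomega)}\ket{0}=\sqrt{\gamma(\bomega)}\ket{0}+\sqrt{1-\gamma(\bomega)}\ket{1}$. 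Collecting terms gives $\sum_{a\in A,\,\bomega\in S_{\omega_0}}\sqrt{\gamma(\bomega)}\,\ket{\bomega}\otimes\ket{a}\otimes\hat{\vA}^a_f(\bomega)\ket{\psi}$, and since $\ket{\psi}$ was arbitrary this is \eqref{eq:blockEncodedLindblad}.

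I expect no genuine obstacle here: the argument is essentially disciplined bookkeeping, and the only two points needing care are (i) keeping track of which register each gate acts on, so that the $\bra{0^b}$ projection may legitimately be pulled back to just after $\vV_{jump}$ — this is exactly what makes the imperfectness of the block-encoding $\vV_{jump}$ harmless — and (ii) matching normalization conventions, namely that the $1/\sqrt N$ in $\vec{QFT}_N$ coincides with the one in the definition of the discrete operator Fourier transform $\CF$ appearing in the statement. I would also note explicitly that \eqref{eq:blockEncodedLindblad} is the finite-$N$ statement with the discrete transforms $\hat{\vA}^a_f(\bomega)$, and that its passage to the continuum generator \eqref{eq:mainL} is deferred to \autoref{apx:cont_limit}.
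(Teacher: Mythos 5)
Your proof is correct and is essentially the paper's argument, just unrolled at the gate level: the paper works directly with the factored form \eqref{eq:unitary_for_L_block_encoding} and invokes the operator Fourier transform $\CF$ as an already-packaged subroutine (so the controlled-$\e^{\mp\ri\vH\bt}$ and $\vec{QFT}$ steps are absorbed into one symbol), whereas you expand $\CF$ into its constituent gates and push a state through term by term. The one substantive step — that $\bra{0^b}$ may be commuted back to sit immediately after $\vV_{jump}$ because no later gate touches the block-encoding ancilla, and similarly $\bra{0}$ past everything except $\vW$ — is identical in both, as is the reading-off of $\bra{0}\vY_{1-\gamma(\bomega)}\ket{0}=\sqrt{\gamma(\bomega)}$ and the $1/\sqrt N$ normalization of $\vec{QFT}_N$ matching the one in $\hat{\vA}^a_f(\bomega)$.
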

\begin{proof}
\begin{align}
	&(\text{LHS of }\eqref{eq:blockEncodedLindblad})\\ 
 &= \L(\bra{0}\otimes\vI_{\bomega}\otimes \bra{0^b}\otimes \vI_{a} \otimes \vI_{sys}\R)\cdot (\vW\otimes \vI_{\vV}) \cdot \L(\ket{0}\otimes \CF[\ket{f}\otimes\vV_{jump}] \L(\ket{0^c}\otimes \vI_{sys}\R)  \R) \tag*{(by \eqref{eq:unitary_for_L_block_encoding})}\\&
	=((\bra{0}\otimes\vI_{\bomega})\vW(\ket{0}\otimes\vI_{\bomega})\otimes \vI_{a} \otimes \vI_{sys}) \cdot \L(\L(\vI_{\bomega}\otimes\bra{0^b}\otimes \vI_{a} \otimes \vI_{sys}\R) \CF[\ket{f}\otimes\vV_{jump}] \L(\ket{0^c}\otimes \vI_{sys}\R)  \R)\\&	
	=\sum_{a\in A, \bomega \in S_{\omega_0}}(\bra{0} \vY_{1-\gamma(\bomega)} \ket{0}\otimes\ketbra{\bomega}{\bomega}\otimes \vI_{a} \otimes \vI_{sys}) \cdot \L(\CF[\ket{f}\otimes\ket{a} \otimes \vA^a]\R)\tag*{(by \eqref{eq:blockJumps} and \autoref{fig:OQFT})}\\&
	= \sum_{a \in A,\bomega \in S_{\omega_0}}\sqrt{\gamma(\bomega)} \ket{\bomega} \otimes \ket{a}\otimes \hat{\vA}_{f}^a(\bomega).\tag*{(by OFT and controlled filter)\qedhere}	
\end{align}
\end{proof}

\subsubsection{Block-encoding $\vec{\CD}_{\beta}$}
\label{sec:D_circuit}
We now describe the explicit and efficient circuit that implements the advertised vectorized discriminant proxy (discretization of~\eqref{eq:D_main}, recap of~\eqref{eq:vec_main_result})
\begin{align}
	\vec{\CD}_{\beta}&= \sum_{a \in A, \bomega\in S_{\omega_0}} \sqrt{\gamma(\bomega)\gamma(-\bomega)}  \hat{\vA}^a(\bomega) \otimes  \hat{\vA}^{a}(\bomega)^{*} - \frac{\gamma(\bomega)}{2} \L(\hat{\vA}^{a}(\bomega)^\dagg\hat{\vA}^a(\bomega)\otimes \vI + \vI \otimes \hat{\vA}^{a}(\bomega)^{*\dagger}\hat{\vA}^{a}(\bomega)^{*}\R), 
\end{align}
assuming that the set of jump operators is self-adjoint in the sense that $\{\vA^a\colon a\in A\}=\{\vA^{a\dagg}\colon a\in A\}$ and the Fourier weight function $f$ is real.
Combining the circuit with simulated annealing then leads to the advertised quadratic speedup.

Our discriminant proxy and its block-encoding is an instantiation of the following general construct that is self-adjoint as a superoperator (and hence Hermitian after vectorization). 

\begin{prop}[Self-adjoint discriminant proxies\footnote{If we additionally have $\sum_{j\in J}\frac12\vL_j\otimes \vL_{j'}^{*\dagg} + \frac12\vL_{j'}^{\dagg}\otimes \vL_{j}^{*}\approx\sum_{j\in J}\L(\vrho^{-\frac14}\otimes (\vrho^{*})^{-\frac14}\R)\vL_j\otimes \vL_{j}^{*}\L(\vrho^{\frac14}\otimes (\vrho^{*})^{\frac14}\R)$ and $\sum_{j\in J}\vL_j^\dagg \vL_j\approx\sum_{j\in J} \vrho^{\frac14}\vL_j^\dagg \vL_j\vrho^{-\frac14}$, then the resulting discriminant proxy $\vec{\CD}$ is close to the discriminant $\vec{\CD}(\vrho,\CL)^{\dagger}$. This is exactly what we show in \ref{sec:secular} for the discriminant proxy \eqref{eq:vec_main_result}.}]\label{prop:genDiscProxy}
	Given a purely irreversible \Lword{} with Lindblad operators $\vL_j$ for $j\in J$, and a permutation $\vP\colon j\rightarrow j'$ on the set $J$, the following superoperator  (and hence its vectorization)
	is self-adjoint:
	 \begin{align}\label{eq:generalDiscProxy}
		\CD(\vP,\{\vL_j\})&:=\frac{1}{2}\sum_{j\in J} \vL_j[\cdot]\vL_{j'}^{\dagg} +\vL_{j}^{\dagg}[\cdot]\vL_{j'}-\{\vL_j^\dagg \vL_j, \cdot\},\\ 
		\vec{\CD} (\vP,\{\vL_j\})&=\frac{1}{2}\sum_{j\in J} \vL_j\otimes \vL_{j'}^{*\dagg} +\vL_{j}^{\dagg}\otimes \vL_{j'}^{*}- \vL_j^\dagg \vL_j \otimes \vI-\vI\otimes\vL_j^{*\dagg} \vL_j^*.\label{eq:generalDiscProxyVec}
	\end{align}
\end{prop}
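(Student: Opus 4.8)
The plan is to prove \autoref{prop:genDiscProxy} by two elementary computations: first, that the superoperator $\CD(\vP,\{\vL_j\})$ of \eqref{eq:generalDiscProxy} coincides with its own Hilbert--Schmidt adjoint; and second, that applying the vectorization rule $\vA[\cdot]\vB\mapsto\vA\otimes\vB^{T}$ termwise produces the matrix in \eqref{eq:generalDiscProxyVec}. Hermiticity of $\vec{\CD}(\vP,\{\vL_j\})$ then comes for free, because (as recorded in footnote~\ref{foot:wellDefinedAdjoint}) vectorization intertwines the superoperator adjoint with the ordinary matrix adjoint.

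For the first step I would apply the adjoint rule \eqref{eq:superoperator_adjoint}, $\vA[\cdot]\vB\mapsto\vA^{\dagg}[\cdot]\vB^{\dagg}$, to each term of \eqref{eq:generalDiscProxy}. The decay piece $-\frac12\{\vL_j^{\dagg}\vL_j,\cdot\}=-\frac12\vL_j^{\dagg}\vL_j[\cdot]-\frac12[\cdot]\vL_j^{\dagg}\vL_j$ is manifestly invariant since $(\vL_j^{\dagg}\vL_j)^{\dagg}=\vL_j^{\dagg}\vL_j$. For the transition pieces, $(\vL_j[\cdot]\vL_{j'}^{\dagg})^{\dagg}=\vL_j^{\dagg}[\cdot]\vL_{j'}$ and $(\vL_j^{\dagg}[\cdot]\vL_{j'})^{\dagg}=\vL_j[\cdot]\vL_{j'}^{\dagg}$, so the symmetrized combination $\frac12(\vL_j[\cdot]\vL_{j'}^{\dagg}+\vL_j^{\dagg}[\cdot]\vL_{j'})$ is sent to itself for every fixed $j$; summing over $j\in J$ gives $\CD(\vP,\{\vL_j\})^{\dagg}=\CD(\vP,\{\vL_j\})$. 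I would emphasize in the write-up that this uses nothing about $\vP$: any pairing $j\mapsto j'$ would do, and it is precisely the symmetrization over $\{j,j'\}$ together with the left/right interchange induced by taking the superoperator adjoint that makes self-adjointness manifest — the particular involution $\vP\colon(a,\bomega)\mapsto(a',-\bomega)$ enters only later, when one wants the proxy to actually approximate $\vec{\CD}(\vrho,\CL)^{\dagg}$ (the content of the footnote attached to the proposition).

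For the second step I would push $\CD(\vP,\{\vL_j\})$ through the vectorization map, using the matrix identities $(\vL^{\dagg})^{T}=\vL^{*}$, $\vI^{T}=\vI$, and $(\vL^{\dagg}\vL)^{T}=\vL^{T}\vL^{*}=\vL^{*\dagg}\vL^{*}$: the transition terms map to $\frac12(\vL_j\otimes\vL_{j'}^{*}+\vL_j^{\dagg}\otimes\vL_{j'}^{T})$ and the decay term to $-\frac12(\vL_j^{\dagg}\vL_j\otimes\vI+\vI\otimes\vL_j^{*\dagg}\vL_j^{*})$, which after collecting over $j\in J$ gives \eqref{eq:generalDiscProxyVec} (one should cross-check the precise $*$/$T$/$\dagg$ labels, but the computation is mechanical). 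Combining with the first step and the adjoint-compatibility of vectorization yields $(\vec{\CD}(\vP,\{\vL_j\}))^{\dagg}=\vec{\CD}(\vP,\{\vL_j\})$, i.e.\ the vectorized operator is Hermitian, completing the proof.

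I do not expect a real obstacle here — the statement is essentially bookkeeping, and its ``content'' is just the recognition that $\CD(\vP,\{\vL_j\})$ has been arranged as a symmetrized transition part plus an automatically self-adjoint decay part. The one place where care is needed, and an error could slip in, is keeping the conjugate/transpose/adjoint operations straight (in particular not confusing $(\vL^{\dagg})^{T}=\vL^{*}$ with $\vL^{*\dagg}=\vL^{T}$) when simplifying the vectorized expression. It is also worth flagging that $\CD(\vP,\{\vL_j\})$ is \emph{not} itself a Lindbladian — it need be neither completely positive nor trace preserving — so self-adjointness, which is what the downstream quantum-walk and simulated-annealing steps require, is the only property asserted.
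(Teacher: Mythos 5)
Your self-adjointness argument is correct and coincides exactly with the paper's one-line proof: apply the superoperator-adjoint rule \eqref{eq:superoperator_adjoint} term-by-term, observe that for each fixed $j$ the symmetrized pair $\frac12(\vL_j[\cdot]\vL_{j'}^{\dagg}+\vL_j^{\dagg}[\cdot]\vL_{j'})$ and the decay piece are each sent to themselves, and sum over $j$. Your observation that this uses nothing about $\vP$ being a permutation, let alone an involution, is accurate and worth keeping; the involution structure only matters later, when one needs the proxy to \emph{approximate} $\vec{\CD}(\vrho,\CL)^{\dagg}$ rather than merely be Hermitian.

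One thing you should not leave as a hedge. Applying the stated vectorization rule $\vA[\cdot]\vB\mapsto\vA\otimes\vB^{T}$ directly to \eqref{eq:generalDiscProxy} produces, for the transition piece,
\begin{equation}
	\frac12\sum_{j\in J}\vL_j\otimes(\vL_{j'}^{\dagg})^T+\vL_j^{\dagg}\otimes\vL_{j'}^{T}
	=\frac12\sum_{j\in J}\vL_j\otimes\vL_{j'}^{*}+\vL_j^{\dagg}\otimes\vL_{j'}^{*\dagg},
\end{equation}
which is what you computed. This is \emph{not} literally the paper's display \eqref{eq:generalDiscProxyVec}, which has the decorations $*$ and $*\dagg$ swapped between the two transition terms. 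Both matrices are Hermitian (each is a sum of a term and its adjoint plus manifestly Hermitian decay pieces), and the two sums coincide after re-indexing $j\leftrightarrow j'$ in the situation the paper actually instantiates, namely when $\{\vL_j\}$ is closed under adjoint with $\vL_{j'}=\vL_j^{\dagg}$. But as general expressions in the $\vL_j$ they differ, so your hedge ``one should cross-check the precise $*$/$T$/$\dagg$ labels'' is well-placed: the discrepancy is real, it is almost certainly a typo in \eqref{eq:generalDiscProxyVec}, and since the main assertion of the proposition is Hermiticity of the vectorized operator (which your argument delivers cleanly via footnote~\ref{foot:wellDefinedAdjoint}), the discrepancy does not affect the claim — but you should state this explicitly rather than leave it unresolved.
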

\begin{proof}
 \begin{align}
 	\phantom{f}\\[-13mm]
      \CD (\vP,\{\vL_j\})^{\dagger}&=\frac12\sum_{j\in J} \vL_{j}^{\dagg}[\cdot]\vL_{j'} +\vL_{j}[\cdot]\vL_{j'}^{\dagg}  - \{\vL_j^\dagg \vL_j, \cdot\} = \CD (\vP,\{\vL_j\}).\tag*{\qedhere}
 \end{align}
 The vectorization is hence also self-adjoint.
	\end{proof}
\begin{cor}\label{cor:DiscProxySelfAdjoint}
    If $f$ is real~\eqref{eq:fnormalized} and the set of jump operators is self-adjoint~\eqref{eq:AAdagger}, then the discriminant proxy \eqref{eq:vec_main_result} is Hermitian. 
\end{cor}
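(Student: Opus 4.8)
The plan is to recognize $\vec{\CD}_\beta$ as a special case of the self-adjoint construction of \autoref{prop:genDiscProxy}, so that Hermiticity comes for free. Concretely, I would take the index set $J = A\times S_{\omega_0}$ (or $A\times\BR$ in the continuum) with Lindblad operators $\vL_{(a,\bomega)} := \sqrt{\gamma(\bomega)}\,\hat{\vA}^a(\bomega)$, i.e.\ exactly the Lindblad operators of $\CL_\beta$ in \eqref{eq:LbetaAgain}. What remains is to (i) pick a permutation $\vP$ of $J$ and (ii) check that $\vec{\CD}(\vP,\{\vL_j\})$ of \eqref{eq:generalDiscProxyVec} coincides with $\vec{\CD}_\beta$ of \eqref{eq:vec_main_result}; \autoref{prop:genDiscProxy} then immediately gives self-adjointness of the superoperator and hence Hermiticity of its vectorization.

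The choice of $\vP$ is dictated by the two hypotheses. First I would record the symmetry of the operator Fourier Transform implied by $f$ being real: since Heisenberg evolution commutes with $(\cdot)^\dagg$ and $\overline{f(\bt)}=f(\bt)$, one gets in one line from \eqref{eq:DiscAMain} that $\hat{\vA}^a(\bomega)^\dagg = \widehat{(\vA^{a\dagg})}(-\bomega)$ (among the basic properties collected in \autoref{sec:operator_FT}). Since the jump set is self-adjoint \eqref{eq:AAdagger}, there is a permutation $a\mapsto a'$ of $A$ with $\vA^{a'} = (\vA^a)^\dagg$ (the permutation of jump labels appearing in \autoref{lem:D_cost}), so the identity reads $\hat{\vA}^a(\bomega)^\dagg = \hat{\vA}^{a'}(-\bomega)$, whence also $\hat{\vA}^{a'}(-\bomega)^{*\dagg} = \hat{\vA}^a(\bomega)^*$. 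I would then set $\vP(a,\bomega):=(a',-\bomega)$, a genuine permutation (indeed an involution) of $J$. Substituting $\vL_{(a,\bomega)}$ and this $\vP$ into \eqref{eq:generalDiscProxyVec}: the decay part $\sum_j (\vL_j^\dagg\vL_j\otimes\vI + \vI\otimes\vL_j^{*\dagg}\vL_j^*)$ reproduces $\sum_{a,\bomega}\gamma(\bomega)(\hat{\vA}^a(\bomega)^\dagg\hat{\vA}^a(\bomega)\otimes\vI + \vI\otimes\hat{\vA}^a(\bomega)^{*\dagg}\hat{\vA}^a(\bomega)^*)$ verbatim (the permutation is not even needed there), while in the transition part the summand $\vL_j\otimes\vL_{j'}^{*\dagg}$ already equals $\sqrt{\gamma(\bomega)\gamma(-\bomega)}\,\hat{\vA}^a(\bomega)\otimes\hat{\vA}^a(\bomega)^*$, and the summand $\vL_j^\dagg\otimes\vL_{j'}^{*}$ becomes the same after reindexing $(a,\bomega)\mapsto(a',-\bomega)$ and using that $\omega\mapsto\sqrt{\gamma(\omega)\gamma(-\omega)}$ is even. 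So $\vec{\CD}(\vP,\{\vL_j\}) = \vec{\CD}_\beta$, and \autoref{prop:genDiscProxy} finishes the argument.

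I expect the main obstacle to be purely the bookkeeping of complex conjugation, transpose and adjoint across the duplicated Hilbert space — in particular, choosing $\vP$ so that the asymmetric pairing $\vL_j\otimes\vL_{j'}^{*\dagg}$ symmetrizes the Boltzmann weights into the balanced factor $\sqrt{\gamma(\omega)\gamma(-\omega)}$ — together with the harmless requirement that the frequency domain be symmetric under $\bomega\mapsto-\bomega$. The latter is automatic for the continuum integral \eqref{eq:D_main}; for the finite grid $S^{\lceil N\rfloor}$ it holds up to the single unpaired endpoint present when $N$ is even, which one can simply exclude (or carry as a negligible boundary term, consistent with \autoref{apx:cont_limit}).
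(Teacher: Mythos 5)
Your proposal is correct and follows essentially the same route as the paper's proof: instantiate \autoref{prop:genDiscProxy} with Lindblad operators $\vL_{(a,\bomega)}=\sqrt{\gamma(\bomega)}\hat{\vA}^a(\bomega)$, take the involution $\vP\colon(a,\bomega)\mapsto(a',-\bomega)$, and use the operator Fourier Transform symmetry $\hat{\vA}^a(\bomega)^\dagg=\hat{\vA}^{a'}(-\bomega)$ for real $f$ to identify $\vec{\CD}(\vP,\{\vL_j\})$ with $\vec{\CD}_\beta$. Your observation about the unpaired endpoint of $S^{\lceil N\rfloor}_{\omega_0}$ when $N$ is even is a real subtlety that the paper addresses elsewhere by allowing $\gamma$ to vanish at that self-inverse frequency (\autoref{foot:gammaZeroPossibility}).
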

\begin{proof}
Due to \autoref{foot:wellDefinedAdjoint} it suffices to verify that the superoperator is self-adjoint:
This follows from~\autoref{prop:genDiscProxy} by setting the permutation $\vP\colon (\bomega,a)\rightarrow  (-\bomega,a')$ such that $\vA^{a\dagg}=\vA^{a'}$ and using the operator Fourier Transform property
	$\hat{\vA}^{a'}(-\bomega)^{\dagger} = \hat{\vA^{a'\dagger}}(\bomega)$ for real weight $f$ \eqref{eq:OpFTDaggProperty}, implying that $\hat{\vA}^{a}(\bomega) \otimes \hat{\vA}^{a'}(-\bomega)^{\dagger*}=\hat{\vA}^{a}(\bomega) \otimes \hat{\vA}^{a}(\bomega)^{*}=\hat{\vA}^{a'}(-\bomega)^{\dagger}\otimes \hat{\vA}^{a}(\bomega)^{*}$.
\end{proof}
Now that we have verified the symmetries of the desired discriminant proxy, we move on to our explicit construction. By the standard quantum walk recipe~\cite{szegedy2004QMarkovChainSearch}, we design an isometry and a reflection such that
\begin{align}\label{eq:VecDiscrImp}
	\vI+ \vec{\CD} (\vP,\{\vL_j\})= \vT^{'\dagger} \vR \vT',
\end{align}
which is block-encoded as in Figure~\ref{fig:D_circuit}.
\begin{prop}[A block-encoding for discriminant proxies]\label{prop:discriminantBlock}
	Using the notation of \autoref{prop:genDiscProxy}, let
 	\begin{align}
		\vR := \vI - (\vI \otimes \vPi) + \underset{=:\vR_0}{\underbrace{\vZ \otimes |0^{b'}\rangle\!\langle 0^{b'}| \otimes \vP \otimes \vI_{sys} \otimes \vI_{sys'}}}  \quad \text{where}\quad \vPi:=  |0^{b'}\rangle\!\langle 0^{b'}| \otimes \vI_J \otimes\vI_{sys} \otimes \vI_{sys'},
	\end{align}
	and $\vZ$ is the Pauli-Z operator such that $\vZ \ket{\pm} = \ket{\mp}$ for $\ket{\pm}:=(\ket{0}\pm\ket{1})/\sqrt{2}$. If $\vU$ is a unitary block-encoding of the \Lword{} such that
	\begin{align}
		\L(\langle 0^{b'}|\otimes  \vI_J \otimes \vI_{sys} \R)\cdot \vU\cdot \L(|0^{c'}\rangle\otimes \vI_{sys}\R) =& \sum_{j\in J}\ket{j}\otimes \vL_j,\\[-10mm]
	\end{align}
then we obtain a block encoding for the (shifted) discriminant proxy
	\begin{align}
		\L(\langle 0^{c'+1}|\otimes \vI_{sys} \otimes \vI_{sys'}\R)\cdot \vU_{\vec{\CD} (\vP,\{\vL_j\})}\cdot \L(|0^{c'+1}\rangle\otimes \vI_{sys} \otimes \vI_{sys'}\R) = \vI + \vec{\CD} (\vP,\{\vL_j\}).
	\end{align}
using 
 \begin{align}
		\vU_{\vec{\CD} (\vP,\{\vL_j\})}:=\vU^{'\dagger} \cdot \vR \cdot \vU'\quad \text{where}\quad \vU'= \bigg(\ketbra{+}{+}\otimes\vU\otimes\vI_{sys'} + \ketbra{-}{-}\otimes\vI_{sys}\otimes\vU^*\bigg)
	\end{align}
and the unitary $\vU^*$ is the conjugate of $\vU$ but acting on a copy of the system register $(sys')$.
\end{prop}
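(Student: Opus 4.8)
The plan is a direct verification of the Szegedy-type identity $\vU_{\vec{\CD} (\vP,\{\vL_j\})} = \vU^{'\dagger}\vR\vU'$. First I would record that every factor is unitary: $\vU'$ is block-diagonal with respect to the orthonormal basis $\{\ket{+},\ket{-}\}$ of the control qubit, with unitary blocks $\vU$ (acting on the block-encoding ancilla and $sys$) and $\vU^*$ (acting on the ancilla and $sys'$); and $\vR$ splits into the identity on the subspace where the $b'$-qubit ``success'' register is orthogonal to $\ket{0^{b'}}$ (there both $\vI\otimes\vPi$ and $\vR_0$ vanish) and into $\vZ\otimes\vP$ on the complementary subspace where the success register equals $\ket{0^{b'}}$ (there $\vI-\vI\otimes\vPi=0$ and $\vR_0=\vZ\otimes\vP$). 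Both blocks are unitary, so $\vR$ is unitary — in fact a reflection, since $\vP$ is an involution as $(\vA^{a'})^{\dagger}=\vA^a$. Hence $\vU_{\vec{\CD}}$ is unitary and it remains only to compute its top-left block $(\langle 0^{c'+1}|\otimes\vI_{sys}\otimes\vI_{sys'})\,\vU_{\vec{\CD}}\,(|0^{c'+1}\rangle\otimes\vI_{sys}\otimes\vI_{sys'})$.

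For this I would trace the action of $\vU_{\vec{\CD}}$ on $\ket{0^{c'+1}}\otimes\ket{\phi}_{sys}\otimes\ket{\psi}_{sys'}$. Writing the control qubit as $\ket{0}=(\ket{+}+\ket{-})/\sqrt{2}$ and using the block structure of $\vU$ (and of its entrywise conjugate $\vU^*$, whose block-encoding is $\sum_{j}\ket{j}\otimes\vL_j^{*}$), the state after $\vU'$ splits into a $\ket{+}$-branch $\tfrac{1}{\sqrt2}\ket{+}\,\bigl(\vU(\ket{0^{c'}}\ket{\phi})\bigr)_{\mathrm{anc},sys}\ket{\psi}_{sys'}$ and a $\ket{-}$-branch $\tfrac{1}{\sqrt2}\ket{-}\,\ket{\phi}_{sys}\,\bigl(\vU^{*}(\ket{0^{c'}}\ket{\psi})\bigr)_{\mathrm{anc},sys'}$, where $\vU(\ket{0^{c'}}\ket{\phi})=\sum_{j}\ket{0^{b'}}\ket{j}\vL_j\ket{\phi}+(\text{component with success register}\perp\ket{0^{b'}})$, and similarly for $\vU^{*}$ with $\vL_j^{*}$. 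Applying $\vR$ now flips $\ket{+}\leftrightarrow\ket{-}$ and relabels $\ket{j}\mapsto\ket{j'}$ precisely on the ``successful'' pieces, and acts trivially on the $\perp$-pieces. This control-flip is the mechanism of the construction: it routes the $\vL_j$-piece attached to $\ket{+}$ into the $\ket{-}$-branch, so that the subsequent $\vU^{'\dagger}$ composes it with $\langle 0^{c'}|\vU^{*\dagger}$, producing $\vL_{j'}^{*\dagger}$ on $sys'$ while $\vL_j$ remains on $sys$; symmetrically the $\ket{-}$-originated $\vL_j^{*}$-piece crosses to be hit by $\langle 0^{c'}|\vU^{\dagger}$; the $\perp$-pieces stay in their own branch.

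Finally, projecting onto $\ket{0^{c'+1}}$ and using that $\langle 0^{c'}|\vU^{\dagger}=W^{\dagger}$ with $W:=\vU(\ket{0^{c'}}\otimes\vI_{sys})$ an isometry satisfying $W^{\dagger}(\ket{0^{b'}}\ket{j}\ket{y})=\vL_j^{\dagger}\ket{y}$ and hence $W^{\dagger}\bigl(\vI-\ketbra{0^{b'}}{0^{b'}}\otimes\vI_J\otimes\vI_{sys}\bigr)W=\vI-\sum_{j}\vL_j^{\dagger}\vL_j$ (the ``decay'' term, essentially \autoref{prop:rejectBlockEncoded}), together with the conjugate identity for $W^{*}:=\vU^{*}(\ket{0^{c'}}\otimes\vI_{sys'})$, I would collect the four contributions: the cross terms $\tfrac12\sum_{j}\vL_j\otimes\vL_{j'}^{*\dagger}$ (from the flipped $\ket{+}$-branch) and $\tfrac12\sum_{j}\vL_{j'}^{\dagger}\otimes\vL_j^{*}$ (from the flipped $\ket{-}$-branch), and the terms $\tfrac12\bigl(\vI-\sum_{j}\vL_j^{\dagger}\vL_j\bigr)\otimes\vI$ and $\tfrac12\,\vI\otimes\bigl(\vI-\sum_{j}\vL_j^{*\dagger}\vL_j^{*}\bigr)$ from the two trivial branches. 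Re-indexing the second cross term via the bijection $j\leftrightarrow j'$ (so $\sum_{j}\vL_{j'}^{\dagger}\otimes\vL_j^{*}=\sum_{j}\vL_j^{\dagger}\otimes\vL_{j'}^{*}$) and adding, the sum equals exactly $\vI+\vec{\CD}(\vP,\{\vL_j\})$ with $\vec{\CD}(\vP,\{\vL_j\})$ as in \eqref{eq:generalDiscProxyVec}, proving the claim.

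The main obstacle I expect is bookkeeping rather than any conceptual difficulty: one must carefully track which system copy each isometry acts on, ensure that the entrywise conjugation turning $\vU$ into $\vU^{*}$ yields precisely $\vL_j^{*}$ (so that $\vU^{'\dagger}$ contributes $\vL_{j'}^{*\dagger}$, not $\vL_{j'}^{\dagger}$ or $\vL_{j'}^{T}$), and check that the control swap in $\vR$ pairs operators so the cross terms appear in the discriminant-proxy form rather than its transpose. Making the unitarity of $\vR$ fully explicit — exhibiting the two complementary invariant subspaces and the unitary block on each — is a small additional step worth spelling out.
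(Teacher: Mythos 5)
Your argument is correct and follows essentially the same route as the paper's: both reduce to computing $\vT'^{\dagger}\vR\vT'$ for the isometry $\vT' = \vU'(\ket{0^{c'+1}}\otimes\vI)$, using that $\vR$ acts trivially on the failure subspace and as $\vZ\otimes\vP$ (swapping branches and relabeling $j\mapsto j'$) on the success subspace, and then collecting the four resulting contributions and reindexing via the involution $\vP$. The only presentational difference is that you trace a test state through the circuit while the paper computes the inner products of isometries directly (using the identity $(\vI\otimes\vPi)\vR_0 = \vR_0$ to isolate the cross terms); your added remark that $\vR$ is actually a reflection, since $\vZ$ and $\vP$ are both involutions, is a small but worthwhile explicit check of unitarity that the paper leaves implicit.
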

\begin{proof}
	Consider the isometries $\vT:=\vU(|0^{c'}\rangle\otimes \vI_{sys})$ and $\vT^*:=\vU^*(|0^{c'}\rangle\otimes \vI_{sys'})$, and 
	\begin{align}
		\vT' &:=\vU'\L(|0^{c'+1}\rangle\otimes \vI_{sys}\otimes \vI_{sys'}\R)= \frac{1}{\sqrt{2}} \big(\ket{+} \otimes \vT \otimes \vI_{sys'} +\ket{-} \otimes\vI_{sys}\otimes \vT^* \big). 
	\end{align} 
	To understand the product $\vT^{'\dagger} \vR \vT'$, we first calculate $(\vI \otimes \vPi)\vT' $ and $\vR_0\vT'$:
	\begin{align}
		(\vI \otimes \vPi)\vT' &=\frac{1}{\sqrt{2}}\sum_{j\in J}\ket{+} \otimes |0^{b'}\rangle \otimes \ket{j} \otimes \vL_j\otimes \vI_{sys'}\\&
		+ \frac{1}{\sqrt{2}}\sum_{j\in J}\ket{-} \otimes |0^{b'}\rangle \otimes \ket{j} \otimes \vI_{sys}\otimes \vL^*_j,\label{eq:PiTPrime}\\	
		\vR_0\vT' &=\frac{1}{\sqrt{2}}\sum_{j\in J} \ket{-} \otimes |0^{b'}\rangle \otimes \ket{j'} \otimes \vL_j\otimes \vI_{sys'}\\&
		+ \frac{1}{\sqrt{2}}\sum_{j\in J}\ket{+} \otimes |0^{b'}\rangle \otimes \ket{j'} \otimes \vI_{sys}\otimes \vL^{*}_j,\label{eq:XTPrime}
	\end{align}
	using the bit-flip $\vZ$ ($+\leftrightarrow -$), and the permutation $\vP$ ($j \leftrightarrow j'$). Finally, we get that
	\begin{align}
		\vT^{'\dagger}\vR\vT' &= 	\vT^{'\dagger}\vI\vT' -  \vT^{'\dagger}(\vI \otimes \vPi)\vT' +  \vT^{'\dagger}\vR_0\vT'\\&
		=\vI -  \vT^{'\dagger}(\vI \otimes \vPi)\cdot (\vI \otimes \vPi)\vT' + \vT^{'\dagger}(\vI \otimes \vPi)\cdot \vR_0\vT'\tag*{(since $(\vI \otimes \vPi) \vR_0 = \vR_0$)}\\&
		=\vI -\frac{1}{2}\sum_{j\in J} \vL_j^\dagg\vL_j\otimes \vI_{sys'} + \vI_{sys}\otimes  \vL^{*\dagg}_j\vL^*_j +\frac12\sum_{j\in J}\vL_j \otimes \vL^{*\dagg}_{j'} + \vL_{j'}^\dagg \otimes \vL^*_j\tag*{(by \eqref{eq:PiTPrime}-\eqref{eq:XTPrime})}\\&
		=\vI + \vec{\CD} (\vP,\{\vL_j\}).\tag*{\qedhere}
	\end{align}
\end{proof}
Specializing the above recipe for \eqref{eq:vec_main_result} yields a block-encoding of $\vec{\CD}_{\beta}$ using the following ingredients:
\begin{itemize}
	\item A unitary block-encoding for the Lindbladian 
	\begin{align}
		\L(\langle 0^{b'}|\otimes  \vI_{\bomega} \otimes \vI_a \otimes \vI_{sys} \R)\cdot \vU\cdot \L(|0^{c'}\rangle\otimes \vI_{sys}\R) =& \sum_{a \in A,\bomega \in S_{\omega_0}}\sqrt{\gamma(\bomega)} \ket{\bomega} \otimes \ket{a}\otimes  \hat{\vA}^a(\bomega). 
	\end{align}
	An example would be the block-encoding \eqref{eq:blockEncodedLindblad} instantiating the parameters $b'=b+1$ and $|0^{c'}\rangle=\ket{0^{c+1}}\ket{\bar{0}}$ after appropriately rearranging the registers.
	\item Negation on the Bohr frequency register 
	\begin{align}
		\vF := \sum_{\bomega\in S_{\omega_0}} \ketbra{-\bomega}{\bomega} \quad \text{such that}\quad \vF^{2} = \vI_{\bomega}.
	\end{align}
	\item Permutation (involution) of the jump operator labels 
	\begin{align}\label{eq:AdjointLabelPermutation}
		\vP:= \sum_{a \in A} \ketbra{a'}{a} \quad \text{where} \quad \vA^{a'} = (\vA^a)^{\dagger}\quad \text{for each}\quad a \in A.
	\end{align}
	Note that if the jump operators are Hermitian, e.g., Pauli matrices, then we can simply take the permutation to be the identity $\vP=\vI_a$.
\end{itemize}

\begin{figure}[t]

\begin{center}
	\newcommand{\scalea}{1.2}
	\newcommand{\scaleb}{1.5}
	\begin{quantikz}[wire types={q,b,b,b,b,b},classical	gap=1mm]
	\lstick{\scalebox{\scalea}{$\ket{0}$}}		&\qw &\gate[7,style={inner xsep=1mm, inner ysep=2mm}]{\scalebox{\scaleb}{$\vU^{'}$}}		& \gate[style={inner xsep=1mm, inner ysep=2mm}]{\scalebox{\scaleb}{$\vZ$}}	&	\gate[7,style={inner xsep=1mm, inner ysep=2mm}]{\scalebox{\scaleb}{$\vU^{'\dagger}$}}			&\qw &\qw&\qw \rstick{\scalebox{\scalea}{$\bra{0}$}}\\ 			
	 	\lstick{\scalebox{\scalea}{$\ket{0^{b+1}}$}}		&\qw&\qw	&\octrl{1}\wire[u]{q} &\qw	&\qw 	&\qw	&\rstick{\scalebox{\scalea}{$\bra{0^{b+1}}$}} \qw \\[-1mm]
	 	\lstick{\kern-2mm\scalebox{\scalea}{$\ket{0^{c-b}}$}}	&\qw&\qw	& 	\gate[style={inner xsep=2mm, inner ysep=2mm}]{\scalebox{\scalea}{$\vP$}}				&\qw		&\qw 			& \qw & \qw\rstick{\scalebox{\scalea}{$\bra{0^{c-b}}$}} \\
   \lstick{\scalebox{\scalea}{$\ket{\bar{0}}$}} & \qw&	&\qw	&	&\qw & \qw &\qw \rstick{\scalebox{\scalea}{$\bra{\bar{0}}$}}\\
\lstick{\scalebox{\scalea}{sys}}	&\qw& \qw	&	\qw	&\qw		&\qw		&\qw		& \qw \\
   \lstick{\scalebox{\scalea}{sys'}} &\qw& \qw	&	\qw	& \qw		&\qw		&\qw		& \qw 
	 \end{quantikz}
\end{center} 
  \caption{Circuit $\vU'^{\dagger} \vR \vU' $ for block-encoding the discriminant in the fashion of Szegedy quantum walk.
  }\label{fig:D_circuit}
\end{figure}
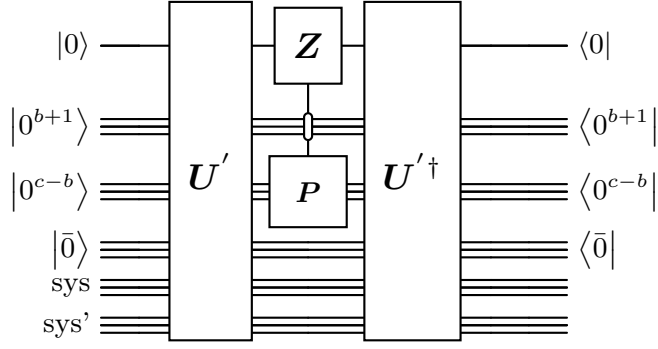
To reiterate, compared to the incoherent case, the discriminant proxy can be implemented with two extra unitaries $\vF$ and $\vP$, an additional copy of the system register (which we denoted by $system'$), and an additional ancilla qubit.
\begin{align}\label{eq:Dfregsiters}
\text{registers:}\quad\underset{\vT/\vT^*\text{ selector}}{\underbrace{\ket{\pm}}}\otimes\underset{\text{block. enc. anc.}}{\underbrace{|0^{b'}\rangle}}\otimes \underset{\text{Bohr freq.}}{\underbrace{\ket{\bomega}}}&\otimes\underset{\text{jump}}{\underbrace{\ket{a}}}\otimes \underset{\text{system}}{\underbrace{\hat{\vA}^a(\bomega)}} \otimes \underset{\text{system'}}{\underbrace{\vA^{a}(\bomega)^*}}
\end{align}

If we combine the constructions of \autoref{sec:L_circuit}-\autoref{sec:D_circuit}, we can see that the number of qubits is
\begin{align}
	\#qubits = 2n + \lceil\log_2(N)\rceil + c + 2,
\end{align}
coming from the two copies of the system register, the frequency register, the ancillae for the block-encoding of the jumps, and one additional ancillae introduced in each of \autoref{sec:L_circuit}-\autoref{sec:D_circuit}. When the normalized jump operators $\sqrt{|A|}\vA^a$ are unitaries, we can have $c$ as small as $\lceil\log_2(|A|)\rceil$, see for example \eqref{eq:exampleJumps}.

\subsubsection{Proof of coherent Gibbs sampler ({\autoref{thm:D_correct}})}\label{sec:proof_coherent}
In this section, we prove guarantees for our coherent Gibbs sampler (\autoref{thm:D_correct}) in a similar vein as the \Lword{} case (\autoref{thm:L_correctness}).
The required lemmas and propositions are analogous but refer to the spectral gap instead of the mixing time.

\begin{proof}[Proof of \autoref{thm:D_correct}]
We present general bounds on finite $N$~\eqref{eq:vec_main_result} and then take the large $N$ limit for the continuum~\eqref{eq:D_main}. We bound the eigenvector distance by the operator norm bounds (\autoref{prop:Bauer-Fike}, \autoref{prop:eigenvector_perturb_gen} and \autoref{prop:Dfixed_point_error}): secular approximation error (\autoref{lem:secular} and \autoref{prop:Gaussian_tail}), and discriminant proxy (\autoref{lem:error_from_Boltzmann} and $\CD_{sec}=\CD_{sec}^{\dagger}$)\footnote{Here, we implicitly assume that $\vec{\CD}_{impl}$ is Hermitian, which holds for example if $\vR$ in the block-encoding is implemented exactly.}
\begin{align}
     \norm{ \ket{\lambda_1(\vec{\CD}_{\beta})} - \ket{\sqrt{\vrho}} }&\le \frac{6\norm{\vec{\CD}_{\beta} - \vec{\CD}(\vrho,\CL_{sec})^{\dagger}}}{\lambda_{gap}(\vec{\CD}_{\beta})}\\
     &\le \frac{6}{\lambda_{gap}(\vec{\CD}_{\beta})} \L(\normp{\vec{\CD}_{\beta} - \vec{\CD}_{sec}}{}+ \normp{\vec{\CD}_{sec} - \vec{\CD}(\vrho,\CL_{sec})^{\dagger}}{}\R)\\
    &\le \CO\L(\frac{(\e^{ - N^2\omega_0^2\sigma_t^2/2} +\e^{ - N^2t_0^2/16\sigma_t^2}+\e^{ - \bmu^2\sigma_t^2}+\e^{-T^2/4\sigma_t^2}+\omega_0T)+\beta \bmu }{\lambda_{gap}(\vec{\CD}_{\beta})}\R)\\
    &\le \CO\L(\frac{\sigma_t \omega_0\sqrt{\log(1/(\sigma_t\omega_0))}+ (\beta/\sigma_t) \sqrt{\log(\sigma_t/\beta)} +\e^{ - N^2\omega_0^2\sigma_t^2/2}}{\lambda_{gap}(\vec{\CD}_\beta)}\R).
\end{align}
The fourth inequality chooses the free parameter $T = 2\sigma_t \sqrt{\ln(1/(\sigma_t\omega_0))}$ and $\bmu = \frac{\beta}{\sigma_t} \sqrt{\ln(\frac{\sigma_t}{\beta})}$ and uses that $\e^{ - N^2t_0^2/16\sigma_t^2}=\e^{ - \pi^2/4\omega_0^2\sigma_t^2} = \CO( \sigma_t\omega_0 \sqrt{\log(1/(\sigma_t\omega_0))}) $ to reduce the expression. 

In the continuum limit~\eqref{eq:D_main}, the discretization parameters $\omega_0$ and $N$ disappear, and the RHS becomes 
\begin{align}
 \norm{ \ket{\lambda_1(\vec{\CD}_{\beta})} - \ket{\sqrt{\vrho}} } = \CO\L(\frac{\beta}{\sigma_t}\frac{ \sqrt{\log(\sigma_t/\beta)}}{\lambda_{gap}(\vec{\CD}_\beta)}\R)   
\end{align}
by taking  the limit $N \omega_0 \rightarrow \infty,\ \omega_0 \rightarrow 0$.
\end{proof}
Note the user only chooses the time limit $T$, Gaussian width $\sigma_t$, and the Discrete Fourier Transform resolution $\omega_0$ and the number of points $N$; the truncation parameter $\bmu$ only appears implicitly in the analysis of secular approximation. Compared with the fixed point error for \Lword{}s (\autoref{sec:proof_correctness}), the Hermiticity and gap substantially simplifies the analysis.

\subsection{Metropolis sampling with arbitrary spectral target weights}\label{sec:general_Metropolis}

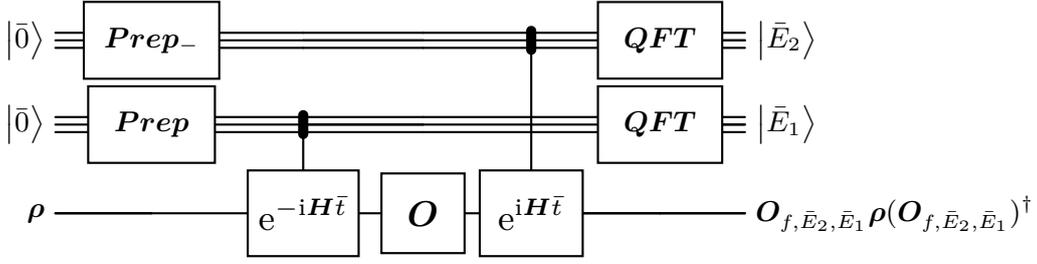
\begin{figure}[t]
\label{fig:two_sided_OQFT}
	\begin{quantikz}[wire types={b,b},classical	gap=1mm]
	 	\lstick{\scalebox{1.2}{\ket{\bar{0}}}}	&\gate[style={inner xsep=1.25mm, inner ysep=2mm}]{\scalebox{1.2}{$\vec{Prep}_-$}}&	&\qw			&\ctrl{2}				&\gate[style={inner xsep=2mm, inner ysep=2mm}]{\scalebox{1.2}{$\vec{QFT}$}} & \qw \rstick{\scalebox{1.2}{$\ket{\bE_2}$}}\\
   \lstick{\scalebox{1.2}{\ket{\bar{0}}}}	&\gate[style={inner xsep=2mm, inner ysep=2mm}]{\scalebox{1.2}{$\vec{Prep}$}}&\ctrl{1}	&\qw			&				&\gate[style={inner xsep=2mm, inner ysep=2mm}]{\scalebox{1.2}{$\vec{QFT}$}} & \qw \rstick{\scalebox{1.2}{$\ket{\bE_1}$}}\\
	 	\lstick{\scalebox{1.2}{$\vrho$}}		&\qw&\gate[style={inner xsep=0mm, inner ysep=2mm}]{\scalebox{1.5}{$\e^{-\ri\vH \bar{t}}$}} &\gate[style={inner xsep=2mm, inner ysep=2.25mm}]{\scalebox{1.5}{$\vO$}}&\gate[style={inner xsep=1mm, inner ysep=2mm}]{\scalebox{1.5}{$\e^{\ri\vH \bar{t}}$}}	&\qw		& \qw \rstick{\scalebox{1.2}{$\vO_{f,\bE_2,\bE_1}\vrho(\vO_{f,\bE_2,\bE_1})^\dagg$}}
	 \end{quantikz}
  \caption{The circuit for two-sided operator Fourier Transform. The gate $\vec{prep}_-$ prepares the flipped function $f_- (t) = f(-t)$. Unlike the one-sided version (Figure~\ref{fig:OQFT}), now we cannot understand the expression by Heisenberg evolution; this is, in spirit, more similar to doing two consecutive phase estimations. 
  }
\end{figure}

Looking beyond sampling Gibbs states $\vrho \propto \sum_{i} \e^{-\beta E_i} \ketbra{\psi_i}{\psi_i}$, we may modify the circuit to sample from arbitrary weight function $ \vrho \propto \sum_{i} p(E_i)\ketbra{\psi_i}{\psi_i}$ that could be useful in other contexts. We reserve this section for pointing out the required ingredients and adaptations; we will stick to Gibbs sampling for the rest of the paper, but both our coherent and incoherent algorithms should apply in the general setting as well. 

Instead of applying the operator Fourier Transform for the Bohr frequencies (the energy differences), sampling from arbitrary weight requires accessing \textit{both} energies before and after the jump, resembling~\cite{temme2009QuantumMetropolis,wocjan2021szegedy,Rall_thermal_22}. The appropriate \Lword{} takes the following form
\begin{align}\label{eq:TwoSidedLi}
    \mathcal{L}_{general}[\vrho] := \sum_{a\in A, \bE_2,\bE_1\in S_{\omega_0}} \gamma(\bE_2,\bE_1)\left(\hat{\vA}^a_{f}(\bE_2,\bE_1) \vrho \hat{\vA}^a_{f}(\bE_2,\bE_1)^\dagger-\frac{1}{2}\{\hat{\vA}^a_{f}(\bE_2,\bE_1)^\dagger\hat{\vA}^a_{f}(\bE_2,\bE_1),\vrho\}\right).
\end{align}
To implement a block-encoding of the above \Lword{}, we need different Fourier Transform components and a controlled filter for the two-argument Metropolis weight.
\begin{itemize}
   \item Phase estimation isometry (in the Schrödinger picture)
    \begin{align}
        \Phi_f := \frac{1}{\sqrt{N}}\sum_{\bE\in S_{\omega_0}}\sum_{\bt \in S_{\bt_0}}f(\bt)\ket{\bE}\otimes \e^{-\ri (\bE-\vH) \bt} .
    \end{align}
    This circumvents the impossibility results as it does not take the shift-invariant form (\autoref{sec:impossible}).
    \item Two-sided operator Fourier Transform (\autoref{fig:two_sided_OQFT})
	\begin{align}
	\CF[\ket{f_2}\otimes \ket{f_1}\otimes \vO] &:= (\Phi_{f_2}\otimes \vI_{\bE_1})\circ (\vI_{\bE_1}\otimes \vO)\circ \Phi_{f_1} = \sum_{\bE_2,\bE_1\in S_{\omega_0}}\ket{\bE_2}\otimes \ket{\bE_1}\otimes\hat{\vO}_{f_1,f_2}(\bE_2,\bE_1), \\ 
	\text{where} \quad \hat{\vO}_{f_1,f_2}(\bE_2,\bE_1) &: = \frac{1}{N}\sum_{\bt_2,\bt_1 \in S_{\bt_0}}f_2(\bt_2) f_1(\bt_1) \e^{- \ri (\bE_2-\vH) \bt_2}  \vO  \e^{-\ri (\bE_1-\vH) \bt_1} .
 \end{align}
 In our particular case, we set $f_1(\bt) = f(\bt)$ and $f_2(\bt) = f(-\bt)=:f_-(\bt)$ for a normalized function $f$ to get
 \begin{align}
 \vO_{f,f_-}(\bE_2,\bE_1) = \frac{1}{N}\sum_{\bt_2,\bt_1 \in S_{\bt_0}} f(-\bt_2) f(\bt_1) \e^{- \ri (\bE_2-\vH) \bt_2}  \vO  \e^{-\ri (\bE_1-\vH) \bt_1} \label{eq:O_ff-}    
 \end{align}
 which will be short-handed by $\vO_{f,\bE_2,\bE_1}$. This construction is reminiscent of~\cite{wocjan2021szegedy}, but it does not require a rounding promise. 
    \item Controlled filter for the Metropolis weight
	\begin{align}
	\vW := \sum_{\bE_2,\bE_1 \in S_{\omega_0}}  \vY_{1-\gamma(\bE_2,\bE_1)} \otimes \ketbra{\bE_2}{\bE_2} \otimes \ketbra{\bE_1}{\bE_1} \\
 \text{such that} \quad 0\le\gamma(\bE_2,\bE_1) \le 1\quad \text{and} \quad\frac{\gamma(\bE_2,\bE_1)}{\gamma(\bE_1,\bE_2)}=\frac{p(\bE_2)}{p(\bE_1)}.
	\end{align}
The ratio constraint ensures approximate detailed balance. 
\end{itemize}
If the set of jump operators is self-adjoint and $f$ is real, we may construct a valid Hermitian discriminant proxy as outlined in \autoref{prop:genDiscProxy} -- the corresponding involution is $\textbf{SWAP}\otimes \vP$, where \textbf{SWAP} acts on the two phase estimation registers, and $\vP$ is a permutation as in \eqref{eq:AdjointLabelPermutation}.
This construction yields a discriminant proxy analogous to \eqref{eq:vec_main_result} due to the following ``skew symmetry'' of the two-sided operator Fourier Transform. We also include the two-index version of operator Parseval's identity. 

\begin{restatable}[Parseval's identity]{prop}{propertiesTwoSidedDFT}\label{prop:properties_twosided_DFT}
	For a set of matrices $\vA^a\in\mathbb{C}^{n\times n}$, consider their two-sided discrete Fourier Transform weighted by a complex-valued function $f\colon S_{t_0}\rightarrow\BC$ as in~\eqref{eq:O_ff-}. Then, $\hat{\vA}^a_{f}(\bE_2,\bE_1)^{\dagger} = (\hat{\vA^{a\dagger}})_{f^*}(\bE_2,\bE_1)$, and  
	\begin{align}
		 \sum_{a\in A} \sum_{\bE_1,\bE_2\in S_{\omega_0}} \hat{\vA}^a_{f}(\bE_2,\bE_1)^{\dagger} \hat{\vA}^a_{f}(\bE_2,\bE_1)
		&=\nrm{f}_2^2 \sum_{\bt_1\in S_{t_0}} \labs{f(\bt_1)}^2	 \e^{-\ri \vH \bt_1}\vA^{a\dagger} \vA^a \e^{\ri \vH \bt_1}
		\preceq  \nrm{f}_2^4\nrm{\sum_{a\in A} \vA^{a\dagger}\vA^a} \cdot \vI\quad \text{and}\\
		\sum_{a\in A}\sum_{\bE_1,\bE_2\in S_{\omega_0}}  \hat{\vA}^a_{f}(\bE_2,\bE_1)\hat{\vA}^a_{f}(\bE_2,\bE_1)^{\dagger}
		&=\nrm{f}_2^2\sum_{a\in A} \sum_{\bt_2 \in S_{t_0}}  \labs{f(\bt_2)}^2	\e^{\ri \vH \bt_2} \vA^a \vA^{a\dagger} \e^{-\ri \vH \bt_2}
		\preceq \nrm{f}_2^4\nrm{\sum_{a\in A} \vA^a\vA^{a\dagger}}\cdot \vI.
	\end{align}
\end{restatable}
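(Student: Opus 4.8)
\textbf{Proof sketch for \autoref{prop:properties_twosided_DFT}.} The plan is to reduce everything to the time domain and collapse the two energy summations with the discrete-Fourier orthogonality relation built into the grids. Since $\omega_0 t_0 = 2\pi/N$ and $S_{\omega_0}$, $S_{t_0}$ are dilations of one and the same complete residue system modulo $N$, one has $\frac1N\sum_{\bomega\in S_{\omega_0}}\e^{\ri\bomega(\bt-\bt')} = \indicator(\bt=\bt')$ for all $\bt,\bt'\in S_{t_0}$, because the difference $\bt-\bt'$ always lies in the aliasing-free window $|\bt-\bt'| < N t_0$ so only the diagonal term survives. This identity will be invoked once for each of the two energy registers.

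For the first identity, substitute the definition~\eqref{eq:O_ff-} into $\hat{\vA}^a_f(\bE_2,\bE_1)^\dagger\hat{\vA}^a_f(\bE_2,\bE_1)$, writing $\bu_2,\bu_1\in S_{t_0}$ for the time variables of the adjoint factor and $\bt_2,\bt_1$ for those of the other (using $(\e^{-\ri M})^\dagger = \e^{\ri M}$ for the Hermitian exponents). The operator string is $\e^{\ri(\bE_1-\vH)\bu_1}\vA^{a\dagger}\big(\e^{\ri(\bE_2-\vH)\bu_2}\e^{-\ri(\bE_2-\vH)\bt_2}\big)\vA^a\,\e^{-\ri(\bE_1-\vH)\bt_1}$ with scalar weight $\overline{f(-\bu_2)f(\bu_1)}\,f(-\bt_2)f(\bt_1)$. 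Carry out $\sum_{\bE_2}$ first: the two bracketed exponentials are adjacent, their product equals $\e^{\ri\bE_2(\bu_2-\bt_2)}\e^{-\ri\vH\bu_2}\e^{\ri\vH\bt_2}$, and orthogonality forces $\bu_2=\bt_2$ (a factor $N$), whereupon the operator part collapses to $\vI$, leaving $\vA^{a\dagger}\vA^a$ adjacent and a free sum $\sum_{\bt_2}|f(-\bt_2)|^2 = \nrm f_2^2$. Next carry out $\sum_{\bE_1}$: the relevant scalar phase is $\e^{\ri\bE_1(\bu_1-\bt_1)}$ (the $c$-number parts commute past $\vA^{a\dagger}\vA^a$ but the $\e^{\pm\ri\vH\cdot}$ factors do not), so orthogonality forces $\bu_1=\bt_1$ with another factor $N$, leaving $\e^{-\ri\vH\bt_1}\vA^{a\dagger}\vA^a\,\e^{\ri\vH\bt_1}$ weighted by $|f(\bt_1)|^2$. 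The two factors $N$ cancel the $1/N^2$ prefactor, and after summing over $a$ one obtains exactly $\nrm f_2^2\sum_{a\in A}\sum_{\bt_1\in S_{t_0}}|f(\bt_1)|^2\,\e^{-\ri\vH\bt_1}\vA^{a\dagger}\vA^a\,\e^{\ri\vH\bt_1}$. The operator bound is then immediate: $\sum_a\e^{-\ri\vH\bt_1}\vA^{a\dagger}\vA^a\e^{\ri\vH\bt_1} = \e^{-\ri\vH\bt_1}\big(\sum_a\vA^{a\dagger}\vA^a\big)\e^{\ri\vH\bt_1}\preceq\nrm{\sum_a\vA^{a\dagger}\vA^a}\,\vI$ by unitary invariance of the operator norm, and the remaining $\sum_{\bt_1}|f(\bt_1)|^2=\nrm f_2^2$ supplies the fourth power of $\nrm f_2$.

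The second identity is obtained the same way from $\hat{\vA}^a_f(\bE_2,\bE_1)\hat{\vA}^a_f(\bE_2,\bE_1)^\dagger$: now the $\bE_1$-sum eliminates the inner time pair and the $\bE_2$-sum the outer one, leaving a conjugate-by-$\e^{\pm\ri\vH\bt}$ copy of $\vA^a\vA^{a\dagger}$ and hence the stated bound with $\sum_a\vA^a\vA^{a\dagger}$ (again using norm invariance under unitary conjugation, with a harmless $\bt\mapsto-\bt$ reindexing to match the displayed form). Finally, the adjoint relation $\hat{\vA}^a_f(\bE_2,\bE_1)^\dagger = (\hat{\vA^{a\dagger}})_{f^*}(\bE_2,\bE_1)$ follows directly from~\eqref{eq:O_ff-} by taking the Hermitian adjoint (which conjugates the weights and reverses the order of the two evolution factors) and then substituting $\bt_i\mapsto-\bt_i$ and relabeling dummy indices. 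I do not expect a genuine obstacle here: the whole argument is careful bookkeeping around the Fourier orthogonality relation, and the one point that truly matters is that the time differences $\bu_i-\bt_i$ never alias modulo the grid, which is guaranteed precisely by $\omega_0 t_0 = 2\pi/N$ together with $\bu_i,\bt_i\in S_{t_0}$.
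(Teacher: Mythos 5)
Your argument is correct and uses the same central device the paper relies on — the discrete Fourier orthogonality $\frac1N\sum_{\bomega\in S_{\omega_0}}\e^{\ri\bomega(\bt-\bt')}=\delta_{\bt,\bt'}$, applied once per energy register to collapse the four time sums to two. The paper's own printed proof of this proposition in fact only writes out the adjoint relation and leaves the two Parseval identities implicit (by analogy with the one-sided \autoref{prop:properties_DFT}, which uses the identical delta-function argument); your proposal fills in exactly those omitted steps. One small remark on the adjoint relation you sketch at the end: after taking the dagger, the left exponential carries $\bE_1$ and the right carries $\bE_2$, so the clean conclusion is really $\hat{\vA}^a_{f}(\bE_2,\bE_1)^{\dagger}=(\hat{\vA^{a\dagger}})_{f^*}(\bE_1,\bE_2)$ with the two frequency labels swapped — the statement in the paper (and your reproduction of it) keeps $(\bE_2,\bE_1)$ unchanged, which appears to be a typo inherited from the source.
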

\begin{proof}For each $a\in A$, we have
\begin{align}
	& \\[-13mm]
    \hat{\vA}^a_{f}(\bE_2,\bE_1)^{\dagger} & = \L(\frac{1}{N}\sum_{\bt_2,\bt_1 \in S_{\bt_0}} f(-\bt_2) f(\bt_1)\e^{-\ri (\bE_2-\vH) \bt_2}  \vA^a  \e^{-\ri (\bE_1-\vH) \bt_1} \R)^{\!\!\dagger}\\
	& = \L(\frac{1}{N}\sum_{\bt_2,\bt_1 \in S_{\bt_0}} f(\bt_2) f(\bt_1)\e^{\ri (\bE_2-\vH) \bt_2}  \vA^a  \e^{-\ri (\bE_1-\vH) \bt_1} \R)^{\!\!\dagger}\\    
    & = \frac{1}{N}\sum_{\bt_2,\bt_1 \in S_{\bt_0}} f(\bt_2)^* f(\bt_1)^*\e^{\ri (\bE_1-\vH) \bt_1}\vA^{a\dagger} \e^{-\ri (\bE_2-\vH) \bt_2}  = (\hat{\vA^{a\dagger}})_{f^*}(\bE_2,\bE_1).\tag*{\qedhere}
\end{align}
\end{proof}
Our analytic arguments (\autoref{sec:operator_FT}, \autoref{sec:error_boltzmann}) can be adapted to the above \Lword{} \eqref{eq:TwoSidedLi} and discriminant variant, but we will stick to the operator Fourier Transform for simplicity throughout the paper.

\section{Discussion}
\label{sec:discussion}
Our work aimed to lay the algorithmic and analytic foundation for Monte Carlo-style Quantum Gibbs samplers. We have presented families of efficiently implementable algorithms for Gibbs sampling in terms of \Lword{}s with guarantees for fixed-point accuracy. We have confronted technical issues from energy uncertainty (which have haunted quantum Gibbs sampling algorithms for a decade) by highlighting a key algorithmic component, the operator Fourier Transform, and introducing a general analytic framework, the secular approximation and approximate detailed balance. These arguments are compatible with the Szegedy-type speedup and preparation of the purified Gibbs state.

Our construction is conceptually simple as it draws inspiration from the physical mechanism of open-system thermalization, especially the Davies' generator. Conversely, our argument completes the first proof of Gibbs state stationarity for physically derived \Lword{}s, especially the coarse-grained master equations~\cite{Christian_2013_Coarse_graining,Mozgunov2020completelypositive}. Potentially, this could lead to proposals of quantum Gibbs samplers on analog quantum simulators. Still, our analysis is restricted to the open system setting where the bath is Markovian by assumption; we hope the precise statement in the open system settings inspires further insight toward closed-system thermodynamics. 

Would quantum Gibbs samplers be the ultimate solution to the ground state preparation problem by setting $\beta \gg 1$? Our work only answers the first half of this problem by writing down some candidate \Lword{}s with a provably accurate Gibbs fixed point and efficient simulation algorithm. Still, the missing piece of the puzzle is the mixing time (or the spectral gap) of the proposed \Lword{}s. A scientifically informative first step is to directly benchmark the performance of quantum Gibbs samplers numerically for viable system sizes. This would give concrete estimates of the realistic costs of quantum simulation. From a mathematical physics perspective, there have been efforts to prove rapid mixing~\cite{Bardet2021EntropyDF,capel2021modified} (convergence at a logarithmic depth $\log(n)$) of Quantum Gibbs samplers for lattice Hamiltonian in the spirit of classical Ising models~\cite{martinelli1999lectures}. There, most results have been restricted to commuting Hamiltonians due to the lack of a satisfactory formulation of noncommuting Gibbs samplers, which this work provides\footnote{The followup work~\cite{exactDB} gives an even nicer noncommutative Gibbs sampler with \textit{exact detailed balance}.}. As a direct implication of this work, we provide a candidate algorithm for preparing a gapped ground state at a potentially very low depth: simply setting the Gaussian width to be $\sigma_t \sim \tilde{\CO}(\frac{1}{\Delta_{gap}})$ and $\beta = \tilde{\CO}(\frac{1}{\Delta_{gap}})$ ensures the ground state to be approximately the common kernel of $\tilde{\CO}(\frac{1}{\Delta_{gap}})$-local \Lword{}s\footnote{This observation was later exploited in~\cite{ding2023single} for ground state preparation.}. 

From a complexity perspective, quantum Gibbs samplers provide a new \textit{dynamic} angle to study the complexity of thermal states and even ground states. Indeed, existing complexity results for gapped ground states, especially the area law, have beaten the static properties to death (local gap, decay of correlation, etc.). Taking a step back, these approaches neglect the instinctive experimental origin of ground states: cool the system in a fridge. It would be curious to bridge this thermodynamics process to the area-law literature (e.g.,~\cite{Hastings2007AnAL,landau2015polynomial,anshu2022area}). Practically, we hope noncommuting Gibbs samplers will inspire new tensor network algorithms or even new ansatz, which could lead to a better grasp of 2D gapped physics.

\section*{Acknowledgments}
We thank Simon Apers, Mario Berta, Garnet Chan, Alex Dalzell, Zhiyan Ding, Hsin-Yuan (Robert) Huang, Lin Lin, Yunchao Liu, Sam McArdle,  Jonathan Moussa,  Evgeny Mozgunov, 
  Tobias Osborne, Wocjan Pawel, Patrick Rall, Mehdi Soleimanifar, Kristan Temme, Umesh Vazirani, and Tong Yu for helpful discussions. We also thank anonymous referees for their helpful feedback. AG thanks Chunhao Wang, Dávid Matolcsi, Cambyse Rouzé, and Daniel Stilck França for useful discussions. CFC is supported by the Eddlemen Fellowship and the AWS Center for Quantum Computing internship. AG acknowledges funding from the AWS Center for Quantum Computing.

\bibliographystyle{alphaUrlePrint.bst}
\bibliography{ref,qc_gily}

\newcommand{\etalchar}[1]{$^{#1}$}
\newcommand{\lName}{1}\newcommand{\arxiv}[1]{arXiv:
  \href{https://arxiv.org/abs/#1}{\ttfamily{#1}}\removefirstdot}\newcommand{\arXiv}[1]{arXiv:
  \href{https://arxiv.org/abs/#1}{\ttfamily{#1}}\removefirstdot}\def\removefirstdot#1{\if.#1{}\else#1\fi}\providecommand{\multiletter}[1]{#1}\renewcommand{\multiletter}[1]{#1}\DeclareRobustCommand{\dutchPrefix}[2]{#2}\providecommand{\dutchPrefix}[2]{#2}\renewcommand{\dutchPrefix}[2]{#2}\newcommand{\skp}[3]{#2}\newcommand{\focs
  }[1]{\if\lName1\skp{ }{Proceedings of the #1 {IEEE} Symposium on Foundations
  of Computer Science ({FOCS})}{ }\else{FOCS}\fi}\newcommand{\stoc
  }[1]{\if\lName1\skp{ }{Proceedings of the #1 {ACM} Symposium on the Theory of
  Computing ({STOC})}{ }\else{STOC}\fi}\newcommand{\soda }[1]{\if\lName1\skp{
  }{Proceedings of the #1 {ACM-SIAM} Symposium on Discrete Algorithms
  ({SODA})}{ }\else{SODA}\fi}\newcommand{\stacs }[1]{\if\lName1\skp{
  }{Proceedings of the #1 Symposium on Theoretical Aspects of Computer Science
  ({STACS})}{ }\else{STACS}\fi}\newcommand{\itcs }[1]{\if\lName1\skp{
  }{Proceedings of the #1 Innovations in Theoretical Computer Science
  Conference ({ITCS})}{ }\else{ITCS}\fi}\newcommand{\fsttcs
  }[1]{\if\lName1\skp{ }{Proceedings of the #1 International Conference on
  Foundations of Software Technology and Theoretical Computer Science
  ({FSTTCS})}{ }\else{FSTTCS}\fi}\newcommand{\mfcs }[1]{\if\lName1\skp{
  }{Proceedings of the #1 International Symposium on Mathematical Foundations
  of Computer Science ({MFCS})}{ }\else{MFCS}\fi}\newcommand{\ccc
  }[1]{\if\lName1\skp{ }{Proceedings of the #1 {IEEE} Conference on
  Computational Complexity ({CCC})}{ }\else{CCC}\fi}\newcommand{\isit
  }[1]{\if\lName1\skp{ }{Proceedings of the #1 {IEEE} International Symposium
  on Information Theory ({ISIT})}{ }\else{ISIT}\fi}\newcommand{\colt
  }[1]{\if\lName1\skp{ }{Proceedings of the #1 Conference On Learning Theory
  ({COLT})}{ }\else{COLT}\fi}\newcommand{\nips }[1]{\if\lName1\skp{ }{Advances
  in Neural Information Processing Systems #1 ({NIPS})}{
  }\else{NIPS}\fi}\newcommand{\aistats }[1]{\if\lName1\skp{ }{Proceedings of
  the #1 International Conference on Artificial Intelligence and Statistics
  ({AISTATS})}{ }\else{AISTATS}\fi}\newcommand{\icml }[1]{\if\lName1\skp{
  }{Proceedings of the #1 International Conference on Machine Learning
  ({ICML})}{ }\else{ICML}\fi}\newcommand{\icalp }[1]{\if\lName1\skp{
  }{Proceedings of the #1 International Colloquium on Automata, Languages, and
  Programming ({ICALP})}{ }\else{ICALP}\fi}\newcommand{\esa
  }[1]{\if\lName1\skp{ }{Proceedings of the #1 Annual European Symposium on
  Algorithms ({ESA})}{ }\else{ESA}\fi}\newcommand{\tqc }[1]{\if\lName1\skp{
  }{Proceedings of the #1 Conference on the Theory of Quantum Computation,
  Communication, and Cryptography ({TQC})}{}\else{TQC}\fi}\newcommand{\isaac
  }[1]{\if\lName1\skp{ }{Proceedings of the #1 International Symposium on
  Algorithms and Computation ({ISAAC})}{ }\else{ISAAC}\fi}\newcommand{\jacm
  }{\if\lName1\skp{ }{Journal of the ACM}{ }\else{J. ACM}\fi}\newcommand{\acmta
  }{\if\lName1\skp{ }{ACM Transactions on Algorithms}{ }\else{{ACM} Tr.
  Alg}\fi}\newcommand{\acmtct }{\if\lName1\skp{ }{ACM Transactions on
  Computation Theory}{ }\else{ACM Tr. Comp. Th.}\fi}\newcommand{\acmtqc
  }{\if\lName1\skp{ }{ACM Transactions on Quantum Computing}{ }\else{ACM Tr.
  Quant. Comp.}\fi}\newcommand{\jams }{\if\lName1\skp{ }{Journal of the AMS}{
  }\else{J. AMS}\fi}\newcommand{\pams }{\if\lName1\skp{ }{Proceedings of the
  AMS}{ }\else{Proc. AMS}\fi}\newcommand{\linalgappl }{\if\lName1\skp{ }{Linear
  Algebra and its Applications}{ }\else{Lin. Alg. \&
  App.}\fi}\newcommand{\jalgo }{\if\lName1\skp{ }{Journal of Algorithms}{
  }\else{J. Alg.}\fi}\newcommand{\jcss }{\if\lName1\skp{ }{Journal of Computer
  and System Sciences}{ }\else{J. Comp. Sys. Sci.}\fi}\newcommand{\cc
  }{\if\lName1\skp{ }{Computational Complexity}{ }\else{Comp.
  Comp.}\fi}\newcommand{\algor }{\if\lName1\skp{ }{Algorithmica}{
  }\else{Alg.}\fi}\newcommand{\comb }{\if\lName1\skp{ }{Combinatorica}{
  }\else{Comb.}\fi}\newcommand{\cacm }{\if\lName1\skp{ }{Communications of the
  ACM}{ }\else{Comm. ACM}\fi}\newcommand{\sigart }{\if\lName1\skp{ }{SIGART
  Bulletin}{ }\else{SIGART Bull.}\fi}\newcommand{\sigactn }{\if\lName1\skp{
  }{SIGACT News}{ }\else{SIGACT News}\fi}\newcommand{\eatcsbul
  }{\if\lName1\skp{ }{Bulletin of the {EATCS}}{ }\else{Bull.
  {EATCS}}\fi}\newcommand{\siamrev }{\if\lName1\skp{ }{SIAM Review}{
  }\else{SIAM Rev.}\fi}\newcommand{\siamjc }{\if\lName1\skp{ }{SIAM Journal on
  Computing}{ }\else{SIAM J. Comp.}\fi}\newcommand{\siamjo }{\if\lName1\skp{
  }{SIAM Journal on Optimization}{ }\else{SIAM J. Opt.}\fi}\newcommand{\siamjdm
  }{\if\lName1\skp{ }{SIAM Journal on Discrete Mathematics}{ }\else{SIAM J.
  Disc. Math.}\fi}\newcommand{\siamjnum }{\if\lName1\skp{ }{SIAM Journal on
  Numerical Analysis}{ }\else{SIAM J. Num. Anal.}\fi}\newcommand{\siamjmathanal
  }{\if\lName1\skp{ }{SIAM Journal on Mathematical Analysis}{ }\else{SIAM J.
  Math. Anal.}\fi}\newcommand{\discmath }{\if\lName1\skp{ }{Discrete
  Mathematics}{ }\else{Disc. Math.}\fi}\newcommand{\das }{\if\lName1\skp{
  }{Discrete Applied Mathematics}{ }\else{Disc. App.
  Math.}\fi}\newcommand{\amatstat }{\if\lName1\skp{ }{Annals of Mathematical
  Statistics}{ }\else{Ann. Math. Stat.}\fi}\newcommand{\rms }{\if\lName1\skp{
  }{Russian Mathematical Surveys}{ }\else{Russ. Math.
  Surv.}\fi}\newcommand{\invmath }{\if\lName1\skp{ }{Inventiones Mathematicae}{
  }\else{Inv. Math.}\fi}\newcommand{\jnumber }{\if\lName1\skp{ }{Journal of
  Number Theory}{ }\else{J. Num. Th.}\fi}\newcommand{\tcs }{\if\lName1\skp{
  }{Theoretical Computer Science}{ }\else{Theor. Comput.
  Sci.}\fi}\newcommand{\toc }{\if\lName1\skp{ }{Theory of Computing}{
  }\else{Th. Comp.}\fi}\newcommand{\cjtcs }{\if\lName1\skp{ }{Chicago Journal
  of Theoretical Computer Science}{}\else{Chic. J. Th. Comp.
  Sci.}\fi}\newcommand{\quantum }{\if\lName1\skp{ }{{Quantum}}{
  }\else{Quant.}\fi}\newcommand{\cmp }{\if\lName1\skp{ }{Communications in
  Mathematical Physics}{ }\else{Comm. Math. Phys.}\fi}\newcommand{\jmp
  }{\if\lName1\skp{ }{Journal of Mathematical Physics}{ }\else{J. Math.
  Phys.}\fi}\newcommand{\rspa }{\if\lName1\skp{ }{Proceedings of the Royal
  Society A}{ }\else{Proc. Roy. Soc. A}\fi}\newcommand{\qic }{\if\lName1\skp{
  }{Quantum Information and Computation}{ }\else{Quant. Inf. \&
  Comp.}\fi}\newcommand{\physrev }{\if\lName1\skp{ }{Physical Review}{
  }\else{Phys. Rev.}\fi}\newcommand{\pra }{\if\lName1\skp{ }{Physical Review
  A}{ }\else{Phys. Rev. A}\fi}\newcommand{\prb }{\if\lName1\skp{ }{Physical
  Review B}{ }\else{Phys. Rev. B}\fi}\newcommand{\pre }{\if\lName1\skp{
  }{Physical Review E}{ }\else{Phys. Rev. E}\fi}\newcommand{\prr
  }{\if\lName1\skp{ }{Physical Review Research}{ }\else{Phys. Rev.
  Research}\fi}\newcommand{\prx }{\if\lName1\skp{ }{Physical Review X}{
  }\else{Phys. Rev. X}\fi}\newcommand{\prl }{\if\lName1\skp{ }{Physical Review
  Letters}{ }\else{Phys. Rev. Lett.}\fi}\newcommand{\njp }{\if\lName1\skp{
  }{New Journal of Physics}{ }\else{New J. Phys.}\fi}\newcommand{\prapp
  }{\if\lName1\skp{ }{Physical Review Applied}{ }\else{Phys. Rev.
  Appl.}\fi}\newcommand{\physrep }{\if\lName1\skp{ }{Physics Reports}{
  }\else{Phys. Rep.}\fi}\newcommand{\rmp }{\if\lName1\skp{ }{Reviews of Modern
  Physics}{ }\else{Rev. Mod. Phys. }\fi}\newcommand{\phystoday
  }{\if\lName1\skp{ }{Physics Today}{ }\else{Phys.
  Today}\fi}\newcommand{\physics }{\if\lName1\skp{ }{Physics}{
  }\else{Phys.}\fi}\newcommand{\nature }{\if\lName1\skp{ }{Nature}{
  }\else{Nat.}\fi}\newcommand{\natcomm }{\if\lName1\skp{ }{Nature
  Communications}{ }\else{Nat. Comm.}\fi}\newcommand{\natphys }{\if\lName1\skp{
  }{Nature Physics}{ }\else{Nat. Phys.}\fi}\newcommand{\npjqi }{\if\lName1\skp{
  }{npj Quantum Information}{ }\else{npj Quant. Inf.}\fi}\newcommand{\scirep
  }{\if\lName1\skp{ }{Scientific Reports}{ }\else{Sci.
  Rep.}\fi}\newcommand{\science }{\if\lName1\skp{ }{Science}{
  }\else{Sci.}\fi}\newcommand{\jpa }{\if\lName1\skp{ }{Journal of Physics A:
  Mathematical and Theoretical}{ }\else{J. Phys. A}\fi}\newcommand{\ijtp
  }{\if\lName1\skp{ }{International Journal of Theoretical Physics}{
  }\else{Int. J. Th. Phys.}\fi}\newcommand{\jmo }{\if\lName1\skp{ }{Journal of
  Modern Optics}{ }\else{J. Mod. Opt.}\fi}\newcommand{\jstatph
  }{\if\lName1\skp{ }{Journal of Statistical Physics}{ }\else{J. Stat.
  Phys.}\fi}\newcommand{\pnas }{\if\lName1\skp{ }{Proceedings of the National
  Academy of Sciences}{ }\else{PNAS}\fi}\newcommand{\lncs }{\if\lName1\skp{
  }{Lecture Notes in Computer Science}{ }\else{L. Notes Comp.
  Sci.}\fi}\newcommand{\lnai }{\if\lName1\skp{ }{Lecture Notes in Artificial
  Intelligence}{ }\else{L. Notes Art. Int.}\fi}\newcommand{\lnm
  }{\if\lName1\skp{ }{Lecture Notes in Mathematics}{ }\else{L. Notes
  Math.}\fi}\newcommand{\tams }{\if\lName1\skp{ }{Transactions of the American
  Mathematical Society}{ }\else{Trans. AMS}\fi}\newcommand{\ieeetit
  }{\if\lName1\skp{ }{{IEEE} Transactions on Information Theory}{ }\else{{IEEE}
  Trans. Inf. Th.}\fi}\newcommand{\iscs }{\if\lName1\skp{ }{International
  Series in Computer Science}{ }\else{Int. Ser. Comp.
  Sci.}\fi}\newcommand{\tocl }{\if\lName1\skp{ }{Theory of Computing Library}{
  }\else{Th. Comp. Lib.}\fi}
\begin{thebibliography}{{\dutchPrefix{Apeldoorn}{v}}AGG{\dutchPrefix{Wolf}{d}}W20}

\bibitem[AAG22]{anshu2022area}
Anurag Anshu, Itai Arad, and David Gosset.
\newblock An area law for 2d frustration-free spin systems.
\newblock In {\em Proceedings of the 54th Annual ACM SIGACT Symposium on Theory
  of Computing}, pages 12--18, 2022.

\bibitem[AAR{\etalchar{+}}18]{amin2016QBoltzMachine}
Mohammad~H. Amin, Evgeny Andriyash, Jason Rolfe, Bohdan Kulchytskyy, and Roger
  Melko.
\newblock \href{http://dx.doi.org/10.1103/PhysRevX.8.021050}{Quantum
  {B}oltzmann machine}.
\newblock {\em \prx}, 8(2):021050, 2018.
\newblock \arxiv{1601.02036}.

\bibitem[AGIK09]{aharonov2009power}
Dorit Aharonov, Daniel Gottesman, Sandy Irani, and Julia Kempe.
\newblock The power of quantum systems on a line.
\newblock {\em Communications in mathematical physics}, 287(1):41--65, 2009.

\bibitem[AL16]{Albash2016AdiabaticQC}
Tameem Albash and Daniel~A. Lidar.
\newblock Adiabatic quantum computation.
\newblock {\em Reviews of Modern Physics}, 90:015002, 2016.

\bibitem[{\dutchPrefix{Apeldoorn}{v}}ACGN23]{apeldoorn2022QTomographyWStatePrepUnis}
Joran {\dutchPrefix{Apeldoorn}{v}}an~Apeldoorn, Arjan Cornelissen, András
  Gilyén, and Giacomo Nannicini.
\newblock \href{http://dx.doi.org/10.1137/1.9781611977554.ch47}{Quantum
  tomography using state-preparation unitaries}.
\newblock In {\em \soda{36th}}, pages 1265--1318, 2023.
\newblock \arxiv{2207.08800}.

\bibitem[{\dutchPrefix{Apeldoorn}{v}}AGG{\dutchPrefix{Wolf}{d}}W20]{apeldoorn2017QSDPSolvers}
Joran {\dutchPrefix{Apeldoorn}{v}}an~Apeldoorn, Andr{\'a}s Gily{\'e}n, Sander
  Gribling, and Ronald {\dutchPrefix{Wolf}{d}}e~Wolf.
\newblock \href{http://dx.doi.org/10.22331/q-2020-02-14-230}{Quantum
  {SDP}-solvers: {B}etter upper and lower bounds}.
\newblock {\em \quantum}, 4:230, 2020.
\newblock Earlier version in FOCS'17. \arxiv{1705.01843}.

\bibitem[BCG14]{berry2014GateEfficientDiscreteSim}
Dominic~W. Berry, Richard Cleve, and Sevag Gharibian.
\newblock \href{http://dx.doi.org/10.26421/QIC14.1-2-1}{Gate-efficient discrete
  simulations of continuous-time quantum query algorithms}.
\newblock {\em \qic}, 14(1\&2):1--30, 2014.
\newblock \arxiv{1211.4637}.

\bibitem[BCG{\etalchar{+}}21]{Bardet2021EntropyDF}
Ivan Bardet, \'Angela Capel, Li~Gao, Angelo Lucia, David Pérez-García, and
  Cambyse Rouzé.
\newblock Entropy decay for {D}avies semigroups of a one dimensional quantum
  lattice, 2021.
\newblock \arxiv{2112.00601}.

\bibitem[Bha97]{bhatia1997MatrixAnalysis}
Rajendra Bhatia.
\newblock \href{http://dx.doi.org/10.1007/978-1-4612-0653-8}{{\em Matrix
  Analysis}}, volume 169 of {\em Graduate Texts in Mathematics}.
\newblock Springer, 1997.

\bibitem[BK19]{brandao2019finite_prepare}
Fernando G. S.~L. Brandao and Michael~J. Kastoryano.
\newblock Finite correlation length implies efficient preparation of quantum
  thermal states, 2019.
\newblock \arxiv{1609.07877}.

\bibitem[BKS10]{boixo2010QAlgTraversingEigStatePaths}
Sergio Boixo, Emanuel Knill, and Rolando~D. Somma.
\newblock Fast quantum algorithms for traversing paths of eigenstates, 2010.
\newblock \arxiv{1005.3034}.

\bibitem[BP07]{Breuer2006open}
Heinz-Peter Breuer and Francesco Petruccione.
\newblock
  \href{http://dx.doi.org/10.1093/acprof:oso/9780199213900.001.0001}{{\em The
  Theory of Open Quantum Systems}}.
\newblock Oxford University Press, 2007.

\bibitem[BS17]{brandao2016QSDPSpeedup}
Fernando G. S.~L. Brand{\~a}o and Krysta~M. Svore.
\newblock \href{http://dx.doi.org/10.1109/FOCS.2017.45}{Quantum speed-ups for
  solving semidefinite programs}.
\newblock In {\em \focs{58th}}, pages 415--426, 2017.
\newblock \arxiv{1609.05537}.

\bibitem[BSG{\etalchar{+}}22]{berry2022QuantTopologicalData}
Dominic~W. Berry, Yuan Su, Casper Gyurik, Robbie King, Joao Basso, Alexander
  Del~Toro Barba, Abhishek Rajput, Nathan Wiebe, Vedran Dunjko, and Ryan
  Babbush.
\newblock Quantifying quantum advantage in topological data analysis, 2022.
\newblock \arxiv{2209.13581}.

\bibitem[{\dutchPrefix{Burg}{v}}BLH{\etalchar{+}}21]{2021_Microsoft_catalysis}
Vera {\dutchPrefix{Burg}{v}}on~Burg, Guang~Hao Low, Thomas Häner, Damian~S.
  Steiger, Markus Reiher, Martin Roetteler, and Matthias Troyer.
\newblock \href{http://dx.doi.org/10.1103/physrevresearch.3.033055}{Quantum
  computing enhanced computational catalysis}.
\newblock {\em Physical Review Research}, 3(3), Jul 2021.

\bibitem[BWM{\etalchar{+}}18]{babbush2018low}
Ryan Babbush, Nathan Wiebe, Jarrod McClean, James McClain, Hartmut Neven, and
  Garnet Kin-Lic Chan.
\newblock \href{http://dx.doi.org/10.1103/PhysRevX.8.011044}{Low-depth quantum
  simulation of materials}.
\newblock {\em Phys. Rev. X}, 8:011044, 2018.

\bibitem[CB21]{ETH_thermalization_Chen21}
Chi-Fang Chen and Fernando G. S.~L. Brandão.
\newblock Fast thermalization from the eigenstate thermalization hypothesis,
  2021.
\newblock \arxiv{2112.07646}.

\bibitem[CKBG23]{chen2023QThermalStatePrep}
Chi-Fang Chen, Michael~J. Kastoryano, Fernando G. S.~L. Brandão, and András
  Gilyén.
\newblock Quantum thermal state preparation.
\newblock \arxiv{2303.18224}, 2023.

\bibitem[CL17]{childs2016SparseLindbladianSim}
Andrew~M. Childs and Tongyang Li.
\newblock \href{http://dx.doi.org/10.26421/QIC17.11-12}{Efficient simulation of
  sparse {M}arkovian quantum dynamics}.
\newblock {\em \qic}, 17(11\&12):901--947, 2017.
\newblock \arxiv{1611.05543}.

\bibitem[Cla14]{clark2014HonorsCalculus}
Pete~L. Clark.
\newblock \href{http://alpha.math.uga.edu/~pete/2400full.pdf}{Honors calculus}.
\newblock 2014.

\bibitem[CNAA{\etalchar{+}}20]{Chamberland2020BuildingAF}
Christopher Chamberland, Kyungjoo Noh, Patricio Arrangoiz-Arriola, Earl~T.
  Campbell, Connor~T. Hann, Joseph~K. Iverson, Harald Putterman, Thomas~C.
  Bohdanowicz, Steven~T. Flammia, A.~J. Keller, Gil Refael, John Preskill,
  Liang Jiang, Amir~H. Safavi-Naeini, Oskar~J. Painter, and Fernando G. S.~L.
  Brand{\~a}o.
\newblock Building a fault-tolerant quantum computer using concatenated cat
  codes.
\newblock {\em PRX Quantum}, 2020.

\bibitem[CRSF21]{capel2021modified}
Ángela Capel, Cambyse Rouzé, and Daniel Stilck~França.
\newblock The modified logarithmic {S}obolev inequality for quantum spin
  systems: classical and commuting nearest neighbour interactions, 2021.
\newblock \arxiv{2009.11817}.

\bibitem[CSG{\etalchar{+}}23]{consiglio2023variational}
Mirko Consiglio, Jacopo Settino, Andrea Giordano, Carlo Mastroianni, Francesco
  Plastina, Salvatore Lorenzo, Sabrina Maniscalco, John Goold, and Tony J.~G.
  Apollaro.
\newblock Variational {G}ibbs state preparation on {NISQ} devices.
\newblock \arxiv{2303.11276}, 2023.

\bibitem[Cub23]{cubitt2023DissipativeStatePrep}
Toby~S. Cubitt.
\newblock Dissipative ground state preparation and the dissipative quantum
  eigensolver, 2023.
\newblock \arxiv{2303.11962}.

\bibitem[CW12]{childs2012HamSimLCU}
Andrew~M. Childs and Nathan Wiebe.
\newblock \href{http://dx.doi.org/10.26421/QIC12.11-12}{Hamiltonian simulation
  using linear combinations of unitary operations}.
\newblock {\em \qic}, 12(11\&12):901--924, 2012.
\newblock \arxiv{1202.5822}.

\bibitem[CW17]{cleve2016EffLindbladianSim}
Richard Cleve and Chunhao Wang.
\newblock \href{http://dx.doi.org/10.4230/LIPIcs.ICALP.2017.17}{Efficient
  quantum algorithms for simulating {L}indblad evolution}.
\newblock In {\em \icalp{44th}}, pages 17:1--17:14, 2017.
\newblock \arxiv{1612.09512}.

\bibitem[Dav74]{davies74}
Edward~Brian Davies.
\newblock \href{http://dx.doi.org/10.1007/BF01608389}{Markovian master
  equations}.
\newblock {\em Communications in Mathematical Physics}, 39(2):91--110, 1974.

\bibitem[Dav76]{davies76}
Edward~Brian Davies.
\newblock \href{http://dx.doi.org/10.1007/BF01351898}{Markovian master
  equations. {II}}.
\newblock {\em Mathematische Annalen}, 219(2):147--158, 1976.

\bibitem[DCL23]{ding2023single}
Zhiyan Ding, Chi-Fang Chen, and Lin Lin.
\newblock Single-ancilla ground state preparation via lindbladians.
\newblock {\em arXiv preprint arXiv:2308.15676}, 2023.

\bibitem[DKPR16]{ETH_review_2016}
Luca D’Alessio, Yariv Kafri, Anatoli Polkovnikov, and Marcos Rigol.
\newblock \href{http://dx.doi.org/10.1080/00018732.2016.1198134}{From quantum
  chaos and eigenstate thermalization to statistical mechanics and
  thermodynamics}.
\newblock {\em Advances in Physics}, 65(3):239–362, May 2016.

\bibitem[Fey82]{feynman1982SimQPhysWithComputers}
Richard~P. Feynman.
\newblock \href{http://dx.doi.org/10.1007/BF02650179}{Simulating physics with
  computers}.
\newblock {\em \ijtp}, 21(6-7):467--488, 1982.

\bibitem[FGGS00]{farhi2000QCompAdiabatic}
Edward Farhi, Jeffrey Goldstone, Sam Gutmann, and Michael Sipser.
\newblock Quantum computation by adiabatic evolution.
\newblock \arxiv{quant-ph/0001106}, 2000.

\bibitem[GI09]{gottesman2009quantum}
Daniel Gottesman and Sandy Irani.
\newblock The quantum and classical complexity of translationally invariant
  tiling and {H}amiltonian problems.
\newblock In {\em 2009 50th Annual IEEE Symposium on Foundations of Computer
  Science}, pages 95--104. IEEE, 2009.

\bibitem[GLG22]{gharibian2022dequantizing}
Sevag Gharibian and Fran{\c{c}}ois Le~Gall.
\newblock {Dequantizing the quantum singular value transformation: hardness and
  applications to quantum chemistry and the quantum PCP conjecture}.
\newblock In {\em Proceedings of the 54th Annual ACM SIGACT Symposium on Theory
  of Computing}, pages 19--32, 2022.

\bibitem[GS17]{gilyen2016PrepGapHamEffQLLL}
András Gilyén and Or~Sattath.
\newblock \href{http://dx.doi.org/10.1109/FOCS.2017.47}{On preparing ground
  states of gapped {H}amiltonians: An efficient quantum {L}ovász local lemma}.
\newblock In {\em \focs{58th}}, pages 439--450, 2017.
\newblock \arxiv{1611.08571}.

\bibitem[GSLW19]{gilyen2018QSingValTransf}
András Gilyén, Yuan Su, Guang~Hao Low, and Nathan Wiebe.
\newblock \href{http://dx.doi.org/10.1145/3313276.3316366}{Quantum singular
  value transformation and beyond: {E}xponential improvements for quantum
  matrix arithmetics}.
\newblock In {\em \stoc{51st}}, pages 193--204, 2019.
\newblock \arxiv{1806.01838}.

\bibitem[Haa19]{haah2018ProdDecPerFuncQSignPRoc}
Jeongwan Haah.
\newblock \href{http://dx.doi.org/10.22331/q-2019-10-07-190}{Product
  {D}ecomposition of {P}eriodic {F}unctions in {Q}uantum {S}ignal
  {P}rocessing}.
\newblock {\em \quantum}, 3:190, 2019.
\newblock \arxiv{1806.10236}.

\bibitem[Hab21]{haber2021MatExpAndLogLectureNotes}
Howard~E. Haber.
\newblock Notes on the matrix exponential and logarithm, 2021.

\bibitem[Has07]{Hastings2007AnAL}
Matthew~B. Hastings.
\newblock An area law for one-dimensional quantum systems.
\newblock {\em Journal of Statistical Mechanics: Theory and Experiment},
  2007:P08024 -- P08024, 2007.

\bibitem[HMS{\etalchar{+}}22]{Holmes_2022_thermal}
Zoe Holmes, Gopikrishnan Muraleedharan, Rolando~D. Somma, Yigit Subasi, and
  Burak {\c{S}}ahino{\u{g}}lu.
\newblock \href{http://dx.doi.org/10.22331/q-2022-10-06-825}{Quantum algorithms
  from fluctuation theorems: Thermal-state preparation}.
\newblock {\em Quantum}, 6:825, oct 2022.

\bibitem[HWM{\etalchar{+}}21]{huggins2021QAlgMultipleExpectationValues}
William~J. Huggins, Kianna Wan, Jarrod McClean, Thomas~E. O'Brien, Nathan
  Wiebe, and Ryan Babbush.
\newblock Nearly optimal quantum algorithm for estimating multiple expectation
  values.
\newblock \arxiv{2111.09283}, 2021.

\bibitem[KB16]{kastoryano2016commuting}
Michael~J. Kastoryano and Fernando G. S.~L. Brandao.
\newblock Quantum {G}ibbs samplers: the commuting case, 2016.
\newblock \arxiv{1409.3435}.

\bibitem[KOS07]{knill2007optimal}
Emanuel Knill, Gerardo Ortiz, and Rolando~D Somma.
\newblock Optimal quantum measurements of expectation values of observables.
\newblock {\em Physical Review A}, 75(1):012328, 2007.

\bibitem[KSVV02]{kitaev2002classical}
Alexei~Yu Kitaev, Alexander Shen, Mikhail~N Vyalyi, and Mikhail~N Vyalyi.
\newblock {\em Classical and quantum computation}.
\newblock Graduate studies in mathematics. American Mathematical Society, 2002.

\bibitem[KT13]{kastoryano2013quantum}
Michael~J. Kastoryano and Kristan Temme.
\newblock Quantum logarithmic {S}obolev inequalities and rapid mixing.
\newblock {\em Journal of Mathematical Physics}, 54(5):052202, 2013.

\bibitem[Kö88]{korner1988FourierAnalysis}
Thomas~William Körner.
\newblock \href{http://dx.doi.org/10.1017/CBO9781107049949}{{\em Fourier
  Analysis}}.
\newblock Cambridge University Press, 1988.

\bibitem[LBG{\etalchar{+}}21]{THC_google}
Joonho Lee, Dominic~W. Berry, Craig Gidney, William~J. Huggins, Jarrod~R.
  McClean, Nathan Wiebe, and Ryan Babbush.
\newblock \href{http://dx.doi.org/10.1103/prxquantum.2.030305}{Even more
  efficient quantum computations of chemistry through tensor hypercontraction}.
\newblock {\em PRX Quantum}, 2(3), Jul 2021.

\bibitem[LBnC21]{Lu2020AlgorithmsFQ}
Sirui Lu, Mari~Carmen Ba\~nuls, and J.~Ignacio Cirac.
\newblock \href{http://dx.doi.org/10.1103/PRXQuantum.2.020321}{Algorithms for
  quantum simulation at finite energies}.
\newblock {\em PRX Quantum}, 2(2):020321, 2021.
\newblock \arxiv{2006.03032}.

\bibitem[LC17a]{low2017HamSimUnifAmp}
Guang~Hao Low and Isaac~L. Chuang.
\newblock Hamiltonian simulation by uniform spectral amplification.
\newblock \arxiv{1707.05391}, 2017.

\bibitem[LC17b]{low2016HamSimQSignProc}
Guang~Hao Low and Isaac~L. Chuang.
\newblock \href{http://dx.doi.org/10.1103/PhysRevLett.118.010501}{Optimal
  {H}amiltonian simulation by quantum signal processing}.
\newblock {\em \prl}, 118(1):010501, 2017.
\newblock \arxiv{1606.02685}.

\bibitem[Llo96]{lloyd1996UnivQSim}
Seth Lloyd.
\newblock \href{http://dx.doi.org/10.1126/science.273.5278.1073}{Universal
  quantum simulators}.
\newblock {\em \science}, 273(5278):1073--1078, 1996.

\bibitem[LLZ{\etalchar{+}}22]{lee2022there}
Seunghoon Lee, Joonho Lee, Huanchen Zhai, Yu~Tong, Alexander~M Dalzell,
  Ashutosh Kumar, Phillip Helms, Johnnie Gray, Zhi-Hao Cui, Wenyuan Liu,
  Michael Kastoryano, Ryan Babbush, John Preskill, David~R. Reichman, Earl~T.
  Campbell, Edward~F. Valeev, Lin Lin, and Garnet Kin-Lic Chan.
\newblock Is there evidence for exponential quantum advantage in quantum
  chemistry?, 2022.
\newblock \arxiv{2208.02199}.

\bibitem[LPW{\etalchar{+}}17]{Markovchain_mixing}
David~Asher Levin, Yuval Peres, Elizabeth~L. Wilmer, James Propp, and David~B.
  Wilson.
\newblock {\em Markov chains and mixing times}.
\newblock American Mathematical Society, 2017.

\bibitem[LT22]{lin2022heisenberg}
Lin Lin and Yu~Tong.
\newblock Heisenberg-limited ground-state energy estimation for early
  fault-tolerant quantum computers.
\newblock {\em PRX Quantum}, 3(1):010318, 2022.

\bibitem[LVV15]{landau2015polynomial}
Zeph Landau, Umesh Vazirani, and Thomas Vidick.
\newblock A polynomial time algorithm for the ground state of one-dimensional
  gapped local hamiltonians.
\newblock {\em Nature Physics}, 11(7):566--569, 2015.

\bibitem[LW22]{li2022SimMarkOpen}
Xiantao Li and Chunhao Wang.
\newblock Simulating {M}arkovian open quantum systems using higher-order series
  expansion.
\newblock \arxiv{2212.02051}, 2022.

\bibitem[LW23]{li2021EffSimNonMarkov}
Xiantao Li and Chunhao Wang.
\newblock \href{http://dx.doi.org/10.1007/s00220-023-04638-4}{Succinct
  description and efficient simulation of non-{M}arkovian open quantum
  systems}.
\newblock {\em \cmp}, 2023.
\newblock \arxiv{2111.03240}.

\bibitem[MABL13]{Christian_2013_Coarse_graining}
Christian Majenz, Tameem Albash, Heinz-Peter Breuer, and Daniel~A. Lidar.
\newblock \href{http://dx.doi.org/10.1103/PhysRevA.88.012103}{Coarse graining
  can beat the rotating-wave approximation in quantum {M}arkovian master
  equations}.
\newblock {\em Phys. Rev. A}, 88:012103, Jul 2013.

\bibitem[Mar99]{martinelli1999lectures}
Fabio Martinelli.
\newblock \href{http://dx.doi.org/10.1007/978-3-540-48115-7_2}{Lectures on
  {G}lauber dynamics for discrete spin models}.
\newblock In {\em Lectures on probability theory and statistics}, pages
  93--191. Springer, 1999.

\bibitem[MBS{\etalchar{+}}18]{mcClean2018BarrenPlateausInQNN}
Jarrod~R. McClean, Sergio Boixo, Vadim~N. Smelyanskiy, Ryan Babbush, and
  Hartmut Neven.
\newblock \href{http://dx.doi.org/10.1038/s41467-018-07090-4}{Barren plateaus
  in quantum neural network training landscapes}.
\newblock {\em \natcomm}, 9, 2018.
\newblock \arxiv{1803.11173}.

\bibitem[MGB22]{McArdle_2022quantumstate}
Sam McArdle, András Gilyén, and Mario Berta.
\newblock Quantum state preparation without coherent arithmetic, 2022.
\newblock \arxiv{2210.14892}.

\bibitem[ML20]{Mozgunov2020completelypositive}
Evgeny Mozgunov and Daniel Lidar.
\newblock \href{http://dx.doi.org/10.22331/q-2020-02-06-227}{Completely
  positive master equation for arbitrary driving and small level spacing}.
\newblock {\em {Quantum}}, 4:227, February 2020.

\bibitem[Mou19]{Moussa2019LowDepthQM}
Jonathan~Edward Moussa.
\newblock Low-depth quantum {M}etropolis algorithm.
\newblock \arxiv{1903.01451}, 2019.

\bibitem[MS13]{Maldacena2013CoolHF}
Juan Maldacena and Leonard Susskind.
\newblock Cool horizons for entangled black holes.
\newblock {\em Fortschritte der Physik}, 61, 2013.

\bibitem[MS19]{Martyn2018ProductSA}
John Martyn and Brian Swingle.
\newblock \href{http://dx.doi.org/10.1103/PhysRevA.100.032107}{Product spectrum
  ansatz and the simplicity of thermal states}.
\newblock {\em \pra}, 100(3):032107, 2019.
\newblock \arxiv{1812.01015}.

\bibitem[MST{\etalchar{+}}19]{Motta_2019_QITE}
Mario Motta, Chong Sun, Adrian T.~K. Tan, Matthew~J. O'Rourke, Erika Ye,
  Austin~J. Minnich, Fernando G. S.~L. Brand{\~{a}}o, and Garnet Kin-Lic Chan.
\newblock \href{http://dx.doi.org/10.1038/s41567-019-0704-4}{Determining
  eigenstates and thermal states on a quantum computer using quantum imaginary
  time evolution}.
\newblock {\em Nature Physics}, 16(2):205--210, nov 2019.

\bibitem[MW05]{marriott2005QAMGames}
Chris Marriott and John Watrous.
\newblock \href{http://dx.doi.org/10.1007/s00037-005-0194-x}{Quantum
  {A}rthur--{M}erlin games}.
\newblock {\em \cc}, 14(2):122--152, 2005.
\newblock \arxiv{cs/0506068}.

\bibitem[Mü06]{muger2006LebesueRiemannInt}
Michael Müger.
\newblock Lebesgue’s characterization of {R}iemann integrable functions,
  2006.

\bibitem[NR20]{Nathan2020universal}
Frederik Nathan and Mark~S. Rudner.
\newblock \href{http://dx.doi.org/10.1103/physrevb.102.115109}{Universal
  {L}indblad equation for open quantum systems}.
\newblock {\em Physical Review B}, 102(11), sep 2020.

\bibitem[NWZ09]{nagaj2009FastAmpQMA}
Daniel Nagaj, Pawel Wocjan, and Yong Zhang.
\newblock \href{http://dx.doi.org/10.26421/QIC9.11-12}{Fast amplification of
  {QMA}}.
\newblock {\em \qic}, 9(11\&12):1053--1068, 2009.
\newblock \arxiv{0904.1549}.

\bibitem[Ral21]{rall2021faster}
Patrick Rall.
\newblock Faster coherent quantum algorithms for phase, energy, and amplitude
  estimation, 2021.
\newblock \arxiv{2103.09717}.

\bibitem[Red65]{Redfield1965}
Alfred~Guillou Redfield.
\newblock
  \href{http://dx.doi.org/https://doi.org/10.1016/B978-1-4832-3114-3.50007-6}{The
  theory of relaxation processes}.
\newblock In John~S. Waugh, editor, {\em Advances in Magnetic Resonance},
  volume~1 of {\em Advances in Magnetic and Optical Resonance}, pages 1--32.
  Academic Press, 1965.

\bibitem[Reg09]{Regev2009OnLL}
Oded Regev.
\newblock On lattices, learning with errors, random linear codes, and
  cryptography.
\newblock {\em J. ACM}, 56:34:1--34:40, 2009.

\bibitem[RH12]{Rivas_2012_open_systems}
Ángel Rivas and Susana~F. Huelga.
\newblock \href{http://dx.doi.org/10.1007/978-3-642-23354-8}{{\em Open Quantum
  Systems}}.
\newblock Springer, 2012.
\newblock \arxiv{1104.5242}.

\bibitem[Rud76]{rudin1976PrinciplesMatAnal}
Walter Rudin.
\newblock \href{https://books.google.hu/books?id=kwqzPAAACAAJ}{{\em Principles
  of Mathematical Analysis}}.
\newblock International series in pure and applied mathematics. McGraw-Hill,
  3rd edition, 1976.

\bibitem[RWW22]{Rall_thermal_22}
Patrick Rall, Chunhao Wang, and Pawel Wocjan.
\newblock Thermal state preparation via rounding promises, 2022.
\newblock \arxiv{2210.01670}.

\bibitem[SBB07]{somma2007quantum}
Rolando Somma, Sergio Boixo, and Howard Barnum.
\newblock Quantum simulated annealing.
\newblock {\em arXiv preprint arXiv:0712.1008}, 2007.

\bibitem[SBCK22]{Schuckert2022ProbingFO}
Alexander Schuckert, Annabelle Bohrdt, Eleanor Crane, and Michael~Jason Knap.
\newblock Probing finite-temperature observables in quantum simulators with
  short-time dynamics, 2022.
\newblock \arxiv{2206.01756}.

\bibitem[SM21]{Shtanko2021AlgorithmsforGibbs}
Oles Shtanko and Ramis Movassagh.
\newblock Algorithms for {G}ibbs state preparation on noiseless and noisy
  random quantum circuits, 2021.

\bibitem[SWS22]{Sewell2022ThermalME}
Troy~J. Sewell, Christopher~David White, and Brian Swingle.
\newblock Thermal multi-scale entanglement renormalization ansatz for
  variational {G}ibbs state preparation.
\newblock \arxiv{2210.16419}, 2022.

\bibitem[{\multiletter{Sz}}eg04]{szegedy2004QMarkovChainSearch}
M\'{a}ri\'{o} {\multiletter{Sz}}egedy.
\newblock \href{http://dx.doi.org/10.1109/FOCS.2004.53}{Quantum speed-up of
  {M}arkov chain based algorithms}.
\newblock In {\em \focs{45th}}, pages 32--41, 2004.
\newblock \arxiv{quant-ph/0401053}.

\bibitem[TCC{\etalchar{+}}21]{Tilly2021TheVQ}
Jules Tilly, Hongxiang Chen, Shuxiang Cao, Dario Picozzi, Kanav Setia, Ying Li,
  Edward Grant, Leonard Wossnig, Ivan Rungger, George~H. Booth, and Jonathan
  Tennyson.
\newblock The variational quantum eigensolver: A review of methods and best
  practices.
\newblock {\em Physics Reports}, 2021.

\bibitem[TD00]{terhal2000problem}
Barbara~M Terhal and David~P DiVincenzo.
\newblock Problem of equilibration and the computation of correlation functions
  on a quantum computer.
\newblock {\em Physical Review A}, 61(2):022301, 2000.

\bibitem[tKGon]{exactDB}
\textbf{Chi-Fang Chen}, Michael~J. Kastoryano, and András Gilyén.
\newblock An efficient and exact noncommutative quantum gibbs sampler, in
  preparation.

\bibitem[TKR{\etalchar{+}}10]{temme2010chi}
Kristan Temme, Michael~James Kastoryano, Mary~Beth Ruskai, Michael~Marc Wolf,
  and Frank Verstraete.
\newblock \href{http://dx.doi.org/10.1063/1.3511335}{The $\chi^2$-divergence
  and mixing times of quantum {M}arkov processes}.
\newblock {\em Journal of Mathematical Physics}, 51(12):122201, 2010.
\newblock \arxiv{1005.2358}.

\bibitem[TOV{\etalchar{+}}11]{temme2009QuantumMetropolis}
Kristan Temme, Tobias~J. Osborne, Karl~G. Vollbrecht, David Poulin, and Frank
  Verstraete.
\newblock \href{http://dx.doi.org/10.1038/nature09770}{Quantum {M}etropolis
  sampling}.
\newblock {\em \nature}, 471(7336):87--90, 2011.
\newblock \arxiv{0911.3635}.

\bibitem[Tru21]{Trushechkin_2021}
Anton Trushechkin.
\newblock \href{http://dx.doi.org/10.1103/physreva.103.062226}{Unified
  {G}orini-{K}ossakowski-{L}indblad-{S}udarshan quantum master equation beyond
  the secular approximation}.
\newblock {\em Physical Review A}, 103(6), Jun 2021.

\bibitem[WA08]{Wocjan_2008_quantum_sampling}
Pawel Wocjan and Anura Abeyesinghe.
\newblock \href{http://dx.doi.org/10.1103/physreva.78.042336}{Speedup via
  quantum sampling}.
\newblock {\em Physical Review A}, 78(4), oct 2008.

\bibitem[Wan23]{wang2023email}
Chunhao Wang.
\newblock Private communication via e-mail., March, 2023.

\bibitem[WH19]{wu2019variational}
Jingxiang Wu and Timothy~H Hsieh.
\newblock Variational thermal quantum simulation via thermofield double states.
\newblock {\em Physical review letters}, 123(22):220502, 2019.

\bibitem[Wol12]{wolf2012quantum}
Michael~M. Wolf.
\newblock
  \href{https://mediatum.ub.tum.de/download/1701036/1701036.pdf}{Quantum
  channels \& operations: {G}uided tour}.
\newblock 2012.

\bibitem[{\dutchPrefix{Wolf}{d}}W19]{wolf2019QCLectureNotes}
Ronald {\dutchPrefix{Wolf}{d}}e~Wolf.
\newblock Quantum computing: Lecture notes, 2019.
\newblock \arxiv{1907.09415}.

\bibitem[WT21]{wocjan2021szegedy}
Pawel Wocjan and Kristan Temme.
\newblock Szegedy walk unitaries for quantum maps, 2021.
\newblock \arxiv{2107.07365}.

\bibitem[YAG12]{yung2010QuantumQuantumMetropolis}
Man-Hong Yung and Alán Aspuru-Guzik.
\newblock \href{http://dx.doi.org/10.1073/pnas.1111758109}{A quantum-quantum
  {M}etropolis algorithm}.
\newblock {\em \pnas}, 109(3):754--759, 2012.
\newblock \arxiv{1011.1468}.

\bibitem[YLC14]{yoder2014FixedPointSearch}
Theodore~J. Yoder, Guang~Hao Low, and Isaac~L. Chuang.
\newblock \href{http://dx.doi.org/10.1103/PhysRevLett.113.210501}{Fixed-point
  quantum search with an optimal number of queries}.
\newblock {\em \prl}, 113(21):210501, 2014.
\newblock \arxiv{1409.3305}.

\bibitem[ZJL{\etalchar{+}}20]{zhu2020generation}
Daiwei Zhu, Sonika Johri, Norbert~M Linke, KA~Landsman, C~Huerta~Alderete,
  Nhunh~H Nguyen, AY~Matsuura, TH~Hsieh, and Christopher Monroe.
\newblock Generation of thermofield double states and critical ground states
  with a quantum computer.
\newblock {\em Proceedings of the National Academy of Sciences},
  117(41):25402--25406, 2020.

\bibitem[Zyg03]{zygmund2003TrigSeries}
Antoni Zygmund.
\newblock \href{http://dx.doi.org/10.1017/CBO9781316036587}{{\em Trigonometric
  Series}}.
\newblock Cambridge Mathematical Library. Cambridge University Press, 3rd
  edition, 2003.

\end{thebibliography}

\appendix
\section*{Nomenclature}\label{sec:recap_notation}
This appendix recapitulates notations.
We write scalars, functions, and vectors in normal font, matrices in bold font $\vO$, and superoperators in curly font~$\CL$. Natural constants $\e, \ri, \pi$ are denoted in Roman font.\pagebreak[0]
\begin{align}
\vH &= \sum_i E_i \ketbra{\psi_i}{\psi_i}&\text{the Hamiltonian of interest with eigen decomposition}\\
\text{Spec}(\vH) &:= \{ E_i \} & \text{the spectrum of the Hamiltonian}\\
\nu \in B = B(\vH)&:= \text{Spec}(\vH) - \text{Spec}(\vH) &\text{the set of Bohr frequencies}\\
\vP_{E}&:= \sum_{i:E_i = E} \ketbra{\psi_i}{\psi_i}&\text{eigenspace projector for energy $E$}\\
\CL:& & \text{a Lindbladian in the Schrodinger Picture}\\
\CL^{\dagger}:& & \text{a Lindbladian in the Heisenberg Picture}\\
n: & &\text{ system size (number of qubits) of the Hamiltonian $\vH$}\\
\beta: & &\text{ inverse temperature}\\
\vrho:& &\text{a density matrix}\\
\vrho_{\beta}&:= \frac{\e^{-\beta \vH }}{\tr[ \e^{-\beta \vH }]} \quad &\text{the Gibbs state with inverse temperature $\beta$}\\
\ket{\sqrt{\vrho_{\beta}}} &:= \frac{1}{\sqrt{\tr[ \e^{-\beta \vH }]}} \sum_i \e^{-\beta E_i/2} \ket{\psi_i} \otimes \ket{\psi_i^*}\kern-5mm &\text{the purified Gibbs state}\\
\{\vA^a\}_{a \in A}: & &\text{set of jump operators}\\
\labs{A}: & & \text{cardinality of the set of jumps}\\
\vI:& &\text{the identity operator}\\
\bigOt{\cdot},\tOmega (\cdot) :& &\text{complexity expression ignoring (poly)logarithmic factors}
\end{align}
Fourier Transform notations:
\begin{align}
\bomega &\in S_{\omega_0} \subset \BZ \omega_0 &\text{discrete frequency labels for Fourier Transform}\\
\bt &\in S_{t_0} \subset \BZ t_0 &\text{discrete time labels for Fourier Transform}\\
\bar{\vH}: & &\text{the discretized Hamiltonian with eigenvalues in $\BZ \bomega_0$}\\
N:& &\text{number of Fourier Transform labels such that $\omega_0t_0 = \frac{2\pi}{N}$}\\
\quad \vA(t)&:= \e^{\ri \vH t} \vA \e^{-\ri \vH t} & \text{Heisenberg evolution for operator $\vA$}\\
\hat{\vA}_{(f)}(\bomega)&:= \frac{1}{\sqrt{N}} \sum_{\bt \in S_{t_0}} \e^{-\ri \bomega \bt} f(\bt) \vA(\bt)  & \text{discrete operator Fourier Transform for $\vA$ weighted by $f$}\\
\hat{\vA}_{(f)}(\omega) &:= \frac{1}{\sqrt{2\pi}}\int_{-\infty}^{\infty} \e^{-\ri \omega t}f(t) \vA(t)\mathrm{d}t& \text{continuous operator Fourier Transform for $\vA$ weighted by $f$}\\
\hat{f}(\omega)&=\CF(f)=\lim_{K\rightarrow  \infty}\frac{1}{\sqrt{2\pi}}\int_{-K}^{K}\e^{-\ri\omega t} f(t)\mathrm{d}t & \text{the Fourier Transform of a scalar function $f$ over inputs $t$}\\		
\vA_\nu&:=\sum_{E_2 - E_1 = \nu } \vP_{E_2} \vA \vP_{E_1} &\text{operator $\vA$ at exact Bohr frequency $\nu$}\end{align}
Norms: 
\begin{align}
	\norm{f(x)}_p&: = \L(\int_{x} \labs{f(x)}^p \mathrm{d} x \R)^{\! 1/p} \quad &  \text{the $p$-norm of a scalar function $f$ over inputs $x$ for $p\in[1,\infty]$}\\
   \ell_p(\BR)&:= \{f:\BR \rightarrow \BC,\quad \norm{f}_p < \infty\}\quad &\text{the set of integrable functions}\\ 
 \norm{f(\bar{x})}_p&: = \L(\sum_{\bar{x}} \labs{f(\bar{x})}^p\R)^{\! 1/p} \quad &  \kern-5mm \text{the $p$-norm of a scalar function $f$ over discrete inputs $x$ for $p\in[1,\infty]$}\\
	\norm{f(x)}_I&: = 	\norm{f(x)\cdot \indicator(x\in I )}_\infty \quad &  \text{the sup-norm of a scalar function $f$ over the interval $I$}\\
	\norm{f(x)}&: = \norm{f(x)}_2 = \sqrt{\int_{x} \labs{f(x)}^2 \mathrm{d} x} \quad & \text{the 2-norm of a scalar function $f$ over inputs $x$}\\	
    \norm{f(\bar{x})}&: = \norm{f(\bar{x})}_2 = \sqrt{\sum_{\bar{x}} \labs{f(\bar{x})}^2 } \quad & \text{the 2-norm of a scalar function $f$ over discrete inputs $x$}\\
	\norm{\ket{\psi}}&: \quad &\text{the Euclidean norm of a vector $\ket{\psi}$}\\
	\norm{\vO}&:= \sup_{\ket{\psi},\ket{\phi}} \frac{\bra{\phi} \vO \ket{\psi}}{\norm{\ket{\psi}}\cdot \norm{\ket{\phi}}} \quad &\text{the operator norm of a matrix $\vO$}\\
  	\norm{\vO}_p&:= (\tr \labs{\vO}^p)^{1/p}\quad&\text{the Schatten p-norm of a matrix $\vO$}\\
  \norm{\CL}_{p-p} &:= \sup_{\vO} \frac{\normp{\CL[\vO]}{p}}{\normp{\vO}{p}}\quad&\text{the induced $p-p$ norm of a superoperator $\CL$}
\end{align}
Linear algebra:
\begin{align}
    \lambda_i(\vO): & \quad &\text{ the $i$-th largest eigenvalue of a matrix $\vO$ sorted by their real parts}\\
 	\lambda_{gap}(\vO)&:=\Re\lambda_1(\vO)-\Re\lambda_2(\vO)\ge 0 \quad &\text{the real spectral gap of a matrix $\vO$}\\
 \varsigma_i(\vO): & \quad &\text{\kern-2cm the $i$-th largest singular of a matrix $\vO$}\\
	\vO^*: & \quad & \text{the entry-wise complex conjugate of a matrix $\vO$}\\
 	\vO^\dagger: & \quad & \text{the Hermitian conjugate of a matrix $\vO$}\\
  \ket{\psi^*}&: \quad&\text{ entry-wise complex conjugate of a vector $\ket{\psi}$}
\end{align}

\section{Operator Fourier Transform: properties and error bounds}\label{sec:operator_FT}
In this section, we study properties of the \textit{operator Fourier Transform}. Given a Hamiltonian $\vH$, an operator $\vA$, and a complex-valued function $f: \BR \rightarrow \BC$, let 
\begin{align}\label{eq:OFT_appendix}
\hat{\vA}_{f}(\bomega): = \frac{1}{\sqrt{N}} \sum_{\bt \in S_{t_0}} \e^{-\ri \bomega \bt} f(\bt) \vA(\bt)  \quad \text{where} \quad \vA(t):= \e^{\ri \vH t} \vA \e^{-\ri \vH t}.
\end{align}
Note the normalization $\frac{1}{\sqrt{N}}$. The transformed operators $\hat{\vA}_{f}(\bomega)$ satisfy the desirable \emph{exact} symmetry of a standard Fourier Transform as well as an operator version of Parseval's identity.

\begin{restatable}[Symmetry and operator Parseval's identity]{prop}{propertiesDFT}\label{prop:properties_DFT}
	For a set of matrices $\{\vA^a\}_{a\in A}$ and a Hamiltonian $\vH$, consider their discrete operator Fourier Transform weighted by a complex-valued function $f\colon S_{t_0}\rightarrow\BC$ as in \eqref{eq:OFT_appendix}. Then, the symmetry holds $\hat{\vA}^a_{f}(\bomega)^{\dagger} = (\hat{\vA^{a\dagger}})_{f^*}(-\bomega)$, moreover
	\begin{align}
	\sum_{a\in A}\sum_{\bomega\in S_{\omega_0}} \hat{\vA}^a_{f}(\bomega)^{\dagger} \hat{\vA}^a_{f}(\bomega) 
	&=\sum_{a\in A}\sum_{\bt \in S_{t_0}} \labs{f(\bt)}^2	 \e^{\ri \vH \bt}\vA^{a\dagger} \vA^a \e^{-\ri \vH \bt}
	\preceq \nrm{\sum_{a\in A} \vA^{a\dagger}\vA^a} \nrm{f}_2^2 \cdot \vI\quad \text{and}\label{eq:discFourierIdentity}\\
	\sum_{a\in A}\sum_{\bomega\in S_{\omega_0}}  \hat{\vA}^a_{f}(\bomega) \hat{\vA}^a_{f}(\bomega)^{\dagger}
	&=\sum_{a\in A}\sum_{\bt \in S_{t_0}} \labs{f(\bt)}^2	\e^{\ri \vH \bt} \vA^a \vA^{a\dagger} \e^{-\ri \vH \bt}
	\preceq \nrm{\sum_{a\in A} \vA^a\vA^{a\dagger}} \nrm{f}_2^2 \cdot \vI.\label{eq:discFourierIdentity2}
	\end{align}
\end{restatable}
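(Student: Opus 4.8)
The plan is to reduce both assertions to two elementary inputs: the discrete orthogonality relation for the Fourier characters on the index set $S^{\lceil N \rfloor}$, and unitary invariance of the positive-semidefinite order. First I would dispatch the symmetry $\hat{\vA}^a_{f}(\bomega)^{\dagger} = (\hat{\vA^{a\dagger}})_{f^*}(-\bomega)$ by taking the adjoint of the defining sum term by term: since $\big(\e^{\ri \vH \bt} \vA^a \e^{-\ri \vH \bt}\big)^{\dagger} = \e^{\ri \vH \bt} \vA^{a\dagger} \e^{-\ri \vH \bt}$ and $\big(\e^{-\ri \bomega \bt} f(\bt)\big)^{*} = \e^{-\ri(-\bomega)\bt} f^{*}(\bt)$, the adjoint of each summand is exactly the corresponding summand of the operator Fourier Transform of $\vA^{a\dagger}$ with weight $f^{*}$ at frequency $-\bomega$. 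No estimates enter here.

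For \eqref{eq:discFourierIdentity} I would substitute the definition twice, use the symmetry just established (or expand the adjoint directly), and collect the $\bomega$-dependent phases to get
\begin{align}
\sum_{a\in A}\sum_{\bomega\in S_{\omega_0}} \hat{\vA}^a_{f}(\bomega)^{\dagger} \hat{\vA}^a_{f}(\bomega)
&= \frac{1}{N}\sum_{a\in A}\sum_{\bt,\bt'\in S_{t_0}} f^{*}(\bt)\,f(\bt')\, \big(\vA^a(\bt)\big)^{\dagger}\vA^a(\bt') \sum_{\bomega\in S_{\omega_0}} \e^{\ri \bomega(\bt-\bt')}.
\end{align}
The inner sum is the heart of the matter: writing $\bomega=\omega_0 k$ and $\bt=t_0 m$, $\bt'=t_0 m'$ with $k,m,m'$ running over the $N$ consecutive integers of $S^{\lceil N \rfloor}$, and using $\omega_0 t_0 = 2\pi/N$, it equals $\sum_{k}\e^{2\pi\ri k(m-m')/N}$, which is $N$ when $m\equiv m' \pmod{N}$ and $0$ otherwise; since $m,m'$ lie in a window of $N$ consecutive integers, congruence forces $m=m'$, i.e. $\bt=\bt'$. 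Hence the double sum collapses to $\sum_{a}\sum_{\bt}\labs{f(\bt)}^2 \big(\vA^a(\bt)\big)^{\dagger}\vA^a(\bt) = \sum_{a}\sum_{\bt}\labs{f(\bt)}^2 \e^{\ri\vH\bt}\vA^{a\dagger}\vA^a\e^{-\ri\vH\bt}$, which is the claimed identity.

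The operator inequality then follows from unitary invariance: for each fixed $\bt$ we have $\e^{\ri\vH\bt}\big(\sum_a \vA^{a\dagger}\vA^a\big)\e^{-\ri\vH\bt} \preceq \nrm{\sum_a \vA^{a\dagger}\vA^a}\cdot\vI$, and summing against the nonnegative weights $\labs{f(\bt)}^2$ with $\sum_{\bt}\labs{f(\bt)}^2=\nrm{f}_2^2$ yields the bound. Identity \eqref{eq:discFourierIdentity2} is obtained identically after swapping the roles of $\vA^a$ and $\vA^{a\dagger}$ — equivalently, by applying the first identity to the set $\{\vA^{a\dagger}\}_{a\in A}$ — giving $\sum_{a}\sum_{\bt}\labs{f(\bt)}^2 \e^{\ri\vH\bt}\vA^a\vA^{a\dagger}\e^{-\ri\vH\bt} \preceq \nrm{\sum_a \vA^a\vA^{a\dagger}}\nrm{f}_2^2\cdot\vI$. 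I do not anticipate a genuine obstacle: the only step that requires any care is the exact discrete orthogonality relation, i.e. that $S^{\lceil N \rfloor}$ is a complete residue system modulo $N$; the rest is term-by-term algebra plus unitary invariance.
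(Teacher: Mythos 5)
Your proof is correct and follows essentially the same route as the paper: establish the symmetry by a term‐by‐term adjoint, expand the double sum, invoke the discrete orthogonality relation $\sum_{\bomega\in S_{\omega_0}}\e^{\ri\bomega(\bt-\bt')}=N\,\delta_{\bt,\bt'}$ to collapse it, and conclude the operator bound from unitary invariance. The only difference is that you spell out why the orthogonality relation holds (writing $\bomega=\omega_0 k$, $\bt=t_0 m$ and using $\omega_0 t_0=2\pi/N$ together with the fact that $S^{\lceil N\rfloor}$ is a complete residue system modulo $N$), whereas the paper simply cites the Fourier representation of the discrete delta.
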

For our \Lword{} Gibbs samplers, the weights will be normalized $\sum_{\bt \in S_{t_0}} \labs{f(\bt)}^2=1$, which means that they can be implemented by amplitudes of a state. 
In the special case $\sum_{a\in A}\vA^{a\dagger}\vA^a =\vI$ (i.e., these operators can be interpreted as a quantum channel), then the inequality \eqref{eq:discFourierIdentity} hold with equality, and as a consequence, the operators resolve the identity $\sum_{a\in A} \sum_{\bomega\in S_{\omega_0}} \hat{\vA}^a_{f}(\bomega)^{\dagger} \hat{\vA}^a_{f}(\bomega) = \vI.$ 

\begin{proof}First observe that by definition
	\begin{align}\label{eq:OpFTDaggProperty}
	\hat{\vA}^a_{f}(\bomega)^{\dagger} &= \frac{1}{\sqrt{N}}\sum_{\bt \in S_{t_0}}  \e^{\ri \bomega \bt} f^*(\bt) (\vA^a(\bt))^{\dagger}
	= \frac{1}{\sqrt{N}}\sum_{\bt \in S_{t_0}}\e^{\ri \bomega \bt} f^*(\bt) \vA^{a\dagger}(\bt) = (\hat{\vA^{a\dagger}})_{f^*,-\bomega}.
	\end{align}
	Next, we prove \eqref{eq:discFourierIdentity} by direct computation as follows
	\begin{align}
	\sum_{a\in A}\sum_{\bomega \in S_{\omega_0}}\hat{\vA}^a_{f}(\bomega)^{\dagger} \hat{\vA}^a_{f}(\bomega) 
	&=\sum_{a\in A}\frac{1}{N} \sum_{\bomega \in S_{\omega_0}} \sum_{\bt' \in S_{t_0}} \e^{\ri \bomega \bt'} f^*(\bt) \vA^a(\bt')^{\dagger} \sum_{\bt \in S_{t_0}} \e^{-\ri \bomega \bt} f(\bt)\vA^a(\bt) \\&
	= \sum_{a\in A}\sum_{\bt \in S_{t_0}} \labs{f(\bt)}^2 \vA^a(\bt)^{\dagger}\vA^a(\bt)  \\&
	= \sum_{a\in A} \sum_{\bt \in S_{t_0}} \labs{f(\bt)}^2 \e^{\ri \vH \bt}  \vA^{a\dagger}\vA^a\e^{-\ri \vH \bt}\\&
	\preceq \nrm{\sum_{a\in A} \vA^{a\dagger}\vA^a}  \sum_{\bt \in S_{t_0}} \labs{f(\bt)}^2 \e^{\ri \vH \bt}\vI\e^{-\ri \vH \bt} \\&	
	= \vI \cdot \nrm{\sum_{a\in A} \vA^{a\dagger}\vA^a} \sum_{\bt \in S_{t_0}} \labs{f(\bt)}^2 .
	\end{align}
	The second equality uses the Fourier representation of the discrete delta function
	\begin{align}\label{eq:discreteDiracDelata}
	\sum_{\bomega \in S_{\omega_0}} \e^{-\ri \bomega (\bt-\bt')} = N \delta_{\bt,\bt'}.
	\end{align}
	The proof of \eqref{eq:discFourierIdentity2} is completely analogous.
\end{proof}

We also include the analogous analysis in the continuum limit where the discretization parameter $N$ disappears. 
We will assume throughout that the weight function $f$ is square integrable, i.e., $f\in\ell_2(\mathbb{R})$. In the continuous case, the operator Fourier Transform is a matrix-valued function, and to emphasize this, we change the notation to $\hat{\vA}_{f}(\bomega)\rightarrow \hat{\vA}_{f}(\omega)$. We could directly copy the above proof; however, arguing about the Dirac delta function in the continuous case is tricky. We resolve this by relying on Parseval-Plancherel's identity. 

For studying the operator Fourier Transform, it is useful to decompose the operator according to the Bohr frequencies
\begin{align}
\label{eqn:Aoperator}
\e^{\ri \vH t} \vA \e^{-\ri \vH t} &= \sum_{\nu\in B(\vH)} \e^{\ri \nu t} \vA_\nu \\
\text{where}\quad \vA_\nu&:=\sum_{E_2 - E_1 = \nu } \vP_{E_2} \vA \vP_{E_1} \quad \text{satisfies that} \quad (\vA_{\nu})^{\dagger} = (\vA^{\dagger})_{-\nu},
\end{align}
where $\vP_{E}$ denotes the orthogonal projector onto the subspace spanned by energy $E$ eigenstates of $\vH$. If $f\in\ell_1(\mathbb{R})$, using this decomposition, we can conveniently express the operator Fourier Transform as follows 
\begin{align}
	\hat{\vA}_{f}(\omega) &=  \frac{1}{\sqrt{2\pi}}\int_{-\infty}^{\infty}\e^{-\ri \omega t} f(t) \e^{\ri \vH t} \vA \e^{-\ri \vH t}\mathrm{d}t \\
	&=\frac{1}{\sqrt{2\pi}}\int_{-\infty}^{\infty} \e^{-\ri \omega t} f(t)  \e^{\ri \vH t} \sum_{\nu \in B(\vH)} \vA_{\nu} \e^{-\ri \vH t}\mathrm{d}t \\
	&=\sum_{\nu \in B(\vH)}  \frac{1}{\sqrt{2\pi}}\int_{-\infty}^{\infty} \e^{-\ri (\omega -\nu) t} f(t) \vA_{\nu} \mathrm{d}t =\sum_{\nu \in B(\vH)}  \hat{f}(\omega-\nu) \vA_{\nu},  \label{eq:BohrDecomposedOFT}
\end{align}
where $\hat{f}(\omega)=\frac{1}{\sqrt{2\pi}}\int_{-\infty}^{\infty} f(t) \e^{-\ri \omega t}\mathrm{d}t$ is the Fourier Transform of the weight function $f(t)$. More generally, if $f\in\ell_2(\mathbb{R})$, then we use \eqref{eq:BohrDecomposedOFT} as the definition of the operator Fourier Transform because the Fourier Transform uniquely extends to a unitary map $\CF\colon \ell_2(\mathbb{R})\rightarrow \ell_2(\mathbb{R})$.

\begin{prop}[Symmetry and operator Parseval's identity]\label{prop:cont_operator_FT}
	For a set of matrices $\{\vA^a\}_{a\in A}$ and a Hamiltonian $\vH$, consider their continuous operator Fourier Transform weighted by a complex-valued function $f\in \ell_2(\mathbb{R})$
	\begin{align}
		\hat{\vA}^a_{f}(\omega): =  \sum_{\nu \in B(\vH)}  \hat{f}(\omega-\nu) \vA^a_{\nu},
	\end{align}
	then 
	$(\hat{\vA}^a_{f}(\omega))^{\dagger} = (\hat{\vA^{a\dagger}})_{f^*}(-\omega),$
	moreover
	\begin{align}
		\sum_{a\in A}\int_{-\infty}^{\infty} \hat{\vA}^a_{f}(\omega)^{\dagger}\hat{\vA}^a_{f}(\omega) \mathrm{d}\omega
		&= \sum_{a\in A}\int_{-\infty}^{\infty} \labs{f(t)}^2\e^{\ri \vH t}\vA^{a\dagger}\vA^a\e^{-\ri \vH t}\mathrm{d}t
		\preceq  \nrm{f}_2^2 \nrm{\sum_{a\in A} \vA^{a\dagger}\vA^a}  \cdot \vI, \quad \text{and}\label{eq:contFourierIdentity}\\
		\sum_{a\in A}\int_{-\infty}^{\infty} \hat{\vA}^a_{f}(\omega)\hat{\vA^a}_{f}(\omega)^{\dagger} \mathrm{d}\omega
		&= \sum_{a\in A}\int_{-\infty}^{\infty}\labs{f(t)}^2 \e^{\ri \vH t}\vA^a\vA^{a\dagger}\e^{-\ri \vH t}\mathrm{d}t 
		\preceq  \nrm{f}_2^2 \nrm{\sum_{a\in A} \vA^a\vA^{a\dagger}}  \cdot \vI.	\label{eq:contFourierIdentity2}
	\end{align}
\end{prop}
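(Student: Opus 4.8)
The plan is to prove the symmetry relation by a direct term-by-term computation, then obtain the Parseval-type identities by reducing them to the scalar Plancherel theorem, and finally deduce the operator-norm bounds from unitary invariance of the spectrum. Throughout I use that $B(\vH)$ is a finite set (the Hilbert space being finite-dimensional), so the Bohr decompositions $\vA^a=\sum_{\nu\in B(\vH)}\vA^a_\nu$ are finite sums and no convergence subtleties enter the rearrangements.

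\textbf{Symmetry.} Starting from $\hat{\vA}^a_f(\omega)=\sum_{\nu\in B(\vH)}\hat f(\omega-\nu)\vA^a_\nu$, I would take the adjoint term by term, using $(\vA^a_\nu)^\dagger=(\vA^{a\dagger})_{-\nu}$ (recorded in~\eqref{eqn:Aoperator}) and the elementary identity $\overline{\hat f(\omega-\nu)}=\widehat{f^*}(\nu-\omega)$, which is immediate from the definition of $\hat f$. Re-indexing the finite sum by $\mu=-\nu$ then gives $(\hat{\vA}^a_f(\omega))^\dagger=\sum_{\mu}\widehat{f^*}\big((-\omega)-\mu\big)(\vA^{a\dagger})_\mu=(\hat{\vA^{a\dagger}})_{f^*}(-\omega)$, which is the claimed symmetry.

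\textbf{Parseval.} For~\eqref{eq:contFourierIdentity}, I would substitute the Bohr decomposition into both factors of $\hat{\vA}^a_f(\omega)^\dagger\hat{\vA}^a_f(\omega)$ and pull the finite double sum over $\nu,\nu'$ outside the $\omega$-integral; this is legitimate since $\overline{\hat f(\cdot-\nu')}\,\hat f(\cdot-\nu)\in\ell_1(\BR)$ by Cauchy--Schwarz, using $\hat f\in\ell_2(\BR)$. What remains are the scalar integrals $\int_{-\infty}^{\infty}\overline{\hat f(\omega-\nu')}\,\hat f(\omega-\nu)\,\mathrm{d}\omega$. The key step is to recognize $\hat f(\omega-\nu)$ as the Fourier transform of $t\mapsto\e^{\ri\nu t}f(t)$ and invoke Plancherel (unitarity of $\CF$ on $\ell_2(\BR)$), which turns this integral into $\int_{-\infty}^{\infty}|f(t)|^2\,\e^{\ri(\nu-\nu')t}\,\mathrm{d}t$. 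Resumming over $\nu,\nu'$ then reconstitutes $\big(\sum_{\nu'}\e^{\ri\nu' t}\vA^a_{\nu'}\big)^\dagger=\vA^a(t)^\dagger$ and $\sum_\nu\e^{\ri\nu t}\vA^a_\nu=\vA^a(t)$ inside the remaining $t$-integral, yielding $\sum_a\int|f(t)|^2\vA^a(t)^\dagger\vA^a(t)\,\mathrm{d}t=\sum_a\int|f(t)|^2\e^{\ri\vH t}\vA^{a\dagger}\vA^a\e^{-\ri\vH t}\,\mathrm{d}t$, the middle expression in~\eqref{eq:contFourierIdentity}. The inequality follows by summing over $a$ inside the integrand, writing $\sum_a\e^{\ri\vH t}\vA^{a\dagger}\vA^a\e^{-\ri\vH t}=\e^{\ri\vH t}\big(\sum_a\vA^{a\dagger}\vA^a\big)\e^{-\ri\vH t}\preceq\nrm{\sum_a\vA^{a\dagger}\vA^a}\cdot\vI$ by unitary invariance of the spectrum, and integrating against $|f(t)|^2$ with $\int|f(t)|^2\mathrm{d}t=\nrm{f}_2^2$. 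The identity~\eqref{eq:contFourierIdentity2} is obtained by the same argument with $\vA^a$ and $\vA^{a\dagger}$ interchanged (equivalently, apply the case just proven to the operators $\hat{\vA}^a_f(\omega)^\dagger$ via the symmetry relation).

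\textbf{Main obstacle.} The only genuine point of care — and the place where the continuous case differs from the discrete proof of~\autoref{prop:properties_DFT}, where the discrete delta identity~\eqref{eq:discreteDiracDelata} was available — is justifying the interchange of summation and integration and replacing the naive ``Dirac-delta'' manipulation by an honest appeal to Plancherel for merely $\ell_2$ (rather than $\ell_1$) weights. Finiteness of $B(\vH)$ together with Cauchy--Schwarz disposes of this cleanly, so the proposition reduces to bookkeeping of conjugates and signs.
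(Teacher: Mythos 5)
Your proof is correct and takes essentially the same route as the paper's: symmetry by taking adjoints term-by-term with $(\vA^a_\nu)^\dagger=(\vA^{a\dagger})_{-\nu}$ and $\overline{\hat f(\omega-\nu)}=\widehat{f^*}(\nu-\omega)$, then Parseval by pulling the finite Bohr sums out of the $\omega$-integral and invoking Plancherel, and the positivity bound by unitary invariance. The one cosmetic difference is in the double-sum bookkeeping: the paper first shifts $\omega\to\omega-\nu$ and collapses $\sum_{\nu'-\nu=\nu''}(\vA^a_\nu)^\dagger\vA^a_{\nu'}=(\vA^{a\dagger}\vA^a)_{\nu''}$ before applying Plancherel, whereas you apply Plancherel to each $(\nu,\nu')$ pair and resum to $\vA^a(t)^\dagger\vA^a(t)$ directly — a marginally leaner path that sidesteps that algebraic identity — but the substance is identical.
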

Similarly as before, if $\sum_{a\in A} \vA^{a\dagger}\vA^a = \vI $, then the inequality \eqref{eq:contFourierIdentity} hold with equality.
\begin{proof}First observe that by definition
	\begin{align}
		\hat{\vA}^a_{f}(\omega)^{\dagger} 
		= \sum_{\nu \in B(\vH)} (\hat{f}(\omega-\nu))^*(\vA^a_\nu)^\dagg 
		= \sum_{\nu \in B(\vH)} \widehat{f^*}(\nu-\omega)(\vA^{a\dagg})_{-E} 
		= \sum_{-\nu \in B(\vH)} \widehat{f^*}(-\omega-\nu)(\vA^{a\dagg})_{\nu} 
		= \hat{\vA^{a\dagger}}_{f^*}(-\omega).	
	\end{align}
	Next, we prove \eqref{eq:contFourierIdentity} by direct computation as follows
	\begin{align}
		\sum_{a\in A}\int_{-\infty}^{\infty} \hat{\vA}^a_{f}(\omega)^{\dagger}\hat{\vA}^a_{f}(\omega)  \mathrm{d}\omega
		&= \sum_{a\in A}\int_{-\infty}^{\infty}  \sum_{\nu \in B(\vH)} (\vA^a_{\nu})^\dagg (\hat{f}(\omega-\nu))^* \sum_{\nu' \in B(\vH)} \hat{f}(\omega - \nu')\vA^a_{\nu'}\mathrm{d}\omega \tag*{(by defintion)}\\&
		=  \sum_{a\in A}\sum_{\nu,\nu' \in B(\vH)} (\vA^a_{\nu})^\dagg \vA^a_{\nu'}\int_{-\infty}^{\infty}  (\hat{f}(\omega))^*\hat{f}(\omega-(\nu'-\nu))\mathrm{d}\omega \tag*{(shift $\omega\rightarrow\omega-\nu$ and use $|B|\leq\infty$)}\\&
		= \sum_{a\in A}\sum_{\nu'' \in B(\vH)} (\vA^{a\dagg} \vA^a)_{\nu''} \int_{-\infty}^{\infty}  (\hat{f}(\omega))^*\hat{f}(\omega - \nu'')\mathrm{d}\omega \tag*{(by the definition of $\vA^a_\nu$)}\\&		
		= \sum_{a\in A} \sum_{\nu'' \in B(\vH)} (\vA^{a\dagg} \vA^a)_{\nu''}\int_{-\infty}^{\infty} (f(t))^* f(t)\e^{\ri \nu'' t}\mathrm{d}t\tag*{(since $\CF$ is unitary)}\\&
		= \sum_{a\in A}\sum_{\nu'' \in B(\vH)}\int_{-\infty}^{\infty} |f(t)|^2 \e^{\ri \vH t}  (\vA^{a\dagg} \vA^a)_{\nu''} \e^{-\ri \vH t} \mathrm{d}t  \tag*{(as in \eqref{eq:BohrDecomposedOFT})}\\&
		= \sum_{a\in A}\int_{-\infty}^{\infty} \labs{f(t)}^2 \e^{\ri \vH t}\vA^{a\dagger}\vA^a\e^{-\ri \vH t}\mathrm{d}t\tag*{(since $|B(\vH)|\leq\infty$)}\\&
		\preceq  \nrm{\sum_{a\in A} \vA^{a\dagger}\vA^a} \int_{-\infty}^{\infty} \labs{f(t)}^2\e^{\ri \vH t}\vI\e^{-\ri \vH t}\mathrm{d}t\tag*{(since $\sum_{a\in A} \vA^{a\dagger}\vA^a\preceq\nrm{\sum_{a\in A} \vA^{a\dagger}\vA^a} \vI$)}\\&
		= \vI \cdot \nrm{\sum_{a\in A} \vA^{a\dagger}\vA^a} \int_{-\infty}^{\infty} \labs{f(t)}^2\mathrm{d}t.
	\end{align}
The proof of \eqref{eq:contFourierIdentity2} is completely analogous.\footnote{Intuitively speaking the fourth line can be viewed as consequence of the Fourier representation of the Dirac delta distribution $\int_{-\infty}^{\infty} \e^{\ri \omega t} \mathrm{d}\omega = 2 \pi\delta(t)$, analogous to \eqref{eq:discreteDiracDelata}. Not introducing delta functions makes the proof completely general.} 
\end{proof}
\subsection{Secular approximation}\label{sec:secular}
In this section, we define the secular approximation of the Fourier Transformed operators $\hat{\vA}_{f}(\bomega)$ and analyze the resulting error. The secular approximation applies truncation to the Fourier-transformed operators in the frequency domain by suppressing Bohr frequencies $\nu \in B(\vH)$ that deviate substantially from the frequency label $\omega$ via some filter function $s\in \ell_\infty(\mathbb{R})$. For example truncation at energy difference $\mu$ can be achieved by setting $s(\omega):=\indicator(\labs{\omega} < \mu)$
and defining the following secular-approximated operators as follows
\begin{align}\label{eq:genSecDef}
	\hat{\vS}_{f,s}(\omega) := \sum_{\nu \in B(\vH)}  \hat{f}(\omega-\nu)s(\omega-\nu) \vA_{\nu}.
\end{align}
In some cases, we will consider alternative filter functions $s\colon \BR \rightarrow \BC$, so we will treat general $s$ throughout our discussion.

\begin{figure}[t]
	\centering
	\includegraphics[width=0.9\textwidth]{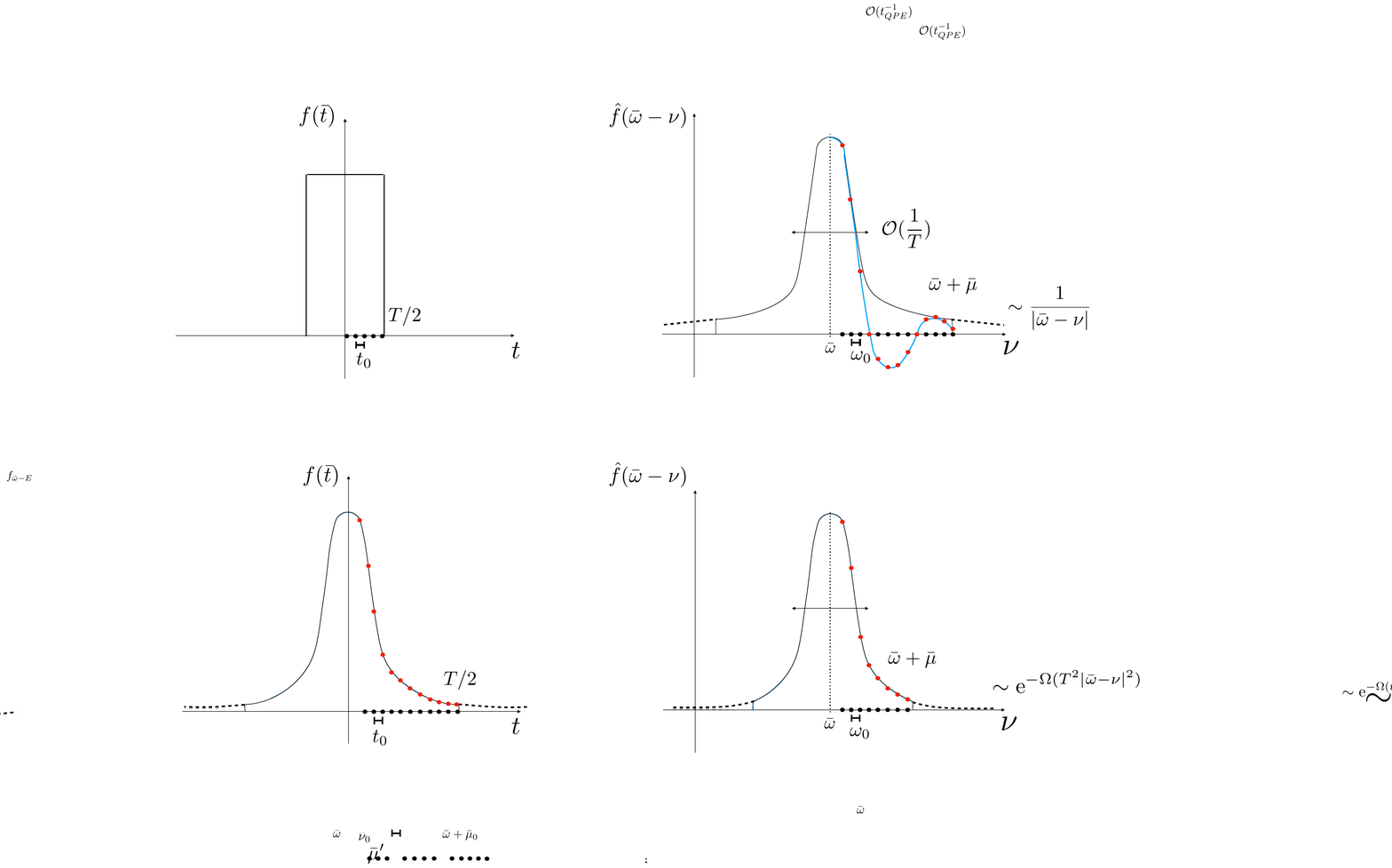}
	\caption{ Left: the weight as the step function. Right: An illustration for the Fourier Transformed amplitudes 
		$\vA_{\bomega} = \sum_{\nu \in S_{\omega_0}} \vA_{\nu} f(\bomega-\nu) =\sum_{\nu \in S_{\omega_0}} \vA_{\nu} \frac{\e^{\ri (\nu-\bomega) T} - \e^{ -\ri (\nu-\bomega) T}}{\e^{\ri  (\nu-\bomega) t_0} - 1}$
		for Bohr frequency $\nu$ given the energy label $\bomega$. The expression coincides with the phase estimation profile. It peaks near energy $\nu = \bomega$ with a width $\sim T^{-1}$ and decays polynomially. The profile in absolute value oscillates (blue), but we also display the norm bound to guide the eye (black). The secular approximation truncates the profile at an energy $\bmu$ far in the tail $\bmu \gg T^{-1}$. 
	}
	\label{fig:flat}
\end{figure}

The key observation in our analysis is that due to the definition of the operator Fourier Transform \eqref{eq:BohrDecomposedOFT} we have
\begin{align}\label{eq:contSecFourierTrunc}
	\hat{\vS}_{f,s}(\omega) = \hat{\vA}_{f_{s}}(\omega),
\end{align}
where $f_{s}=\CF^{-1}(\hat{f}\cdot s)$ is the inverse Fourier Transform of the function $\hat{f}(\omega)s(\omega)$.
Furthermore, this also implies that 
\begin{align}
	\hat{\vA}_{f}(\omega)-\hat{\vS}_{f,s}(\omega) = \hat{\vA}_{f-f_{s}}(\omega),
\end{align}
where $f-f_{s}=f_{1-s}$ is the inverse Fourier Transform of the function $\hat{f}(\omega)(1-s(\omega))$.
The significance of this equation is that it enables us to bound the error induced by the secular approximation via bounding the tail $\norm{\hat{f}(\omega)(1-s(\omega))}_2$.

\begin{prop}\label{prop:realSecReal}
	If $f(t)$ is real and $s(t)$ is real and even, then $f_s(t)$ is real.
\end{prop}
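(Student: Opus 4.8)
<br>

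The claim is that realness plus evenness of the filter $s$ preserves realness of the weight under the secular-approximation reparametrization $f_s = \CF^{-1}(\hat f \cdot s)$. The plan is to reduce this to the elementary Fourier-analytic fact that a function $g(t)$ is real if and only if its Fourier transform satisfies the Hermitian symmetry $\hat g(-\omega) = \overline{\hat g(\omega)}$. First I would record that $f$ real means $\hat f(-\omega) = \overline{\hat f(\omega)}$, and that $s$ real and even means $\hat s$ is real and even — or, working directly with $s$ viewed as a multiplier on the frequency side, that $s(-\omega) = s(\omega) = \overline{s(\omega)}$ since $s$ is real-valued. (Here I am treating $s$ as the frequency-domain filter appearing in \eqref{eq:genSecDef}, which is how it enters $f_s = \CF^{-1}(\hat f \cdot s)$; the statement is phrased with argument $t$ but the content is the multiplier property, and ``real and even'' is exactly what is needed.)

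The key computation is then a one-line check of the Hermitian symmetry of the product $\hat f \cdot s$:
\begin{align}
	\overline{\hat f(\omega) s(\omega)} = \overline{\hat f(\omega)}\cdot \overline{s(\omega)} = \hat f(-\omega)\cdot s(-\omega) = (\hat f \cdot s)(-\omega),
\end{align}
using realness of $f$ in the second equality for the first factor and realness plus evenness of $s$ for the second. Since $\hat f \cdot s$ satisfies the Hermitian symmetry, its inverse Fourier transform $f_s = \CF^{-1}(\hat f \cdot s)$ is a real-valued function, which is the claim. For the discrete case one argues identically with sums over $S_{t_0}$ and $S_{\omega_0}$ in place of integrals, the discrete Fourier transform being unitary on $\ell_2(S_{t_0})$, and the same symmetry argument applying verbatim.

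I do not anticipate a genuine obstacle here; the only subtlety is bookkeeping about which ``side'' (time or frequency) the hypotheses on $s$ are imposed on, and confirming that the convention in \eqref{eq:genSecDef}–\eqref{eq:contSecFourierTrunc} is consistent with reading ``$s(t)$ real and even'' as the statement that multiplication by $s$ commutes with complex conjugation and with $\omega \mapsto -\omega$. Once that is pinned down, invoking the real$\iff$Hermitian-symmetric characterization of the Fourier transform finishes it. If one prefers to avoid even a brief appeal to that characterization, an alternative is to write $f_s(t) = \int K_s(t - t')\, f(t')\, \mathrm{d}t'$ where $K_s = \CF^{-1}(s)$ is the (real, by realness and evenness of $s$) convolution kernel, and then realness of $f_s$ follows from realness of both $K_s$ and $f$; I would likely present the convolution version as it is the most transparent and needs no symmetry lemma.
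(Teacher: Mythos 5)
Your argument is correct and matches the paper's proof essentially verbatim: the paper phrases the invariance as "$\hat f$ has an even real and odd imaginary part, which remains true for $\hat f \cdot s$," which is exactly the Hermitian symmetry $\overline{\hat f(\omega)} = \hat f(-\omega)$ you use, and your convolution-kernel remark is the same fact in a different guise. The only point worth flagging is the discrete case with $N$ even, where $-\omega_0 N/2$ is its own negative modulo the grid, so "even" must be interpreted with care there (the paper handles this in a footnote); your ``argues identically'' sentence quietly elides this.
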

\begin{proof}
	If $f$ is real, then $\hat{f}$ has an even real and odd imaginary part, which remains true for $\hat{f}\cdot s$, and therefore $\CF^{-1}(\hat{f}\cdot s)$ is real as well.\footnote{If we work with the discrete Fourier Transform, and $N$ is even, then $-\omega_0 \cdot N/2$ (or $-t_0 \cdot N/2 $ if we work in the time domain) should be treated as its own inverse due to reasons of parity and modular arithmetic. In particular an even function can take arbitrary value on $-\omega_0 \cdot N/2$, but an odd function must be $0$, similarly to how such functions must behave on $0$.}
\end{proof}

Analogously, we define the secular approximation for the discrete Fourier Transform as 
\begin{align}\label{eq:discSecDef}
	\hat{\vS}_{f,s}(\bomega) :=\sum_{\nu \in B(\vH)}  \bar{\CF}\left(f(\bt) \cdot \e^{\ri \nu \bt}\right)(\bomega) \cdot s(\bomega - \nu) \vA_{\nu},
\end{align}
where $\bar{\CF}$ denotes the discrete Fourier Transform. In case $\nu$ is an integer multiple of the base frequency $\omega_0$ of $\bar{\CF}$, then $\bar{\CF}\left(f(\bt) \cdot \e^{(\ri \nu \bt)}\right)(\bomega)$ above simplifies to $\hat{f}(\bomega-\nu)=\bar{\CF}\left(f\right)(\bomega-\nu)$. Indeed, the discrete Fourier Transform $\hat{f}$ is only defined at points $\bomega \in S_{\omega_0}$, and thus translation by some value $\nu$ which is not an integer multiple of $\omega_0$ can cause troubles.

The simple analysis working nicely in the continuous case can be directly translated to the discrete case if all the Bohr frequencies are multiples of the base frequency $\omega_0$ of $\bar{\CF}$, i.e., $B\subset \omega_0 \BZ$.
This is the reason why we introduce a slightly rounded Hamiltonian in the analysis of the secular approximation.

\begin{prop}[Truncation via modifying weight function]\label{prop:TruncationByWeighing}
	Let $f\colon S_{t_0}\rightarrow \BC$ and $s\colon S_{\omega_0}\rightarrow \BC$ and suppose that $B(\vH)\subset \omega_0 \BZ$, then the secular-approximated operator can be represented as plain operator Fourier Transform corresponding to a perturbed weight function
	\begin{align}
		\hat{\vS}_{f,s}(\bomega) =  \hat{\vA}_{f_{s}}(\bomega)
	\end{align}
	where $f_{s}(\bomega)$ is the inverse discrete Fourier Transform of the function $\hat{f}(\bomega)s(\bomega)$ and
	$\hat{f}(\bomega) := \frac{1}{\sqrt{N}} \sum_{\bt\in S_{t_0}}  \e^{-\ri \bomega \bt} f(\bt)$.
\end{prop}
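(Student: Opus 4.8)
The plan is to verify the claimed identity by direct substitution into the definition of the discrete secular-approximated operator \eqref{eq:discSecDef} and then recognizing that, under the hypothesis $B(\vH)\subset\omega_0\BZ$, the discrete Fourier Transform of the modulated weight reduces to a plain translation. Concretely, start from
\begin{align}
\hat{\vS}_{f,s}(\bomega)=\sum_{\nu\in B(\vH)}\bar{\CF}\big(f(\bt)\,\e^{\ri\nu\bt}\big)(\bomega)\cdot s(\bomega-\nu)\,\vA_\nu,
\end{align}
and observe that since each $\nu$ lies in $\omega_0\BZ$, the modulation $f(\bt)\mapsto f(\bt)\e^{\ri\nu\bt}$ shifts the discrete Fourier Transform by exactly $\nu$ on the grid $S_{\omega_0}$: that is, $\bar{\CF}(f(\bt)\,\e^{\ri\nu\bt})(\bomega)=\hat{f}(\bomega-\nu)$, where $\hat{f}(\bomega)=\frac{1}{\sqrt{N}}\sum_{\bt\in S_{t_0}}\e^{-\ri\bomega\bt}f(\bt)$. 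This is the discrete analogue of the continuous shift property already exploited in \eqref{eq:contSecFourierTrunc}, and it is precisely where the hypothesis $B(\vH)\subset\omega_0\BZ$ is essential — without it, $\bomega-\nu$ need not be a grid point and the identity fails.

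Having made that reduction, I would substitute back to get $\hat{\vS}_{f,s}(\bomega)=\sum_{\nu\in B(\vH)}\hat{f}(\bomega-\nu)\,s(\bomega-\nu)\,\vA_\nu$. The next step is to recognize the right-hand side as an operator Fourier Transform of a perturbed weight. Define $f_s$ as the inverse discrete Fourier Transform of $\hat{f}(\bomega)s(\bomega)$, so that $\widehat{f_s}(\bomega)=\hat{f}(\bomega)s(\bomega)$. Then by the Bohr-frequency decomposition of the operator Fourier Transform — the discrete analogue of \eqref{eq:BohrDecomposedOFT}, valid because $|B(\vH)|<\infty$ and again because all Bohr frequencies lie on the grid so that $\hat{g}(\bomega-\nu)$ makes sense for any weight $g$ — we have
\begin{align}
\hat{\vA}_{f_s}(\bomega)=\sum_{\nu\in B(\vH)}\widehat{f_s}(\bomega-\nu)\,\vA_\nu=\sum_{\nu\in B(\vH)}\hat{f}(\bomega-\nu)s(\bomega-\nu)\,\vA_\nu,
\end{align}
which matches exactly the expression obtained in the previous step. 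Hence $\hat{\vS}_{f,s}(\bomega)=\hat{\vA}_{f_s}(\bomega)$, as claimed.

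The one point requiring a little care — and the main (minor) obstacle — is justifying the discrete shift identity $\bar{\CF}(f(\bt)\,\e^{\ri\nu\bt})(\bomega)=\hat{f}(\bomega-\nu)$ rigorously on the finite grid $S^{\lceil N\rfloor}$ with its signed-binary/modular structure. Because the grid is finite and indices wrap around modulo $N$, the translation $\bomega\mapsto\bomega-\nu$ must be understood cyclically, and one must check that the energy-range assumption $\norm{\vH}\le\frac{N}{2}\omega_0$ (so that all relevant $\nu$ and $\bomega-\nu$ stay within the representable window) prevents any genuine aliasing. This is the same bookkeeping already flagged in the footnote to \autoref{prop:realSecReal} about the parity of $-\omega_0 N/2$; once one fixes the convention that Bohr frequencies and their translates remain inside the grid, the identity is the standard modulation–translation duality for the DFT. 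Everything else is a one-line substitution, so I would present the proof as: (i) state the discrete shift property as an immediate consequence of $B(\vH)\subset\omega_0\BZ$ and the DFT modulation rule, (ii) substitute into \eqref{eq:discSecDef}, and (iii) identify the result with $\hat{\vA}_{f_s}(\bomega)$ via the discrete Bohr decomposition.
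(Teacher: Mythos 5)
Your proof is correct and follows essentially the same route as the paper's: use $B(\vH)\subset\omega_0\BZ$ to rewrite the secular definition as $\sum_{\bnu}\hat f(\bomega-\bnu)s(\bomega-\bnu)\vA_{\bnu}=\sum_{\bnu}\widehat{f_s}(\bomega-\bnu)\vA_{\bnu}$, then recognize this as $\hat{\vA}_{f_s}(\bomega)$ via the (discrete, modular) Bohr-frequency decomposition. The paper derives that last identification inline by inverse-DFT-expanding $\hat f_s$ and recombining $\sum_{\bnu}\e^{\ri\bnu\bt}\vA_{\bnu}=\vA(\bt)$, whereas you invoke the discrete analogue of \eqref{eq:BohrDecomposedOFT} directly; this is only a difference of presentation, not of substance.
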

\begin{proof}
	\begin{align}
		\hat{\vS}_{f,s}(\bomega) &=\sum_{\bnu \in B(\vH)} \hat{f}(\bomega-\bnu) \cdot s(\bomega - \bnu) \vA_{\bnu} \\&
		=\sum_{\bnu \in B(\vH)}  \hat{f}_{s}(\bomega-\bnu) \vA_{\bnu}\\&
		=\sum_{\bnu \in B(\vH)} \frac{1}{\sqrt{N}}\sum_{\bt\in S_{t_0}} \e^{-\ri (\bomega - \nu) \bt} f_{s}(\bt) \vA_{\bnu}\\&
		=\frac{1}{\sqrt{N}}\sum_{\bt\in S_{t_0}}   \e^{-\ri \bomega \bt} f_{s}(\bt) \vA(\bt).
	\end{align}
	The last equality recombines the Bohr frequencies $\sum_{\bnu \in B(\vH)} \e^{\ri \bnu\bt} \vA_{\bnu}  = \vA(\bt)$ analogously to \eqref{eq:BohrDecomposedOFT}.
\end{proof}
To reiterate, the condition $B\subseteq \omega_0 \BZ$ need not hold for the original Hamiltonian $\vH$. Proceeding with the discretized Hamiltonian $\bvH$ introduces a small additive error.\footnote{This differs from the \emph{unphysical} rounding assumption~\cite{wocjan2021szegedy} where the Hamiltonian needs to have ``large'' gaps in the spectrum. Here, $\omega_0$ is not related to the resolution of energy estimates but rather the discretization of the register.} We present error bounds for both the \Lword{} and our discriminant proxy.

\begin{lem}[Perturbation bounds]\label{lem:OpFtDiffBound}
	Let $f,f'\in \mathbb{C}^{S_{t_0}}$, and $\gamma\in \mathbb{C}^{S_{\omega_0}}$ such that $\nrm{f}_2,\nrm{f'}_2,\nrm{\gamma}_\infty\leq 1$.
	If $\nrm{\sum_{a\in A}\vA^{a\dagg} \vA^a}\leq 1$, then for every $T>0$ and $f_T(t):=f(t)\indicator(|t|\leq T)$ we have that 
	\begin{align*}
		&\nrm{ \sum_{\bomega \in S_{\omega_0},a\in A}\kern-2.5mm \sqrt{\gamma(\bomega)} \ketbra{\bomega,a}{\bar{0}}\otimes\hat{\vA}^a_{(f,\vH)}(\bomega)-
			\sqrt{\gamma(\bomega)} \ketbra{\bomega,a}{\bar{0}}\otimes\hat{\vA}^a_{(f',\vH')}(\bomega)}
		\leq \nrm{f-f_T}_2+\nrm{f_T-f'}_2+2T\nrm{\vH-\vH'}.
	\end{align*}
\end{lem}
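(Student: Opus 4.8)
\textbf{Proof plan for Lemma~\ref{lem:OpFtDiffBound}.} The plan is to split the difference into two pieces via a triangle inequality: first swap the weight function $f$ for $f'$ while keeping the Hamiltonian fixed, then swap $\vH$ for $\vH'$ while keeping the weight fixed (at $f'$, or equivalently at the truncated $f_T$ — the role of $f_T$ is a bookkeeping device to interpolate). Concretely I would write
\begin{align}
 \hat{\vA}^a_{(f,\vH)}(\bomega)-\hat{\vA}^a_{(f',\vH')}(\bomega)
 &=\bigl(\hat{\vA}^a_{(f,\vH)}(\bomega)-\hat{\vA}^a_{(f_T,\vH)}(\bomega)\bigr)
 +\bigl(\hat{\vA}^a_{(f_T,\vH)}(\bomega)-\hat{\vA}^a_{(f',\vH)}(\bomega)\bigr)\nonumber\\
 &\quad+\bigl(\hat{\vA}^a_{(f',\vH)}(\bomega)-\hat{\vA}^a_{(f',\vH')}(\bomega)\bigr),
\end{align}
bundle the block-encoding labels $\ket{\bomega,a}$ and the $\sqrt{\gamma(\bomega)}$ factors around each term, and bound the operator norm of the three resulting block-encoded objects separately.

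For the first two terms the Hamiltonian is the same, so the map $g\mapsto \sum_{\bomega,a}\sqrt{\gamma(\bomega)}\ket{\bomega,a}\bra{\bar 0}\otimes \hat{\vA}^a_{(g,\vH)}(\bomega)$ is linear in $g$, and I would bound its operator norm by $\nrm{g}_2$. This is exactly the content of the operator Parseval identity (Proposition~\ref{prop:properties_DFT}, eq.~\eqref{eq:discFourierIdentity}): computing $(\cdot)^\dagger(\cdot)$ collapses the $\ket{\bomega,a}\bra{\bar 0}$ structure and the $\gamma(\bomega)$ weights (using $\nrm{\gamma}_\infty\le 1$) to give something $\preceq \nrm{g}_2^2\,\nrm{\sum_{a}\vA^{a\dagg}\vA^a}\cdot\vI\preceq \nrm{g}_2^2\vI$ under the hypothesis $\nrm{\sum_a\vA^{a\dagg}\vA^a}\le 1$. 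Applying this with $g=f-f_T$ and $g=f_T-f'$ yields the contributions $\nrm{f-f_T}_2$ and $\nrm{f_T-f'}_2$.

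For the third term, $f'$ is fixed but the Heisenberg evolution changes from $\e^{\ri\vH t}\vA^a\e^{-\ri\vH t}$ to $\e^{\ri\vH' t}\vA^a\e^{-\ri\vH' t}$. Here I would use that $f'=f_T$ is supported on $|t|\le T$ (having replaced $f'$ by $f_T$ in the interpolation, so that only the truncated weight multiplies the Hamiltonian-difference term), write $\e^{\ri\vH t}\vA^a\e^{-\ri\vH t}-\e^{\ri\vH' t}\vA^a\e^{-\ri\vH' t}$ as a telescoping/integral of the derivative, and bound each conjugated increment by $2|t|\,\nrm{\vH-\vH'}\le 2T\nrm{\vH-\vH'}$ using $\nrm{\vA^a}\le 1$ (a consequence of $\nrm{\sum_a\vA^{a\dagger}\vA^a}\le1$) and unitary invariance; then the discrete-Fourier normalization $\tfrac1{\sqrt N}\sum_{\bt}|f_T(\bt)|$-type estimate together with $\nrm{f_T}_2\le1$, $\nrm{\gamma}_\infty\le1$, and the orthonormality of the $\ket{\bomega,a}$ labels keeps the $2T\nrm{\vH-\vH'}$ bound intact after summing over $\bomega$. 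Summing the three bounds gives the claimed inequality.

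The main obstacle I anticipate is the third term: one must be careful that summing the per-frequency Hamiltonian-perturbation error over all $\bomega\in S_{\omega_0}$ does not blow up by a factor of $N$. The clean way around this is to keep the block-encoding structure intact — i.e., never bound term-by-term in $\bomega$ but rather bound the full block-encoded operator, using that $\sum_{\bomega}\ketbra{\bomega,a}{\bomega,a}$ is a projector so the $\bomega$-sum is an orthogonal direct sum rather than an additive pile-up — and to exploit that Hamiltonian simulation for time $\le T$ incurs error linear in $T\nrm{\vH-\vH'}$ uniformly. I would phrase this by writing the Hamiltonian-difference term as $\sum_{\bomega,a}\sqrt{\gamma(\bomega)}\ket{\bomega,a}\bra{\bar 0}\otimes \tfrac1{\sqrt N}\sum_{\bt}\e^{-\ri\bomega\bt}f_T(\bt)\bigl(\vA^a(\bt)-\vA'^a(\bt)\bigr)$ and noting that the inner operator is itself an operator Fourier Transform of the operator-valued weight $\bt\mapsto f_T(\bt)(\vA^a(\bt)-\vA'^a(\bt))$, so Parseval again applies with the pointwise bound $\nrm{\vA^a(\bt)-\vA'^a(\bt)}\le 2|\bt|\,\nrm{\vH-\vH'}\le 2T\nrm{\vH-\vH'}$.
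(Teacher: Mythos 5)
Your top-level plan — triangle inequality through the truncated weight $f_T$, then bound the weight-change pieces by a Parseval-type estimate and the Hamiltonian-change piece by $T\nrm{\vH-\vH'}$ — is the same strategy the paper uses. The implementation differs: the paper views the whole quantity as an explicit circuit (\autoref{lem:block_encoding_L}, \autoref{fig:L_circuit}) in which $\vec{Prep}_g=\ketbra{g}{\bar 0}$ is the only non-unitary, non-$\vV_{jump}$ component, so swapping $g\mapsto g'$ costs exactly $\nrm{g-g'}_2$ (times $\nrm{\vV_{jump}}\le 1$), and swapping the two controlled Hamiltonian simulations $\sum_{\bt}\ketbra{\bt}{\bt}\otimes e^{\pm\ri\bt\vH}$ — each a unitary — costs at most $\max_{|\bt|\le T}\nrm{e^{\pm\ri\bt\vH}-e^{\pm\ri\bt\vH'}}\le T\nrm{\vH-\vH'}$ per appearance on the support of $\ket{f_T}$. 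That circuit-level accounting absorbs the jump-operator normalization once, for free.

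Two things in your write-up need fixing. First, your displayed decomposition puts the Hamiltonian swap in the term with weight $f'$, which may not be supported on $|t|\le T$, so the bound $|\bt|\nrm{\vH-\vH'}\le T\nrm{\vH-\vH'}$ is unavailable for it; your parenthetical "(having replaced $f'$ by $f_T$ in the interpolation)" shows you know the correct chain is $(f,\vH)\to(f_T,\vH)\to(f_T,\vH')\to(f',\vH')$, but the displayed equation contradicts it and should be corrected. Second, and more substantively, your Parseval route for the Hamiltonian piece has a gap: after collapsing the $\bomega$-sum you are left with $\sum_{a}\sum_{\bt}|f_T(\bt)|^2\,\vB^a(\bt)^\dagger\vB^a(\bt)$ for $\vB^a(\bt):=\vA^a(\bt)-\vA'^a(\bt)$, and bounding each $\nrm{\vB^a(\bt)}\le 2|\bt|\nrm{\vH-\vH'}$ pointwise (using only $\nrm{\vA^a}\le 1$) and summing over $a$ produces a spurious factor of $|A|$ — you noted the $\bomega$-sum is a direct sum, but the $a$-sum is where the danger lives, since $\sum_a\nrm{\vB^a(\bt)}^2$ can vastly exceed $\nrm{\sum_a\vB^a(\bt)^\dagger\vB^a(\bt)}$. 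To make the Parseval route close you must keep the sandwich structure: write
\begin{align}
	\vB^a(\bt)=(e^{\ri\vH\bt}-e^{\ri\vH'\bt})\vA^a e^{-\ri\vH\bt}+e^{\ri\vH'\bt}\vA^a(e^{-\ri\vH\bt}-e^{-\ri\vH'\bt}),
\end{align}
apply Parseval to each summand separately, and use $\sum_a(\vL\vA^a\vR)^\dagger(\vL\vA^a\vR)=\vR^\dagger\bigl(\sum_a\vA^{a\dagger}\vL^\dagger\vL\vA^a\bigr)\vR\preceq\nrm{\vL}^2\nrm{\sum_a\vA^{a\dagger}\vA^a}\,\vI$ for each piece; this recovers the hypothesis $\nrm{\sum_a\vA^{a\dagger}\vA^a}\le 1$ and the $T\nrm{\vH-\vH'}$ factor per piece, hence $2T\nrm{\vH-\vH'}$ in total. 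The paper's circuit argument sidesteps all of this because a unitary is a unitary.
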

\begin{proof}
	This directly follows from the (not necessarily unitary) block-encoding construction of \autoref{fig:OQFT}-\autoref{fig:L_circuit} and triangle inequalities. Indeed, for any function $g\in \mathbb{C}^{S_{t_0}}$ let $\vec{Prep}_g:=\ketbra{g}{\bar{0}}$ and let $\vB_{(g,\vH)}$ denote the block-encoding given by \autoref{fig:L_circuit} when setting $\vec{Prep}\leftarrow \vec{Prep}_g$,  $\vV_{jp}\leftarrow \sum_a \ket{a}\otimes\vA^a$, and using the Hamiltonian $\vH$. Then, we have that 
	\begin{align*}
		\nrm{\vB_{(g,\vH)}-\vB_{(f_T,\vH')}}&
		\leq\nrm{\vB_{(g,\vH)}-\vB_{(f_T,\vH)}}+\nrm{\vB_{(f_T,\vH)}-\vB_{(f_T,\vH')}}\\&
		\leq\nrm{g-f_T}_2 \nrm{\vV_{jp}} +2T\nrm{\vH-\vH'}\nrm{\vV_{jp}}\\&
		\leq\nrm{g-f_T}_2 +2T\nrm{\vH-\vH'}.
	\end{align*}
	Using the above inequality twice we obtain the desired result
	\begin{align*}
		\nrm{ \sum_{\bomega \in S_{\omega_0}, a\in A} \sqrt{\gamma(\bomega)} \ketbra{\bomega,a}{\bar{0}}\otimes\left(\hat{\vA}^a_{(f,\vH)}(\bomega)-
			\hat{\vA}^a_{(f',\vH')}(\bomega)\right)}
		&=\nrm{\vB_{(f,\vH)}-\vB_{(f',\vH')}}\\&
		\leq\nrm{\vB_{(f,\vH)}-\vB_{(f_T,\vH')}}+\nrm{\vB_{(f_T,\vH')}-\vB_{(f',\vH')}}\\&
		\leq\nrm{f-f_T}_2+\nrm{f_T-f'}_2+2T\nrm{\vH-\vH'}.\tag*{\qedhere}
	\end{align*}
\end{proof}
\begin{cor}[Perturbation bounds on Lindbladians and discriminant proxies]\label{cor:fHPerturBound}
	Consider
	\begin{align*}
		\bar{\CL}_{(f,\vH)}&:= \kern-3mm\sum_{a\in A, \bomega\in S_{\omega_0}} \kern-3mm \gamma(\bomega) \L( \hat{\vA}^{a}_{(f,\vH)}(\bomega)[\cdot] \hat{\vA}^{a}_{(f,\vH)}(\bomega)^{\dagger} -\frac{1}{2} \{\hat{\vA}_{(f,\vH)}^{a}(\bomega)^{\dagger} \hat{\vA}_{(f,\vH)}^{a}(\bomega) ,\cdot \}\R),\quad\text{and}\\
		\bar{\vec{\CD}}_{(f,\vH)}&:=  \kern-3mm \sum_{ a\in A, \bomega\in S_{\omega_0}} \kern-3mm \sqrt{\gamma(\bomega)\gamma(-\bomega)}  \hat{\vA}^{a}_{(f,\vH)}(\bomega) \otimes  \hat{\vA}^{a}_{(f,\vH)}(\bomega)^{*} - \frac{\gamma(\bomega)}{2} \L(\hat{\vA}^{a}_{(f,\vH)}(\bomega)^\dagg\hat{\vA}^{a}_{(f,\vH)}(\bomega)\otimes \vI + \vI \otimes \hat{\vA}^{a}_{(f,\vH)}(\bomega)^{*\dagger}\hat{\vA}^{a}_{(f,\vH)}(\bomega)^{*}\R).
	\end{align*}
	Assuming the conditions and the notation of \autoref{lem:OpFtDiffBound} hold, we have 
	\begin{align*}
		\nrm{\bar{\CL}_{(f,\vH)} - \bar{\CL}_{(f',\vH')}}_{1-1}\leq 4 \L(\nrm{f-f_T}_2+\nrm{f_T-f'}_2+2T\nrm{\vH-\vH'}\R).
	\end{align*}
	Further assuming the symmetry and normalization conditions~\eqref{eq:AAdagger},\eqref{eq:fnormalized} are satisfied by $f,f'$ and $\{\vA^a\colon a\in A\}$, we have 
	\begin{align*}
		\norm{\bar{\vec{\CD}}_{(f,\vH)} - \bar{\vec{\CD}}_{(f',\vH')}} \le 4 \L(\nrm{f-f_T}_2+\nrm{f_T-f'}_2+2T\nrm{\vH-\vH'}\R).
	\end{align*}
\end{cor}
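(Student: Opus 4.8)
The plan is to derive \autoref{cor:fHPerturBound} as a direct consequence of the single-operator perturbation bound in \autoref{lem:OpFtDiffBound}, by showing that both the Lindbladian map $\vrho\mapsto\bar{\CL}_{(f,\vH)}[\vrho]$ and the discriminant proxy $\bar{\vec{\CD}}_{(f,\vH)}$ are, in the relevant norms, $2$-Lipschitz functions of the ``stacked jump operator'' $\vM_{(f,\vH)}:=\sum_{\bomega,a}\sqrt{\gamma(\bomega)}\ketbra{\bomega,a}{\bar 0}\otimes\hat{\vA}^a_{(f,\vH)}(\bomega)$ whose variation is controlled by \autoref{lem:OpFtDiffBound}. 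Writing $\varepsilon:=\nrm{f-f_T}_2+\nrm{f_T-f'}_2+2T\nrm{\vH-\vH'}$, the lemma gives $\nrm{\vM_{(f,\vH)}-\vM_{(f',\vH')}}\le\varepsilon$, and by the operator Parseval identity (\autoref{prop:properties_DFT}, \eqref{eq:discFourierIdentity}) together with $\nrm{\gamma}_\infty\le1$, $\nrm{f}_2\le1$, $\nrm{\sum_a\vA^{a\dagger}\vA^a}\le1$ we also have $\nrm{\vM_{(f,\vH)}}\le1$ and likewise $\nrm{\vM_{(f',\vH')}}\le1$. So the task reduces to a Lipschitz-in-$\vM$ estimate with a uniform bound $\le1$ on the relevant norm of $\vM$.

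For the Lindbladian, I would note that $\bar\CL_{(f,\vH)}$ is built entirely out of the Lindblad operators $\vL_{\bomega,a}:=\sqrt{\gamma(\bomega)}\hat{\vA}^a_{(f,\vH)}(\bomega)$, which are exactly the system-register blocks of $\vM_{(f,\vH)}$, in the standard dissipator form $\sum_j\vL_j[\cdot]\vL_j^\dagger-\tfrac12\{\vL_j^\dagger\vL_j,\cdot\}$. The map $\{\vL_j\}\mapsto\CL$ is bilinear in the $\vL_j$, so the difference $\bar\CL_{(f,\vH)}-\bar\CL_{(f',\vH')}$ telescopes into two terms, each of which is linear in $(\vM_{(f,\vH)}-\vM_{(f',\vH')})$ and linear in one of $\vM_{(f,\vH)},\vM_{(f',\vH')}$. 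Bounding each telescoped term in the $1{-}1$ superoperator norm: the ``transition'' piece $\sum_j(\vL_j-\vL_j')[\cdot]\vL_j'^\dagger$ has $1{-}1$ norm at most $\nrm{\sum_j(\vL_j-\vL_j')(\vL_j-\vL_j')^\dagger}^{1/2}\nrm{\sum_j\vL_j'\vL_j'^\dagger}^{1/2}$ (and similarly the other arrangement), and the anticommutator piece is handled the same way; summing up the finitely many ($\le4$) arrangements and invoking the Parseval-type bounds $\le\varepsilon$ resp. $\le1$ gives $\nrm{\bar\CL_{(f,\vH)}-\bar\CL_{(f',\vH')}}_{1-1}\le4\varepsilon$. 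The cleanest way to run this is to observe that $\bar\CL$ is a block-encoded Lindbladian (\autoref{defn:blockLindladian}) whose defining matrix is precisely $\vM$, and the map from such a block-encoding to the superoperator is the same bilinear map for which one already knows $1{-}1$-norm stability; but I would spell out the telescoping to keep constants explicit.

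For the discriminant proxy I would proceed identically, now using \autoref{prop:genDiscProxy}: under the extra symmetry hypotheses \eqref{eq:AAdagger}, \eqref{eq:fnormalized}, $\bar{\vec\CD}_{(f,\vH)}$ equals $\vec\CD(\vP,\{\vL_{\bomega,a}\})$ for the involution $\vP\colon(\bomega,a)\mapsto(-\bomega,a')$ (via the operator-Fourier symmetry $\hat{\vA}^{a\prime}(-\bomega)^\dagger=\hat{\vA}^a(\bomega)$ from \autoref{cor:DiscProxySelfAdjoint}), which is again a bilinear function $\eqref{eq:generalDiscProxyVec}$ of the $\vL_j$'s. The same telescoping expresses $\bar{\vec\CD}_{(f,\vH)}-\bar{\vec\CD}_{(f',\vH')}$ as finitely many terms of the form $\sum_j(\vL_j-\vL_j')\otimes(\cdots)$ or $\sum_j(\vL_j^\dagger\vL_j-\vL_j'^\dagger\vL_j')\otimes\vI$; each is bounded in operator norm by Cauchy--Schwarz for operator sums in the form $\nrm{\sum_j\vX_j\vY_j}\le\nrm{\sum_j\vX_j\vX_j^\dagger}^{1/2}\nrm{\sum_j\vY_j\vY_j^\dagger}^{1/2}$ applied blockwise, using $\nrm{f'-f}_2,\ldots$ for the ``difference'' factor and the Parseval bound for the other. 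Tallying the arrangements gives the stated $4\varepsilon$.

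The only genuinely delicate point — and the one I'd flag as the main obstacle — is getting the telescoping \emph{and} the operator Cauchy--Schwarz bookkeeping to come out with the clean constant $4$ rather than something larger, given that the dissipator has two terms and the discriminant proxy has four. The trick is to telescope at the level of $\vM$ (one substitution), not term-by-term, so that in each resulting piece exactly one factor carries the small quantity $\le\varepsilon$ and the remaining factors are uniformly $\le1$ by Parseval; then the number of pieces, not a product of constants, drives the final bound. A secondary nuisance is checking that the operator Parseval identities \eqref{eq:discFourierIdentity}, \eqref{eq:discFourierIdentity2} apply to the \emph{differences} $\vL_j-\vL_j'$ — but these differences are themselves operator Fourier transforms of $f-f'$ (possibly with different Hamiltonians), so one reduces to \autoref{lem:OpFtDiffBound} exactly as in its proof, i.e. via the block-encoding picture where the norm of the whole stacked object is what \autoref{lem:OpFtDiffBound} already bounds. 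Everything else is routine.
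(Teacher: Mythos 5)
Your plan is essentially the paper's: the paper also telescopes at the level of the stacked/block-encoded operator $\vM_{(f,\vH)}$ (you call it $\vM$, the paper calls it $\vB$ or $\vB'$), reduces to \autoref{lem:OpFtDiffBound} for the small factor, and counts the resulting pieces to get the constant $4$. For the Lindbladian the paper writes $\bar{\CL}[\vrho]$ as a partial trace of $\vB(\ketbra{\bar 0}{\bar 0}\otimes\vrho)\vB^\dagg$ minus the anticommutator term, then uses triangle, Hölder, and contractivity of partial trace under the trace norm; for the discriminant proxy it telescopes the Szegedy form $\vB'^\dagg(\vR-\vI)\vB'$ of \autoref{prop:discriminantBlock} using $\nrm{\vR-\vI}\le 2$ and $\nrm{\vB'}\le 1$.

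One detail in your sketch to fix before this compiles into a proof: the operator Cauchy--Schwarz inequalities you invoke have the daggers on the wrong factors. For a CP-like piece $\rho\mapsto\sum_j\vA_j\rho\vB_j^\dagg$ the $1{-}1$ norm bound is
$\nrm{\sum_j\vA_j^\dagg\vA_j}^{1/2}\nrm{\sum_j\vB_j^\dagg\vB_j}^{1/2}$
(daggers \emph{first}), not $\nrm{\sum_j\vA_j\vA_j^\dagg}^{1/2}\nrm{\sum_j\vB_j\vB_j^\dagg}^{1/2}$; similarly the product inequality is $\nrm{\sum_j\vX_j\vY_j}\le\nrm{\sum_j\vX_j\vX_j^\dagg}^{1/2}\nrm{\sum_j\vY_j^\dagg\vY_j}^{1/2}$. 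This matters because $\vM^\dagg\vM=\ketbra{\bar 0}{\bar 0}\otimes\sum_j\vL_j^\dagg\vL_j$ is block-diagonal and hence directly controlled by $\nrm{\vM}^2$ (and by $\nrm{\vM_1-\vM_2}^2$ for the difference via \autoref{lem:OpFtDiffBound}), whereas $\vM\vM^\dagg$ is \emph{not} of the form $\vP\otimes\sum_j\vL_j\vL_j^\dagg$ and is not bounded via \eqref{eq:discFourierIdentity} in the Lindbladian case (where \eqref{eq:AAdagger} is not assumed). Once the daggers are put on the correct side, your four telescoped pieces are each bounded by $\varepsilon$ and the constant $4$ falls out exactly as claimed. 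The paper's partial-trace/Hölder route for the Lindbladian, and its $\vB'^\dagg(\vR-\vI)\vB'$ route for the discriminant, are just slightly cleaner packagings of the same estimate that avoid having to keep track of the dagger placement by hand.
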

\begin{proof}
	The superoperator $\CL_{(f,\vH)}$ acts as follows
	\begin{align}
		\bar{\CL}_{(f,\vH)}[\vrho] &= \tr_{a,\bomega}\L[ \L(\sum_{a\in A,\bomega\in S_{\omega_0}} \sqrt{\gamma(\bomega)}\ketbra{\bomega,a}{\bar{0}} \hat{\vA}^a(\bomega)  \R)\ketbra{\bar{0}}{\bar{0}}\otimes\vrho \L(\sum_{a\in A,\bomega\in S_{\omega_0}}\sqrt{\gamma(\bomega)}\ketbra{\bar{0}}{\bomega,a}\hat{\vA}^a(\bomega)^{\dagger}\R) \R] \\
		&- \frac{1}{2} \tr_{\bar{0}}\L\{ \L(\sum_{a\in A,\bomega\in S_{\omega_0}}\sqrt{\gamma(\bomega)}\ketbra{\bar{0}}{\bomega,a}\hat{\vA}^a(\bomega)^{\dagger}\R)\L(\sum_{a\in A,\bomega\in S_{\omega_0}} \sqrt{\gamma(\bomega)}\ketbra{\bomega,a}{\bar{0}} \hat{\vA}^a(\bomega)  \R), \ketbra{\bar{0}}{\bar{0}}\otimes\vrho \R\}. 
	\end{align}
	The conclusion about $\nrm{\bar{\CL}_{(f,\vH)} - \bar{\CL}_{(f',\vH')}}_{1-1}$ follows from \autoref{lem:OpFtDiffBound} using the triangle and Hölder inequalities and that taking partial trace contracts trace-distance.
	
	The proof of \autoref{prop:discriminantBlock} shows that if $\vB$ is a (nonunitary) block-encoding of $\bar{\CL}_{(f,\vH)}$, then $\vB^{'\dagger} \cdot (\vR-\vI) \cdot \vB'$ is a block-encoding of $\bar{\vec{\CD}}_{(f,\vH)}$. Since $\nrm{f}_2,\nrm{f'}_2,\nrm{\gamma}_\infty\leq 1$ without loss of generality we can assume $\nrm{\vB'}\leq 1$, which together with $\nrm{\vR-\vI}\leq 2$ implies the bound on $\norm{\bar{\vec{\CD}}_{(f,\vH)} - \bar{\vec{\CD}}_{(f',\vH')}}$ via a triangle inequality.
\end{proof}

Note that under the conditions of \autoref{thm:discToCont}, the same bounds also hold in the continuous case, as can be shown by a limit argument using the results of \autoref{thm:discToCont}.

\begin{cor}[Perturbation bounds on continuous Lindbladians and discriminant proxies]\label{cor:fHContPerturBound}
	Consider
	\begin{align*}
		\CL_{(f,\vH)}&:= \kern-1mm\sum_{a\in A}\int_{-\infty}^\infty \kern-1mm \gamma(\omega) \L( \hat{\vA}^{a}_{(f,\vH)}(\omega)[\cdot] \hat{\vA}^{a}_{(f,\vH)}(\omega)^{\dagger} -\frac{1}{2} \{\hat{\vA}_{(f,\vH)}^{a}(\omega)^{\dagger} \hat{\vA}_{(f,\vH)}^{a}(\omega) ,\cdot \}\R)\mathrm{d}\omega,\quad\text{and}\\
		\vec{\CD}_{(f,\vH)}&:=  \kern-1mm \sum_{a\in A}\int_{-\infty}^\infty \kern-1mm \sqrt{\gamma(\omega)\gamma(-\omega)}  \hat{\vA}^{a}_{(f,\vH)}(\omega) \otimes  \hat{\vA}^{a}_{(f,\vH)}(\omega)^{*} - \frac{\gamma(\omega)}{2} \L(\hat{\vA}^{a}_{(f,\vH)}(\omega)^\dagg\hat{\vA}^{a}_{(f,\vH)}(\omega)\otimes \vI + \vI \otimes \hat{\vA}^{a}_{(f,\vH)}(\omega)^{*\dagger}\hat{\vA}^{a}_{(f,\vH)}(\omega)^{*}\R)\mathrm{d}\omega.
	\end{align*}
	If $\gamma\in \ell_\infty(\mathbb{R})$, $f\in \ell_2(\mathbb{R})$, and $\gamma$, $f$ are continuous almost everywhere (i.e., the set of points of discontinuity has measure zero) while $f$ is bounded on every finite interval, then
	assuming the conditions and the notation of \autoref{lem:OpFtDiffBound} hold, we have 
	\begin{align*}
		\lnormp{\CL_{(f,\vH)} - \CL_{(f',\vH')}}{1-1}\leq 4 \L(\nrm{f-f_T}_2+\nrm{f_T-f'}_2+2T\nrm{\vH-\vH'}\R).
	\end{align*}
	Further assuming the symmetry and normalization conditions~\eqref{eq:AAdagger},\eqref{eq:fnormalized} are satisfied by $f,f'$ and $\{\vA^a\colon a\in A\}$, we have 
	\begin{align*}
		\norm{\vec{\CD}_{(f,\vH)} - \vec{\CD}_{(f',\vH')}} \le 4 \L(\nrm{f-f_T}_2+\nrm{f_T-f'}_2+2T\nrm{\vH-\vH'}\R).
	\end{align*}
\end{cor}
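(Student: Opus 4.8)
The plan is to deduce \autoref{cor:fHContPerturBound} from the already-established discrete version, \autoref{cor:fHPerturBound}, by a limiting argument, relying on the discretization-to-continuum comparison results referenced as \autoref{thm:discToCont} (and more generally \autoref{apx:cont_limit}). First I would set up the discretized objects: fix a sequence of grid parameters $N\to\infty$, $\omega_0\to 0$ with $\omega_0 t_0 = 2\pi/N$ and $N\omega_0\to\infty$, so that the discrete operator Fourier Transforms $\hat{\vA}^a_{(f,\bvH)}(\bomega)$ converge (in the appropriate sense, using the stated convergence theorem of \autoref{sec:operator_FT}) to the continuous $\hat{\vA}^a_{(f,\vH)}(\omega)$, and likewise $\bar{\CL}_{(f,\bvH)}\to \CL_{(f,\vH)}$ in the $1\!-\!1$ norm and $\bar{\vec{\CD}}_{(f,\bvH)}\to\vec{\CD}_{(f,\vH)}$ in operator norm. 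The hypotheses that $\gamma\in\ell_\infty(\BR)$ and $f\in\ell_2(\BR)$ are continuous almost everywhere, with $f$ bounded on finite intervals, are exactly what is needed for the Riemann-sum approximation of the defining integrals to converge — this is why those regularity assumptions appear in the statement but not in the discrete \autoref{cor:fHPerturBound}.

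The main body of the argument is then a three-epsilon estimate. For fixed $T>0$, write
\begin{align}
\lnormp{\CL_{(f,\vH)} - \CL_{(f',\vH')}}{1-1}
&\le \lnormp{\CL_{(f,\vH)} - \bar{\CL}_{(f,\bvH)}}{1-1}
+ \lnormp{\bar{\CL}_{(f,\bvH)} - \bar{\CL}_{(f',\bvH')}}{1-1}\\
&\quad+ \lnormp{\bar{\CL}_{(f',\bvH')} - \CL_{(f',\vH')}}{1-1}.
\end{align}
The first and third terms go to $0$ along the chosen grid sequence by the convergence results of \autoref{apx:cont_limit} (\autoref{thm:discToCont}), since these are exactly the discretization errors between a continuous Lindbladian of the form \eqref{eq:mainL} and its finite-$N$ truncation \eqref{eq:LbetaAgain}. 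The middle term is bounded, for every $N$ in the sequence, by $4\bigl(\nrm{f-f_T}_2+\nrm{f_T-f'}_2+2T\nrm{\vH-\vH'}\bigr)$ via \autoref{cor:fHPerturBound} — here one must be slightly careful that the discrete $f_T$ truncation and the discrete $\bvH$, $\bvH'$ line up, but since $\nrm{f-f_T}_2$, $\nrm{f_T-f'}_2$ converge to their continuous counterparts and $\nrm{\bvH-\bvH'}\le\nrm{\vH-\vH'}+o(1)$ (the discretizations of $\vH$ and $\vH'$ can be chosen with the same rounding rule), the discrete bound passes to the limit. Taking $N\to\infty$ gives $\lnormp{\CL_{(f,\vH)} - \CL_{(f',\vH')}}{1-1}\le 4\bigl(\nrm{f-f_T}_2+\nrm{f_T-f'}_2+2T\nrm{\vH-\vH'}\bigr)$, as claimed. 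The discriminant-proxy bound is obtained by the identical three-term split, now in operator norm, using the operator-norm convergence $\bar{\vec{\CD}}_{(f,\bvH)}\to\vec{\CD}_{(f,\vH)}$ and the discrete estimate from \autoref{cor:fHPerturBound}; the extra symmetry/normalization hypotheses \eqref{eq:AAdagger},\eqref{eq:fnormalized} are inherited at the discrete level so \autoref{cor:fHPerturBound} applies.

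The step I expect to be the main obstacle is making the convergence $\bar{\CL}_{(f,\bvH)}\to\CL_{(f,\vH)}$ (and the $\vec{\CD}$ analogue) genuinely rigorous under only the stated weak regularity — a.e.\ continuity of $f,\gamma$ and local boundedness of $f$ — rather than, say, uniform continuity or $\ell_1$ membership. One needs that the discrete sums defining $\hat{\vA}^a_{(f,\bvH)}(\bomega)$ approximate the integral $\hat{\vA}^a_{(f,\vH)}(\omega)=\sum_\nu\hat f(\omega-\nu)\vA^a_\nu$, which requires controlling the Bohr-frequency rounding error $\nrm{\vH-\bvH}$ together with the tail of $f$ outside the grid window; this is precisely the content the paper offloads to \autoref{apx:cont_limit}, so in the writeup I would simply invoke \autoref{thm:discToCont} as a black box and note that the hypotheses on $\gamma$ and $f$ are what licenses that invocation, rather than re-proving the Riemann-sum convergence. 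A secondary, purely bookkeeping subtlety is ensuring the truncation $f_T$ and the two Hamiltonian discretizations interact correctly across the limit — but this is routine once the single-object convergence is in hand.
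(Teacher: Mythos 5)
Your approach is essentially the paper's: the authors likewise take the limit of the discretizations from \autoref{thm:discToCont}, invoke the discrete bound \autoref{cor:fHPerturBound} along the sequence, and observe that the discrete $\ell_2$ norms on the right-hand side converge to their continuous counterparts (which is the content the paper extracts from the proof of \autoref{thm:discToCont}). One small notational slip worth flagging: you write $\bar{\CL}_{(f,\bvH)}$ as if \autoref{thm:discToCont} rounds the Hamiltonian, but that discretization is only in the Fourier grid (time/frequency), with $\vH$ left intact; consequently your concern about $\nrm{\bvH-\bvH'}\le\nrm{\vH-\vH'}+o(1)$ is vacuous, and the middle term of your three-epsilon estimate already sits inside the discrete \autoref{cor:fHPerturBound} with $\vH,\vH'$ unchanged.
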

\begin{proof}
	The objects $\CL_{(f,\vH)},\CL_{(f',\vH')},\vec{\CD}_{(f,\vH)},\vec{\CD}_{(f,\vH)}$ can be obtained as limits of their respective discretizations $\bar{\CL}_{(f,\vH)},\bar{\CL}_{(f',\vH')},\bar{\vec{\CD}}_{(f,\vH)},\bar{\vec{\CD}}_{(f,\vH)}$ as per \autoref{thm:discToCont}, for which the discretized versions of these bounds hold due to \autoref{cor:fHContPerturBound}. As shown in the proof of \autoref{thm:discToCont}, if a function $g\in \ell_2(\mathbb{R})$ is continuous almost everywhere while also bounded on every finite interval, then $\nrm{g}_2^2=\lim_{K\rightarrow \infty}\lim_{N\rightarrow \infty}\sum_{\bt\in S^{\lceil N  \rfloor}_{t_0}} |\bar{g}_K(\bt)|^2$, where $t_0=\sqrt{2\pi/N}$ and $\bar{g}_K(t)=\sqrt{t_0}g(t)\indicator(|t|\leq K)$, therefore the RHS of the discretized bounds also converge to their continuous counterpart implying the validity of the continuous versions of these bounds.
\end{proof}

\begin{restatable}[Secular approximation]{lem}{SecularApprox}\label{lem:secular}
	Let $\norm{\gamma}_{\infty} \le 1$, 
	and consider the \Lword{} $\CL_{\beta}$~\eqref{eq:LbetaAgain} and discriminant $\vec{\CD}_{\beta}$~\eqref{eq:vec_main_result} with $\hat{\vA}^a(\bomega)$ being the operator Fourier Transforms of $\vA^a$ with $\vH$ and their secular approximations
	\begin{align}
		\CL_{sec}[\cdot] &= 
		\sum_{a\in A, \bomega\in S_{\omega_0}} \gamma(\bomega) \L( \hat{\vS}^a(\bomega)[\cdot] \hat{\vS}^a(\bomega)^{\dagger} -\frac{1}{2} \{\hat{\vS}^a(\bomega)^{\dagger} \hat{\vS}^a(\bomega) ,\cdot \}\R), \\		
		\vec{\CD}_{sec}&= \sum_{a\in A, \bomega \in S_{\omega_0} } \sqrt{\gamma(\bomega)\gamma(-\bomega)} \hat{\vS}^a(\bomega)\otimes \hat{\vS}^{a}(\bomega)^*- \frac{\gamma(\bomega)}{2} \L(\hat{\vS}^a(\bomega)^{\dagger}\hat{\vS}^a(\bomega)\otimes \vI + \vI \otimes \hat{\vS}^{a}(\bomega)^{\dagger*}\hat{\vS}^{a}(\bomega)^*\R),
	\end{align}	
	with the operators
 \begin{align}
    \hat{\vS}^a(\bomega):=\sum_{\bnu \in B(\bvH)}  \hat{f}_s(\bomega-\bnu) \vA^a_{\bnu}  \quad \text{where}\quad \hat{f}_s(\bomega):=\hat{f}(\bomega) \cdot \indicator(\labs{\bomega} < m \omega_0)
 \end{align}
 defined by the discretized Hamiltonian $\bvH$ and cut-off frequency $m \omega_0$. If $\nrm{f}_2\leq 1$ and $\nrm{\sum_{a\in A}\vA^{a\dagg} \vA^a}\leq 1$, then for every $T>0$ and $f_T(t):=f(t)\indicator(|t|\leq T)$, we have 
	\begin{align}   
		\lnormp{\CL_{\beta}  - \CL_{sec}}{1-1} \le 4\lVert\hat{f}-\hat{f}_s\rVert_2 + 8\nrm{f-f_T}_2 + 4T\omega_0.
	\end{align}
	Moreover, assuming the symmetry and normalization conditions~\eqref{eq:AAdagger},\eqref{eq:fnormalized}, we have $\vec{\CD}_{sec} = \vec{\CD}_{sec}^{\dagger}$ and
	\begin{align}
		\norm{\vec{\CD}_{\beta} - \vec{\CD}_{sec}} \le 4\lVert\hat{f}-\hat{f}_s\rVert_2 + 8\nrm{f-f_T}_2 + 4T\omega_0.
	\end{align}
\end{restatable}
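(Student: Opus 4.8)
The plan is to recognize the secular-approximated operators $\hat{\vS}^a(\bomega)$ as ordinary operator Fourier transforms — but of the \emph{rounded} Hamiltonian $\bvH$ and a \emph{modified} weight function — and then to invoke the perturbation bound \autoref{cor:fHPerturBound}, which already tracks changes in both the weight and the Hamiltonian. Set $s:=\indicator(\labs{\bomega}<m\omega_0)$, write $\hat f_s=\hat f\cdot s$, let $f_s:=\bar{\CF}^{-1}(\hat f_s)$ be its inverse discrete Fourier transform, and let $f_T(t):=f(t)\indicator(\labs{t}\le T)$.

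Since every Bohr frequency of $\bvH$ lies in $\omega_0\BZ$, \autoref{prop:TruncationByWeighing} gives $\hat{\vS}^a(\bomega)=\hat{\vA}^a_{(f_s,\bvH)}(\bomega)$ for all $a,\bomega$. Consequently $\CL_{sec}$ and $\vec{\CD}_{sec}$ are exactly the objects $\bar{\CL}_{(f_s,\bvH)}$ and $\bar{\vec{\CD}}_{(f_s,\bvH)}$ of \autoref{cor:fHPerturBound}, while $\CL_\beta=\bar{\CL}_{(f,\vH)}$ and $\vec{\CD}_\beta=\bar{\vec{\CD}}_{(f,\vH)}$. I would then verify the hypotheses of that corollary with $f'=f_s$: discrete Parseval gives $\nrm{f_s}_2=\nrm{\hat f_s}_2\le\nrm{\hat f}_2=\nrm{f}_2\le1$, while $\nrm{\gamma}_\infty\le1$ and $\nrm{\sum_{a\in A}\vA^{a\dagger}\vA^a}\le1$ are assumed; for the discriminant statement one additionally uses that $f_s$ is real (\autoref{prop:realSecReal}, since $f$ is real and $s$ is real and even) and that $\{\vA^a\}$ is self-adjoint, so $\bar{\vec{\CD}}_{(f_s,\bvH)}$ is a well-defined discriminant proxy of the form \eqref{eq:generalDiscProxyVec}. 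In particular $\vec{\CD}_{sec}=\vec{\CD}_{sec}^\dagger$ by \autoref{cor:DiscProxySelfAdjoint} (equivalently \autoref{prop:genDiscProxy} with the involution $(\bomega,a)\mapsto(-\bomega,a')$, $\vA^{a'}=\vA^{a\dagger}$, using $\hat{\vS}^{a'}(-\bomega)^\dagger=\hat{\vS}^a(\bomega)$ from \eqref{eq:OpFTDaggProperty} for the real weight $f_s$).

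Applying \autoref{cor:fHPerturBound} with $(f',\vH')=(f_s,\bvH)$ then yields, for both objects simultaneously,
\begin{align}
\lnormp{\CL_\beta-\CL_{sec}}{1-1},\ \norm{\vec{\CD}_\beta-\vec{\CD}_{sec}}\ \le\ 4\big(\nrm{f-f_T}_2+\nrm{f_T-f_s}_2+2T\nrm{\vH-\bvH}\big),
\end{align}
and it remains to bound the two non-trivial terms. The rounded Hamiltonian moves each eigenvalue to the nearest point of $\omega_0\BZ$, so $\nrm{\vH-\bvH}\le\omega_0/2$ and hence $4\cdot 2T\nrm{\vH-\bvH}\le4T\omega_0$. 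For the weight, a triangle inequality together with $f-f_s=\bar{\CF}^{-1}(\hat f-\hat f_s)$ and discrete Parseval gives $\nrm{f_T-f_s}_2\le\nrm{f_T-f}_2+\nrm{f-f_s}_2=\nrm{f-f_T}_2+\nrm{\hat f-\hat f_s}_2$, whence $4(\nrm{f-f_T}_2+\nrm{f_T-f_s}_2)\le8\nrm{f-f_T}_2+4\nrm{\hat f-\hat f_s}_2$. Adding the two contributions gives the asserted bounds for $\CL$ and for $\vec{\CD}$.

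There is no deep obstacle here: the content sits entirely in \autoref{prop:TruncationByWeighing} and \autoref{cor:fHPerturBound}, which are already in hand. The one point that needs care is the \emph{two-layer} nature of the secular approximation — it simultaneously truncates the Fourier weight ($\hat f\to\hat f\cdot s$) \emph{and} is defined relative to $\bvH$ rather than $\vH$ — so one must retain the discretization term $2T\nrm{\vH-\bvH}$ throughout, and re-route the weight discrepancy $\nrm{f_T-f_s}_2$ through $f$ itself so that the final estimate surfaces the Fourier-domain tail $\nrm{\hat f-\hat f_s}_2$, which is the quantity actually controlled downstream by the Gaussian (or step-function) tail bounds.
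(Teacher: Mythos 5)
Your proposal is correct and follows the same route as the paper's own (very brief) proof: identify $\CL_{sec}=\bar{\CL}_{(f_s,\bvH)}$ and $\vec{\CD}_{sec}=\bar{\vec{\CD}}_{(f_s,\bvH)}$ via \autoref{prop:TruncationByWeighing}, apply \autoref{cor:fHPerturBound} with $(f',\vH')=(f_s,\bvH)$, and then use $\nrm{\vH-\bvH}\le\omega_0/2$ together with unitarity of the DFT ($\nrm{f-f_s}_2=\lVert\hat f-\hat f_s\rVert_2$) and a triangle inequality to reshape $\nrm{f_T-f_s}_2$, with Hermiticity of $\vec{\CD}_{sec}$ from \autoref{prop:realSecReal} and \autoref{cor:DiscProxySelfAdjoint}. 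You spell out the triangle-inequality step and the hypothesis checks more explicitly than the paper does, but there is no substantive difference.
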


The truncation introduces an error scaling with the tail in the frequency domain, while the last two error terms arise from discretizing the Hamiltonian spectrum $\vH \rightarrow \bvH$ for discrete Fourier Transforms; this is more of a technical artifact and merely introduces a minor error shrinking with finer Fourier frequency resolution $\omega_0$.

\begin{proof}
	 Since the secular approximation amounts to changing the real function and discretizing the Hamiltonian, i.e., $\CL_{sec}=\CL_{(f_{s},\bvH)}$ we can apply \autoref{cor:fHPerturBound}. The final bound follows using the observation that $\nrm{\vH - \bvH}\leq \omega_0/2$, and $\nrm{f-f_s}=\lVert \hat{f}-\hat{f}_s\rVert$ since the discrete Fourier Transformation is unitary.
	 
	 Finally, since $\vec{\CD}_{sec} = \vec{\CD}_{(f_{s},\bvH)}$ we have that $\vec{\CD}_{sec}$ is self-adjoint due to \autoref{prop:realSecReal} and \autoref{cor:DiscProxySelfAdjoint}. 
\end{proof}

Let us quickly note that the above argument controls the implementation error for truncating the Hamiltonian simulation. Indeed, in practice, we will only implement Hamiltonian simulation up to time $T$, and this is perfectly accounted for by \autoref{lem:OpFtDiffBound},\autoref{cor:fHPerturBound} by setting $f' = f_T$ and $\vH' = \vH$.

\subsection{Uniform weights}
Consider the simplest Fourier Transform with uniform weights
\begin{align}
    \hat{\vA}_{f}(\bomega): = \frac{1}{\sqrt{2NT/t_0}} \sum_{ -T\le \bt < T }  \e^{-\ri \bomega \bt} \vA(\bt)
    .\label{eq:unif_weight} 
\end{align}

\begin{prop}[Preparing uniform weights]
Suppose that $T/t_0 = 2^k$. Then, the state
\begin{align}
    \sum_{\bt }f(\bt)\ket{\bt} \quad \text{for}\quad     f(\bt) = \frac{1}{\sqrt{2T/t_0}} \cdot \begin{cases}
     1 \quad&\text{if}\quad -T \le \bt < T \\
     0\quad&\text{else}.
    \end{cases}
\end{align}
can be prepared using $k+1$ Hadamard gates and $n-k-1$ CNOT gates.
\end{prop}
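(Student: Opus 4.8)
The plan is to recognize the target state as a (normalized) uniform superposition over an interval of $2T/t_0=2^{k+1}$ consecutive integers, symmetric about $0$ in the sense that $\bt$ ranges over $\{-T/t_0,\dots,T/t_0-1\}$ in our index units (or equivalently over $S_{t_0}$ restricted to $\labs{\bt}\le T$ after the sign convention), and then to exploit the signed binary representation fixed in \autoref{sec:block_encodings}. Concretely, writing $T/t_0=2^k$, an integer $\bt/t_0$ lies in $\{-2^k,\dots,2^k-1\}$ exactly when its $(k+1)$-bit two's-complement-style encoding has arbitrary values on the lowest $k+1$ bits and all remaining higher bits equal to the sign bit. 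Since the register has $n$ bits total, this leaves $n-k-1$ bits that must be \emph{copied} from the sign bit, and $k+1$ bits (the sign bit together with the $k$ lowest bits) that should be placed in a uniform superposition over $\{0,1\}$.

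First I would make the correspondence between the set $\{-T\le \bt<T\}$ and the bit-pattern description precise, using the explicit signed binary convention $10^{n-1},\ldots,1^n$ for $-N/2,\ldots,-1$ and $0^n,\ldots,01^{n-1}$ for $0,\ldots,N/2-1$. The key observation is that the value $\bt/t_0\in\{-2^k,\ldots,2^k-1\}$ iff the top $n-k-1$ bits coincide with the $(k+1)$-st bit from the bottom (the "sign bit" of the sub-register), so the indicator set is precisely $\{(s,b_k,\ldots,b_1,s,s,\ldots,s): s,b_i\in\{0,1\}\}$ — a set of size $2^{k+1}$ as required for the normalization $1/\sqrt{2T/t_0}=1/\sqrt{2^{k+1}}$. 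Then I would describe the circuit: apply a Hadamard to each of the $k+1$ "free" qubits (the sign bit and the $k$ low-order bits) starting from $\ket{0^n}$, producing $\frac{1}{\sqrt{2^{k+1}}}\sum_{s,b_k,\ldots,b_1}\ket{s,b_k,\ldots,b_1,0,\ldots,0}$; then apply $n-k-1$ CNOTs, each controlled on the sign qubit and targeting one of the $n-k-1$ high-order qubits, which copies $s$ into each of them. The resulting state is exactly $\sum_{\bt}f(\bt)\ket{\bt}$, and the gate count is $k+1$ Hadamards plus $n-k-1$ CNOTs.

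The verification is then a routine check that the CNOT-copied bit pattern enumerates precisely the desired interval with uniform amplitude, and that no amplitude leaks outside $\{-T\le\bt<T\}$; this follows immediately from the bit-pattern characterization above. I do not anticipate a genuine obstacle here — the only subtlety is bookkeeping with the signed/modular binary convention (in particular making sure the "$\bt<T$" endpoint, i.e. the asymmetry between $-2^k$ being included and $+2^k$ excluded, matches the two's-complement sign-extension picture rather than a sign-magnitude one). Once that convention is pinned down, the construction and its correctness are immediate, so the statement follows.
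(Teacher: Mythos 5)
Your proposal is correct and matches the paper's proof up to a cosmetic relabeling: the paper builds a GHZ state on the first $n-k$ qubits (one Hadamard plus $n-k-1$ CNOTs) and then Hadamards the last $k$ qubits, whereas you Hadamard the $k+1$ low-order qubits and fan the $(k+1)$-st one out to the remaining $n-k-1$ via CNOTs; these are the same circuit with the "seed" Hadamard placed on a different qubit of the GHZ block. The only small slip is a reversed bit-ordering in your displayed state $\ket{s,b_k,\ldots,b_1,0,\ldots,0}$ — under the paper's convention the $n-k-1$ copied bits should sit at the high-order (leftmost) positions — but your prose ("targeting one of the $n-k-1$ high-order qubits") makes the intended circuit unambiguous, so the argument and gate count ($k+1$ Hadamards, $n-k-1$ CNOTs) go through as stated.
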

\begin{proof}
Prepare with the GHZ state on the first $n-k$ qubits using 1 Hadamard gate and $n-k-1$ CNOT gates
\begin{align}
    \frac{1}{\sqrt{2}}\L(\ket{1^{n-k}}+ \ket{0^{n-k}} \R) \ket{0^k}
\end{align}
and then apply Hadamard gates on the last $k$ qubits.
\end{proof}

Since the weights are real $f(t) = f^*(t)$ and normalized $\sum_{\bt\in S_{t_0}}\labs{f(\bt)}^2 = 1$, the transformed operator satisfies the properties listed in \autoref{prop:properties_DFT}. 
\subsection{Gaussian ansatz}\label{sec:Gaussian_ansatz}
Instead of the plain Fourier Transform, consider the Gaussian-weighted Fourier Transform
\begin{align}
    \hat{\vA}_{f}(\bomega): = \frac{1}{\sqrt{N}} \sum_{ \bt \in S_{t_0}}\e^{-\ri \bomega \bt} f(\bt)\vA(\bt) \quad \text{for} \quad f(\bt):=\frac{1}{\sqrt{\sum_{\bt \in S_{t_0}} \e^{-\frac{\bt^2}{2\sigma_t^2}}}} \sum_{\bt \in S_{t_0}} \e^{-\frac{\bt^2}{4\sigma_t^2}} \ket{\bt}. \label{eq:Gaussian_weight} 
\end{align}

Again, since the weight is real $f(t) = f^*(t)$ and normalized $\int_{-\infty}^{\infty}\labs{f(t)}^2\mathrm{d}t =1$, the transformed operator satisfies the symmetry properties listed in \autoref{prop:properties_DFT}. To implement the above operator, we just need to prepare the initial state approximately.

\begin{prop}[Preparing a truncated Gaussian state~\cite{McArdle_2022quantumstate}]
Suppose $(Nt_0)^2/16 \sigma_t^2\ge \log(1/\epsilon)$. Then, the state
\begin{align}
\frac{1}{\sqrt{\sum_{\bt \in S_{t_0}} \e^{-\frac{\bt^2}{2\sigma_t^2}}}} \sum_{\bt \in S_{t_0}} \e^{-\frac{\bt^2}{4\sigma_t^2}} \ket{\bt}
\end{align}
can be prepared using $\CO(n\log(1/\epsilon)^{5/4})$ gates up to error $\epsilon$. 
\end{prop}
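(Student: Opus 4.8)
The statement is exactly the content of the reference~\cite{McArdle_2022quantumstate}, so the plan is to invoke (a slight specialization of) their quantum algorithm for preparing states with coefficients given by a smooth function and verify that the truncation error is negligible in our regime. First I would reduce to preparing the \emph{exact} truncated Gaussian amplitude vector $v(\bt) \propto \e^{-\bt^2/(4\sigma_t^2)}$ on the register indexed by $\bt \in S_{t_0}$; there is no further normalization subtlety since the $\ell_2$-normalization of $v$ is built into the definition. The key quantitative input is the hypothesis $(Nt_0)^2/(16\sigma_t^2) \ge \log(1/\epsilon)$: writing the largest available time label as roughly $Nt_0/2$, this says the Gaussian has decayed to amplitude $\lesssim \sqrt{\epsilon}$ at the edge of the register, so the tail mass outside $S_{t_0}$ (were we to extend the Gaussian to all of $\BZ t_0$) is $\bigO{\epsilon}$ in squared $\ell_2$-norm; hence the truncated-and-renormalized state is $\bigO{\epsilon}$-close in Euclidean norm to the genuine discrete Gaussian, and it suffices to prepare the latter to accuracy $\epsilon$.

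Next I would quote the McArdle--Gilyén--Woerner-type construction: a discrete Gaussian on $n = \lceil\log_2 N\rceil$ qubits can be prepared using a recursive/QSVT-based scheme whose gate count is $\bigO{n \cdot \mathrm{polylog}(1/\epsilon)}$, and more precisely $\bigO{n \log(1/\epsilon)^{5/4}}$ with the specific polynomial-approximation degree they use for the relevant amplitude-defining function. The structure of their algorithm is: (i) express the target amplitudes via a function amenable to polynomial approximation (for a Gaussian, the log-concavity lets one build it qubit-by-qubit by controlled rotations whose angles are arctangents of ratios of Gaussian partial sums), (ii) approximate each required rotation angle by a polynomial of degree $\bigOt{\log^{1/4}(1/\epsilon)}$ using a Chebyshev/Remez argument exploiting analyticity of the Gaussian, and (iii) implement these via QSVT or amplitude-encoding circuits, each costing $\mathrm{polylog}(1/\epsilon)$ gates, with $\bigO{n}$ of them in total. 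I would simply cite this and the error accounting therein rather than reproduce it.

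The main obstacle — really the only nontrivial point beyond bookkeeping — is matching the \emph{conventions}: their hypothesis is typically phrased as a bound on the ratio of the domain half-width to the standard deviation (a ``how many sigmas fit'' condition), whereas ours is phrased as $(Nt_0)^2/(16\sigma_t^2)\ge\log(1/\epsilon)$, and one must check these coincide up to the constant absorbed in the $\bigO{\cdot}$. This amounts to the elementary estimate $\sum_{|\bt| > Nt_0/2} \e^{-\bt^2/(2\sigma_t^2)} \le \bigO{\sigma_t/t_0}\,\e^{-(Nt_0)^2/(8\sigma_t^2)} = \bigO{\epsilon}$ (a Gaussian tail bound, e.g.\ by comparison with the integral or a geometric-series majorant), together with the observation that $\sigma_t/t_0 \le N$ so the prefactor is subsumed by enlarging $\log(1/\epsilon)$ by a constant factor. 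Once this is in place, the triangle inequality combines the truncation error and the algorithmic synthesis error into a single $\bigO{\epsilon}$, and rescaling $\epsilon$ by a constant gives the claimed clean bound with $\CO(n\log(1/\epsilon)^{5/4})$ gates.
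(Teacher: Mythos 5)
The paper offers no proof of this proposition; it is stated purely as a citation to~\cite{McArdle_2022quantumstate}, and the claimed gate count $\CO(n\log(1/\epsilon)^{5/4})$ and the hypothesis $(Nt_0)^2/16\sigma_t^2 \ge \log(1/\epsilon)$ are transcribed directly from that reference. Your proposal is correct and takes essentially the same approach — invoke the black-box state-preparation result and match conventions. One small arithmetic note: at the edge $\bt \approx Nt_0/2$ the amplitude is $\e^{-(Nt_0)^2/(16\sigma_t^2)} \le \epsilon$, not merely $\lesssim\sqrt{\epsilon}$ as you wrote; this only makes the tail bound easier, and both versions are subsumed by the $\CO(\cdot)$ once $\log(1/\epsilon)$ is rescaled by a constant, so the argument goes through unchanged.
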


The main advantage of using a Gaussian weight is that its Fourier Transform remains a Gaussian, which has a rapidly decaying tail. Indeed, for the \emph{continuous} Gaussian, we can evaluate the Gaussian integral by completing the square
\begin{align}
    \frac{1}{\sqrt{\sigma_t\sqrt{2\pi}}} \int_{-\infty}^{\infty} \e^{-\ri \omega t} \e^{-\frac{t^2}{4\sigma_t^2}} \mathrm{d}t
    &= \sqrt{\sigma_t\sqrt{2\pi}} \e^{-\omega^2\sigma^2_t}.
\end{align}
The uncertainty in energy is inversely proportional to the uncertainty in time $\sigma_t^{-1}$, as a manifestation of the energy-time uncertainty principle.

\subsection{Discretizing continuous functions via periodic summation}
It is not obvious how to carefully derive bounds on the discretization errors that appear in the Riemann sums of the discrete Fourier Transform as the Fourier phases are highly oscillatory. Nevertheless, the discrete Fourier Transform remains Gaussian, up to a well-controlled error.
\begin{prop}[DFT of Gaussian]\label{prop:DFT of Gaussian}
There is a choice of parameter $(Nt_0)^2/\sigma_t^2= \Omega( \log(1/\epsilon))$ and $(N\omega_0)^2\sigma_t^2= \Omega( \log(1/\epsilon))$ such that the discrete Fourier Transform for 
\begin{align}
f(\bt) = \frac{1}{\sqrt{\sum_{\bt \in S_{t_0}} \e^{-\frac{\bt^2}{2\sigma_t^2}}}} \e^{-\frac{\bt^2}{4\sigma_t^2}} \quad \text{is approximately} \quad \frac{1}{\sqrt{\sum_{\bomega \in S_{\omega_0}} \e^{-\bomega^2\sigma_t^2}}} \e^{-\bomega^2\sigma^2_t}
\end{align}
up to error $\CO( \frac{1}{Nt_0\sigma_t}\e^{ - N^2v_0^2\sigma_t^2/2} + \frac{1}{ Nt_0/\sigma_t}\e^{ - N^2t_0^2/16\sigma_t^2} )$ in 2-norm.
\end{prop}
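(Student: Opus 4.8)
\textbf{Proof proposal for \autoref{prop:DFT of Gaussian}.}

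The plan is to reduce the discrete Fourier Transform of a truncated, discretized Gaussian to the continuous Fourier Transform of the full Gaussian, tracking two distinct sources of error: the \emph{truncation} error from restricting $\bt$ to the finite window $S_{t_0}$ rather than all of $t_0\BZ$, and the \emph{aliasing} error from sampling on a lattice rather than on the whole real line. The key classical tool I would invoke is the Poisson summation formula (periodic summation / Dirichlet-kernel folding): the discrete Fourier Transform of a function sampled on $t_0\BZ$ equals the periodization, with period $N\omega_0 = 2\pi/t_0$, of the continuous Fourier Transform. Applied to a Gaussian $\e^{-t^2/4\sigma_t^2}$, whose continuous transform is again a Gaussian $\propto \e^{-\omega^2\sigma_t^2}$ (the calculation already recorded in \autoref{sec:Gaussian_ansatz}), this periodization is a sum of Gaussians centered at integer multiples of $N\omega_0$; keeping only the central term incurs an error controlled by the tail $\sum_{k\neq 0}\e^{-(kN\omega_0)^2\sigma_t^2}=\CO(\e^{-N^2\omega_0^2\sigma_t^2})$, which is why we need $(N\omega_0)^2\sigma_t^2 = \Omega(\log(1/\epsilon))$. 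Dually, the error from truncating the \emph{time}-domain sum to $|\bt|\le$ (roughly) $Nt_0/2$ is governed by the Gaussian tail $\sum_{|\bt|> Nt_0/2}\e^{-\bt^2/2\sigma_t^2} = \CO(\frac{\sigma_t}{Nt_0}\e^{-N^2t_0^2/8\sigma_t^2})$ by the standard Gaussian tail estimate (bounding the sum by an integral), matching the stated $\e^{-N^2t_0^2/16\sigma_t^2}$ form up to constants in the exponent.

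Concretely, the steps I would carry out are: (i) write $f(\bt)$ as the restriction to the window of a normalized lattice-sampled Gaussian $g(\bt) = c\,\e^{-\bt^2/4\sigma_t^2}$, and bound $\|f - g|_{\text{lattice}}\|_2$ by the time-domain tail above; (ii) note $\bar{\CF}$ is a unitary on $\ell_2(S_{t_0})$ (cf.\ \autoref{prop:properties_DFT}), so this $\ell_2$ time-domain error transfers verbatim to the frequency domain; (iii) apply Poisson summation to identify $\bar{\CF}(g)(\bomega)$ with the period-$N\omega_0$ periodization of the continuous Gaussian transform, and bound the difference from the single central Gaussian by the frequency-domain tail; (iv) reconcile the two normalization constants --- the discrete one $1/\sqrt{\sum_{\bomega\in S_{\omega_0}}\e^{-\bomega^2\sigma_t^2}}$ versus the one coming out of Poisson summation --- observing that both differ from the ideal continuous normalization by the same exponentially small Gaussian-tail quantities, so the normalization mismatch is absorbed into the claimed error; (v) assemble via the triangle inequality, and verify that the two tail quantities, re-expressed using $\omega_0 t_0 = 2\pi/N$, take exactly the advertised forms $\CO\!\big(\frac{1}{Nt_0\sigma_t}\e^{-N^2\omega_0^2\sigma_t^2/2}\big)$ and $\CO\!\big(\frac{\sigma_t}{Nt_0}\e^{-N^2t_0^2/16\sigma_t^2}\big)$, the prefactors being the leading-order corrections in the Gaussian-tail integral bounds.

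The main obstacle I anticipate is step (iii): making the Poisson summation argument fully rigorous on the \emph{finite} lattice $S_{t_0}$ (which is a truncated lattice, not the full $t_0\BZ$) rather than hand-waving with Dirac combs. The clean way around this is to first do the truncation in step (i) so that by step (iii) one is genuinely working with the full lattice sum $\sum_{\bt\in t_0\BZ}$, where Poisson summation holds unconditionally for Schwartz functions like the Gaussian; the finite-$N$ artifacts are then entirely contained in the already-estimated truncation error. A secondary nuisance is bookkeeping the parity/endpoint convention of $S^{\lceil N\rfloor}$ (the asymmetry between $-\lceil(N-1)/2\rceil$ and $\lfloor(N-1)/2\rfloor$, and the special role of the Nyquist point when $N$ is even, flagged in \autoref{prop:realSecReal}'s footnote); since the Gaussian is even and these endpoint discrepancies are themselves $\CO(\e^{-N^2t_0^2/\sigma_t^2})$-small, they fold harmlessly into the error budget, but the write-up should note this explicitly rather than silently symmetrizing.
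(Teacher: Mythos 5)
Your proposal is correct and follows essentially the same route as the paper's own proof: invoke the periodic-summation identity (the paper's \autoref{fact:periodic_sum}, a discrete--continuous Poisson summation), then bound the replacement of the periodization by its central $n=0$ copy via Gaussian tail estimates in both the time window (truncation) and the frequency window (aliasing), combining by the triangle inequality. Your extra bookkeeping — transferring the time-domain $\ell_2$ error across the unitary $\bar{\CF}$, and reconciling the discrete normalization constants against the periodization normalization — are details the paper's terse two-line argument leaves implicit, but you identify them correctly and they do fold into the same tail error budget.
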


In order to relate the continuous Fourier Transform to the discrete one, we apply the discretization to a continuous-variable function after periodic summation. 
This is related to the \emph{Poisson Summation Formula}~\cite[Chapter II \S 13]{zygmund2003TrigSeries}, but pushes the idea one step further to the realm of discrete Fourier Transform. Similar ideas are used, e.g., in lattice cryptography (c.f.,~\cite{Regev2009OnLL}), but we include a self-contained treatment for completeness. 

To state the following general result we introduce the notation $\ell_1(\mathbb{R})$ for (Lebesque) integrable $\mathbb{R}\rightarrow \mathbb{C}$ functions.

\begin{fact}\label{fact:periodic_sum}
	Consider the Fourier Transform $\hat{f}(\omega) := \frac{1}{\sqrt{2\pi}}\int_{-\infty}^{\infty} \e^{-\ri \omega t} f(t)\mathrm{d}t$ of a function $f\in \ell_1(\mathbb{R})$.
	Suppose that a ``wrapped around'' version of $f$ can be defined such that $p(t) = \sum_{n = -\infty}^{\infty} f(t + n N t_0)$ for almost every $t\in \mathbb{R}$ (i.e., the set of points where the equality does not hold has Lebesgue measure $0$), $p(t)$ is continuous at every $\bt \in S_{t_0}$ and Riemann integrable on the interval $[-\frac{N}{2}t_0,\frac{N}{2}t_0]$. If the sequence $k\cdot \hat{f}(k\omega_0)\colon k\in \mathbb{Z}$ is bounded in absolute value, then the limit $\hat{p}(\bomega):=\lim_{B \rightarrow \infty}\sum_{\ell=-B}^{B} \hat{f}(\bomega+\ell N \omega_0)$ exists for every $\bomega\in  S_{\omega_0}$ and $\hat{p}(\bomega)$ is the discrete Fourier Transform of $p(\bt)$, i.e., 
	\begin{align}
	\hat{p}(\bomega)=\frac{t_0}{\sqrt{2\pi}}\sum_{\bt \in S_{t_0}} \e^{-\ri \bomega \bt} p(\bt) . 
	\end{align}
\end{fact}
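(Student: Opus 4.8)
The plan is to prove \autoref{fact:periodic_sum} by computing the discrete Fourier coefficients of the periodized function $p$ directly and showing they equal the claimed periodic summation of $\hat{f}$. First I would record the setup: since $f\in \ell_1(\mathbb{R})$, the ``wrapped around'' sum $p(t)=\sum_{n}f(t+nNt_0)$ converges in $L^1$ on the fundamental period $[-\tfrac{N}{2}t_0,\tfrac{N}{2}t_0]$ (this is the standard periodization argument: $\int_{-Nt_0/2}^{Nt_0/2}\sum_n|f(t+nNt_0)|\mathrm{d}t=\|f\|_1<\infty$, so the sum is finite a.e.\ and defines an $L^1$ function). The hypothesis adds that $p$ is continuous at each grid point $\bt\in S_{t_0}$ and Riemann integrable on the period, so the Riemann-sum expression $\frac{t_0}{\sqrt{2\pi}}\sum_{\bt\in S_{t_0}}\e^{-\ri\bomega\bt}p(\bt)$ makes sense and is what we want to identify.

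The key computation is to write, for $\bomega=k\omega_0\in S_{\omega_0}$, the discrete transform of $p$ as a Riemann sum and interchange it with the series defining $p$; but rather than fight the oscillatory discrete sum directly, the cleaner route is: (i) expand $p$ on the period in its \emph{continuous} Fourier series, whose $\ell$-th coefficient is (by unfolding the periodization and using $\omega_0 t_0=2\pi/N$, i.e.\ $Nt_0\cdot\ell\omega_0$-periodicity matching) exactly $\frac{\sqrt{2\pi}}{Nt_0}\hat f(\ell\omega_0)$; (ii) observe that the discrete Fourier transform of the sampled sequence $p(\bt)$ is the \emph{aliased} version of these continuous coefficients, i.e.\ $\hat p(k\omega_0)$ collects all coefficients at indices congruent to $k$ modulo $N$, which is precisely $\sum_{\ell}\hat f((k+\ell N)\omega_0)$ up to the normalization $t_0/\sqrt{2\pi}$. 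The boundedness hypothesis on $k\mapsto k\,\hat f(k\omega_0)$ guarantees the aliasing series $\sum_\ell \hat f((k+\ell N)\omega_0)$ converges (comparison with $\sum 1/\ell$ is not quite enough, but boundedness of $k\hat f(k\omega_0)$ together with symmetric partial sums $\sum_{\ell=-B}^{B}$ gives convergence of the symmetric limit by pairing $\pm\ell$ terms — the $1/\ell$ pieces cancel and the remainder is summable); continuity of $p$ at the sample points is exactly the classical condition under which the Fourier series evaluated by sampling equals the aliased coefficients (a Dirichlet-type pointwise statement).

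I expect the main obstacle to be the rigorous justification of the interchange of summation and the pointwise sampling identity — concretely, showing that $\lim_{B\to\infty}\sum_{\ell=-B}^{B}\hat f(\bomega+\ell N\omega_0)$ both exists and coincides with $\frac{t_0}{\sqrt{2\pi}}\sum_{\bt}\e^{-\ri\bomega\bt}p(\bt)$. The delicate point is that $\hat f$ need only be $\ell_1$-transform-of-$\ell_1$, so it may decay only like $1/|\omega|$ (hence the explicit boundedness hypothesis on $k\hat f(k\omega_0)$), and one cannot invoke absolute convergence; instead one must use the symmetric cancellation in the $\pm\ell$ pairing and the continuity of $p$ at grid points to get a genuine pointwise reconstruction. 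Everything else — the $L^1$ periodization, the normalization bookkeeping with $\omega_0 t_0=2\pi/N$, and recognizing the Riemann sum — is routine. Once \autoref{fact:periodic_sum} is in hand, \autoref{prop:DFT of Gaussian} follows by applying it to the Gaussian $f(t)\propto\e^{-t^2/4\sigma_t^2}$, whose continuous transform is again Gaussian, and bounding the periodization tails in both domains by the stated exponentially small quantities $\CO(\frac{1}{Nt_0\sigma_t}\e^{-N^2\omega_0^2\sigma_t^2/2}+\frac{1}{Nt_0/\sigma_t}\e^{-N^2t_0^2/16\sigma_t^2})$.
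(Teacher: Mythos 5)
Your roadmap matches the paper's: unfold the periodization to identify $\frac{\sqrt{2\pi}}{Nt_0}\hat f(\ell\omega_0)$ as the $\ell$-th Fourier-series coefficient of $p$ on its period, invoke a Dirichlet-type pointwise convergence theorem for the Fourier series (the paper cites \cite[Theorem~15.3]{korner1988FourierAnalysis}, which needs all three stated hypotheses together — Riemann integrability on the period, continuity at the specific grid points, and boundedness of $\ell c_\ell$), and then feed the resulting expansion of $p(\bt)$ through the discrete transform so that the discrete delta collapses the residue index $j$ and leaves the aliasing sum.

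What does not hold up as stated is the claim that $\sum_{\ell=-B}^{B}\hat f(\bomega+\ell N\omega_0)$ converges because ``the $1/\ell$ pieces cancel and the remainder is summable'' after pairing $\pm\ell$. Under the hypothesis one only has $\hat f(k\omega_0)=g(k)/k$ for some bounded $g$, and with $m:=\bomega/\omega_0$ the paired term
\begin{align}
\hat f\bigl((m+\ell N)\omega_0\bigr)+\hat f\bigl((m-\ell N)\omega_0\bigr)
= g(m+\ell N)\left(\frac{1}{m+\ell N}+\frac{1}{m-\ell N}\right)+\bigl(g(m-\ell N)-g(m+\ell N)\bigr)\frac{1}{m-\ell N}
\end{align}
still carries a genuine $O(1/\ell)$ contribution whenever $g$ fluctuates (e.g.\ $g(k)=\sin(\alpha k)$), so symmetric pairing does not yield absolute summability of the tail — the convergence is honestly conditional. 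The correct route, which the paper takes, never tries to prove convergence of the aliasing sum from scratch: it first establishes $p(\bt)=\lim_B\sum_{k=-B}^B c_k\e^{\ri\bt k\omega_0}$ by Körner's theorem, uses the crude $|c_k|=\bigO{1/k}$ bound only to justify shifting the truncation window from $[-B,B]$ to $[-BN-\lceil(N-1)/2\rceil,\,BN+\lfloor(N-1)/2\rfloor]$, then regroups $k=\ell N+j$ and pushes the already-convergent series through the finite sum over $\bt$. The existence of the limit $\hat p(\bomega)$ then comes out as a byproduct, rather than requiring a separate summability argument that would in fact fail.
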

\begin{proof}
	For every $\omega \in  \omega_0\cdot\mathbb{Z}$ we have that
	\begin{align}
	\hat{f}(\omega) &= \frac{1}{\sqrt{2\pi}}\int_{-\infty}^{\infty}\e^{-\ri \omega t} f(t)\mathrm{d}t  \tag*{(since $f\in \ell_1(\mathbb{R})$)}\\&
	= \frac{1}{\sqrt{2\pi}}\sum_{n = -\infty}^{\infty}\int_{-\frac{N}{2}t_0}^{\frac{N}{2}t_0} \e^{-\ri \omega (t + nNt_0) } f(t + nNt_0)\mathrm{d}t \tag*{(by Fubini's theorem since $f\in \ell_1(\mathbb{R})$)}\\&
	= \frac{1}{\sqrt{2\pi}}\sum_{n = -\infty}^{\infty}\int_{-\frac{N}{2}t_0}^{\frac{N}{2}t_0}\e^{-\ri \omega t} f(t + nNt_0)\mathrm{d}t  \tag*{(since $\omega \in \omega_0\cdot\mathbb{Z}$ and $\omega_0 t_0=\frac{2\pi}{N}$)}\\&	
	= \frac{1}{\sqrt{2\pi}}\int_{-\frac{N}{2}t_0}^{\frac{N}{2}t_0}\sum_{n = -\infty}^{\infty}  \e^{-\ri \omega t} f(t + nNt_0)\mathrm{d}t  \tag*{(by Fubini's theorem since $f\in \ell_1(\mathbb{R})$)}\\&
	= \frac{1}{\sqrt{2\pi}}\int_{-\frac{N}{2}t_0}^{\frac{N}{2}t_0} \e^{-\ri \omega t} p(t)\mathrm{d}t \tag*{(since $p(t) = \sum_{n = -\infty}^{\infty} f(t + nNt_0)$ a.s.)}.
	\end{align}
	The above equality means that $c_k:=\frac{\sqrt{2\pi}}{N t_0}\cdot \hat{f}(k\cdot\omega_0)$ is the Fourier series of $p(t)$. Since 
	$p$ is Riemann integrable on the interval $[-\frac{N}{2}t_0,\frac{N}{2}t_0]$, continuous at every $\bt \in S_{t_0}$ point and $k \cdot c_k$ is bounded by assumption, we have for all $\bt\in S_{t_0}$ that
	\begin{align}
	p(\bt)&= \lim_{B \rightarrow \infty} \sum_{k=-B}^B c_k \e^{\ri \bt k \omega_0} \tag*{(due to \cite[Theorem 15.3]{korner1988FourierAnalysis})}\\&	
	= \lim_{B \rightarrow \infty} \sum_{k=-BN}^{BN} c_k \e^{\ri \bt k  \omega_0}\\&		
	= \lim_{B \rightarrow \infty} \sum_{k=-BN-\left\lceil(N-1)/2\right\rceil}^{BN+\left\lfloor(N-1)/2\right\rfloor} c_k \e^{\ri \bt k  \omega_0} \tag*{(since $|c_k|=\bigO{\frac{1}{k}}$)}\\&	
	= \lim_{B \rightarrow \infty} \sum_{\ell=-B}^{B} \sum_{j=-\left\lceil(N-1)/2\right\rceil}^{\left\lfloor(N-1)/2\right\rfloor} c_{\ell N + j} \e^{\ri \bt (\ell N + j)  \omega_0} \tag*{(set $k=\ell N + j$)}\\&		
	= \lim_{B \rightarrow \infty} \sum_{\ell=-B}^{B} \sum_{\bomega\in S_{\omega_0}} \frac{\sqrt{2\pi}}{N t_0}\hat{f}\left(\ell N\omega_0+\bomega\right) \e^{\ri \bt  \bomega}.\tag*{(set $\bomega=j\omega_0$; use $\bt \in  t_0\cdot\mathbb{Z}$ and $\omega_0 t_0=\frac{2\pi}{N}$)}
	\end{align}
	Finally, for all $\bomega'\in S_{\omega_0}$ we have
	\begin{align}
		\frac{t_0}{\sqrt{2\pi}}\sum_{\bt\in S_{t_0}}\e^{-\ri\bt \bomega'}p(\bt)&
		=\sum_{\bt\in S_{t_0}}\e^{-\ri\bt \bomega'}\lim_{B \rightarrow \infty} \sum_{\ell=-B}^{B} \sum_{\bomega\in S_{\omega_0}} \frac{1}{N}\hat{f}\left(\bomega + \ell N\omega_0\right) \e^{\ri \bt  \bomega}\\&
		=\lim_{B \rightarrow \infty} \sum_{\ell=-B}^{B} \sum_{\bomega\in S_{\omega_0}} \frac{1}{N}\hat{f}\left(\bomega + \ell N\omega_0\right) \sum_{\bt\in S_{t_0}}\e^{-\ri\bt \bomega'}\e^{\ri \bt  \bomega}\\&		
		=\lim_{B \rightarrow \infty} \sum_{\ell=-B}^{B} \hat{f}\left(\bomega' + \ell N\omega_0\right) \tag*{$\left(\text{since }\sum_{\bt\in S_{t_0}}\e^{\ri \bt  (\bomega-\bomega')}=N\cdot\delta_{\bomega, \bomega'}\right)$}\\&
		= \hat{p}(\bomega').\tag*{\qedhere}
	\end{align}	
\end{proof}

Using the above, we prove \autoref{prop:DFT of Gaussian}.
\begin{proof}[Proof of \autoref{prop:DFT of Gaussian}]
Apply Fact~\ref{fact:periodic_sum} for the Gaussian $f(t) =
\sqrt{g_{\sigma}(t)} = \frac{1}{\sqrt{\sigma_t\sqrt{2\pi}}}\e^{-\frac{t^2}{4\sigma_t^2}}$. The problem reduces to implementing the periodic sum $p(\bt)$ (\autoref{fig:Gaussian}) approximately. Up to a small error from the Gaussian tail, we may only keep the centered Gaussian $(n=0)$ as long as the Gaussian is largely confined in the window for both the time domain $(Nt_0)^2 \gg \sigma_t^2$ and frequency domain $(N\omega_0)^2\gg\sigma_t^2$.
\end{proof}

\begin{figure}[t]
    \centering
    \includegraphics[width=0.9\textwidth]{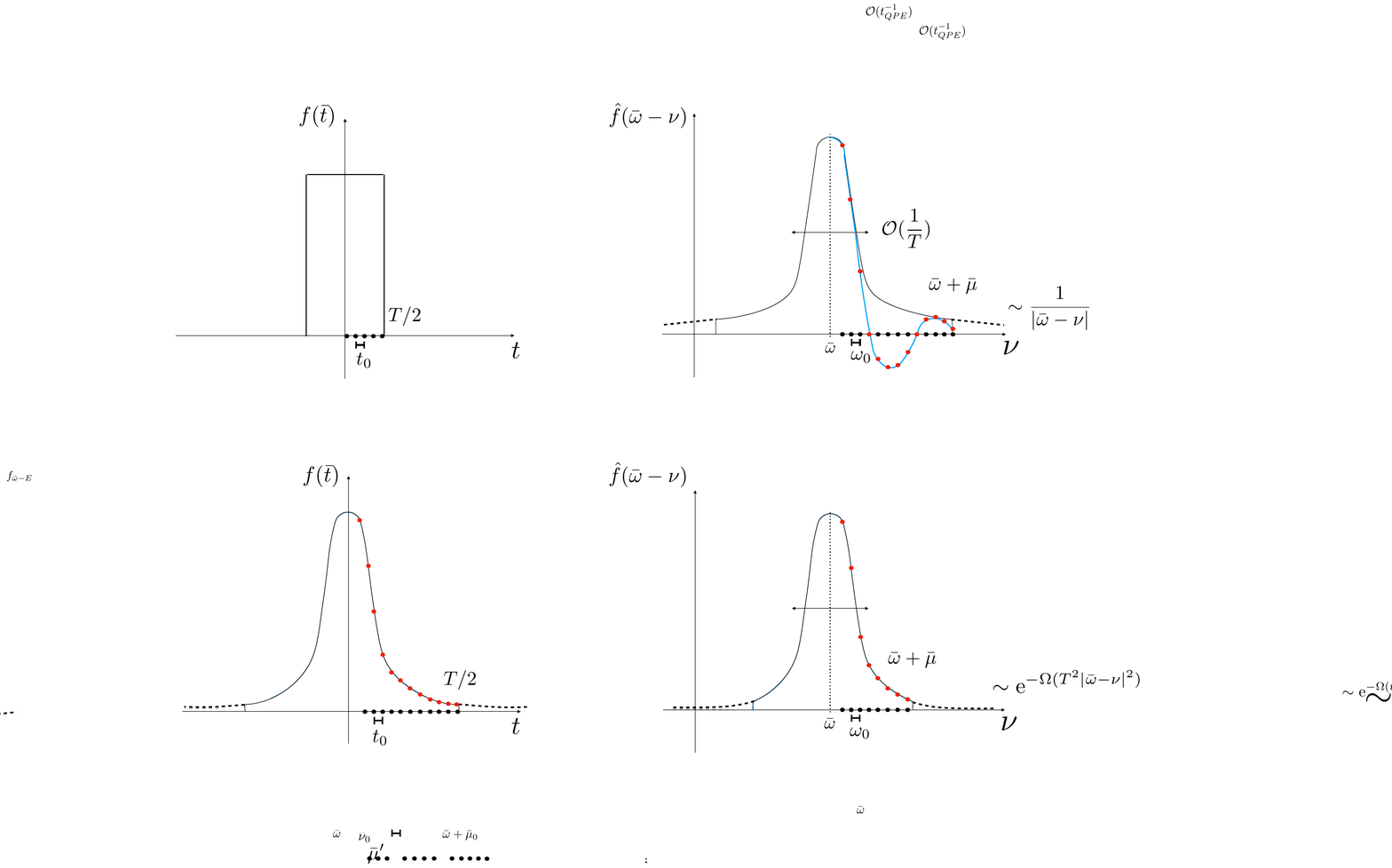}
    \caption{ Left: the weight function as (approximately) the Gaussian distribution but truncated at $T$. Right: An illustration for the Fourier Transformed amplitudes, which is also (approximately) Gaussian. It peaks near energy $\nu = \bomega$ with a width $\sim T^{-1}$ and decays exponentially. The secular approximation truncates the profile at an energy $\bmu$ in the tail $\bmu \sim T^{-1}$. 
    }
    \label{fig:Gaussian}
\end{figure}
\subsection{Tail bounds}\label{sec:tail_bounds}
We evaluate the tail bounds that appear in the analysis of the secular approximation. 
First, we consider the case of the uniform weights whose Fourier Transform has a heavy tail impacting the accuracy.

\begin{prop}[Tail bound for uniform weights]\label{prop:uniformTail}
	Let	$f(\bt):=\frac{\indicator(-T \le \bt < T)}{\sqrt{2T/t_0}}\e^{\ri\nu\bt}$. Then, its discrete Fourier Transform is 
	\begin{align}
		\hat{f}(\bomega)=\frac{1}{\sqrt{2TN/t_0}}\frac{\e^{\ri  (\nu-\bomega) T} - \e^{ - \ri  (\nu-\bomega) T}}{\e^{\ri  (\nu-\bomega) t_0} - 1}
	\end{align}
	with a tail bound
	\begin{align}
		\sum_{\labs{\bomega} > m \omega_0} \labs{\hat{f}(\bomega)}^2\le\frac{ \pi}{2m\omega_0 T}.
	\end{align}
\end{prop}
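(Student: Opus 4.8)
\textbf{Proof plan for Proposition~\ref{prop:uniformTail}.}
The plan is to compute the discrete Fourier Transform directly from the definition, recognize it as a geometric/Dirichlet-kernel sum, and then bound the tail by comparing the discrete sum against the integral of a simple majorant. First I would write $\hat f(\bomega) = \frac{1}{\sqrt N}\sum_{-T\le \bt<T}\e^{-\ri\bomega\bt}f(\bt) = \frac{1}{\sqrt{2TN/t_0}}\sum_{\bt=-T,\,\bt\in S_{t_0}}^{T-t_0}\e^{\ri(\nu-\bomega)\bt}$, which is a finite geometric series in the ratio $\e^{\ri(\nu-\bomega)t_0}$. Summing it (the number of terms is $2T/t_0$, running from $-T$ to $T-t_0$) and simplifying the endpoints gives exactly the claimed closed form $\hat f(\bomega)=\frac{1}{\sqrt{2TN/t_0}}\frac{\e^{\ri(\nu-\bomega)T}-\e^{-\ri(\nu-\bomega)T}}{\e^{\ri(\nu-\bomega)t_0}-1}$; this is a routine step, just care with the off-by-one in the summation range and the overall phase factor.

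Next I would take absolute values. Writing $x := \nu-\bomega$, the numerator has modulus $2|\sin(xT)|\le 2$, and the denominator has modulus $|\e^{\ri x t_0}-1| = 2|\sin(xt_0/2)|$. So $|\hat f(\bomega)|^2 = \frac{t_0}{2TN}\cdot\frac{\sin^2(xT)}{\sin^2(xt_0/2)} \le \frac{t_0}{2TN}\cdot\frac{1}{\sin^2(xt_0/2)}$. Here the key quantitative input is $N\omega_0 t_0 = 2\pi$ i.e. $t_0 = 2\pi/(N\omega_0)$; and for $\bomega$ ranging over $S_{\omega_0}$ (integer multiples of $\omega_0$ within the fundamental window) we have $|x t_0/2| \le \pi$, so $|\sin(x t_0/2)|$ behaves like $|x t_0|/2$ up to the standard bound $|\sin\theta|\ge 2|\theta|/\pi$ for $|\theta|\le\pi/2$, or more simply we can just bound the tail sum directly. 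The tail $\sum_{|\bomega|>m\omega_0}|\hat f(\bomega)|^2$ is then at most $\frac{t_0}{2TN}\sum_{|\bomega|>m\omega_0}\frac{1}{\sin^2((\nu-\bomega)t_0/2)}$, and since the summand is periodic in $\bomega$ with period $N\omega_0$ (the full window), we can also think of it as an integral comparison.

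The cleanest route for the tail: parametrize $\bomega = \bomega_0 + k\omega_0$ and use that on the window the points $(\nu-\bomega)t_0/2$ are spaced by $\omega_0 t_0/2 = \pi/N$ and $\csc^2$ is convex between consecutive poles, so a sum of $\csc^2$ over points at distance $\ge m\omega_0$ from the nearest pole $x=0$ (mod $N\omega_0$) is bounded by $\frac{1}{(\text{spacing})}\int_{|u|\ge m\omega_0 t_0/2}\csc^2(u)\,du = \frac{2}{\omega_0 t_0}\cdot 2\cot(m\omega_0 t_0/2)$, using $\int\csc^2 = -\cot$. Then $\cot(m\omega_0 t_0/2)\le \frac{2}{m\omega_0 t_0}$ (since $\cot\theta\le 1/\theta$), and assembling the constants $\frac{t_0}{2TN}\cdot\frac{2}{\omega_0 t_0}\cdot 2\cdot\frac{2}{m\omega_0 t_0} = \frac{4}{TN\omega_0^2 t_0 m} = \frac{4}{TN\omega_0\cdot(N\omega_0 t_0)\cdot m/N}\cdots$ — one has to be careful here — collapses via $N\omega_0 t_0 = 2\pi$ to $\frac{2}{\pi m \omega_0 T}$ or so. The target constant is $\frac{\pi}{2m\omega_0 T}$, so I would tune the crude bounds (e.g.\ use $|\sin\theta|\ge 2|\theta|/\pi$ only where needed, and the sharper integral comparison elsewhere) to land on the stated constant; the main obstacle is purely bookkeeping: getting the $\csc^2$-sum-to-integral comparison tight enough and tracking the $N$, $\omega_0$, $t_0$ factors through $N\omega_0 t_0 = 2\pi$ so that the final constant is exactly $\pi/2$ and not merely $O(1)$. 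There is no conceptual difficulty, only the discrete-sum estimate requires a little care near the pole at $x=0$ (which is excluded precisely because $|\bomega|>m\omega_0 \ge \omega_0$ keeps us away from $\nu$ when $\nu$ is itself on the grid).
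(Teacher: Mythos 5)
Your plan is correct and shares the paper's skeleton: the geometric-series computation, the bound $\labs{\text{numerator}}\le 2$, and control of the tail via the denominator $\labs{\e^{\ri(\nu-\bomega)t_0}-1}^2=4\sin^2((\nu-\bomega)t_0/2)$. You and the paper diverge only in the final tail estimate. The paper uses $\labs{\e^{\ri x}-1}\ge\frac{2}{\pi}\labs{x}$ on $[-\pi,\pi]$ (valid throughout the window since $\labs{\bomega t_0}\le\pi$), which converts the sum into $\pi^2\sum_{n>m}n^{-2}$ and then telescopes $\sum_{n>m}\frac{1}{n(n-1)}=\frac{1}{m}$ to land on $\frac{\pi}{2m\omega_0 T}$. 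You instead do a direct sum-to-integral comparison for $\csc^2$, giving $\frac{N}{\pi}\cot(m\pi/N)$, and then bound $\cot\theta\le 1/\theta$. This is equally elementary and, after pushing the constants through $N\omega_0 t_0=2\pi$, actually yields the \emph{sharper} constant $\frac{2}{\pi m\omega_0 T}$ (note $\frac{2}{\pi}<\frac{\pi}{2}$), so contrary to your worry you need not tune anything --- your route clears the stated bound with room to spare. Two small fixes for the writeup: the integral should be taken over the fundamental half-window, $\int_{m\omega_0t_0/2}^{\pi/2}\csc^2(u)\,\rd u=\cot(m\omega_0 t_0/2)$, since the improper integral of $\csc^2$ over all of $\BR$ diverges by periodicity (the paper's $\sum n^{-2}$ avoids this care because that series converges even when extended to infinity); and it is the \emph{monotonicity} of $\csc^2$ on $(0,\pi/2]$, not its convexity, that lets you dominate each summand by the integral over the preceding cell of width $\pi/N$.
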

\begin{proof}
The Fourier Transform gives a geometric series with ratio $\e^{\ri (\nu-\bomega) \bt}$
\begin{align}
    \frac{1}{\sqrt{N}}\sum_{\bt \in S_{t_0}}\e^{-\ri \bomega \bt}f(\bt) = \frac{1}{\sqrt{2TN/t_0}} \sum_{-T \le \bt < T} \e^{\ri (\nu-\bomega) \bt} = \frac{1}{\sqrt{2TN/t_0}}\frac{\e^{\ri  (\nu-\bomega) T} - \e^{ - \ri  (\nu-\bomega) T}}{\e^{\ri  (\nu-\bomega) t_0} - 1}.
\end{align}
This function scales inversely with $\bomega$
\begin{align}
		\sum_{\labs{\bomega} > m \omega_0} \labs{\hat{f}(\bomega)}^2  &\le \frac{1}{2NT/t_0}\sum_{\labs{\bomega} > m \omega_0} \frac{4}{\labs{\e^{\ri \bomega t_0 } - 1}^2}\\&
		\le \frac{t_0}{2NT} \sum_{\labs{\bomega} > m \omega_0} \frac{\pi^2}{(\bomega t_0)^2}\tag*{(since $\labs{\e^{\ri x} - 1} \ge \frac{2}{\pi} \labs{x}$ for $x\in[-\pi, \pi]$)}\\& 
		\le \frac{\pi^2}{NTt_0\omega_0^2} \sum_{n = m + 1 }^\infty \frac{1}{n^2}\\& 
		\le \frac{\pi}{2T\omega_0} \sum_{n = m + 1 }^\infty \frac{1}{n(n-1)}\tag*{(since $\omega_0 t_0=\frac{2\pi}{N}$)}\\&
		= \frac{ \pi}{2m\omega_0 T}. \tag*{\qedhere}
\end{align}
\end{proof}
In retrospect, it is very important that we consider the 2-norm of the tail here; the 1-norm would be divergent.

Now, we consider Gaussians, which have a rapidly decaying tail, greatly improving the accuracy.
\begin{restatable}[Tail bound for Gaussian weights]{prop}{GaussianTail}\label{prop:Gaussian_tail}
	For the function $f(\bt) = ({\sum_{\bt \in S_{t_0}} \e^{-\frac{\bt^2}{2\sigma_t^2}}})^{-1/2} \e^{-\frac{\bt^2}{4\sigma_t^2}}$, the Fourier-transformed tail satisfies
	\begin{align}
		\sqrt{\sum_{\underset{\labs{\bomega}\ge \bmu}{\bomega\in S^{\lceil N \rfloor}_{\omega_0}} } \labs{\hat{f}(\bomega)}^2} \le \CO\L(\frac{1}{\sqrt{N\omega_0\sigma_t}}\e^{ - N^2\omega_0^2\sigma_t^2/2} + \frac{1}{ \sqrt{Nt_0/\sigma_t}}\e^{ - N^2t_0^2/16\sigma_t^2} +\frac{1}{\sqrt{\bmu\sigma_t}}\e^{ - \bmu^2\sigma_t^2}\R).
	\end{align}
\end{restatable}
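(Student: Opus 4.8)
The plan is to write $\hat f$ as an ideal frequency-domain Gaussian plus a small discretization error, bound the tail of the ideal Gaussian by an elementary Gaussian-integral estimate, and absorb the discretization error into the first two terms using the periodic-summation analysis already developed in \autoref{fact:periodic_sum} and \autoref{prop:DFT of Gaussian}.

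Concretely, let $g(\bomega):=\big(\sum_{\bomega'\in S^{\lceil N \rfloor}_{\omega_0}}\e^{-\bomega'^2\sigma_t^2}\big)^{-1/2}\e^{-\bomega^2\sigma_t^2}$ be the normalized discrete Gaussian in the frequency variable that appears in \autoref{prop:DFT of Gaussian}. The triangle inequality in $\ell_2(S^{\lceil N \rfloor}_{\omega_0})$ gives
\begin{align}
\sqrt{\sum_{\substack{\bomega\in S^{\lceil N \rfloor}_{\omega_0}\\ \labs{\bomega}\ge\bmu}}\labs{\hat f(\bomega)}^2}\le\sqrt{\sum_{\substack{\bomega\in S^{\lceil N \rfloor}_{\omega_0}\\ \labs{\bomega}\ge\bmu}}\labs{g(\bomega)}^2}+\nrm{\hat f-g}_2.
\end{align}
By \autoref{prop:DFT of Gaussian} (whose proof identifies $\hat f$ with the periodic summation of the continuous Fourier transform of $t\mapsto\e^{-t^2/4\sigma_t^2}$ via \autoref{fact:periodic_sum}), the term $\nrm{\hat f-g}_2$ is controlled by exactly the Gaussian tails coming from truncating the frequency profile to the finite grid and from time-domain aliasing/truncation of $f$, and hence is $\CO\big(\tfrac1{\sqrt{N\omega_0\sigma_t}}\e^{-N^2\omega_0^2\sigma_t^2/2}+\tfrac1{\sqrt{Nt_0/\sigma_t}}\e^{-N^2t_0^2/16\sigma_t^2}\big)$; this supplies the first two terms of the claim (and it also shows the whole left-hand side is $\CO(1)$, consistent with $\nrm{\hat f}_2=1$).

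For the first term, rewrite it as a ratio of Gaussian sums,
\begin{align}
\sum_{\substack{\bomega\in S^{\lceil N \rfloor}_{\omega_0}\\ \labs{\bomega}\ge\bmu}}\labs{g(\bomega)}^2=\frac{\sum_{\bomega\in S^{\lceil N \rfloor}_{\omega_0},\,\labs{\bomega}\ge\bmu}\e^{-2\bomega^2\sigma_t^2}}{\sum_{\bomega\in S^{\lceil N \rfloor}_{\omega_0}}\e^{-\bomega^2\sigma_t^2}},
\end{align}
and estimate numerator and denominator separately. For the denominator, comparing the log-concave summand with $\tfrac1{\omega_0}\int_{\BR}\e^{-x^2\sigma_t^2}\rd x=\tfrac{\sqrt\pi}{\omega_0\sigma_t}$, together with the (implicit) assumption that $N\omega_0\sigma_t$ is large enough to make the grid truncation negligible, gives $\sum_{\bomega}\e^{-\bomega^2\sigma_t^2}=\Omega\big(\max(1,\tfrac1{\omega_0\sigma_t})\big)$. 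For the numerator, comparison with the integral plus the standard tail bound $\int_u^\infty\e^{-ax^2}\rd x\le\tfrac1{2au}\e^{-au^2}$ with $a=2\sigma_t^2$, $u=\bmu$ gives $\sum_{\labs{\bomega}\ge\bmu}\e^{-2\bomega^2\sigma_t^2}=\CO\big(\tfrac1{\omega_0\bmu\sigma_t^2}\e^{-2\bmu^2\sigma_t^2}\big)$ in the relevant regime $\omega_0\bmu\sigma_t^2\lesssim1$ (which always holds in the continuum limit $\omega_0\to0$). Dividing and taking a square root yields $\CO\big(\tfrac1{\sqrt{\bmu\sigma_t}}\e^{-\bmu^2\sigma_t^2}\big)$, the third term.

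The main thing to be careful about is the bookkeeping in these discrete-sum-to-integral comparisons: one must confirm the frequency grid is simultaneously fine enough ($\omega_0\sigma_t$ small) and wide enough ($N\omega_0\sigma_t$ and $Nt_0/\sigma_t$ large, matching the hypotheses of \autoref{prop:DFT of Gaussian}) for the normalization constant to be $\Theta(1/\omega_0\sigma_t)$, and to dispose of the degenerate regimes (e.g.\ $\bmu\lesssim\omega_0$ or $\bmu\sigma_t\lesssim1$) in which the stated bound is already weak or vacuous. Apart from that, once the approximation $\hat f\approx g$ of \autoref{prop:DFT of Gaussian} is in hand, everything reduces to a routine Gaussian-tail calculation.
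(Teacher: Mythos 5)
Your proposal is correct and follows essentially the same route as the paper: both invoke \autoref{prop:DFT of Gaussian} to replace $\hat f$ by the normalized discrete frequency-domain Gaussian up to an $\ell_2$ error that supplies the first two terms, and both then estimate the tail of that ideal Gaussian to obtain the $\CO\big(\tfrac1{\sqrt{\bmu\sigma_t}}\e^{-\bmu^2\sigma_t^2}\big)$ term. The only difference is that you spell out the triangle-inequality decomposition and the sum-to-integral comparison in more detail than the paper, which states the final tail sum directly.
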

\begin{proof}
	By \autoref{prop:DFT of Gaussian}, the discrete Fourier Transform $\hat{f}(\bomega)$ is approximately
	$\frac{1}{\sum_{\bomega \in S_{\omega_0}} \e^{-\bomega^2\sigma_t^2}} \e^{-2\bomega^2\sigma^2_t}$ up to error 
	$\CO( \frac{1}{\sqrt{N\omega_0\sigma_t}}\e^{ - N^2v_0^2\sigma_t^2/2} + \frac{1}{ \sqrt{Nt_0/\sigma_t}}\e^{ - N^2t_0^2/16\sigma_t^2} )$. We then control the tail bound
	\begin{align}
		\frac{1}{\sum_{\bomega \in S_{\omega_0}} \e^{-\bomega^2\sigma_t^2}} \sum_{\labs{\bomega}\ge \bmu }  \e^{-2\bomega^2\sigma^2_t} = \CO(\frac{1}{\bmu\sigma_t}\e^{ - 2\bmu^2\sigma_t^2}).\tag*{\qedhere}
	\end{align}
\end{proof}

\section{Proving approximate detailed balance}\label{sec:error_boltzmann}
In this section, we prove approximate detailed balance (or discriminant proxy) for the constructed discriminant. It amounts to controlling the error arising from Boltzmann factors due to the finite resolution of the operator Fourier Transform. 

\subsection{A simpler but weaker bound}\label{sec:weaker}
We begin with a simpler but weaker bound. This will be enough for the Gaussian-damped discriminant due to its rapidly decaying tail. We can bootstrap this weaker bound using a more careful truncation scheme for the special case of uniform weight (which has a heavy tail) as shown in \autoref{sec:improved}. To prove our error bound, we introduce two useful technical lemmas.
\begin{lem}[Norm bound on block-band matrices]\label{lem:bandNorm}
	Let $V_i\subseteq \CH$ and $W_i\subseteq \CH'$ be systems of mutually orthogonal subspaces of $\CH$ and $\CH'$ respectively. If $\vM=\bigoplus_i \vB_i$ where $\vB_i\colon V_i\rightarrow W_i$, then $\nrm{\vM}=\max_i \nrm{\vB_i}$.
\end{lem}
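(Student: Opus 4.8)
\textbf{Proof plan for Lemma~\ref{lem:bandNorm} (norm of block-band matrices).}

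The statement is that a matrix which decomposes as an orthogonal direct sum of blocks $\vB_i\colon V_i\rightarrow W_i$ — with the $V_i$ mutually orthogonal in $\CH$ and the $W_i$ mutually orthogonal in $\CH'$ — has operator norm equal to $\max_i \nrm{\vB_i}$. The plan is a completely elementary two-sided argument: one inequality shows $\nrm{\vM}\geq \nrm{\vB_i}$ for every $i$, the other shows $\nrm{\vM}\leq \max_i\nrm{\vB_i}$, using only orthogonality and the Pythagorean identity.

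\emph{Lower bound.} First I would fix an index $i$ and an arbitrary unit vector $\ket{\psi}\in V_i$. Since $\ket{\psi}$ lies entirely in $V_i$ and the restriction of $\vM$ to $V_i$ is exactly $\vB_i$, we have $\vM\ket{\psi}=\vB_i\ket{\psi}$, hence $\nrm{\vM\ket{\psi}}=\nrm{\vB_i\ket{\psi}}$. Taking the supremum over unit $\ket{\psi}\in V_i$ gives $\nrm{\vM}\geq \nrm{\vB_i}$, and since $i$ was arbitrary, $\nrm{\vM}\geq \max_i\nrm{\vB_i}$.

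\emph{Upper bound.} Now take an arbitrary vector $\ket{\psi}$ in the domain of $\vM$. The key step is to write $\ket{\psi}=\sum_i \ket{\psi_i}$ with $\ket{\psi_i}$ the orthogonal projection of $\ket{\psi}$ onto $V_i$ (and note $\vM$ annihilates the orthogonal complement of $\bigoplus_i V_i$, so those components only decrease the norm and may be dropped). By mutual orthogonality of the $V_i$, $\nrm{\ket{\psi}}^2\geq\sum_i\nrm{\ket{\psi_i}}^2$. Applying $\vM$ and using that $\vM\ket{\psi_i}=\vB_i\ket{\psi_i}\in W_i$ together with the mutual orthogonality of the $W_i$, we get
\begin{align}
	\nrm{\vM\ket{\psi}}^2=\sum_i\nrm{\vB_i\ket{\psi_i}}^2\leq\sum_i\nrm{\vB_i}^2\nrm{\ket{\psi_i}}^2\leq\max_j\nrm{\vB_j}^2\sum_i\nrm{\ket{\psi_i}}^2\leq\max_j\nrm{\vB_j}^2\nrm{\ket{\psi}}^2.
\end{align}
Taking square roots and the supremum over unit $\ket{\psi}$ yields $\nrm{\vM}\leq\max_i\nrm{\vB_i}$. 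Combining the two bounds gives the claim.

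I do not expect any genuine obstacle here; the lemma is a standard fact about orthogonal direct sums of operators. The only point requiring mild care is being explicit that the $W_i$ being mutually orthogonal is what turns the image vectors $\vB_i\ket{\psi_i}$ into an orthogonal family so that $\nrm{\vM\ket{\psi}}^2$ splits as a sum of squares; without that hypothesis one would only get $\nrm{\vM}\leq\sqrt{\#\{i\}}\max_i\nrm{\vB_i}$ in general. If the block structure is instead presented as a bi-infinite band (the ``block-band'' terminology), the same argument applies verbatim since a band decomposition into disjoint diagonals is precisely such an orthogonal direct sum; no convergence subtleties arise because each $\ket{\psi}$ has the stated finite-energy/$\ell_2$ properties assumed throughout.
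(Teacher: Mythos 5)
Your proof is correct, and it takes a genuinely different route from the paper's. The paper's argument is a one-liner: take a singular value decomposition of each block $\vB_i$, observe that the orthogonality of the $V_i$ among themselves and of the $W_i$ among themselves lets you merge these into an SVD of the whole matrix $\vM$, and then read off $\nrm{\vM}$ as the largest singular value, which is $\max_i\nrm{\vB_i}$. Your version avoids invoking SVD entirely and instead runs a two-sided inequality directly from the Pythagorean identity, which is more pedestrian but also more self-contained and transparent about exactly where each orthogonality hypothesis is used (the $V_i$ orthogonality for splitting the input norm, the $W_i$ orthogonality for splitting the output norm). You also correctly flag that $W_i$-orthogonality is essential for the upper bound to hold without a dimension-dependent penalty -- this is a point the paper's compressed SVD proof leaves implicit. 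In the finite-dimensional setting of the paper the two arguments are equivalent in power; yours would generalize slightly more readily to settings where an SVD is not immediately available.
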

\begin{proof}
	We can get a singular value decomposition of $\vM$ by taking singular value decompositions of each $\vB_i$ and then merging them. Since $\nrm{\vM}$ is the largest singular value, we get the claimed equality.
\end{proof}

\begin{lem}[Norm bounds on sums of tensor products of matrices]\label{lem:tensorSumBound}
	Let $|I|\leq \infty$ and $\vA_i\in\mathbb{C}^{n \times m}$, $\vB_i\in\mathbb{C}^{n' \times m'}$ for each $i \in I$, then
	\begin{equation}\label{eq:CPBound}
		\nrm{\sum_{i\in I}\vA_i[\cdot]\vB_i^\dagg}_{2-2}=
		\nrm{\sum_{i\in I} \vA_i \otimes \vB^*}
		\leq \sqrt{\nrm{\sum_{i\in I} \vA_i \vA_i^\dagg} \nrm{\sum_{i\in I} \vB_i^\dagg \vB_i }}.
	\end{equation}
\end{lem}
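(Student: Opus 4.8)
\textbf{Proof plan for \autoref{lem:tensorSumBound}.}

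The plan is to prove the equality and the inequality separately. For the equality $\nrm{\sum_{i\in I}\vA_i[\cdot]\vB_i^\dagg}_{2-2}=\nrm{\sum_{i\in I} \vA_i \otimes \vB_i^*}$, I would invoke the standard correspondence between superoperators acting on the Hilbert--Schmidt space $(\mathbb{C}^{n\times m}$ with inner product $\ipc{\vX}{\vY}_{HS}=\tr(\vX^\dagg\vY))$ and matrices acting on the vectorized space $\mathbb{C}^n\otimes \mathbb{C}^m$: the map $\vX\mapsto \vA\vX\vB^\dagg$ corresponds, under $\mathrm{vec}(\vX)=(\vI\otimes T^{-1})\vX$ as defined in \autoref{sec:algortihm_coherent}, to $\vA\otimes \vB^*$ (the complex conjugate appears because $(\vB^\dagg)^T=\vB^*$). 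Since vectorization is a Hilbert-space isometry from $(\mathbb{C}^{n\times m},\ipc{\cdot}{\cdot}_{HS})$ onto $\mathbb{C}^n\otimes\mathbb{C}^m$ with the Euclidean inner product, the induced $2$--$2$ operator norm of the superoperator equals the ordinary operator norm of its vectorized matrix. Summing over $i\in I$ (the sum being well-defined, convergent in operator norm since $|I|\le\infty$) gives the stated equality.

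For the inequality, the clean route is the Cauchy--Schwarz/Schur-type trick: write $\sum_i \vA_i\otimes\vB_i^* = \vX\vY$ where $\vX$ is the "row" operator assembling the $\vA_i\otimes \vB_i^*$ into blocks and $\vY$ the corresponding "column". Concretely, let $\vX:=\sum_{i\in I}\bigl(\vA_i\otimes \vI_{n'}\bigr)\bigl(\bra{i}\otimes\vI\bigr)$ mapping $\mathbb{C}^{|I|}\otimes\mathbb{C}^{m}\otimes\mathbb{C}^{m'}$-type spaces appropriately, and $\vY:=\sum_{i\in I}\bigl(\ket{i}\otimes\vI\bigr)\bigl(\vI_m\otimes\vB_i^*\bigr)$, so that $\sum_i \vA_i\otimes\vB_i^* = \vX\vY$ with the ancillary index $i$ contracted. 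Then $\nrm{\vX\vY}\le\nrm{\vX}\,\nrm{\vY}$, and one computes $\nrm{\vX}^2=\nrm{\vX\vX^\dagg}=\nrm{\sum_i \vA_i\vA_i^\dagg\otimes\vI}=\nrm{\sum_i\vA_i\vA_i^\dagg}$ and similarly $\nrm{\vY}^2=\nrm{\vY^\dagg\vY}=\nrm{\sum_i \vI\otimes(\vB_i^*)^\dagg\vB_i^*}=\nrm{\sum_i \vB_i^\dagg\vB_i}$ using $(\vB_i^*)^\dagg\vB_i^*=(\vB_i^\dagg\vB_i)^*$ and that a matrix and its entrywise conjugate have the same operator norm. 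Multiplying yields the claimed bound.

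I expect the only genuinely delicate point to be bookkeeping the tensor-factor orderings and the placement of the conjugation/transpose so that the isometry argument in the first paragraph is stated consistently with the paper's vectorization convention (footnote on $\vB^\dagg$ vs.\ $\vB^T$); once the "$\vX\vY$ factorization" is set up correctly, the norm computations are routine. An alternative to the factorization trick, if one prefers to avoid introducing the auxiliary $|I|$-dimensional register, is to argue directly: for any unit vector $\ket{v}\in\mathbb{C}^n\otimes\mathbb{C}^m$ (thinking of it as $\mathrm{vec}(\vV)$ with $\fnorm{\vV}=1$), expand $\nrm{\sum_i(\vA_i\otimes\vB_i^*)\ket{v}}^2$ and apply Cauchy--Schwarz in the index $i$ together with the operator inequalities $\sum_i\vA_i\vA_i^\dagg\preceq\nrm{\sum_i\vA_i\vA_i^\dagg}\vI$ and $\sum_i\vB_i^\dagg\vB_i\preceq\nrm{\sum_i\vB_i^\dagg\vB_i}\vI$; this reproduces the same bound and is essentially the unpacked version of the same idea.
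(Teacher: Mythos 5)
Your proof is correct and takes essentially the same route as the paper: the paper also factorizes $\sum_i \vA_i\otimes\vB_i^*$ as a product $\vV\vU$ of "row" and "column" operators (with the auxiliary index register contracted in the middle), applies submultiplicativity, and evaluates $\nrm{\vV\vV^\dagg}$ and $\nrm{\vU^\dagg\vU}$. Your first paragraph on the vectorization isometry just makes explicit what the paper leaves implicit for the equality $\nrm{\sum_i\vA_i[\cdot]\vB_i^\dagg}_{2-2}=\nrm{\sum_i\vA_i\otimes\vB_i^*}$.
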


\begin{proof}
Define the maps
\begin{align}
    \vV: = \sum_{i\in I} \vA_i \otimes \vI \otimes \bra{i} \quad \text{and}\quad \vU := \sum_{i\in I}\vI\otimes \vB^* \otimes \ket{i}, \qquad \text{then}
\end{align}
\begin{align}
    \nrm{\sum_{i\in I} \vA_i \otimes \vB^*} = \norm{ \vV \vU} \leq \norm{\vV}\norm{\vU}
   \leq \nrm{\sum_{i\in I}\vA_i\vA_i^\dagg}^{1/2} \nrm{\sum_{i\in I}\vB^{*\dagg}\vB^*}^{1/2}.
\end{align}
Take complex conjugate to conclude the proof.
\end{proof}
\begin{restatable}[Secular approximation gives discriminant proxy]{lem}{boltzmannError}\label{lem:error_from_Boltzmann}
	Consider the following discriminant proxy and a closely related \Lword{}
	\begin{align}
     \vec{\CD}_{sec}&:= \sum_{ a\in A}\sum_{\bomega\in S_{\omega_0}} \sqrt{\gamma(\bomega)\gamma(-\bomega)} \hat{\vS}^a(\bomega)\otimes \hat{\vS}^{a}(\bomega)^*- \frac{\gamma(\bomega)}{2} \L(\hat{\vS}^a(\bomega)^\dagg\hat{\vS}^a(\bomega)\otimes \vI + \vI \otimes \hat{\vS}^{a}(\bomega)^{\dagg*}\hat{\vS}^{a}(\bomega)^*\R),\\
		\CL_{sec} &= 
	\sum_{ a\in A}\sum_{\bomega\in S_{\omega_0}} \gamma(\bomega)  \hat{\vS}^a(\bomega)[\cdot] \hat{\vS}^a(\bomega)^\dagg -\frac{\gamma(\bomega)}{2} \{\hat{\vS}^a(\bomega)^\dagg \hat{\vS}^a(\bomega) ,\cdot \},
	\end{align}
 such that the nonnegative weights satisfy\footnote{Potentially allowing zero values of $\gamma(\bomega)=\gamma(-\bomega)=0$ might be needed for dealing with the case when $N$ is even and therefore the smallest label in $S_{\omega_0}$ would be its own inverse (due to parity and the modulo arithmetic of $S_{\omega_0}$). Also note that if $\hat{\vS}^a(\bomega)=\vS^{a}_{-\bomega}=0$ for all $a\in A$, then we can assume without loss of generality that the corresponding weight are $\gamma(\bomega)=\gamma(-\bomega)=0$. \label{foot:gammaZeroPossibility}}
\begin{align}\label{eq:Symass} 
		\gamma(\bomega)/\gamma(-\bomega)=\e^{-\beta\bomega}\quad \text{or}\quad \gamma(\bomega)=\gamma(-\bomega)=0\quad \text{for each} \quad \bomega\in S_{\omega_0}. 
	\end{align}
Suppose the operators satisfy that 
\begin{align}
		\bra{\psi_i}\hat{\vS}^a(\bomega) \ket{\psi_j} = 0 \quad \textrm{whenever} \quad \labs{(E_i - E_j) - \bomega} > \bmu \label{eq:PSPass_D}
\end{align}
	for the eigenvalue decomposition of $\vH=\sum_j E_j\ketbra{\psi_j}{\psi_j}$.	
	Then, for any $\beta, \bmu >0$ such that $\beta\bmu\leq 1$ and the Gibbs state $\vrho=\e^{-\beta \vH}/\tr[\e^{-\beta \vH}]$,
	\begin{align}\label{eq:DiscBound1}
		&\nrm{\vec{\CD}_{sec}- \vec{\CD}(\vrho,\CL_{sec})}\\
		&\le \beta \bmu \sqrt{\nrm{\sum_{a\in A, \bomega\in S_{\omega_0}}\gamma(\bomega) \hat{\vS}^a(\bomega)^\dagg \hat{\vS}^a(\bomega)}}\left(7\sqrt{\nrm{\sum_{a\in A, \bomega\in S_{\omega_0}}\gamma(\bomega) \hat{\vS}^a(\bomega)^\dagg\hat{\vS}^a(\bomega)}}+125\sqrt{\nrm{\sum_{a\in A, \bomega\in S_{\omega_0}}\gamma(-\bomega)\hat{\vS}^a(\bomega) \hat{\vS}^a(\bomega)^\dagg}}\right)
	\end{align}
where $\vec{\CD}(\vrho,\CL_{sec})$ is the vectorization of $\CD(\vrho,\CL_{sec})$.
 Finally, if there is a permutation $\vP\colon a\rightarrow a'$ such that\textsuperscript{\ref{foot:gammaZeroPossibility}} $\sqrt{\gamma(-\bomega)}\hat{\vS}^a(\bomega)^\dagg = \sqrt{\gamma(-\bomega)}\hat{\vS}^{a'}(-\bomega)$ for each $a$ and $\bomega$, then we have that $\CD_{sec} = \CD_{sec}^{\dagg}$ and that
\begin{align}
    \nrm{\vec{\CD}_{sec}- \vec{\CD}(\vrho,\CL_{sec})^{\dagger}} = \nrm{\vec{\CD}_{sec}- \vec{\CD}(\vrho,\CL_{sec})} \le 132\beta \bmu \nrm{\sum_{ a\in A}\sum_{\bomega\in S_{\omega_0}}\!\!\!\gamma(\bomega) \hat{\vS}^a(\bomega)^\dagg \hat{\vS}^a(\bomega)}.
\end{align}
\end{restatable}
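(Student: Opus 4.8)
\textbf{Proof proposal for Lemma~\ref{lem:error_from_Boltzmann}.}

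The plan is to compare $\vec{\CD}_{sec}$ term-by-term with the vectorization of $\CD(\vrho,\CL_{sec})=\vrho^{-1/4}\CL_{sec}[\vrho^{1/4}\cdot\vrho^{1/4}]\vrho^{-1/4}$. Writing $L_{\bomega}^a:=\sqrt{\gamma(\bomega)}\hat{\vS}^a(\bomega)$ for the Lindblad operators, the vectorized discriminant expands into a ``transition'' piece $\sum_{a,\bomega}(\vrho^{-1/4}L^a_{\bomega}\vrho^{1/4})\otimes(\vrho^{-1/4}L^a_{\bomega}\vrho^{1/4})^{*}$ and two ``decay'' pieces of the form $-\tfrac12\,\vrho^{-1/4}(L^a_{\bomega})^\dagger L^a_{\bomega}\vrho^{1/4}\otimes\vI$ and its mirror. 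The proxy $\vec{\CD}_{sec}$ has the analogous three pieces but with the Gibbs conjugations replaced by the scalar factors $\sqrt{\gamma(\bomega)\gamma(-\bomega)}/\gamma(\bomega)=\sqrt{\gamma(-\bomega)/\gamma(\bomega)}=\e^{\beta\bomega/2}$ in the transition term and by the identity in the decay terms. So everything reduces to the following observation: \emph{on matrix elements selected by the secular constraint~\eqref{eq:PSPass_D}, conjugation by $\vrho^{\pm1/4}$ is close to multiplication by the scalar $\e^{\pm\beta\bomega/4}$.} Concretely, for $\bra{\psi_i}\hat{\vS}^a(\bomega)\ket{\psi_j}\neq0$ we have $|(E_i-E_j)-\bomega|\le\bmu$, so $\vrho^{-1/4}\hat{\vS}^a(\bomega)\vrho^{1/4}$ has entries $\e^{\beta(E_i-E_j)/4}\bra{\psi_i}\hat{\vS}^a(\bomega)\ket{\psi_j}=\e^{\beta\bomega/4}\e^{\beta((E_i-E_j)-\bomega)/4}\bra{\psi_i}\hat{\vS}^a(\bomega)\ket{\psi_j}$, i.e.\ it equals $\e^{\beta\bomega/4}\hat{\vS}^a(\bomega)$ up to an entrywise multiplier $r_{ij}:=\e^{\beta((E_i-E_j)-\bomega)/4}-1$ with $|r_{ij}|\le\e^{\beta\bmu/4}-1\le\beta\bmu/2$ under the hypothesis $\beta\bmu\le1$ (using $\e^x-1\le\tfrac{x}{1-x}\cdot\ldots$, or more simply $\e^x-1\le x\,\e^x$). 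Thus define the ``error operator'' $\vE^a(\bomega):=\vrho^{-1/4}\hat{\vS}^a(\bomega)\vrho^{1/4}-\e^{\beta\bomega/4}\hat{\vS}^a(\bomega)$; this is a Hadamard-type perturbation of $\hat{\vS}^a(\bomega)$ with multipliers bounded by $\beta\bmu/2$, supported on the same block-band.

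Next I would organize the difference $\vec{\CD}_{sec}-\vec{\CD}(\vrho,\CL_{sec})$ into a sum of terms, each of which is a sum over $a,\bomega$ of a tensor product where \emph{at least one factor} is an $\vE^a(\bomega)$ (or a product containing one), and then bound each such sum by \autoref{lem:tensorSumBound} (the Cauchy--Schwarz-type bound $\nrm{\sum_i\vA_i\otimes\vB_i^*}\le\sqrt{\nrm{\sum_i\vA_i\vA_i^\dagger}\nrm{\sum_i\vB_i^\dagger\vB_i}}$). For the transition term, $(\vrho^{-1/4}L^a_{\bomega}\vrho^{1/4})\otimes(\cdots)^{*}-\e^{\beta\bomega/2}\gamma(\bomega)\hat{\vS}^a(\bomega)\otimes\hat{\vS}^a(\bomega)^{*}$ telescopes into two terms, each with one $\sqrt{\gamma(\bomega)}\vE^a(\bomega)$ factor and one $\sqrt{\gamma(\bomega)}\,\e^{\beta\bomega/4}\hat{\vS}^a(\bomega)$ or $\sqrt{\gamma(\bomega)}\vrho^{-1/4}\hat{\vS}^a(\bomega)\vrho^{1/4}$ factor. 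The crucial point is that all the operator-valued sums appearing in the Cauchy--Schwarz bound can be controlled by the two quantities in the statement: $\nrm{\sum\gamma(\bomega)\hat{\vS}^a(\bomega)^\dagger\hat{\vS}^a(\bomega)}$ and $\nrm{\sum\gamma(-\bomega)\hat{\vS}^a(\bomega)\hat{\vS}^a(\bomega)^\dagger}$. For the $\vE$-sums one uses that $\vE^a(\bomega)$ is a block-band Hadamard perturbation: writing $\vE^a(\bomega)=\hat{\vS}^a(\bomega)\circ R$ (Schur product with a matrix $R$ of entries $r_{ij}$, $\|R\|_{\max}\le\beta\bmu/2$) one bounds $\nrm{\sum_{a,\bomega}\gamma(\bomega)\vE^a(\bomega)^\dagger\vE^a(\bomega)}$ by $\CO((\beta\bmu)^2)\nrm{\sum_{a,\bomega}\gamma(\bomega)\hat{\vS}^a(\bomega)^\dagger\hat{\vS}^a(\bomega)}$ — here \autoref{lem:bandNorm} enters: because the secular constraint makes these operators block-banded (each $\hat{\vS}^a(\bomega)$ couples only energy pairs with difference in $[\bomega-\bmu,\bomega+\bmu]$), the relevant sums decompose as direct sums over energy blocks, and Schur multiplication by a uniformly-bounded $R$ on each block costs only the $\max$-norm of $R$ (times a mild band-width factor that gets absorbed; alternatively one invokes the standard bound $\|M\circ R\|\le\|R\|_{\max}\cdot(\text{number of bands})$, and the secular band count is $\CO(1)$ since $\bmu\le\omega_0$ or is treated as a single frequency label). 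One similarly needs $\nrm{\sum\gamma(\bomega)(\vrho^{-1/4}\hat{\vS}^a(\bomega)\vrho^{1/4})^\dagger(\vrho^{-1/4}\hat{\vS}^a(\bomega)\vrho^{1/4})}$, which equals $\nrm{\sum\gamma(\bomega)\vrho^{1/4}\hat{\vS}^a(\bomega)^\dagger\vrho^{-1/2}\hat{\vS}^a(\bomega)\vrho^{1/4}}$; inserting $\vE$ once more this is $\e^{\beta\bomega/2}$-comparable (again up to $\CO(\beta\bmu)$) to the clean sum, so it too is $\CO(\nrm{\sum\gamma(\bomega)\hat{\vS}^a(\bomega)^\dagger\hat{\vS}^a(\bomega)})$. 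Summing the $\CO(1)$ many terms with their explicit constants yields~\eqref{eq:DiscBound1}.

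The final clause is easy once the main bound is in hand. Under the extra permutation symmetry $\sqrt{\gamma(-\bomega)}\hat{\vS}^a(\bomega)^\dagger=\sqrt{\gamma(-\bomega)}\hat{\vS}^{a'}(-\bomega)$, one checks directly (as in \autoref{prop:genDiscProxy}) that $\CD_{sec}$ is self-adjoint, hence $\vec{\CD}(\vrho,\CL_{sec})^\dagger$ and $\vec{\CD}(\vrho,\CL_{sec})$ differ only by which way the similarity transform goes, and the two bounds coincide; moreover the symmetry lets one identify $\nrm{\sum\gamma(-\bomega)\hat{\vS}^a(\bomega)\hat{\vS}^a(\bomega)^\dagger}=\nrm{\sum\gamma(-\bomega)\hat{\vS}^{a'}(-\bomega)^\dagger\hat{\vS}^{a'}(-\bomega)}=\nrm{\sum\gamma(\bomega)\hat{\vS}^a(\bomega)^\dagger\hat{\vS}^a(\bomega)}$ after relabelling $\bomega\to-\bomega$, $a\to a'$, so the two norms in~\eqref{eq:DiscBound1} collapse to one and the $7+125=132$ constant drops out.

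\textbf{Main obstacle.} The routine part is the telescoping and the triangle inequality; the delicate part is getting the $\vE$-operator norm bounds \emph{linear} in $\beta\bmu$ while only paying a constant (not a dimension- or band-count-dependent) factor. The clean way is the block-band / direct-sum structure from the secular constraint together with \autoref{lem:bandNorm}: after conjugating into each energy block the operator $\hat{\vS}^a(\bomega)$ lives in a block of ``width'' governed by $\bmu$, and Schur multiplication by the bounded matrix $R$ on that block is a norm-$\le\|R\|_{\max}$-ish operation — but one must be careful that the relevant ``number of bands'' is genuinely $\CO(1)$ in the discretized setting, which is exactly where the hypothesis $\bmu\le\omega_0$ (equivalently $\beta\bmu\le1$ with the chosen discretization) is used. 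Handling the possibility $\gamma(\bomega)=\gamma(-\bomega)=0$ (footnote~\ref{foot:gammaZeroPossibility}) and the even-$N$ self-inverse frequency is a bookkeeping nuisance but not a real difficulty, since those labels contribute zero to every sum.
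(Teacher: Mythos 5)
Your high-level strategy matches the paper's: both decompose the difference into transition and decay pieces, compare conjugation by $\vrho^{\pm 1/4}$ to multiplication by the scalar $\e^{\pm\beta\bomega/4}$ on the secular support, telescope, and invoke the Cauchy--Schwarz bound \autoref{lem:tensorSumBound} together with the block-band \autoref{lem:bandNorm}. But there is a genuine gap at the crucial technical step, and a side misattribution of the hypothesis that props it up.

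The gap is in your bound $\nrm{\sum_{a,\bomega}\gamma(\bomega)\vE^a(\bomega)^\dagger\vE^a(\bomega)}\leq\CO((\beta\bmu)^2)\nrm{\sum_{a,\bomega}\gamma(\bomega)\hat{\vS}^a(\bomega)^\dagger\hat{\vS}^a(\bomega)}$. You propose to prove it by viewing $\vE^a(\bomega)=\hat{\vS}^a(\bomega)\circ R$ and invoking ``$\|M\circ R\|\le\|R\|_{\max}\cdot(\text{band count})$.'' That bound (obtained by splitting a banded matrix into its individual diagonals) controls each $\|\vE^a(\bomega)\|$ \emph{individually}, which only yields $\sum_{a,\bomega}\gamma(\bomega)\nrm{\hat{\vS}^a(\bomega)}^2$ on the right-hand side --- the \emph{sum of norms}, not the \emph{norm of the sum}. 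The whole point of the lemma is that the error is controlled by the norm-of-sum quantity $\nrm{\sum_{a,\bomega}\gamma(\bomega)\hat{\vS}^a(\bomega)^\dagger\hat{\vS}^a(\bomega)}$, which can be exponentially smaller. The paper avoids this trap by \emph{not} bounding each operator first: it partitions the energy spectrum into blocks $\vP_i$ of width $\bmu$, expands each $\hat{\vS}^a(\bomega)$ into a fixed $5$-band decomposition $\sum_{\ell=-2}^2\sum_i\vP_i\hat{\vS}^a(\bomega)\vP_{i-\lfloor\bomega\rceil+\ell}$ (and the $5\times 5$ analogue on the tensor product side), telescopes into three pieces where the Boltzmann mismatch is a scalar error $\CO(\beta\bmu)$, and then applies Cauchy--Schwarz and \autoref{lem:bandNorm} \emph{after} forming the sums over $a,j,\bomega$, so that the resulting block-diagonal norm maxima collapse back to $\nrm{\vS'}$ and $\nrm{\vS''}$. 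Reproducing this from your Hadamard-product framing would require essentially re-deriving the $\vP_i$-block argument, which your writeup elides.

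Relatedly, the hypothesis you lean on --- ``$\bmu\le\omega_0$ is exactly where the band count is $\CO(1)$'' --- is not a hypothesis of the lemma and is not used in the paper. The constant band count ($\ell\in\{-2,\ldots,2\}$, hence $5$ or $5^2=25$ terms) is a structural consequence of choosing the block width \emph{equal} to $\bmu$, together with the secular constraint $|E_i-E_j-\bomega|\le\bmu$ and the rounding error $|\bomega/\bmu-\lfloor\bomega\rceil|\le\tfrac12$. The parameters $\omega_0$ and $\bmu$ are independent here. A secondary caveat: your ``error operator'' $\vE^a(\bomega)=\vrho^{-1/4}\hat{\vS}^a(\bomega)\vrho^{1/4}-\e^{\beta\bomega/4}\hat{\vS}^a(\bomega)$ has an exponential prefactor $\e^{\beta\bomega/4}$ in its entries (your $|r_{ij}|\leq\beta\bmu/2$ bound is for the \emph{relative} error); that prefactor only cancels after it recombines with $\gamma(\bomega)$ via the KMS symmetry $\gamma(\bomega)\e^{\beta\bomega/2}=\sqrt{\gamma(\bomega)\gamma(-\bomega)}$, which needs to be tracked explicitly to land on the two advertised strength quantities. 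The closing paragraph on the self-adjoint case is essentially correct and matches the paper.
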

This also quickly leads to approximate detailed balance for \Lword{}s (\autoref{lem:L_Approx_DB}) by a triangle inequality
 \begin{align}
\lnorm{\CD(\vrho, \CL_{sec}) - \CD(\vrho, \CL_{sec})^{\dagger}}_{2-2} &\le \norm{\CD(\vrho, \CL_{sec}) - \CD_{sec}}_{2-2} +\norm{ \CD_{sec} - \CD(\vrho, \CL_{sec})^{\dagger}}_{2-2}\\
& = \norm{\vec{\CD}(\vrho, \CL_{sec}) - \vec{\CD}_{sec}} +\norm{ \vec{\CD}_{sec} - \vec{\CD}(\vrho, \CL_{sec})^{\dagger}}.
 \end{align}

 Note that the term $\norm{\sum_{a,\bomega} \hat{\vS}^a(\bomega)^{\dagger} \hat{\vS}^a(\bomega)}$ can be thought of as the ``strength'' of the interaction, and in our case can be 
simply bounded by $1$ due to \autoref{prop:properties_DFT} as follows
\begin{align}
	\norm{\sum_{a,\bomega} \hat{\vS}^a(\bomega)^\dagg \hat{\vS}^a(\bomega)} 
	= \norm{\sum_{a,\bomega} \L(\vA^a_{f_{sec}}(\bomega)\R)^\dagg \vA^a_{f_{sec}}(\bomega)}
	= \norm{\sum_{a\in A}\vA^{a\dagg}\vA^a}\cdot \nrm{\ket{f_{sec}}}^2 \le 1.
\end{align}

\begin{proof}[Proof of \autoref{lem:error_from_Boltzmann}] Our proof adapts from the strategy of~\cite{ETH_thermalization_Chen21} for dealing with approximate detailed balance. First, let us define projectors that partition the spectrum per truncation frequency $\bmu$ as follows
	\begin{align}
		\vP_i :=\sum_{ j\colon \frac{E_{j}}{\bmu}\in[(i-\frac12),(i+\frac12)) }\ketbra{\psi_j}{\psi_j} \quad\text{for each}\quad i \in \mathbb{Z}\quad \text{such that}\quad \sum_i \vP_i = \vI \quad \text{and}\quad \vP_i\vH=\vH\vP_i.
	\end{align}
	In other words, these projectors provide a resolution of the identity, and moreover, they commute with $\vH$. We proceed by decomposing the difference of the operators $\vec{\CD}_{sec}- \vec{\CD}'$ as follows
	\begin{align}
		\vec{\CD}_{sec}- \vec{\CD}' &= \overset{\delta\vec{\CA}:=}{\overbrace{\sum_{a, \bomega} \sqrt{\gamma(\bomega)\gamma(-\bomega)} \hat{\vS}^a(\bomega)\otimes \hat{\vS}^{a}(\bomega)^* - \gamma(\bomega) \vrho^{-1/4}\hat{\vS}^a(\bomega)\vrho^{1/4}\otimes \vrho^{*-1/4}\hat{\vS}^{a}(\bomega)^* \vrho^{*1/4}}} \label{eq:dA}\\
		\kern-2.5mm&+ \underset{\frac{1}{2}\delta\vec{\CR}\otimes \vI:=}{\underbrace{\sum_{a, \bomega}\frac{\gamma(\bomega)}{2} \left(\hat{\vS}^a(\bomega)^\dagg \hat{\vS}^a(\bomega)\otimes \vI - \vrho^{-1/4}\hat{\vS}^a(\bomega)^\dagg \hat{\vS}^a(\bomega)\vrho^{1/4}\otimes \vI\right)}}
		\!\\
  &+\! \underset{\frac{1}{2} \vI\otimes\delta\vec{\CR}^*=}{\underbrace{\sum_{a, \bomega}\frac{\gamma(\bomega)}{2} \left(\vI\otimes\hat{\vS}^{a}(\bomega)^{\dagg*} \hat{\vS}^{a}(\bomega)^* - \vI\otimes \vrho^{*-1/4}\hat{\vS}^{a}(\bomega)^{*\dagg} \hat{\vS}^{a}(\bomega)^*\vrho^{*1/4} \right)}}\!.\kern5.5mm\label{eq:dR}
	\end{align}
	
	By the triangle inequality we have that $\nrm{\vec{\CD}_{sec}- \vec{\CD}'}\leq\nrm{\delta\vec{\CA}}+\frac{1}{2}\nrm{\delta\vec{\CR}\otimes \vI}+\frac{1}{2}\nrm{\vI\otimes \delta\vec{\CR}^*}=\nrm{\delta\vec{\CA}}+\nrm{\delta\vec{\CR}}$.
	Now, we bound the above two terms individually, starting from the term $\nrm{\delta\vec{\CR}}$. Our proof crucially relies on the fact that the frequency label of $\hat{\vS}^a(\bomega)$ closely approximates the true Bohr frequency, up to the truncation frequency $\bmu$ as expressed by \eqref{eq:PSPass_D}. This implies that $\hat{\vS}^a(\bomega)^\dagg \hat{\vS}^a(\bomega)$ roughly preserves energy that
\begin{align}
	\bra{\psi_i}\hat{\vS}^a(\bomega)^\dagg \hat{\vS}^a(\bomega)\ket{\psi_j}=0 \quad \text{whenever}\quad \labs{E_i - E_j} > 2\bmu.
\end{align}
Indeed, 
\begin{align}
\bra{\psi_i}\hat{\vS}^a(\bomega)^\dagg\hat{\vS}^a(\bomega)\ket{\psi_j}=\sum_k\bra{\psi_i}\hat{\vS}^a(\bomega)^\dagg\ketbra{\psi_k}{\psi_k} \hat{\vS}^a(\bomega)\ket{\psi_j}	
		=\sum_{k\colon \substack{|(E_k-E_i)-\bomega|\leq\bmu \,\&\\ |(E_k-E_j)-\bomega|\leq\bmu\phantom{\,\&}}}\bra{\psi_i}\hat{\vS}^a(\bomega)^\dagg\ketbra{\psi_k}{\psi_k} \hat{\vS}^a(\bomega)\ket{\psi_j},
		\tag*{(due to \eqref{eq:PSPass_D})}
	\end{align}
	since $\labs{E_i - E_j}=\labs{(E_k-E_j)-\bomega-\left((E_k-E_i)-\bomega\right)}\leq \labs{(E_k-E_j)-\bomega}+\labs{(E_k-E_i)-\bomega}$, meaning that the above summands can only be nonzero when $\labs{E_i - E_j} \leq 2\bmu$.  This observation enables us to introduce the following decomposition
	\begin{align}\label{eq:doubleBisection}
		\hat{\vS}^a(\bomega)^\dagg \hat{\vS}^a(\bomega)
		=\sum_{i,j}\vP_i\hat{\vS}^a(\bomega)^\dagg\hat{\vS}^a(\bomega) \vP_{j}
		=\sum_{\ell=-2}^2\sum_{i}\vP_i \hat{\vS}^a(\bomega)^\dagg\hat{\vS}^a(\bomega)\vP_{i+ \ell}. 
	\end{align}
	
	Let us define $\delta\vH_j:=\frac{\beta}{4}(\vP_j\vH\vP_j-j\bmu\vP_j)$ and $\delta\vH:=\sum_j\delta\vH_j$. Since $\vH$ and $\vP_j$ commute and $\vP_j=\vP_j^2$ we have
	\begin{align}\label{eq:sigmaPj}
		\vrho^{\mp \frac14}\vP_j=\vP_j\vrho^{\mp \frac14} =(\tr(\e^{-\beta \vH}))^{\pm \frac14}\cdot \e^{\pm\frac{\beta\bmu}{4} j}\vP_j\e^{\pm \delta\vH}.		
	\end{align}
	
	We use \eqref{eq:doubleBisection}-\eqref{eq:sigmaPj} to exploit the ``approximate energy preservation'' of the operator $\hat{\vS}^a(\bomega)^\dagg \hat{\vS}^a(\bomega)$ as follows
	\begin{align}
		\vrho^{-1/4}\hat{\vS}^a(\bomega)^\dagg \hat{\vS}^a(\bomega) \vrho^{1/4}  &=\sum_{\ell=-2}^2\sum_{i}\vrho^{-1/4}\vP_{i+ \ell}\hat{\vS}^a(\bomega)^\dagg\hat{\vS}^a(\bomega) \vP_i\vrho^{1/4}\\ 	
		&=\sum_{\ell=-2}^2\sum_{i} \e^{\frac{\beta\bmu}{4}\ell}\vP_{i+ \ell} \e^{\delta\vH}\hat{\vS}^a(\bomega)^\dagg\hat{\vS}^a(\bomega) \e^{-\delta\vH}\vP_i. \label{eq:energyLoc}
	\end{align}
	Let us define $\vS':=\sum_{a,\bomega} \gamma(\bomega) \hat{\vS}^a(\bomega)^\dagg \hat{\vS}^a(\bomega)$, then we get the following bound on $\nrm{\delta\vec{\CR}}$:
	\begin{align}
		\nrm{\delta\vec{\CR}}&=\lnorm{\sum_{a,\bomega} \gamma(\bomega) \L(\vrho^{-1/4}\hat{\vS}^a(\bomega)^\dagg \hat{\vS}^a(\bomega) \vrho^{1/4}-\hat{\vS}^a(\bomega)^\dagg \hat{\vS}^a(\bomega) \R)} \tag*{(by \eqref{eq:dR})}\\&
		=\lnorm{\sum_{a,\bomega} \gamma(\bomega) \L(
			\sum_{\ell=-2}^2\sum_{i} \e^{\frac{\beta\bmu}{4}\ell}\vP_{i+ \ell} \e^{\delta\vH}\hat{\vS}^a(\bomega)^\dagg\hat{\vS}^a(\bomega) \e^{-\delta\vH}\vP_i
			-\vP_{i+ \ell}\hat{\vS}^a(\bomega)^\dagg \hat{\vS}^a(\bomega)\vP_i \R)} \tag*{(by \eqref{eq:doubleBisection}-\eqref{eq:energyLoc})}\\&
		\leq\sum_{\ell=-2}^2\lnorm{\sum_{i}\vP_{i+ \ell}\sum_{a,\bomega} \gamma(\bomega) \L(
			\e^{\frac{\beta\bmu}{4}\ell} \e^{\delta\vH}\hat{\vS}^a(\bomega)^\dagg\hat{\vS}^a(\bomega) \e^{-\delta\vH}
			-\hat{\vS}^a(\bomega)^\dagg\hat{\vS}^a(\bomega) \R)\vP_i} \tag*{(by triangle inequality)}\\&
		=\sum_{\ell=-2}^2\max_i\lnorm{\vP_{i+ \ell}\sum_{a,\bomega} \gamma(\bomega) \L(
			\e^{\frac{\beta\bmu}{4}\ell} \e^{\delta\vH}\hat{\vS}^a(\bomega)^\dagg\hat{\vS}^a(\bomega) \e^{-\delta\vH}
			-\hat{\vS}^a(\bomega)^\dagg \hat{\vS}^a(\bomega) \R)\vP_i} \tag*{(by \autoref{lem:bandNorm})}.
\end{align}
We may now drop the project $\vP_i$ and $\vP_{i+\ell}$ and simplify via elementary bounds.
\begin{align}
		(cont'\!d)&\leq\sum_{\ell=-2}^2\lnorm{\e^{(\delta\vH-\frac{\beta\bmu}{4}\ell \vI)}\vS' \e^{-\delta\vH}-\vS'}\\&		
		\leq\sum_{\ell=-2}^2\lnorm{\e^{(\delta\vH-\frac{\beta\bmu}{4}\ell \vI)}\vS' \e^{-\delta\vH}-\vS' \e^{-\delta\vH}}+\lnorm{\vS' \e^{-\delta\vH}-\vS'}\\&
		\leq\sum_{\ell=-2}^2\nrm{\e^{(\delta\vH-\frac{\beta\bmu}{4}\ell \vI)}-\vI}\nrm{\vS'} \nrm{\e^{-\delta\vH}}+\nrm{\vS'}\lnorm{\e^{-\delta\vH}-\vI}\\&
		\leq\nrm{\vS'}\sum_{\ell=-2}^2\nrm{2\delta\vH-\frac{\beta\bmu}2\ell \vI} \left(1+\nrm{2\delta\vH}\right)+\nrm{2\delta\vH} \tag*{(since $|\e^x-1|\leq 2|x|$ for $|x|\leq\frac{5}{4}$)}\\&
		\leq\nrm{\vS'}\sum_{\ell=-2}^2\beta\bmu\left(\frac{1}{4}+\frac{|\ell|}{2}\right) \left(1+\frac{\beta\bmu}{4}\right)+\frac{\beta\bmu}{4} \tag*{(since $\nrm{\delta\vH}\leq \frac{\beta\bmu}{8}$ and $\beta\bmu\leq 1$)}\\&	
		\leq 7\beta\bmu\nrm{\vS'}.\label{eq:dRbound}
	\end{align}
	Next, we bound $\delta\vec{\CA}$ in a similar fashion. The expression will be more cumbersome because of the double Hilbert spaces. Decompose $\delta\vec{\CA}=\sum_{a\in A} \delta\vec{\CA}^a$, where 
	\begin{align}
		\delta\vec{\CA}^a
		:=&\sum_{\bomega\in S_{\omega_0}} \underset{\mathring{\gamma}(\bomega):=}{\underbrace{\sqrt{\gamma(\bomega)\gamma(-\bomega)}}} \hat{\vS}^a(\bomega)\otimes \hat{\vS}^{a}(\bomega)^*
		-\sum_{\bomega\in S_{\omega_0}} \gamma(\bomega) \vrho^{-1/4}\hat{\vS}^a(\bomega)\vrho^{1/4}\otimes \vrho^{*-1/4}\hat{\vS}^{a}(\bomega)^* \vrho^{*1/4}\\
		\overset{(\text{by } \eqref{eq:Symass})}{=}&\sum_{\bomega\in S_{\omega_0}} \mathring{\gamma}(\bomega)\hat{\vS}^a(\bomega)\otimes \hat{\vS}^{a}(\bomega)^*- \mathring{\gamma}(\bomega)\e^{-\frac\beta2\bomega} \vrho^{-1/4}\hat{\vS}^a(\bomega)\vrho^{1/4}\otimes \vrho^{*-1/4}\hat{\vS}^{a}(\bomega)^* \vrho^{*1/4}  
		 \label{eq:FlippedA}
	\end{align}
	
	Let $\lfloor \bomega \rceil$ denote the rounding of $\frac{\bomega}{\bmu}$ to the closest integer, and suppose that $\vP_i\hat{\vS}^a(\bomega)\vP_j\neq 0$. Then, there must exist some $\ket{\psi_i},\ket{\psi_j}$ eigenvectors in the images of $\vP_i,\vP_j$ respectively such that $\bra{\psi_i}\hat{\vS}^a(\bomega)\ket{\psi_j}\neq 0$. Due to \eqref{eq:PSPass_D} we have that $|E_i-E_j-\bomega|\leq \bmu$. Then $|i-j-\lfloor \bomega \rceil|\leq|i-\frac{E_i}{\bmu}|+|\frac{E_j}{\bmu}-j|+|\frac{\bomega}{\bmu}-\lfloor \bomega \rceil|+|\frac{E_i}{\bmu}-\frac{E_j}{\bmu}-\frac{\bomega}{\bmu}|\leq \frac{1}{2}+\frac{1}{2}+\frac{1}{2}+1<3$, thus we can define a bisection analogously to \eqref{eq:doubleBisection} as follows
	\begin{align}\label{eq:singleBisection}
		\hat{\vS}^a(\bomega)
		=\sum_{i,j}\vP_i\hat{\vS}^a(\bomega) \vP_{j}
		=\sum_{i,j\colon |i-j-\lfloor \bomega \rceil|< 3}\vP_i \hat{\vS}^a(\bomega) \vP_{j}
		=\sum_{\ell=-2}^2\sum_{i}\vP_i \hat{\vS}^a(\bomega) \vP_{i-\lfloor \bomega \rceil+\ell},	
	\end{align}
	which leads to the following ``tensor-slicing'' assuming that $\vP_n$ and $\vM_m$ commute: 
	\begin{align}
		\vM_1 \hat{\vS}^a(\bomega) \vM_2 \otimes \vM^*_3 \hat{\vS}^{a}(\bomega)^* \vM^*_4
		&=\sum_{\ell,\ell'=-2}^2\sum_{i,j}  (\vP_j\otimes\vP^*_{j+i})\left(\vM_1\hat{\vS}^a(\bomega)\vM_2  \otimes  \vM^*_3\hat{\vS}^{a}(\bomega)^*\vM^*_4\right) (\vP_{j-\lfloor \bomega \rceil+\ell}\otimes\vP^*_{j+i-\lfloor \bomega \rceil+\ell'}).\label{eq:tensorBisection}
	\end{align}
	Using this we get the following decomposition analogously to \eqref{eq:energyLoc} by expressing $\vrho^{-1/4} \hat{\vS}^a(\bomega) \vrho^{1/4}$ via \eqref{eq:FlippedA} and \eqref{eq:sigmaPj}
	\begin{align}
		&\delta\vec{\CA}^a\! =\sum_{\ell,\ell'=-2}^2\sum_{i}\\
  &\underset{\delta\vec{\CA}^a_{i,\ell,\ell'}:=}{\underbrace{\sum_{j,\bomega}\!\mathring{\gamma}(\bomega)(\vP_j\otimes\vP^*_{j+i})\!
		\left(\hat{\vS}^a(\bomega)  \otimes  \hat{\vS}^{a}(\bomega)^* 
		-\e^{\frac{\beta\bmu}2(\lfloor\bomega\rceil-\frac{\bomega}{\bmu}-\frac{\ell+\ell'}{2})} \e^{\delta\vH}\hat{\vS}^a(\bomega) \e^{-\delta\vH}\! \otimes \! \e^{\delta\vH^*} \hat{\vS}^{a}(\bomega)^* \e^{-\delta\vH^*}\right)\!(\vP_{j-\lfloor \bomega \rceil+\ell}\otimes\vP^*_{j+i-\lfloor \bomega \rceil+\ell'})}}.
	\end{align}
	Since $|\frac{\bomega}{\bmu}-\lfloor \bomega \rceil|\leq \frac12$ and $|\ell|,|\ell'|\leq 2$, we have that $|\lfloor\bomega\rceil-\frac{\bomega}{\bmu}-\frac{\ell+\ell'}{2}|\leq \frac52$ and therefore the above factor is close to $1$:
	\begin{align}\label{eq:BoltzmannBound}
		|\e^{\frac{\beta\bmu}2(\lfloor\bomega\rceil-\frac{\bomega}{\bmu}-\frac{\ell+\ell'}{2})}-1|\leq \beta\bmu  \labs{\lfloor\bomega\rceil-\frac{\bomega}{\bmu}-\frac{\ell+\ell'}{2}} \leq \frac52\beta \bmu.
	\end{align}
	
	At this point, it seems intuitively clear that the error coming from the Boltzmann factor is small. However, we need to argue how the sum over $j$ does not blow up the error. We proceed by using the triangle inequality over $\ell,\ell'$ and then \autoref{lem:bandNorm} over $i$ to get that
	\begin{align}
	    \nrm{\delta\vec{\CA}}=\nrm{\sum_{a\in A} \delta\vec{\CA}^a} \le 5^2\max_{i,\ell,\ell'} \Bigg\lVert\underset{\delta\vec{\CA}_{i,\ell,\ell'}:=}{\underbrace{\sum_{a}\delta\vec{\CA}^a_{i,\ell,\ell'}}}\Bigg\rVert.
	\end{align}

	The key for bounding the norm of $\delta\vec{\CA}_{i,\ell,\ell'}$ is applying \autoref{lem:tensorSumBound} to estimate the following for some weights $\alpha(\bomega)$:
	\begin{align}
		&\nrm{\sum_{j,a,\bomega}\alpha(\bomega)\mathring{\gamma}(\bomega)(\vP_j\otimes\vP^*_{j+i})
			\left(\hat{\vS}^a(\bomega)  \otimes  \hat{\vS}^{a}(\bomega)^* \right)(\vP_{j-\lfloor\bomega\rceil+\ell}\otimes\vP^*_{j+i-\lfloor\bomega\rceil+\ell'})}^2\\&
		=	\nrm{\sum_{j,a,\bomega}\alpha(\bomega)
			\vP_j\sqrt{\gamma(-\bomega)}\hat{\vS}^a(\bomega)\vP_{j-\lfloor\bomega\rceil+\ell}  \otimes  \vP^*_{j+i}\sqrt{\gamma(\bomega)}\hat{\vS}^{a}(\bomega)^*\vP^*_{j+i-\lfloor\bomega\rceil+\ell'} }^2\\&
		\leq \nrm{\sum_{j,a,\bomega}|\alpha(\bomega)|^2\gamma(-\bomega)
			\vP_j\underset{\preceq \sum_k \hat{\vS}^a(\bomega)\vP_{k}\hat{\vS}^a(\bomega)^\dagg=\hat{\vS}^a(\bomega)\hat{\vS}^a(\bomega)^\dagg}{\underbrace{\hat{\vS}^a(\bomega)\vP_{j-\lfloor\bomega\rceil+\ell}\hat{\vS}^a(\bomega)^\dagg}}\vP_{j}}\cdot 
		\nrm{\sum_{j,a,\bomega}\gamma(\bomega)\vP_{j+i-\lfloor\bomega\rceil+\ell'}\underset{\preceq \hat{\vS}^a(\bomega)^\dagg\hat{\vS}^a(\bomega)}{\underbrace{\hat{\vS}^a(\bomega)^\dagg\vP_{j+i}\hat{\vS}^a(\bomega)}}\vP_{j+i-\lfloor\bomega\rceil+\ell'}} \tag*{(by \autoref{lem:tensorSumBound})}\\& \leq \nrm{\alpha}^2_\infty 
		\Bigg\lVert\sum_{j}
		\vP_j\underset{\vS'':=}{\underbrace{\sum_{a,\bomega}\gamma(-\bomega)\hat{\vS}^a(\bomega)\hat{\vS}^a(\bomega)^\dagg}}\vP_{j}\Bigg\rVert
		\Bigg\lVert\sum_{j}
		\vP_j\underset{=\vS'}{\underbrace{\sum_{a,\bomega}\gamma(\bomega)\hat{\vS}^a(\bomega)^\dagg\hat{\vS}^a(\bomega)}}\vP_{j}\Bigg\rVert\tag*{since $0\preceq \vA \preceq \vB \Rightarrow \nrm{\vA}\leq\nrm{\vB}$}\\&
		= \nrm{\alpha}^2_\infty \underset{\text{due to \autoref{lem:bandNorm}}}{\underbrace{  \max_{j''}\Big\lVert\vP_{j''}\vS''\vP_{j''}\Big\rVert \cdot \max_{j'}\Big\lVert\vP_{j'}\vS'\vP_{j'}\Big\rVert}}
		\leq \nrm{\alpha}^2_\infty \nrm{\vS'}\nrm{\vS''}.\label{eq:ProjSumBound}
	\end{align}
	
	Let us introduce a telescoping sum $\delta\vec{\CA}_{i,\ell,\ell'}=\sum_{s=1}^3\delta\vec{\CA}^{(s)}_{i,\ell,\ell'}$, where $\mathring{\gamma}_{\ell,\ell'}(\bomega):=\e^{\frac{\beta\bmu}2(\lfloor\bomega\rceil-\frac{\bomega}{\bmu}-\frac{\ell+\ell'}{2})}\mathring{\gamma}(\bomega)$ and
	\begin{align}
		\delta\vec{\CA}^{(1)}_{i,\ell,\ell'}
		:=\sum_{j,a,\bomega}\mathring{\gamma}(\bomega)(\vP_j\otimes\vP^*_{j+i})
				\left(\hat{\vS}^a(\bomega)  \otimes  \hat{\vS}^{a}(\bomega)^* 
				-\e^{\frac{\beta\bmu}2(\lfloor\bomega\rceil-\frac{\bomega}{\bmu}-\frac{\ell+\ell'}{2})} \hat{\vS}^a(\bomega)  \otimes   \hat{\vS}^{a}(\bomega)^* \right)(\vP^*_{j-\lfloor \bomega \rceil+\ell}\otimes\vP^*_{j+i-\lfloor \bomega \rceil+\ell'}),
	\end{align}
	\begin{align}
		\delta\vec{\CA}^{(2)}_{i,\ell,\ell'}
		:=\sum_{j,a,\bomega}\mathring{\gamma}_{\ell,\ell'}(\bomega)(\vP_j\otimes\vP^*_{j+i})
		\left(\hat{\vS}^a(\bomega)  \otimes  \hat{\vS}^{a}(\bomega)^* 
		- \e^{-\delta\vH}\hat{\vS}^a(\bomega) \otimes  \e^{-\delta\vH^*} \hat{\vS}^{a}(\bomega)^* \right)(\vP^*_{j-\lfloor \bomega \rceil+\ell}\otimes\vP^*_{j+i-\lfloor \bomega \rceil+\ell'}),
	\end{align}				
	\begin{align}
		&\delta\vec{\CA}^{(3)}_{i,\ell,\ell'}
		:=\sum_{j,a,\bomega}\\
  &\mathring{\gamma}_{\ell,\ell'}(\bomega)(\vP_j\otimes\vP^*_{j+i})
		\left(\e^{-\delta\vH}\hat{\vS}^a(\bomega)  \otimes  \e^{-\delta\vH^*}\hat{\vS}^{a}(\bomega)^* 
		- \e^{-\delta\vH}\hat{\vS}^a(\bomega) \e^{\delta\vH} \otimes  \e^{-\delta\vH^*} \hat{\vS}^{a}(\bomega)^* \e^{\delta\vH^*}\right)(\vP^*_{j-\lfloor \bomega \rceil+\ell}\otimes\vP^*_{j+i-\lfloor \bomega \rceil+\ell'}).
	\end{align}

Due to \eqref{eq:BoltzmannBound} we can bound 
 $\nrm{\delta\vec{\CA}^{(1)}_{i,\ell,\ell'}}\leq \frac52\beta \bmu \sqrt{\nrm{\vS'}\nrm{\vS''}}$ via \eqref{eq:ProjSumBound}. For bounding $\nrm{\delta\vec{\CA}^{(2)}_{i,\ell,\ell'}}$ observe that
	\begin{align}
		\nrm{\delta\vec{\CA}^{(2)}_{i,\ell,\ell'}}&=
		\nrm{\sum_{j,a,\bomega}\mathring{\gamma}_{\ell,\ell'}(\bomega)(\vP_j\otimes\vP^*_{j+i})
			\left(\hat{\vS}^a(\bomega)  \otimes  \hat{\vS}^{a}(\bomega)^* 
			- \e^{-\delta\vH}\hat{\vS}^a(\bomega) \otimes  \e^{-\delta\vH^*} \hat{\vS}^{a}(\bomega)^* \right)(\vP^*_{j-\lfloor \bomega \rceil+\ell}\otimes\vP^*_{j+i-\lfloor \bomega \rceil+\ell'})}\\&
		=\nrm{(\vI\otimes \vI - \e^{-\delta\vH} \otimes \e^{-\delta\vH^*})\sum_{j,\bomega}\mathring{\gamma}_{\ell,\ell'}(\bomega)(\vP_j\otimes\vP^*_{j+i})
			\left(\hat{\vS}^a(\bomega)  \otimes  \hat{\vS}^{a}(\bomega)^*\right)(\vP^*_{j-\lfloor \bomega \rceil+\ell}\otimes\vP^*_{j+i-\lfloor \bomega \rceil+\ell'})}\\&
		\leq\underset{\leq\frac27\beta\bmu \text{ since }\nrm{\delta\vH}\leq \frac{\beta\bmu}{8}}{\underbrace{\nrm{\vI\otimes \vI - \e^{-\delta\vH} \otimes \e^{-\delta\vH^*}}}}
		\underset{\leq (1+\frac52\beta \bmu) \sqrt{\nrm{\vS'}\nrm{\vS''}}\text{ due to }\eqref{eq:BoltzmannBound}\text{ and }\eqref{eq:ProjSumBound}}{\underbrace{\nrm{\sum_{j,\bomega}\mathring{\gamma}_{\ell,\ell'}(\bomega)(\vP_j\otimes\vP^*_{j+i})
					\left(\hat{\vS}^a(\bomega)  \otimes  \hat{\vS}^{a}(\bomega)^*\right)(\vP^*_{j-\lfloor \bomega \rceil+\ell}\otimes\vP^*_{j+i-\lfloor \bomega \rceil+\ell'})}}}\\&
		\leq \beta\bmu\sqrt{\nrm{\vS'}\nrm{\vS''}}.
	\end{align}
	Analogously we can bound $\nrm{\delta\vec{\CA}^{(3)}_{i,\ell,\ell'}}$ as follows
	\begin{align}
		\nrm{\delta\vec{\CA}^{(3)}_{i,\ell,\ell'}}&=\nrm{(\e^{-\delta\vH} \otimes \e^{-\delta\vH^*})\sum_{j,\bomega}\mathring{\gamma}_{\ell,\ell'}(\bomega)(\vP_j\otimes\vP^*_{j+i})
			\left(\hat{\vS}^a(\bomega)  \otimes  \hat{\vS}^{a}(\bomega)^*\right)(\vP^*_{j-\lfloor \bomega \rceil+\ell}\otimes\vP^*_{j+i-\lfloor \bomega \rceil+\ell'})(\vI\otimes \vI - \e^{\delta\vH} \otimes \e^{\delta\vH^*})}\\&
		\leq\underset{\leq1+\frac27\beta\bmu }{\underbrace{\nrm{\e^{-\delta\vH} \otimes \e^{-\delta\vH^*}}}}
		\underset{\leq (1+\frac52\beta \bmu) \sqrt{\nrm{\vS'}\nrm{\vS''}}\text{ due to }\eqref{eq:BoltzmannBound}\text{ and }\eqref{eq:ProjSumBound}}{\underbrace{\nrm{\sum_{j,\bomega}\mathring{\gamma}_{\ell,\ell'}(\bomega)(\vP_j\otimes\vP^*_{j+i})
					\left(\hat{\vS}^a(\bomega)  \otimes  \hat{\vS}^{a}(\bomega)^*\right)(\vP^*_{j-\lfloor \bomega \rceil+\ell}\otimes\vP^*_{j+i-\lfloor \bomega \rceil+\ell'})}}}
				\underset{\leq\frac27\beta\bmu \text{ since }\nrm{\delta\vH}\leq \frac{\beta\bmu}{8}}{\underbrace{\nrm{\vI\otimes \vI - \e^{-\delta\vH} \otimes \e^{-\delta\vH^*}}}}\\&
		\leq \frac{3}{2}\beta\bmu\sqrt{\nrm{\vS'}\nrm{\vS''}}.
	\end{align}
	Putting everything together, we get that 
	\begin{align}
		\nrm{\delta\vec{\CA}}&\leq 25\max_{i,\ell,\ell'}\nrm{\delta\vec{\CA}_{i,\ell,\ell'}}
		\leq 25\max_{i,\ell,\ell'}\left(\nrm{\delta\vec{\CA}^{(1)}_{i,\ell,\ell'}}+\nrm{\delta\vec{\CA}^{(2)}_{i,\ell,\ell'}}+\nrm{\delta\vec{\CA}^{(3)}_{i,\ell,\ell'}}\right)
		\leq125\beta\bmu \sqrt{\nrm{\vS'}\nrm{\vS''}},	
	\end{align}
 which concludes the proof for the first bound.

Finally, if $\sqrt{\gamma(-\bomega)}\hat{\vS}^a(\bomega)^\dagg = \sqrt{\gamma(-\bomega)}\hat{\vS}^{a'}(-\bomega)$, then $\vS'=\vS''$ and $\CD_{sec} =\CD^{\dagger}_{sec}$, so we easily get the other bound 
\begin{align}
\nrm{\vec{\CD}_{sec}- \vec{\CD}(\vrho,\CL_{sec})^{\dagger}} &= \nrm{\CD_{sec}- \CD(\vrho,\CL^{\dagger}_{sec})}_{2-2} = \nrm{\CD_{sec}^{\dagger}- \CD(\vrho,\CL_{sec})}_{2-2}\\ 
&=\nrm{\CD_{sec}- \CD(\vrho,\CL_{sec})}_{2-2} = \nrm{\vec{\CD}_{sec}- \vec{\CD}(\vrho,\CL_{sec})}\leq 132\beta\bmu \nrm{\vS'}. \tag*{\qedhere}
\end{align}
\end{proof}

\subsection{Bootstrapping the secular approximation}\label{sec:improved}

\begin{lem}[Bootstrapping the secular approximation]\label{lem:slicing}
	Consider the decomposition $f=\sum_{j\in J}f_j$ of the weight function, where $f_j\colon S_{t_0}\rightarrow \mathbb{C}$, and let $\mu_j:=\min\left\{\mu\geq 0\colon \nrm{\hat{f}_j(\omega)\cdot\indicator(|\omega|> \mu)}=0\right\}$. If the Hamiltonian $\bvH$ has discretized spectrum so that $B\subset \omega_0\BZ$, $\beta\mu_j\leq 1$, $\max_{\bomega\in B, j\in J}\bomega + \mu_j=:\nu\in S_{\omega_0}$, $\gamma\colon S_{\omega_0}\rightarrow \BR_+$ is such that $\gamma(\bomega)/\gamma(-\bomega)=\e^{-\beta\bomega}$ for all $\bomega\in[-\nu,\nu]\cap S_{\omega_0}$, and the set of jumps is self-adjoint $\{\vA^a\colon a\in A\}=\{\vA^{a\dagger}\colon a\in A\}$, then\footnote{Note that here we use notation $\CD_f$, $\CL_{f}$ instead of $\CD_\beta$, $\CL_{\beta}$ to spell out the dependence on $f$ instead of $\beta$.}
\begin{align}
\nrm{\vec{\CD}_f-\vec{\CD}(\vrho,\CL_f)^{\dagger}} = \nrm{\vec{\CD}_f-\vec{\CD}(\vrho,\CL_f)}
\le\sum_{i\in J}\nrm{f_i}\mu_i\sum_{j\in J}\nrm{f_j} \nrm{\gamma}_\infty 1056\beta\nrm{\sum_{ a \in A} \vA^{a \dagg} \vA^{a}}
	\end{align}
	where $\vrho=\e^{-\beta \vH}/\tr[\e^{-\beta \vH}]$,
	\begin{align}
		\CD_f&:= \vI[\cdot] \vI+ \sum_{a, \bomega \in S_{\omega_0} } \sqrt{\gamma(\bomega)\gamma(-\bomega)} \hat{\vA}^{a}_{f}(\bomega)[\cdot] \hat{\vA}^{a}_{f}(\bomega)^{\dagger}- \frac{\gamma(\bomega)}{2} \{\hat{\vA}^{a}_{f}(\bomega)^{\dagger} \hat{\vA}^{a}_{f}(\bomega) ,\cdot \},\text{ and}\\
		\CL_{f} &:= 
		\sum_{a\in A, \bomega\in S_{\omega_0}} \gamma(\bomega)  \hat{\vA}^{a}_{f}(\bomega)[\cdot] \hat{\vA}^{a}_{f}(\bomega)^{\dagger} -\frac{\gamma(\bomega)}{2} \{\hat{\vA}^{a}_{f}(\bomega)^{\dagger} \hat{\vA}^{a}_{f}(\bomega) ,\cdot \}.
	\end{align}
\end{lem}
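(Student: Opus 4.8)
The plan is to exploit the linearity of the operator Fourier Transform in the weight: since $f=\sum_{j\in J}f_j$ we have $\hat{\vA}^a_{f}(\bomega)=\sum_{j\in J}\hat{\vA}^a_{f_j}(\bomega)$, so expanding the bilinear expressions defining $\CD_f$ and $\CL_f$ produces a sum over pairs $(i,j)\in J\times J$. Writing $\vec{\CD}^{(i,j)}$ and $\CL^{(i,j)}$ for the pieces in which $\hat{\vA}^a_{f_i}$ sits to the ``left'' and $\hat{\vA}^a_{f_j}$ to the ``right'' of each slot, one gets $\CD_f=\vI[\cdot]\vI+\sum_{i,j\in J}\vec{\CD}^{(i,j)}$ and $\CL_f=\sum_{i,j\in J}\CL^{(i,j)}$; since $\CD(\vrho,\cdot)$ and adjunction are linear, $\vec{\CD}(\vrho,\CL_f)^{\dagger}=\sum_{i,j}\vec{\CD}(\vrho,\CL^{(i,j)})^{\dagger}$, the $\vI\otimes\vI$ piece (an artifact of the block-encoding normalization, which one may equally add to both sides) plays no role in the difference, and a triangle inequality reduces the claim to bounding $\nrm{\vec{\CD}^{(i,j)}-\vec{\CD}(\vrho,\CL^{(i,j)})^{\dagger}}$ for each pair and summing.

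Next I would record the key localization input: because the Hamiltonian is discretized so that $B\subset\omega_0\BZ$ and $\hat{f}_j$ is supported on $[-\mu_j,\mu_j]$, the computation behind \autoref{prop:TruncationByWeighing} shows that $\hat{\vA}^a_{f_j}(\bomega)$ \emph{exactly} satisfies the secular condition \eqref{eq:PSPass_D} with cut-off $\bmu=\mu_j$, i.e.\ $\bra{\psi_p}\hat{\vA}^a_{f_j}(\bomega)\ket{\psi_q}=0$ whenever $\labs{(E_p-E_q)-\bomega}>\mu_j$. Consequently $\hat{\vA}^a_{f_i}(\bomega)^{\dagger}\hat{\vA}^a_{f_j}(\bomega)$ changes energy by at most $\mu_i+\mu_j$, and the tensor-product pieces $\hat{\vA}^a_{f_i}(\bomega)\otimes\hat{\vA}^a_{f_j}(\bomega)^{*}$ inherit the analogous two-sided localization, exactly as the single-scale analysis inside \autoref{lem:error_from_Boltzmann}.

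The heart of the proof is then a bilinear, two-scale version of \autoref{lem:error_from_Boltzmann}: I would re-run its $\delta\vec{\CA}$/$\delta\vec{\CR}$ decomposition together with the spectral-bin projectors $\vP_i$, now tracking the bands $\mu_i$ and $\mu_j$ separately on the two tensor factors, so that the Boltzmann mismatch accumulated on each bin is of order $\beta(\mu_i+\mu_j)$; the ``strengths'' that enter are controlled by operator Parseval (\autoref{prop:properties_DFT}), namely $\nrm{\sum_{a,\bomega}\gamma(\bomega)\hat{\vA}^a_{f_i}(\bomega)^{\dagger}\hat{\vA}^a_{f_i}(\bomega)}\le\nrm{\gamma}_\infty\nrm{f_i}_2^2\nrm{\sum_{a\in A}\vA^{a\dagger}\vA^a}$ and likewise for $\hat{\vA}^a_{f_i}(\bomega)\hat{\vA}^a_{f_i}(\bomega)^{\dagger}$. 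This yields $\nrm{\vec{\CD}^{(i,j)}-\vec{\CD}(\vrho,\CL^{(i,j)})^{\dagger}}=\bigO{\beta(\mu_i+\mu_j)\nrm{f_i}_2\nrm{f_j}_2\nrm{\gamma}_\infty\nrm{\sum_{a\in A}\vA^{a\dagger}\vA^a}}$ with an explicit constant; summing over $i,j$ and using $\sum_{i,j}(\mu_i+\mu_j)\nrm{f_i}\nrm{f_j}=2\big(\sum_i\mu_i\nrm{f_i}\big)\big(\sum_j\nrm{f_j}\big)$ produces the stated inequality, the constant $1056$ absorbing the symmetrization and the slack of the bin argument. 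Finally, self-adjointness of the jump set together with \autoref{cor:DiscProxySelfAdjoint} makes $\vec{\CD}_f$ Hermitian, so $\nrm{\vec{\CD}_f-\vec{\CD}(\vrho,\CL_f)^{\dagger}}=\nrm{(\vec{\CD}_f-\vec{\CD}(\vrho,\CL_f))^{\dagger}}=\nrm{\vec{\CD}_f-\vec{\CD}(\vrho,\CL_f)}$, explaining why the two norms in the statement coincide; the KMS hypothesis is only needed for frequencies in $[-\nu,\nu]$, which is the full reach of the nonzero operator Fourier amplitudes under the hypothesis $\max_{\bomega\in B,j}\bomega+\mu_j=\nu$.

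I expect the main obstacle to be the bilinear bin argument in the third paragraph. In \autoref{lem:error_from_Boltzmann} the same operator $\hat{\vS}^a(\bomega)$ appears on both sides, so one cut-off $\bmu$ and one family of projectors suffice, whereas here the left and right factors carry distinct bands $\mu_i,\mu_j$; one must set up the ``tensor-slicing'' (the analogue of \eqref{eq:tensorBisection}) and the telescoping of Boltzmann factors so that the final estimate is \emph{linear} in $\mu_i+\mu_j$ rather than, say, in $\max(\mu_i,\mu_j)\labs{J}$, and so that the $\ell_2$-normalizations $\nrm{f_i}_2,\nrm{f_j}_2$ emerge with exactly the right powers. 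Everything else — the linearity split, the exact localization coming from $B\subset\omega_0\BZ$, and the Parseval strength bounds — is routine given the machinery already assembled in \autoref{sec:operator_FT} and \autoref{sec:weaker}.
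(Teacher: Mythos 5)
Your proposal takes a genuinely different route from the paper's. You decompose $\vec{\CD}_f$ and $\vec{\CD}(\vrho,\CL_f)^\dagger$ into cross-terms indexed by pairs $(i,j)$ and propose to control each such piece by re-running the $\delta\vec{\CA}/\delta\vec{\CR}$ bin argument of \autoref{lem:error_from_Boltzmann} with \emph{two} distinct truncation scales $\mu_i,\mu_j$ carried on the two tensor factors. The paper sidesteps exactly this step. Its key move is a polarization identity,
\begin{equation}
\vM_1\star\vM_4+\vM_2\star\vM_3=\frac{c_1c_2}{2}\sum_{s=\pm 1}s\left(\frac{\vM_1}{c_1}+s\frac{\vM_2}{c_2}\right)\star\left(\frac{\vM_3}{c_1}+s\frac{\vM_4}{c_2}\right),
\end{equation}
applied with $c_i=\nrm{f_i}$, which rewrites the sum of the $(i,j)$ and $(j,i)$ cross-terms as a short signed combination of \emph{diagonal} terms in which the \emph{same} function $g^{(\pm)}=\frac{f_i}{\nrm{f_i}}\pm\frac{f_j}{\nrm{f_j}}$ appears on both slots. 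By linearity of the operator Fourier Transform, these are exactly discriminant proxies of the form covered by \autoref{lem:error_from_Boltzmann} with single cutoff $\bmu=\max\{\mu_i,\mu_j\}$ and weight norm $\nrm{g^{(\pm)}}_2\leq 2$, so the single-scale lemma can be invoked \emph{verbatim}, with no new bin or tensor-slicing analysis. The result is the same per-pair bound proportional to $\beta\max\{\mu_i,\mu_j\}\nrm{f_i}\nrm{f_j}$ that you aim for, and the same summation via $\max\{\mu_i,\mu_j\}\leq\mu_i+\mu_j$, so both routes land on the stated inequality (including the self-adjointness argument, which you give correctly).

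The trade-off: your direct bilinear argument is more self-contained and perhaps more transparent about where the energy localization is used, but it commits you to carefully re-deriving the projector/Boltzmann estimates with two scales — the exact difficulty you flag at the end. The paper's polarization trick is essentially free: it leverages the linearity of $f\mapsto\hat{\vA}_f$ and the bilinearity of the discriminant expression to launder the cross-terms through the already-proved single-scale bound, at the mild cost of an extra factor (tracked in the constant) from $\nrm{g^{(\pm)}}_2\leq 2$. If you want to push through your version, the main thing to nail down is the analogue of the tensor-slicing step \eqref{eq:tensorBisection} when the left and right factors have different bin widths: you must ensure the Boltzmann-exponent bookkeeping across bins remains $O(\beta\max\{\mu_i,\mu_j\})$ per pair (not $O(\beta\mu_i\cdot\mu_j/\omega_0)$ or similar), which is delicate but should go through if the bins are taken at the coarser scale $\max\{\mu_i,\mu_j\}$.
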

\begin{proof}
	The proof builds on the following ``polarization'' identity: for all $\vM_i$ matrices and $c_1,c_2\in\BR$
	\begin{align}
		\vM_1\star\vM_4+\vM_2\star\vM_3=\frac{c_1c_2}{2}\sum_{s=\pm 1}s\left(\frac{\vM_1}{c_1}+s\frac{\vM_2}{c_2}\right)\star\left(\frac{\vM_3}{c_1}+s\frac{\vM_4}{c_2}\right),		
	\end{align}
	where $\star$ stands for any operation that is distributive with $+$, e.g., matrix product $\star=\cdot$ or tensor product $\star=\otimes$.
	
	Due to the linearity of the operator Fourier Transform, we have that $\hat{\vA}^{a}_{f}(\bomega)=\sum_{j\in J}\vA^{a}_{f_j,\bomega}$ and consequently 
	\begin{align}
		\vec{\CD}_f&=\sum_{i,j\in J}\left(1-\frac{\delta_{ij}}{2}\right)\frac{\nrm{f_i}\nrm{f_j}}{2}\sum_{s=\pm 1}s\vec{\CD}_{\!\!\frac{f_i}{\nrm{f_i}}+s\frac{f_j}{\nrm{f_j}}},\\
		\vec{\CD}(\vrho,\CL_f)&=\sum_{i,j\in J}\left(1-\frac{\delta_{ij}}{2}\right)\frac{\nrm{f_i}\nrm{f_j}}{2}\sum_{s=\pm 1}s\vec{\CD}(\vrho,\CL_{\!\frac{f_i}{\nrm{f_i}}+s\frac{f_j}{\nrm{f_j}}}).
	\end{align}
	Due to the properties of the operator Fourier Transform (\autoref{prop:properties_DFT}), we have that 
	\begin{align}
		\nrm{\sum_{a\in A}\sum_{\bomega\in S_{\omega_0}} \gamma(\bomega)\vA^{a}_{g}(\bomega)^{\dagg} \hat{\vA}^a_{g}(\bomega) }
		\leq \nrm{\gamma}_\infty\nrm{\sum_{a\in A}\sum_{\bomega\in S_{\omega_0}} \vA^{a}_{g}(\bomega)^{\dagg} \hat{\vA}^a_{g}(\bomega) }
		=\nrm{\sum_{a\in A}\sum_{\bt \in S_{t_0}} \labs{g(\bt)}^2\e^{\ri \bvH \bt}\vA^{a \dagger} \vA^a \e^{-\ri \bvH \bt}}
		\leq \nrm{\sum_{a\in A}\vA^{a \dagger} \vA^a }\nrm{g}^2\!,
	\end{align}
	since $\nrm{\frac{f_i}{\nrm{f_i}}+s\frac{f_j}{\nrm{f_j}}}\leq 2$, $\beta\mu_j\leq 1$ for all $j\in J$, and\textsuperscript{\ref{foot:gammaZeroPossibility}} $\nu\in S_{\omega_0}$ (i.e., no wrapping around), by \autoref{lem:error_from_Boltzmann} we get
	\begin{align}
		\nrm{\vec{\CD}_{\!\!\frac{f_i}{\nrm{f_i}}+s\frac{f_j}{\nrm{f_j}}} - \vec{\CD}(\vrho,\CL_{\! \frac{f_i}{\nrm{f_i}}+s\frac{f_j}{\nrm{f_j}}})}
  &\le \max\{\mu_i,\mu_j\}{{\nrm{\gamma}_\infty4\beta\sqrt{\nrm{\sum_{ a \in A} \vA^{a \dagg} \vA^{a}}}\left(7\sqrt{\nrm{\sum_{ a \in A} \vA^{a \dagg} \vA^{a}}}+125\sqrt{\nrm{\sum_{ a \in A} \vA^{a} \vA^{a \dagg}}}\right)}}\\
	&	\le \max\{\mu_i,\mu_j\} \underset{K:=}{\underbrace{\nrm{\gamma}_\infty528\beta{\nrm{\sum_{ a \in A} \vA^{a \dagg} \vA^{a}}}}}.
	\end{align}
 The second line uses that $\{\vA^a\colon a\in A\}=\{\vA^{a\dagger}\colon a\in A\}$ thus $\nrm{\sum_{ a \in A} \vA^{a \dagg} \vA^{a}}=\nrm{\sum_{ a \in A} \vA^{a} \vA^{a \dagg}}$. 
 
 Finally, by the triangle inequality, we get that 
	\begin{align}
		\nrm{\vec{\CD}_f-\vec{\CD}(\vrho,\CL_f)}&\leq \sum_{i,j\in J}\left(1-\frac{\delta_{ij}}{2}\right)\frac{\nrm{f_i}\nrm{f_j}}{2}\sum_{s=\pm 1}\nrm{\vec{\CD}_{\!\!\frac{f_i}{\nrm{f_i}}+s\frac{f_j}{\nrm{f_j}}} - \vec{\CD}(\vrho,\CL_{\!\frac{f_i}{\nrm{f_i}}+s\frac{f_j}{\nrm{f_j}}})}\\&
		\leq \sum_{i,j\in J}\left(1-\frac{\delta_{ij}}{2}\right)\nrm{f_i}\nrm{f_j}\max\{\mu_i,\mu_j\}K\\&
		\leq \sum_{i,j\in J}\nrm{f_i}\nrm{f_j}(\mu_i+\mu_j)K\\&	
		= 2\sum_{i\in J}\nrm{f_i}\mu_i\sum_{j\in J}\nrm{f_j}K.\tag*{\qedhere}
	\end{align}
\end{proof}

\begin{cor}[Improved bounds for uniform weights]\label{cor:improved_boltzmann}
 In the setting of~\autoref{lem:slicing}, consider the uniform weight function $f(\bt) = \sqrt{\frac{t_0}{2T}}\indicator(-T \le \bt < T)$ and the secular approximation with $s(\bomega)=\indicator(|\bomega|\leq \mu)$ for some $\mu\geq \pi/T$, 
then
	\begin{align}
\nrm{\vec{\CD}_{sec}- \vec{\CD}(\vrho,\CL_{sec})^{\dagger}}=
	\nrm{\vec{\CD}_{sec}- \vec{\CD}(\vrho,\CL_{sec})} = \bigO{\beta\nrm{\gamma}_\infty\sqrt{\frac{\mu}{T}} \nrm{\sum_{ a \in A} \vA^{a \dagg} \vA^{a}}}.
	\end{align}
\end{cor}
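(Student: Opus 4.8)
\textbf{Proof plan for \autoref{cor:improved_boltzmann}.}
The idea is to apply the bootstrapping lemma (\autoref{lem:slicing}) with a cleverly chosen decomposition $f=\sum_{j\in J}f_j$ of the uniform weight function that trades off the per-piece energy-resolution parameters $\mu_j$ against the $\ell_2$-weights $\nrm{f_j}$. The uniform weight $f(\bt)=\sqrt{t_0/(2T)}\,\indicator(-T\le\bt<T)$ has a Fourier Transform with a heavy (only polynomially decaying) tail, so the naive secular estimate from \autoref{lem:error_from_Boltzmann}, applied to $f$ as a single block with cutoff $\mu$, would leave an uncontrolled truncation error; the trick is to peel off the tail in dyadic shells so that the far-away shells (which would need large $\mu_j$) carry vanishingly small weight $\nrm{f_j}$.

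Concretely, I would decompose $f$ in the \emph{frequency domain}: write $\hat f = \sum_{j\ge 0}\hat f_j$, where $\hat f_0 = \hat f\cdot\indicator(|\bomega|\le\mu)$ and $\hat f_j = \hat f\cdot\indicator(2^{j-1}\mu<|\bomega|\le 2^j\mu)$ for $j\ge 1$ (up to the finite cutoff forced by the discretization, with the final shell absorbing the rest). Then $f_j = \bar\CF^{-1}(\hat f_j)$, each $f_j$ is band-limited with $\mu_j\le 2^j\mu$, and by the tail bound for uniform weights (\autoref{prop:uniformTail}, with $\nu=0$ here since we are decomposing $f$ itself, not a shifted copy) one gets $\nrm{f_j}^2 = \sum_{2^{j-1}\mu<|\bomega|\le 2^j\mu}|\hat f(\bomega)|^2 \le \sum_{|\bomega|>2^{j-1}\mu}|\hat f(\bomega)|^2 = \bigO{\frac{1}{2^{j-1}\mu T}}$, hence $\nrm{f_j}=\bigO{1/\sqrt{2^j\mu T}}$ for $j\ge 1$, while $\nrm{f_0}\le\nrm{f}=1$ and $\mu_0\le\mu$. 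Feeding these into the bound of \autoref{lem:slicing},
\begin{align}
\nrm{\vec{\CD}_{sec}- \vec{\CD}(\vrho,\CL_{sec})}
\le \Big(\sum_{i}\nrm{f_i}\mu_i\Big)\Big(\sum_{j}\nrm{f_j}\Big)\nrm{\gamma}_\infty\, 1056\,\beta\,\nrm{\textstyle\sum_{a\in A}\vA^{a\dagg}\vA^a},
\end{align}
I would estimate the two geometric-type sums. The second factor $\sum_j\nrm{f_j}$ converges: $\nrm{f_0}+\sum_{j\ge1}\bigO{(2^j\mu T)^{-1/2}} = \bigO{1}$ once $\mu T\gtrsim 1$ (here the number of shells is only logarithmic, and the summand decays geometrically, so the sum is $\bigO{1}$; if one wants to avoid even the logarithm, note the sum is dominated by $j=0$). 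The first factor $\sum_i\nrm{f_i}\mu_i \le \mu\cdot 1 + \sum_{j\ge1}\bigO{(2^j\mu T)^{-1/2}}\cdot 2^j\mu = \mu + \bigO{\sqrt{\mu/T}}\sum_{j\ge1}2^{j/2}$; the geometric sum $\sum_{j\ge1}2^{j/2}$ is dominated by its largest term, which corresponds to the last shell at scale $\sim$ (maximal Bohr frequency) $\lesssim\nrm{\vH}$ — but crucially this does not reintroduce an $\nrm{\vH}$ dependence in the \emph{accuracy}, only in the (already present) requirement $N\omega_0\gtrsim\nrm{\vH}$; one sees the contribution of shell $j$ is $\bigO{2^{j/2}\sqrt{\mu/T}}$ and summing geometrically up to the top shell gives $\bigO{\sqrt{\mu/T}\cdot\sqrt{(\text{top scale})/\mu}} = \bigO{\sqrt{(\text{top scale})/T}}$, which one then reorganizes so that the dominant term is $\bigO{\sqrt{\mu/T}}$ after also including the $j=0$ term $\mu\le\sqrt{\mu/T}\cdot\sqrt{\mu T}$. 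Combining, $\sum_i\nrm{f_i}\mu_i = \bigO{\sqrt{\mu/T}}$ under the hypothesis $\mu\ge\pi/T$, giving the claimed $\bigO{\beta\nrm{\gamma}_\infty\sqrt{\mu/T}\,\nrm{\sum_a\vA^{a\dagg}\vA^a}}$. The equality with $\nrm{\vec{\CD}_{sec}-\vec{\CD}(\vrho,\CL_{sec})^\dagger}$ is free since $\vec{\CD}_{sec}$ is Hermitian (self-adjoint jump set, real weight, as recorded in \autoref{lem:secular}).

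\textbf{Main obstacle.} The delicate point is the estimate of $\sum_i\nrm{f_i}\mu_i$: because $\mu_j$ grows like $2^j$ while $\nrm{f_j}$ only decays like $2^{-j/2}$, the product $\nrm{f_j}\mu_j$ \emph{grows} like $2^{j/2}$, so the sum is genuinely dominated by the outermost dyadic shell rather than by the first. One must therefore check carefully that the outermost shell sits at a frequency scale that is $\bigO{\nrm{\vH}}$ (forced anyway by the discretization range $N\omega_0\gtrsim\nrm{\vH}$) and that its contribution, $\bigO{\sqrt{\nrm{\vH}/T}}$, is not what governs the final bound — i.e. one wants the statement in the form $\sqrt{\mu/T}$ with $\mu$ the user-chosen cutoff, not $\sqrt{\nrm{\vH}/T}$. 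This is resolved by observing that the relevant quantity for the secular \emph{approximation error} (as opposed to the discretization-of-$\vH$ error, already handled in \autoref{lem:secular}) only involves the pieces $f_j$ with $2^j\mu$ up to the true spread of Bohr frequencies near a fixed energy window, which after the block-band/\autoref{lem:bandNorm} localization is again $\bigO{\mu}$ — so in fact the sum truncates effectively at $j=\bigO{1}$ and $\sum_i\nrm{f_i}\mu_i=\bigO{\mu+\sqrt{\mu/T}}=\bigO{\sqrt{\mu/T}}$ using $\mu\ge\pi/T$. Making this truncation argument precise — i.e. that shells far beyond the cutoff genuinely do not contribute once one conjugates by the Gibbs state in the $\bmu$-banded basis — is the crux; everything else is bookkeeping with geometric series and the already-established \autoref{lem:slicing}, \autoref{prop:uniformTail}, and \autoref{prop:properties_DFT}.
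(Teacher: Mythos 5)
Your overall strategy — apply the bootstrapping Lemma~\ref{lem:slicing} with a dyadic-shell decomposition in the frequency domain, using the uniform-weight tail bound of Proposition~\ref{prop:uniformTail} to control the shell norms — is exactly the paper's approach. But your specific choice of decomposition has two genuine problems, and the second one means the argument, as written, would not yield $\bigO{\sqrt{\mu/T}}$ even if the first were repaired.

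\textbf{First gap: you decompose the wrong function.} You decompose $\hat f = \sum_{j\geq 0}\hat f_j$ and apply Lemma~\ref{lem:slicing} to $f$. But that lemma, fed your decomposition, bounds $\nrm{\vec{\CD}_f - \vec{\CD}(\vrho,\CL_f)}$, not $\nrm{\vec{\CD}_{sec} - \vec{\CD}(\vrho,\CL_{sec})}$. The target object $\vec{\CD}_{sec}$ corresponds, via Proposition~\ref{prop:TruncationByWeighing}, to the weight $f_s := \CF^{*-1}(\hat f \cdot s)$, whose Fourier transform is already supported in $|\bomega|\leq\mu$. So the correct move is to decompose $f_s$, not $f$. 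Once you do, the shells \emph{terminate} at $\mu$ by construction — the "main obstacle" you flag (shells running up to $\nrm{\vH}$ and dominating) simply never arises. Your attempted repair ("the relevant quantity ... only involves pieces up to the true spread of Bohr frequencies ... after conjugation in the $\bmu$-banded basis") is an informal gesture toward Proposition~\ref{prop:TruncationByWeighing}, which you should invoke outright rather than re-derive vaguely.

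\textbf{Second gap: your innermost shell is at the wrong scale.} You set $\hat f_0 = \hat f\cdot\indicator(|\bomega|\leq\mu)$, so $\mu_0\leq\mu$ and $\nrm{f_0}\leq 1$, contributing $\mu$ to $\sum_i\nrm{f_i}\mu_i$. But under the hypothesis $\mu\geq\pi/T$ one has $\mu\geq\sqrt{\pi\mu/T}$, so this single term already \emph{exceeds} the target. Your manipulation $\mu = \sqrt{\mu/T}\cdot\sqrt{\mu T}$ is an identity with $\sqrt{\mu T}\geq\sqrt{\pi}>1$, i.e., it runs in the wrong direction. The fix, used by the paper, is to take the innermost shell at the Fourier-uncertainty scale of the sinc main lobe: $s_0 = \indicator(|\bomega|\leq\pi/T)$ (so $\mu_0\leq\pi/T$, $\nrm{f_0}\leq1$), with outer shells $s_j = \indicator(4^{j-1}\pi/T < |\bomega|\leq\min\{4^j\pi/T,\mu\})$. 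Then $\nrm{f_j}\leq 2^{1-j}$ by Proposition~\ref{prop:uniformTail}, $\mu_j\leq 4^j\pi/T$, the products $\nrm{f_j}\mu_j\sim 2^j\pi/T$ grow geometrically and are dominated by the top shell at $j\approx\log_4(\mu T/\pi)$, giving $\sum_j\nrm{f_j}\mu_j = \bigO{\sqrt{\pi\mu/T}}$; the bottom shell contributes $\pi/T\leq\sqrt{\pi\mu/T}$. With $\sum_j\nrm{f_j}=\bigO{1}$ the claimed bound follows.

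So: keep the dyadic-shell idea and the use of Lemma~\ref{lem:slicing} and Proposition~\ref{prop:uniformTail}, but decompose the secular-truncated weight $f_s$ (so the shells top out at $\mu$), and start your shells at $\pi/T$, not at $\mu$.
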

\begin{proof}
	Let $f_s\!:=\CF^{*-1}(\hat{f}\cdot s)$; by \autoref{prop:TruncationByWeighing}, we know that $\vec{\CD}_{sec}$ can be obtained by utilizing the weight \nolinebreak function \nolinebreak $f_s$.
	
	We decompose $s$ into exponentially increasing intervals. We set $s_0:=\indicator(|\bomega|\leq \pi/T)$ and
	\begin{align}
		s_j:=\indicator(4^{j-1}\pi/T < |\bomega|\leq \min\{4^{j}\pi/T,\mu\}) \quad \forall j\in \BZ_+.
	\end{align}
	Let $f_j:=\CF^{*-1}(\hat{f}\cdot s_j)$; since $s=\sum_{j=0}^{\ceil{\log_4(\mu T/\pi)}}s_j$ we have $f_s=\sum_{j=0}^{\ceil{\log_4(\mu T/\pi)}}f_j$. Observe that due to \autoref{prop:uniformTail}
	\begin{align}
		\nrm{f_j}=\nrm{\hat{f}\cdot s_j}\leq\nrm{\hat{f}\cdot \indicator(4^{j-1}\pi/T < |\bomega|)}\leq \sqrt{4^{1-j}}=2^{1-j}, \quad \text{and thus} \quad \sum_{j}\nrm{f_j}\leq4.
	\end{align}
	The result follows from \autoref{lem:slicing} since $\hat{f}_j=\hat{f}\cdot s_j$ and by the definition of $s_j$ we have $\mu_j\leq 4^{j}\pi/T$ and thus
	\begin{align}
		\sum_{j=0}^{\ceil{\log_4(\mu T/\pi)}}\nrm{f_j}\mu_j
		\leq\sum_{j=0}^{\ceil{\log_4(\mu T/\pi)}}\frac{2\pi}{T}2^j
		\leq\frac{4\pi}{T}2^{\ceil{\log_4(\mu T/\pi)}}	
		\leq\frac{8\pi}{T}2^{\log_4(\mu T/\pi)}
		=\frac{8\pi}{T}\sqrt{\mu T/\pi}
		=8\sqrt{\frac{\pi\mu}{T}}.\tag*{\qedhere}
	\end{align}
\end{proof}

\subsection{Fourier Transform with uniform weights}

For simpler implementation, we can also work with the Fourier Transform with uniform weight (which is not smooth), leading to slightly worse bounds than the Gaussian damped case of \autoref{thm:L_correctness}.
\begin{thm}[Uniform weight for Fourier Transform]\label{thm:non_damped_Gibbs_error}
Consider the discriminant proxy $\vec{\CD_{\beta}}$~\eqref{eq:vec_main_result} with the plain Fourier Transform 
   $ \hat{\vA}^{a}(\bomega):\propto \sum_{ -T \le \bt < T} {\vA^{a}(\bt)} \e^{-\ri \bomega \bt}$. Let $\nu:=\frac{1}{\beta}+\max_{\omega\in B}\omega$ such that $\nu\le\max_{\bomega\in S_{\omega_0}}$, $\gamma\colon S_{\omega_0}\rightarrow \BR_+$ is such that $\gamma(\bomega)/\gamma(-\bomega)=\e^{-\beta\bomega}$ for all $\bomega\in[-\nu,\nu]\cap S_{\omega_0}$ and the set of jumps are self-adjoint and normalized~\eqref{eq:AAdagger},
then the (normalized) top eigenvector approximates the purified Gibbs state $\ket{\sqrt{\vrho_{\beta}}}$ such that
\begin{align}
    \lnormp{ \ket{\lambda_1(\vec{\CD}_{\beta}}) - \ket{\sqrt{\vrho_{\beta}}} }{} \le \CO\L(\frac{1}{\lambda_{gap}(\vec{\CD}_{\beta})}(\omega_0T + \sqrt{\frac{\beta}{T}})\R).
\end{align}
The block-encoding for the discriminant proxy can be implemented exactly using Hamiltonian simulation time $\CO(T)$ using the construction outlined in \autoref{sec:L_circuit}-\autoref{sec:D_circuit}.
\end{thm}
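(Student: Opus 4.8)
\textbf{Proof plan for \autoref{thm:non_damped_Gibbs_error}.}
The plan is to run exactly the same three-term decomposition used in the proof of \autoref{thm:D_correct}, but feeding in the uniform-weight tail bounds (\autoref{prop:uniformTail}) in place of the Gaussian tail bounds (\autoref{prop:Gaussian_tail}), and using the improved Boltzmann-error estimate \autoref{cor:improved_boltzmann} tailored to uniform weights rather than the generic \autoref{lem:error_from_Boltzmann}. Concretely, introduce the secular-approximated discriminant proxy $\vec{\CD}_{sec}$ with cutoff frequency $\mu$, and bound
\begin{align}
\lnorm{\ket{\lambda_1(\vec{\CD}_\beta)} - \ket{\sqrt{\vrho_\beta}}}
&\le \CO\L(\frac{\norm{\vec{\CD}_\beta - \vec{\CD}(\vrho,\CL_{sec})^{\dagger}}}{\lambda_{gap}(\vec{\CD}_\beta)}\R)\\
&\le \CO\L(\frac{\norm{\vec{\CD}_\beta - \vec{\CD}_{sec}} + \norm{\vec{\CD}_{sec} - \vec{\CD}(\vrho,\CL_{sec})^{\dagger}}}{\lambda_{gap}(\vec{\CD}_\beta)}\R),
\end{align}
where the first inequality is \autoref{prop:Dfixed_point_error} combined with eigenvalue/eigenvector perturbation (\autoref{prop:Bauer-Fike}, \autoref{prop:eigenvector_perturb_gen}), exactly as in \autoref{sec:proof_coherent}, and using $\CD_{sec}=\CD_{sec}^{\dagger}$ from the self-adjointness of the secular proxy (\autoref{lem:secular}, via \autoref{prop:realSecReal} and \autoref{cor:DiscProxySelfAdjoint}) — here we use that the uniform weight $f$ is real and the filter $s(\bomega)=\indicator(|\bomega|\le\mu)$ is real and even, and that the jump set is self-adjoint.

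The first term is the secular-approximation error. By \autoref{lem:secular} we have $\norm{\vec{\CD}_\beta - \vec{\CD}_{sec}} \le 4\lVert \hat f - \hat f_s\rVert_2 + 8\nrm{f - f_T}_2 + 4T\omega_0$. Since the uniform weight is already supported on $|\bt|<T$, the term $\nrm{f-f_T}_2$ vanishes (taking the truncation parameter in \autoref{lem:secular} to be $T$ itself), leaving $4\lVert\hat f - \hat f_s\rVert_2 + 4T\omega_0$. The quantity $\lVert\hat f-\hat f_s\rVert_2^2 = \sum_{|\bomega|>\mu}|\hat f(\bomega)|^2$ is precisely the tail controlled by \autoref{prop:uniformTail}, giving $\lVert\hat f - \hat f_s\rVert_2 = \CO(\sqrt{1/(\mu T)})$ (applied with $\nu$ in the role of the frequency shift, absorbed by the rounded Hamiltonian $\bvH$). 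So this term contributes $\CO(\omega_0 T + 1/\sqrt{\mu T})$. The second term is the approximate-detailed-balance (discriminant-proxy) error for the secular-approximated operator, and here the point is to use \autoref{cor:improved_boltzmann} rather than the naive bound: it gives $\norm{\vec{\CD}_{sec} - \vec{\CD}(\vrho,\CL_{sec})^{\dagger}} = \CO(\beta\sqrt{\mu/T}\,\norm{\sum_a \vA^{a\dagger}\vA^a})$, which is $\CO(\beta\sqrt{\mu/T})$ under the normalization $\norm{\sum_a \vA^{a\dagger}\vA^a}\le 1$. The hypotheses needed for \autoref{cor:improved_boltzmann} — $B\subset\omega_0\BZ$ after rounding, $\beta\mu\le 1$, $\nu=\tfrac1\beta+\max_{\omega\in B}\omega\in S_{\omega_0}$ (no wraparound), the KMS relation for $\gamma$ on $[-\nu,\nu]$, and self-adjointness of the jumps — are exactly the standing assumptions of the theorem statement together with the choice $\mu\le 1/\beta$.

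Combining, the total error is $\CO\big(\lambda_{gap}^{-1}(\vec{\CD}_\beta)(\omega_0 T + 1/\sqrt{\mu T} + \beta\sqrt{\mu/T})\big)$, and it remains to optimize over the free truncation parameter $\mu$. Balancing $1/\sqrt{\mu T}$ against $\beta\sqrt{\mu/T}$ gives $\mu \sim 1/\beta$, which indeed satisfies $\beta\mu\le 1$, and yields both terms of order $\sqrt{\beta/T}$; since $\mu = \Theta(1/\beta)$ requires $\mu\ge\pi/T$, i.e. $T\gtrsim\beta$, which is the only regime where the bound is nonvacuous anyway. This leaves the claimed $\CO(\lambda_{gap}^{-1}(\vec{\CD}_\beta)(\omega_0 T + \sqrt{\beta/T}))$. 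Finally, the block-encoding claim: the construction of \autoref{sec:L_circuit}--\autoref{sec:D_circuit} applies verbatim with $\vec{Prep}$ now preparing the flat state $\tfrac{1}{\sqrt{2T/t_0}}\sum_{-T\le\bt<T}\ket{\bt}$, which by the proposition in \autoref{sec:tail_bounds} needs only Hadamard and CNOT gates and no approximation, so the only cost is the $\CO(T)$ controlled Hamiltonian simulation, and the reflection $\vR$ in \autoref{prop:discriminantBlock} is implemented exactly so $\vec{\CD}_{impl}$ is genuinely Hermitian (justifying the footnote assumption invoked in \autoref{sec:proof_coherent}).

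I expect the main obstacle to be bookkeeping rather than conceptual: making sure the rounded Hamiltonian $\bvH$ is handled consistently (the frequency shift $\nu$ landing inside $S_{\omega_0}$, the discretization error $\norm{\vH-\bvH}\le\omega_0/2$ feeding into the $\omega_0 T$ term through \autoref{lem:OpFtDiffBound}/\autoref{cor:fHPerturBound}), and checking that the hypothesis chain of \autoref{cor:improved_boltzmann} — in particular $\mu\ge\pi/T$ and $\beta\mu\le 1$ simultaneously — is compatible with the near-optimal choice $\mu\sim 1/\beta$. The genuinely new input over \autoref{thm:D_correct} is simply the substitution of the heavy-tailed uniform-weight estimates, which is why the error degrades from $\tfrac{\beta}{\sigma_t}\sqrt{\log(\sigma_t/\beta)}$ to $\sqrt{\beta/T}$.
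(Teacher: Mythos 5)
Your proposal matches the paper's proof of \autoref{thm:non_damped_Gibbs_error}: the same decomposition via $\vec{\CD}_{sec}$, the same appeal to \autoref{prop:Dfixed_point_error}, \autoref{lem:secular}, \autoref{prop:uniformTail}, and \autoref{cor:improved_boltzmann}, and the same optimization $\mu=1/\beta$ (with the paper's ``WLOG $T/\beta\ge\pi$'' remark playing the role of your compatibility check $\mu\ge\pi/T$). You supply a few details the paper compresses into citations — the vanishing $\nrm{f-f_T}_2$ term, the self-adjointness chain, the exactness of $\vec{Prep}$ and $\vR$ for the block-encoding claim — but the argument is the same.
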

Even though with a worse asymptotic bound, the plain Fourier Transform is simpler to implement and closer to thermalization in nature (\autoref{sec:open_system}). The proof is even simpler than the Gaussian case, partly because $\vec{\CD} = \vec{\CD}_{impl}$ as the uniform weights can be prepared exactly.
\begin{proof}[Proof of \autoref{thm:non_damped_Gibbs_error}]
We can assume without loss of generality that $T/\beta\geq \pi$, since otherwise, the bound is vacuous. We bound the eigenvector distance by the operator norm using \autoref{prop:Dfixed_point_error} and recall the secular approximation (\autoref{lem:secular}, \autoref{prop:uniformTail}) and the improved bounds on approximate detailed balance (\autoref{cor:improved_boltzmann}):
\begin{align}
     \lnormp{ \ket{\lambda_1(\vec{\CD}_{\beta})} - \ket{\sqrt{\vrho}} }{} \le \frac{6\lnorm{\vec{\CD}_{\beta} - \vec{\CD}(\vrho,\CL_{sec})^{\dagger} }}{\lambda_{gap}(\CD_{\beta})} &\le \frac{6}{\lambda_{gap}(\vec{\CD}_{\beta})} \L(\norm{\vec{\CD}_{\beta} - \vec{\CD}_{sec}}+ \norm{\vec{\CD}_{sec} - \vec{\CD}(\vrho,\CL_{sec})^{\dagger}}\R)\\
    &\le \CO\L(\frac{1}{\lambda_{gap}(\vec{\CD}_{\beta})}(\omega_0T + \frac{1}{\sqrt{\mu T}}+\beta \sqrt{\frac{\mu}{T}})\R)\tag*{(now set $\mu := 1/\beta$)}\\
    &\le \CO\L(\frac{1}{\lambda_{gap}(\vec{\CD}_{\beta})}(\omega_0T +\sqrt{\frac{\beta}{T}})\R).\tag*{\qedhere}
\end{align}
\end{proof}

Unfortunately, the above result suggests that the Hamiltonian simulation time needs to scale with the inverse gap squared $\lambda_{gap}^{-2}$; we do not know if better bounds are possible.

\section{Discretization error for \Lword{}s and discriminant proxies}\label{apx:cont_limit}
In this appendix, we bound the discretization error for continuous \Lword{}s. We use the notation established in \autoref{sec:operator_FT}. In addition, for a function $f\colon\mathbb{R}\rightarrow\mathbb{C}$, by $\bar{\CF}\left(f(\bt)\right)$, we mean the discrete Fourier Transform of the vector obtained by evaluating $f$ at the points $\bt\in S^{\lceil N \rfloor}_{t_0}$. Also, we define the ``discretized'' version $\bar{f}(t):=\sqrt{t_0}f(t)$ with a natural rescaling.

We begin with a seemingly loose bound that will, however, be sufficient.
\begin{lem}\label{lem:discToContDecomp}
	If $f,g,h\in \ell_2(\mathbb{R})$ and $\gamma\in \ell_\infty(\mathbb{R})$, then for any norm $\vertiii{\cdot}$, we have that
	\begin{align}
		&\vertiii{\int_{-\infty}^{\infty} \gamma(\omega) \hat{\vA}_{f}(\omega)^\dagg\star\hat{\vA}_{f}(\omega)  \mathrm{d}\omega
			-\sum_{\bomega\in S^{\lceil N \rfloor}_{\omega_0}}	g(\bomega) \hat{\vA}_{h}(\bomega)^\dagg \star \hat{\vA}_{h}(\bomega)}\\
		&\quad\leq\sum_{\nu,\nu'\in B}\vertiii{(\vA_{\nu})^\dagg\star\vA_{\nu'}}\left|\int_{\!-\infty}^{\infty} \! 
		 \gamma(\omega)  \hat{f}^*(\omega\!-\!\nu)\hat{f}(\omega\!-\!\nu')\mathrm{d}\omega
		-\!\!\!\!\sum_{\bomega\in S^{\lceil N \rfloor}_{\omega_0}}\! g(\bomega) \bar{\CF}\left(h(\bt) \cdot \e^{(\ri \nu \bt)}\right)^{\!\!*}\!\!(\bomega)\bar{\CF}\left(h(\bt) \cdot \e^{(\ri \nu' \bt)}\right)\!(\bomega)\right|,
	\end{align}
	where $\star$ stands for any operation that is distributive with $+$, e.g., matrix product $\star=\cdot$ or tensor product $\star=\otimes$.
\end{lem}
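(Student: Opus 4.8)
The statement is a purely Fourier-analytic decomposition lemma: it says that the difference between a continuous integral of the form $\int \gamma(\omega)\hat{\vA}_f(\omega)^\dagg\star\hat{\vA}_f(\omega)\,\mathrm d\omega$ and its discretized cousin $\sum_{\bomega}g(\bomega)\hat{\vA}_h(\bomega)^\dagg\star\hat{\vA}_h(\bomega)$ is controlled term-by-term after expanding the operators into their Bohr-frequency components $\vA_\nu$. The plan is to expand both sides using the Bohr decomposition \eqref{eq:BohrDecomposedOFT} in the continuous case and the analogous discrete identity implicit in \eqref{eq:discSecDef}, then collect matching $(\vA_\nu)^\dagg\star\vA_{\nu'}$ terms and apply the triangle inequality.

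First I would write, in the continuous case, $\hat{\vA}_f(\omega)=\sum_{\nu\in B}\hat f(\omega-\nu)\vA_\nu$ (this is \eqref{eq:BohrDecomposedOFT}, valid for $f\in\ell_2$ by the unitary extension of $\CF$), so that
\begin{align}
\int_{-\infty}^\infty \gamma(\omega)\hat{\vA}_f(\omega)^\dagg\star\hat{\vA}_f(\omega)\,\mathrm d\omega
=\sum_{\nu,\nu'\in B}(\vA_\nu)^\dagg\star\vA_{\nu'}\int_{-\infty}^\infty\gamma(\omega)\hat f^*(\omega-\nu)\hat f(\omega-\nu')\,\mathrm d\omega,
\end{align}
where interchanging the (finite, since $|B|<\infty$) sum with the integral is justified just as in the proof of \autoref{prop:cont_operator_FT}. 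Next, in the discrete case, I would use that $\hat{\vA}_h(\bomega)=\sum_{\nu\in B}\bar{\CF}\!\left(h(\bt)\e^{\ri\nu\bt}\right)\!(\bomega)\,\vA_\nu$ — this is simply the discrete Fourier Transform applied to $h(\bt)\vA(\bt)=\sum_\nu h(\bt)\e^{\ri\nu\bt}\vA_\nu$, recombining Bohr frequencies exactly as in \autoref{prop:TruncationByWeighing} — so that
\begin{align}
\sum_{\bomega\in S^{\lceil N\rfloor}_{\omega_0}}g(\bomega)\hat{\vA}_h(\bomega)^\dagg\star\hat{\vA}_h(\bomega)
=\sum_{\nu,\nu'\in B}(\vA_\nu)^\dagg\star\vA_{\nu'}\sum_{\bomega}g(\bomega)\bar{\CF}\!\left(h(\bt)\e^{\ri\nu\bt}\right)^{\!*}\!(\bomega)\,\bar{\CF}\!\left(h(\bt)\e^{\ri\nu'\bt}\right)\!(\bomega).
\end{align}
Subtracting the two displays, the common operator factor $(\vA_\nu)^\dagg\star\vA_{\nu'}$ factors out of each $(\nu,\nu')$ term, and the triangle inequality for the norm $\vertiii{\cdot}$ — pulling the scalar coefficient out via absolute value — yields exactly the claimed bound.

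The steps here are all routine; the only place requiring a little care is the justification for commuting the infinite frequency sum/integral with the (finite) Bohr-frequency sums and with the norm, and checking that $\hat f^*(\omega-\nu)=(\hat f(\omega-\nu))^*$ appears correctly (the weight $f$ is complex-valued in this lemma, so one must track conjugates). Since $|B(\vH)|<\infty$ on a finite-dimensional Hilbert space, all the sums over $\nu,\nu'$ are finite, so no convergence subtlety arises there; the only genuine analytic input is that $\hat f\in\ell_2$ so the continuous $\omega$-integral of $\gamma(\omega)\hat f^*(\omega-\nu)\hat f(\omega-\nu')$ converges absolutely (using $\gamma\in\ell_\infty$ and Cauchy–Schwarz), which makes the term-by-term rearrangement legitimate. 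I do not expect any real obstacle; this lemma is a bookkeeping step whose purpose is to reduce the \Lword{}/discriminant-proxy discretization error to a collection of scalar one- and two-parameter Fourier discretization estimates, which are then handled separately (e.g.\ via the periodic-summation argument of \autoref{fact:periodic_sum} and the tail bounds of \autoref{sec:tail_bounds}).
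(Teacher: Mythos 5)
Your proposal is correct and follows essentially the same route as the paper's proof: expand $\hat{\vA}_f(\omega)$ and $\hat{\vA}_h(\bomega)$ into Bohr components via \eqref{eq:BohrDecomposedOFT} and its discrete counterpart, collect the $(\vA_\nu)^\dagg\star\vA_{\nu'}$ factors using distributivity of $\star$ over $+$, and finish by the triangle inequality. The paper's version is slightly terser (it does not spell out the convergence justification), but the content is identical.
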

Directly applying the above for the original Hamiltonian $\vH$ suffers from the number of the Bohr frequencies $\labs{B(\vH)}$, which can generally scale with the Hilbert space dimension if the eigenvalue differences are nondegenerate. However, we will see that the above becomes sufficiently stringent if we consider a \textit{rounded} Hamiltonian $\bvH$, substantially reducing the number of distinct Bohr frequencies $\labs{B(\bvH)}$ while staying close to the original Hamiltonian $\vH$.
\begin{proof}
	We use the defining decomposition of the continuous operator Fourier Transform from \autoref{prop:cont_operator_FT}
	\begin{align}
		\hat{\vA}_{f}(\omega) = \sum_{\nu \in B}\hat{f}(\omega-\nu) \vA_{\nu} ,
	\end{align}
	and its discrete counterpart 
	\begin{align}
		\hat{\vA}_{h}(\bomega) =\sum_{\nu \in B}  \bar{\CF}\left(h(\bt) \cdot \e^{\ri \nu \bt}\right)(\bomega) \vA_{\nu},
	\end{align}	
	where $\bar{\CF}$ denotes the discrete Fourier Transform defined via $\bar{\CF}\left(h(\bt)\right)(\bomega)= \frac{1}{\sqrt{N}}\sum_{\bt \in S^{\lceil N \rfloor}_{t_0}} h(\bt)\e^{-\ri \bomega \bt}$, where $t_0=\frac{2\pi}{N\omega_0}$.
	
	Due to the distributivity of $+$ and $\star$ we have 
	\begin{align*}
		&\int_{-\infty}^{\infty}  \gamma(\omega) \hat{\vA}_{f}(\omega)^\dagg\star\hat{\vA}_{f}(\omega) \mathrm{d}\omega
		-\sum_{\bomega\in S^{\lceil N \rfloor}_{\omega_0}}	g(\bomega) \hat{\vA}_{h}(\bomega)^\dagg \star \hat{\vA}_{h}(\bomega)\\
		&\quad=\sum_{\nu,\nu'\in B}(\vA_{\nu})^\dagg\star\vA_{\nu'}\left(\int_{\!-\infty}^{\infty} \! 
		\gamma(\omega)  \hat{f}^*(\omega\!-\!\nu)\hat{f}(\omega\!-\!\nu')\mathrm{d}\omega
		-\sum_{\bomega\in S^{\lceil N \rfloor}_{\omega_0}}\! g(\bomega) \bar{\CF}\left(h(\bt) \cdot \e^{(\ri \nu \bt)}\right)^{\!\!*}\!\!(\bomega)\bar{\CF}\left(h(\bt) \cdot \e^{(\ri \nu' \bt)}\right)\!(\bomega)\right).
	\end{align*}
	We conclude the proof by using the triangle inequality.
\end{proof}
We proceed by controlling the discretization error for the scalar integral. To do so, we also need to regularize the filter function and the transition weight $\gamma(\omega)$ by truncations
\begin{align}
f_T(t)&:=f(t)\cdot \indicator(t\in [-T/2,T/2))\label{eq:f_T}\\
	\gamma_W(\omega)&:= \gamma(\omega)\cdot \indicator(\omega\in [-W/2,W/2))~\label{eq:gammaW}.
\end{align}

\begin{lem}[Discretization error bounds for the integral]\label{lem:ContToDiscDecomp}
	Let $\bar{\CF}$ denote the discrete Fourier Transform with parameters $N,\omega_0,t_0$ and consider $f\in\ell_2(\mathbb{R})$ and $\gamma\in\ell_\infty(\mathbb{R})$ with truncation parameters $T,W$ satisfying $N \ge T/t_0\in\mathbb{Z}$ and $N\ge W/\omega_0\in\mathbb{Z}$. Then, for each $\nu,\nu'\in[-K,K]$, 
	\begin{align}
		\left|\int_{-\infty}^\infty \underset{g(\omega):=}{\underbrace{\gamma_W(\omega)\hat{f}^*(\omega-\nu)\hat{f}(\omega - \nu')}}\mathrm{d}\omega
		-\sum_{\bomega\in S^{\lceil N  \rfloor}_{\omega_0}}\gamma_W(\bomega)\bar{\CF}\left(\sqrt{t_0}f_T(\bt) \cdot \e^{(\ri \nu \bt)}\right)^{\!\!*}\!\!(\bomega)\bar{\CF}\left(\sqrt{t_0}f_T(\bt) \cdot \e^{(\ri \nu' \bt)}\right)\!(\bomega)\right|\leq \epsilon
	\end{align}
	holds provided the following conditions: \footnote{Note that the definition of $g(\omega)$ depends on the values $\nu,\nu'$, however we do not explicitly indicate this dependence for ease of notation.}
	\begin{align}\label{eq:DiscDeltaReq1}
		\sum_{k\in \mathbb{Z}}\omega_0 \nrm{g(\omega)-g(k\omega_0)}_{[k\omega_0,(k+1)\omega_0)}\leq \frac{\epsilon}{2},
	\end{align}	
	and
	\begin{align}\label{eq:DiscDeltaReq2}
		\sum_{\bt \in S^{\lceil N  \rfloor}_{t_0}}t_0 \nrm{f_T(t)-f_T(\bt)}_{[\bt,\bt+t_0)}\leq \delta, \qquad	t_0\nrm{f_T(t)}_\infty T(W+K)\leq \delta, \qquad \nrm{f-f_T}_1\leq (\sqrt{2\pi}-2)  \delta,
	\end{align}
	where $\delta=\min\left( \frac{\epsilon}{4W\nrm{\gamma_W}_\infty(\nrm{f}_1+1)},2\right)$.\footnote{Note that in case $N$ is odd in order to match the spacing of the grid $S^{\lceil N  \rfloor}$, the intervals $[k\omega_0,(k+1)\omega_0)$ should be changed to $[(k-\frac{1}{2})\omega_0,(k+\frac{1}{2})\omega_0)$ and analogously the endpoints of $[\bt,\bt+t_0)$ should be shifted to $\bt \pm t_0/2$.}
\end{lem}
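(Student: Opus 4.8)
The strategy is to reduce the claimed bound to two separate, elementary error-control statements: one for how well the Riemann sum over the frequency grid $S_{\omega_0}^{\lceil N\rfloor}$ approximates the integral $\int g(\omega)\,\mathrm{d}\omega$, and one for how well the discrete Fourier transform of the truncated, rescaled filter $\sqrt{t_0}f_T(\bt)\e^{\ri\nu\bt}$ approximates the continuous Fourier transform $\hat f(\omega-\nu)$ pointwise on the grid. The left-hand quantity is then bounded by a triangle inequality that inserts the intermediate object $\sum_{\bomega}\gamma_W(\bomega)\,\hat f^*(\bomega-\nu)\hat f(\bomega-\nu')$, i.e. the genuine Riemann sum of $g$.

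\textbf{Step 1: Split into the Riemann-sum error and the DFT-vs-continuous-FT error.} First I would write
\begin{align*}
&\left|\int_{-\infty}^\infty g(\omega)\,\mathrm{d}\omega - \sum_{\bomega\in S^{\lceil N\rfloor}_{\omega_0}}\gamma_W(\bomega)\,\bar{\CF}\!\left(\sqrt{t_0}f_T(\bt)\e^{\ri\nu\bt}\right)^{\!*}\!(\bomega)\,\bar{\CF}\!\left(\sqrt{t_0}f_T(\bt)\e^{\ri\nu'\bt}\right)\!(\bomega)\right|\\
&\quad\leq \underbrace{\left|\int_{-\infty}^\infty g(\omega)\,\mathrm{d}\omega - \sum_{\bomega}\omega_0\, g(\bomega)\right|}_{\text{(I): Riemann-sum error}} + \underbrace{\left|\sum_{\bomega}\omega_0\, g(\bomega) - \sum_{\bomega}\gamma_W(\bomega)\,\bar{\CF}(\cdots)^{*}\bar{\CF}(\cdots)\right|}_{\text{(II): DFT error on the grid}}.
\end{align*}
(Here $\sum_\bomega \omega_0 g(\bomega)$ is understood to be a finite sum: since $\hat f_T$ is supported where $\gamma_W$ is, and both $\gamma_W, f_T$ are compactly supported in their respective domains, only $\bomega\in[-W/2,W/2)$ contribute.) Term (I) is controlled directly by hypothesis \eqref{eq:DiscDeltaReq1}: it telescopes over the intervals $[k\omega_0,(k+1)\omega_0)$ and is bounded by $\sum_k \omega_0\|g(\omega)-g(k\omega_0)\|_{[k\omega_0,(k+1)\omega_0)}\leq \epsilon/2$. (One should note that only finitely many $k$ contribute because $g$ vanishes outside $[-W/2,W/2)$.)

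\textbf{Step 2: Bound (II) by replacing $\hat f$ with the discretized-truncated transform, one factor at a time.} On the grid, $g(\bomega)=\gamma_W(\bomega)\hat f^*(\bomega-\nu)\hat f(\bomega-\nu')$, so (II) becomes $\sum_{\bomega}\gamma_W(\bomega)\big[\hat f^*(\bomega-\nu)\hat f(\bomega-\nu') - \bar{\CF}(\sqrt{t_0}f_T\e^{\ri\nu\bt})^*(\bomega)\,\bar{\CF}(\sqrt{t_0}f_T\e^{\ri\nu'\bt})(\bomega)\big]$ times $\omega_0$ (absorbing $\omega_0$ appropriately, or rather noting the sum already carries the right normalization since $\bar\CF$ has its own $1/\sqrt N$). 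I would then insert the cross term and use $|ab-cd|\leq |a||b-d|+|d||a-c|$, reducing everything to controlling, for each fixed $\nu\in[-K,K]$,
\begin{align*}
\left\|\bar{\CF}\!\left(\sqrt{t_0}f_T(\bt)\e^{\ri\nu\bt}\right)(\bomega) - \hat f(\bomega-\nu)\right\|
\end{align*}
as a function on the grid, together with crude $\ell_\infty/\ell_2$ bounds on $\hat f$ and on the discrete transform via $\|\hat f\|_\infty\leq \|f\|_1/\sqrt{2\pi}$, $\|\bar\CF(\cdots)\|_2 = \|\sqrt{t_0}f_T\|_2 \leq \|f\|_2$ (Parseval, discrete and continuous). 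The single-factor error itself splits as: (a) continuous FT of $f$ vs. continuous FT of $f_T$, bounded by $\|f-f_T\|_1$ from the third condition in \eqref{eq:DiscDeltaReq2}; (b) the continuous FT of $f_T(t)\e^{\ri\nu t}$ restricted to grid points vs. its Riemann-sum approximation $\sqrt{t_0}\,\frac{1}{\sqrt{2\pi}}\sum_{\bt}t_0 f_T(\bt)\e^{\ri(\nu-\bomega)\bt}$ — here the integrand oscillates, and the Riemann-sum error over one cell of width $t_0$ is controlled by the total variation of $t\mapsto f_T(t)\e^{\ri(\nu-\bomega)t}$ on that cell, which by the product rule is bounded by $\|f_T\|_\infty\cdot t_0|\nu-\bomega|$ plus the variation of $f_T$ itself; summing over $\bt$ and using $|\nu-\bomega|\leq W/2+K$ gives the middle condition $t_0\|f_T\|_\infty T(W+K)\leq\delta$ and the first condition $\sum_{\bt}t_0\|f_T(t)-f_T(\bt)\|_{[\bt,\bt+t_0)}\leq\delta$ in \eqref{eq:DiscDeltaReq2}; (c) a constant mismatch between $\frac{t_0}{\sqrt{2\pi}}\sum$ and $\sqrt{t_0}\cdot\bar\CF$ normalization — actually with $t_0=2\pi/(N\omega_0)$ these match exactly, so (c) is vacuous, leaving only the $(\sqrt{2\pi}-2)\delta$ slack in \eqref{eq:DiscDeltaReq2} as the cushion absorbing the cruder constants. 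Collecting, each single-factor error is $\mathcal O(\delta)$, and then (II) $\leq 2W\|\gamma_W\|_\infty(\|f\|_1+1)\cdot\mathcal O(\delta)\leq \epsilon/2$ by the choice $\delta = \min(\epsilon/(4W\|\gamma_W\|_\infty(\|f\|_1+1)),2)$.

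\textbf{Main obstacle.} The delicate point is term (b): the integrand $f_T(t)\e^{\ri(\nu-\bomega)t}$ is highly oscillatory when $|\bomega|$ is large, so a naive "integrand times mesh width" bound is useless — one genuinely needs a bounded-variation / summation-by-parts argument that keeps the oscillation of the phase separate from the slow variation of $f_T$, and it is exactly this that forces the three-part hypothesis \eqref{eq:DiscDeltaReq2} (variation of $f_T$, $\|f_T\|_\infty$ times the worst phase slope, and the $\ell_1$ truncation tail) rather than a single smoothness condition. Managing the bookkeeping so that the final constant multiplying $\delta$ is exactly $4W\|\gamma_W\|_\infty(\|f\|_1+1)$ (and so cleanly gives $\epsilon/2$) is the only real work; everything else is triangle inequalities and Parseval.
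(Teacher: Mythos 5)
Your decomposition is the same one the paper uses: split off the Riemann-sum error in $\omega$ (controlled by \eqref{eq:DiscDeltaReq1}), then bound the grid-point mismatch between $\hat f(\bomega-\nu)$ and the truncated/discretized transform by combining the $\ell_1$ truncation tail with a bounded-variation argument that separates the slow variation of $f_T$ from the oscillation of the phase $\e^{-\ri(\bomega-\nu)t}$, which is exactly where the three conditions in \eqref{eq:DiscDeltaReq2} earn their keep. One small bookkeeping slip: the quantities $\frac{t_0}{\sqrt{2\pi}}\sum_{\bar t}$ and $\sqrt{t_0}\,\bar{\CF}=\frac{\sqrt{t_0}}{\sqrt N}\sum_{\bar t}$ differ by a factor $\sqrt{\omega_0}$ and do \emph{not} coincide on their own; the exact match only appears after you include the Riemann mesh factor $\omega_0$ that sits in front of $\tilde g(\bomega)$ in term (II), i.e.\ $\omega_0\cdot(t_0/\sqrt{2\pi})^2 = t_0/N$ via $\omega_0 t_0 = 2\pi/N$. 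With that fixed, your argument is the paper's proof.
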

\begin{proof}
	By \eqref{eq:DiscDeltaReq1} we have that
	\begin{align}\label{eq:RiemannDiff1}
		\left|\int_{\mathbb{R}} g(\omega) \mathrm{d}\omega - \sum_{k\in \mathbb{Z}} g(k\omega_0) \omega_0\right|
		\leq \sum_{k\in \mathbb{Z}}\left|\int_{k\omega_0}^{(k+1)\omega_0} g(\omega) - g(k\omega_0)\mathrm{d}\omega\right|
		\leq \sum_{k\in \mathbb{Z}}\omega_0 \nrm{g(\omega)-g(k\omega_0)}_{[k\omega_0,(k+1)\omega_0)}
		\leq \frac{\epsilon}{2}.\quad
	\end{align}
	Also observe that due to \eqref{eq:DiscDeltaReq2} we have for all $\omega\in \mathbb{R}$ that
	\begin{align}
		\sum_{\bt \in S^{\lceil N  \rfloor}_{t_0}}t_0 &\nrm{f_T(t)\e^{-\ri\omega t}-f_T(\bt)\e^{-\ri\omega\bt}}_{[\bt,\bt+t_0)}\\
		&\leq \sum_{\bt \in S^{\lceil N  \rfloor}_{t_0}}t_0 \nrm{f_T(t)\e^{-\ri\omega t}-f_T(\bt)\e^{-\ri\omega t}}_{[\bt,\bt+t_0)}+t_0 \nrm{f_T(\bt)\e^{-\ri\omega t}-f_T(\bt)\e^{-\ri\omega \bt}}_{[\bt,\bt+t_0)}\\
		&\leq \sum_{\bt \in S^{\lceil N  \rfloor}_{t_0}}t_0\nrm{f_T(t)-f_T(\bt)}_{[\bt,\bt+t_0)}+t_0|f_T(\bt)|\nrm{\e^{-\ri\omega t}-\e^{-\ri\omega \bt}}_{[\bt,\bt+t_0)}\\
		&\leq \sum_{\bt \in S^{\lceil N  \rfloor}_{t_0}}t_0\nrm{f_T(t)-f_T(\bt)}_{[\bt,\bt+t_0)}+t_0|f_T(\bt)|t_0|\omega|\\&
		\leq \delta+t_0\nrm{f_T(t)}_\infty T|\omega|.\label{eq:RiemannDiff2}
	\end{align}
	Next we define $\tilde{\hat{f}}_T(\omega):=\frac{t_0}{\sqrt{2\pi}}\sum_{\bt \in S_{t_0}}f_T(\bt)\e^{-\ri\omega \bt}$. The above two inequalities imply similarly to $\eqref{eq:RiemannDiff1}$ that for all $\omega$ satisfying $|\omega|\leq W+K$ we have
	\begin{align}
		|\tilde{\hat{f}}_T(\omega)-\hat{f}(\omega)|
		&\leq\left|\tilde{\hat{f}}_T(\omega)-\frac{1}{\sqrt{2\pi}}\int f_T(t)\e^{-\ri \omega t}\mathrm{d}t\right|+\left|\frac{1}{\sqrt{2\pi}}\int (f_T(t)-f(t))\e^{-\ri \omega t}\mathrm{d}t\right|\\&
		\leq\sum_{ \bt \in S_{t_0}}\left|\frac{t_0}{\sqrt{2\pi}} f_T(\bt)\e^{-\ri\omega \bt} -\frac{1}{\sqrt{2\pi}}\int_{\bt}^{\bt+t_0} f_T(t)\e^{-\ri \omega t}\mathrm{d}t\right|+(\sqrt{2\pi}-2) \frac{\delta}{\sqrt{2\pi}}\\&
		=\frac{1}{\sqrt{2\pi}}\sum_{\bt \in S_{t_0}}\left|\int_{\bt}^{\bt+t_0}f_T(\bt)\e^{-\ri\omega \bt} - f_T(t)\e^{-\ri \omega t}\mathrm{d}t\right|+(\sqrt{2\pi}-2)   \frac{\delta}{\sqrt{2\pi}}\\&
		\leq \frac{2\delta}{\sqrt{2\pi}} + (\sqrt{2\pi}-2)\frac{\delta}{\sqrt{2\pi}}\\& 
		= \delta =\min\left( \frac{\epsilon}{4W\nrm{\gamma_W}_\infty(\nrm{f}_1+1)},2\right).\label{eq:DiscDeltaDer1}
	\end{align}
	Let $\tilde{g}(\omega):=\gamma_W(\omega)\tilde{\hat{f}}_T^*(\omega-\nu)\tilde{\hat{f}}_T(\omega - \nu')$.
	Considering that $|\nu|,|\nu'|\leq K$ we get that for all $\omega\in\mathbb{R}$
	\begin{align}
		&|g(\omega)-\tilde{g}(\omega)|\\&
		\leq|\gamma_W(\omega)|\left(|\hat{f}^*(\omega-\nu)\hat{f}(\omega - \nu')-\hat{f}^*(\omega-\nu)\tilde{\hat{f}}_T(\omega - \nu')|+|\hat{f}^*(\omega-\nu)\tilde{\hat{f}}_T(\omega - \nu')-\tilde{\hat{f}}_T^*(\omega-\nu)\tilde{\hat{f}}_T(\omega - \nu')|\right)\\&
		=|\gamma_W(\omega)|\left(|\hat{f}^*(\omega-\nu)||\hat{f}(\omega - \nu')-\tilde{\hat{f}}_T(\omega - \nu')|+|\hat{f}^*(\omega-\nu)-\tilde{\hat{f}}_T^*(\omega-\nu)||\tilde{\hat{f}}_T(\omega - \nu')|\right)\\&
		\leq \delta|\gamma_W(\omega)|\left(|\hat{f}^*(\omega-\nu)|+|\tilde{\hat{f}}_T(\omega - \nu')|\right)\tag*{(by \eqref{eq:DiscDeltaDer1})}\\&
		\leq\delta|\gamma_W(\omega)|\left(|\hat{f}^*(\omega-\nu)|+|\hat{f}(\omega - \nu')|+|\tilde{\hat{f}}_T(\omega - \nu')-\hat{f}(\omega - \nu')|\right)\\&
		\leq \delta|\gamma_W(\omega)|(2\nrm{f}_1+2)\tag*{(by \eqref{eq:DiscDeltaDer1})}\\&
		\leq\frac{\epsilon}{2W}. \tag*{(by \eqref{eq:DiscDeltaDer1})}
	\end{align}
	This implies 
	\begin{align}\label{eq:RiemannDiff3}
		\left|\sum_{k\in \mathbb{Z}}\omega_0 (g(k\omega_0)-\tilde{g}(k\omega_0))\right|
		\leq \omega_0\sum_{k\in \mathbb{Z}} \left|g(k\omega_0)-\tilde{g}(k\omega_0)\right|
		\leq \sum_{\underset{k\omega_0\in [-W/2,W/2)}{k\in \mathbb{Z}}} \frac{\epsilon \omega_0}{2W}
		\leq \frac{\epsilon}{2},
	\end{align}
	showing that 
	\begin{align}
		\left|\int_{\mathbb{R}} g(\omega)\mathrm{d}\omega-\sum_{k\in \mathbb{Z}}\omega_0\tilde{g}(k\omega_0)\right|
		&=\left|\int_{\mathbb{R}} \gamma_W(\omega)\hat{f}^*(\omega-\nu)\hat{f}(\omega - \nu')\mathrm{d}\omega-\sum_{k\in \mathbb{Z}}\omega_0\gamma_W(k\omega_0)\tilde{\hat{f}}_T^*(k\omega_0-\nu)\tilde{\hat{f}}_T(k\omega_0-\nu')\right|\\&
		\leq \left|\int_{\mathbb{R}}\gamma_W(\omega)\hat{f}^*(\omega-\nu)\hat{f}(\omega - \nu')\mathrm{d}\omega-\sum_{k\in \mathbb{Z}}\omega_0\gamma_W(k\omega_0)\hat{f}^*(k\omega_0-\nu)\hat{f}(k\omega_0-\nu')\right|\\&
		+\left|\omega_0\sum_{k\in \mathbb{Z}}\gamma_W(k\omega_0)\hat{f}^*(k\omega_0-\nu)\hat{f}(k\omega_0-\nu')-\gamma_W(k\omega_0)\tilde{\hat{f}}_T^*(k\omega_0-\nu)\tilde{\hat{f}}_T(k\omega_0-\nu')\right|\\&
		\leq \epsilon. \tag*{(by \eqref{eq:RiemannDiff1} and \eqref{eq:RiemannDiff3})}
	\end{align}
	We conclude the proof by observing that due to $W\leq N\omega_0$, $T\leq N t_0$, and $\frac{\omega_0 t_0}{2\pi}=\frac{1}{N}$ we have
	\begin{align}
		\sum_{k\in \mathbb{Z}}\omega_0\tilde{g}(k\omega_0)
		&=\omega_0\sum_{\bomega\in \mathbb{Z}\omega_0}\gamma_W(\bomega)\tilde{\hat{f}}_T^*(\bomega-\nu)\tilde{\hat{f}}_T(\bomega-\nu')\\&
		=\frac{\omega_0 t_0^2}{2\pi}\sum_{\bomega\in \mathbb{Z}\omega_0}\gamma_W(\bomega)\left(\sum_{\bt\in\mathbb{Z}t_0}f_T(\bt)\e^{-\ri(\bomega-\nu)\bt}\right)^{\!\!*}\left(\sum_{\bt\in\mathbb{Z}t_0}f_T(\bt)\e^{-\ri(\bomega-\nu')\bt}\right)\\&
		=\frac{1}{N}\sum_{\bomega\in S^{\lceil N  \rfloor}_{\omega_0}}\gamma_W(\bomega)\left(\sum_{\bt\in S^{\lceil N  \rfloor}_{t_0}}\sqrt{t_0}f_T(\bt)\e^{\ri\nu\bt}\e^{-\ri\bomega\bt}\right)^{\!\!*}\left(\sum_{\bt\in S^{\lceil N  \rfloor}_{t_0}}\sqrt{t_0}f_T(\bt)\e^{\ri\nu'\bt}\e^{-\ri\bomega\bt}\right)\\&	
		=\sum_{\bomega\in S^{\lceil N  \rfloor}_{\omega_0}}\gamma_W(\bomega)\bar{\CF}\left(\sqrt{t_0}f_T(\bt) \cdot \e^{(\ri \nu \bt)}\right)^{\!\!*}\!\!(\bomega)\bar{\CF}\left(\sqrt{t_0}f_T(\bt) \cdot \e^{(\ri \nu' \bt)}\right)\!(\bomega).\tag*{\qedhere}
	\end{align}
\end{proof}
As a sanity check, the above implies that the discretized Lindbladian convergences to the continuum \textit{in the limit}.
\begin{restatable}[Discretizations converge to the continuum]{thm}{discLindToCont}\label{thm:discToCont}
	If $\gamma\in \ell_\infty(\mathbb{R})$, $f\in \ell_2(\mathbb{R})$, and $\gamma$, $f$ are continuous almost everywhere (i.e., the set of points of discontinuity has measure zero) while $f$ is bounded on every finite interval, then
	\begin{align}
		\lim_{W,T\rightarrow \infty}\lim_{N\rightarrow \infty} &\sum_{\bomega\in S^{\lceil N  \rfloor}_{\omega_0}}
		\gamma_W(\bomega) \left( \hat{\vA}_{\bar{f}_T}(\bomega)^\dagg[\cdot]\hat{\vA}_{\bar{f}_T}(\bomega) -\frac{1}{2}\{\hat{\vA}_{\bar{f}_T}(\bomega)^\dagg\hat{\vA}_{\bar{f}_T}(\bomega),\cdot \} \right)\\
		= &\int_{-\infty}^{\infty}  \gamma(\omega) \left( \hat{\vA}_{f}(\omega)^\dagg[\cdot]\hat{\vA}_{f}(\omega) -\frac{1}{2}\{\hat{\vA}_{f}(\omega)^\dagg\hat{\vA}_{f}(\omega),\cdot \} \right) \mathrm{d}\omega, \label{eq:cont_Lind}
	\end{align}
	and if $\gamma\geq 0$, then also
	\begin{align}
		\!\lim_{W,T\rightarrow \infty}\lim_{N\rightarrow \infty} &\sum_{\bomega\in S^{\lceil N  \rfloor}_{\omega_0}}\!\! \sqrt{\gamma_W(\bomega)\gamma_W(-\bomega)} \hat{\vA}_{\bar{f}_T}(\bomega)\otimes \hat{\vA}_{\bar{f}_T}(\bomega)^*- \frac{\gamma(\bomega)}{2} \L(\hat{\vA}_{\bar{f}_T}(\bomega)^\dagg\hat{\vA}_{\bar{f}_T}(\bomega)\otimes \vI + \vI \otimes \hat{\vA}_{\bar{f}_T}(\bomega)^{\dagg*}\hat{\vA}_{\bar{f}_T}(\bomega)^*\R)\!\\
		= &\int_{-\infty}^{\infty} \sqrt{\gamma(\omega)\gamma(-\omega)}  \hat{\vA}_{f}(\omega) \otimes  \hat{\vA}_{f}(\omega)^* - \frac{\gamma(\omega)}{2} \L(\hat{\vA}_{f}(\omega)^\dagg\hat{\vA}_{f}(\omega) \otimes \vI + \vI \otimes \hat{\vA}_{f}(\omega)^{\dagg*}\hat{\vA}_{f}(\omega)^*\R)\ \mathrm{d} \omega, \label{eq:cont_Disc}
	\end{align}	
	where $\omega_0=t_0=\sqrt{2\pi/N}$ and $\bar{f}_T(t):=\sqrt{t_0}f_T(t)$.
\end{restatable}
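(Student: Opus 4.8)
The plan is to reduce the operator‑valued statement to a finite family of scalar Fourier‑analytic limits, using finite‑dimensionality of the Hilbert space, and then to obtain each scalar limit by feeding the two decomposition lemmas already established into a standard Riemann‑sum plus dominated‑convergence argument. First I would record that since $\vH$ acts on finitely many qubits, its Bohr‑frequency set $B=B(\vH)$ is finite and $K:=\max_{\nu\in B}\labs{\nu}\le 2\nrm{\vH}<\infty$, and that the continuous objects on the right‑hand sides of \eqref{eq:cont_Lind}--\eqref{eq:cont_Disc} are well defined (the anticommutator/decay pieces converge in operator norm, the transition pieces as norm‑bounded limits of completely positive maps) by \autoref{prop:cont_operator_FT}.

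Applying \autoref{lem:discToContDecomp} with continuous weight $f$, discrete weight $\bar{f}_T=\sqrt{t_0}f_T$, transition weights $\gamma$ and $\gamma_W$ (and analogously $\sqrt{\gamma\,\gamma(-\cdot)}$, $\sqrt{\gamma_W\,\gamma_W(-\cdot)}$ for the transition term of $\vec{\CD}$), the difference between the discrete object and the continuous object built from the truncated weights $\gamma_W,f_T$ is bounded by a \emph{finite} sum over $\nu,\nu'\in B$ of the fixed coefficients $\vertiii{(\vA_\nu)^\dagg\star\vA_{\nu'}}$ times a scalar discretization error. Every piece of $\CL$ and $\vec{\CD}$ (transition, decay, anticommutator, and the tensor‑product terms, with $\star$ a superoperator sandwiching, a matrix product, or a tensor product) is bilinear in $\hat{\vA}_f(\omega)$ and is covered by this, the conjugations in $\vec{\CD}$ being tracked through $\hat{\vA}_f(\omega)^*=\sum_\nu \hat f(\omega-\nu)^*\vA_\nu^*$. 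So it suffices to prove, for each fixed $\nu,\nu'\in B$, convergence of the corresponding scalar bilinear sum.

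For the inner limit $N\to\infty$ at fixed $W,T$, I would invoke \autoref{lem:ContToDiscDecomp} with underlying function $f_T$ --- which lies in $\ell_1(\BR)\cap\ell_2(\BR)$, being bounded with bounded support since $f$ is bounded on every finite interval --- and truncation parameter $T$, so that its $\bar{f}_T$‑term matches ours and the condition $\nrm{f_T-(f_T)_T}_1=0$ is automatic. The remaining hypotheses \eqref{eq:DiscDeltaReq1}--\eqref{eq:DiscDeltaReq2} hold once $N$ is large: the oscillation sums tend to $0$ as $\omega_0=t_0=\sqrt{2\pi/N}\to0$ because $g(\omega)=\gamma_W(\omega)\widehat{f_T}^*(\omega-\nu)\widehat{f_T}(\omega-\nu')$ is bounded and Riemann integrable on $[-W/2,W/2]$ (as $\gamma$ is continuous almost everywhere and bounded, so $\gamma_W$ is Riemann integrable, and $f_T\in\ell_1(\BR)$ makes $\widehat{f_T}$ continuous and bounded), and likewise $f_T$ is bounded and Riemann integrable on $[-T/2,T/2]$; the other two conditions of \eqref{eq:DiscDeltaReq2} are products of bounded quantities with $t_0\to0$. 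The integer‑grid constraints $T/t_0,W/\omega_0\in\BZ$ are arranged by replacing $T,W$ with the largest grid‑compatible values below them, which converge back to $T,W$. This gives $\lim_{N\to\infty}(\text{scalar discrete sum})=\int_{\BR}\gamma_W(\omega)\widehat{f_T}^*(\omega-\nu)\widehat{f_T}(\omega-\nu')\,\mathrm d\omega$. For the outer limit $W,T\to\infty$, Plancherel gives $\widehat{f_T}\to\hat f$ in $\ell_2(\BR)$, hence $\widehat{f_T}^*(\cdot-\nu)\widehat{f_T}(\cdot-\nu')\to\hat f^*(\cdot-\nu)\hat f(\cdot-\nu')$ in $\ell_1(\BR)$ by Cauchy--Schwarz, while $\gamma_W\to\gamma$ pointwise and boundedly, so dominated convergence yields $\int\gamma_W\widehat{f_T}^*\widehat{f_T}\to\int\gamma\hat f^*\hat f$. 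Reassembling the finitely many scalar limits through the finite sum of \autoref{lem:discToContDecomp}, together with a parallel check that the continuous object built from $\gamma_W,f_T$ converges to the one built from $\gamma,f$ (again by bilinearity of the operator Fourier transform, \autoref{prop:cont_operator_FT}, and dominated convergence), proves \eqref{eq:cont_Lind} and \eqref{eq:cont_Disc}; the Parseval‑type convergence $\nrm{g}_2^2=\lim_K\lim_N\sum_{\bt\in S^{\lceil N \rfloor}_{t_0}}\labs{\sqrt{t_0}g(\bt)\indicator(\labs{\bt}\le K)}^2$ quoted in the proof of \autoref{cor:fHContPerturBound} falls out of the same machinery (the case $\nu=\nu'=0$, $\gamma\equiv1$, up to unitarity of the discrete Fourier transform).

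The main obstacle I expect is a mismatch of hypotheses: \autoref{lem:ContToDiscDecomp} is stated for $f\in\ell_1(\BR)$ (it uses $\nrm f_1$ and $\nrm{f-f_T}_1$), whereas here $f$ is only in $\ell_2(\BR)$. The workaround --- running the inner $N\to\infty$ limit through the already‑truncated, hence $\ell_1$, function $f_T$, and only afterwards letting $T\to\infty$ --- is precisely what forces the nested (rather than joint) limit structure in the statement. Beyond that, the work is careful bookkeeping: verifying Riemann integrability so that the oscillation sums in \eqref{eq:DiscDeltaReq1}--\eqref{eq:DiscDeltaReq2} vanish, handling the integer‑grid constraints, and confirming that every distinct bilinear scalar quantity arising across the transition, decay, anticommutator, and tensor‑product terms of $\CL$ and $\vec{\CD}$ is of the form to which \autoref{lem:discToContDecomp} applies.
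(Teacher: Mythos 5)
Your proposal is correct and follows essentially the same route as the paper: reduce via \autoref{lem:discToContDecomp} to finitely many scalar Fourier integrals (finiteness of $B(\vH)$), apply \autoref{lem:ContToDiscDecomp} to the compactly supported $\ell_1$ truncation $f_T$ for the inner $N\to\infty$ limit (verifying \eqref{eq:DiscDeltaReq1}--\eqref{eq:DiscDeltaReq2} through Riemann integrability), and then pass $W,T\to\infty$ by Plancherel/H\"older and dominated convergence. The one place you diverge cosmetically is that you run the outer $W,T$ limit as a single dominated-convergence step rather than the paper's two sequential limits \eqref{eq:LimIntW}--\eqref{eq:LimIntT}, which is an equally valid bookkeeping choice.
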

\begin{proof}
	Due to \autoref{lem:discToContDecomp}, it suffices to prove for all $\nu,\nu'\in B$ that 
	\begin{align}
		\lim_{W,T\rightarrow \infty}\lim_{N\rightarrow \infty} &\kern-3mm\sum_{\kern3mm\bomega\in S^{\lceil N  \rfloor}_{\omega_0}}\kern-1.5mm
		\gamma_W(\bomega) \bar{\CF}\left(\bar{f}_T(\bt) \cdot \e^{(\ri \nu \bt)}\right)^{\!\!*}\!\!(\bomega)\bar{\CF}\left(\bar{f}_T(\bt) \cdot \e^{(\ri \nu' \bt)}\right)\!(\bomega)
		= \int_{-\infty}^{\infty}  \gamma(\omega) \hat{f}^*(\omega\!-\!\nu)\hat{f}(\omega\!-\!\nu')\mathrm{d}\omega,\label{eq:intLim1}\\
		\lim_{W,T\rightarrow \infty}\lim_{N\rightarrow \infty} &\kern-3mm\sum_{\kern3mm\bomega\in S^{\lceil N  \rfloor}_{\omega_0}}
		\kern-4mm\sqrt{\gamma_W(\bomega)\gamma_W(-\bomega)} \bar{\CF}\left(\bar{f}_T(\bt) \cdot \e^{(\ri \nu \bt)}\right)\!(\bomega)\bar{\CF}\left(\bar{f}_T(\bt) \cdot \e^{(\ri \nu' \bt)}\right)^{\!\!*}\!\!(\bomega)
		=\! \int_{-\infty}^{\infty}  \kern-3mm\sqrt{\gamma(\omega)\gamma(-\omega)} \hat{f}(\omega\!-\!\nu)\hat{f}^*(\omega\!-\!\nu')\mathrm{d}\omega.\!	\label{eq:intLim2}	
	\end{align}
	Since $f\in \ell_2(\mathbb{R})$, we have $\hat{f}\in \ell_2(\mathbb{R})$ and therefore by Hölder's inequality we get that $\gamma(\omega)\hat{f}^*(\omega-\nu)\hat{f}(\omega - \nu')\in \ell_1(\mathbb{R})$, which then implies 
	\begin{align}\label{eq:LimIntW}
		\int_{-\infty}^{\infty}   \gamma(\omega)  \hat{f}^*(\omega-\nu)\hat{f}(\omega - \nu')\mathrm{d}\omega=
		\lim_{W\rightarrow \infty} \int_{-W}^{W}  \gamma(\omega)\hat{f}^*(\omega-\nu)\hat{f}(\omega - \nu')\mathrm{d}\omega.
	\end{align}
	Since $f=\lim_{T\rightarrow \infty}f_T$, Parseval's Theorem implies $\hat{f}=\lim_{T\rightarrow \infty}\hat{f}_T$ and so by Hölder's inequality we get 
	\begin{align}\label{eq:LimIntT}
		\int_{-W}^{W}  \gamma(\omega)\hat{f}^*(\omega-\nu)\hat{f}(\omega - \nu')\mathrm{d}\omega 
		=\lim_{T\rightarrow \infty} \int_{-W}^{W} \gamma(\omega)\hat{f}_T^*(\omega-\nu)\hat{f}_T(\omega - \nu')\mathrm{d}\omega .
	\end{align}
	As $f_T$ is bounded, continuous almost everywhere, and has compact support, the Lebesgue-Vitali Theorem~\cite[Theorem 11.33]{rudin1976PrinciplesMatAnal,muger2006LebesueRiemannInt} implies that it is Riemann integrable. Therefore,  $\hat{f}_T$ is bounded and continuous which similarly implies that $g(\omega):=\gamma_W(\omega)\hat{f}_T^*(\omega-\nu)\hat{f}_T(\omega - \nu')$ is Riemann integrable on $[-W,W]$.
	
	Since $f\leftarrow f_T$, $\lim_{N\rightarrow 0}\omega_0=\lim_{N\rightarrow 0}t_0=0$, and $f_T(t),g(\omega)$ are Riemann integrable (c.f.\ \cite[Theorem 8.26.]{clark2014HonorsCalculus}), for every $\epsilon>0$ \eqref{eq:DiscDeltaReq1}-\eqref{eq:DiscDeltaReq2} are satisfied for large enough $N$. Thus \autoref{lem:ContToDiscDecomp} implies that 
	\begin{align}\label{eq:RiemannConvFinal}
		\int_{-W}^{W}  \gamma(\omega)\hat{f}_T^*(\omega-\nu)\hat{f}_T(\omega - \nu')\mathrm{d}\omega
		&=\lim_{N\rightarrow \infty}\sum_{\bomega\in S^{\lceil N  \rfloor}_{\omega_0}}\gamma_W(\bomega)\bar{\CF}\left(\bar{f}_T(\bt) \cdot \e^{(\ri \nu \bt)}\right)^{\!\!*}\!\!(\bomega)\bar{\CF}\left(\bar{f}_T(\bt) \cdot \e^{(\ri \nu' \bt)}\right)\!(\bomega).
	\end{align}
	
	Equation~\eqref{eq:intLim2} can be analogously proven after replacing $\gamma(\omega)$ with $\sqrt{\gamma(\omega)\gamma(-\omega)}$ throughout the argument.
\end{proof}

We believe that a similar result can be shown for any $\gamma\in \ell_\infty(\mathbb{R})$, $f\in \ell_2(\mathbb{R})$ without the other assumptions in \autoref{thm:discToCont} by applying a further approximation with the help of mollifiers. While such an asymptotic result is conceptually elegant, for quantum algorithm implementation, we need quantitative, nonasymptotic error bounds for the particular functions we encounter.

\begin{restatable}[Discretization error of \Lword{}s and discriminant proxies]{prop}{TruncatedContToDisc}\label{prop:TruncatedContToDisc}
In the setting of \autoref{lem:ContToDiscDecomp}, assume continuity and boundedness assumption for $f,\gamma$ as in \autoref{thm:discToCont} with normalization $\nrm{\gamma}_\infty,\nrm{f}_2\leq1$. Consider a single jump operator $\vA$ with $\nrm{\vA}\leq 1$, assume that $\gamma(\omega)$ and $\sqrt{\gamma(\omega)\gamma(-\omega)}$ are $C$-Lipschitz continuous, $\hat{f}$ is $(D\cdot\lVert\hat{f}\rVert_\infty)$-Lipschitz continuous, $f_T$ is $(L\cdot\lVert f_T\rVert_\infty)$-Lipschitz continuous on $[-T/2,T/2)$, and assume the following conditions:
	\begin{align}
 \nrm{f_T-f}_2&\leq\frac{\epsilon}{64}, \quad \sum_{\bomega\in S^{\lceil N \rfloor}_{t_0}}t_0\left|f(\bt)-f_T(\bt)\right|^2\leq\left(\frac{\epsilon}{64}\right)^2,\quad \sum_{\bomega\in S^{\lceil N \rfloor}_{t_0}}t_0\left|f(\bt)\right|^2\leq1\\
		\int_{-\infty}^\infty |\hat{f}(\omega) \indicator(|\omega|\geq W - 2 \nrm{\vH})|^2\mathrm{d}\omega   &\leq \frac{\epsilon}{8\left(\frac{256 \nrm{\vH} T }{\epsilon}+1\right)^{\!\! 2}},\label{eq:ContTailWeight}
	\end{align}
	and $\nrm{f-f_T}_1\leq (\sqrt{2\pi}-2)  \delta$ for $\delta=\min\bigg( \frac{\epsilon}{32 \left(\frac{256 \nrm{\vH} T }{\epsilon}+1\right)^{\! 2} W(\nrm{f}_1+1)},2\bigg)$, and
	\begin{align}
		t_0&\leq \frac{\delta}{T \nrm{f_T}_\infty} \min\bigg(\frac{1}{W+2\nrm{\vH}},\frac{1}{L} \bigg),\\
  \omega_0&\leq\frac{\epsilon}{16}\cdot\min\left(\frac{1}{8 T}, \frac{1}{ W (C+2D)\lVert\hat{f}\rVert^2_\infty\left( \frac{256 \nrm{\vH} T }{\epsilon}+1\right)^{\!\! 2}}\right)\label{eq:BaseFreqBound}.
	\end{align}
	Then, in the notation of \autoref{cor:fHPerturBound}-\ref{cor:fHContPerturBound},
	\begin{align}
		\nrm{\CL_{(f,\vH)}-\bar{\CL}^{(W)}_{(f,\vH)}}_{1-1}\leq \epsilon,\qquad
		\nrm{\vec{\CD}_{(f,\vH)}-\bar{\vec{\CD}}^{(W)}_{(f,\vH)}}\leq \epsilon,
	\end{align}
	where the discretized \Lword{} and discriminant proxy uses $\gamma_W(\omega)$ instead of $\gamma(\omega)$.
	Moreover, if 
	\begin{align}
		\sum_{\bomega\in S^{\lceil N \rfloor}_{\omega_0}}\left|\bar{\CF}\left(\bar{f}(\bt)  \e^{(\ri \nu' \bt)}\right)\!(\bomega)\indicator(|\bomega|\geq W)\right|^2&\leq \frac{\epsilon}{8\left(\frac{256 \nrm{\vH} T }{\epsilon}+1\right)^{\!\! 2}}\quad \text{for each}\quad |\nu|\leq 2\nrm{\vH},\label{eq:DiscTailWeight}
	\end{align}	
	then 
	\begin{align}
		\nrm{\CL_{(f,\vH)}-\bar{\CL}_{(f,\vH)}}_{1-1}\leq \epsilon,\qquad
		\nrm{\vec{\CD}_{(f,\vH)}-\bar{\vec{\CD}}_{(f,\vH)}}\leq \epsilon.
	\end{align}
\end{restatable}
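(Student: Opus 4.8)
The plan is to interpolate between $\CL_{(f,\vH)}$ and its discretization through a short chain of intermediate objects, each step being either a \emph{perturbation} of the weight function or the Hamiltonian --- bounded by the perturbation estimates \autoref{cor:fHPerturBound}--\autoref{cor:fHContPerturBound} --- or a genuine \emph{continuous-to-discrete} replacement --- bounded by \autoref{lem:discToContDecomp} fed with the scalar Riemann-sum bound of \autoref{lem:ContToDiscDecomp}. First I would pass to a \emph{rounded} Hamiltonian: fix a spacing $\eta$ with $\epsilon/(64T)\le\eta\le\epsilon/(16T)$, let $\bvH$ round each eigenvalue of $\vH$ to the nearest multiple of $\eta$ so that $\nrm{\vH-\bvH}\le\eta/2$, and note that $\bvH$ has at most $\labs{B(\bvH)}\le 4\nrm{\vH}/\eta+1\le\tfrac{256\nrm{\vH}T}{\epsilon}+1$ distinct Bohr frequencies. \autoref{cor:fHContPerturBound} (with truncation time $T$) bounds $\nrm{\CL_{(f,\vH)}-\CL_{(f,\bvH)}}_{1-1}\le 8\nrm{f-f_T}_2+4T\eta=\CO(\epsilon)$, and \autoref{cor:fHPerturBound} bounds $\nrm{\bar{\CL}^{(W)}_{(f,\bvH)}-\bar{\CL}^{(W)}_{(f,\vH)}}_{1-1}$ identically (the discriminant-proxy versions of these, requiring as in \autoref{cor:fHContPerturBound} a self-adjoint jump and real $f$, hold in operator norm the same way). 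Next I would truncate $f\to f_T$, again via \autoref{cor:fHPerturBound}--\autoref{cor:fHContPerturBound} using $\nrm{f-f_T}_2\le\epsilon/64$ and its discrete analogue, and $\gamma\to\gamma_W$; the latter costs little because $\hat{\vA}_{f_T}(\omega)$ is concentrated, up to the tail of $\hat f$, in $\labs{\omega}\lesssim 2\nrm{\vH}$, so the contribution of $\labs{\omega}>W/2$ is controlled by the operator Parseval identity (\autoref{prop:cont_operator_FT}) together with the tail hypothesis \eqref{eq:ContTailWeight} (and, on the discrete side for the final ``moreover'' claim, \eqref{eq:DiscTailWeight}).

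All that remains is the single genuinely discretizing step: comparing $\CL^{(W)}_{(f_T,\bvH)}$ with its discrete Fourier version $\bar{\CL}^{(W)}_{(f_T,\bvH)}$ (and the analogous discriminant pair). Here I would apply \autoref{lem:discToContDecomp}: with $\star=\cdot$ it writes the difference of the anticommutator terms as $\sum_{\nu,\nu'\in B(\bvH)}(\vA_\nu)^\dagg\cdot\vA_{\nu'}$ weighted by the scalar discretization error of $\int\gamma_W(\omega)\hat f_T^*(\omega-\nu)\hat f_T(\omega-\nu')\,\mathrm{d}\omega$, and likewise the jump term and each of the three terms of the discriminant proxy --- the reduction underlying \autoref{lem:discToContDecomp} being purely bilinear in the two factors, it applies equally with $\star=\otimes$ and with the scalar weight $\sqrt{\gamma_W(\omega)\gamma_W(-\omega)}$ in place of $\gamma_W(\omega)$, which is why the Lipschitz hypothesis on $\sqrt{\gamma\gamma(-\cdot)}$ is needed. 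For each pair $\nu,\nu'\in B(\bvH)\subseteq[-2\nrm{\vH},2\nrm{\vH}]$ the scalar error is then bounded by \autoref{lem:ContToDiscDecomp}, whose hypotheses \eqref{eq:DiscDeltaReq1}--\eqref{eq:DiscDeltaReq2} amount to Riemann-sum estimates for the $C$-Lipschitz integrand supported on an interval of length $\CO(W)$ and for the $L$-Lipschitz, compactly supported $f_T$; these follow from the Prop's bounds on $\omega_0$ \eqref{eq:BaseFreqBound}, on $t_0$, and on $\nrm{f-f_T}_1$, yielding a per-pair error at most $\epsilon/\bigl(8(\tfrac{256\nrm{\vH}T}{\epsilon}+1)^2\bigr)$.

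The crux --- and the reason the hypotheses carry the conspicuous $(\tfrac{256\nrm{\vH}T}{\epsilon}+1)^2$ factors --- is controlling the sum over Bohr-frequency pairs: \emph{a priori} $\labs{B(\vH)}$ grows with the Hilbert-space dimension, so \autoref{lem:discToContDecomp} applied to $\vH$ directly is useless. Rounding shrinks the count to $\labs{B(\bvH)}\le\tfrac{256\nrm{\vH}T}{\epsilon}+1$, and since $\nrm{\vA}\le1$ forces $\nrm{\vA_\nu}\le\nrm{\vA}\le1$ (the $\vA_\nu$ being block-diagonal in the energy pairing) and hence $\nrm{(\vA_\nu)^\dagg\star\vA_{\nu'}}\le1$ in the relevant norm, the triangle inequality over the $\le\labs{B(\bvH)}^2$ pairs multiplied by the per-pair error contributes only $\CO(\epsilon)$. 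I expect the bookkeeping of the many smallness conditions --- checking that each of \eqref{eq:DiscDeltaReq1}--\eqref{eq:DiscDeltaReq2} and each tail estimate is implied by the stated hypotheses with the correct powers of $\nrm{\vH}T/\epsilon$ --- to be the most tedious part, but no genuinely new idea beyond the rounding trick is required.

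Finally, assembling the chain: adding the $\CO(\epsilon)$ contributions from rounding $\vH$, truncating $f$, truncating $\gamma$, and the discretizing step, then stepping back from $\bvH$ to $\vH$ on the discrete side via \autoref{cor:fHPerturBound}, gives $\nrm{\CL_{(f,\vH)}-\bar{\CL}^{(W)}_{(f,\vH)}}_{1-1}\le\epsilon$ after renaming constants; one further discrete $\gamma_W\to\gamma$ replacement, bounded using \eqref{eq:DiscTailWeight} with the discrete operator Parseval identity (\autoref{prop:properties_DFT}), upgrades this to $\nrm{\CL_{(f,\vH)}-\bar{\CL}_{(f,\vH)}}_{1-1}\le\epsilon$. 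The discriminant-proxy bounds come out of exactly the same chain with the operator norm throughout, using that a (nonunitary) block-encoding of a Lindbladian difference yields one of the corresponding discriminant difference as in \autoref{prop:discriminantBlock}, so that the perturbation steps transfer verbatim and the discretizing step is handled by the bilinear form of \autoref{lem:discToContDecomp} with $\star=\otimes$.
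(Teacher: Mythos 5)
Your proof is essentially correct and follows the same approach as the paper: round $\vH\to\bvH$ with spacing $\eta=\Theta(\epsilon/T)$ to cap the number of Bohr pairs at $|B(\bvH)|^2\le(\tfrac{256\nrm{\vH}T}{\epsilon}+1)^2$, absorb the rounding via \autoref{cor:fHPerturBound}--\autoref{cor:fHContPerturBound}, reduce the remaining continuous-vs.-discrete comparison to scalar integrals via \autoref{lem:discToContDecomp}, and bound each scalar Riemann-sum error via \autoref{lem:ContToDiscDecomp} using the Lipschitz hypotheses, with the $\gamma$-truncation handled by Hölder's inequality applied to the $\hat f$ tail. Two small remarks: (i) the paper does not separate the $\gamma\to\gamma_W$ truncation as a standalone operator-level step before discretizing, but rather folds it into a further triangle inequality \emph{inside} the scalar-integral comparison for each fixed pair $(\nu,\nu')$ after the Bohr decomposition, which is cleaner and avoids an extra triangle term in the error budget; (ii) where you invoke "operator Parseval" to control the $\gamma_W$ truncation, what is actually used is the Bohr decomposition followed by Hölder's inequality per pair $(\nu,\nu')$ and a triangle inequality over the $\le|B(\bvH)|^2$ pairs --- Parseval alone gives the full integral, not a tail, which is exactly why the tail hypothesis \eqref{eq:ContTailWeight} carries the $(\tfrac{256\nrm{\vH}T}{\epsilon}+1)^{-2}$ factor you correctly identify as the crux. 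Neither point changes the substance of the argument.
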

\begin{proof}
	Let $\bar{\vH}$ be the Hamiltonian obtained by rounding the spectrum of $\vH$ (down in absolute value) to $\mathbb{Z}\eta$; this is crucial before we invoke~\autoref{lem:discToContDecomp}.
	Our proof proceeds using the following triangle inequalities:
	\begin{align}
		\nrm{\CL_{(f,\vH)}-\tilde{\CL}_{(f,\vH)}}_{1-1}
		&\leq \nrm{\CL_{(f,\vH)}-\CL_{(f,\bar{\vH})}}_{1-1}
		+\nrm{\CL_{(f,\bar{\vH})}-\tilde{\CL}_{(f,\bar{\vH})}}_{1-1}
		+\nrm{\tilde{\CL}_{(f,\bar{\vH})}-\tilde{\CL}_{(f,\vH)}}_{1-1}\\
		\nrm{\vec{\CD}_{(f,\vH)}-\tilde{\vec{\CD}}_{(f,\vH)}}
		&\leq \nrm{\vec{\CD}_{(f,\vH)}-\vec{\CD}_{(f,\bar{\vH})}}
		+\nrm{\vec{\CD}_{(f,\bar{\vH})}-\tilde{\vec{\CD}}_{(f,\bar{\vH})}}
		+\nrm{\tilde{\vec{\CD}}_{(f,\bar{\vH})}-\tilde{\vec{\CD}}_{(f,\vH)}}
	\end{align}
	where $\tilde{\CL}_{(f,\vH)}$ stands for either $\bar{\CL}^{(W)}_{(f,\vH)}$ or $\bar{\CL}_{(f,\vH)}$ and
	$\tilde{\vec{\CD}}_{(f,\vH)}$ for either $\bar{\vec{\CD}}^{(W)}_{(f,\vH)}$ or $\bar{\vec{\CD}}_{(f,\vH)}$.
	
	Choosing $\eta:=\frac{\epsilon}{64 T }$, we get by \autoref{cor:fHContPerturBound} that
	\begin{align*}
		\nrm{\CL_{(f,\vH)}-\CL_{(f,\bar{\vH})}}_{1-1}
		&\leq \nrm{\CL_{(f,\vH)}-\CL_{(f_T,\bar{\vH})}}_{1-1}
		+ \nrm{\CL_{(f_T,\bar{\vH})}-\CL_{(f,\bar{\vH})}}_{1-1}
		&\leq 8 (T\eta + \nrm{f_T-f}_2)\leq\frac{\epsilon}{8},\\
		\nrm{\vec{\CD}_{(f,\vH)}-\vec{\CD}_{(f,\bar{\vH})}}
		&\leq \nrm{\vec{\CD}_{(f,\vH)}-\vec{\CD}_{(f_T,\bar{\vH})}}
		+ \nrm{\vec{\CD}_{(f_T,\bar{\vH})}-\vec{\CD}_{(f,\bar{\vH})}}
		&\leq 8 (T\eta + \nrm{f_T-f}_2)\leq\frac{\epsilon}{8},
	\end{align*}
	and similarly by \autoref{cor:fHPerturBound} that 
	\begin{align*}
		\nrm{\tilde{\CL}_{(f,\vH)}-\tilde{\CL}_{(f,\bar{\vH})}}_{1-1}
		&\leq \nrm{\tilde{\CL}_{(f,\vH)}-\tilde{\CL}_{(f_T,\bar{\vH})}}_{1-1}
		+ \nrm{\tilde{\CL}_{(f_T,\bar{\vH})}-\tilde{\CL}_{(f,\bar{\vH})}}_{1-1}
		\leq \frac{\epsilon}{8},\\
		\nrm{\tilde{\vec{\CD}}_{(f,\vH)}-\tilde{\vec{\CD}}_{(f,\bar{\vH})}}
		&\leq \nrm{\tilde{\vec{\CD}}_{(f,\vH)}-\tilde{\vec{\CD}}_{(f_T,\bar{\vH})}}
		+ \nrm{\tilde{\vec{\CD}}_{(f_T,\bar{\vH})}-\tilde{\vec{\CD}}_{(f,\bar{\vH})}}
		\leq \frac{\epsilon}{8}.
	\end{align*}
	
	Therefore, it suffices to show that $\nrm{\CL_{(f,\bar{\vH})}-\tilde{\CL}_{(f,\bar{\vH})}}_{1-1}$, $\nrm{\vec{\CD}_{(f,\bar{\vH})}-\tilde{\vec{\CD}}_{(f,\bar{\vH})}}\leq\frac{3\epsilon}{4}$. Let $\tilde{\gamma}$ be either $\gamma$ or $\gamma_W$ matching the definition of $\tilde{\CL}$, and $\tilde{\vec{\CD}}$. We apply \autoref{lem:discToContDecomp} with $\star=\cdot\vrho\cdot$ and $\star=\cdot$, showing that the difference between the discrete and continuous generators $\nrm{\CL_{(f,\bar{\vH})}-\tilde{\CL}_{(f,\bar{\vH})}}_{1-1}$, $\nrm{\vec{\CD}_{(f,\bar{\vH})}-\tilde{\vec{\CD}}_{(f,\bar{\vH})}}$ can be bounded by 
	\begin{align}\label{eqn:UnifTarget}
		\kern-6mm 2\left(\frac{256 \nrm{\vH} T }{\epsilon}+1\right)^{\!\!2}\left|\int_{\!-\infty}^{\infty}
		\gamma(\omega)  \hat{f}^*(\omega\!-\!\nu)\hat{f}(\omega\!-\!\nu')\mathrm{d}\omega 
		-\sum_{\bomega\in S^{\lceil N \rfloor}_{\omega_0}}\! \tilde{\gamma}(\bomega) \bar{\CF}\left(\bar{f}_T(\bt) \cdot \e^{(\ri \nu \bt)}\right)^{\!\!*}\!\!(\bomega)\bar{\CF}\left(\bar{f}_T(\bt) \cdot \e^{(\ri \nu' \bt)}\right)\!(\bomega)\right|,
	\end{align}
	since the number of Bohr frequencies for the discretized Hamiltonian $\bar{\vH}$ satisfies $\labs{B(\bvH)}\leq\frac{4\nrm{\bar{\vH}}}{\eta}+1\leq\frac{4\nrm{\vH}}{\eta}+1=\frac{256 \nrm{\vH} T }{\epsilon}+1$. 
	
	We further bound $\nrm{\CL_{(f,\bar{\vH})}-\tilde{\CL}_{(f,\bar{\vH})}}_{1-1}$ in three steps according to the following triangle inequality:
	\begin{align}
		&\left|\int_{\!-\infty}^{\infty}   
		\gamma(\omega)  \hat{f}^*(\omega\!-\!\nu)\hat{f}(\omega\!-\!\nu')\mathrm{d}\omega
		-\sum_{\bomega\in S^{\lceil N \rfloor}_{\omega_0}}\! \tilde{\gamma}(\bomega) \bar{\CF}\left(\bar{f}(\bt)  \e^{(\ri \nu \bt)}\right)^{\!\!*}\!\!(\bomega)\bar{\CF}\left(\bar{f}(\bt)  \e^{(\ri \nu' \bt)}\right)\!(\bomega)\right|\\&\		
		\leq \left|\int_{\!-\infty}^{\infty} \! \gamma(\omega)  \hat{f}^*(\omega\!-\!\nu)\hat{f}(\omega\!-\!\nu')\mathrm{d}\omega 
		-\int_{\!-\infty}^{\infty} 
		\gamma_W(\omega)\hat{f}^*(\omega\!-\!\nu)\hat{f}(\omega\!-\!\nu')\mathrm{d}\omega\right|  \label{eqn:UnifTarget1}\\&
		+ \left|\int_{\!-\infty}^{\infty} 
		\gamma_W(\omega)\hat{f}^*(\omega\!-\!\nu)\hat{f}(\omega\!-\!\nu')\mathrm{d}\omega
		-\!\!\!\!\sum_{\bomega\in S^{\lceil N \rfloor}_{\omega_0}}\! \gamma_W(\bomega) \bar{\CF}\left(\bar{f}(\bt)  \e^{(\ri \nu \bt)}\right)^{\!\!*}\!\!(\bomega)\bar{\CF}\left(\bar{f}(\bt)  \e^{(\ri \nu' \bt)}\right)\!(\bomega)\right|.\label{eqn:UnifTarget2}\\&
		+ \left|\sum_{\bomega\in S^{\lceil N \rfloor}_{\omega_0}}\! (\gamma_W(\bomega)-\tilde{\gamma}(\bomega)) \bar{\CF}\left(\bar{f}(\bt)  \e^{(\ri \nu \bt)}\right)^{\!\!*}\!\!(\bomega)\bar{\CF}\left(\bar{f}(\bt)  \e^{(\ri \nu' \bt)}\right)\!(\bomega)\right|.\label{eqn:UnifTarget3}
	\end{align}	
	Considering that $|\nu|,|\nu'|\leq 2\nrm{\vH}$ and using Hölder's inequality, we can see that truncation at threshold $W$ introduces error $\leq \frac{\epsilon}{8 \big(\frac{256 \nrm{\vH} T }{\epsilon}+1\big)^{\! 2}}\leq \frac{\epsilon}{8 |B(\bvH)|^2}$ in \eqref{eqn:UnifTarget1}, inducing no more than $\frac{\epsilon}{4}$ error in \eqref{eqn:UnifTarget}. 
	If $\tilde{\gamma}:=\gamma$, and \eqref{eq:DiscTailWeight} holds, then the analogous argument shows the same bound for \eqref{eqn:UnifTarget3}.
	
	We complete our proof by showing that \eqref{eqn:UnifTarget2} is bounded by $\frac{\epsilon}{8 \big(\frac{256 \nrm{\vH} T }{\epsilon}+1\big)^{\! 2}}$, by applying \autoref{lem:ContToDiscDecomp}, i.e., showing that \eqref{eq:DiscDeltaReq1}-\eqref{eq:DiscDeltaReq2} are fulfilled.
	In \autoref{lem:ContToDiscDecomp} we set $K=2\nrm{\vH}$ and our accuracy goal $\epsilon\leftarrow\frac{\epsilon}{8 \big(\frac{256 \nrm{\vH} T }{\epsilon}+1\big)^{\! 2}}$.
	To bound \eqref{eq:DiscDeltaReq1} we observe that 
	\begin{align}
		g(\omega):=\gamma_W(\omega)\hat{f}^*(\omega-\nu)\hat{f}(\omega - \nu')
	\end{align}
	is $(C+2D)\lVert\hat{f}\rVert^2_\infty$-Lipschitz continuous:
	\begin{align*}
		\left|g(\omega+\delta)-g(\omega)\right|\leq& 
		\left|\gamma_W(\omega+\delta)\hat{f}^*(\omega+\delta-\nu)\hat{f}_T(\omega+\delta - \nu')-\gamma_W(\omega)\hat{f}^*(\omega+\delta-\nu)\hat{f}_T(\omega+\delta - \nu')\right|\\&
		+\left|\gamma_W(\omega)\hat{f}^*(\omega+\delta-\nu)\hat{f}_T(\omega+\delta - \nu')-\gamma_W(\omega)\hat{f}^*(\omega-\nu)\hat{f}_T(\omega+\delta - \nu')\right|\\&
		+\left|\gamma_W(\omega)\hat{f}^*(\omega-\nu)\hat{f}_T(\omega+\delta - \nu')-\gamma_W(\omega)\hat{f}^*(\omega-\nu)\hat{f}_T(\omega - \nu')\right|\\
		\leq& (C+2D)\lVert\hat{f}\rVert^2_\infty\cdot\delta,
	\end{align*}
	therefore by \eqref{eq:BaseFreqBound} we can upper bound the left-hand side of \eqref{eq:DiscDeltaReq1} as 
	\begin{align}
		W\omega_0 (C+2D)\lVert\hat{f}\rVert^2_\infty\leq \frac{\epsilon}{16 \big(\frac{256 \nrm{\vH} T }{\epsilon}+1\big)^{\! 2}}.
	\end{align}
	Now observe that due to the $(L\cdot\lVert f_T\rVert_\infty)$-Lipschitz continuity of $f_T$ we have
	\begin{align}
		\sum_{\bt \in S^{\lceil N  \rfloor}_{t_0}}t_0 \nrm{f_T(t)-f_T(\bt)}_{[\bt,\bt+t_0)}\leq \sum_{\bt \in S^{\lceil N  \rfloor}_{t_0}\cap  [-T/2,T/2)}t_0^2(L\cdot\lVert f_T\rVert_\infty)
		=Tt_0(L\cdot\lVert f_T\rVert_\infty)\leq \delta.
	\end{align}

	Repeating the above argument replacing $\gamma(\omega)$ by $\sqrt{\gamma(\omega)\gamma(-\omega)}$ proves the same bound for $\nrm{\vec{\CD}_{(f,\bar{\vH})}-\tilde{\vec{\CD}}_{(f,\bar{\vH})}}$.
\end{proof}

\begin{lem}\label{lem:gammaLipschitz}
	Both functions $\gamma_G(\omega):=(\e^{4\beta \omega}+1)^{-1}$ and $\gamma_M(\omega):=\min(1, \e^{-\beta \omega})$ are $\beta$-Lipschitz continuous for all $\beta\geq 0$. Moreover, $\sqrt{\gamma_G(\omega)\gamma_G(-\omega)}$ and $\sqrt{\gamma_M(\omega)\gamma_M(-\omega)}$ are $\frac{\beta}{2}$-Lipschitz continuous for all $\beta\geq 0$.
\end{lem}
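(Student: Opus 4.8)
The plan is to prove each Lipschitz bound by elementary calculus, computing derivatives and bounding them uniformly. For $\gamma_G(\omega) = (\e^{4\beta\omega}+1)^{-1}$, I would differentiate: $\gamma_G'(\omega) = -4\beta\e^{4\beta\omega}/(\e^{4\beta\omega}+1)^2$. Writing $x = \e^{4\beta\omega} > 0$, the magnitude is $4\beta \cdot x/(x+1)^2$, and since $x/(x+1)^2 \le 1/4$ for all $x>0$ (AM-GM, with equality at $x=1$), we get $|\gamma_G'(\omega)| \le \beta$, giving $\beta$-Lipschitz continuity. For $\gamma_M(\omega) = \min(1,\e^{-\beta\omega})$, the function equals $1$ for $\omega \le 0$ and $\e^{-\beta\omega}$ for $\omega \ge 0$; it is continuous (matching at $\omega=0$), piecewise differentiable, and the derivative is $0$ on the left and $-\beta\e^{-\beta\omega}$ on the right, which has magnitude at most $\beta$; hence $\beta$-Lipschitz.

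For the symmetrized quantities, I would first simplify $\sqrt{\gamma(\omega)\gamma(-\omega)}$ explicitly. For Metropolis, $\gamma_M(\omega)\gamma_M(-\omega) = \min(1,\e^{-\beta\omega})\cdot\min(1,\e^{\beta\omega}) = \e^{-\beta|\omega|}$, so $\sqrt{\gamma_M(\omega)\gamma_M(-\omega)} = \e^{-\beta|\omega|/2}$. This is $\frac{\beta}{2}$-Lipschitz since away from $0$ its derivative has magnitude $\frac{\beta}{2}\e^{-\beta|\omega|/2} \le \frac{\beta}{2}$, and it is continuous at $0$. For Glauber, $\gamma_G(\omega)\gamma_G(-\omega) = \frac{1}{(\e^{4\beta\omega}+1)(\e^{-4\beta\omega}+1)} = \frac{1}{2 + \e^{4\beta\omega}+\e^{-4\beta\omega}} = \frac{1}{2+2\cosh(4\beta\omega)} = \frac{1}{4\cosh^2(2\beta\omega)}$, using $1+\cosh(2u) = 2\cosh^2 u$. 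Hence $\sqrt{\gamma_G(\omega)\gamma_G(-\omega)} = \frac{1}{2\cosh(2\beta\omega)}$. Differentiating: the magnitude of the derivative is $\frac{2\beta\sinh(2\beta\omega)}{2\cosh^2(2\beta\omega)} = \beta\cdot\frac{|\tanh(2\beta\omega)|}{\cosh(2\beta\omega)} \le \beta \cdot \frac{1}{1} \cdot \frac{1}{2}$? Let me be careful: $|\tanh(2\beta\omega)| \le 1$ and $\frac{1}{\cosh(2\beta\omega)} \le 1$, but I need the product bounded by $\frac12$. Actually $\frac{\sinh u}{\cosh^2 u}$ attains its maximum where $\frac{d}{du}(\sinh u/\cosh^2 u) = 0$, i.e. $\cosh^2 u - 2\sinh^2 u = 0$, giving $\cosh^2 u = 2$, so the max value is $\frac{\sqrt{\cosh^2 u - 1}}{\cosh^2 u} = \frac{1}{\sqrt 2 \cdot 2} = \frac{1}{2\sqrt 2} < \frac12$. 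So $|\frac{d}{d\omega}\frac{1}{2\cosh(2\beta\omega)}| = \beta\cdot\frac{\sinh(2\beta|\omega|)}{\cosh^2(2\beta\omega)} \le \frac{\beta}{2\sqrt2} < \frac{\beta}{2}$, which is even stronger than claimed.

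I would organize the write-up as: (i) a short lemma that a continuous, piecewise-$C^1$ function whose derivative is bounded by $M$ wherever it exists is $M$-Lipschitz (this handles the kink in $\gamma_M$ and in $\e^{-\beta|\omega|/2}$ cleanly via the mean value theorem applied on each piece plus the triangle inequality across the kink); (ii) the four derivative computations above; (iii) the hyperbolic identities $\gamma_M(\omega)\gamma_M(-\omega) = \e^{-\beta|\omega|}$ and $\gamma_G(\omega)\gamma_G(-\omega) = (4\cosh^2(2\beta\omega))^{-1}$. The main obstacle — really the only place requiring slight care — is the non-smoothness at $\omega = 0$ of $\gamma_M$ and of the Metropolis symmetrized weight; one must verify continuity at the junction and argue that the one-sided Lipschitz bounds combine to a global bound, rather than naively invoking a derivative bound on a function that fails to be differentiable there. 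Everything else is a one-line bound on an explicit elementary function. A clean way to dispatch the kink issue uniformly is to note that for any $\omega_1 < \omega_2$, if $0 \notin (\omega_1,\omega_2)$ we apply MVT directly, and if $\omega_1 < 0 < \omega_2$ we write $|\gamma(\omega_2) - \gamma(\omega_1)| \le |\gamma(\omega_2) - \gamma(0)| + |\gamma(0) - \gamma(\omega_1)| \le M\omega_2 + M(-\omega_1) = M(\omega_2 - \omega_1)$.
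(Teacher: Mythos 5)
Your proposal is correct and follows the same elementary-calculus approach as the paper's (very terse) proof: bound the derivative on each smooth piece and handle the kink at $\omega=0$ by splitting the interval, exactly as you do in step (i). Your explicit closed forms $\gamma_M(\omega)\gamma_M(-\omega)=\e^{-\beta|\omega|}$ and $\gamma_G(\omega)\gamma_G(-\omega)=\bigl(4\cosh^2(2\beta\omega)\bigr)^{-1}$ are the cleanest way to carry out the ``similar elementary calculation'' the paper alludes to.

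One arithmetic slip: the maximum of $\left|\frac{\sinh u}{\cosh^2 u}\right|$ is $\frac12$, not $\frac{1}{2\sqrt2}$. At the critical point $\cosh^2 u=2$ one has $|\sinh u|=\sqrt{\cosh^2 u-1}=1$, so the value is $\frac{1}{2}$, not $\frac{1}{2\cdot\sqrt2}$. Consequently the Lipschitz constant of $\sqrt{\gamma_G(\omega)\gamma_G(-\omega)}=\frac{1}{2\cosh(2\beta\omega)}$ is exactly $\frac{\beta}{2}$ (the bound is attained), not strictly smaller as you claim. This does not affect the validity of the conclusion — the lemma asserts $\frac{\beta}{2}$-Lipschitz continuity, which still holds — but the remark ``even stronger than claimed'' should be dropped.
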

\begin{proof}
	A simple calculation shows that the absolute value of the derivative of $(\e^{4\beta \omega}+1)^{-1}$ is largest at $0$, where it is $\beta$, therefore it is $\beta$-Lipschitz continuous. Similarly, since $\min(1, \e^{-\beta \omega})$ is continuous, and the absolute value of the (right) derivative of $\min(1, \e^{-\beta \omega})$ is bounded by $\beta$ for every $\omega\in\mathbb{R}$, it is also $\beta$-Lipschitz continuous. Similar elementary calculation shows the $\frac{\beta}{2}$-Lipschitz continuity of $\sqrt{\gamma_G(\omega)\gamma_G(-\omega)}$ and $\sqrt{\gamma_M(\omega)\gamma_M(-\omega)}$.
\end{proof}

Note that since we apply our generic bound~\autoref{prop:TruncatedContToDisc} to the following nice functions, we get rather loose estimates of $N$, which are certainly off by polynomial factors from the tight values. However, since algorithmically, we only pay (poly-)logarithmic cost in $N$, the looseness of our bounds probably only results in constant overheads.

\begin{restatable}[Discretization error of ``finitie-time'' Davies generators]{cor}{UnifContToDisc}\label{cor:UnifContToDisc}
	Let $f(t)=\sqrt{\frac{1}{T}}\indicator(t\in [-T/2,T/2))$ and $\epsilon>0$. Assuming the normalization condition \eqref{eq:AAdagger}, if $\nrm{\gamma}_\infty\leq1$ and $\sqrt{\gamma(\omega)\gamma(-\omega)}$ are $\beta$-Lipschitz continuous (e.g., Metropolis or Glauber), $\nrm{\gamma}_\infty\leq1$,  then setting $N=2^n$ for $n=\bigTh{\log\left(\frac{2|A|(\nrm{\vH}+1)(T+1)(\beta+1)}{\epsilon}\right)}$ with appropriate constants ensures that
	\begin{align}
		\nrm{\CL_{(f,\vH)}-\bar{\CL}_{(f,\vH)}}_{1-1}\leq \epsilon,\qquad
		\nrm{\vec{\CD}_{(f,\vH)}-\bar{\vec{\CD}}_{(f,\vH)}}\leq \epsilon.
	\end{align}
	where $\omega_0=\frac{2}{T}\sqrt{\frac{2\pi}{(C+1)N}}$, $t_0=\frac{T}{2}\sqrt{\frac{2\pi(C+1)}{N}}$. 
\end{restatable}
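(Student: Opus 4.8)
The plan is to instantiate the generic discretization-error bound \autoref{prop:TruncatedContToDisc} (plus its convergence counterpart if we want to cover the unfiltered version $\bar{\CL}_{(f,\vH)}$ as opposed to $\bar{\CL}^{(W)}_{(f,\vH)}$) to the concrete "finite-time Davies" weight $f(t)=\sqrt{1/T}\,\indicator(t\in[-T/2,T/2))$, and then bookkeep all the resulting parameter constraints to extract a single clean condition on $N$. First I would observe that for this particular $f$ we have $f_T=f$ exactly (once $T/t_0\in\BZ$), so the terms $\nrm{f-f_T}_2$, $\nrm{f-f_T}_1$, and $\sum_{\bt}t_0|f(\bt)-f_T(\bt)|^2$ all vanish identically; also $f$ is its own normalized version, $\nrm{f}_2=1$, and $\sum_{\bt\in S_{t_0}}t_0|f(\bt)|^2=1$ by the grid choice. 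This kills a handful of the hypotheses of \autoref{prop:TruncatedContToDisc} outright.

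Next I would compute the Lipschitz constants that feed into that proposition. The Fourier transform $\hat f$ of the box function is a (discretized) sinc, whose derivative is bounded in terms of $T$, giving a bound of the form $\lVert\hat f\rVert_\infty=\bigO{\sqrt T}$ and $D=\bigO{T}$ (so $D\lVert\hat f\rVert_\infty=\bigO{T^{3/2}}$); and $f$ is piecewise constant, so on the open interval $(-T/2,T/2)$ it is $0$-Lipschitz, i.e.\ $L=0$, which trivially satisfies the $t_0\le \delta/(T\nrm{f_T}_\infty L)$ constraint (it becomes vacuous). For the weight I invoke \autoref{lem:gammaLipschitz}: Metropolis and Glauber are $\beta$-Lipschitz and their geometric means are $\tfrac\beta2$-Lipschitz, so $C=\bigO{\beta}$. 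The truncation-tail requirements \eqref{eq:ContTailWeight} and \eqref{eq:DiscTailWeight} are bounds on the $\ell_2$ mass of $\hat f$ (resp.\ the discrete FT of $\bar f\e^{\ri\nu\bt}$) outside $|\omega|\ge W-2\nrm{\vH}$; since the sinc tail decays like $1/\omega^2$ in $\ell_2$-square (this is exactly the estimate already recorded in \autoref{prop:uniformTail}, $\sum_{|\bomega|>m\omega_0}|\hat f(\bomega)|^2\le \pi/(2m\omega_0T)$), choosing $W=\poly(\nrm{\vH},\beta,T,\epsilon^{-1})$ suffices, and in particular $W=\tilde\bigO{\cdot}$ contributes only logarithmically to $n=\log N$. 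Then I read off from \eqref{eq:BaseFreqBound} and the other $t_0,\omega_0$ constraints the smallest $N$ (via $\omega_0 t_0=2\pi/N$ with the stated split $\omega_0=\tfrac2T\sqrt{2\pi/((C+1)N)}$, $t_0=\tfrac T2\sqrt{2\pi(C+1)/N}$) that makes every inequality hold; collecting the polynomial factors $|A|$, $\nrm{\vH}+1$, $T+1$, $\beta+1$ and the accuracy $\epsilon^{-1}$ inside the logarithm gives $n=\bigTh{\log\!\big(\tfrac{2|A|(\nrm{\vH}+1)(T+1)(\beta+1)}{\epsilon}\big)}$. The factor $|A|$ enters because \autoref{prop:TruncatedContToDisc} is stated for a single jump operator of norm $\le1$, so I would apply it to each normalized jump $\sqrt{|A|}\vA^a$ individually and sum over $a\in A$ with a triangle inequality in the $1\!-\!1$ (resp.\ operator) norm, picking up a factor $|A|$ that must be absorbed by rescaling the target accuracy to $\epsilon/|A|$ before invoking the single-jump bound.

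The main obstacle I anticipate is precisely this careful propagation of the many intertwined parameter constraints in \autoref{prop:TruncatedContToDisc}: the hypotheses couple $t_0$, $\omega_0$, $W$, $\delta$, and the Lipschitz data $C,D,L$ in a nested way (e.g.\ $\delta$ depends on $W$, $W$ appears in the tail bound, the tail bound feeds the factor $\big(\tfrac{256\nrm{\vH}T}{\epsilon}+1\big)^2$, which then reappears in the $\delta$ definition), so one has to choose the free parameters in the right order — first $W$ large enough for the tails, then $\eta$/the rounding scale, then $t_0,\omega_0$ — and verify that the stated $\omega_0,t_0$ in terms of $N$ are simultaneously small enough for \emph{all} of them for $N$ of the claimed size. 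The sinc tail estimate (already available as \autoref{prop:uniformTail}) is what makes \eqref{eq:ContTailWeight}/\eqref{eq:DiscTailWeight} affordable with only logarithmic $n$; without the $\ell_2$ (rather than $\ell_1$) formulation this would fail, so it is worth emphasizing that in the writeup. Everything else — the vanishing of the $f_T$-discrepancy terms, the $L=0$ observation, the Lipschitz constants, the per-jump summation — is routine, and the proof is essentially a one-paragraph application of \autoref{prop:TruncatedContToDisc} together with \autoref{lem:gammaLipschitz} and the sinc bounds, followed by arithmetic to collapse the parameters into the stated logarithmic $n$.
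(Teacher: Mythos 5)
Your proposal is correct and follows essentially the same route as the paper: apply \autoref{prop:TruncatedContToDisc} per jump operator with $\epsilon'\leftarrow\epsilon/|A|$, note $f_T=f$ so all the $f$--$f_T$ discrepancy terms vanish, read off $L=0$ and $D=\bigO{T}$ from the sinc profile, use the $\ell_2$ sinc-tail bound of \autoref{prop:uniformTail} to satisfy \eqref{eq:ContTailWeight}/\eqref{eq:DiscTailWeight} with polynomially large $W$, and collect all constraints into a polynomial bound on $N$ hence logarithmic $n$. One small slip: you do not need to rescale to $\sqrt{|A|}\vA^a$ before invoking the single-jump proposition — the normalization condition \eqref{eq:AAdagger} already guarantees $\nrm{\vA^a}\le 1$ for each $a$, so the paper applies the proposition to each $\vA^a$ directly and sums by triangle inequality, exactly as you otherwise describe.
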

\begin{proof}
	We use \autoref{prop:TruncatedContToDisc} for each $\vA^{a}$ setting $\epsilon'\leftarrow\frac{\epsilon}{|A|}$ to prove the claim. First observe that $f_T=f$, and $\nrm{f}_2=1=\sum_{\bomega\in S^{\lceil N \rfloor}_{t_0}}t_0\left|f(\bt)\right|^2=1$.
	As $\hat{f}(w)= \sinc(T\omega/2)$, and $|\sinc(T\omega/2)|\leq 1$, and $|\frac{\omega}{\mathrm{d}\omega}\sinc(T\omega/2)|\leq T/4$ we get that $D \leq\frac{T}{4}$ and trivially $L=0$.
	
	Let us recall the tail bound on the discrete Fourier Transform from \autoref{prop:uniformTail}
	\begin{align}\label{eq:dicsPhError}
		\sum_{\labs{\bomega} > m \omega_0} \labs{\bar{\CF}\left(\bar{f}(\bt) \cdot \e^{(\ri \nu \bt)}\right)\!(\bomega-\nu)}^2\le\frac{ \pi}{Tm\omega_0}.
	\end{align}
	Similar bound holds for the continuous Fourier Transform $\hat{f}(\omega)=\sqrt{\frac{T}{2\pi}}\sinc(T\omega/2)$ of the uniform weight function:
	\begin{align}\label{eq:UnifTailDisc}
		\int_{-\infty}^\infty |\hat{f}(\omega) \indicator(|\omega|\geq W)|^2\mathrm{d}\omega  \leq \frac{\pi}{TW}.
	\end{align}
	so choosing $W:=\bigTh{\left(\frac{(\nrm{\vH}+1)(T+1)}{\epsilon'}\right)^{\!3}}\geq 2 \nrm{\vH}+8\pi\frac{T}{\epsilon'}\left(\frac{256 \nrm{\vH} T }{\epsilon'}+1\right)^{\!\! 2}$ ensures the necessary tail bounds. Therefore it is easy to see that setting $N=\bigTh{\left(|A|\frac{(\nrm{\vH}+1)(T+1)(\beta+1)}{\epsilon}\right)^{\!\!\bigO1}}$ satisfies all requirements of \autoref{prop:TruncatedContToDisc}.
\end{proof}

\begin{restatable}[Discretization error of Gaussian \Lword{}s and discriminant proxies]{cor}{GaussContToDisc}\label{cor:GaussContToDisc}
	Let $f(t)=\frac{1}{ \sqrt{\sigma\sqrt{2 \pi }}}e^{-\frac{t^2}{4 \sigma^2}}$ and $\epsilon>0$. Assuming the normalization condition \eqref{eq:AAdagger}, if $\nrm{\gamma}_\infty\leq1$ and $\sqrt{\gamma(\omega)\gamma(-\omega)}$ are $\beta$-Lipschitz continuous (e.g., Metropolis or Glauber), , then setting $N=2^n$ for $n=\bigTh{\log\left(\frac{2|A|(\nrm{\vH}+1)(\sigma+1/\sigma)(\beta+1)}{\epsilon}\right)}$ with appropriate constants ensures 
	\begin{align}
		\nrm{\CL_{(f,\vH)}-\bar{\CL}_{(f,\vH)}}_{1-1}\leq \epsilon,\qquad
		\nrm{\vec{\CD}_{(f,\vH)}-\bar{\vec{\CD}}_{(f,\vH)}}\leq \epsilon,
	\end{align}
	where $\omega_0=\frac{1}{\sigma}\sqrt{\frac{2\pi}{N}}$, $t_0=\sigma\sqrt{\frac{2\pi}{N}}$. 
\end{restatable}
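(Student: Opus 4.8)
# Proof Proposal for Corollary \ref{cor:GaussContToDisc}

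The plan is to invoke the general discretization bound of \autoref{prop:TruncatedContToDisc} specialized to the Gaussian weight function $f(t)=(\sigma\sqrt{2\pi})^{-1/2}\e^{-t^2/4\sigma^2}$, applied separately to each jump operator $\vA^a$ with target accuracy $\epsilon'\leftarrow \epsilon/|A|$, and then to verify each of the hypotheses of that proposition using the known good concentration properties of Gaussians in both the time and frequency domains. The final bound follows by summing over $a\in A$ via the triangle inequality, exactly as in the proof of \autoref{cor:UnifContToDisc}.

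First I would record the elementary analytic facts about the Gaussian ansatz that feed into \autoref{prop:TruncatedContToDisc}. The Fourier Transform is $\hat f(\omega)=\sqrt{\sigma\sqrt{2\pi}}\,\e^{-\omega^2\sigma^2}$ (computed in \autoref{sec:Gaussian_ansatz}), so $\lVert\hat f\rVert_\infty=\sqrt{\sigma\sqrt{2\pi}}$ and $\hat f$ is Lipschitz with constant $\bigO{\sigma^2}\cdot\lVert\hat f\rVert_\infty$, giving $D=\bigO\sigma$; similarly the truncation $f_T$ with $T=\bigTh{\sigma\sqrt{\log(1/\epsilon')}}$ is Lipschitz on $[-T/2,T/2)$ with $L\cdot\lVert f_T\rVert_\infty$ where $L=\bigO{1/\sigma}$. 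The Lipschitz continuity of $\gamma$ and of $\sqrt{\gamma(\omega)\gamma(-\omega)}$ is supplied for the Metropolis and Glauber choices by \autoref{lem:gammaLipschitz}, giving $C=\bigO\beta$. Next I would verify the tail conditions \eqref{eq:ContTailWeight} and \eqref{eq:DiscTailWeight}: for the continuous Gaussian the tail integral $\int_{|\omega|\geq W-2\nrm{\vH}}|\hat f(\omega)|^2\mathrm d\omega$ is exponentially small in $W^2\sigma^2$, so choosing $W=\bigTh{\frac1\sigma\log\!\big(\tfrac{(\nrm{\vH}+1)(\sigma+1/\sigma)}{\epsilon'}\big)}$ suffices, and for the discrete side \autoref{prop:Gaussian_tail} (together with \autoref{prop:DFT of Gaussian}) controls $\sum_{|\bomega|\geq W}|\bar\CF(\bar f(\bt)\e^{\ri\nu\bt})(\bomega)|^2$ by the same kind of Gaussian tail plus the periodic-summation wrap-around error, which is exponentially small once $(N\omega_0)^2\sigma^2$ and $(Nt_0)^2/\sigma^2$ both exceed $\Omega(\log(1/\epsilon'))$. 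Finally I would check the Riemann-sum / base-frequency conditions \eqref{eq:BaseFreqBound} and the $t_0$-smallness conditions: with $\omega_0=\frac1\sigma\sqrt{2\pi/N}$ and $t_0=\sigma\sqrt{2\pi/N}$, the products $W\omega_0(C+2D)\lVert\hat f\rVert_\infty^2$ and $Tt_0 L\lVert f_T\rVert_\infty$ and $Tt_0(W+2\nrm{\vH})\nrm{f_T}_\infty$ all scale like $\poly(\nrm{\vH},\sigma,1/\sigma,\beta,\log(1/\epsilon'))/\sqrt N$, so they are driven below the required thresholds (all of which are at least $\epsilon'/\poly(\cdot)$, up to the $(\tfrac{256\nrm{\vH}T}{\epsilon'}+1)^2$ factors) once $N$ is polynomially large in those quantities and in $1/\epsilon'$; since we only pay $\log N$, taking $n=\bigTh{\log\!\big(\tfrac{2|A|(\nrm{\vH}+1)(\sigma+1/\sigma)(\beta+1)}{\epsilon}\big)}$ with a large enough constant absorbs all the polynomial slack.

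With all hypotheses of \autoref{prop:TruncatedContToDisc} met for each $a$ at accuracy $\epsilon/|A|$, the proposition gives $\nrm{\CL_{(f,\vH)}^{(a)}-\bar\CL_{(f,\vH)}^{(a)}}_{1-1}\leq\epsilon/|A|$ and likewise for the discriminant proxy (here the normalization assumption \eqref{eq:AAdagger} is what lets us pass from the per-jump bound to the aggregate bound without picking up an extra $|A|$), and summing over $a\in A$ by the triangle inequality yields $\nrm{\CL_{(f,\vH)}-\bar\CL_{(f,\vH)}}_{1-1}\leq\epsilon$ and $\nrm{\vec\CD_{(f,\vH)}-\bar{\vec\CD}_{(f,\vH)}}\leq\epsilon$, with the stated choices of $\omega_0$ and $t_0$. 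The main obstacle I anticipate is purely bookkeeping rather than conceptual: carefully threading the several interlocking parameter constraints in \autoref{prop:TruncatedContToDisc} — in particular the ones involving the awkward amplification factor $(\tfrac{256\nrm{\vH}T}{\epsilon}+1)^2$ multiplying the tail bounds — so as to confirm that a single polylogarithmic choice of $n$ simultaneously satisfies every inequality. This is exactly the kind of loose-but-sufficient estimate the authors flag in the remark preceding the corollary, so the proof should end by noting that the resulting $N$ is certainly not tight but only affects the final cost through $\log N$.
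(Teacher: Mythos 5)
Your plan follows the paper's proof essentially verbatim: invoke \autoref{prop:TruncatedContToDisc} per jump at accuracy $\epsilon/|A|$, compute the Gaussian Lipschitz constants $D=\bigO{\sigma}$, $L=\bigO{1/\sigma}$ (and $C=\bigO{\beta}$ from the hypothesis), use Gaussian tail bounds together with \autoref{prop:Gaussian_tail} to settle the $T$ and $W$ choices at $\bigTh{\sigma\sqrt{\log(1/\epsilon')}+1}$ and $\bigTh{\frac{1}{\sigma}\sqrt{\log(1/\epsilon')}+1+2\nrm{\vH}}$ respectively, and absorb all polynomial slack into a polylogarithmic choice of $n$. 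One small correction to carry through: the paper's tail condition \eqref{eq:ContTailWeight} is on $|\omega|\geq W-2\nrm{\vH}$, so $W$ must additively dominate $2\nrm{\vH}$ (as the paper writes), rather than merely carrying $\nrm{\vH}$ inside the logarithm; and the Gaussian tail gives a $\sqrt{\log(1/\epsilon')}$ rather than a $\log(1/\epsilon')$ scaling for $W-2\nrm{\vH}$ — neither affects the final $n=\bigTh{\log(\cdots)}$ claim.
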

\begin{proof}
	We use \autoref{prop:TruncatedContToDisc} for each $\vA^{a}$ setting $\epsilon'\leftarrow\frac{\epsilon}{|A|}$ to prove the claim.
	First observe that $\nrm{f}_2=1$ and since $t_0=\bigO{\sigma}$ we have
	\begin{align}\label{eq:DiscGaussNorm}
		\sum_{\bomega\in S^{\lceil N \rfloor}_{t_0}}t_0\left|f(\bt)\right|^2=\bigTh{1}.
	\end{align}
	As $\hat{f}(w)= \sqrt{\sigma \sqrt{\frac{2}{\pi }}} e^{-\sigma^2 \omega^2}$, we also get $|\frac{\omega}{\mathrm{d}\omega}\hat{f}(w)|/\lVert\hat{f}\rVert_\infty\leq \sigma$, so $D\leq \sigma$. Since $\nrm{f}_\infty=\nrm{f_T}_\infty$ and $|\frac{\omega}{\mathrm{d}t}f(t)|/\lVert f\rVert_\infty\leq \frac{1}{2\sigma}$, we get $L\leq \frac{1}{2\sigma}$.
	
	Standard Gaussian tail bound tells us that 
	\begin{align}
		\nrm{f-f_T}_2^2\leq \frac{2}{T \sqrt{2\pi}}e^{-\frac{T^2}{2 \sigma^2}},
	\end{align}
	from which it follows that 
	\begin{align}
		\sum_{\bomega\in S^{\lceil N \rfloor}_{t_0}}t_0\left|f(\bt)-f_T(\bt)\right|^2
		\leq\left(\frac{2}{T \sqrt{2\pi}}+\frac{2t_0}{\sigma \sqrt{2\pi}}\right)e^{-\frac{T^2}{2 \sigma^2}},
	\end{align}
	implying that it suffices to choose $T=\bigTh{\sigma\sqrt{\log(1/\epsilon')}+1}$.

	By \eqref{eq:DiscGaussNorm} and \autoref{prop:Gaussian_tail} the Fourier-transformed tail satisfies
	\begin{align}
		\sqrt{\sum_{\underset{\labs{\bomega}\ge W}{\bomega\in S^{\lceil N \rfloor}_{\omega_0}} } \labs{\hat{\bar{f}}(\bomega)}^2} \le \CO\L(\frac{1}{\sqrt{N\omega_0\sigma}}\e^{ - N^2\omega_0^2\sigma^2/2} + \frac{1}{ \sqrt{Nt_0/\sigma}}\e^{ - N^2t_0^2/16\sigma^2} +\frac{1}{\sqrt{W\sigma}}\e^{ - W^2\sigma^2}\R),
	\end{align}
	so it suffices to choose $N=\bigOm{\left(\sqrt{\log(1/\epsilon')}+1\right)}$ and  $W=\bigTh{\frac{1}{\sigma}\sqrt{\log(1/\epsilon')}+1+2\nrm{\vH}}$.
	
	Therefore, it is easy to see that setting $N=\bigTh{\left(|A|\frac{(\nrm{\vH}+1)(\sigma+1/\sigma)(\beta+1)}{\epsilon}\right)^{\!\!\bigO1}}$ satisfies all requirements of \autoref{prop:TruncatedContToDisc}.
\end{proof}

\section{Implications for \Lword{}s from system-bath interaction}\label{sec:open_system}

Our algorithmic constructions are closely related to their physical origins, and the analytic framework conversely sheds light on the thermalization of open quantum systems. Under physical assumptions, one can microscopically derive a \Lword{} for a system coupled weakly to a bath (See, e.g.,~\cite{Rivas_2012_open_systems}). Among many candidates~\cite{davies74,davies76,Redfield1965, Nathan2020universal, Trushechkin_2021}, we mainly focus on the \textit{Coarsed Grained Master Equation}~\cite{Christian_2013_Coarse_graining,  Mozgunov2020completelypositive} that enjoys transparent nonasymptotic error bounds and nicely connects to our algorithmic construction. Recall
\begin{align}
\CL_{(CGME)}[\vrho] &:= - \ri[\vH_{LS}, \vrho] + \sum_{a \in A}\int_{-\infty}^{\infty}  \gamma(\omega) \left( \hat{\vA}^a(\omega)\vrho\hat{\vA^{a}}(\omega)^\dagg -\frac{1}{2}\{\hat{\vA^{a}}(\omega)^\dagg\hat{\vA}^a(\omega),\vrho \} \right)\mathrm{d}\omega\label{eq:total_Lind}\\
&=:\CL_{uni}[\vrho] + \CL_{diss}[\vrho]
\end{align}
and the correlation function $\gamma(\omega)$ satisfying the symmetry $\gamma(\omega)/\gamma(-\omega)=\e^{-\beta\omega}$. The Lamb-shift term~\footnote{Compared with~\cite[Eq.24]{Mozgunov2020completelypositive}, the factor of $\frac{1}{\sqrt{2\pi}}$ is due to our Fourier Transform convention. }
\begin{align}
\vH_{LS}:=\sum_{a \in A}\frac{\ri}{2\sqrt{2\pi} T} \int_{-T/2}^{T/2}\int_{-T/2}^{T/2}\ \textrm{sgn}(t_1-t_2)c(t_2-t_1)\vA^{a\dagger}(t_2)\vA^a(t_1)\mathrm{d}t_2\mathrm{d}t_1 \label{eq:Lamb}
\end{align}
depends on the inverse Fourier Transform $c(t)$ of $\gamma(\omega)$, i.e.,
\begin{align}
   c(t) = \frac{1}{\sqrt{2\pi}}\int_{-\infty}^{\infty} \gamma(\omega) \e^{\ri \omega t} \mathrm{d} \omega. 
\end{align}
Let us also impose the normalization convention
\begin{align}
    \norm{c}_1 \le \sqrt{2\pi} \label{eq:c_normalized}. 
\end{align}
which also control the frequency domain by $\norm{\gamma}_{\infty} \le \frac{\norm{c}_1}{\sqrt{2\pi}}\le 1$. Nicely, the strength of the Lamb-shift term is also suitably normalized:
\begin{lem}[Norm of Lamb-shift term] For $\vH_{LS}$ as in~\eqref{eq:Lamb},
    \begin{align}
    \norm{\vH_{LS}} \le \frac{\norm{c}_1 }{2\sqrt{2\pi}} \lnorm{ \sum_{a\in A} \vA^{a\dagger}\vA^a}. 
    \end{align}
\end{lem}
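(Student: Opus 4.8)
The plan is to estimate $\norm{\vH_{LS}}$ directly: move the operator norm inside the double integral in~\eqref{eq:Lamb}, change variables so that the correlation kernel $c$ depends on a single variable, and then dispatch the remaining operator-valued integral with an operator Cauchy--Schwarz inequality.

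First I would substitute $t_2=t_1+s$ in~\eqref{eq:Lamb}, so that $\textrm{sgn}(t_1-t_2)=\textrm{sgn}(-s)$, $c(t_2-t_1)=c(s)$, and the square $[-T/2,T/2)^2$ becomes, for each fixed $s\in[-T,T]$, an interval $I_s$ of $t_1$-values of length $T-|s|$ (empty for $|s|>T$). Exchanging the finite sum over $a$ with the integrals gives
\begin{align}
\vH_{LS}= \frac{\ri}{2\sqrt{2\pi}\,T}\int_{-T}^{T}\textrm{sgn}(-s)\,c(s)\Bigg(\sum_{a\in A}\int_{I_s}\vA^{a\dagger}(t_1+s)\,\vA^a(t_1)\,\mathrm{d}t_1\Bigg)\,\mathrm{d}s,
\end{align}
whence $\norm{\vH_{LS}}\le \frac{1}{2\sqrt{2\pi}\,T}\int_{-T}^{T}|c(s)|\,\big\|\sum_{a\in A}\int_{I_s}\vA^{a\dagger}(t_1+s)\vA^a(t_1)\,\mathrm{d}t_1\big\|\,\mathrm{d}s$.

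Next, for each fixed $s$ I would bound the inner norm by operator Cauchy--Schwarz over the combined index $(a,t_1)$: with $\vB_{a,t_1}:=\vA^a(t_1+s)$ and $\vC_{a,t_1}:=\vA^a(t_1)$, the integrand is $\vB_{a,t_1}^\dagger\vC_{a,t_1}$ and one obtains $\big\|\sum_a\int\vB^\dagger\vC\big\|\le\big\|\sum_a\int\vB^\dagger\vB\big\|^{1/2}\big\|\sum_a\int\vC^\dagger\vC\big\|^{1/2}$, which follows by stacking the $\vB_{a,t_1}$ (resp.\ $\vC_{a,t_1}$) into a single operator via a direct integral, exactly as in the finite case treated in Lemma~\ref{lem:tensorSumBound}. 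Since $\vB_{a,t_1}^\dagger\vB_{a,t_1}=\e^{\ri\vH(t_1+s)}\vA^{a\dagger}\vA^a\e^{-\ri\vH(t_1+s)}$ and $\vC_{a,t_1}^\dagger\vC_{a,t_1}=\e^{\ri\vH t_1}\vA^{a\dagger}\vA^a\e^{-\ri\vH t_1}$ are unitary conjugates of the positive semidefinite operator $\sum_a\vA^{a\dagger}\vA^a$, each bracketed term is $\preceq (T-|s|)\,\big\|\sum_a\vA^{a\dagger}\vA^a\big\|\,\vI$, so the inner norm is at most $(T-|s|)\,\big\|\sum_a\vA^{a\dagger}\vA^a\big\|$. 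Plugging this in, bounding $T-|s|\le T$ and $\int_{-T}^{T}|c(s)|\,\mathrm{d}s\le\norm{c}_1$, yields the claimed $\norm{\vH_{LS}}\le\frac{\norm{c}_1}{2\sqrt{2\pi}}\,\big\|\sum_a\vA^{a\dagger}\vA^a\big\|$. Pleasantly, self-adjointness of the jump set is not needed, since both halves of the Cauchy--Schwarz split produce the same factor $\sum_a\vA^{a\dagger}\vA^a$.

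The change-of-variables bookkeeping and the triangle inequalities are routine; the only step needing a little care is justifying the operator Cauchy--Schwarz over the continuous index $(a,t_1)$, but this is standard (a Riemann-sum limit of Lemma~\ref{lem:tensorSumBound}, or a direct-integral argument), so I do not anticipate a genuine obstacle here.
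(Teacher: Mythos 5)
The paper states this lemma without proof, so there is no paper argument to compare against. Your proof is correct and complete. The change of variables $t_2 = t_1+s$ maps the square $[-T/2,T/2)^2$ to the advertised tent-shaped region, the triangle inequality pulls the operator norm inside the $s$-integral, and the operator Cauchy--Schwarz step is the standard ``stacking'' argument: with $\vW := \sum_a\int_{I_s}\ket{a,t_1}\otimes\vA^a(t_1+s)\,\mathrm{d}t_1$ and $\vV := \sum_a\int_{I_s}\ket{a,t_1}\otimes\vA^a(t_1)\,\mathrm{d}t_1$ one has $\vW^\dagger\vV = \sum_a\int_{I_s}\vA^{a\dagger}(t_1+s)\vA^a(t_1)\,\mathrm{d}t_1$, $\norm{\vW^\dagger\vW}=\norm{\int_{I_s}\e^{\ri\vH(t_1+s)}(\sum_a\vA^{a\dagger}\vA^a)\e^{-\ri\vH(t_1+s)}\mathrm{d}t_1}\le (T-|s|)\norm{\sum_a\vA^{a\dagger}\vA^a}$, and likewise for $\vV$, so $\norm{\vW^\dagger\vV}\le\norm{\vW}\norm{\vV}$ gives exactly your bound. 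Discarding the tent-function factor via $T-|s|\le T$ and bounding $\int_{-T}^{T}|c(s)|\,\mathrm{d}s\le\norm{c}_1$ yields the stated inequality. Your observation that self-adjointness of the jump set is not needed is also correct, since both halves of the Cauchy--Schwarz split produce $\sum_a\vA^{a\dagger}\vA^a$. The only point to be careful about in a formal write-up is to either work with a direct integral (so the ``kets'' $\ket{a,t_1}$ live in $\ell^2(A)\otimes L^2(I_s)$) or take a Riemann-sum limit of the finite-index inequality, which you already flag.
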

For our normalization conventions (Eq.~\eqref{eq:AAdagger}, Eq.~\eqref{eq:c_normalized}), the RHS would be $\frac{1}{2}$, which is comparable with the super-operator norm of the dissipative part~\eqref{eq:L_normalized}.

Note that if $c(t)$ is sufficiently smooth (for example, $l$-Lipschitz continuous with a not too large $l$), and we can efficiently prepare discretized states proportional to $\sqrt{|c(t)|}$ and $\frac{c(t)}{\sqrt{|c(t)|}}$, then we can get a block-encoding of a good approximation of $\frac{\vH_{LS}}{\frac{1}{\sqrt{2\pi}}\int_{-T}^T \labs{c(t)}\mathrm{d}t}$ by using Hamiltonian simulation time $\bigO{T}$. The key is to prepare states proportional to $\sqrt{|c(t_2-t_1)|}$ and $\frac{c(t_2-t_1)}{\sqrt{|c(t_2-t_1)|}}$ over the domain $[-T/2,T/2]\times[-T/2,T/2]$. This can be done by first preparing a uniform superposition over discretized values of $t_1$ on the interval $[-T/2,T/2]$, and also a state proportional to $\sqrt{|c(t_2)|}$ on the interval $[-T,T]$. This is a product state, but then we add the first variable to the second $\tilde{t}_2\leftarrow t_2+t_1$  (which we implement in superposition on the register containing the discretized values of $t_2$). The resulting new variables $t_1,\tilde{t}_2$ restricted to the domain $[-T/2,T/2]\times[-T/2,T/2]$ have the desired amplitudes. The case of $\frac{c(t_2-t_1)}{\sqrt{|c(t_2-t_1)|}}$ is completely analogous. Now the block-encoding is simple: prepare a state proportional $\frac{c(\bt_2-\bt_1)}{\sqrt{|c(\bt_2-\bt_1)|}}\ket{\bt_1}\ket{\bt_2}$, apply $\textrm{sgn}(\bt_2-\bt_1)\vA^a(\bt_2)\vA^a(\bt_1)$ and finally unprepare the state proportional to $\sqrt{|c(\bt_2-\bt_1)|}\ket{\bt_1}\ket{\bt_2}$ (run the preparation in reverse). This block-encoding ensures that we can accurately simulate the above Master Equation using \autoref{thm:LCUSim}.

Compared with our algorithmic construction, the CGME differs in the following ways. First, it contains a unitary part $\CL_{uni}$, especially the Lamb-shift term $\vH_{LS}$. This requires additional technical tools to handle, so we temporarily drop this term and postpone its discussion at~\autoref{sec:effect_of_Lamb}. The second difference is less essential: instead of discrete Fourier Transforms, the Kraus operators are labeled by continuous Bohr frequencies with appropriate normalizations.

\begin{align}
    \hat{\vA}(\omega) := \sqrt{\frac{1}{2\pi T}} \int_{-T/2}^{T/2} \e^{-\ri \omega t} \vA(t)\mathrm{d}t = \sqrt{\frac{1}{2\pi T}}\sum_{\nu\in B(\vH)} \vA_\nu \frac{\e^{-\ri (\omega-\nu) T/2} - \e^{\ri (\omega-\nu) T/2}}{\ri \omega}.
\end{align}
Note the Fourier Transform convention. The time scale $T$ depends on parameters of the open system, such as the bath correlation function and the coupling strength~\cite{Mozgunov2020completelypositive}; in our error bounds, we will keep $T$ as a tunable abstract parameter.

Still, using a similar argument for analyzing the algorithm, we control the fixed point error for the \Lword{}. In fact, the bounds and the presentation simplify as discretization errors vanish. The main idea is to introduce the secular approximation (using notations in~\autoref{sec:secular})
\begin{align}
    \CL_{sec} &:= \sum_{a \in A}\int_{-\infty}^{\infty}  \gamma(\omega) \left( \hat{\vS}^a(\omega)[\cdot]\hat{\vS^{a}}(\omega)^{\dagger} -\frac{1}{2}\{\hat{\vS^{a}}(\omega)^{\dagger}\hat{\vS}^a(\omega),\cdot \} \right)\mathrm{d}\omega.
    \label{eq:Lind_sec_open}
\end{align}

\begin{thm}[Fixed point of the dissipative part] \label{thm:CGME_fixedpoint}
The dissipative part of the CGME \Lword{}~\eqref{eq:total_Lind} (satisfying normalization and symmetry conditions~\eqref{eq:AAdagger},\eqref{eq:gamma_KMS},\eqref{eq:c_normalized}), has an approximate Gibbs fixed point
\begin{align}
\normp{\vrho_{fix}( \CL_{diss} ) - \vrho_{\beta}}{1} &\le \CO\L(  \sqrt{\frac{\beta}{T}}t_{mix}(\CL_{diss})\R).
\end{align}
\end{thm}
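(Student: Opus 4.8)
\textbf{Proof strategy for \autoref{thm:CGME_fixedpoint}.}
The plan is to mimic the proof of \autoref{thm:L_correctness} almost verbatim, using the continuum versions of the relevant lemmas instead of the discretized ones, and exploiting that the only genuine novelty here is the heavy Fourier tail of the sharp time cutoff $f(t)\propto \indicator(|t|\le T/2)/\sqrt{T}$. First I would introduce the secular-approximated generator $\CL_{sec}$ of \eqref{eq:Lind_sec_open} with a cutoff frequency $\bmu$ to be optimized later, and split
\begin{align}
\normp{\vrho_{fix}(\CL_{diss}) - \vrho_\beta}{1} \le \normp{\vrho_{fix}(\CL_{diss}) - \vrho_{fix}(\CL_{sec})}{1} + \normp{\vrho_{fix}(\CL_{sec}) - \vrho_\beta}{1}.
\end{align}
The first term is controlled by \autoref{fact:mixing_time_to_fixedpoint}, so it is at most $4\normp{\CL_{diss} - \CL_{sec}}{1-1}\, t_{mix}(\CL_{diss})$; the second term by \autoref{cor:fixed_point_error_mix}, so it is at most $20\, t_{mix}(\CL_{sec})\,\norm{\CA(\vrho_\beta,\CL_{sec})}_{2-2}$. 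As in the proof of \autoref{thm:L_correctness}, I would then argue $t_{mix}(\CL_{sec}) = \CO(t_{mix}(\CL_{diss}))$ via \autoref{prop:mixingtime_diff} (together with the fact that otherwise the bound is vacuous), so both terms scale with $t_{mix}(\CL_{diss})$.

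For the secular-approximation error $\normp{\CL_{diss} - \CL_{sec}}{1-1}$ I would use the continuum perturbation bound \autoref{cor:fHContPerturBound} (or the continuum version of \autoref{lem:secular}): this reduces the task to bounding $\norm{\hat{f} - \hat{f}_s}_2 = \norm{\hat{f}\cdot\indicator(|\omega|\ge\bmu)}_2$, which is exactly the uniform-weight tail bound \autoref{prop:uniformTail}, giving $\CO(\sqrt{1/(\bmu T)})$ (here there is no $\omega_0 T$ discretization term because we work directly in the continuum; the extra "wrapping" and Hamiltonian-rounding terms of \autoref{lem:secular} disappear in the limit). For the anti-Hermitian part of $\CL_{sec}$ I would invoke approximate detailed balance for the continuum generator: since $f$ is real and the jump set is self-adjoint (assumption \eqref{eq:AAdagger}), $\CD_{sec} = \CD_{sec}^\dagger$ holds as a superoperator, and then \autoref{lem:error_from_Boltzmann} (whose conclusion applies to the continuum after the limiting argument of \autoref{thm:discToCont}, or more directly the bootstrapped \autoref{cor:improved_boltzmann}, which is tailored precisely to the uniform weight and already absorbs the heavy tail) gives $\norm{\CA(\vrho_\beta,\CL_{sec})}_{2-2} = \CO(\beta\bmu)$, using the normalization $\norm{\sum_{a,\omega}\hat{\vS}^a(\omega)^\dagger\hat{\vS}^a(\omega)}\le 1$ from \autoref{prop:cont_operator_FT} together with $\norm{\gamma}_\infty\le 1$ (which follows from \eqref{eq:c_normalized}). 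Actually the cleanest route is to apply \autoref{cor:improved_boltzmann} directly with $\mu=\bmu$, which already yields $\CO(\beta\sqrt{\bmu/T})$ for the combined secular-plus-Boltzmann error.

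Combining, the total error is $\CO\big((\sqrt{1/(\bmu T)} + \beta\bmu)\, t_{mix}(\CL_{diss})\big)$; optimizing over $\bmu$ by setting $\bmu = \beta^{-2/3}T^{-1/3}$ would give $\CO(\beta^{1/3}T^{-1/3}t_{mix})$ — but the sharper choice, matching the structure already used in \autoref{thm:Nature_fixedpoint} and \autoref{thm:non_damped_Gibbs_error}, is to take $\bmu = 1/\beta$ so that $\beta\bmu\le 1$ (the hypothesis of \autoref{lem:error_from_Boltzmann}) is tight and \autoref{cor:improved_boltzmann} applies with $\mu=1/\beta\ge\pi/T$ (valid once $T\gtrsim\beta$, which we may assume since otherwise the claimed bound is vacuous). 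That gives secular error $\CO(\sqrt{\beta/T})$ and Boltzmann error $\CO(\beta\sqrt{(1/\beta)/T}) = \CO(\sqrt{\beta/T})$, hence the advertised $\CO(\sqrt{\beta/T}\cdot t_{mix}(\CL_{diss}))$. The main obstacle I anticipate is not any single estimate but the bookkeeping of the continuum-versus-discrete correspondence — one must check that the hypotheses of \autoref{lem:error_from_Boltzmann} / \autoref{cor:improved_boltzmann} (self-adjointness of the jump set, the KMS relation $\gamma(\omega)/\gamma(-\omega)=\e^{-\beta\omega}$, and $\beta\mu\le 1$) transfer cleanly to the CGME's continuous Kraus operators, and that the $t_{mix}(\CL_{sec})=\CO(t_{mix}(\CL_{diss}))$ step is justified via \autoref{prop:mixingtime_diff} exactly as in \autoref{thm:L_correctness}; everything else is a routine transcription of the proof of \autoref{thm:L_correctness} with Gaussian tails replaced by the polynomial tail of \autoref{prop:uniformTail}.
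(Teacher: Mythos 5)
Your proof follows essentially the same route as the paper's (whose proof is a terse three-line chain citing \eqref{eq:L_fL_sec}, \autoref{lem:secular}, and \autoref{prop:ADB_open_sys}): the same telescope through $\CL_{sec}$, the same conversion $t_{mix}(\CL_{sec})=\CO(t_{mix}(\CL_{diss}))$ via \autoref{prop:mixingtime_diff}, the same uniform-weight tail bound \autoref{prop:uniformTail} giving $\CO(1/\sqrt{\mu T})$ for the secular error, the bootstrapped \autoref{cor:improved_boltzmann} (which is exactly what \autoref{prop:ADB_open_sys} invokes) giving $\CO(\beta\sqrt{\mu/T})$ for the approximate-detailed-balance error, and the same final choice $\bar\mu = 1/\beta$. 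The one minor misstatement is your parenthetical that \autoref{cor:improved_boltzmann} ``already yields the combined secular-plus-Boltzmann error'' --- it only controls $\lVert\vec{\CD}_{sec}-\vec{\CD}(\vrho,\CL_{sec})\rVert$, not $\lVert\CL_{diss}-\CL_{sec}\rVert_{1-1}$ --- but since you then account for both terms separately in the final optimization, this is a slip of exposition rather than a gap.
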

Since the proof structure is analogous, we present the altogether bounds and derive them in the following sections.
\begin{proof}
Telescope for the fixed points
\begin{align}
    \normp{\vrho_{fix}( \CL_{diss} ) - \vrho_{\beta}}{1}&\le \normp{\vrho_{fix}( \CL_{diss} ) - \vrho_{fix}( \CL_{sec})}{1} + \normp{\vrho_{fix}( \CL_{sec} ) - \vrho_{\beta}}{1}\\
    &\le \CO\L(\L(\lnormp{\CL-\CL_{sec}}{1-1} + \lnormp{\CD(\vrho, \CL_{sec}) - \CD(\vrho, \CL_{sec})^{\dagger}}{2-2}\R)t_{mix}(\CL_{diss})\R)\\
    &\le \CO\L( \L(\frac{1}{\sqrt{\mu T}} +\beta\sqrt{\frac{\mu}{T}}\R)t_{mix}(\CL_{diss}) \R).
\end{align}
The second inequality uses identical arguments as~\eqref{eq:L_fL_sec} from the proof of Theorem~\ref{thm:L_correctness}. The third inequality plugs in bounds for the secular approximation (the continuous case is the limit of the discrete case (\autoref{lem:secular})) and approximate detailed balance (\autoref{prop:ADB_open_sys}). Optimize the free parameter $\mu = \frac{1}{\beta}$ to conclude the proof.
\end{proof}

The mixing time can be bounded by the gap (\autoref{prop:gap_to_mixing}) if needed, but for conceptual simplicity, we stuck to the mixing time for the main presentation.
\subsection{Bounds for approximate detailed balance}\label{sec:bounds_approx_DB}
Here, we show that approximate detailed balance for the \Lword{} of interest $\CL_{sec}$.
\begin{prop}[Approximate detailed balance]\label{prop:ADB_open_sys}
Suppose the secular approximation for $\CL_{(CGME)}$~\eqref{eq:Lind_sec_open} is truncated at energy $\mu$. Then, 
\begin{align}\label{lem:diss_Approx_DB}
\frac{1}{2}\lnormp{\CD(\vrho, \CL_{sec}) - \CD(\vrho, \CL_{sec})^{\dagger}}{2-2} \le \CO\L(\beta\sqrt{\frac{\mu}{T}} \R).
\end{align}
\end{prop}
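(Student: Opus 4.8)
The plan is to mirror the structure of \autoref{lem:error_from_Boltzmann}\,/\,\autoref{lem:L_Approx_DB}, but for the continuous operator Fourier Transform with the uniform-window weight $f(t)=\sqrt{1/(2\pi T)}\,\indicator(|t|\le T/2)$ appearing in the CGME. Concretely, I would first invoke the continuous version of \autoref{lem:error_from_Boltzmann}, which applies verbatim once the two structural hypotheses are checked: (i) the secular truncation \eqref{eq:PSPass_D}, i.e.\ $\bra{\psi_i}\hat{\vS}^a(\omega)\ket{\psi_j}=0$ whenever $|(E_i-E_j)-\omega|>\mu$, which holds by construction of $\hat{\vS}^a(\omega)=\hat{\vA}_{f_s}(\omega)$ in \autoref{sec:secular} with the cutoff filter $s(\omega)=\indicator(|\omega|<\mu)$; and (ii) the symmetry $\sqrt{\gamma(-\omega)}\,\hat{\vS}^a(\omega)^\dagg=\sqrt{\gamma(-\omega)}\,\hat{\vS}^{a'}(-\omega)$ with the involution $a\mapsto a'$, $\vA^{a'}=(\vA^a)^\dagg$, which follows from \autoref{prop:properties_DFT}/\autoref{prop:realSecReal} since $f$ is real and even and the jump set is self-adjoint \eqref{eq:AAdagger}. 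Together with the KMS condition \eqref{eq:gamma_KMS} and the constraint $\beta\mu\le 1$, this yields a bound of the form $\tfrac12\|\CD(\vrho,\CL_{sec})-\CD(\vrho,\CL_{sec})^\dagger\|_{2-2}\le \CO(\beta\mu)\cdot\|\sum_{a}\int \gamma(\omega)\hat{\vS}^a(\omega)^\dagg\hat{\vS}^a(\omega)\,\mathrm{d}\omega\|$.

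The point that then needs care is that this ``raw'' $\beta\mu$ bound is \emph{too weak} for the uniform window: because $f$ has a heavy $\sim 1/\omega$ tail in frequency (\autoref{prop:uniformTail}), one cannot simply take $\mu\sim\sigma_t^{-1}$ and be done. Instead I would apply the bootstrapping argument of \autoref{lem:slicing}\,/\,\autoref{cor:improved_boltzmann}: decompose the (frequency-truncated) weight $\hat f\cdot s$ into dyadic frequency shells $s_j=\indicator(4^{j-1}\pi/T<|\omega|\le 4^j\pi/T)$, set $f_j=\CF^{-1}(\hat f\cdot s_j)$, and use the tail bound $\|f_j\|_2\le 2^{1-j}$ (so $\sum_j\|f_j\|_2\le 4$) together with $\mu_j\le 4^j\pi/T$. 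The polarization identity in \autoref{lem:slicing} reduces the discriminant-proxy error to a double sum over shells, each term controlled by $\|f_i\|_2\|f_j\|_2\max\{\mu_i,\mu_j\}\cdot K$ with $K=\CO(\beta\|\sum_a\vA^{a\dagger}\vA^a\|)=\CO(\beta)$ by the normalization \eqref{eq:AAdagger}. The geometric sum $\sum_j\|f_j\|_2\mu_j\le \CO(\sqrt{\mu/T})$ then gives $\|\vec{\CD}_{sec}-\vec{\CD}(\vrho,\CL_{sec})\|=\CO(\beta\sqrt{\mu/T})$, and taking adjoints and using the triangle inequality exactly as after \autoref{lem:error_from_Boltzmann} converts this into the stated $\tfrac12\|\CD(\vrho,\CL_{sec})-\CD(\vrho,\CL_{sec})^\dagger\|_{2-2}=\CO(\beta\sqrt{\mu/T})$.

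A small wrinkle is that \autoref{lem:slicing} and \autoref{cor:improved_boltzmann} as stated are phrased for the discrete Fourier Transform with a rounded Hamiltonian $\bar{\vH}$ (so that $B\subset\omega_0\BZ$). Here I would instead work directly in the continuum: the continuous analogue of \autoref{prop:properties_DFT} is \autoref{prop:cont_operator_FT}, the relation $\hat{\vS}_{f,s}(\omega)=\hat{\vA}_{f_s}(\omega)$ holds by \eqref{eq:contSecFourierTrunc}, and \autoref{lem:error_from_Boltzmann} is explicitly stated to hold in the continuum as the limit of the discrete case; so the whole dyadic-shell argument goes through with integrals in place of sums and no rounding needed. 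Normalization-wise, $\|\gamma\|_\infty\le 1$ follows from \eqref{eq:c_normalized} via $\|\gamma\|_\infty\le \|c\|_1/\sqrt{2\pi}\le1$, and $\|f\|_2=1$ by construction of the uniform window, so all hypotheses of the cited lemmas are met. The main obstacle, as in the algorithmic case, is bookkeeping: making sure the polarization decomposition, the dyadic shells, and the adjoint/triangle-inequality step are assembled with the right constants so that the heavy $1/\omega$ tail genuinely only costs a square root rather than being fatal; once the dyadic decomposition is set up this is a routine (if slightly tedious) geometric-series estimate.
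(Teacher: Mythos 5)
Your proposal is correct and follows essentially the same route as the paper: the paper's proof also telescopes via the discriminant proxy $\CD_{sec}$ (using $\CD_{sec}=\CD_{sec}^\dagg$ and the adjoint/triangle inequality to convert $\norm{\CD_{sec}-\CD(\vrho,\CL_{sec})}$ into the anti-Hermitian bound), invokes \autoref{cor:improved_boltzmann} for the $\CO(\beta\sqrt{\mu/T})$ scaling, and handles the discrete-to-continuum passage by noting the bound is discretization-scale independent. The only cosmetic difference is that you re-derive the dyadic-shell bootstrapping of \autoref{lem:slicing} inline rather than citing the corollary as a black box.
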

\begin{proof}
We simply telescope by inserting the algorithmically constructed discriminant $\CD_{sec}$ (\autoref{cor:improved_boltzmann})
 \begin{align}
\lnorm{\CD(\vrho, \CL_{sec}) - \CD(\vrho, \CL_{sec})^{\dagger}}_{2-2}
    &\le \norm{\CD(\vrho, \CL_{sec}) - \CD_{sec}}_{2-2} +\norm{ \CD_{sec} - \CD(\vrho, \CL_{sec})^{\dagger}}_{2-2}\\
     &\le 2 \norm{ \CD_{sec} - \CD(\vrho, \CL_{sec})}_{2-2}.
 \end{align}
The last inequality uses that $\CD_{sec}=\CD_{sec}^\dagg$ and that $\norm{\CA}_{2-2} = \norm{\CA^{\dagger}}_{2-2}$. Note that the algorithmic discussion considered discrete energy labels, so we have to take a continuum limit for the bilinear sum\footnote{Formally speaking, the correctness in this limit can be derived by using our discrete results and taking their limit as in \ref{thm:discToCont}.}
\begin{align}
    \sum_{\bomega\in S_{\omega_0}}\!\!\!\gamma(\bomega) \hat{\vS}^a(\bomega)^{\dagger} \hat{\vS}^a(\bomega) \rightarrow \int_{-\infty}^\infty  \gamma(\omega) \hat{\vS}^{a\dagger}(\omega) \hat{\vS}^{a}(\omega)\mathrm{d}\omega.
\end{align} 
Indeed, the bound does not depend on the discretization scale.
\end{proof}

\subsection{Effects of the Lamb-shift term}
\label{sec:effect_of_Lamb}
In this section, we include the unitary part of the CGME generator. The resulting bounds now depend on two mixing times, and we do not have a desirable conversion between the two mixing times. Still, one can upper bound both via the spectral gap of the Hermitian part of the dissipative part $\CH_{diss}$.
\begin{restatable}[Fixed point of CGME]{thm}{CGMEfull} \label{thm:CGME_total_fixedpoint}
For the full CGME generator~\eqref{eq:total_Lind}, which satisfy the symmetry and normalization conditions~\eqref{eq:AAdagger},\eqref{eq:fnormalized},\eqref{eq:gamma_KMS},\eqref{eq:c_normalized}, we have that
\begin{align}
\normp{\vrho_{fix}( \CL_{CGME} ) - \vrho_{\beta}}{1} &\le \CO\L( \sqrt{\frac{\beta}{T}}\L(t_{mix}(\CL_{CGME})+ t_{mix}(\CL_{diss})\R)\R).
\end{align}
\end{restatable}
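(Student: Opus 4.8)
\textbf{Proof plan for \autoref{thm:CGME_total_fixedpoint}.}
The plan is to reduce the full CGME generator $\CL_{CGME} = \CL_{uni} + \CL_{diss}$ to the dissipative case already handled in \autoref{thm:CGME_fixedpoint} by treating the unitary part $\CL_{uni}[\cdot] = -\ri[\vH_{LS},\cdot]$ as a bounded perturbation. First I would invoke \autoref{fact:mixing_time_to_fixedpoint} with $\CL_1 = \CL_{CGME}$ and $\CL_2 = \CL_{diss}$ to bound
\begin{align}
\normp{\vrho_{fix}(\CL_{CGME}) - \vrho_{fix}(\CL_{diss})}{1} \le 4\normp{\CL_{uni}}{1-1}\cdot t_{mix}(\CL_{CGME}),
\end{align}
and then a triangle inequality against \autoref{thm:CGME_fixedpoint},
\begin{align}
\normp{\vrho_{fix}(\CL_{CGME}) - \vrho_\beta}{1} \le 4\normp{\CL_{uni}}{1-1}\, t_{mix}(\CL_{CGME}) + \CO\!\L(\sqrt{\tfrac{\beta}{T}}\, t_{mix}(\CL_{diss})\R).
\end{align}
So the whole theorem hinges on showing $\normp{\CL_{uni}}{1-1} = \CO(\sqrt{\beta/T})$.

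The key technical step is therefore bounding $\norm{\vH_{LS}}$. The naive bound from the stated Lamb-shift norm lemma only gives $\norm{\vH_{LS}} \le \norm{c}_1/(2\sqrt{2\pi})\cdot\norm{\sum_a \vA^{a\dagger}\vA^a} \le 1/2$, which is \emph{not} small and would make the theorem vacuous. The point is that the Lamb-shift term secretly carries a small prefactor once one accounts for approximate energy conservation, exactly as in the dissipative analysis. Concretely, I would re-express $\vH_{LS}$ in the Bohr-frequency / operator-Fourier-transform language: writing $\vA^a(t) = \sum_\nu \e^{\ri\nu t}\vA^a_\nu$ and carrying out the double time integral with the $\mathrm{sgn}(t_1-t_2)$ kernel produces, up to the secular truncation at scale $\mu$, a sum over near-diagonal Bohr-frequency blocks $|\nu_2-\nu_1|\lesssim \mu$ together with the slowly varying weight; the part connecting well-separated energies is suppressed by the same phase-estimation tail $\CO(1/\sqrt{\mu T})$ used in \autoref{prop:uniformTail} and \autoref{cor:improved_boltzmann}. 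Summing and applying the band-matrix norm bound \autoref{lem:bandNorm} together with operator Parseval (\autoref{prop:properties_DFT}) to control $\norm{\sum_{a,\nu}\vA^{a\dagger}_\nu\vA^a_\nu}\le 1$, one gets $\norm{\vH_{LS}} = \CO(1/\sqrt{\mu T} + \beta\sqrt{\mu/T})$, which, optimizing $\mu = 1/\beta$ exactly as in the proof of \autoref{thm:CGME_fixedpoint}, yields $\norm{\vH_{LS}} = \CO(\sqrt{\beta/T})$. Since $\normp{\CL_{uni}}{1-1} \le 2\norm{\vH_{LS}}$, this delivers the required bound on the unitary part.

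I expect the main obstacle to be precisely this estimate on $\norm{\vH_{LS}}$: the $\mathrm{sgn}(t_1-t_2)$ kernel breaks the clean Gaussian/rectangular Parseval structure enjoyed by the dissipative Kraus operators, so one must carefully massage the double integral — either by splitting it into the symmetric (which vanishes) and antisymmetric parts in $t_1\leftrightarrow t_2$, or by recognizing the $\mathrm{sgn}$ integral against $c(t_2-t_1)$ as a principal-value / Hilbert-transform–type object whose Fourier profile still decays like the phase-estimation sinc tail. Once the antisymmetry is exploited to see that only the off-resonant, tail-suppressed contributions survive after the secular truncation, the remaining bookkeeping (triangle inequalities over the $\CO(1)$ band offsets, \autoref{lem:bandNorm}, and the $\beta\mu\le 1$ elementary estimates $|\e^x-1|\le 2|x|$) mirrors the dissipative proof in \autoref{sec:error_boltzmann}. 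A secondary, much milder point is that all of this is stated in the continuum; as in \autoref{thm:discToCont} the continuum bounds follow as limits of their discrete counterparts, so no new ideas are needed there.
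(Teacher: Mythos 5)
There is a genuine gap, and it is exactly where you anticipated: the bound $\norm{\vH_{LS}}=\CO(\sqrt{\beta/T})$ does not hold and cannot be rescued. The Lamb-shift norm is $\Theta(1)$ regardless of $T$: the double time-integral scales like $T\cdot\norm{c}_1$ (one integration variable runs over a window of width $T$, the other is absorbed by $\norm{c}_1$), which the $1/T$ prefactor reduces to $\norm{c}_1/(2\sqrt{2\pi})\cdot\norm{\sum_a\vA^{a\dagg}\vA^a}=\Theta(1)$ — precisely the Lamb-shift norm lemma you quoted. Physically, $\vH_{LS}$ converges as $T\to\infty$ to the nonzero Davies Lamb-shift (a genuine level shift from bath coupling), so it certainly does not vanish; there is no ``small prefactor from energy conservation'' hiding in it. The $\beta$ that appears in your claimed bound has no source — $\vH_{LS}$ depends on $c(t)$ only through the normalized $\norm{c}_1$, and the secular truncation of $\vH_{LS}$ is a $\Theta(1)$ operator whose \emph{error} from the full $\vH_{LS}$ is $\tCO(1/(\mu T))$, not whose norm is small. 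Consequently $\normp{\CL_{uni}}{1-1}\le 2\norm{\vH_{LS}}=\Theta(1)$, the first term in your triangle inequality is $\Theta(t_{mix}(\CL_{CGME}))$, and the bound is vacuous.

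What the paper actually exploits is not smallness of $\vH_{LS}$ but approximate \emph{commutation with the Gibbs state}: after the secular truncation (\autoref{cor:Lamb_sec_error}, which keeps $\norm{\vH_{LS,sec}}=\CO(1)$ while throwing away off-resonant Bohr frequencies with error $\tCO(1/(\mu T))$), one has $\norm{\vH_{LS,sec}-\sqrt{\vrho}\vH_{LS,sec}\sqrt{\vrho}^{-1}}=\CO(\beta\mu)$ via \autoref{prop:single_operator_error_boltzmann}, and hence $\lnormp{\CL^\dagg_{uni}+\sqrt{\vrho}\CL_{uni}[\sqrt{\vrho}^{-1}\cdot\sqrt{\vrho}^{-1}]\sqrt{\vrho}}{2-2}=\tCO(\beta\mu)$ (\autoref{lem:sec_open_error_from_Boltzmann}). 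This is an approximate-detailed-balance statement, not a smallness statement; it says the unitary part nearly \emph{annihilates} $\vrho_\beta$, not that it is nearly zero. Accordingly, the paper does not decouple $\CL_{uni}$ from $\CL_{diss}$ via a triangle inequality as you propose; it telescopes through $\CL_{sec}=\CL_{uni,sec}+\CL_{diss,sec}$ and a reference $\CL'$ whose exact fixed point is $\vrho$, and controls the resulting eigenvector perturbation by $\varsigma_{-2}(\CD(\vrho,\CL_{sec}))$, which in turn requires a careful singular-value argument because the Lamb-shift's presence forces a nonHermitian perturbation theory (\eqref{eq:tLtL_diss}--\eqref{eq:diss_sec_to_diss}). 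To salvage the spirit of your route you would need to argue that $\vrho_\beta$ is an approximate fixed point of $\CL_{uni}$ in the sense $\normp{[\vH_{LS},\vrho_\beta]}{1}$ is small — which is again the approximate-commutation statement — and then combine with the dissipative approximate fixed point; but that is no longer a ``bound $\norm{\vH_{LS}}$'' argument, it is the paper's argument.
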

We have already calculated the errors for the dissipative part; here we only study errors for the Lamb-shift term as in the following sections and combine them at \autoref{sec:full_CGME_altogether}. The strategy is similar; we discretize the Hamiltonian, truncate the operator via the secular approximation, and then argue that the resulting operator nearly commutes with the Gibbs state
\begin{align}
    \vH_{LS}\approx \vH_{LS,sec} \approx  \sqrt{\vrho} \vH_{LS,sec} \sqrt{\vrho}^{-1}.
\end{align}
Formally, we introduce the intermediate constructs
\begin{align}
\CL_{sec} &:=  \CL_{uni,sec} + \CL_{diss,sec} \quad \text{and}\quad \CL' :=  \sqrt{\vrho} (\CL_{diss,sec}-\CL_{uni,sec}) [\sqrt{\vrho}^{-1}\cdot \sqrt{\vrho}^{-1}]\sqrt{\vrho}\\
\quad \text{where}\quad \CL_{uni,sec} &:= -\ri[\vH+\vH_{LS,sec},\cdot].
\end{align}
Note that in $\CL'$ we had to manually flip the sign of the coherent part to ensure $\CL_{sec}\approx \CL'$. One may add any coherent term $-\ri [\vV,\cdot ]$ as long as $[\vV,\vrho]=0$ (most notably the Hamiltonian $\vH$); it would not contribute to the error bounds.

\subsubsection{Secular approximation for the Lamb-shift}
 Rewrite the integral by change-of-variable $s:=t_2-t_1$ and apply the secular approximation to the inner integral
\begin{align}
    \vH_{LS}&:= \frac{\ri}{2T}\int_{-T}^{T}\textrm{sgn}(-s)C(s) \left(\int_{\max(-T/2,-T/2-s)}^{\min(T/2-s,T/2)}\L(\sum_{a\in A} \vA^{a\dagger}(s)\vA^a\R)(t_1)\mathrm{d}t_1\right)\mathrm{d}s
\end{align}

The secular approximation in this context differs from the one we used for the bilinear expressions (\autoref{lem:secular}). We define and analyze the error as follows, inspired by~\cite{chen2023QThermalStatePrep}.
\begin{restatable}[Secular approximation for time average]{lem}{IntSecular}\label{lem:integral_secular}
Consider an operator $\vA$ and a Hermitian operator $\vH$. Then, for any unitarily invariant norm $\normp{\cdot}{*}$ and times $t_1,t_2$, there exists a secular approximated operator $\vS_{\mu}$ such that
\begin{align}
    \bra{\psi_j}\vS_{\mu}\ket{\psi_i} = 0 \quad \text{if}\quad \labs{ E_i-E_j } \le \mu
\end{align}
and 
\begin{align}
     \lnormp{ \int_{t_1}^{t_2} \e^{\ri \vH s}\vA \e^{-\ri \vH s}\mathrm{d}s - \vS_{\mu} }{*} \le \CO\L(  \normp{\vA}{*} \frac{1+ \log(\frac{1}{\mu \labs{t_2-t_1}})}{\mu} \R).
\end{align}
\end{restatable}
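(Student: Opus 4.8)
\textbf{Proof plan for Lemma~\ref{lem:integral_secular}.}

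The plan is to diagonalize $\vH = \sum_k E_k \ketbra{\psi_k}{\psi_k}$ and work entry-by-entry in the energy eigenbasis. The integrand has matrix elements $\bra{\psi_j}\e^{\ri\vH s}\vA\e^{-\ri\vH s}\ket{\psi_i} = \e^{\ri(E_j-E_i)s}\vA_{ji}$, so the time-integral acts as a multiplier on each entry: $\big(\int_{t_1}^{t_2}\e^{\ri\vH s}\vA\e^{-\ri\vH s}\mathrm{d}s\big)_{ji} = g_{ji}\vA_{ji}$, where $g_{ji} := \int_{t_1}^{t_2}\e^{\ri(E_j-E_i)s}\mathrm{d}s$. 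The idea is to \emph{define} the secular-approximated operator $\vS_\mu$ by truncating this multiplier to the ``large energy difference'' block: set $(\vS_\mu)_{ji} := g_{ji}\vA_{ji}$ whenever $|E_j-E_i|>\mu$ and $(\vS_\mu)_{ji}:=0$ otherwise, which manifestly satisfies the stated support condition. Then the error operator $\CE := \int_{t_1}^{t_2}\e^{\ri\vH s}\vA\e^{-\ri\vH s}\mathrm{d}s - \vS_\mu$ has entries $\CE_{ji} = g_{ji}\vA_{ji}\cdot\indicator(|E_j-E_i|\le\mu)$, i.e.\ $\CE$ is obtained from $\vA$ by multiplying entry-wise by the ``Schur multiplier'' $m_{ji} := g_{ji}\,\indicator(|E_j-E_i|\le\mu)$.

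The key analytic step is to bound the unitarily-invariant norm of this Schur multiplication. I would write $m$ as a function of the single variable $\delta := E_j - E_i$, namely $m(\delta) = \indicator(|\delta|\le\mu)\int_{t_1}^{t_2}\e^{\ri\delta s}\mathrm{d}s$, and exhibit it as a superposition of pure phases via a scalar Fourier representation: for each fixed $\delta$, $\int_{t_1}^{t_2}\e^{\ri\delta s}\mathrm{d}s = \int_{-\infty}^{\infty} h(s)\e^{\ri\delta s}\mathrm{d}s$ with $h(s) = \indicator(t_1\le s\le t_2)$, and the indicator cutoff $\indicator(|\delta|\le\mu)$ is itself the evaluation of a band-limited scalar function whose inverse Fourier transform is a sinc, i.e.\ $\indicator(|\delta|\le\mu) = \int \widetilde{k}(s)\e^{\ri\delta s}\mathrm{d}s$ up to Gibbs-type error, or more cleanly one multiplies $h$ by an appropriate kernel. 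Concretely, I would express $m(\delta)=\int_{-\infty}^\infty \phi(s)\e^{\ri\delta s}\mathrm{d}s$ where $\phi = h * (\text{Dirichlet-type kernel of bandwidth }\mu)$ — equivalently $\phi(s) = \frac{1}{\pi}\int_{t_1}^{t_2}\frac{\sin(\mu(s-u))}{s-u}\mathrm{d}u$. Then $\CE = \int_{-\infty}^{\infty}\phi(s)\,\e^{\ri\vH s}\vA\,\e^{-\ri\vH s}\mathrm{d}s$, so by the triangle inequality and unitary invariance $\normp{\CE}{*}\le \normp{\vA}{*}\cdot\norm{\phi}_1$. The lemma then reduces to the purely scalar estimate $\norm{\phi}_1 = \CO\!\big(\frac{1+\log(1/(\mu|t_2-t_1|))}{\mu}\big)$, which follows from splitting the integral of $|\phi(s)|$ into the region near the interval $[t_1,t_2]$ (where $|\phi|$ is $O(1)$, contributing $O(|t_2-t_1| + 1/\mu)$) and the two tails, where $\phi$ decays like $1/(\mu s^2)\cdot|t_2-t_1|$ for $|s|\gg |t_2-t_1|$, giving a convergent contribution, with the logarithm emerging from the crossover scale between $|t_2-t_1|$ and $\mu^{-1}$; one must be slightly careful when $\mu|t_2-t_1|\ge 1$, in which case the bound is just $O(1/\mu)$ without the log and the stated form still holds.

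I expect the main obstacle to be the scalar $L^1$-kernel estimate: getting the precise $\frac{1+\log(1/(\mu|t_2-t_1|))}{\mu}$ dependence (rather than a cruder polynomial-in-$1/\mu$ bound) requires carefully tracking the interplay between the sharp time window $[t_1,t_2]$ and the sharp frequency cutoff at $\mu$, which is exactly the classical ``truncated Hilbert transform / Dirichlet kernel'' $L^1$ computation. A clean way to organize it is to write $\phi = \phi_{\mathrm{main}} + \phi_{\mathrm{tail}}$ where $\phi_{\mathrm{main}}$ restricts $u$ to within distance $|t_2-t_1|$ of $s$ and contributes the bulk, and to use $|\sin|\le 1$ on the far part and $|\sin(x)/x|\le\min(1,1/|x|)$ throughout; summing the dyadic shells of $|s|$ then produces the $\log$. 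Once $\norm{\phi}_1$ is bounded, everything else — the entrywise multiplier identity, the Fourier representation of $\CE$, and the reduction via unitary invariance — is routine. This also immediately gives the statement for the operator norm and the trace norm as the two special cases of unitarily invariant norms used elsewhere in the paper.
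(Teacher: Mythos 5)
Your overall architecture — writing the error as a Fourier multiplier $\CE = \int\phi(s)\,\e^{\ri\vH s}\vA\,\e^{-\ri\vH s}\mathrm{d}s$ and bounding $\normp{\CE}{*}\le \normp{\vA}{*}\cdot\norm{\phi}_1$ by unitary invariance — is exactly the reduction the paper uses, and the Schur-multiplier observation is routine. The gap is in the scalar $L^1$ estimate, and it is not a tractable technical difficulty but a genuine obstruction: with a \emph{sharp} spectral cutoff $\indicator(\labs{\delta}\le\mu)$, the kernel $\phi$ is \emph{not} in $L^1$, so $\norm{\phi}_1=\infty$ and the bound is vacuous.

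Concretely, after the change of variable you have
\begin{align}
    \phi(s) = \frac{1}{\pi}\int_{s-t_2}^{s-t_1}\frac{\sin(\mu v)}{v}\,\mathrm{d}v
    = \frac{1}{\pi}\bigl(\text{Si}(\mu(s-t_1))-\text{Si}(\mu(s-t_2))\bigr).
\end{align}
Integrating by parts for $|s|\gg\max(|t_1|,|t_2|)$ gives a bulk term $-\frac{1}{\pi}\int_{s-t_2}^{s-t_1}\frac{\cos(\mu v)}{\mu v^2}\mathrm{d}v = \bigO{\Delta t/(\mu s^2)}$ (this is the piece you identified) \emph{plus} boundary terms
\begin{align}
    \frac{1}{\pi\mu}\L(\frac{\cos(\mu(s-t_2))}{s-t_2}-\frac{\cos(\mu(s-t_1))}{s-t_1}\R)
    \;\approx\; -\frac{2\sin\!\big(\mu\Delta t/2\big)}{\pi\mu\,s}\,\sin\!\big(\mu(s-\bar{t})\big),
\end{align}
whose envelope is $\sim\min(\Delta t,\;1/\mu)/|s|$. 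The two $\cos$ terms do \emph{not} cancel to $O(1/s^2)$ unless $t_1=t_2$; they only Taylor-expand to cancel at first order in $\Delta t/s$, leaving a $\Delta t/|s|$ piece modulated by $\sin(\mu s)$. Integrating $\labs{\phi}$ over dyadic shells $|s|\in[2^k,2^{k+1}]$ with $2^k\gtrsim\max(\Delta t,1/\mu)$ then contributes $\Theta(\min(\Delta t,1/\mu))$ per shell, and the sum over shells diverges logarithmically. So $\norm{\phi}_1=\infty$, and your claimed $1/(\mu s^2)$ tail decay is wrong.

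This is precisely the pitfall the paper flags at the start of its proof: the $L^1$ (as opposed to $L^2$) nature of this estimate forces one to abandon sharp frequency truncation. The paper's remedy is to replace $\indicator(\labs{\delta}<\mu)$ by a \emph{smooth} bump $\hat{b}$ supported on $\labs{\delta}<\mu$ with $\hat{b}(0)=1$, whose inverse Fourier transform $b$ decays superpolynomially. The error kernel is then $f-b*f$ (with $f$ the box $\indicator([t_1,t_2])$), and $\norm{f-b*f}_1$ is finite: away from the edges $t_1,t_2$ the convolution tracks $f$ closely (errors summing to $\bigO{1/\mu}$), and in $\bigO{1/\mu}$-neighborhoods of the edges one uses $\norm{f-b*f}_\infty=\bigO{1+\labs{\log(\mu|t_2-t_1|)}}$ coming from $\norm{\hat{b}\hat{f}}_1$, producing the stated log. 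So to repair your plan you must swap the sharp indicator for a band-limited smooth bump throughout, keep the resulting $f_{b}=\CF^{-1}(\hat{f}\cdot\hat{b})$ as the kernel defining $\vS_\mu$, and then the rest of the reduction you outlined goes through unchanged.

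One further caution: as written, the lemma's support condition $\bra{\psi_j}\vS_\mu\ket{\psi_i}=0$ for $\labs{E_i-E_j}\le\mu$ is opposite to what the paper's own proof constructs (the paper's $\vS_\mu$ lives on $\labs{\nu}<\mu$ and vanishes for $\labs{\nu}\ge\mu$, and this is also what the downstream use in \autoref{prop:single_operator_error_boltzmann} requires to get a tri-block-diagonal operator). You followed the statement literally, which put your $\vS_\mu$ on the high-frequency block; but the sharp-cutoff obstruction above applies either way, since $f*\text{sinc}_\mu\notin L^1$ implies $f-f*\text{sinc}_\mu\notin L^1$ as well.
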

See~\autoref{sec:proof_sec_int} for the proof. Intuitively, the time average $ \int_{t_1}^{t_2} \vA(s)\mathrm{d}s$ weakens the off-diagonal entries (in the $\vH$ eigenbasis) with a large Bohr frequency. Dropping them incurs an error depending on the truncation value $\mu$. 

Applying the secular approximation for the Lamb-shift Hamiltonian yields the following bound.
\begin{cor}[Secular approximation for the Lamb-shift term]\label{cor:Lamb_sec_error} In the setting of~\autoref{thm:CGME_total_fixedpoint}, the exists a Hermitian operator $\vH_{LS,sec}$ such that
\begin{align}
    \bra{\psi_j}\vH_{LS,sec}\ket{\psi_i} = 0 \quad \text{if}\quad \labs{ E_i-E_j } \le \mu
\end{align}
and 
\begin{align}
     \norm{ \vH_{LS} - \vH_{LS,sec} } = \tCO\L( \frac{\norm{\sum_{a\in A} \vA^{a\dagger}\vA^a}}{\mu T}\int_{-T}^{T}\labs{C(s)}\mathrm{d}s \R) = \tCO\L( \frac{1}{\mu T}\R).
\end{align}
\end{cor}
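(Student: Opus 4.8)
\textbf{Proof strategy for \autoref{cor:Lamb_sec_error}.}

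The plan is to express $\vH_{LS}$ as a time-averaged integral of the operator $\sum_{a\in A}\vA^{a\dagger}(s)\vA^a$ against a kernel, apply the time-average secular approximation of \autoref{lem:integral_secular} to the \emph{inner} integration over $t_1$, and then aggregate the resulting errors. First I would perform the change of variables $s := t_2 - t_1$ in \eqref{eq:Lamb}, exactly as in the paragraph preceding the corollary, to rewrite
\begin{align}
    \vH_{LS} = \frac{\ri}{2\sqrt{2\pi}\,T}\int_{-T}^{T}\mathrm{sgn}(-s)\,c(s)\left(\int_{\max(-T/2,\,-T/2-s)}^{\min(T/2,\,T/2-s)}\sum_{a\in A}\vA^{a\dagger}(t_1+s)\,\vA^a(t_1)\,\mathrm{d}t_1\right)\mathrm{d}s.
\end{align}
The crucial observation is that $\vA^{a\dagger}(t_1+s)\vA^a(t_1) = \e^{\ri\vH s}\big(\e^{\ri\vH t_1}\vA^{a\dagger}\e^{-\ri\vH t_1}\vA^a\big)\cdot$... — more carefully, writing $\vB^{a,s} := \e^{\ri\vH s}\vA^{a\dagger}\e^{-\ri\vH s}\vA^a = \vA^{a\dagger}(s)\vA^a$, we have $\vA^{a\dagger}(t_1+s)\vA^a(t_1) = \e^{\ri\vH t_1}\vB^{a,s}\e^{-\ri\vH t_1}$, so the inner integral is precisely a time average $\int_{t_1^-}^{t_1^+}\e^{\ri\vH t_1}\vB^{a,s}\e^{-\ri\vH t_1}\,\mathrm{d}t_1$ of the type handled by \autoref{lem:integral_secular}, with interval length at most $T$.

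Next I would apply \autoref{lem:integral_secular} with the operator norm ($\normp{\cdot}{*} = \norm{\cdot}$, which is unitarily invariant) to obtain, for each $a$ and each fixed $s$, a secular-truncated operator $\vS^{a,s}_{\mu}$ with $\bra{\psi_j}\vS^{a,s}_\mu\ket{\psi_i}=0$ whenever $\labs{E_i-E_j}\le\mu$ and
\begin{align}
    \lnorm{\int_{t_1^-(s)}^{t_1^+(s)}\e^{\ri\vH t_1}\vB^{a,s}\e^{-\ri\vH t_1}\,\mathrm{d}t_1 - \vS^{a,s}_\mu} \le \CO\!\left(\norm{\vB^{a,s}}\frac{1+\log(1/(\mu T))}{\mu}\right).
\end{align}
Since $\norm{\vB^{a,s}} = \norm{\vA^{a\dagger}(s)\vA^a} \le \norm{\vA^{a\dagger}\vA^a}$ (as $\vA^{a\dagger}(s)$ is a unitary conjugate of $\vA^{a\dagger}$), summing over $a$ gives $\sum_a\norm{\vB^{a,s}}\le\norm{\sum_a\vA^{a\dagger}\vA^a}$ by... well, a slightly more careful argument: I would instead group the sum over $a$ inside the integral and apply \autoref{lem:integral_secular} to the \emph{single} operator $\sum_a\vB^{a,s}$, whose norm is bounded by $\norm{\sum_a\vA^{a\dagger}\vA^a}$ using that each $\e^{\ri\vH t_1}(\cdot)\e^{-\ri\vH t_1}$ and the partial-trace-like structure preserve the PSD ordering — more directly, $\sum_a\vA^{a\dagger}(s)\vA^a$ need not be PSD, but $\norm{\sum_a\vA^{a\dagger}(s)\vA^a}=\norm{\sum_a\vA^{a\dagger}\vA^a}$ fails in general, so the honest bound is $\norm{\sum_a\vA^{a\dagger}(s)\vA^a}\le\norm{\sum_a\vA^{a\dagger}\vA^a}^{1/2}\cdot$... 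Here is the cleanest route: apply \autoref{lem:integral_secular} separately to the positive part by noting $\sum_a\vA^{a\dagger}(s)\vA^a$ has operator norm at most $\sum_a\norm{\vA^a}^2$, but under our normalization \eqref{eq:AAdagger} this is at most... In any case, define $\vH_{LS,sec}:=\frac{\ri}{2\sqrt{2\pi}\,T}\int_{-T}^T\mathrm{sgn}(-s)c(s)\,\vS^{s}_\mu\,\mathrm{d}s$ where $\vS^s_\mu$ approximates $\sum_a\int\e^{\ri\vH t_1}\vB^{a,s}\e^{-\ri\vH t_1}\mathrm{d}t_1$; since each $\vS^s_\mu$ has vanishing matrix elements on pairs with $\labs{E_i-E_j}\le\mu$, so does $\vH_{LS,sec}$ (the property is preserved under linear combinations), and Hermiticity follows because $\vH_{LS}$ is Hermitian and the secular truncation commutes with the adjoint — or I would take the Hermitian part of the constructed operator, which only improves the bound. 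Finally, the triangle inequality gives
\begin{align}
    \norm{\vH_{LS}-\vH_{LS,sec}} \le \frac{1}{2\sqrt{2\pi}\,T}\int_{-T}^T\labs{c(s)}\cdot\CO\!\left(\norm{\textstyle\sum_a\vA^{a\dagger}\vA^a}\frac{1+\log(1/(\mu T))}{\mu}\right)\mathrm{d}s = \tilde\CO\!\left(\frac{\norm{\sum_a\vA^{a\dagger}\vA^a}}{\mu T}\int_{-T}^T\labs{c(s)}\,\mathrm{d}s\right),
\end{align}
which under the normalization $\norm{c}_1\le\sqrt{2\pi}$ \eqref{eq:c_normalized} and $\norm{\sum_a\vA^{a\dagger}\vA^a}\le 1$ \eqref{eq:AAdagger} simplifies to $\tilde\CO(1/(\mu T))$.

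The main obstacle I anticipate is the bookkeeping around the $s$-dependent integration limits $\max(-T/2,-T/2-s)$ and $\min(T/2,T/2-s)$: the interval length depends on $s$ and shrinks to zero as $\labs{s}\to T$, so the $\log(1/(\mu\labs{t_2-t_1}))$ factor in \autoref{lem:integral_secular} blows up near the endpoints. I would handle this by splitting the $s$-integral into the bulk $\labs{s}\le T-\delta$ (where the log is controlled) and the boundary layer $T-\delta<\labs{s}\le T$ (where I bound the inner integral trivially by its length $\le\delta$ times $\norm{\sum_a\vA^{a\dagger}\vA^a}$, incurring error $\CO(\delta\norm{c}_1\norm{\sum_a\vA^{a\dagger}\vA^a}/T)$), then optimize $\delta$ — e.g. $\delta\sim 1/\mu$ — to absorb everything into the claimed $\tilde\CO(1/(\mu T))$ bound, since $\int_{-T}^T\labs{c(s)}\log(1/(\mu(T-\labs{s})))\,\mathrm{d}s$ is integrable and contributes only logarithmic factors already hidden in the $\tilde\CO$.
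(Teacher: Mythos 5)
Your approach — change variables to $s = t_2 - t_1$, recognize the inner $t_1$-integral as a time average in the sense of \autoref{lem:integral_secular}, truncate per fixed $s$, and aggregate by the triangle inequality against the kernel $\mathrm{sgn}(-s)c(s)$ — is exactly the paper's argument. The one genuine loose end in what you wrote is the operator-norm prefactor: you correctly observe that $\nrm{\sum_a\vA^{a\dagger}(s)\vA^a} = \nrm{\sum_a\vA^{a\dagger}\vA^a}$ fails in general, start a Cauchy--Schwarz argument, trail off twice, and then your final display silently uses $\nrm{\sum_a\vA^{a\dagger}\vA^a}$ anyway. The fix is the one you were heading for: write $\sum_{a\in A}\vA^{a\dagger}(s)\vA^a = \vD(s)^\dagger\vC$ with $\vC := \sum_{a\in A}\ket{a}\otimes\vA^a$ and $\vD(s) := \sum_{a\in A}\ket{a}\otimes\vA^a(s)$, so that $\nrm{\sum_a\vA^{a\dagger}(s)\vA^a}\le\nrm{\vD(s)}\nrm{\vC}$; both factors equal $\nrm{\sum_a\vA^{a\dagger}\vA^a}^{1/2}$ because $\vD(s)^\dagger\vD(s) = \e^{\ri\vH s}\bigl(\sum_a\vA^{a\dagger}\vA^a\bigr)\e^{-\ri\vH s}$ is a unitary conjugate of $\vC^\dagger\vC$. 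That is the missing line. With it in hand, your treatment of the $s$-dependent interval length is sound (the $\log\bigl(1/(\mu\labs{t_2-t_1})\bigr)$ from \autoref{lem:integral_secular} remains integrable against $\labs{c(s)}$; alternatively set $\vS^{(s)}_\mu := 0$ on the boundary layer $\labs{s}>T-1/\mu$, where the inner integral is trivially $\CO(1/\mu)$ in norm), and your Hermiticity remark is also fine — the truncation condition $\labs{E_i-E_j}\le\mu$ is symmetric under swapping $i\leftrightarrow j$, so it is preserved by the adjoint and by taking the Hermitian part.
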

\begin{proof}
Apply secular approximation to the inner integral (which depends on $s$) to obtain 
\begin{align}
    \vH_{LS,sec}&:= \frac{\ri}{2T}\int_{-T}^{T} \textrm{sgn}(-s)C(s) \vS^{(s)}_{\mu} \mathrm{d}s 
\end{align}
and calculate
\begin{align}
    \norm{ \vH_{LS} - \vH_{LS,sec} }
    &\le \frac{1}{2T}\int_{-T}^{T}\labs{C(s)} \lnorm{\int_{\max(-T/2,-T/2-s)}^{\min(T/2-s,T/2)}(\sum_{a\in A}\vA^{a\dagger}(s)\vA^a)(t_1)\mathrm{d}t_1 - \vS^{(s)}_{\mu} }\mathrm{d}s.
\end{align}
Use the secular approximation (\autoref{lem:integral_secular}) for the integral over $t_1$ to conclude the proof.
\end{proof}

\subsubsection{Approximate detailed balance}
Thirdly, we also control the error for approximate detailed balance.

\begin{lem}[Apprximate detailed balance for the unitary part]\label{lem:sec_open_error_from_Boltzmann} In the setting of~\autoref{thm:CGME_total_fixedpoint}, if $\beta \mu \le 1$,
\begin{align}
\lnormp{\CL^{\dagger}_{uni} + \sqrt{\vrho}\CL_{uni}[\sqrt{\vrho}^{-1}\cdot\sqrt{\vrho}^{-1} ]\sqrt{\vrho}}{2-2} \le \tCO\L(\beta \mu \int_{-T}^{T}\labs{C(s)} \,\mathrm{d}s\R) = \tCO\L(\beta \mu \R).
\end{align}
\end{lem}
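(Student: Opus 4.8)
The plan is to mimic the structure used for the dissipative part but now tracking the coherent term $\CL_{uni}$ together with its ``sign-flipped similarity transform.'' Concretely, I would write $\CL_{uni}[\cdot] = -\ri[\vH + \vH_{LS},\cdot]$ and use that the Hamiltonian part $-\ri[\vH,\cdot]$ commutes with conjugation by $\vrho^{\pm 1/2}$ (since $[\vH,\vrho]=0$), so it contributes exactly zero to the quantity being bounded. This reduces the claim to controlling $\lnormp{(-\ri[\vH_{LS},\cdot])^{\dagger} + \sqrt{\vrho}(-\ri[\vH_{LS},\cdot])[\sqrt{\vrho}^{-1}\cdot\sqrt{\vrho}^{-1}]\sqrt{\vrho}}{2-2}$. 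Using $(-\ri[\vH_{LS},\cdot])^{\dagger} = +\ri[\vH_{LS},\cdot]$ (for Hermitian $\vH_{LS}$, the adjoint of a commutator superoperator flips sign), the two terms become $\ri([\vH_{LS},\cdot]) + \sqrt{\vrho}(-\ri[\vH_{LS},\cdot])[\sqrt{\vrho}^{-1}\cdot\sqrt{\vrho}^{-1}]\sqrt{\vrho}$, which vanishes exactly if $\vH_{LS}$ commutes with $\vrho$. So the entire bound is really a measure of how far $\vH_{LS}$ is from commuting with $\vrho$.

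\textbf{Key steps.} First I would pass to the secular-approximated Lamb-shift operator $\vH_{LS,sec}$ from \autoref{cor:Lamb_sec_error}, paying the error $\norm{\vH_{LS} - \vH_{LS,sec}} = \tCO(\frac{1}{\mu T})$ in the superoperator $2$-$2$ norm (the commutator superoperator is bounded by $2$ times the operator norm, and conjugation by $\vrho^{\pm 1/2}$ only appears on the already-subtracted cross terms, so one must be slightly careful but the bookkeeping is routine). Second, for the secular-approximated operator, I would decompose $\vH_{LS,sec}$ into its matrix elements in the energy eigenbasis, $\vH_{LS,sec} = \sum_{i,j}\vP_{E_i}\vH_{LS,sec}\vP_{E_j}$, where by construction only blocks with $|E_i - E_j| > \mu$ survive. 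Conjugating by $\vrho^{\pm 1/2}$ multiplies the $(i,j)$ block by $\e^{\mp\beta(E_i - E_j)/2}$; hence the key quantity $\sqrt{\vrho}[\vH_{LS,sec},\cdot]\sqrt{\vrho}^{-1} - [\vH_{LS,sec},\cdot]$ picks up factors $(\e^{\mp\beta(E_i-E_j)/2} - 1)$ on each block. But the secular cutoff here goes the ``wrong way'' — it enforces $|E_i - E_j|$ to be \emph{large}, not small — so a naive bound $|\e^{\beta(E_i-E_j)/2}-1|$ can be huge. The resolution (and this is where one leans on the structure) is that $\vH_{LS,sec}$ arose from a \emph{time average} over a window of length $\sim T$ weighted by $C(s)$, so its off-diagonal blocks at Bohr frequency $\nu$ decay like $1/\nu$ beyond $\mu$, exactly as in \autoref{lem:integral_secular}; combined with the Boltzmann factor this yields a convergent estimate once one also uses $\beta\mu \le 1$ to keep the factor near the cutoff tame. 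I would organize this via the same projector-partition-per-$\mu$ trick as in \autoref{lem:error_from_Boltzmann} (the projectors $\vP_i$ onto energy windows of width $\mu$), reducing the $2$-$2$ operator norm to a max over diagonal blocks by \autoref{lem:bandNorm}, and then estimating the geometric-type sum $\sum_{\ell}\e^{\beta\mu\ell/2}/\ell$ against $\int_{-T}^{T}|C(s)|\,\mathrm{d}s$, which is $\bigO{1}$ by the normalization $\norm{c}_1 \le \sqrt{2\pi}$~\eqref{eq:c_normalized}. Collecting, one gets the claimed $\tCO(\beta\mu\int_{-T}^T|C(s)|\,\mathrm{d}s) = \tCO(\beta\mu)$.

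\textbf{Main obstacle.} The delicate point is the second step: unlike the bilinear secular approximation in \autoref{lem:error_from_Boltzmann} where the cutoff forces $|E_i - E_j|$ to be \emph{small} (so the Boltzmann correction $\e^{\beta(E_i-E_j)/2}-1$ is automatically $O(\beta\mu)$), here the Lamb-shift's secular approximation truncates off-diagonal blocks that are \emph{close} in energy, leaving exactly the far-apart blocks where $\e^{\beta(E_i-E_j)/2}$ is large. The whole bound only works because those far-apart blocks are suppressed by the $1/\nu$ decay of the time average (Dirichlet-kernel-type tail), and one must make this quantitative and then fight the exponential growth of the Boltzmann factor against the polynomial decay — this requires the logarithmic weighting from \autoref{lem:integral_secular} and careful summation, and is the source of the $\tilde{\CO}$ polylog factors. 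A secondary bookkeeping nuisance is that $\CL_{uni}$ also contains $-\ri[\vH,\cdot]$ whose adjoint and similarity transform both cancel, but one must confirm this cancellation is clean after the sign-flip convention in $\CL'$ and that it does not interact with the secular truncation; this is straightforward once spelled out.
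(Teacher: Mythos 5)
Your high-level reduction is correct: the $\vH$ part of $\CL_{uni}$ commutes with $\vrho$ and cancels, so the quantity reduces to $\ri(\vH_{LS,sec}-\sqrt{\vrho}\vH_{LS,sec}\sqrt{\vrho}^{-1})[\cdot]-\ri[\cdot](\vH_{LS,sec}-\sqrt{\vrho}^{-1}\vH_{LS,sec}\sqrt{\vrho})$, and the problem becomes an operator-norm bound on how far the (secular-approximated) Lamb-shift is from commuting with the Gibbs state. That much matches the paper.

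The middle of your argument, however, would not close. You take at face value the inequality $|E_i-E_j|\le\mu$ in the displayed secular condition of \autoref{cor:Lamb_sec_error} (and \autoref{lem:integral_secular}), conclude that $\vH_{LS,sec}$ is supported on the \emph{far-apart} blocks $|E_i-E_j|>\mu$, and then try to tame the Boltzmann factor $\e^{\beta(E_i-E_j)/2}$ on those blocks by invoking the $1/\nu$ Dirichlet-kernel decay of the time average. This cannot work: an exponential beats any polynomial, and the sum you write, $\sum_\ell \e^{\beta\mu\ell/2}/\ell$, diverges — there is no ``careful summation'' that rescues it. The claimed balance between the Boltzmann growth and the Dirichlet tail is the gap.

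The actual situation is the opposite, and the paper has a sign typo in the statements (but not the proofs) of \autoref{lem:integral_secular}, \autoref{cor:Lamb_sec_error}, and \autoref{prop:single_operator_error_boltzmann}: the secular cutoff for a time average \emph{keeps} the low Bohr frequencies and \emph{discards} the high ones. This is visible in the proof of \autoref{lem:integral_secular}, where the bump function satisfies $\hat b(x)=0$ for $|x|\ge\mu$, so that $\vS_\mu=\sum_\nu\vA_\nu\,\hat b(-\nu)\hat f(-\nu)$ vanishes on blocks with $|E_i-E_j|\ge\mu$, and correspondingly the error $\nrm{\vH_{LS}-\vH_{LS,sec}}=\tCO(1/(\mu T))$ is exactly the truncated $1/\nu$ tail — if the cutoff went the other way this error would not vanish as $\mu T\to\infty$. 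Once $\vH_{LS,sec}$ is supported only on $|E_i-E_j|<\mu$, it is tri-block-diagonal in the $\mu$-width energy partition, and conjugating a single block by $\sqrt{\vrho}^{\pm1}$ changes it by a factor $(\e^{\pm\beta\mu}-1)=\CO(\beta\mu)$ for $\beta\mu\le1$; this is precisely the block-band argument of \autoref{lem:bandNorm} as used for the $\delta\vec{\CR}$ term in \autoref{lem:error_from_Boltzmann}, packaged here as \autoref{prop:single_operator_error_boltzmann}. In short, the Boltzmann factor is controlled \emph{because} the secular operator has no far-apart blocks, not in spite of having them, and the $1/(\mu T)$ tail error is a separate budget (absorbed in $\nrm{\CL-\CL_{sec}}_{1-1}$ inside the proof of \autoref{thm:CGME_total_fixedpoint}), not a tool for fighting the exponential.

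One smaller remark on bookkeeping: the $\CL_{uni}$ in this lemma is already the secular-approximated coherent part $\CL_{uni,sec}=-\ri[\vH+\vH_{LS,sec},\cdot]$, as introduced in the surrounding proof of \autoref{thm:CGME_total_fixedpoint}; your first step (passing $\vH_{LS}\to\vH_{LS,sec}$ inside this lemma) therefore belongs to a different part of the telescoping and should not reappear here.
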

We present the superoperator form to feed into our existing fixed-point analysis. Still, the calculation essentially reduces to the operator norm.
We will need the following proposition, whose proof is reminiscent of the arguments (\autoref{sec:weaker}) analyzing 
the $\delta \CR$ part of \autoref{lem:error_from_Boltzmann}.
\begin{prop}\label{prop:single_operator_error_boltzmann} Suppose an operator $\vA$ satisfies
\begin{align}
    \bra{E_i} \vA \ket{ E_j} = 0 \quad \text{whenever}  \quad \labs{E_i-E_j} \le \mu
\end{align}
and $\beta \mu \le 1$, then
    \begin{align}
        \norm{\vA - \sqrt{\vrho} \vA \sqrt{\vrho}^{-1}} =\CO( \norm{\vA}\beta \mu).
    \end{align}
\end{prop}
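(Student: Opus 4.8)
\textbf{Proof plan for Proposition~\ref{prop:single_operator_error_boltzmann}.} The plan is to exploit the spectral-gap structure of $\vA$ in the energy eigenbasis, exactly as in the analysis of the $\delta\vec{\CR}$ term in \autoref{lem:error_from_Boltzmann}. First I would partition the spectrum of $\vH$ into bands of width $\mu$ via the projectors $\vP_i:=\sum_{j\colon E_j/\mu\in[(i-\frac12),(i+\frac12))}\ketbra{\psi_j}{\psi_j}$, which commute with $\vH$ and resolve the identity $\sum_i\vP_i=\vI$. Since $\bra{\psi_i}\vA\ket{\psi_j}=0$ whenever $\labs{E_i-E_j}\le\mu$, the nonzero blocks $\vP_i\vA\vP_j$ satisfy $\labs{i-j}\ge 1$ once we account for the rounding: an eigenvector pair contributing to $\vP_i\vA\vP_j$ has $\labs{E_i-E_j}>\mu$, so $\labs{i-j}\ge\labs{E_i/\mu-E_j/\mu}-1>0$, and in fact the relevant band offset is bounded, giving a decomposition $\vA=\sum_{\labs{\ell}\ge 1,\,\labs{\ell}\le L}\sum_i\vP_i\vA\vP_{i+\ell}$ for a small constant $L$ (here only $\labs{\ell}\ge 1$ matters, but boundedness of $\ell$ is not needed — one can keep the full sum over $\ell$ and use orthogonality).

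Next I would introduce $\delta\vH_j:=\frac{\beta}{4}\bigl(\vP_j\vH\vP_j - j\mu\vP_j\bigr)$ and $\delta\vH:=\sum_j\delta\vH_j$, exactly as in \eqref{eq:sigmaPj}, so that $\vrho^{\mp1/4}\vP_j=(\tr\e^{-\beta\vH})^{\pm1/4}\e^{\pm\frac{\beta\mu}{4}j}\vP_j\e^{\pm\delta\vH}$ and $\nrm{\delta\vH}\le\frac{\beta\mu}{8}$. Conjugating a single block by $\vrho^{\pm1/4}$ then gives
\begin{align}
\vrho^{1/4}\vP_i\vA\vP_{i+\ell}\vrho^{-1/4}
= \e^{\frac{\beta\mu}{4}(-\ell)}\vP_i\,\e^{\delta\vH}\vA\,\e^{-\delta\vH}\,\vP_{i+\ell}
\quad\text{(up to the cancelling normalization factors)},
\end{align}
so after summing over $i$ and using \autoref{lem:bandNorm} (the $\vP_i$'s are mutually orthogonal) one reduces to bounding, for each fixed offset $\ell$,
\begin{align}
\bigl\lVert \e^{(\delta\vH - \tfrac{\beta\mu}{4}\ell\vI)}\vA\,\e^{-\delta\vH} - \vA \bigr\rVert.
\end{align}
Now I would estimate this by the same telescoping used in \eqref{eq:dRbound}: insert and subtract $\vA\e^{-\delta\vH}$, use the triangle inequality together with $\labs{\e^x-1}\le 2\labs{x}$ valid for $\labs{x}\le\frac54$ (which holds since $\nrm{\delta\vH}\le\beta\mu/8$ and $\beta\mu\le1$), $\nrm{\delta\vH - \frac{\beta\mu}{8}\ell\vI}=O(\beta\mu)$, and $\nrm{\e^{\pm\delta\vH}}\le 1+O(\beta\mu)$, to conclude that the displayed quantity is $O(\beta\mu\,\nrm{\vA})$. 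Summing over the (constantly many, or telescoping-orthogonal) values of $\ell$ preserves the bound, giving $\nrm{\vA - \vrho^{1/4}\vA\vrho^{-1/4}}=O(\beta\mu\,\nrm{\vA})$; replacing $\vrho^{1/4}$ by $\sqrt{\vrho}$ only changes constants since the identical argument with $\delta\vH\to 2\delta\vH$ still has the exponent bounded by $\beta\mu/4\le\frac14$.

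The main obstacle — as always in these estimates — is ensuring that the sum over the band index $i$ does not blow up the error: naively $\vA$ has unboundedly many nonzero blocks, but the point is that the perturbation acts \emph{block-diagonally} in the sense that after the conjugation the correction operator is again a sum of operators $\vP_i(\cdots)\vP_{i+\ell}$ with mutually orthogonal left and right supports for each fixed $\ell$, so \autoref{lem:bandNorm} converts the sum over $i$ into a maximum rather than an accumulation. Once that structural observation is in place, the rest is the elementary exponential estimate already carried out in \eqref{eq:dRbound}; I expect no genuine difficulty beyond bookkeeping the rounding of the spectrum (which, unlike in \autoref{lem:error_from_Boltzmann}, is harmless here because $\vA$ is a single operator rather than a Fourier-transformed jump and we are free to use the exact spectrum of $\vH$ directly).
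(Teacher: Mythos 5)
Your decomposition runs in exactly the wrong direction, and the hedge you offer to cover this up does not hold. As stated, the hypothesis says $\vA$ vanishes when $\labs{E_i-E_j}\le\mu$, i.e.\ $\vA$ is supported on \emph{large} Bohr frequencies, which is what leads you to $\labs{\ell}\ge 1$ with no upper bound. But then the claimed conclusion is actually false: take $\vA=\ketbra{E_1}{E_2}$ with $\beta(E_1-E_2)\gg 1$ and $E_1-E_2>\mu$; then $\sqrt{\vrho}\vA\sqrt{\vrho}^{-1}=\e^{-\beta(E_1-E_2)/2}\vA\approx 0$, so $\nrm{\vA-\sqrt{\vrho}\vA\sqrt{\vrho}^{-1}}\approx\nrm{\vA}$, which is not $\CO(\beta\mu\nrm{\vA})$ when $\beta\mu\ll 1$. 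The paper's statement has a sign typo — the condition should be $\labs{E_i-E_j}\ge\mu$ (vanishing at large Bohr frequency), as is clear from the secular operators it is applied to (their Fourier weight $\hat b(x)\hat f(x)$ is supported on $\labs{x}<\mu$), and as confirmed by the paper's proof which explicitly declares $\vA$ ``tri-block-diagonal.'' Under the corrected hypothesis, the band offset is bounded by $\labs{\ell}\le 1$, and your telescoping argument does go through.

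The specific place where your write-up would fail even if you ran with it is the parenthetical ``boundedness of $\ell$ is not needed — one can keep the full sum over $\ell$ and use orthogonality.'' That is not correct. \autoref{lem:bandNorm} collapses the sum over the band index $i$ for each fixed offset $\ell$, but you still have a triangle inequality over $\ell$, and the per-$\ell$ term contains the factor $\labs{1-\e^{\mp\beta\mu\ell/2}}$ together with $\nrm{\delta\vH-\tfrac{\beta\mu}{2}\ell\vI}=\bigTh{\beta\mu\labs{\ell}}$: for large $\labs{\ell}$ the estimate $\labs{\e^x-1}\le 2\labs{x}$ fails, the factor grows exponentially in $\labs{\ell}$ when the sign is unfavorable, and the sum over $\ell$ diverges (there is no decay of $\nrm{\vP_i\vA\vP_{i+\ell}}$ in $\ell$ beyond $\nrm{\vA}$). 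You also assert ``the relevant band offset is bounded'' without justification — that claim simply does not follow from $\labs{E_i-E_j}>\mu$. These are not bookkeeping issues; the argument only closes once you use the tri-block-diagonal structure coming from the corrected ($\ge$) hypothesis, which is exactly what the paper does.
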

\begin{proof} Consider nearby energy projectors at energy resolution $\mu$.
\begin{align}
\vI = \sum_{a\in \BZ} \vP_{a\mu} \quad \text{where}\quad \vP_{a\mu} := \sum_{ (a+\frac{1}{2})\mu > E\ge (a-\frac{1}{2} )\mu } \vP_{E}.    
\end{align}
Then, the matrix $\vA$ is tri-block-diagonal $\vA = \vU +\vL+\vD$ with blocks labeled by integer multiples of $\mu$. For the lower-diagonal-blocks $\vL = \sum_a\vP_{(a+1)\mu} \vA\vP_{a\mu}$, we evaluate the commutator for each term
    \begin{align}
        \vP_{(a+1)\mu} \vA\vP_{a\mu} - \sqrt{\vrho}\vP_{(a+1)\mu} \vA\vP_{a\mu}\sqrt{\vrho}^{-1} = \vP_{(a+1)\mu} \vA\vP_{a\mu} - \e^{-\beta \vH'/2}\vP_{(a+1)\mu} \vA\vP_{a\mu}\e^{\beta \vH'/2} 
    \end{align}
    where 
    \begin{align}
        \vH' = \vH - (a+\frac{1}{2})\vI \quad \text{such that}\quad \norm{\vH'} \le \mu.
    \end{align}
    Therefore, 
    \begin{align}
        \norm{ \vL -  \sqrt{\vrho}\vL\sqrt{\vrho}^{-1}} &\le \max_a \norm{\vP_{(a+1)\mu} \vA\vP_{a\mu} - \sqrt{\vrho}\vP_{(a+1)\mu} \vA\vP_{a\mu}\sqrt{\vrho}^{-1}} \tag*{(By~\autoref{lem:bandNorm})}\\
        &= \CO(\beta \mu) \tag*{(By $\labs{\e^x - 1} \le 2\labs{x}$ for $\labs{x} \le 1$)}.
    \end{align}
    The bounds on $\vD$ and $\vU$ are analogous.
\end{proof}
\begin{proof}[Proof of \autoref{lem:sec_open_error_from_Boltzmann}]
Expand
\begin{align}
    \CL^{\dagger}_{uni}[\vA]+ \sqrt{\vrho}\CL_{uni}[\sqrt{\vrho}^{-1}\vA\sqrt{\vrho}^{-1} ]\sqrt{\vrho} &= \ri \vH_{LS,sec}\vA - \ri \vA\vH_{LS,sec} - \ri \sqrt{\vrho} \vH_{LS,sec} \sqrt{\vrho}^{-1}\vA + \ri \vA\sqrt{\vrho}^{-1} \vH_{LS,sec} \sqrt{\vrho}\\
    & = \ri \L(\vH_{LS,sec} - \sqrt{\vrho} \vH_{LS,sec} \sqrt{\vrho}^{-1}\R)\vA - \ri \vA \L(\vH_{LS,sec}- \sqrt{\vrho}^{-1} \vH_{LS,sec} \sqrt{\vrho}\R).
\end{align}
The Hamiltonian term $H$ disappears because it commutes with the Gibbs state. Now, Holder's inequality reduces the superoperator norm to the operator norm, which can be controlled by~\autoref{prop:single_operator_error_boltzmann}
\begin{align}
    \lnorm{\vH_{LS,sec} - \sqrt{\vrho} \vH_{LS,sec} \sqrt{\vrho}^{-1}}&\le  \CO(\norm{\vH_{LS,sec}} \beta \mu) = \CO\L( (\norm{\vH_{LS}}+\norm{\vH_{LS}-\vH_{LS,sec}} )\beta \mu\R)
\end{align}
Use~\autoref{cor:Lamb_sec_error} and~\autoref{lem:integral_secular} to conclude the proof.
\end{proof}

\subsection{Altogether: Proof of fixed point correctness ({\autoref{thm:CGME_total_fixedpoint})}}\label{sec:full_CGME_altogether}
We now put together the error bounds for the full CGME \Lword{}.
\begin{proof}[Proof of \autoref{thm:CGME_total_fixedpoint}]
Recall the bound on the fixed point error
\begin{align}
    &\normp{\vrho_{fix}(\CL) - \vrho}{1} \\
    &= \normp{\vrho_{fix}(\CL) - \vrho_{fix}(\CL_{sec})}{1}+ \normp{\vrho_{fix}(\CL_{sec}) - \vrho_{fix}(\CL')}{1} \\
    &\le  \CO\L( \normp{\CL - \CL_{sec}}{1-1}t_{mix}(\CL)+ \frac{\normp{\CD(\vrho, \CL_{sec}) -\CD(\vrho, \CL')}{2-2}}{\varsigma_{-2}( \CD(\vrho, \CL_{sec}))}\R)\\
    & \le \CO\bigg( \normp{\CL - \CL_{sec}}{1-1}t_{mix}(\CL) +\lnormp{\CL_{diss}  -\CL_{diss,sec}}{1-1}t_{mix}(\CL_{diss})\\    &+\L(\normp{\CD(\vrho,\CL_{uni,sec})+\CD(\vrho,\CL_{uni,sec})^{\dagger}}{2-2}+\normp{\CD(\vrho,\CL_{diss,sec}) -\CD(\vrho,\CL_{diss,sec})^{\dagger}}{2-2}\R) t_{mix}(\CL_{diss}) \bigg)\label{eq:tLtL_diss}
\end{align}
where we compare the fixed points of $\CL_{sec}$ and $\CL'$ by eigenvector perturbation (\autoref{prop:eigenvector_perturb_gen}, noting that $\CL_{sec}$ contains an eigenvalue zero as it generates a CPTP map; $\CL'$ has the Gibbs state as its fixed point, which has eigenvalue zero). The third inequality bound the singular value by the mixing time: apply Fan-Hoffman~\cite[Proposition III.5.1]{bhatia1997MatrixAnalysis} and use perturbation bounds for sorted singular values
\begin{align}
    2\varsigma_{-2}( \CD(\vrho, \CL_{sec})) &\ge \varsigma_{-2} \L(\CD(\vrho, \CL_{sec})+
    \CD(\vrho, \CL_{sec})^{\dagger} \R)\\
    & \ge \lambda_2 \L(\CD(\vrho, \CL_{diss,sec})+
    \CD(\vrho, \CL_{diss,sec})^{\dagger}\R) - \lnormp{\CD(\vrho, \CL_{uni,sec})+
    \CD(\vrho, \CL_{uni,sec})^{\dagger}}{2-2}\\
    &\ge \Omega(\frac{1}{t_{mix}(\CL_{diss,sec})}) - \lnormp{\CD(\vrho, \CL_{uni,sec})+
    \CD(\vrho, \CL_{uni,sec})^{\dagger}}{2-2}\\
    &\ge \Omega(\frac{1}{t_{mix}(\CL_{diss})}) - \lnormp{\CD(\vrho, \CL_{uni,sec})+
    \CD(\vrho, \CL_{uni,sec})^{\dagger}}{2-2}\label{eq:diss_sec_to_diss}.
    \end{align}
The rest are analogous to~\eqref{eq:L_fL_sec}, except that we have to manually include the term $\lnormp{\CL_{diss}-\CL_{diss,sec}}{1-1}t_{mix}(\CL_{diss})$ to ensure Eqn.~\eqref{eq:diss_sec_to_diss} holds.

By secular approximation with truncation energy $\mu$ and $\mu'$ (which will be set to different values to minimize the error bounds), 
\begin{align}
    \norm{\vH_{LS}-\vH_{LS,sec}} &= \tCO(\frac{1}{\mu T})\quad \text{and}\quad \normp{\CL_{diss} - \CL_{diss,sec}}{1-1} = \CO(\frac{1}{\sqrt{\mu' T}})
\end{align}
which combines to
\begin{align}
\normp{\CL - \CL_{sec}}{1-1} &= \lnormp{\ri[\vH_{LS},\cdot] + \CL_{diss} - \ri[\vH_{LS,sec},\cdot] -\CL_{diss,sec}}{1-1}\\
&= \tCO\L(\frac{1}{T\mu} + \frac{1}{\sqrt{\mu' T}}  \R)
\end{align}
The second inequality reduces the superoperator norm $\normp{\cdot}{1-1}$ to operator norm by $\normp{\vA \vrho}{1} \le \norm{\vA}\normp{\vrho}{1}$. Next, we combine the approximate detailed balance-type errors from the Lamb-shift term (\autoref{lem:sec_open_error_from_Boltzmann}) and the dissipative term (\autoref{lem:diss_Approx_DB})
\begin{align}
&\normp{\CD(\vrho,\CL_{uni,sec})+\CD(\vrho,\CL_{uni,sec})^{\dagger}}{2-2}+\normp{\CD(\vrho,\CL_{diss,sec}) -\CD(\vrho,\CL_{diss,sec})^{\dagger}}{2-2} \\
    &= \tCO\L(\beta\mu + \beta\sqrt{\frac{\mu'}{T}}\R).
\end{align}
Altogether, choose $\mu = \sqrt{\frac{1}{\beta T}} \le \frac{1}{\beta}$ and $\mu' = \frac{1}{\beta}$ so that
\begin{align}
    \eqref{eq:tLtL_diss} 
    &\le \tCO\L( \max\L( t_{mix}(\CL), t_{mix}(\CL_{diss})\R)\cdot \sqrt{ \frac{\beta}{T} }\R),
\end{align}
which concludes the proof.

\end{proof}

\subsection{Proof for secular approximation for time average (\autoref{lem:integral_secular})}
\label{sec:proof_sec_int}
Intuitively, we want to truncate the Bohr frequency far from zero. Unfortunately, the sharp truncation from~\autoref{sec:secular} does not seem to work here because the truncation error is related to the 1-norm $\norm{f}_1$ (instead of 2-norm $\norm{f}_2$). The 1-norm is more delicate to handle, forcing us to \textit{smoothly} truncate the tail and explicitly evaluate the Fourier Transform in the time domain. Pictorially, the time domain function becomes a smeared version of the sharp window function $\indicator(\labs{t}\le T)$ where the discontinuity is smoothed out due to convolution with a smooth bump function.

\begin{proof}[Proof of~\autoref{lem:integral_secular}]
Without loss of generality, we can conjugate with time evolution to shift the integral so that $t_2 = -t_1 = T/2$. Let $f(t) = \indicator(\labs{t} \le T/2)$, then 
    \begin{align}
        \frac{1}{\sqrt{2\pi}} \int_{-T/2}^{T/2} \vA(s)\mathrm{d}s= \frac{1}{\sqrt{2\pi}} \int_{-\infty}^{\infty}f(s)\vA(s)\mathrm{d}s&= \sum_{\nu} \vA_{\nu} \hat{f}(-\nu)
    \end{align}
    with $\hat{f}(\omega) = \frac{\e^{-\ri \omega T/2}-\e^{\ri \omega T/2}}{\sqrt{2\pi}\omega }$. Let us truncate the frequency domain function  
\begin{align}
            \frac{1}{\sqrt{2\pi}} \int_{-\infty}^{\infty} ( b*f )(s)\vA(s)\mathrm{d}s &= \sum_{\nu} \vA_{\nu} \hat{b}\cdot \hat{f}(-\nu) =: \vS_{\mu}
\end{align}
by multiplying with a carefully chosen smooth bump function
\begin{align}
    \hat{b}(x) := \begin{cases}
        0 & \text{if}\quad \labs{x} \ge \mu\\
        1 & \text{if}\quad x = 0\\
        \le 1 &\text{else}
    \end{cases}.
\end{align}
Then, for any unitarily invariant norm,
\begin{align}
    \lnormp{ \int_{-T/2}^{T/2} \vA(s)\mathrm{d}s - \vS_{\mu} }{*} &\le \frac{1}{\sqrt{2\pi}}  \norm{f- b*f}_1\\
    &= \frac{1}{\sqrt{2\pi}} \L( \int_{W^c_{\epsilon}}  (f-b*f)(t) \rd t + \int_{W_{\epsilon}}  (f-b*f)(t) \rd t\R)\label{eq:intW}\\
    &\le \CO\L( \frac{1}{\mu}+ \frac{1+\log(\frac{1}{\mu T})}{\mu} \R).
\end{align}
The second equality separately evaluates the integral around $\epsilon$-balls near $\pm T$
\begin{align}
    W_{\epsilon} := [\frac{T}{2}-\epsilon,\frac{T}{2}+\epsilon]\cup [-\frac{T}{2}-\epsilon,-\frac{T}{2}+\epsilon] \quad \text{for}\quad \epsilon = \frac{1}{\mu}.
\end{align}
For each $t\in W^c_{\epsilon}$, the convolution is point-wise close to the original value
\begin{align}
    \int_{-\infty}^{\infty} f(t-s)b(s) \rd s- f(t)& = f(t) \L(\int_{-\infty}^{\infty} b(s)\rd s - 1\R) + r(\mu \labs{t-T}) 
\end{align}
up to an error $r(x)$ falling super-polynomially with $\labs{x}$. Thus, the integral over $W^c_{\epsilon}$~\eqref{eq:intW} is then bounded by 
\begin{align}
    2\int_{\labs{t-T} \ge \mu} \labs{ r(\mu \labs{t-T})}\rd t = \frac{4}{\mu} \int_{1}^{\infty} \labs{r(x)} \rd x = \CO(\frac{1}{\mu}).
\end{align}
The main error arises from the sharp edge at $\pm T$; we invoke general norm bounds $\norm{f}_{\infty} = 1$ and 
\begin{align}
    \norm{ b*f}_{\infty} &\le \frac{1}{\sqrt{2\pi}} \norm{\hat{b}\cdot \hat{f}}_1  \\
    &\le \frac{1}{\sqrt{2\pi}} \L(\int_{-1/T}^{1/T} \labs{ \hat{f}(\omega)} \rd \omega + (\int_{1/T}^{\mu}+\int^{-1/T}_{-\mu}) \labs{\hat{f}(\omega)} \rd \omega\R)\tag*{(By $\norm{\hat{b}}\le 1$ )}\\
    &= \CO\L(1 + \log(\frac{1}{\mu T})\R) \quad \tag*{(By $\labs{\hat{f}(\omega)} \le \min(\frac{T}{\sqrt{2\pi}}, \frac{1}{\sqrt{2\pi}\omega})$ )}
\end{align}
and integrate over $W_{\epsilon}$ to obtain the bound.
\end{proof}

\section{Spectral bounds and mixing times}
\label{sec:spectral_bounds_mixing_times}
In this section, we present missing proofs for lemmas and propositions. While some arguments are standard and included merely for completeness, controlling the spectrum of nearly Hermitian matrices requires a substantial linear algebraic argument. We begin with eigenvalue and eigenvector perturbation theory (\autoref{sec:eigen_perturbation}), which is crucial for establishing mixing time bounds (\autoref{sec:gap_and_convergence}) and the correctness of fixed points (\autoref{sec:proof_for_approx_DB}).

\subsection{Perturbation bounds for eigenvalues and eigenvectors}\label{sec:eigen_perturbation}
In this section, we present some useful bounds for eigenvalue and eigenvector perturbation.

\begin{prop}[{Bauer-Fike Theorem with multiplicity, cf.\ \cite[Theorem VI.3.3 \& Problem VI.8.6]{bhatia1997MatrixAnalysis}}]\label{prop:Bauer-Fike}
    Perturb a normal matrix $\vN$ by an arbitrary matrix $\vA$. Then, the spectrum of $\vN$ and $\vN+\vA$ are $\nrm{\vA}$-close to each other:
\begin{align}
		\text{Spec}(\vN+\vA) &\subset \cup_{s\in \text{Spec}(\vN)} D(s,\norm{\vA}), \quad \kern4.5mm\text{ and }\\
		\text{Spec}(\vN) &\subset \cup_{s\in \text{Spec}(\vN+\vA)} D(s,\norm{\vA}), \quad \text{ where } \quad D(s,\epsilon)=\{z\in\mathbb{C}\colon |z-s|\leq \epsilon\}.
\end{align}
Moreover, the connected components of $\cup_{s\in \text{Spec}(\vN)} D(s,\norm{\vA})$ contain an equal number of eigeinvalues of $\vN$ and $\vN+\vA$ when counted with algebraic multiplicity.
\end{prop}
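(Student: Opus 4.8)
\textbf{Proof plan for the Bauer--Fike theorem with multiplicity (\autoref{prop:Bauer-Fike}).}

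The plan is to first handle the simpler inclusion statements and then bootstrap to the multiplicity claim via a homotopy argument. For the first inclusion, let $\lambda\in\text{Spec}(\vN+\vA)$ and suppose $\lambda\notin D(s,\nrm{\vA})$ for every $s\in\text{Spec}(\vN)$, i.e.\ $|\lambda-s|>\nrm{\vA}$ for all $s$. Since $\vN$ is normal, it is unitarily diagonalizable, so $\vN-\lambda\vI$ is invertible with $\nrm{(\vN-\lambda\vI)^{-1}}=\max_{s\in\text{Spec}(\vN)}|\lambda-s|^{-1}<\nrm{\vA}^{-1}$. Writing $\vN+\vA-\lambda\vI=(\vN-\lambda\vI)\bigl(\vI+(\vN-\lambda\vI)^{-1}\vA\bigr)$ and noting $\nrm{(\vN-\lambda\vI)^{-1}\vA}\le\nrm{(\vN-\lambda\vI)^{-1}}\nrm{\vA}<1$, the Neumann series shows $\vI+(\vN-\lambda\vI)^{-1}\vA$ is invertible, hence $\vN+\vA-\lambda\vI$ is invertible, contradicting $\lambda\in\text{Spec}(\vN+\vA)$. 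The second inclusion follows by symmetry: apply the first inclusion to the normal matrix $\vN+\vA$ perturbed by $-\vA$ wait -- $\vN+\vA$ need not be normal; instead note that $\text{Spec}(\vN)=\text{Spec}((\vN+\vA)+(-\vA))$ and repeat the Neumann-series argument verbatim with the roles reversed, using that $\vN$ (the \emph{target}) is normal rather than the perturbed matrix. Indeed the argument above only used normality of $\vN$ to diagonalize $\vN-\lambda\vI$; so to get $\text{Spec}(\vN)\subset\cup_{s\in\text{Spec}(\vN+\vA)}D(s,\nrm{\vA})$ we instead take $\mu\in\text{Spec}(\vN)$, and since $\vN$ is normal, $\vN-\mu\vI$ is \emph{not} invertible, so $\vN+\vA-\mu\vI=(\vN-\mu\vI)+\vA$ -- here we cannot factor out $\vN-\mu\vI$. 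The clean fix is to observe that the statement ``$\text{Spec}(\vN)\subset\cup_{s\in\text{Spec}(\vM)}D(s,\nrm{\vN-\vM})$ whenever $\vN$ is normal'' is precisely what we want with $\vM=\vN+\vA$, and this is again proved by the Neumann series: if $\mu\in\text{Spec}(\vN)$ but $|\mu-s|>\nrm{\vA}$ for all $s\in\text{Spec}(\vM)$, then $\vM-\mu\vI$ is invertible with $\nrm{(\vM-\mu\vI)^{-1}}\le(\min_s|\mu-s|)^{-1}<\nrm{\vA}^{-1}$; but $\vN-\mu\vI=(\vM-\mu\vI)-\vA=(\vM-\mu\vI)(\vI-(\vM-\mu\vI)^{-1}\vA)$ is then invertible, contradicting $\mu\in\text{Spec}(\vN)$.

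For the multiplicity statement I would use a continuity/degree argument. Fix a connected component $\Omega$ of $U:=\cup_{s\in\text{Spec}(\vN)}D(s,\nrm{\vA})$. Consider the family $\vN_\tau:=\vN+\tau\vA$ for $\tau\in[0,1]$. For each $\tau$, the first inclusion (applied with perturbation $\tau\vA$, whose norm is $\tau\nrm{\vA}\le\nrm{\vA}$) gives $\text{Spec}(\vN_\tau)\subset U$, and moreover no eigenvalue of $\vN_\tau$ ever lies on $\partial\Omega$: a boundary point of $\Omega$ is at distance exactly $\nrm{\vA}$ from the nearest point of $\text{Spec}(\vN)$ and at distance $>\nrm{\vA}\ge\tau\nrm{\vA}$ would be needed to escape, but actually one must argue that $\partial U$ is disjoint from $\text{Spec}(\vN_\tau)$ for $\tau<1$ strictly and handle $\tau=1$ by a limiting/openness argument -- this is the technical heart. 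The number of eigenvalues of $\vN_\tau$ inside $\Omega$, counted with algebraic multiplicity, is given by the contour integral $\frac{1}{2\pi i}\oint_{\partial\Omega}\tr\bigl((z\vI-\vN_\tau)^{-1}\bigr)\,dz$, which is an integer-valued continuous function of $\tau$ on any subinterval where $\partial\Omega$ avoids the spectrum, hence constant. At $\tau=0$ it equals the number of eigenvalues of $\vN$ in $\Omega$ (with multiplicity), and at $\tau=1$ the number for $\vN+\vA$; equality follows.

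The main obstacle I anticipate is the $\tau=1$ endpoint: strictly speaking $\partial\Omega$ can be tangent to $D(s,\nrm{\vA})$ and an eigenvalue of $\vN+\vA$ could a priori sit on $\partial\Omega$. The standard remedy is to first prove the multiplicity claim for the slightly enlarged region $U_\epsilon:=\cup_s D(s,\nrm{\vA}+\epsilon)$ with $\epsilon>0$ (where the contour argument runs cleanly for all $\tau\in[0,1]$ since $\text{Spec}(\vN_\tau)\subset U\subset\text{int}(U_\epsilon)$ keeps a safe distance $\epsilon$ from $\partial U_\epsilon$), obtaining that each connected component of $U_\epsilon$ contains equally many eigenvalues of $\vN$ and of $\vN+\vA$; then let $\epsilon\downarrow0$ and use that the components of $U_\epsilon$ shrink to the components of $U$ (a point-set topology fact about finite unions of closed disks) while no eigenvalue of either matrix can migrate across the (spectrum-free) gaps. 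This limiting step, together with verifying that distinct components of $U$ are separated by a positive distance so the bookkeeping is stable, is where the care is needed; everything else is routine. I would cite \cite[Theorem VI.3.3 \& Problem VI.8.6]{bhatia1997MatrixAnalysis} for the packaged statement and only sketch the contour-integral argument for the multiplicity refinement.
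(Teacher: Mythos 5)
Your argument for the first inclusion is correct and standard: normality of $\vN$ gives the sharp resolvent bound $\nrm{(\vN-\lambda\vI)^{-1}}=(\min_{s\in\text{Spec}(\vN)}|\lambda-s|)^{-1}$, after which the Neumann-series factorization works.

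The argument for the second inclusion has a genuine gap. After correctly rejecting the naive symmetric argument, you write: ``if $\mu\in\text{Spec}(\vN)$ but $|\mu-s|>\nrm{\vA}$ for all $s\in\text{Spec}(\vM)$, then $\vM-\mu\vI$ is invertible with $\nrm{(\vM-\mu\vI)^{-1}}\le(\min_s|\mu-s|)^{-1}$.'' That resolvent bound is exactly the step that is \emph{only} valid for normal matrices, and $\vM=\vN+\vA$ is not normal in general. For a non-normal $\vM$ the resolvent norm can be arbitrarily large at points far from $\text{Spec}(\vM)$: take $\vM=\begin{psmallmatrix}0&M\\0&0\end{psmallmatrix}$ and $\mu=1$; then $\text{dist}(\mu,\text{Spec}(\vM))=1$ but $\nrm{(\vM-\mu\vI)^{-1}}=\nrm{\begin{psmallmatrix}-1&-M\\0&-1\end{psmallmatrix}}\approx |M|$, which is as large as you like. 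So the contradiction you try to derive simply does not follow, and this half of the proposition is left unproved. The paper itself does not supply a proof here either; it delegates both inclusions to the cited Bhatia references, which is the correct move, but the supplementary sketch you offer for the second one is incorrect and should be removed or replaced by the citation alone.

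The multiplicity argument is correct and is essentially the paper's argument: interpolate along $\vN+\tau\vA$, note the spectrum is trapped inside $\cup_{s\in\text{Spec}(\vN)}D(s,\nrm{\vA})$ for every $\tau\in[0,1]$, and conclude by continuity that the per-component count is constant. The paper cites Bhatia's Corollary VI.1.6 (continuity of eigenvalues along a continuous matrix path) and then makes the topological observation directly; you instead make the count concrete via the Cauchy integral $\frac{1}{2\pi i}\oint_{\partial\Omega}\tr((z\vI-\vN-\tau\vA)^{-1})\,dz$. Both are fine. Your worry about the $\tau=1$ endpoint and the $\epsilon$-enlargement fix are valid but unnecessary: distinct connected components of a closed set are disjoint (including on their boundaries), so a continuous eigenvalue path trapped in the union cannot cross between components, and the $\tau=1$ count is unambiguous without enlarging the disks.
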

\begin{proof}
	The first half of the statement is the Bauer-Fike Theorem \cite[Theorem VI.3.3]{bhatia1997MatrixAnalysis}. To study the number of eigenvalues per connected component, we consider an interpolation path
\begin{align}
    \vX(t) = \vN + t \vA \quad \text{for}\quad 0\le t \le 1.
\end{align}
    The Bauer-Fike Theorem applied to $\vN + t \vA$ implies for every $0\le t \le 1$ that 
\begin{align}
    \text{Spec}(\vX(t)) \quad \subset \quad\cup_{s\in \text{Spec}(\vN)} D(s,t\norm{\vA}) \quad\subset\quad \cup_{s\in \text{Spec}(\vN)} D(s,\norm{\vA}).
\end{align}
Then, by continuity of eigenvalues along the path, cf.\ \cite[Corollary VI.1.6]{bhatia1997MatrixAnalysis}, no eigenvalues enter or exit the connected components of $\cup_{s\in \text{Spec}(\vN)} D(s,\norm{\vA})$, therefore their number (counted with algebraic multiplicity) is the same for $\vX(0)=\vN$ and $\vX(1)=\vN + \vA$.
\end{proof}

\begin{cor}[Eigenvalue perturbation for discriminants]\label{cor:eigenvalue_perturb}   
	If $\vec{\CD}$ is Hermitian and has norm bounded by $\nrm{\vec{\CD}}\leq 1$, and $\vec{\CD'}$ has a right eigenvector $\ket{\psi}$ with eigenvalue $1$, then the top eigenvalue of $\vec{\CD}$ satisfies
	\begin{align}
		\labs{\lambda_1(\vec{\CD}) - 1} \le \nrm{\vec{\CD}- \vec{\CD}' }.
	\end{align}
\end{cor}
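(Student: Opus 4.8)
The plan is to apply the Bauer–Fike theorem (\autoref{prop:Bauer-Fike}) with the normal matrix being the Hermitian operator $\vec{\CD}$ and the perturbation being $\vec{\CD}' - \vec{\CD}$. Concretely, write $\vec{\CD}' = \vec{\CD} + (\vec{\CD}' - \vec{\CD})$ and set $\vA := \vec{\CD}' - \vec{\CD}$, so $\nrm{\vA} = \nrm{\vec{\CD} - \vec{\CD}'}$. Since $\ket{\psi}$ is a right eigenvector of $\vec{\CD}'$ with eigenvalue $1$, we have $1 \in \mathrm{Spec}(\vec{\CD}')= \mathrm{Spec}(\vec{\CD} + \vA)$. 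The second inclusion of \autoref{prop:Bauer-Fike} then gives $1 \in \cup_{s\in \mathrm{Spec}(\vec{\CD})} D(s, \nrm{\vA})$, i.e.\ there exists some eigenvalue $s$ of $\vec{\CD}$ with $\labs{s - 1} \le \nrm{\vA} = \nrm{\vec{\CD} - \vec{\CD}'}$.

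It remains to promote this statement — ``some eigenvalue of $\vec{\CD}$ is close to $1$'' — to ``the \emph{top} eigenvalue $\lambda_1(\vec{\CD})$ is close to $1$''. Here I would use that $\vec{\CD}$ is Hermitian with $\nrm{\vec{\CD}} \le 1$: its eigenvalues are real and all lie in $[-1,1]$, so $\lambda_1(\vec{\CD}) \le 1$. Combined with the existence of an eigenvalue $s$ satisfying $s \ge 1 - \nrm{\vec{\CD}-\vec{\CD}'}$ and $\lambda_1(\vec{\CD}) \ge s$ (since $\lambda_1$ is the largest eigenvalue), we conclude
\begin{align}
1 - \nrm{\vec{\CD}-\vec{\CD}'} \le s \le \lambda_1(\vec{\CD}) \le 1,
\end{align}
which immediately yields $\labs{\lambda_1(\vec{\CD}) - 1} \le \nrm{\vec{\CD}-\vec{\CD}'}$, as desired.

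I do not anticipate a genuine obstacle here — the statement is essentially a direct corollary of Bauer–Fike once one notes that the relevant eigenvalue of the Hermitian matrix $\vec{\CD}$ must actually be $\lambda_1$, which is forced by the a priori bound $\mathrm{Spec}(\vec{\CD}) \subseteq [-1,1]$. The only point requiring a modicum of care is the direction of the inclusion in \autoref{prop:Bauer-Fike}: one needs the version asserting $\mathrm{Spec}(\vec{\CD}) \subset \cup_{s \in \mathrm{Spec}(\vec{\CD} + \vA)} D(s, \nrm{\vA})$ (equivalently, applying the first inclusion with the roles swapped, using that the perturbation norm is symmetric), so that the eigenvalue $1$ of $\vec{\CD}'$ is the one pinning down a nearby eigenvalue of $\vec{\CD}$. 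Note we do not even need the ``equal multiplicity per connected component'' refinement of \autoref{prop:Bauer-Fike} for this particular corollary, though it would be needed for the companion eigenvector-perturbation statements.
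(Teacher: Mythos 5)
Your argument is substantively correct and is exactly the short deduction the paper intends by labeling this a Corollary of \autoref{prop:Bauer-Fike} without a separate proof: Bauer--Fike locates \emph{some} eigenvalue $s$ of $\vec{\CD}$ with $|s-1|\le\nrm{\vA}$, and the a priori Hermitian bound $\mathrm{Spec}(\vec{\CD})\subseteq[-1,1]$ forces $1-\nrm{\vA}\le s\le\lambda_1(\vec{\CD})\le 1$. That pinning-down step is the genuine content, and you identify it cleanly.

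One small but worth-fixing slip in the bookkeeping: throughout you label what you use as the \emph{second} inclusion, and in the closing paragraph you write that the needed direction is $\mathrm{Spec}(\vec{\CD})\subset\cup_{s\in\mathrm{Spec}(\vec{\CD}+\vA)}D(s,\nrm{\vA})$. That is backwards. What your middle derivation actually (and correctly) uses is the \emph{first} inclusion with $\vN=\vec{\CD}$ as the normal matrix and $\vA=\vec{\CD}'-\vec{\CD}$, i.e.\ $\mathrm{Spec}(\vec{\CD}')=\mathrm{Spec}(\vN+\vA)\subset\cup_{s\in\mathrm{Spec}(\vec{\CD})}D(s,\nrm{\vA})$; this is what turns $1\in\mathrm{Spec}(\vec{\CD}')$ into an eigenvalue of $\vec{\CD}$ near $1$. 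The direction you quote in the final paragraph would instead say every eigenvalue of $\vec{\CD}$ sits near \emph{some} eigenvalue of $\vec{\CD}'$, which does not pick out the eigenvalue $1$ at all; moreover it is not ``the first inclusion with roles swapped'' since the swapped base matrix $\vec{\CD}'$ need not be normal (the paper obtains the second inclusion by a separate continuity argument). None of this affects your actual chain of inequalities, which is right as written.
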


\subsubsection{Eigenvector perturbation bounds}

Intuitively, perturbing a matrix yields small changes in eigenvectors with well-isolated eigenvalues -- we prove this below rigorously under suitable but quite general conditions using a simple linear algebraic argument.

\begin{prop}[Eigenvector perturbation]\label{prop:eigenvector_perturb_gen}
Perturb a matrix $\vM$ by another matrix $\vA$. 
Let $\ket{v}$ be a normalized right eigenvector $\vM\ket{v} = \lambda \ket{v}$, and $\lambda'$ an eigenvalue of $\vM+\vA$. Then the corresponding right eigenvector $(\vM+\vA)\ket{v'} = \lambda' \ket{v'}$ can be normalized such that
\begin{align}
    \braket{v|v} =\braket{v'|v'}=1 \quad \text{and}\quad \normp{\ket{v'} - \ket{v} }{}&\le \frac{2\sqrt{2}\left(\norm{\vA}+|\lambda'-\lambda|\right)}{\varsigma_{-2}\L(\vM - \lambda\vI\R)},
\end{align}
where $\varsigma_{-2}\L(\cdot\R)$ denotes the second-smallest singular value (with multiplicity). Due to Fan-Hoffman~\cite[Proposition III.5.1]{bhatia1997MatrixAnalysis} the singular value $\varsigma_{-2}\L(\vM - \lambda\vI\R) \ge -\lambda_2 \L(\frac{\vM+\vM^{\dagger}}{2}-\mathrm{Re}(\lambda)\vI\R)$ can be bounded in terms of the Hermitian part.
\end{prop}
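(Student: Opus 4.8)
\textbf{Proof plan for Proposition~\ref{prop:eigenvector_perturb_gen}.} The plan is to work entirely with the singular value decomposition of $\vM-\lambda\vI$ and exploit the fact that $\ket{v}$ spans its one-dimensional kernel. First I would fix the SVD $\vM-\lambda\vI = \sum_i \varsigma_i \ket{\ell_i}\!\bra{r_i}$ with $\varsigma_1 = 0$ (since $\vM\ket{v}=\lambda\ket{v}$ gives a zero singular value) and $\ket{r_1}=\ket{v}$ up to phase; all remaining singular values are $\ge \varsigma_{-2}(\vM-\lambda\vI)$. Then I would rewrite the eigenvalue equation for $\ket{v'}$ as
\begin{align}
    (\vM-\lambda\vI)\ket{v'} = (\lambda'-\lambda)\ket{v'} - \vA\ket{v'},
\end{align}
so the right-hand side has norm at most $|\lambda'-\lambda| + \norm{\vA}$ once $\ket{v'}$ is normalized to $\braket{v'|v'}=1$. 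Decomposing $\ket{v'} = \alpha\ket{v} + \ket{w}$ with $\ket{w}\perp\ket{v}$, and noting that $(\vM-\lambda\vI)$ restricted to the orthocomplement of $\mathrm{span}\{\ket{v}\}$ has all singular values $\ge \varsigma_{-2}$, applying $(\vM-\lambda\vI)$ kills the $\alpha\ket{v}$ component and shows $\varsigma_{-2}\norm{\ket{w}} \le \norm{(\vM-\lambda\vI)\ket{w}} = \norm{(\vM-\lambda\vI)\ket{v'}} \le |\lambda'-\lambda| + \norm{\vA}$. Hence $\norm{\ket{w}} \le (\norm{\vA}+|\lambda'-\lambda|)/\varsigma_{-2}(\vM-\lambda\vI)$.

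The remaining work is to pass from the bound on the transverse component $\ket{w}$ to a bound on $\norm{\ket{v'}-\ket{v}}$ after suitably fixing the phase of $\ket{v'}$. Since $\braket{v'|v'} = |\alpha|^2 + \norm{\ket{w}}^2 = 1$, we have $|\alpha| = \sqrt{1-\norm{\ket{w}}^2}$, and choosing the global phase of $\ket{v'}$ so that $\alpha \ge 0$ gives $\alpha = \sqrt{1-\norm{\ket{w}}^2} \ge 1 - \norm{\ket{w}}^2$. Then
\begin{align}
    \norm{\ket{v'}-\ket{v}}^2 = (1-\alpha)^2 + \norm{\ket{w}}^2 \le \norm{\ket{w}}^4 + \norm{\ket{w}}^2 \le 2\norm{\ket{w}}^2,
\end{align}
assuming $\norm{\ket{w}}\le 1$ (and when $\norm{\ket{w}}>1$ the claimed bound is vacuous since the left side is at most $2$ while the right side exceeds $2\sqrt2$). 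Taking square roots yields $\norm{\ket{v'}-\ket{v}} \le \sqrt2\,\norm{\ket{w}} \le \sqrt2(\norm{\vA}+|\lambda'-\lambda|)/\varsigma_{-2}(\vM-\lambda\vI)$; the stated constant $2\sqrt2$ leaves comfortable slack, so no sharpness concerns arise.

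The only genuinely delicate point is the multiplicity bookkeeping in the singular value argument: one must be careful that $\varsigma_1=0$ is a \emph{simple} zero of the singular spectrum in the sense needed, or rather that the bound uses $\varsigma_{-2}$ precisely to avoid assuming simplicity --- the argument above only needs that $(\vM-\lambda\vI)$ maps the orthocomplement of $\ket{v}$ into a space on which its smallest singular value is $\varsigma_{-2}(\vM-\lambda\vI)$, which holds by the variational characterization of singular values regardless of whether $\lambda$ is a degenerate eigenvalue of $\vM$. I would make this precise by observing that for $\ket{w}\perp\ket{v}$, $\norm{(\vM-\lambda\vI)\ket{w}} \ge \varsigma_{-2}(\vM-\lambda\vI)\norm{\ket{w}}$ follows from $\min_{\ket{w}\perp\ket{r_1},\,\norm{\ket{w}}=1}\norm{(\vM-\lambda\vI)\ket{w}} = \varsigma_{-2}$ together with $\ket{r_1}=\ket{v}$. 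Finally, the Fan--Hoffman remark is immediate: $\varsigma_{-2}(\vM-\lambda\vI) = \varsigma_{-2}\big((\vM-\lambda\vI)\big)$ and Fan--Hoffman~\cite[Proposition III.5.1]{bhatia1997MatrixAnalysis} gives $\varsigma_i(\vX) \ge \lambda_i\big(\tfrac{\vX+\vX^\dagger}{2}\big)$ applied to $\vX = \lambda\vI-\vM$ (or to $\vM-\lambda\vI$ with the appropriate sign of the index), yielding $\varsigma_{-2}(\vM-\lambda\vI) \ge -\lambda_2\big(\tfrac{\vM+\vM^\dagger}{2}-\mathrm{Re}(\lambda)\vI\big)$, which I would state as a corollary-style remark rather than re-prove.
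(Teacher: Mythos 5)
Your proof is correct, and it takes a genuinely cleaner route than the paper's. The paper parameterizes $\ket{v'}\propto\ket{v}+\epsilon\ket{v^\perp}$ before normalization, rearranges the eigenvalue equation, and obtains a self-referential inequality $\epsilon\le\frac{(1+\epsilon)\norm{\vA}+|\lambda'-\lambda|}{\varsigma_{-2}-|\lambda'-\lambda|}$ that has to be untangled; this picks up a factor $(1+\epsilon)$ on the $\norm{\vA}$ term and a subtraction in the denominator because the paper applies $(\vM-\lambda'\vI)$ rather than $(\vM-\lambda\vI)$ to $\ket{v^\perp}$. You instead work directly with the unit-normalized $\ket{v'}=\alpha\ket{v}+\ket{w}$ and observe that writing the eigenvalue equation as $(\vM-\lambda\vI)\ket{v'}=(\lambda'-\lambda)\ket{v'}-\vA\ket{v'}$ makes the left side equal to $(\vM-\lambda\vI)\ket{w}$ (the kernel vector drops out) while the right side has norm cleanly bounded by $|\lambda'-\lambda|+\norm{\vA}$ since $\ket{v'}$ is already a unit vector. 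This avoids the self-consistency step entirely, and in fact yields the stronger constant $\sqrt{2}$ in place of the paper's $2\sqrt{2}$. The $\varsigma_{-2}$ lower bound on $\norm{(\vM-\lambda\vI)\ket{w}}$ for $\ket{w}\perp\ket{v}$ is justified correctly via the SVD with $\ket{r_1}=\ket{v}$, and the Fan--Hoffman remark is correctly reduced to the cited proposition. One small redundancy: the normalization $|\alpha|^2+\norm{\ket{w}}^2=1$ already forces $\norm{\ket{w}}\le 1$, so the parenthetical ``when $\norm{\ket{w}}>1$ the claimed bound is vacuous'' guards against a case that never occurs and can be omitted.
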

\begin{proof}
We can assume without loss of generality that $\ket{v'} \propto \ket{v} + \epsilon \ket{v^{\perp}}$ for $\epsilon \in [0,\infty)$ and $\braket{v^{\perp}|v^{\perp}}=1$, yielding
\begin{align}
	(\vM+\vA )(\ket{v} + \epsilon \ket{v^{\perp}}) &\underset{\displaystyle\Updownarrow}{=}\lambda'\vI(\ket{v} + \epsilon \ket{v^{\perp}})\\ 
	\vA (\ket{v} + \epsilon \ket{v^{\perp}}) +(\vM-\lambda'\vI)\ket{v} &= \epsilon (\lambda'\vI-\vM)\ket{v^{\perp}}.
\end{align}
Taking the norms above on both sides and defining $\kappa:=|\lambda'-\lambda|$ we get
\begin{align}
	\epsilon = \frac{\norm{\vA (\ket{v} + \epsilon \ket{v^{\perp}}) +(\lambda-\lambda')\ket{v}} }{\norm{(\vM-\lambda'\vI)\ket{v^{\perp}}}}
	&\underset{\displaystyle\Downarrow}{\le} \frac{(1+\epsilon)\norm{\vA}+\kappa}{\varsigma_{-2}\L(\vM - \lambda\vI\R)-\kappa}\\
	\epsilon &\le \frac{\norm{\vA}+\kappa}{\varsigma_{-2}\L(\vM - \lambda\vI\R)-\nrm{\vA}-\kappa} = \frac{\nrm{\vA}+\kappa}{\varsigma_{-2}\L(\vM - \lambda\vI\R)}\cdot \frac{1}{1 - \frac{\nrm{\vA}+\kappa}{\varsigma_{-2}\L(\vM - \lambda\vI\R)}}.\label{eq:epsBoundFancy2}
\end{align}
The last inequality is a rearrangement. The first inequality uses the triangle inequality for the numerator, and for the denominator that $\norm{(\vM-\lambda'\vI)\ket{v^{\perp}}}\geq \nrm{(\vM-\lambda\vI)\ket{v^{\perp}}}-\nrm{(\lambda'-\lambda)\vI\ket{v^{\perp}}}=\nrm{(\vM-\lambda\vI)\ket{v^{\perp}}}-\kappa$ and 
\begin{align}
	\norm{(\vM - \lambda\vI)\ket{v^{\perp}}}_{} \ge \varsigma_{-2}\L(\vM - \lambda\vI\R). 
\end{align}
 We conclude by setting the appropriate normalization $\ket{v'} = \frac{1}{\sqrt{1+\epsilon^2}} (\ket{v} + \epsilon \ket{v^{\perp}})$ and utilizing the above bound~\eqref{eq:epsBoundFancy2}:
\begin{align}
	\nrm{\ket{v} - \ket{v'}} &= \frac{1}{\sqrt{1+\epsilon^2}}\nrm{\sqrt{1+\epsilon^2}\ket{v}- (\ket{v}+\epsilon\ket{v^{\perp}}) }\\
	&= \frac{\sqrt{(\sqrt{1+\epsilon^2}-1)^2+\epsilon^2}}{\sqrt{1+\epsilon^2}} \le \sqrt{2}\epsilon
	\le \min\L(\sqrt{2},\frac{2\sqrt{2}(\norm{\vA}+\kappa)}{\varsigma_{-2}\L(\vM - \lambda'\vI\R)}\R).
\end{align}
The first inequality uses $\sqrt{1+\epsilon^2}-1 \le \epsilon$. The last inequality uses that the bound is vacuous at $\nrm{\ket{v} - \ket{v'}}\le \sqrt{2}$ and combines \eqref{eq:epsBoundFancy2} with the elementary estimate $\forall x\in[0,1/2]\colon \sqrt{2}\frac{x}{1-x} \le 2\sqrt{2}x$ after substituting $x\leftarrow \frac{\norm{\vA}+\kappa}{\varsigma_{-2}\L(\vM - \lambda\vI\R)}$.  
\end{proof}
When we apply the above perturbation bound for a Hermitian $\vM$, we can set $\lambda$ to be the top eigenvalue such that $\varsigma_{-2}\L(\vM - \lambda\vI\R)$ will be the gap $\lambda_{gap}(\vM) = \lambda_1(\vM) - \lambda_2(\vM)$. In that case, \autoref{prop:Bauer-Fike} guarantees the existence of a nearby eigenvalue $\lambda'$ of $\vM+\vA$ such that $|\lambda'-\lambda| \leq \nrm{\vA}$. When we handle the Lamb-shift term, we need to consider a nonHermitian $\vM$. There, bounding eigenvalue perturbation is not generally obvious, so we will simply assume that there is a nearby eigenvalue $\lambda'$ of $\vM+\vA$.

\subsection{Approximate detailed balance implies approximately correct fixed point}
\label{sec:proof_for_approx_DB}
When detailed balance holds approximately for $\vrho$,
we still expect the fixed point to be \textit{approximately} $\vrho$; we provide a proof of this in this section, which relies on the matrix perturbation results (\autoref{sec:eigen_perturbation}). Recall that in \autoref{sec:L_ADB}, we defined the Hermitian and anti-Hermitian parts (under similarity transformation) as follows:
\begin{align} 
    \CD(\vrho, \CL) = \vrho^{-1/4}\CL[\vrho^{1/4}\cdot\vrho^{1/4}]\vrho^{-1/4} &= \CH + \CA, \\
    \CD(\vrho, \CL)^{\dagger} = \vrho^{1/4}\CL^{\dagger}[\vrho^{-1/4}\cdot \vrho^{-1/4}]\vrho^{1/4} & = \CH - \CA.
\end{align}
Observe that $ \CD(\vrho, \CL)^{\dagger}[\sqrt{\vrho}] = 0$, but it needs not be the case for $\CD(\vrho, \CL)$, which we care about.

We recall some facts: every \Lword{} satisfies that $\text{Spec}(\CL)\subseteq \{z\in\mathbb{C}\colon\mathrm{Re}(z)\leq0\}$~\cite[Proposition 6.16]{wolf2012quantum}. As $\CD(\vrho, \CL)^{\dagger}[\sqrt{\vrho}] = 0$ we also have that $0$ is an element of the spectrums $\text{Spec}(\CD(\vrho, \CL)^{\dagger})$,  $\text{Spec}(\CD(\vrho, \CL))$. Since $\CD(\vrho, \CL)$ is defined by a similarity transformation we have that $\text{Spec}(\CL)=\text{Spec}(\CD(\vrho, \CL))$ and due to \autoref{prop:Bauer-Fike} this implies
\begin{align}\label{eq:topEigenvalueBound}
 	|\lambda_1(\CH)|\leq \norm{\CA}_{2-2}.
\end{align}
\begin{restatable}[Fixed point accuracy]{prop}{Lfixedpointerror}\label{prop:fixed_point_error}
	Suppose a \Lword{} $\CL$ satisfies the $\epsilon$-approximate $\vrho$-detailed balance condition. If $\lambda_{gap}(\CH)>2\epsilon$, then there is a unique state as its fixed point $\vrho_{fix}(\CL)\succ 0$ and its deviation from $\vrho$ is bounded by
	\begin{align}
		\lnormp{\vrho_{fix}(\CL) - \vrho}{1} \le \frac{14\epsilon}{\lambda_{gap}(\CH)}.
	\end{align}
\end{restatable}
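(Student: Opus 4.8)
The plan is to show that $\vrho_{fix}(\CL)$, after conjugation, is the normalized right eigenvector of $\CD(\vrho,\CL)$ with eigenvalue $0$, and then to apply the eigenvector perturbation bound (\autoref{prop:eigenvector_perturb_gen}) comparing $\CD(\vrho,\CL)=\CH+\CA$ against its Hermitian part $\CH$, whose top eigenvector is $\sqrt{\vrho}/\norm{\sqrt{\vrho}}_2$. First I would record the dictionary between the Schrödinger-picture fixed point and the discriminant: since $\CD(\vrho,\CL)=\vrho^{-1/4}\CL[\vrho^{1/4}\cdot\vrho^{1/4}]\vrho^{-1/4}$ is a similarity transform of $\CL$, the superoperator $\CD(\vrho,\CL)$ has $0$ in its spectrum (because $\CL$ does), and $\vrho_{fix}(\CL)=\vrho^{1/4}\vec{\sigma}\vrho^{1/4}$ where $\vec{\sigma}$ (viewed as a vector in Hilbert–Schmidt space) is a right $0$-eigenvector of $\CD(\vrho,\CL)$. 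Uniqueness of the fixed point follows once we know the $0$-eigenvalue of $\CD(\vrho,\CL)$ is simple; this is where the gap hypothesis $\lambda_{gap}(\CH)>2\epsilon$ enters, via \autoref{prop:Bauer-Fike} applied to the normal operator $\CH$ perturbed by $\CA$ with $\norm{\CA}_{2-2}\le\epsilon$: the disk $D(\lambda_1(\CH),\epsilon)$ is disjoint from $D(\lambda_2(\CH),\epsilon)$, so it contains exactly one eigenvalue of $\CD(\vrho,\CL)$, which must be $0$ since $\lambda_1(\CH)\le\epsilon$ by~\eqref{eq:topEigenvalueBound} and $0\in\text{Spec}(\CD(\vrho,\CL))$ lies in $D(\lambda_1(\CH),\epsilon)$.

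Next I would invoke \autoref{prop:eigenvector_perturb_gen} with $\vM=\CH$, $\vA\leftarrow\CA$, $\ket{v}=\sqrt{\vrho}/\norm{\sqrt{\vrho}}_2$ (the normalized top eigenvector of $\CH$, with eigenvalue $\lambda=\lambda_1(\CH)$), and $\lambda'=0$ the nearby eigenvalue of $\CH+\CA=\CD(\vrho,\CL)$. The hypothesis gives $\norm{\,\widehat{\vec{\sigma}}-\sqrt{\vrho}/\norm{\sqrt{\vrho}}_2\,}\le \frac{2\sqrt2(\norm{\CA}_{2-2}+|\lambda'-\lambda|)}{\varsigma_{-2}(\CH-\lambda_1(\CH)\mathcal I)}$, where $\widehat{\vec\sigma}$ is the $\ell_2$-normalized $0$-eigenvector of $\CD(\vrho,\CL)$. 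Here $\varsigma_{-2}(\CH-\lambda_1(\CH)\mathcal I)=\lambda_{gap}(\CH)$ because $\CH$ is Hermitian, and $|\lambda'-\lambda|=|\lambda_1(\CH)|\le\epsilon$; together with $\norm{\CA}_{2-2}\le\epsilon$ this yields $\norm{\,\widehat{\vec\sigma}-\sqrt{\vrho}/\norm{\sqrt{\vrho}}_2\,}\le \frac{4\sqrt2\,\epsilon}{\lambda_{gap}(\CH)}$.

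The final step is to transfer this Hilbert–Schmidt bound on the conjugated, $\ell_2$-normalized eigenvector back to a trace-norm bound on the trace-one density operators. Writing $\vrho_{fix}(\CL)=\vrho^{1/4}\widehat{\vec\sigma}\vrho^{1/4}/\,\tr(\vrho^{1/4}\widehat{\vec\sigma}\vrho^{1/4})$ and $\vrho=\vrho^{1/4}(\sqrt{\vrho}/\norm{\sqrt{\vrho}}_2)\vrho^{1/4}/\,\tr(\cdots)$, I would use that the conjugation map $\vX\mapsto\vrho^{1/4}\vX\vrho^{1/4}$ is a positivity-preserving, trace-contracting operation, so $\norm{\vrho_{fix}(\CL)-\vrho}_1$ is controlled by $\norm{\widehat{\vec\sigma}-\sqrt{\vrho}/\norm{\sqrt{\vrho}}_2}$ up to the renormalization of the trace; a standard argument (two unit vectors at HS-distance $\delta$ have overlap $\ge1-\delta^2/2$, so the normalization constants differ by at most $O(\delta)$) turns the factor $4\sqrt2$ into $14$ after absorbing the renormalization slack, giving $\norm{\vrho_{fix}(\CL)-\vrho}_1\le 14\epsilon/\lambda_{gap}(\CH)$. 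I expect the main obstacle to be precisely this last bookkeeping step — carefully showing that passing from the normalized HS-eigenvector to the normalized trace-one state does not blow up the constant, in particular handling the fact that $\widehat{\vec\sigma}$ need not be Hermitian or positive a priori (one should first argue it can be taken Hermitian since $\CL$ preserves Hermiticity and the $0$-eigenspace is one-dimensional, and then positive since $\CL$ generates a CPTP map whose unique fixed point is a genuine state) and that the denominators $\tr(\vrho^{1/4}\cdot\vrho^{1/4})$ are bounded away from zero.
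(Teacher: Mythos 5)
Your overall strategy — similarity transform, Bauer--Fike for uniqueness, eigenvector perturbation, then transfer Hilbert--Schmidt distance to trace distance via conjugation by $\vrho^{1/4}$ — is the right one and matches the paper's. However, there is a concrete gap in how you invoke the eigenvector perturbation bound. You take $\vM=\CH$ and assert that $\sqrt{\vrho}/\nrm{\sqrt{\vrho}}_2$ is ``the normalized top eigenvector of $\CH$, with eigenvalue $\lambda_1(\CH)$.'' This is only true when detailed balance is exact. What we actually know is that $\CD(\vrho,\CL)^\dagger[\sqrt{\vrho}]=0$, i.e.\ $(\CH-\CA)[\sqrt{\vrho}]=0$, hence $\CH[\sqrt{\vrho}]=\CA[\sqrt{\vrho}]$. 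Since $\CA[\sqrt{\vrho}]$ need not be proportional to $\sqrt{\vrho}$, the operator $\sqrt{\vrho}$ is in general \emph{not} an eigenvector of $\CH$, so \autoref{prop:eigenvector_perturb_gen} cannot be applied with $\vM=\CH$ and $\ket{v}=\sqrt{\vrho}/\nrm{\sqrt{\vrho}}_2$. (One could repair this by first bounding the distance from $\sqrt{\vrho}$ to the actual top eigenvector of $\CH$ and then triangle-ineqality, but this costs another factor of $\approx 2$ and overshoots the constant $14$.)

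The fix, which is what the paper does, is to take $\vM=\CD(\vrho,\CL)^\dagger=\CH-\CA$ instead. Then $\sqrt{\vrho}$ is an \emph{exact} right $0$-eigenvector (because $\CL^\dagger[\vI]=0$, cf.\ \autoref{prop:Gibbs_fixed_point}), and the relevant perturbation is $\CD(\vrho,\CL)=\vM+2\CA$, also with $\lambda'=0$. So the $|\lambda'-\lambda|$ term in \autoref{prop:eigenvector_perturb_gen} vanishes, the perturbation norm is $\nrm{2\CA}_{2-2}\leq 2\epsilon$, and the denominator is $\varsigma_{-2}(\CD^\dagger)\geq -\lambda_2(\CH)$ by Fan--Hoffman (as noted in the final line of \autoref{prop:eigenvector_perturb_gen}). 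This yields $\nrm{\vR-\sqrt{\vrho}}_2\leq \frac{4\sqrt{2}\,\epsilon}{-\lambda_2(\CH)}$ directly and avoids the spurious eigenvector assumption. One last small point: the conjugation $\vX\mapsto\vrho^{1/4}\vX\vrho^{1/4}$ is not ``trace-contracting'' in the way you say; the transfer to trace norm is instead a Hölder inequality $\nrm{\vrho^{1/4}(\vR-\sqrt{\vrho})\vrho^{1/4}}_1\leq\nrm{\vrho^{1/4}}_4^2\nrm{\vR-\sqrt{\vrho}}_2=\nrm{\vR-\sqrt{\vrho}}_2$, and the renormalization is handled by $\nrm{\vrho^{1/4}\vR\vrho^{1/4}-\vrho_{fix}(\CL)}_1=|\tr(\vrho^{1/4}\vR\vrho^{1/4}-\vrho)|\leq\nrm{\vrho^{1/4}\vR\vrho^{1/4}-\vrho}_1$, which together double the factor to $8\sqrt{2}$; the final conversion from $-\lambda_2(\CH)$ to $\lambda_{gap}(\CH)$ brings the constant to $8\sqrt{2}+2<14$.
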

The RHS indicates that the fixed point accuracy may deteriorate if the map has a large anti-Hermitian component or if the gap closes.
\begin{proof}  
Every CPTP map has at least a stationary state~\cite[Theorem 6.11]{wolf2012quantum}, and thus there is a fixed point $\vrho_{fix}(\CL)\succeq 0$ of unit trace.
The condition $\lambda_{gap}(\CH)>2\norm{\CA}_{2-2}$ translates to $\lambda_2(\CH)<-\norm{\CA}_{2-2}$ implying that $0$ has algebraic multiplicity $1$ in $\text{Spec}(\CD(\vrho, \CL))$ due to \autoref{prop:Bauer-Fike}, which then proves the uniqueness of the fixed point. 
	
By our eigenvector perturbation bound (\autoref{prop:eigenvector_perturb_gen}), we get that there is a matrix $\vR$ of unit Frobenius norm in the kernel of $\CD(\vrho, \CL)$ such that
\begin{align}
\nrm{\sqrt{\vrho} - \vR}_2 &\le \frac{4\sqrt{2} \norm{\CA}_{2-2}}{-\lambda_2(\CH)},
\end{align}
where we used that the Frobenius norm of a matrix is equal to the Euclidean norm of its vectorization $\normp{\vA}{2} = \nrm{\ket{\vA}}_{}$.
This in turn means that $\vrho^{1/4}\vR\vrho^{1/4}$ is in the kernel of $\CL$, thus $\vrho_{fix}(\CL)=\vrho^{1/4}\vR\vrho^{1/4}/\tr(\vrho^{1/4}\vR\vrho^{1/4})$, moreover
\begin{align}
	\nrm{\vrho^{1/4}\vR\vrho^{1/4}-\vrho_{fix}(\CL)}_1
	&=\left|\tr(\vrho^{1/4}\vR\vrho^{1/4})-1\right|\nrm{\vrho_{fix}(\CL)}_1 
	= \left|\tr(\vrho^{1/4}\vR\vrho^{1/4}-\vrho)\right| 
	\leq \nrm{\vrho^{1/4}\vR\vrho^{1/4}-\vrho}_1,
\end{align}
where in the last step we used the trace-norm inequality $|\tr(\vA)|\leq \nrm{\vA}_1$. We can further bound
\begin{align}
\nrm{\vrho^{1/4}\vR\vrho^{1/4} - \vrho}_1
&= \normp{\vrho^{1/4} (\vR-\sqrt{\vrho})\vrho^{1/4}}{1}
\le \normp{\vrho^{1/4}}{4}^2\cdot \lnormp{\vR-\sqrt{\vrho}}{2}
=\nrm{\vR-\sqrt{\vrho}}_2,\label{eq:FrobToTrace}
\end{align}
where we used Hölder's inequality $\normp{\vB\vA\vB}{1} \le \normp{\vB}{4}^2 \normp{\vA}{2}$. 
Combining the above three inequalities we get 
\begin{align}
	\nrm{\vrho_{fix}(\CL)-\vrho}_1&\leq \nrm{\vrho_{fix}(\CL)-\vrho^{1/4}\vR\vrho^{1/4}}_1 + \nrm{\vrho^{1/4}\vR\vrho^{1/4} - \vrho}_1
	\leq 2 \nrm{\vR-\sqrt{\vrho}}_2
	\leq\frac{8\sqrt{2} \norm{\CA}_{2-2}}{-\lambda_2(\CH)}.
\end{align}
Finally, we convert $\lambda_2$ to $\lambda_{gap}$ in the above bound. 
Due to \eqref{eq:topEigenvalueBound} we have $-\lambda_2(\CH)=\lambda_{gap}(\CH)-\lambda_1(\CH)\geq \lambda_{gap}(\CH)-\norm{\CA}_{2-2}$ so we can further bound the above by $8\sqrt{2} \norm{\CA}_{2-2}/(\lambda_{gap}(\CH)-\norm{\CA}_{2-2})$. But this bound is vacuous at $\nrm{\vrho_{fix}(\CL)-\vrho}_1\le 2$, i.e., when $\lambda_{gap}(\CH)< (4\sqrt{2}+1)\norm{\CA}_{2-2}$. 
If $\lambda_{gap}(\CH)\geq (4\sqrt{2}+1)\norm{\CA}_{2-2}$, then $\lambda_{gap}(\CH)-\norm{\CA}_{2-2}\geq \frac{4\sqrt{2}}{4\sqrt{2}+1}\lambda_{gap}(\CH)$ yielding 
\begin{align}
	\nrm{\vrho_{fix}(\CL)-\vrho}_1&	\leq\frac{(8\sqrt{2}+2) \norm{\CA}_{2-2}}{\lambda_{gap}(\CH)}. \tag*{\qedhere}
\end{align}
\end{proof}
\Lfixedpointerrormix*
\begin{proof}
	In the proof of \autoref{prop:fixed_point_error} we got the ultimate bound by converting $\lambda_2$ to $\lambda_{gap}$ by the observation $-\lambda_2(\CH)\geq \lambda_{gap}(\CH)-\norm{\CA}_{2-2}$. Due to \autoref{prop:Bauer-Fike} the same bound $-\lambda_2(\CH)\geq \lambda_{\mathrm{Re}(gap)}(\CL)-\norm{\CA}_{2-2}$ also holds for $\lambda_{\mathrm{Re}(gap)}(\CL)$ (c.f.\ \autoref{prop:mixing_to_gap}) since $0$ is an eigenvalue of $\CL$. Hence the conversion combined with \autoref{prop:mixing_to_gap} shows
	\begin{align}
		\nrm{\vrho_{fix}(\CL)-\vrho}_1&	
		\leq\frac{(8\sqrt{2}+2) \norm{\CA}_{2-2}}{\lambda_{\mathrm{Re}(gap)}(\CL)}
		\leq\frac{(8\sqrt{2}+2) \norm{\CA}_{2-2}}{\ln(2)}t_{mix}(\CL). \tag*{\qedhere}
	\end{align}
\end{proof}

\subsection{Perturbation bounds for \Lword{}s regarding gaps and mixing times}\label{sec:gap_and_convergence}
This section provides proof for scattered statements circling spectral gaps and mixing time. Most results are standard, except maybe the most technical result (\autoref{lem:H+A_decay}). 

\mixingtimetofixedpoint*
\begin{proof}
We begin by recalling Duhamel's identity. We use its integral form derived in, e.g., \cite[Eq. (40)]{haber2021MatExpAndLogLectureNotes}:
\begin{equation}\label{eq:Duhamel}
	\e^{\vA}-\e^{\vB}=\int_{0}^{1} \e^{s \vA}(\vA-\vB)\e^{(1-s)\vB}\mathrm{d}s .
\end{equation}
We apply the above identity with $\vA\leftarrow t\CL_1$, $\vB\leftarrow t\CL_2$ and take the $1-1$ operator norm on both sides
\begin{align}
\nrm{\e^{t\CL_1}-\e^{t\CL_2}}_{1-1}
	&\leq \int_{0}^{1} \nrm{\e^{s t\CL_1}(t\CL_1-t\CL_2)\e^{(1-s)t\CL_2} }_{1-1}\mathrm{d}s\\
	&\leq t\int_{0}^{1} \nrm{\e^{s t\CL_1}}_{1-1}\nrm{t\CL_1-t\CL_2}_{1-1}\nrm{\e^{(1-s)t\CL_2} }_{1-1}\mathrm{d}s = t\nrm{\CL_1-\CL_2}_{1-1}.\label{eq:DuhamelLind}
\end{align}
The last equality uses that $\e^{x\CL}$ is a CPTP map and so $\lVert \e^{x\CL}\rVert_{1-1}=1$~\cite[Theorem 8.16]{wolf2012quantum}.\footnote{Here, by the $1-1$ norm, we mean the $1-1$ norm of the operators restricted to the subspace of Hermitian matrices.}

Now, let $\vrho_1:=\vrho_{fix}(\CL_1)$, $\vrho_2:=\vrho_{fix}(\CL_2)$, and  $t_{mix} := t_{mix}(\CL^{\dagger}_1)$. Then
\begin{align}
    \lnormp{ \vrho_1 - \vrho_2 }{1} &= \lnormp{\e^{\CL_1\cdot t_{mix}}[ \vrho_1] -  \e^{\CL_2 t_{mix}} [\vrho_2]}{1}\\
    & \le \lnormp{\e^{\CL_1\cdot t_{mix}}[ \vrho_1] - \e^{\CL_1t_{mix}}[ \vrho_2]}{1}+\lnormp{\e^{\CL_1t_{mix}}[ \vrho_2] - \e^{\CL_2 t_{mix}} [\vrho_2]}{1}\\
    &\le \frac{1}{2} \cdot \lnormp{ \vrho_1 - \vrho_2 }{1} + 2t_{mix}\normp{ \CL_1 - \CL_2 }{1-1}.\label{eq:tMix}    
\end{align}
The second inequality follows from \eqref{eq:DuhamelLind} and that $\lnormp{ \vrho_1 - \vrho_2 }{1} \le 2$. Rearrange \eqref{eq:tMix} to conclude the proof. 
\end{proof}
\begin{prop}[Mixing time difference]\label{prop:mixingtime_diff}
The mixing times of two \Lword{}s $\CL_1$, $\CL_2$ are related by
\begin{align}
t_{mix}(\CL_2) \le t_{mix}(\CL_1)\left\lceil\frac{\ln(1/2)}{\ln(1/2+t_{mix}(\CL_1) \normp{\CL_1-\CL_2}{1-1})}\right\rceil \quad \text{ if } \quad  t_{mix}(\CL_1) \normp{\CL_1-\CL_2}{1-1} < \frac{1}{2}.
\end{align}    
\end{prop}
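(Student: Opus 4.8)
The plan is to bootstrap from the single-time-step comparison already established in \autoref{fact:mixing_time_to_fixedpoint} (specifically the Duhamel bound \eqref{eq:DuhamelLind}), then iterate the semigroup. Write $t_1 := t_{mix}(\CL_1)$, $\delta := \normp{\CL_1-\CL_2}{1-1}$, and assume $t_1\delta < 1/2$. The first step is to control a single block of time of length $t_1$: for any two states $\vrho,\vrho'$,
\begin{align}
\lnormp{\e^{\CL_2 t_1}[\vrho-\vrho']}{1} &\le \lnormp{\e^{\CL_1 t_1}[\vrho-\vrho']}{1} + \lnormp{(\e^{\CL_2 t_1}-\e^{\CL_1 t_1})[\vrho-\vrho']}{1}\\
&\le \tfrac12\normp{\vrho-\vrho'}{1} + t_1\delta\,\normp{\vrho-\vrho'}{1} = (\tfrac12 + t_1\delta)\,\normp{\vrho-\vrho'}{1},
\end{align}
using the definition of $t_{mix}(\CL_1)$ for the first term and \eqref{eq:DuhamelLind} (applied with the roles $\CL_1\leftrightarrow\CL_2$, which is symmetric in the bound) for the second.

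The second step is to iterate: since $\e^{\CL_2 k t_1} = (\e^{\CL_2 t_1})^k$ and each factor is a contraction by the above, after $k$ blocks we get
\begin{align}
\lnormp{\e^{\CL_2 k t_1}[\vrho-\vrho']}{1} \le (\tfrac12 + t_1\delta)^k \normp{\vrho-\vrho'}{1}.
\end{align}
To force the prefactor below $1/2$ it suffices to take $k$ with $(\tfrac12+t_1\delta)^k \le \tfrac12$, i.e. $k \ge \ln(1/2)/\ln(\tfrac12+t_1\delta)$; note $\tfrac12+t_1\delta \in (\tfrac12,1)$ so the logarithm in the denominator is negative, as is $\ln(1/2)$, and the ratio is a well-defined positive number. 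Taking $k = \lceil \ln(1/2)/\ln(\tfrac12+t_1\delta)\rceil$ gives a valid mixing time for $\CL_2$, namely $t_{mix}(\CL_2) \le k t_1 = t_1\lceil \ln(1/2)/\ln(\tfrac12 + t_1\delta)\rceil$, which is exactly the claimed inequality after substituting back $t_1 = t_{mix}(\CL_1)$ and $\delta = \normp{\CL_1-\CL_2}{1-1}$.

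There is essentially no hard step here; the only points requiring a little care are (i) confirming the Duhamel estimate \eqref{eq:DuhamelLind} holds with the $1-1$ norm restricted to Hermitian (indeed trace-class Hermitian) inputs, which is exactly the setting in which it was proved, and that it applies to the difference $\vrho-\vrho'$ of two states; (ii) checking the sign bookkeeping in the logarithms so that the ceiling is of a genuinely positive quantity — this is where the hypothesis $t_1\delta < 1/2$ is used, since it guarantees $\tfrac12+t_1\delta<1$ and hence $\ln(\tfrac12+t_1\delta)<0$; and (iii) observing that the contraction property of $\e^{\CL_2 t_1}$ on differences of states composes multiplicatively, which is immediate from submultiplicativity of operator norms. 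I would also remark that one could slightly sharpen the constant by optimizing over the block length rather than fixing it at $t_1$, but the stated bound is the clean one and suffices for its use in the proof of \autoref{thm:L_correctness}.
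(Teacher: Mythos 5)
Your proof is correct and follows essentially the same route as the paper: a triangle inequality combined with the Duhamel bound \eqref{eq:DuhamelLind} gives the single-block contraction factor $\tfrac12 + t_{mix}(\CL_1)\normp{\CL_1-\CL_2}{1-1}$, and iterating this in blocks of length $t_{mix}(\CL_1)$ then ``taking the logarithm'' produces the ceiling. The only cosmetic difference is that the paper phrases the single-block estimate as a bound on the induced $1\text{--}1$ norm of $\e^{\CL_2 t_{mix}}$ over traceless Hermitian inputs (via a maximizing $\vR$) and then invokes submultiplicativity, whereas you argue pointwise for an arbitrary pair of states and note that the bound composes because $\e^{\CL_2 t_1}$ maps differences of states to (scaled) differences of states; both are valid and equivalent.
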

\begin{proof}
Let $t_{mix} := t_{mix}(\CL_1)$ and $\vR$ be a traceless Hermitian matrix with $\nrm{\vR}_1=1$ maximizing $\normp{\e^{\CL_2t_{mix}}[\vR]}{1}$, then
\begin{align}
    \normp{\e^{\CL_2t_{mix}}[\vR]}{1} &\le \normp{\e^{\CL_1t_{mix}}[\vR]}{1}+\normp{\e^{\CL_2t_{mix}}[\vR]-\e^{\CL_1t_{mix}}[\vR]}{1}\\
    &\le \frac{1}{2}\normp{\vR}{1}+t_{mix}\normp{\CL_1-\CL_2}{1-1} \normp{\vR}{1} = \L(\frac{1}{2} +  t_{mix}\normp{\CL_1-\CL_2}{1-1}\R) \normp{\vR}{1}.
\end{align}
The second inequality follows from \eqref{eq:DuhamelLind}. Set $\vR\ =\vrho_1-\vrho_2$, rearrange, and take the logarithm to conclude.
\end{proof}

\subsubsection{Relating the mixing time to the spectral gap using exact detailed balance}
The mixing time of a general \Lword{} may be difficult to analyze. Fortunately, many handy bounds exist, especially circling the spectral gap when detailed balance holds.

\mixingtimegapDB*
\begin{proof}
Write $\vR = \vrho_1-\vrho_2$, then
    \begin{align}
        \lnormp{ \e^{\CL t}[\vR]}{1} 
        &= \lnormp{\vrho^{1/4}\e^{\CD t}[\vrho^{-1/4}\vR\vrho^{-1/4}]\vrho^{1/4}}{1} 	\\
        &\le \lnormp{\vrho^{1/4}}{4}\cdot \lnormp{\e^{\CD t}[\vrho^{-1/4}\vR\vrho^{-1/4}]}{2}\cdot\lnormp{\vrho^{1/4}}{4}\\
        &\le \e^{-\lambda_{gap}(\CH) t} \lnormp{\vrho^{-1/4}\vR\vrho^{-1/4}}{2}\\
        &\le \e^{-\lambda_{gap}(\CH)t}\norm{\vrho^{-1/4}}^2 \normp{\vR}{2} \\
        &\le \e^{-\lambda_{gap}(\CH)t}\norm{\vrho^{-1/4}}^2 \normp{\vR}{1} \\
        &= \e^{-\lambda_{gap}(\CL^{\dagger})t}\norm{\vrho^{-1/2}} \normp{\vR}{1}. 
    \end{align}
    The first inequality uses Hölder's inequality. The second inequality uses the orthogonality to the leading eigenvector such that $\tr[\sqrt{\vrho}\cdot\vrho^{-1/4}\vR\vrho^{-1/4}]=\tr[\vR]=0$. Take the logarithm to conclude the proof.
\end{proof}

\subsubsection{Relating the mixing time to the Hermitian gap (and approximate detailed balance)}

\begin{prop}[Spectral gap from mixing time]\label{prop:mixing_to_gap}
    For any \Lword{} $\CL$, let $-\lambda_{\mathrm{Re}(gap)}(\CL)$ be the second largest real part in its spectrum (counted by algebraic multiplicity), then
    \begin{align}\label{eq:gapBound}
        \lambda_{gap}(\CH)+2\nrm{\CA}_{2-2}\ge\nrm{\CA}_{2-2}-\lambda_{2}(\CH)\ge\lambda_{\mathrm{Re}(gap)}(\CL)\ge \frac{\ln(2)}{t_{mix}(\CL)}.
    \end{align}
    Moreover, if $\lambda_{\mathrm{Re}(gap)}(\CL)\geq 2\nrm{\CA}_{2-2}$, then there is unique eigenvalue $\lambda_{1}(\CH)\geq -\nrm{\CA}_{2-2}$ and
    \begin{align}\label{eq:gapRelations}
        \lambda_{gap}(\CH)+2\nrm{\CA}_{2-2}\ge\nrm{\CA}_{2-2}-\lambda_{2}(\CH)\ge\lambda_{\mathrm{Re}(gap)}(\CL).
    \end{align}    
\end{prop}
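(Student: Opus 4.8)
\textbf{Proof plan for Proposition~\ref{prop:mixing_to_gap}.}
The overall strategy is to chain together three kinds of facts: (i) the Bauer--Fike perturbation bound (\autoref{prop:Bauer-Fike}) relating $\mathrm{Spec}(\CD(\vrho,\CL)) = \CH+\CA$ to $\mathrm{Spec}(\CH)$; (ii) the elementary observation \eqref{eq:topEigenvalueBound} that $|\lambda_1(\CH)|\le\nrm{\CA}_{2-2}$, which follows because $0\in\mathrm{Spec}(\CL)=\mathrm{Spec}(\CD(\vrho,\CL))$ and $\mathrm{Re}(z)\le 0$ for all $z\in\mathrm{Spec}(\CL)$; and (iii) a decay estimate turning the real spectral gap of $\CL$ into a mixing-time bound. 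I would present the inequalities of \eqref{eq:gapBound} by working through them from right to left.

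First I would establish the rightmost inequality $\lambda_{\mathrm{Re}(gap)}(\CL)\ge \ln(2)/t_{mix}(\CL)$. The idea is that $\CL$ restricted to the traceless Hermitian subspace has all eigenvalues with real part $\le -\lambda_{\mathrm{Re}(gap)}(\CL)$; one then needs an estimate of the form $\nrm{\e^{\CL t}[\vR]}_1 \ge$ (something involving $\e^{-\lambda_{\mathrm{Re}(gap)}(\CL)t}$) that forces $t_{mix}$ to be at least $\ln(2)/\lambda_{\mathrm{Re}(gap)}(\CL)$. The cleanest route is to pick $\vR$ in (or close to) a generalized eigenspace associated with an eigenvalue $\mu$ of $\CL$ having $\mathrm{Re}(\mu)=-\lambda_{\mathrm{Re}(gap)}(\CL)$; since $\e^{\CL t}$ acts on this Jordan block as $\e^{\mu t}$ times a polynomial factor, at $t=t_{mix}$ the contraction $\nrm{\e^{\CL t_{mix}}[\vR]}_1\le\tfrac12\nrm{\vR}_1$ combined with a lower bound on $|\e^{\mu t_{mix}}|$ (using that the trace norm dominates any fixed norm up to dimension-independent factors once we stay in a fixed eigenspace, or alternatively a duality argument pairing with the corresponding left eigenvector) yields $\e^{-\lambda_{\mathrm{Re}(gap)}(\CL)t_{mix}}\lesssim\tfrac12$, hence the claim. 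This is the step I expect to be the main obstacle, because handling possibly non-normal Jordan blocks and the polynomial prefactor requires care; the safe fallback is to invoke the standard fact (cf.\ \cite{wolf2012quantum}) that for Lindbladians the mixing time and the real spectral gap are related precisely this way, which the paper already uses implicitly.

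Next, the middle inequality $\nrm{\CA}_{2-2}-\lambda_2(\CH)\ge\lambda_{\mathrm{Re}(gap)}(\CL)$: apply \autoref{prop:Bauer-Fike} to the normal matrix $\CH$ perturbed by $\CA$, giving $\mathrm{Spec}(\CH+\CA)\subseteq\bigcup_{s\in\mathrm{Spec}(\CH)}D(s,\nrm{\CA}_{2-2})$. Since $\mathrm{Spec}(\CL)=\mathrm{Spec}(\CH+\CA)$ contains $0$ (the top) and its second-largest real part is $-\lambda_{\mathrm{Re}(gap)}(\CL)$, every eigenvalue of $\CL$ other than the top one has real part $\le -\lambda_{\mathrm{Re}(gap)}(\CL)$ and lies within $\nrm{\CA}_{2-2}$ of some $\lambda_i(\CH)$; the largest such $\lambda_i(\CH)$ is at most $\lambda_2(\CH)$ unless the top eigenvalue $0$ of $\CL$ is itself matched to $\lambda_1(\CH)$, which one checks using \eqref{eq:topEigenvalueBound}. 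A short case analysis (using that $\lambda_1(\CH)\ge-\nrm{\CA}_{2-2}$, so $\lambda_1(\CH)$'s disc already contains a point of real part $\ge-\nrm{\CA}_{2-2}>-\lambda_{\mathrm{Re}(gap)}(\CL)$ in the gapped regime) gives $-\lambda_{\mathrm{Re}(gap)}(\CL)\ge\lambda_2(\CH)-\nrm{\CA}_{2-2}$, i.e.\ the claimed bound. Finally the leftmost inequality $\lambda_{gap}(\CH)+2\nrm{\CA}_{2-2}\ge\nrm{\CA}_{2-2}-\lambda_2(\CH)$ is immediate by rearranging $\lambda_{gap}(\CH)=\lambda_1(\CH)-\lambda_2(\CH)$ and using $\lambda_1(\CH)\ge-\nrm{\CA}_{2-2}$ from \eqref{eq:topEigenvalueBound}.

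For the ``moreover'' clause \eqref{eq:gapRelations}: assume $\lambda_{\mathrm{Re}(gap)}(\CL)\ge 2\nrm{\CA}_{2-2}$. Then by the argument above $-\lambda_2(\CH)\ge\lambda_{\mathrm{Re}(gap)}(\CL)-\nrm{\CA}_{2-2}\ge\nrm{\CA}_{2-2}$, so $\lambda_2(\CH)\le-\nrm{\CA}_{2-2}$, which separates $\lambda_2(\CH)$ from $0$ by more than $\nrm{\CA}_{2-2}$; combined with $\lambda_1(\CH)\ge-\nrm{\CA}_{2-2}$ and \autoref{prop:Bauer-Fike} (the multiplicity statement), the disc $D(\lambda_1(\CH),\nrm{\CA}_{2-2})$ is disjoint from the discs around the other eigenvalues of $\CH$ and contains exactly one eigenvalue of $\CL$, namely $0$; hence $\lambda_1(\CH)$ is a simple eigenvalue and $\lambda_1(\CH)\ge-\nrm{\CA}_{2-2}$ is exactly \eqref{eq:topEigenvalueBound}. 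The inequality chain \eqref{eq:gapRelations} is then the same as the first two inequalities of \eqref{eq:gapBound}, which we have already proved without reference to the mixing time. I would close by remarking that all constants here are dimension-independent and that the only non-elementary input is the mixing-time/gap correspondence of the first step.
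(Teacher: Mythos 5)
Your plan follows essentially the same route as the paper's proof: Bauer--Fike (\autoref{prop:Bauer-Fike}) for the middle inequality, the identity $\lambda_{gap}(\CH)=\lambda_1(\CH)-\lambda_2(\CH)$ together with $|\lambda_1(\CH)|\le\nrm{\CA}_{2-2}$ from \eqref{eq:topEigenvalueBound} for the leftmost one, and the multiplicity clause of Bauer--Fike for the ``moreover'' uniqueness statement. All of that is sound.

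The gap is in the rightmost inequality. You invoke the contraction $\nrm{\e^{\CL t_{mix}}[\vR]}_1\le\tfrac12\nrm{\vR}_1$ for $\vR$ taken from a (generalized) eigenspace of $\CL$, but the definition of $t_{mix}$ only guarantees this contraction for \emph{differences of states}, i.e.\ for traceless \emph{Hermitian} inputs. A (generalized) eigenvector $\vR$ of $\CL$ for a nonzero eigenvalue is automatically traceless (from $\tr\circ\CL=0$), but it has no reason to be Hermitian; splitting a complex $\vR$ into Hermitian and skew-Hermitian parts and applying the contraction to each recovers only $\nrm{\e^{\CL t_{mix}}[\vR]}_1\le\nrm{\vR}_1$, which is vacuous. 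The paper avoids both this problem and the Jordan-block worry you raise in one stroke: take a \emph{genuine} right eigenvector $\vR$ (so there is no polynomial prefactor to control), and then pass to its Hermitian part $\vR_H=(\vR+\vR^\dagg)/2$, which is nonzero after possibly replacing $\vR$ by $\ri\vR$. Since $\CL$ is Hermiticity preserving, $\vR^\dagg$ is a right eigenvector with eigenvalue $\bar\lambda$, so $\e^{\CL t}[\vR_H]$ decays only at rate $\e^{\mathrm{Re}(\lambda)t}$, and the mixing-time contraction now applies legitimately to the traceless Hermitian $\vR_H$. Finally, the paper handles the degenerate case $\lambda_{\mathrm{Re}(gap)}(\CL)=0$ separately, using that purely imaginary Lindbladian eigenvalues have trivial Jordan blocks so $t_{mix}=\infty$; your sketch omits this edge case.
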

\begin{proof}
    We know any \Lword{} has at least a stationary state of eigenvalue $0$, and each eigenvalue which has no real part has a trivial Jordan block~\cite[Theorem 6.11 \& Proposition 6.2]{wolf2012quantum}. Therefore, if $\lambda_{\mathrm{Re}(gap)}(\CL)=0$, then $t_{mix}(\CL)=\infty$. If $\lambda_{\mathrm{Re}(gap)}(\CL)>0$, take any eigenvalue $\lambda$ such that $-\lambda_{\mathrm{Re}(gap)}(\CL)=\mathrm{Re}(\lambda)$, and let $\vR$ be a corresponding right eigenvector of $\CL$, which is then necessarily traceless. We can assume without loss of generality that the Hermitian part $\vR_H:=(\vR+\vR^\dagg)/2$ is nonzero (otherwise, we can just take $\vR\leftarrow \ri \vR$). Since $\CL$ is Hermiticity preserving, we get that $\vR_H$ is also a right eigenvector with eigenvalue $\lambda$. For $t<\frac{\ln(2)}{\lambda_{\mathrm{Re}(gap)}(\CL)}$ we have
    \begin{align}
         \normp{\e^{\CL t}[\vR_H]}{1} = \labs{\e^{\lambda t}} \cdot \normp{\vR_H}{1} > \frac{1}{2}\normp{\vR_H}{1}
    \end{align}
    implying that
    \begin{align}
        t_{mix}(\CL) \ge \frac{\ln(2)}{\lambda_{\mathrm{Re}(gap)}(\CL)}.
    \end{align}
    We conclude using Bauer-Fike \autoref{prop:Bauer-Fike} to show $\lambda_{\mathrm{Re}(gap)}(\CL)\leq-\lambda_{2}(\CH)+\nrm{\CA}_{2-2}$, combined with \eqref{eq:topEigenvalueBound}. 
\end{proof}

What if the detailed balance condition is violated? In the worst case, the conversion from spectral gap to mixing time can be really poor. However, the \Lword{}s we consider are ``sufficiently'' detailed balanced so that essentially the same consequences hold. Our main use cases are covered by the following two scenarios
\begin{align}
    \CD(\vrho, \CL) = \CH + \CA \quad &\text{where}\quad &\normp{\CA}{2-2} &\ll \lambda_{gap}(\CH), \tag*{(approximate detailed balance)}\\
     &\text{or}\quad &\lambda_1(\CH) &\ll \lambda_{gap}(\CH) \tag*{(nonperturbative $\CA$)}.
\end{align}
The first case should be understood as nonHermitian eigenvalue perturbation, consistent with the framework of approximate detailed balance. The second case is less intuitive, as the anti-Hermitian part can be arbitrarily large. Intriguingly, the spectral properties of the Hermitian part $\CH$ suffice to control convergence even in the presence of a large perturbation $\CH+\CA$. In fact, the second case is strictly more general since $\lambda_1(\CH) \le \normp{\CA}{2-2}$~\eqref{eq:topEigenvalueBound}, therefore we will only analyze the second scenario. Intuitively, when the \Lword{} is exactly detailed balanced $\CA=0$, we have that $\lambda_1(\CH) =0$; the $\lambda_1(\CH) \ll \lambda_{gap}(\CH)$ condition is essentially the requirement that $\lambda_1(\CH) \approx 0$ in spite of a large anti-hermitian component $\norm{\CA}_{2-2} \gg 0$. Before proving our result, we need a few lemmas. 

\begin{lem}[Norm of matrix exponential]\label{lem:decay_hermitian}
For a Hermitian matrix $\vH$ and an anti-Hermitian matrix $\vB$, we have
\begin{align}
\norm{\e^{(\vH+\vB)t}} \le \e^{\lambda_{1}(\vH) t} \quad \text{for each} \quad t\geq 0.
\end{align}
If $\vB$ is an arbitrary matrix, we alternatively get $\norm{\e^{(\vH+\vB)t}} \le \e^{(\lambda_{1}(\vH)+\nrm{\vB}) t}$.
\end{lem}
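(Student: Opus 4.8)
\textbf{Proof plan for \autoref{lem:decay_hermitian}.}
The plan is to reduce the bound to a differential inequality for the growth of $\nrm{\e^{(\vH+\vB)t}\ket{\psi}}$ applied to an arbitrary fixed vector $\ket{\psi}$, and then integrate. First I would fix a unit vector $\ket{\psi}$ and set $\ket{\psi(t)} := \e^{(\vH+\vB)t}\ket{\psi}$, so that $\frac{\rd}{\rd t}\ket{\psi(t)} = (\vH+\vB)\ket{\psi(t)}$. The key computation is
\begin{align}
	\frac{\rd}{\rd t}\nrm{\ket{\psi(t)}}^2
	&= \braket{\psi(t)|(\vH+\vB)|\psi(t)} + \braket{\psi(t)|(\vH+\vB)^\dagg|\psi(t)}\\
	&= \braket{\psi(t)|(\vH+\vH^\dagg)|\psi(t)} + \braket{\psi(t)|(\vB+\vB^\dagg)|\psi(t)}.
\end{align}
When $\vB$ is anti-Hermitian the second term vanishes, leaving $\frac{\rd}{\rd t}\nrm{\ket{\psi(t)}}^2 = 2\braket{\psi(t)|\vH|\psi(t)} \le 2\lambda_1(\vH)\nrm{\ket{\psi(t)}}^2$, using Hermiticity of $\vH$ and the variational characterization of its top eigenvalue. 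Grönwall's inequality (or just integrating the logarithmic derivative) then gives $\nrm{\ket{\psi(t)}}^2 \le \e^{2\lambda_1(\vH)t}\nrm{\ket{\psi}}^2$ for $t\geq 0$, and taking the supremum over unit $\ket{\psi}$ yields $\nrm{\e^{(\vH+\vB)t}} \le \e^{\lambda_1(\vH)t}$.

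For the second statement, where $\vB$ is arbitrary, I would bound the extra term crudely: $\braket{\psi(t)|(\vB+\vB^\dagg)|\psi(t)} \le \nrm{\vB+\vB^\dagg}\nrm{\ket{\psi(t)}}^2 \le 2\nrm{\vB}\nrm{\ket{\psi(t)}}^2$ by the triangle inequality $\nrm{\vB+\vB^\dagg}\le 2\nrm{\vB}$. This gives $\frac{\rd}{\rd t}\nrm{\ket{\psi(t)}}^2 \le 2(\lambda_1(\vH)+\nrm{\vB})\nrm{\ket{\psi(t)}}^2$, and the same Grönwall argument produces $\nrm{\e^{(\vH+\vB)t}} \le \e^{(\lambda_1(\vH)+\nrm{\vB})t}$.

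There is essentially no serious obstacle here; the only mild care needed is to justify the differentiation of $\nrm{\ket{\psi(t)}}^2$ and the application of Grönwall's inequality, which is routine since $t\mapsto\ket{\psi(t)}$ is smooth (it solves a linear ODE with constant coefficients). One could alternatively phrase the whole thing via the Baker–Campbell–Hausdorff/Trotter product formula $\e^{(\vH+\vB)t} = \lim_{n\to\infty}(\e^{\vH t/n}\e^{\vB t/n})^n$ together with $\nrm{\e^{\vB t/n}}=1$ for anti-Hermitian $\vB$ and $\nrm{\e^{\vH t/n}}=\e^{\lambda_1(\vH)t/n}$, but the Grönwall route is cleaner and avoids limit interchanges.
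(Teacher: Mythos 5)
Your proof is correct, but it takes a genuinely different route from the paper's. The paper proves the first bound via the Lie--Trotter product formula: it writes $\e^{(\vH+\vB)t} = \lim_{r\to\infty}(\e^{\vH/r}\e^{\vB/r})^{rt}$, pulls the norm inside the limit, uses submultiplicativity to get $\nrm{\e^{\vH/r}}^{rt}\nrm{\e^{\vB/r}}^{rt}$, and then uses $\nrm{\e^{\vB/r}}=1$ (anti-Hermiticity) and $\nrm{\e^{\vH/r}}=\e^{\lambda_1(\vH)/r}$. For the second bound, it splits off the Hermitian part $\tfrac12(\vB+\vB^\dagg)$, absorbs it into $\vH$ using $\lambda_1(\vH+\tfrac12(\vB+\vB^\dagg))\le\lambda_1(\vH)+\nrm{\vB}$, and reuses the first bound. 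Your Grönwall argument reaches the same conclusions by differentiating $\nrm{\ket{\psi(t)}}^2$ and is also sound; it is arguably more self-contained (no limit interchange to worry about, though that interchange is routine since norm is continuous), while the Trotter route is a one-liner given submultiplicativity. You even flag the Trotter alternative yourself. The two arguments are essentially of equal difficulty here; neither gains a meaningful advantage over the other for this particular statement. One small stylistic note: when you regroup into $\braket{\psi|\vH+\vH^\dagg|\psi}+\braket{\psi|\vB+\vB^\dagg|\psi}$, it is cleaner to just invoke $\vH^\dagg=\vH$ immediately and write $2\braket{\psi|\vH|\psi}$, as the extra generality of writing $\vH+\vH^\dagg$ is never used.
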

\begin{proof}
    The claim follows from the Trotter representation of exponential and triangle inequality
    \begin{align}
        \!\norm{\e^{(\vH + \vB )t}} &=\nrm{ \lim_{r \rightarrow \infty}\!\L(\e^{\vH /r} \e^{\vB /r}\R)^{\!\!rt}}\!
        = \!\lim_{r \rightarrow \infty}\nrm{\L(\e^{\vH /r} \e^{\vB /r}\R)^{\!\!rt}}
        \leq \liminf_{r \rightarrow \infty}\L(\nrm{\e^{\vH /r}}\nrm{\e^{\vB /r}}\R)^{\!\!rt}\!\!
        = \liminf_{r \rightarrow \infty}\L(\nrm{\e^{\vH /r}}\R)^{\!\!rt}\!\!
        = \e^{\lambda_{1}(\vH) t},
    \end{align}
    where we used that $\e^{\vB/r}$ is unitary so that $\norm{\e^{\vB/r}} =1$. The second claim follows from isolating the Hermitian part $\frac{1}{2}(\vB+\vB^\dagg)$, whose norm is bounded by $\nrm{\vB}$.
\end{proof}

\begin{lem}[Hermitian gap controls decay]
\label{lem:H+A_decay}
Consider a Hermitian matrix $\vH$ and an anti-Hermitian $\vA$.\linebreak If 
\begin{align}
    r := \frac{\lambda_1(\vH)}{\lambda_{gap}(\vH)} \le \frac{1}{100},
\end{align}
and $\vH+\vA$ has an eigenvalue $0$ with left and right eigenvectors 
\begin{align}
	\bra{L} (\vH+\vA) = 0\quad \text{and}\quad(\vH+\vA) \ket{R}=0,
\end{align}
then 
\begin{align}
	\lnorm{\e^{(\vH+\vA)t}\L(\vI-\vP_0\R)} \le \frac32\exp\L(\frac{\lambda_2(\vH)t}{2}\R) \quad \text{where}\quad\vP_0 :=\frac{1}{\braket{L|R}}\ketbra{R}{L}.
\end{align}
\end{lem}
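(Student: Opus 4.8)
\textbf{Plan of proof for \autoref{lem:H+A_decay}.}

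The plan is to use the spectral projector $\vP_0$ onto the kernel of $\vH+\vA$ and analyze the compression $(\vH+\vA)(\vI-\vP_0)$ to the complementary invariant subspace. First I would note that $\vP_0$ is the Riesz spectral projector for the eigenvalue $0$: since $\vH+\vA$ has $0$ as an eigenvalue, and since the anti-Hermitian perturbation is ``small'' relative to the Hermitian gap in the precise sense $r=\lambda_1(\vH)/\lambda_{gap}(\vH)\le 1/100$, the Bauer--Fike theorem (\autoref{prop:Bauer-Fike}) applied to $\vH+\vA$ versus $\vH$ shows that $0$ is an \emph{isolated} eigenvalue and the rest of $\mathrm{Spec}(\vH+\vA)$ lies in disks around $\lambda_2(\vH),\lambda_3(\vH),\dots$, hence has real part at most $\lambda_2(\vH)+\nrm{\vA}_{2-2}$; but I cannot afford the $+\nrm{\vA}$ slack directly, so instead the argument must go through a \emph{similarity transformation} that makes the perturbation genuinely controllable. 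The key step is therefore to construct an invertible $\vS$ (close to the identity) such that $\vS^{-1}(\vH+\vA)\vS$ is block-diagonal with respect to the splitting $\mathrm{Im}(\vP_0)\oplus\mathrm{Ker}(\vP_0)$, i.e., to decouple the kernel from the rest; on the complementary block the generator is a perturbation of the restriction $\vH|_{\mathrm{Ker}(\vP_0)}$, whose top eigenvalue is $\lambda_2(\vH)<0$.

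The main steps, in order, would be: (i) show $\vP_0$ is well-defined, i.e., $\braket{L|R}\neq 0$, using that $0$ is algebraically simple (again via \autoref{prop:Bauer-Fike} and the gap condition), and establish a bound on $\nrm{\vP_0}$ of the form $\nrm{\vP_0}\le 1+\bigO{r}$, since $\vP_0$ would be the orthogonal projector onto $\ker\vH$ if $\vA=0$; (ii) write $\e^{(\vH+\vA)t}(\vI-\vP_0)=\e^{(\vH+\vA)t}\vQ$ where $\vQ=\vI-\vP_0$ is the complementary (oblique) projector that commutes with $\vH+\vA$, so the dynamics restricts to $\mathrm{Im}(\vQ)$; (iii) on $\mathrm{Im}(\vQ)$, decompose the (non-normal) generator as a Hermitian part plus an anti-Hermitian part \emph{in a suitable inner product} adapted to the oblique splitting, or alternatively bound $\nrm{\e^{(\vH+\vA)t}\vQ}$ directly via \autoref{lem:decay_hermitian} after a change of variables; (iv) the Hermitian part on $\mathrm{Im}(\vQ)$ has largest eigenvalue close to $\lambda_2(\vH)$ — quantitatively, within $\bigO{r\,\lambda_{gap}(\vH)}$ of it, which for $r\le 1/100$ forces it to be $\le \lambda_2(\vH)/2$ (here I use $\lambda_2(\vH)=\lambda_1(\vH)-\lambda_{gap}(\vH)\le (r-1)\lambda_{gap}(\vH)<0$); (v) assemble: $\nrm{\e^{(\vH+\vA)t}\vQ}\le \nrm{\vQ|_{\text{restricted}}}\cdot\exp(\lambda_2(\vH)t/2)$, and absorb the prefactor $\nrm{\vQ}\le 1+\bigO{r}\le 3/2$.

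The main obstacle I expect is step (iii)–(iv): controlling the \emph{non-normality} of $\vH+\vA$ on the complementary subspace. Because $\vA$ is anti-Hermitian but need not be small in norm, the naive bound $\nrm{\e^{(\vH+\vA)t}}\le \e^{(\lambda_1(\vH)+\nrm{\vA})t}$ from \autoref{lem:decay_hermitian} is useless; one really needs that the \emph{restriction} of $\vH+\vA$ to $\mathrm{Ker}(\vP_0)$ has Hermitian part whose top eigenvalue is strictly negative and bounded away from $0$ by a fixed fraction of $\lambda_{gap}(\vH)$. The way I would handle this is to observe that $\vH+\vA$ restricted to $\mathrm{Ker}(\vP_0)$ is similar, via a transformation bounded by $1+\bigO{r}$, to a genuine \emph{perturbation of the Hermitian operator} $(\vI-\Pi_0)\vH(\vI-\Pi_0)$ (with $\Pi_0$ the orthogonal kernel projector) by an operator of norm $\bigO{r\,\lambda_{gap}(\vH)}$ plus an anti-Hermitian piece; then \autoref{lem:decay_hermitian} (second form) gives the decay rate $\le (\lambda_2(\vH)+\bigO{r\,\lambda_{gap}(\vH)})\,t \le \tfrac12\lambda_2(\vH)\,t$ once $r$ is small enough, which is exactly what the hypothesis $r\le 1/100$ buys, with the constant $3/2$ coming from the product of the two similarity-transformation norm bounds. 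The bookkeeping of these $\bigO{r}$ constants to land on the clean numbers $100$, $3/2$, and $\lambda_2(\vH)/2$ is the technical heart of the proof.
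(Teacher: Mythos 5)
Your high-level plan --- decouple via $\vP_0$, bound the decay on the complementary invariant subspace --- is correct and could be made to work, but the paper executes the key step (your (iii)--(iv)) differently. Rather than restricting $\vH+\vA$ to $\mathrm{Ker}(\vP_0)=(\ket{L})^\perp$, the paper stays on the full space via a shift trick: since $\vP_0$ commutes with $\vH+\vA$ and $\e^{x\vP_0}(\vI-\vP_0)=\vI-\vP_0$, one has $\e^{(\vH+\vA)t}(\vI-\vP_0)=\e^{(\vH+\vA-\lambda_{gap}(\vH)\vP_0)t}(\vI-\vP_0)$; writing $-\lambda_{gap}(\vH)\vP_0 = -\lambda_{gap}(\vH)\ketbra{\psi_1}{\psi_1}+\lambda_{gap}(\vH)(\ketbra{\psi_1}{\psi_1}-\vP_0)$ exhibits the shifted generator as a Hermitian matrix with top eigenvalue $\lambda_2(\vH)$, plus the anti-Hermitian $\vA$, plus a rank-one remainder of norm $\le 4.75\sqrt{r}\,\lambda_{gap}(\vH)$, and the second form of \autoref{lem:decay_hermitian} closes the argument. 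In your restriction route, the ``suitable inner product'' you hunt for is simply the inherited one: for $\ket{u},\ket{v}\in(\ket{L})^\perp$ one has $\bra{u}(\vH+\vA)\ket{v}+\overline{\bra{v}(\vH+\vA)\ket{u}}=2\bra{u}\vH\ket{v}$, so the Hermitian part of the restricted generator is exactly the compression $\vPi_L\vH\vPi_L$ (with $\vPi_L=\vI-\ketbra{L}{L}$), whose top eigenvalue is at most $\lambda_2(\vH)+\frac{r}{1-r}\lambda_{gap}(\vH)$, and the first form of \autoref{lem:decay_hermitian} finishes --- no similarity transformation needed. Indeed your similarity-transformation detour is a trap: a non-unitary $\vS$ with $\nrm{\vS-\vI}=\bigO{\sqrt{r}}$ introduces errors of order $\sqrt{r}\,\nrm{\vA}$ under conjugation, and $\nrm{\vA}$ is uncontrolled; the paper deliberately avoids conjugating by anything close to identity for exactly this reason, instead only ever \emph{adding} small, rank-one correction terms whose norm is set by $\lambda_{gap}(\vH)$ and not by $\nrm{\vA}$ or $\nrm{\vH}$.

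One quantitative slip: the perturbation of $\ket{L}$, $\ket{R}$ from $\ket{\psi_1}$ --- and hence $\nrm{\vI-\vP_0}-1$ --- is $\bigO{\sqrt{r}}$, not $\bigO{r}$, because $\Re\bra{R}(\vH+\vA)\ket{R}=0$ yields $1-|\braket{R|\psi_1}|^2\le r$, which controls the \emph{squared} sine of the angle to $\ket{\psi_1}$. Concretely the paper obtains $\nrm{\vP_0-\ketbra{\psi_1}{\psi_1}}\le 4.75\sqrt{r}$, so $\nrm{\vI-\vP_0}\le 1+4.75\sqrt{r}\le 1.475$ at $r=1/100$; this is a much narrower margin to $3/2$ than your $1+\bigO{r}$ estimate suggests, and it is precisely what determines the seemingly arbitrary threshold $r\le 1/100$.
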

Intuitively speaking, the conditions above ensure that even if the anti-Hermitian $\vA$ is large, it mainly introduces ``rotations'' and the spectral properties of $\vH$ still guarantee fast convergence to $\vP_0$.
\begin{proof}
	Let $\ket{\psi_1}$ be the eigenvector of $\vH$ corresponding to its top eigenvalue $\lambda_1(\vH)$. We can assume without loss of generality that $\bra{L}$, $\ket{R}$, and $\ket{\psi_1}$ have unit norm
	\begin{align}
		\norm{\ket{\psi_1}}=\norm{\bra{L}}=\norm{\ket{R}}=1,
	\end{align}
	and we are free to choose the phase of these vectors as the projector $\vP_0$ is invariant under changing the phase of $\bra{L}$, $\ket{R}$.
	The guiding intuition behind the proof is that if $r$ is small, then (up to a phase)
\begin{align}
     \ket{L}\approx \ket{\psi_1}\approx \ket{R}.
\end{align}
To show this, take the real part of $\bra{R} (\vH+\vA)\ket{R}=0$ to obtain\footnote{Note that $\Re(\bra{R} (\vH+\vA)\ket{R})=0$ implies $\lambda_1(\vH)\geq 0$, and therefore $r\geq 0$. This then further implies $\lambda_2(\vH)\le 0$ due to $r\leq 0.01$.}
\begin{align}
    \bra{R}\vH\ket{R} = 0 &= \lambda_1(\vH) \labs{\braket{R|\psi_1}}^2 + \sum_{i=2}\lambda_i(\vH) \labs{\braket{R|\psi_i}}^2\\
    &\le  \lambda_1 \labs{\braket{R|\psi_1}}^2 + \lambda_2 (1-\labs{\braket{R|\psi_1}}^2 ),\\
     \text{implying} \quad 1-\labs{\braket{R|\psi_1}}^2 &\le  \frac{\lambda_1(\vH)}{\lambda_{gap}(\vH)}=r,\\
     \text{and thus} \quad \nrm{\ket{R}- \ket{\psi_1}} &\le 1.01\sqrt{r},
\end{align}
where in the last step, we assumed without loss of generality that the phase of $\ket{R}$ is such that $\braket{R|\psi_1}$ is nonnegative real,
so we get $\braket{R|\psi_1}=\cos(\theta)$ for some $\theta\in [0,\pi/2]$. Then $\sin(\theta)\le\sqrt{r}$, and since for every $\theta\in [0,\arcsin(1/10)]$ we have $1-\cos(\theta)\leq \sin(\theta)/10$, we get the norm bound $\frac{101}{100}\sin(\theta)\le\frac{101}{100}\sqrt{r}$.
Similarly, 
\begin{align}
    \bra{L} (\vH+\vA)\ket{L}=0 \quad \text{implies}\quad \nrm{\ket{L}- \ket{\psi_1}} &\le 1.01\sqrt{r}.
\end{align}
As a direct consequence, using that $\sqrt{r}\le 1/10$, we get 
\begin{align}
|\braket{L|R}-1|=
|\braket{L|R}-\braket{\psi_1|\psi_1}|
&\leq|\braket{L|R}-\braket{L|\psi_1}|+|\braket{L|\psi_1}-\braket{\psi_1|\psi_1}|\\
&\leq \nrm{\ket{R}- \ket{\psi_1}} + \nrm{\ket{L}- \ket{\psi_1}}
\leq 2.02\sqrt{r}\le \frac{1}{4},
\end{align}
and similarly that
\begin{align}
\left|\frac{1}{\braket{L|R}}-1\right|
=\left|\frac{1-\braket{L|R}}{\braket{L|R}}\right|
\leq \frac{4}{3} |1-\braket{L|R}| 
\leq 2.73\sqrt{r}. 
\end{align}
Therefore, 
\begin{align}
\nrm{\vP_0-\ketbra{\psi_1}{\psi_1}}
&\leq \nrm{\frac{\ketbra{R}{L}}{\braket{L|R}} -\ketbra{R}{L}}
+ \nrm{\ketbra{R}{L}-\ketbra{R}{\psi_1}}
+ \nrm{\ketbra{R}{\psi_1}-\ketbra{\psi_1}{\psi_1}}\\&
=\left|\frac{1}{\braket{L|R}}- 1 \right| + \nrm{\ket{L}- \ket{\psi_1}} + \nrm{\ket{R}- \ket{\psi_1}}
\leq 4.75 \sqrt{r}. \label{eq:P0bound}
\end{align} 
We will use the following properties of the projector $\vP_0$; one can easily see that it commutes with $(\vH + \vA )$:
\begin{align}
    \vP_0 (\vH+\vA) = \frac{1}{\braket{L|R}} \ketbra{R}{L} (\vH+\vA) = 0 =   (\vH+\vA)\vP_0, \label{eq:P0commutes}
\end{align}
and since it is a projector, it satisfies the algebraic identity
\begin{align}
\vP_0(\vI-\vP_0) = 0\quad \text{so that}\quad \e^{x \vP_0}(\vI - \vP_0) =\vI - \vP_0 \quad \text{for any $x\in \BC$}.\label{eq:eP_0}
\end{align} 
The above properties streamline the rest of the proof of our bound
\begin{align}
\nrm{\e^{(\vH + \vA )t} (\vI - \vP_0)} 
&= \nrm{\e^{(\vH + \vA )t}\cdot \e^{-\lambda_{gap}(\vH) \vP_0 t} (\vI - \vP_0)}   \tag*{(by~\eqref{eq:eP_0})}\\& 
= \nrm{\e^{(\vH + \vA -\lambda_{gap}(\vH) \vP_0)t} (\vI - \vP_0)} \tag*{(by~\eqref{eq:P0commutes})}\\&
\leq \nrm{\e^{(\vH + \vA -\lambda_{gap}(\vH) \vP_0)t}}\cdot\nrm{\vI - \vP_0} \\&
= \Big\lVert \exp\L(\overset{\vH':=}{\overbrace{(\vH -\lambda_{gap}(\vH)\ketbra{\psi_1}{\psi_1})}}t + \vA t + \overset{\vB:=}{\overbrace{\lambda_{gap}(\vH)(\ketbra{\psi_1}{\psi_1}-\vP_0)}}t\R)\Big\rVert\cdot \nrm{\vI - \vP_0} \\&
\leq \e^{(\lambda_{1}(\vH') + \nrm{\vB})t}\left(\nrm{\vI- \ketbra{\psi_1}{\psi_1}} + \nrm{\ketbra{\psi_1}{\psi_1} - \vP_0}\right)\tag*{(by \autoref{lem:decay_hermitian})}\\&
\leq \exp\L( \lambda_2(\vH)( 1- \frac{4.75 \sqrt{r}}{1-r}) t\R)(1+4.75 \sqrt{r}). \tag*{(by Eq. \eqref{eq:P0bound})}\\
&\leq \frac32\cdot \exp\L( \frac{1}{2} \lambda_2(\vH) t\R). \tag*{(by $r\le 1/100$)\qedhere}
\end{align}
\end{proof}

Now, we can specialize the above to the case of \Lword{}s to obtain mixing times.

\gaptomixing*
\begin{proof}
	We proceed as in the proof of \autoref{prop:mixDetail}.
	For any traceless Hermitian $\vR$, we have 
	\begin{align}
		\lnormp{\e^{\CL t}[\vR]}{1} &= \lnormp{\vrho^{1/4}\e^{t(\CH + \CA )}[\vrho^{-1/4}(\vR)\vrho^{-1/4}]\vrho^{1/4}}{1} \tag*{(by definition)}	\\
		&\le\lnormp{\vrho^{1/4}}{4}\cdot\lnormp{\e^{t(\CH + \CA )}[\vrho^{-1/4}(\vR)\vrho^{-1/4}]}{2}\cdot\lnormp{\vrho^{1/4}}{4} \tag*{(by Hölder's inequality)}\\
		&\le \frac32\e^{\lambda_2(\vH)t/2}\lnormp{\vrho^{-1/4}\vR\vrho^{-1/4}}{2}\tag*{(by \autoref{lem:H+A_decay})}\\
		&\le \frac32\e^{\lambda_2(\vH)t/2} \norm{\vrho^{-1/4}}^2 \normp{\vR}{2}\tag*{(by Hölder's inequality)}\\
		&\le \frac32\e^{\lambda_2(\vH)t/2} \norm{\vrho^{-1/2}} \normp{\vR}{1}.   \tag*{(since $\vrho\succ 0$ and $\nrm{\cdot}_2\leq \nrm{\cdot}_1$)}
	\end{align}
	The second inequality uses~\autoref{lem:H+A_decay} since $\lim_{t\rightarrow\infty}\e^{\CL t}[\vR] = 0$ (or in the notation of \autoref{lem:H+A_decay}, $\vP_0\ket{\vR}=0$). By assumption,
	$\lambda_2(\CH) = \lambda_1(\CH)-\lambda_{gap}(\CH) \le  - 0.99\lambda_{gap}(\CH)$; take the logarithm to conclude the proof.
\end{proof}

\section{Improved incoherent \Lword{} simulation}\label{apx:improvLind}
The improved algorithm builds on the circuit in \autoref{fig:postWeakMeasCircuit}, which is similar to that of our weak measurement scheme in \autoref{fig:weakMeasCircuit}, with a key technical difference: here, a specific ancilla state $\vY_{\delta}\ket{0}=\sqrt{1-\delta}\ket{0}+\sqrt{\delta}\ket{1}$ ``triggers'' the appropriate weak measurement. Thus, the circuit is ``idle'' if we remove the $\vY_{\delta}$ gate, which is essential for our ``compression'' argument. However, this also makes the circuit post-selective, which could exponentially decrease the success probability if we were to run for longer times naively. Thus, for $t>2$ we decompose the simulation to $\lceil t/2\rceil$ equal segments, each of which can be amplified with constant (coherent) repetitions by oblivious amplitude amplification. 

\begin{figure}[!ht]
	\begin{quantikz}[wire types={q,q,b,b,b},classical	gap=1mm]
	 	\lstick{\ket{0}}		&\gate{\vY^{\phantom{\dagg}}_{{_{\kern-1.5mm\phantom{f}}}\delta}}			&\ctrl{1}	&\gate{\vY^\dagg_{\frac{\delta}{4}}}				&\meter{}\rstick[wires=4]{accept the all-zero outcome \\ \kern8mm and when the second qubit is $1$ \kern8mm}\\		
	 	\lstick{\ket{0}}		&\qw			&\targ{}	&\octrl{1}							&|[meter]| \qw\\
	 	\lstick{\ket{0^b}}		&\gate[3]{\vU}	&\octrl{-1} 	&\gate[3]{\vU^\dagg}	&|[meter]| \qw \\
	 	\lstick{\ket{0^{a-b}}}	& 				&\qw		&\qw 			&|[meter]| \qw \\
	 	\lstick{$\vrho$}		&				&\qw		&\qw			& \qw \rstick{$= (1-\frac{\delta}{4}) \e^{\delta \CL}\![\vrho]+\bigO{\delta^2}$}
	 \end{quantikz}
	\caption{Alternative quantum circuit implementation of an approximate $\delta$-time step via a postselective weak measurement scheme. Let $\vC'$ be the circuit that we get by removing the two single qubit rotation gates $\vY_{\delta},\vY^{\dagger}_{\delta/4}$ from the first qubit. For our compression argument it is of paramount importance that $\vC'\cdot\ketbra{0^{a+2}}{0^{a+2}}\otimes \vI =\ketbra{0^{a+2}}{0^{a+2}}\otimes \vI$.}\label{fig:postWeakMeasCircuit}
\end{figure}
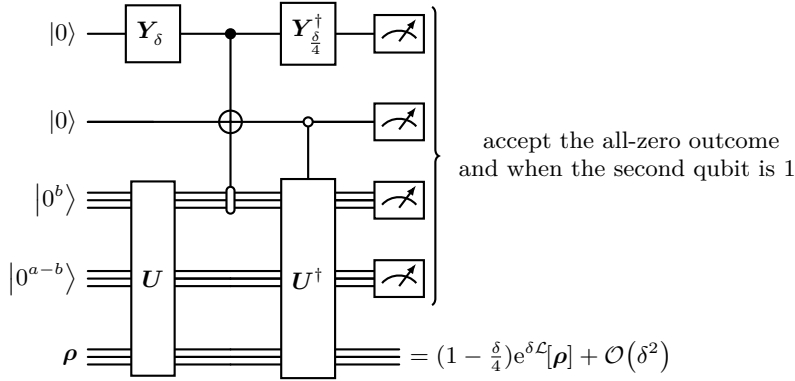

The main conceptual ingredient leading to the substantial improvement is \textit{compression}~\cite{cleve2016EffLindbladianSim}: instead of naively running $r$-repetitions of the circuit from \autoref{fig:postWeakMeasCircuit} as outlined in \autoref{fig:postWeakMeasCircuitRep}, we \textit{compress} the circuit such that it uses only $h\sim t\log(t/\epsilon) \ll r = \Theta(t^2/\epsilon)$ repetitions. The compression technique relies on understanding the joint initial state $\left(\ket{0^{c+1}}\vY_\delta\ket{0}\right)^{\otimes r}$ of the circuit in \autoref{fig:postWeakMeasCircuitRep} after the single-qubit gates. The upshot is that $\vY_\delta^{\otimes r}\ket{0^r}$ is \textit{concentrated} on strings with Hamming weight $\le h$, thus $h$ repetitions will suffice for the mass of the amplitudes. 

	\begin{figure}[!ht]
	 \yquantdefinebox{dots}[inner sep=0pt]{$\dots$}
	 \begin{tikzpicture}
	 	\begin{yquant}
	 		qubit {$\ket{0}$} a1;
	 		qubits {$\ket{0^{c+1}}$} a2;
	 		nobit etc;
	 		qubit {$\ket{0}$} b1;
	 		qubits {$\ket{0^{c+1}}$} b2;			
	 		qubit {$\ket{0}$} c1;
	 		qubits {$\ket{0^{c+1}}$} c2;			
	 		qubits {$\vrho$} r;
			
	 		box {$\vY_\delta$} a1;
	 		text {$\vdots$} etc;
	 		text {} etc;
	 		text {} etc;
	 		text {} etc;
	 		text {$\iddots$\kern-5mm} etc;
	 		box {$\vY_\delta$} b1;
	 		box {$\vY_\delta$} c1;	
			
	 		box {$\vC'$} (c1,c2,r);					
	 		box {$\vC'$} (b1,b2,r);
	 		hspace {7mm} -;
	 		box {$\vC'$} (a1,a2,r);
			
	 		align -;
	 		box {$\vY^\dagg_{\delta/4}$} a1;
	 		text {$\vdots$} etc;
	 		box {$\vY^\dagg_{\delta/4}$} b1;
	 		box {$\vY^\dagg_{\delta/4}$} c1;	
			
	 		measure a1,a2,b1,b2,c1,c2;
			
	 		output {\parbox{0.25\textwidth}{accept the all-zero outcome and when the second qubit is $1$}} (a1,a2);
	 		output {\parbox{0.25\textwidth}{accept the all-zero outcome and when the second qubit is $1$}} (b1,b2);
	 		output {\parbox{0.25\textwidth}{accept the all-zero outcome and when the second qubit is $1$}} (c1,c2);
	 		output {$= (1-\frac{\delta}{4})^r \e^{r\delta \CL}\![\vrho]+\bigO{r\delta^2}$} r;
	 	\end{yquant}
	\end{tikzpicture}
	\caption{$r$ subsequent repetitions of the circuit $\vC'$ from \autoref{fig:postWeakMeasCircuit}. The circuits $\vC'$ act on potentially nonadjacent qubits, which is indicated by the vertical curly connection between the visually split ``halves'' of the affected $\vC'$  circuits.}\label{fig:postWeakMeasCircuitRep}
\end{figure}
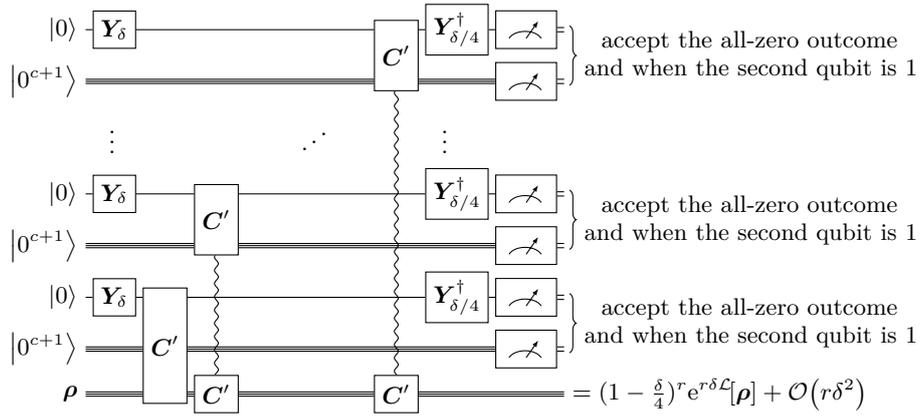

Before we dive into the proof, we explain the intuitive compression strategy in more detail; the actual proof closely follows this, but with technical changes. Let $t \le 2$ and let $\vC'$ be the circuit that we get from $\vC$ by removing the two single qubit gates $\vY_{\delta},\vY^{\dagger}_{\delta/4}$ from the first qubit in \autoref{fig:postWeakMeasCircuit}. Let $X_i$ be the random variable representing the measurement outcome of a computational basis measurement of the $i$-th qubit in $\vY_\delta^{\otimes r}\ket{0^r}$. Then, the Chernoff bound tells us that $\Pr\left(\sum_{i=1}^r X_i>(1+y)t\right)\leq \left(\frac{\e^{y}}{(1+y)^{1+y}}\right)^{\!t}$ so that the probability that the Hamming weight of $\left(\sqrt{1-\delta}\ket{0}+\sqrt{\delta}\ket{1}\right)^{\otimes r}$ is greater than $h:=(1+y)t$ is at most $\frac{\e^{h-t}t^h}{h^h}\leq (\frac{et}{h})^h\leq (\frac{2e}{h})^h$. In particular, choosing 
\begin{align}
	h=\Theta\left(\frac{\log(1/\epsilon)}{\log\log(1/\epsilon)}\right) 
\end{align}
ensures that this probability is at most $\bigO{\epsilon^2}$. Therefore, the initial state can be replaced by its (normalized) projection $\ket{\phi_0}$ to the subspace of Hamming-weight $\leq h$ states while inflicting an error that is bounded by 
\begin{align}
	\norm{\vY_\delta^{\otimes r}\ket{0^r} - \ket{\phi_0}} \le \epsilon.
\end{align}
This bound on Hamming-weights translates into a reduction of applications of the circuit $\vC'$. Since $\ket{\phi_0}$ is a superposition of bitstrings of Hamming-weights at most $h$, in all branches of the superposition all but $h$ applications of $\vC'$ can be neglected, crucially because it acts trivially when the ancilla register is in state $\ket{0^{a+2}}$:
\begin{align}
	\vC'\cdot\ketbra{0^{a+2}}{0^{a+2}}\otimes \vI =\ketbra{0^{a+2}}{0^{a+2}}\otimes \vI. \label{eq:C_idle}   
\end{align}
Now we define a compression scheme for the $r$ ancilla registers, each containing $(a+2)$ qubits\footnote{Note that the first qubit is redundant in this encoding, but we add it here for clarity of the presentation.} in \autoref{fig:postWeakMeasCircuitRep}, inspired by \cite{berry2014GateEfficientDiscreteSim,cleve2016EffLindbladianSim}. The compression scheme can represent the $\leq h$ Hamming-weight states of the ancilla registers on just $h\cdot(\log(r+1)+a+2)$ qubits (with respect to the $r$ registers to be compressed, by Hamming weight, we mean the number of registers that do not contain the state $\ket{0^{a+2}}$). Marking the register state $\ket{0^{a+2}}$ by $\pmzerodot$ and the content of the $i$-th nonzero register by $d_i$ the encoding works as follows:
\begin{align}\label{eq:encoding}
	\left(\{0,1\}^{a+2}\right)^r\ni \pmzerodot^{s_1}d_1\pmzerodot^{s_2}d_2\ldots \rightarrow (s_1,s_2,\ldots) \times (d_1,d_2,\ldots) \in\{0,1,2,\ldots,r\}^h \times \left(\{0,1\}^{a+2}\right)^h.
\end{align}
The compressed representation's first ``compression'' register contains $h$ blocks of $\log(r+1)$ qubits, designated to store a sequence $s\in\{0,1,2,\ldots,r\}^h$, where $s_i$ is the number of consecutive ancilla registers containing $\ket{0^{a+2}}$ before the $i$-th ancilla register that is not in state $\pmzerodot$; if $i$ exceeds the Hamming weight, then we set $s_i=r$. The second ``data'' register consists of $h$ blocks of $(a+2)$ qubits, where the $i$-th block represents the qubits of the $i$-th nonzero register of the uncompressed state; if the Hamming weight is less than $i$, then the block is set to $\pmzerodot$. The property~\eqref{eq:C_idle} means that the $\vC'$ gates can be applied ``transversally'' on the second ``data'' register of the encoded scheme because they do not change the location of the nonzero registers of the uncompressed state. 
\LCUSim*
\begin{proof}[Proof of \autoref{thm:LCUSim}]
	We begin with analyzing the (modified) weak-measurement scheme using similar calculations to~\autoref{thm:weakMeasSim}, and then compress it. We focus on the purely irreversible scenario, and at the end, handle the general case.
	
	\textbf{(Postselected weak-measurement.)} The circuit $\vC$ from \autoref{fig:postWeakMeasCircuit} on a pure input state $\ket{\psi}$ acts as:
	\begin{align}
	\ket{0^{a+2}}\ket{\psi}&
	\stackrel{(1)}{\rightarrow} (\sqrt{1-\delta}\ket{0}+\sqrt{\delta}\ket{1})\ket{0}\vU\ket{0^c}\ket{\psi}\nonumber\\&
	\stackrel{(2)}{\rightarrow}\sqrt{1-\delta}\ket{00}\vU\ket{0^c}\ket{\psi} + \sqrt{\delta} \ket{11}\L(\ketbra{0^b}{0^b}\otimes \vI\R)\vU\ket{0^c}\ket{\psi} + \sqrt{\delta}\ket{10}(\vI-\ketbra{0^b}{0^b}\otimes \vI) \vU\ket{0^c}\ket{\psi}\nonumber\\&
	=(\sqrt{1-\delta}\ket{00}+\sqrt{\delta}\ket{10})\vU\ket{0^c}\ket{\psi} + \sqrt{\delta} \ket{11}\ket{0^b}\underset{\ket{\psi'_0} :=}{\underbrace{(\bra{0^b}\otimes \vI)\vU\ket{0^c}\ket{\psi}}}  - \sqrt{\delta}\ket{10}\L(\ketbra{0^b}{0^b}\otimes \vI\R) \vU\ket{0^c}\ket{\psi}\nonumber\\&
	\stackrel{(3)}{\rightarrow}\left(\vY_{\frac{\delta}{4}}^\dagg\!\!\otimes \!\vI\right)\left( (\sqrt{1-\delta}\ket{00}+\sqrt{\delta}\ket{10})\ket{0^c}\ket{\psi} + \sqrt{\delta} \ket{11}\ket{0^b}\ket{\psi'_0} 
	- \sqrt{\delta}\ket{10}\vU^\dagg\L(\ketbra{0^b}{0^b}\otimes \vI\R)\vU\ket{0^c}\ket{\psi}\!\right)\label{eq:weakMeasPsi2}
	\end{align}
	Now, let us compute the part of $\vC\ket{0^{a+2}}\ket{\psi}$ starting with $\ket{0^{a+2}}$:
	\begin{align}
	(\bra{0^{a+2}}\otimes \vI) \vC\ket{0^{a+2}}\ket{\psi}
	&=\underset{=1-\frac{\delta}{8}+\bigO{\delta^2}=\sqrt{1-\frac{\delta}{4}}+\bigO{\delta^2}}{\underbrace{\Big(\sqrt{\Big(1-\frac{\delta}{4}\Big)(1-\delta)}+\frac{\delta}{2}\Big)}}\ket{\psi}-\frac{\delta}{2}\underset{\sum_{j\in J}\vL_j^\dagg\vL_j\ket{\psi}}{\underbrace{(\bra{0^{c}}\otimes \vI)\vU^\dagg\L(\ketbra{0^b}{0^b}\otimes \vI\R)\vU\ket{0^c}\ket{\psi}}}\\&
	=\sqrt{1-\frac{\delta}{4}}\left(\left(\vI-\frac{\delta}{2}\sum_{j\in J}\vL_j^\dagg\vL_j\right)\ket{\psi}+\bigO{\delta^2}\right),
	\end{align}
	and the part where second qubit is $\ket{1}$:
	\begin{align}
	(\vI\otimes \bra{1}\otimes \vI) \vC\ket{0^{a+2}}\ket{\psi}
	&=\sqrt{\delta}\vY_{\frac{\delta}{4}}^\dagg\ket{1} \ket{0^b}\ket{\psi'_0} 
	= \sqrt{\delta}\vY_{\frac{\delta}{4}}^\dagg\ket{1}\ket{0^b}\sum_{j\in J} \ket{j}\vL_j\ket{\psi}.
	\end{align}
	Let $\widetilde{\vPi}:=\frac{1}{\sqrt{1-\frac{\delta}{4}}}\left(\ketbra{0^{a+2}}{0^{a+2}}\otimes \vI+\vI\otimes\ketbra{1}{1}\otimes \vI\right)$, the above implies that 
	\begin{align}
	\tr_{a+2}\left(\widetilde{\vPi} C \left(\ketbra{0^{a+2}}{0^{a+2}}\otimes\ketbra{\psi}{\psi}\right)C^\dagg \widetilde{\vPi}\right)
	&=\left(\vI-\frac{\delta}{2}\sum_{j\in J}\vL_j^\dagg\vL_j\right)\ketbra{\psi}{\psi}\left(\vI-\frac{\delta}{2}\sum_{j\in J}\vL_j^\dagg\vL_j\right)+\delta \sum_{j\in J}\vL_j\ketbra{\psi}{\psi}\vL_j^\dagg+\bigO{\delta^2}\\&
	=\e^{\delta \CL}\![\ketbra{\psi}{\psi}]+\bigO{\delta^2}.
	\end{align}
	Similarly to the proof of \autoref{thm:weakMeasSim} it is easy to see that this implies 
	\begin{align}
	\nrm{\tr_{a+2}\left(\widetilde{\vPi} C \left(\ketbra{0^{a+2}}{0^{a+2}}\otimes[\cdot]\right)C^\dagg \widetilde{\vPi}\right)-\e^{\delta \CL}[\cdot]}_\Diamond = \bigO{\delta^2}.
	\end{align}
	
	Like in \autoref{thm:weakMeasSim}, choosing $\delta=\Theta(\epsilon/t)$ and repeating the process $r:=t/\delta$ times (every time using $a+2$ fresh ancillas) yields an $\epsilon$-accurate simulation (Figure~\ref{fig:postWeakMeasCircuitRep}). This gives similar circuit complexity as \autoref{thm:weakMeasSim}, except that the resulting postselective protocol has success probability about $(1-\frac{\epsilon}{t})^{\frac{t^2}{\epsilon}}=\exp(-\Theta(t))$.
	More precisely, the square of the subnormalization factor is $(1-\frac{\delta}{4})^{\frac{t}{\delta}}$, which is at least $\frac{1}{2}$ for $t\leq 2$ and $\delta\in(0,1]$ since $(1-\frac{\delta}{4})^{\frac{t}{\delta}}\geq (1-\frac{\delta}{4})^{\frac{2}{\delta}}\geq (\frac{3}{4})^2>\frac{1}{2}$.
	
	\textbf{(Compression.)} While the above discussion is largely similar to~\autoref{thm:weakMeasSim}, we now further compress to obtain the desired complexity. We focus on analyzing the case when $t\leq 2$, and later show how to bootstrap the results for arbitrary large $t$.

	Instead of running the verbose circuit of \autoref{fig:postWeakMeasCircuitRep}, we directly prepare the compressed version of the state $\ket{\phi_0}$ using the techniques of \cite{berry2014GateEfficientDiscreteSim}. Using the algorithm of \cite{berry2014GateEfficientDiscreteSim}, we can directly prepare the ``compression'' register corresponding to $\ket{\phi_0}$, then we can initialize the ``data'' register by looping through all blocks and applying an $\vX$ gate on the first qubit conditioned on the corresponding ``compression'' block containing a number less than~$r$.
	Then we apply the $\vC'$ gates ``transversally'', looping through each of the $h$ blocks of the ``data'' register of the compressed state in increasing order starting from the first block. 
	
	The last and technically most challenging difficulty that we face is that we need to evaluate the post-selection criterion in \autoref{fig:postWeakMeasCircuitRep}. We could, of course, uncompress the state, apply the single qubit gates $(\vY^\dagg_{\delta/4})^{\otimes r}$ and perform the measurement literally as depicted on \autoref{fig:postWeakMeasCircuitRep}, however it is possible to evaluate this criterion while keeping the compressed representation.\footnote{
		The compression and compact verification procedure for checking whether all segments were applied successfully in the encoded scheme (and the resulting reflection operator) does not appear to work as described in \cite{cleve2016EffLindbladianSim}. 
		The issue is manifest in their description of the reflection operator about accepted outcomes: ``Therefore, the corresponding operation in the encoded representation is first applying $\e^\dagg$, then applying the reflection about the encoded state $\ket{0^c}|0^b\rangle$ on the first two registers, and last applying $E$.'' 
		
		However, this approach does not seem to work, as noted in~\cite{berry2014GateEfficientDiscreteSim} (here $U_m$ is the analog of $E$ in the above quote from \cite{cleve2016EffLindbladianSim}): ``At first glance, one might imagine that applying $U_m$ in place of $R^{\otimes m}$ would yield a succinct representation of the final outcome state, so measuring in the computational basis would provide the correct result. Unfortunately, this does not accurately simulate the final measurement except in the case where the all-zero string is obtained.''
		
		For completeness, we include and analyze our modified compression / verification scheme.
		There may be a simpler fix for the above issue~\cite{wang2023email}, or alternatively, the techniques of the recent work~\cite{li2022SimMarkOpen} -- circumventing compression -- could also be compatible with our improvements~\cite{wang2023email}, which would in turn probably also simplify our circuits.
	} 
	
	We use a slightly modified variant of the compressed measurement scheme outlined in \cite[Section 5]{berry2014GateEfficientDiscreteSim} that enables us to perform a measurement of the form $\left(\vP_0\otimes\ketbra{0^{c+1}}{0^{c+1}},\vI-\vP_0\otimes\ketbra{0^{c+1}}{0^{c+1}}\right)^{\otimes r}$, where $\vP_0=\vY_{\delta/4}\ketbra{0}{0}\vY_{\delta/4}^\dagg$. This compressed measurement procedure reports the result also in a compressed form by listing the (uncompressed) indices $(i_1,i_2,\ldots,i_\ell)$ where the measurement outcome is $\vI-\vP_0\otimes\ketbra{0^{c+1}}{0^{c+1}}$.
	
	The initial observation of \cite{berry2014GateEfficientDiscreteSim} is that the measurement $\left(\left(\vP_0\otimes\ketbra{0^{c+1}}{0^{c+1}}\right)^{\otimes r},\vI-\left(\vP_0\otimes\ketbra{0^{c+1}}{0^{c+1}}\right)^{\otimes r}\right)$ can be approximately performed by using the compressed state preparation circuit approximately preparing the compressed version of the state $\left(\ket{0^{c+1}}\vY_{\delta/4}\ket{0}\right)^{\otimes r}$. Indeed, we can just run this compressed state preparation in reverse and verify that we get the all-zero state. 
	
	A similar procedure can be devised for performing the compressed measurement 
	\begin{align}\label{eq:submeasurement}
	\left(\left(\vP_0\otimes\ketbra{0^{c+1}}{0^{c+1}}\right)^{\otimes \ell},\vI-\left(\vP_0\otimes\ketbra{0^{c+1}}{0^{c+1}}\right)^{\otimes \ell}\right)
	\end{align}
	for any given consecutive $\ell$ (uncompressed) registers $d_i,d_{i+1},\ldots,d_{i+\ell-1}$. Towards this, observe that we can efficiently convert an encoded string $e_{1;r}$ of the from \eqref{eq:encoding} to a new form $(e_{1;i-1},e_{i;i+\ell-1},e_{i+\ell;r})$ where $e_{j;k}$ is an encoding of the uncompressed block-string $u_j,u_{j+1},\ldots,u_{k}$. The measurement is then performed by applying this conversion $e_{1;r}\rightarrow (e_{1;i-1},e_{i;i+\ell-1},e_{i+\ell;r})$ in superposition, then applying the ``full'' measurement \eqref{eq:submeasurement} on $\ket{e_{i;i+\ell-1}}$, and finally reversing the conversion $(e_{1;i-1},e_{i;i+\ell-1},e_{i+\ell;r})\rightarrow e_{1;r}$. This way, using binary search we can, e.g., locate the first (uncompressed) index where the measurement result is $\vI-\vP_0\otimes\ketbra{0^{c+1}}{0^{c+1}}$ using $\log(r)$ such compressed measurements. With very high probability at most $\bigO{h}$ indices will result in outcome $\vI-\vP_0\otimes\ketbra{0^{c+1}}{0^{c+1}}$, so this binary-search-based compressed measurement scheme will terminate after $\bigO{h\log(r)}$ steps with very high probability.
	We refer the reader to \cite[Section 5]{berry2014GateEfficientDiscreteSim} for further details about the precise error and run-time bounds.

	Once we obtained the list of indices $(i_1,i_2,\ldots,i_\ell)$ where the uncompressed measurement would have resulted in $\vI-\vP_0\otimes\ketbra{0^{c+1}}{0^{c+1}}$, we need to check the alternative acceptance condition, i.e., whether the second qubit is in state $1$ or not, completing the verification whether all circuit segments were applied successfully. Given such an index $i$ we first use the first ``compression'' register of the compressed encoding to identify how many nonzero (uncompressed) registers are before $i$, and then look up the corresponding block in the ``data'' register to check whether the second qubit is in state $1$; if the $i$-th uncompressed register contains $\ket{0^{a+2}}$ according to the ``compression'' register then we conclude that the second qubit is in state $0$ without looking at the ``data'' register. All of these operations can be performed in time that is polynomial in the size of the ``compression'' register, which is $\bigO{\mathrm{polylog}(1/\epsilon)}$.
	
	This completes the description of how to simulate $\e^{t \CL}\![\cdot]$ to precision $\bigO{1/\epsilon}$ with success probability $\geq 1/2$ when $t\leq 2$ using
	\begin{align}
\bigO{h\left(a+\log((r+1))\right)}&=\bigO{\log(1/\epsilon)\left(a+\log(1/\epsilon)\right)}\quad \text{ancilla qubits,}\\
	h&=\bigO{\frac{\log(1/\epsilon)}{\log\log(1/\epsilon)}}\quad \text{(controlled) uses of $\vU$ and $\vU^\dagg$,}\\
	&\bigO{(a+1)\mathrm{polylog}(1/\epsilon)}\quad \text{other two-qubit gates}.
	\end{align}

	The success probability can be improved to $1$ while keeping the precision $\bigO{1/\epsilon}$, using 3-steps of oblivious amplitude amplification, cf.~\cite{cleve2016EffLindbladianSim,gilyen2018QSingValTransf}. For $t>2$ we divide up the evolution to $\lceil t/2\rceil$ equal segments, and repeat the process $\lceil t/2\rceil$ times, setting the precision to $\epsilon/t$ in each segment. This gives the stated final complexity.
	
	With a slight modification, we can make the above algorithm work for general Lidbladians as well, assuming that we have an (at most) $a$-qubit block-encoding $\vV$ of the driving Hamiltonian term $\vH$. One just needs to modify the circuit of \autoref{fig:postWeakMeasCircuit}, sketched in \autoref{fig:postWeakMeasCircuitGen}.
	
	\begin{figure}[!ht]	
	\begin{quantikz}[wire types={q,q,q,b,b,b},classical	gap=1mm]
		 	\lstick{\ket{0}} &\gate{\vY^{\phantom{\dagg}}_{{_{\kern-1.5mm\phantom{f}}}2\delta}} &\ctrl{1}&\qw&\qw	&\ctrl{1}&\ctrl{1}	&\gate{\vY^\dagg_{\frac{5}{4}\delta}}				&\meter{}\rstick[wires=4]{accept the all-zero outcome \\\kern8mm and when the third qubit is $1$\kern8mm}\\	
		 	\lstick{\ket{0}}		&\qw&\gate{\vY_{\frac{1}{2}}}&\ctrl{2}&\qw			&\octrl{1}	&\gate{\e^{-\ri\arcsin(\sqrt{\frac{4}{5}})\vX}}&\qw						&|[meter]| \qw\\			
		 	\lstick{\ket{0}}		&\qw&\qw&\qw&\qw			&\targ{}	&\octrl{1}	&\qw						&|[meter]| \qw\\
		 	\lstick{\ket{0^b}}		&\qw&\qw&\gate[3]{\vV}&\gate[3]{\vU}	&\octrl{-1} 	&\gate[3]{\vU^\dagg}& \qw	&|[meter]| \qw \\
		 	\lstick{\ket{0^{a-b}}}	&\qw&\qw&\qw& 					&\qw	&\qw 	& \qw		&|[meter]| \qw \\
		 	\lstick{$\vrho$}		&\qw&\qw&\qw&					&\qw		&\qw			& \qw	& \qw \rstick{$= (1-\frac{5}{4}\delta) \e^{\delta \CL}\![\vrho]+\bigO{\delta^2}$}
		\end{quantikz}
		\caption{Alternative quantum circuit implementation of an approximate $\delta$-time step via a postselective weak measurement scheme including the coherence term $-i[\vH,\vrho]$ for the block-encoded Hamiltonian $\vH=\left(\bra{0^c}\otimes \vI\right)\vV\left(\ket{0^c}\otimes \vI\right)$. }\label{fig:postWeakMeasCircuitGen}
	\end{figure}
	For completeness, we include the analysis of the circuit $\vC_{\vH}$ from \autoref{fig:postWeakMeasCircuitGen}. Let us define
	\begin{align}
	\vR:=(\vY^\dagg_{\frac{5}{4}\delta}\otimes \vI)\cdot\left(\ketbra{0}{0}\otimes \vI+ \ketbra{1}{1}\otimes \e^{-\ri\arcsin(\sqrt{\frac{4}{5}})\vX}\right).
	\end{align}
	Similarly to \eqref{eq:weakMeasPsi2} we analyze the action of $\vC_{\vH}$ on a pure state $\ket{\psi}$:
	\begin{align}
	\ket{0^{a+3}}\ket{\psi}&
	\stackrel{(1)}{\rightarrow} (\sqrt{1-2\delta}\ket{0}+\sqrt{2\delta}\ket{1})\ket{00}\ket{0^c}\ket{\psi}\nonumber\\&
	\stackrel{(2)}{\rightarrow} (\sqrt{1-2\delta}\ket{00}+\sqrt{\delta}\ket{10}+\sqrt{\delta}\ket{11})\ket{0}\ket{0^c}\ket{\psi}\nonumber\\&
	\stackrel{(3)}{\rightarrow} (\sqrt{1-2\delta}\ket{000}+\sqrt{\delta}\ket{100})\ket{0^c}\ket{\psi}+\sqrt{\delta}\ket{110}\vV\ket{0^c}\ket{\psi}\nonumber\\&
	\stackrel{(4)}{\rightarrow} (\sqrt{1-2\delta}\ket{000}+\sqrt{\delta}\ket{100})\vU\ket{0^c}\ket{\psi}+\sqrt{\delta}\ket{110}\vU\vV\ket{0^c}\ket{\psi}\nonumber\\&
	\stackrel{(5)}{\rightarrow}\sqrt{1-2\delta}\ket{000}\vU\ket{0^c}\ket{\psi} + \sqrt{\delta} \ket{101}\L(\ketbra{0^b}{0^b}\otimes \vI\R)\vU\ket{0^c}\ket{\psi} + \sqrt{\delta}\ket{100}(\vI-\ketbra{0^b}{0^b}\otimes \vI) \vU\ket{0^c}\ket{\psi}\\&\kern120mm + \sqrt{\delta}\ket{110}\vU\vV\ket{0^c}\ket{\psi}\nonumber\\&
	=(\sqrt{1-2\delta}\ket{000}+\sqrt{\delta}\ket{100})\vU\ket{0^c}\ket{\psi} + \sqrt{\delta} \ket{101}\ket{0^b}\underset{\ket{\psi'_0} :=}{\underbrace{(\bra{0^b}\otimes \vI)\vU\ket{0^c}\ket{\psi}}}  - \sqrt{\delta}\ket{100}\L(\ketbra{0^b}{0^b}\otimes \vI\R) \vU\ket{0^c}\ket{\psi}\nonumber\\&\kern120mm + \sqrt{\delta}\ket{110}\vU\vV\ket{0^c}\ket{\psi}\nonumber\\&
	\stackrel{(6-7)}{\rightarrow}\left(\vR\otimes\vI\right)\left( (\sqrt{1-\delta}\ket{000}+\sqrt{\delta}\ket{100})\ket{0^c}\ket{\psi} + \sqrt{\delta} \ket{101}\ket{0^b}\ket{\psi'_0} 
	- \sqrt{\delta}\ket{100}\vU^\dagg\L(\ketbra{0^b}{0^b}\otimes \vI\R)\vU\ket{0^c}\ket{\psi}\!\right)\\&\kern110mm + \left(\vR\otimes\vI\right)\sqrt{\delta}\ket{110}\vV\ket{0^c}\ket{\psi}
	\label{eq:weakMeasPsi3}
	\end{align}
	Considering that 
	\begin{align}
	\bra{000}(\vR\otimes I)=\left(\vR^\dagg\ket{00}\right)^{\!\dagg}\bra{0}=\left(\sqrt{1-\frac{5}{4}\delta}\bra{000}+\sqrt{\frac{\delta}{4}}\bra{100}-i\sqrt{\delta}\bra{110}\right),
	\end{align}
	let us compute the part of $\vC_{\vH}\ket{0^{a+3}}\ket{\psi}$ starting with $\ket{0^{a+3}}$:
	\begin{align}
	(\bra{0^{a+3}}\otimes \vI) \vC_{\vH}\ket{0^{a+3}}\ket{\psi}
	&=\underset{=1-\frac{5}{8}\delta+\bigO{\delta^2}=\sqrt{1-\frac{5}{4}\delta}+\bigO{\delta^2}}{\underbrace{\Big(\sqrt{\Big(1-\frac{5}{4}\delta\Big)(1-\delta)}+\frac{\delta}{2}\Big)}}\ket{\psi}-\frac{\delta}{2}\underset{\sum_{j\in J}\vL_j^\dagg\vL_j\ket{\psi}}{\underbrace{(\bra{0^{c}}\otimes \vI)\vU^\dagg\L(\ketbra{0^b}{0^b}\otimes \vI\R)\vU\ket{0^c}\ket{\psi}}}\\&
	\\[-5mm]&\kern50mm - i\delta(\bra{0^{c}}\otimes \vI)\vV\ket{0^c}\ket{\psi}\nonumber\\&
	=\sqrt{1-\frac{5}{4}\delta}\left(\left(\vI-i\delta\vH-\frac{\delta}{2}\sum_{j\in J}\vL_j^\dagg\vL_j\right)\ket{\psi}+\bigO{\delta^2}\right),
	\end{align}
	and the part where third qubit is $\ket{1}$:
	\begin{align}
	(\vI\otimes \bra{1}\otimes \vI) \vC_{\vH}\ket{0^{a+3}}\ket{\psi}
	&=\sqrt{\delta}\vR\ket{10} \ket{0^b}\ket{\psi'_0} 
	= \sqrt{\delta}\vR\ket{10}\ket{0^b}\sum_{j\in J} \ket{j}\vL_j\ket{\psi}.
	\end{align}
	Let $\widetilde{\vPi}:=\frac{1}{\sqrt{1-\frac{5}{4}\delta}}\left(\ketbra{0^{a+3}}{0^{a+3}}\otimes \vI+\vI\otimes\ketbra{1}{1}\otimes \vI\right)$, the above implies that 
	\begin{align}
	\tr_{a+3}\left(\widetilde{\vPi} C \left(\ketbra{0^{a+3}}{0^{a+3}}\otimes\ketbra{\psi}{\psi}\right)C^\dagg \widetilde{\vPi}\right)
	&=\left(\vI-i\delta\vH-\frac{\delta}{2}\sum_{j\in J}\vL_j^\dagg\vL_j\right)\ketbra{\psi}{\psi}\left(\vI+i\delta\vH-\frac{\delta}{2}\sum_{j\in J}\vL_j^\dagg\vL_j\right)\\&\kern50mm+\delta \sum_{j\in J}\vL_j\ketbra{\psi}{\psi}\vL_j^\dagg+\bigO{\delta^2}\\&
	=\e^{\delta \CL}\![\ketbra{\psi}{\psi}]+\bigO{\delta^2}.
	\end{align}
	Similarly to the proof of \autoref{thm:weakMeasSim} it is easy to see that this implies
	\begin{align}
	\nrm{\tr_{a+2}\left(\widetilde{\vPi} C \left(\ketbra{0^{a+3}}{0^{a+3}}\otimes[\cdot]\right)C^\dagg \widetilde{\vPi}\right)-\e^{\delta \CL}[\cdot]}_\Diamond = \bigO{\delta^2}.\tag*{\qedhere}
	\end{align}	
\end{proof}

\section{Quantum simulated annealing}\label{sec:simulated_annealing}
A subroutine for a coherent Gibbs sampler is to prepare the top eigenvector of the discriminant. In semi-group settings, one simply iterates the map for an arbitrary initial state to find its fixed point; given coherent access to some discriminant $\CD$ (which is not quite a CPTP map), the standard approach is \emph{quantum simulated annealing}~\cite{Wocjan_2008_quantum_sampling, yung2010QuantumQuantumMetropolis,boixo2010QAlgTraversingEigStatePaths}. 
To keep this section self-contained, in the following, we assume coherent access to some discriminants $\vec{\CD}_{\beta_j}$.
First, we use QSVT to boost the gap; this is the origin of the quadratic speedup.
\begin{prop}[Quadratic speedup~\cite{low2017HamSimUnifAmp,gilyen2018QSingValTransf}]
Given a block-encoding $\vU_{\CD}$ of a Hermitian matrix $\vI+\vec{\CD}$ with eigenvalue gap $\lambda_{gap}(\CD)$ and $\lambda_1(\CD)\geq -\lambda_{gap}(\CD)$, we can construct a unitary $\vU'$ block-encoding a matrix $p(\vec{\vI+\CD})$ that  has the same top eigenvector as $\vec{\CD}$ but with $\CO(1)$ eigenvalue gap, with $\CO(\frac{1}{\sqrt{\lambda_{gap}(\CD)}})$ uses of $\vU_{\CD}$ and $\vU_{\CD}^\dagg$.
\end{prop}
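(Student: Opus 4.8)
The plan is to realize the ``amplified'' block-encoding by applying a polynomial transformation to the singular values (here, eigenvalues, since $\vI+\vec{\CD}$ is Hermitian) via quantum singular value transformation. First I would set $x_0 := 1 + \lambda_1(\vec{\CD})$ for the top eigenvalue of $\vI+\vec{\CD}$, and note that since $\nrm{\vec{\CD}}\le 1$ (which follows from the normalization $\normp{\CL_\beta}{1-1}\le 2$ of the underlying \Lword{}) and $\lambda_1(\vec{\CD})\ge -\lambda_{gap}(\vec{\CD})$, the spectrum of $\vI+\vec{\CD}$ lies in $[0,2]$ with a gap: all eigenvalues except the top one are at most $x_0 - \lambda_{gap}(\vec{\CD})$, while the top one equals $x_0$. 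After rescaling by $\tfrac12$ so everything sits in $[0,1]$ (standard for QSVT block-encodings), the task becomes: construct a bounded polynomial $p$ with $|p|\le 1$ on $[0,1]$ such that $p$ maps the interval $[0,(x_0-\lambda_{gap})/2]$ into a small neighborhood of $0$ and the point $x_0/2$ to a value of order $1$, with the derivative/separation controlled so that the induced gap is $\Omega(1)$.

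The key step is the choice of polynomial. I would use the well-known ``rectangle/threshold'' polynomial built from Chebyshev polynomials (as in Low--Chuang amplitude amplification and Gilyén et al.): there is a polynomial of degree $\CO(\tfrac{1}{\delta}\log\tfrac{1}{\varepsilon})$ that is $\le\varepsilon$ in absolute value on $[0,1-\delta]$ and $\ge 1-\varepsilon$ on a point at distance $\delta$ above. Here $\delta$ should be taken proportional to $\lambda_{gap}(\vec{\CD})$ (the additive separation between $x_0$ and the rest of the spectrum, after the $\tfrac12$ rescaling), so the degree is $\CO(\lambda_{gap}(\vec{\CD})^{-1}\log(1/\varepsilon))$. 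To get a genuine $\sqrt{}$-speedup one instead uses the smoothed step function whose relevant width parameter scales as $\sqrt{\lambda_{gap}}$ rather than $\lambda_{gap}$ — concretely, a polynomial approximating $\mathrm{sign}$ or an error-function profile with transition width $\Theta(\sqrt{\lambda_{gap}})$, which has degree $\CO(\tfrac{1}{\sqrt{\lambda_{gap}}}\log\tfrac{1}{\varepsilon})$; this is the point where the quadratic improvement over the naive power-iteration degree $\CO(1/\lambda_{gap})$ enters. Applying QSVT with this polynomial to $\vU_{\CD}$ yields a unitary $\vU'$ that block-encodes $p(\tfrac12(\vI+\vec{\CD}))$; since $p$ is a scalar function applied eigenvalue-wise and preserves the ordering near the top, $\vU'$ has the same top eigenvector $\ket{\lambda_1(\vec{\CD})}$, and by construction its top eigenvalue is $\Omega(1)$ while all others are $\CO(\varepsilon)$, giving an $\Omega(1)$ gap after absorbing $\varepsilon$ into constants.

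I would then record the cost: QSVT with a degree-$d$ polynomial uses $\CO(d)$ applications of $\vU_{\CD}$, $\vU_{\CD}^\dagg$, and controlled reflections, so with $d=\CO(\tfrac{1}{\sqrt{\lambda_{gap}(\vec{\CD})}})$ (treating $\varepsilon$ as a fixed small constant, or paying an extra $\log(1/\varepsilon)$ factor otherwise) we get exactly the claimed $\CO(1/\sqrt{\lambda_{gap}(\vec{\CD})})$ query complexity. The main obstacle I anticipate is a careful treatment of the ``shifted'' threshold: the location $x_0$ of the top eigenvalue is not known a priori and may drift along the annealing path, so the polynomial must be a genuine thresholding/step polynomial that separates a whole interval from a whole interval rather than hitting a known target value — this is handled by the smoothed-sign construction, but one must verify boundedness $|p|\le 1$ on all of $[0,1]$ (not just the two relevant sub-intervals) to ensure QSVT is applicable, and check that the condition $\lambda_1(\vec{\CD})\ge -\lambda_{gap}(\vec{\CD})$ guarantees the top eigenvalue is actually the \emph{largest} after transformation (so that the subsequent projector/annealing step picks out the right state). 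The rest — invoking the QSVT theorem as a black box and bookkeeping the ancilla count — is routine.
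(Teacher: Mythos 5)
Your overall plan — apply a QSVT threshold polynomial to the block-encoded $\vI+\vec{\CD}$ to amplify the gap, targeting degree $\CO(1/\sqrt{\lambda_{gap}})$ — is the right approach and matches the cited references, but the step where the quadratic speedup is supposed to come from is stated incorrectly in a way that would break the construction. You propose a sign/erf polynomial ``with transition width $\Theta(\sqrt{\lambda_{gap}})$.'' A polynomial whose transition width in \emph{eigenvalue space} is $\Theta(\sqrt{\lambda_{gap}})\gg\lambda_{gap}$ cannot separate the top eigenvalue from the second, which are only $\lambda_{gap}$ apart. The transition width in eigenvalue space must still be $\Theta(\lambda_{gap})$; the degree is $\CO(\lambda_{gap}^{-1/2})$ rather than the generic $\CO(\lambda_{gap}^{-1})$ because the transition sits next to the endpoint $x=1$ of the QSVT domain $[-1,1]$. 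Under the substitution $x=\cos\theta$, a $\Theta(\lambda_{gap})$-wide window near $x=1$ stretches to a $\Theta(\sqrt{\lambda_{gap}})$-wide window in $\theta$ (since $\arccos(1-\delta)\approx\sqrt{2\delta}$), and a degree-$d$ polynomial in $x$ is a degree-$d$ trigonometric polynomial in $\theta$, resolving $\theta$-features of scale $1/d$. Equivalently, by the Markov brothers'/Bernstein inequalities a degree-$d$ polynomial bounded by $1$ on $[-1,1]$ may change by $\Theta(1)$ over width $\Theta(1/d^2)$ near $\pm1$ but only $\Theta(1/d)$ in the interior; you implicitly invoke the interior resolution while claiming the boundary degree, and the two statements do not reconcile as written.

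Relatedly, the $\tfrac12$-rescaling you propose would destroy exactly this near-boundary effect. The premise $\nrm{\vec{\CD}}\le 1$ is not the relevant one: the paper's construction gives $\vI+\vec{\CD}$ as an isometry conjugated onto a unitary reflection, so $\nrm{\vI+\vec{\CD}}\le 1$ already holds and the block-encoded matrix has its spectrum in $[-1,1]$ without any rescaling. The hypothesis $\lambda_1(\vec{\CD})\ge -\lambda_{gap}(\vec{\CD})$ is precisely what pins the top eigenvalue of $\vI+\vec{\CD}$ into $[1-\lambda_{gap},1]$, placing the spectral window of interest against the boundary where polynomial amplification is quadratically more efficient. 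If you rescaled so this window sat near $1/2$, the speedup would vanish and the achievable degree would revert to $\CO(1/\lambda_{gap})$. To repair the proof you should drop the rescaling, keep the threshold transition width at $\Theta(\lambda_{gap})$, and explicitly invoke the boundary-compressed polynomial construction (e.g., the $\arccos$ change of variable plus a smoothed step in $\theta$) to justify the degree.
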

Second, following~\cite{boixo2010QAlgTraversingEigStatePaths}, we consider the discretized adiabatic paths through temperatures
\begin{align}
&\ket{\lambda_1(\CD_{\beta_0})}     \rightarrow \cdots  \ket{\lambda_1(\CD_{\beta_j})}     \rightarrow \cdots  \ket{\lambda_1(\CD_{\beta_{k}})} \\
&\vU_{\CD_{\beta_0}}     \rightarrow \cdots  \vU_{\CD_{\beta_j}}     \rightarrow \cdots  \vU_{\CD_{\beta_k}} \quad \text{where}\quad \beta_j = \frac{j}{k} \beta\label{eq:adiabatic_path}.
\end{align}
While a more refined annealing schedule is possible, we consider the above linear schedule for simplicity. In particular, the initial state is the maximally entangled state $\beta = 0$. 
In the following, we show that choosing $k=\Theta(\beta\nrm{\vH})$ ensures that the consecutive overlaps remain constant large, allowing us to jump between consecutive states using a few steps of (fixed-point) amplitude amplification. 

\begin{prop}[Consecutive overlaps]\label{prop:consecutive_overlaps}
Suppose the discriminants have a top eigenvector close to the purified Gibbs states
\begin{align}\label{eq:topEVDiscrepancy}
    \norm{\ket{\lambda_1(\vec{\CD}_{\beta_j}} - \ket{\sqrt{\vrho_{\beta_j}}}} \le \frac{1}{10}.
\end{align}
Let $\delta \beta:=\beta_{j+1}-\beta_{j}$, then the consecutive overlaps are large
\begin{align}\label{eq:largeOverlap}
    \labs{\braket{\lambda_1(\vec{\CD}_{\beta_j})|\lambda_1(\vec{\CD}_{\beta_{j+1}})}}^2 \ge \frac{7}{10}-\CO((\delta \beta)^2 \norm{\vH^2\e^{-\delta \beta\vH}}) .
\end{align}
\end{prop}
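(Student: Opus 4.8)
\textbf{Proof proposal for \autoref{prop:consecutive_overlaps}.}
The plan is to bound the consecutive overlap by a triangle-inequality reduction to the overlap of the \emph{idealized} purified Gibbs states $\ket{\sqrt{\vrho_{\beta_j}}}$ and $\ket{\sqrt{\vrho_{\beta_{j+1}}}}$, exploiting the near-identity \eqref{eq:topEVDiscrepancy} between the true top eigenvectors and these purifications, and then estimating the idealized overlap explicitly by expanding in the energy eigenbasis. First I would write, using \eqref{eq:topEVDiscrepancy} at indices $j$ and $j+1$ together with the triangle inequality,
\begin{align}
	\lnorm{\ket{\lambda_1(\vec{\CD}_{\beta_j})} - \ket{\lambda_1(\vec{\CD}_{\beta_{j+1}})}}
	&\le \lnorm{\ket{\lambda_1(\vec{\CD}_{\beta_j})} - \ket{\sqrt{\vrho_{\beta_j}}}}
	+ \lnorm{\ket{\sqrt{\vrho_{\beta_j}}} - \ket{\sqrt{\vrho_{\beta_{j+1}}}}}
	+ \lnorm{\ket{\sqrt{\vrho_{\beta_{j+1}}}} - \ket{\lambda_1(\vec{\CD}_{\beta_{j+1}})}}\\
	&\le \frac{1}{5} + \lnorm{\ket{\sqrt{\vrho_{\beta_j}}} - \ket{\sqrt{\vrho_{\beta_{j+1}}}}},
\end{align}
so that it remains to lower bound $\labs{\braket{\sqrt{\vrho_{\beta_j}}}{\sqrt{\vrho_{\beta_{j+1}}}}}$ and then convert back into a bound on $\labs{\braket{\lambda_1(\vec{\CD}_{\beta_j})}{\lambda_1(\vec{\CD}_{\beta_{j+1}})}}^2$ using that for unit vectors $\labs{\braket{u}{v}}^2 \ge 1 - \lnorm{\ket{u}-\ket{v}}^2$ (after fixing phases to make $\braket{u}{v}$ real nonnegative, which only helps).

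The core computation is the idealized overlap. Writing $Z_{\beta}:=\tr[\e^{-\beta \vH}]$ and expanding in the eigenbasis $\vH = \sum_i E_i \ketbra{\psi_i}{\psi_i}$,
\begin{align}
	\braket{\sqrt{\vrho_{\beta_j}}}{\sqrt{\vrho_{\beta_{j+1}}}}
	= \frac{1}{\sqrt{Z_{\beta_j} Z_{\beta_{j+1}}}} \sum_i \e^{-(\beta_j + \beta_{j+1})E_i/2}
	= \frac{Z_{(\beta_j+\beta_{j+1})/2}}{\sqrt{Z_{\beta_j} Z_{\beta_{j+1}}}}.
\end{align}
This is a ratio of partition functions at the arithmetic mean versus the geometric mean, which equals $1$ to zeroth and first order in $\delta\beta := \beta_{j+1}-\beta_j$ and deviates only at second order; concretely I would expand $\log Z_{\beta}$ around $\beta_j$, noting $\log Z_{(\beta_j+\beta_{j+1})/2} - \tfrac12(\log Z_{\beta_j}+\log Z_{\beta_{j+1}}) = -\tfrac18(\delta\beta)^2 (\log Z)''(\xi)$ for some $\xi\in[\beta_j,\beta_{j+1}]$ by Taylor's theorem, and $(\log Z)''(\xi) = \mathrm{Var}_{\vrho_\xi}(\vH) \ge 0$ is the energy variance. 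Hence $\braket{\sqrt{\vrho_{\beta_j}}}{\sqrt{\vrho_{\beta_{j+1}}}} = \exp(-\tfrac18(\delta\beta)^2 \mathrm{Var}_{\vrho_\xi}(\vH))$, and bounding the variance crudely by $\mathrm{Var}_{\vrho_\xi}(\vH) \le \tr[\vH^2 \e^{-\xi \vH}]/Z_\xi \le \nrm{\vH^2 \e^{-\delta\beta \vH}}\cdot(\text{const})$ — absorbing the normalization and the shift from $\xi$ to $\beta_j$ into the stated $\CO(\cdot)$ — gives $\lnorm{\ket{\sqrt{\vrho_{\beta_j}}} - \ket{\sqrt{\vrho_{\beta_{j+1}}}}}^2 = 2(1 - \braket{\sqrt{\vrho_{\beta_j}}}{\sqrt{\vrho_{\beta_{j+1}}}}) = \CO((\delta\beta)^2\nrm{\vH^2 \e^{-\delta\beta\vH}})$.

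Combining: $\lnorm{\ket{\lambda_1(\vec{\CD}_{\beta_j})} - \ket{\lambda_1(\vec{\CD}_{\beta_{j+1}})}} \le \tfrac15 + \CO(\delta\beta\sqrt{\nrm{\vH^2\e^{-\delta\beta\vH}}})$, so $\labs{\braket{\lambda_1(\vec{\CD}_{\beta_j})}{\lambda_1(\vec{\CD}_{\beta_{j+1}})}}^2 \ge 1 - (\tfrac15 + \CO(\cdots))^2 \ge \tfrac{7}{10} - \CO((\delta\beta)^2\nrm{\vH^2\e^{-\delta\beta\vH}})$, which is the claim (the constant $1-\tfrac1{25} = \tfrac{24}{25} \ge \tfrac7{10}$ gives slack, and the cross term $\tfrac25\cdot\CO(\cdots)$ is dominated after using $2ab \le a^2+b^2$, so it gets absorbed into the $\CO((\delta\beta)^2\nrm{\cdots})$ term up to adjusting constants). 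The main obstacle I anticipate is purely bookkeeping: making the passage from the energy-variance expression to the stated $\nrm{\vH^2\e^{-\delta\beta\vH}}$ form clean — in particular handling the partition-function normalizations (one wants $Z_\xi \ge Z_{\delta\beta}\cdot(\text{something})$ or a similar monotonicity bound) and justifying replacing $\xi$ by $\delta\beta$ in the exponent of $\e^{-\xi\vH}$ without changing the order of the estimate. None of this is deep; it is just a matter of choosing the cruder bounds consistently so the final expression matches \eqref{eq:largeOverlap} verbatim.
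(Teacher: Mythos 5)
Your proof follows the paper's high-level strategy: reduce the consecutive overlap to the ideal overlap $\braket{\sqrt{\vrho_{\beta_j}}}{\sqrt{\vrho_{\beta_{j+1}}}}$, estimate the ideal overlap to second order in $\delta\beta$, then account for the $\tfrac1{10}$-discrepancy (you do this more explicitly than the paper, which just remarks that the discrepancy ``only induces a minor change''). Where you genuinely diverge is in how the ideal overlap is estimated. You write it as $Z_{(\beta_j+\beta_{j+1})/2}/\sqrt{Z_{\beta_j}Z_{\beta_{j+1}}}$ and Taylor-expand $\log Z$, producing $\exp(-\tfrac{(\delta\beta)^2}{8}\mathrm{Var}_{\vrho_\xi}(\vH))$ with a mean-value $\xi\in[\beta_j,\beta_{j+1}]$. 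The paper instead rewrites the overlap as $\braket{\e^{-\delta\beta\vH/2}}_{\beta_j}^2/\braket{\e^{-\delta\beta\vH}}_{\beta_j}$ and expands $\e^{-\delta\beta\vH/2}$ \emph{inside} the thermal average at the fixed temperature $\beta_j$; the remainder of that expansion is what carries the $\e^{-\delta\beta\vH}$ damping factor into the error bound.

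The step you flag as ``purely bookkeeping'' --- passing from $\mathrm{Var}_{\vrho_\xi}(\vH)$ to $\bigO{\nrm{\vH^2\e^{-\delta\beta\vH}}}$ --- is in fact a genuine gap, not a cosmetic one. The variance at a mean-value temperature $\xi$ has no mechanism to produce the $\e^{-\delta\beta\vH}$ factor; it is simply $\braket{\vH^2}_\xi-\braket{\vH}_\xi^2$. Take $\vH\succeq 0$ with eigenvalues $\{0,\nrm{\vH}\}$, the first annealing step $\beta_j=0$, and $\delta\beta\nrm{\vH}\gg 1$. Then $\xi$ can be arbitrarily close to $0$, so $\mathrm{Var}_{\vrho_\xi}(\vH)\approx\nrm{\vH}^2/4$, while $\nrm{\vH^2\e^{-\delta\beta\vH}}=\nrm{\vH}^2\e^{-\delta\beta\nrm{\vH}}$, exponentially smaller. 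Hence $\mathrm{Var}_{\vrho_\xi}(\vH)\le\bigO{\nrm{\vH^2\e^{-\delta\beta\vH}}}$ is false, and no choice of absolute constants repairs it. What your argument \emph{does} cleanly give is the weaker-looking bound $\tfrac7{10}-\bigO{(\delta\beta)^2\max_{\xi\in[\beta_j,\beta_{j+1}]}\mathrm{Var}_{\vrho_\xi}(\vH)}$, which is perfectly serviceable for the downstream annealing cost analysis (where $\delta\beta\nrm{\vH}=\bigO{1}$ so the two expressions are of the same order), but it is not the form stated in the proposition. To recover the stated $\nrm{\vH^2\e^{-\delta\beta\vH}}$ form one really should follow the paper's route and expand $\e^{-\delta\beta\vH/2}$ inside $\braket{\cdot}_{\beta_j}$, so that the thermal average is held at $\beta_j$ and the second-order remainder retains an explicit $\e^{-\delta\beta\vH}$ weight, rather than appealing to a mean-value temperature whose location you cannot control.
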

\begin{proof}
 Let us evaluate the overlap between the ideal Gibbs states and rewrite using the Hilbert-Schmidt inner product
\begin{align}
\labs{\braket{\sqrt{\vrho_{\beta_j}}|\sqrt{\vrho_{\beta_{j+1}}}}}^2 = \frac{\tr\L[\e^{-\beta_j \vH/2} \e^{-\beta_j \vH/2} \e^{-\delta\beta \vH/2}\R]^2 }{\tr[\e^{-\beta_j \vH}]\tr[\e^{-\beta_j \vH} \e^{-\delta\beta \vH}]} = \frac{\braket{\e^{-\delta\beta \vH/2}}_{\beta_j}^2}{\braket{\vI}_{\beta_j}\braket{\e^{-\delta\beta \vH}}_{\beta_j}} &= 1 - \CO((\delta \beta)^2 \braket{\vH^2\e^{-\delta \beta\vH}}_{\beta_j})\\
    &\ge 1 - \CO((\delta \beta)^2 \norm{\vH^2\e^{-\delta \beta\vH}}),
\end{align}
where we denote the thermal expectation by $\braket{\vA}_{\beta} = \tr[\vrho_{\beta}\vA]$. The last equality expands the exponential
\begin{align}
    \e^{-\delta \beta \vH/2}  =  \vI - \delta \beta \vH/2 + (\e^{-\delta \beta \vH/2} - \vI + \delta \beta \vH/2).
\end{align}

The small discrepancy \eqref{eq:topEVDiscrepancy} between $\ket{\lambda_1(\CD_{\beta_j}}$ and $\ket{\sqrt{\vrho_{\beta_j}}}$ only induces a minor change in the overlaps thus proving \eqref{eq:largeOverlap}. 
\end{proof}

\begin{prop}[Simulated annealing] In the setting of \autoref{prop:consecutive_overlaps},
following the discretized adiabatic path~\eqref{eq:adiabatic_path} with $k = \lceil c \beta \norm{\vH}\rceil$ prepares a state $\ket{\psi}$ such that $\nrm{\ket{\psi}-\ket{\lambda_1(\vec{\CD}_{\beta_{j+1}}}}\leq \delta$ using
\begin{align}
    \bigO{\frac{\beta \norm{\vH}}{\min_{j}\sqrt{\lambda_{gap}(\vec{\CD}_{\beta_j})}}\log^2(\beta \norm{\vH}/\delta)}.
\end{align}
total calls for the oracle for discriminants $\vec{\CD}_{\beta_j}$.
\end{prop}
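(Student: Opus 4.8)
The plan is to set up the standard quantum simulated annealing analysis using the tools assembled in this section: gap amplification via QSVT and consecutive-overlap bounds along the temperature path. First I would fix the discretization parameter $k = \lceil c\beta\norm{\vH}\rceil$ for a suitable constant $c>0$, which by \autoref{prop:consecutive_overlaps} guarantees $\delta\beta = \beta/k = \bigO{1/\norm{\vH}}$ and hence $(\delta\beta)^2\norm{\vH^2\e^{-\delta\beta\vH}} \le (\delta\beta)^2\norm{\vH}^2 = \bigO{1}$ — and in fact can be made as small as a fixed constant $\le 1/100$ by taking $c$ large enough. Plugging this into \eqref{eq:largeOverlap} yields $\labs{\braket{\lambda_1(\vec{\CD}_{\beta_j})|\lambda_1(\vec{\CD}_{\beta_{j+1}})}}^2 \ge 7/10 - 1/100 > 2/3$ for every consecutive pair along the path~\eqref{eq:adiabatic_path}, where I would use the hypothesis~\eqref{eq:topEVDiscrepancy} that each discriminant's top eigenvector is $\tfrac1{10}$-close to the corresponding purified Gibbs state (this is exactly what \autoref{thm:D_correct} furnishes once $\sigma_t$ is chosen appropriately at each $\beta_j$).

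Next I would describe one step of the annealing iteration. Given a state $\ket{\psi_j}$ that is $\delta_j$-close to $\ket{\lambda_1(\vec{\CD}_{\beta_j})}$, I would first apply QSVT-based gap amplification to the block-encoding of $\vI+\vec{\CD}_{\beta_{j+1}}$ to obtain a block-encoding of a polynomial $p(\vI+\vec{\CD}_{\beta_{j+1}})$ with the same top eigenvector but with spectral gap $\Omega(1)$, at a cost of $\bigO{1/\sqrt{\lambda_{gap}(\vec{\CD}_{\beta_{j+1}})}}$ calls to the discriminant oracle. Then the amplified block-encoding acts (approximately) as a reflection-like operator that, combined with the reflection about $\ket{\psi_j}$, lets us run fixed-point amplitude amplification (e.g.\ the construction of~\cite{gilyen2018QSingValTransf}) to rotate $\ket{\psi_j}$ onto $\ket{\lambda_1(\vec{\CD}_{\beta_{j+1}})}$. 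Since the overlap is at least $\sqrt{2/3}$ by the previous paragraph, a constant angular rotation suffices per step, but to drive the per-step error down to $\bigO{\delta/k}$ I would invoke $\bigO{\log(k/\delta)}$ rounds of fixed-point amplification, each requiring one amplified-block-encoding call, i.e.\ $\bigO{\log(k/\delta)/\sqrt{\lambda_{gap}(\vec{\CD}_{\beta_{j+1}})}}$ raw discriminant calls per step. An additional factor of $\bigO{\log(1/\text{(QSVT error)})} = \bigO{\log(k/\delta)}$ accounts for the polynomial-approximation error in the gap amplification, giving $\bigO{\log^2(k/\delta)/\sqrt{\lambda_{gap}(\vec{\CD}_{\beta_{j+1}})}}$ per step.

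Finally I would sum over the $k = \bigO{\beta\norm{\vH}}$ steps. Error accumulates additively under composition of near-isometries, so choosing per-step error $\bigO{\delta/k}$ gives total error $\bigO{\delta}$ in the final state; the total oracle cost telescopes to
\begin{align}
\sum_{j=1}^{k}\bigO{\frac{\log^2(k/\delta)}{\sqrt{\lambda_{gap}(\vec{\CD}_{\beta_j})}}} \;=\; \bigO{\frac{\beta\norm{\vH}}{\min_j\sqrt{\lambda_{gap}(\vec{\CD}_{\beta_j})}}\,\log^2\!\big(\beta\norm{\vH}/\delta\big)},
\end{align}
using $k=\bigO{\beta\norm{\vH}}$ inside the logarithm. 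The main obstacle I anticipate is the careful bookkeeping of errors: the QSVT amplification is only approximate (it does not block-encode an exact projector, only a polynomial), so the "reflections" fed into fixed-point amplitude amplification are slightly imperfect, and one must check that fixed-point AA is robust to this and that the errors from imperfect reflections, imperfect input states, and the $\tfrac1{10}$-discrepancy between $\ket{\lambda_1(\vec{\CD}_{\beta_j})}$ and $\ket{\sqrt{\vrho_{\beta_j}}}$ all compose gracefully without blowing up the constants; the overlap bound~\eqref{eq:largeOverlap} only barely clears the threshold needed, so the tolerances have to be tracked honestly rather than hidden in $\bigO{\cdot}$. The adiabatic-path-length analysis itself (number of steps scaling as $\beta\norm{\vH}$) is essentially forced by \autoref{prop:consecutive_overlaps} and is the easy part.
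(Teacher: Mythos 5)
Your proposal is correct and takes essentially the same approach as the paper: QSVT-based gap amplification to build an $\bigO{\sqrt{\delta}/k}$-approximate projector onto each top eigenvector at cost $\bigO{\log(k/\delta)/\sqrt{\lambda_{gap}}}$, followed by $\bigO{\log(k/\delta)}$ rounds of fixed-point amplitude amplification per temperature step, with the overlap bound from \autoref{prop:consecutive_overlaps} ensuring each step succeeds from a constant overlap. The paper's proof is terser and leaves the overlap and error-composition bookkeeping implicit; your version spells those out, which is a reasonable elaboration rather than a different argument.
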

\begin{proof}
Use fixed-point amplitude amplification~\cite{yoder2014FixedPointSearch} to ``jump'' between the eigenvectors $\ket{\lambda_1(\vec{\CD}_{\beta_{j}}}$. We can implement a $\frac{\sqrt{\delta}}{k}$-approximate projector to each eigenvector with $\CO(\frac{\log(k/\delta)}{\sqrt{\lambda_{gap}(\vec{\CD}_{\beta_j})}})$ calls to the block-encoded discriminants $\vec{\CD}_{\beta_j}$. To ensure that all the $k$ jumps are all approximated to error $\leq \frac{\delta}{k}$, all fixed-point amplitude amplification consists of $\log(k/\delta)$ rounds, each calling the block-encoded (approximate) projectors.
\end{proof}
\subsection{A simple lower bound on $\beta$ dependence}\label{sec:beta_lowerbound}
In this section, we prove a simple lower-bound for the temperature dependence in the sense of implementing a reflection about the purified Gibbs state.

\begin{restatable}[Lower-bound on simulation time]{prop}{betaLowerBound}\label{prop:beta_lower_bound}
	A circuit implementing the reflection operator
	\begin{align}
		\vR_{\beta, \vH } : =\vI - \ket{\sqrt{\vrho_{\beta,\vH}}}\!\bra{\sqrt{\vrho_{\beta,\vH}}}
	\end{align}
	using Hamiltonian simulation for $\vH$ as a black-box must use Hamiltonian simulation time $T = \Omega(\beta)$.
\end{restatable}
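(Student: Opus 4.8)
\textbf{Proof plan for \autoref{prop:beta_lower_bound}.}

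The plan is to reduce a known lower bound for unstructured search (or phase estimation/Hamiltonian simulation hardness) to the task of reflecting about the purified Gibbs state. First I would pick a hard family of Hamiltonians for which implementing $\vR_{\beta,\vH}$ would leak too much information too cheaply. The natural choice is a diagonal ``needle in a haystack'' Hamiltonian on $n$ qubits: let $\vH_x := -\ketbra{x}{x}$ for an unknown $x\in\{0,1\}^n$, so that $\e^{-\ri \vH_x t} = \vI + (\e^{\ri t}-1)\ketbra{x}{x}$ — a controlled-phase ``oracle'' for $x$ that costs one unit of Hamiltonian simulation time to apply with phase $\e^{\ri}$ (and scaling the time lets us dial any phase). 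The Gibbs state at inverse temperature $\beta$ is $\vrho_{\beta,\vH_x} \propto \e^{\beta \ketbra{x}{x}} = \vI + (\e^{\beta}-1)\ketbra{x}{x}$, so the purified Gibbs state $\ket{\sqrt{\vrho_{\beta,\vH_x}}}$ has squared overlap $\Theta(\e^{\beta}/(2^n + \e^\beta))$ with the marked basis state $\ket{x}\otimes\ket{x^*}$ (and is otherwise nearly the maximally entangled state). In particular, once $\beta \gtrsim n$, the purified Gibbs state is \emph{dominated} by the marked component, so a reflection $\vR_{\beta,\vH_x}$ about it is essentially a reflection about $\ket{x}\ket{x^*}$.

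Next I would argue that such a reflection solves search. Starting from the maximally entangled state (or a uniform superposition on the first register with a fixed ancilla), a single application of $\vR_{\beta,\vH_x}$ followed by a reflection about the starting state is one Grover iterate; since the starting overlap with $\ket{x}\ket{x^*}$ is $\Theta(2^{-n/2})$ when $\beta=\Theta(n)$, performing $\Theta(2^{n/2})$ such iterates would find $x$ with constant probability. But finding $x$ requires $\Omega(2^{n/2})$ queries to the phase oracle $\e^{-\ri\vH_x t}$ by the standard BBBV/polynomial-method lower bound (here ``queries'' means total Hamiltonian-simulation time, since each query uses $\bigO{1}$ time). Hence each call to $\vR_{\beta,\vH_x}$ must itself use $\Omega(1)$ units of Hamiltonian simulation time in this regime — which only gives the trivial $\Omega(1)$, not $\Omega(\beta)$. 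To extract the linear-in-$\beta$ scaling, I would instead track the \emph{amplitude-amplification} cost more carefully: a circuit implementing $\vR_{\beta,\vH}$ with Hamiltonian-simulation time $T$ can be viewed as a unitary making $\bigO{T}$ infinitesimal queries to $\e^{-\ri\vH s}$; by the hybrid argument, the state it produces can differ from the no-query state by at most $\bigO{T}$ in a suitable sense, so distinguishing the marked component (whose weight in the target state is $\sim\e^{\beta-n}$ versus an unmarked ``decoy'' where it is $0$) forces $T$ to grow. Balancing $n$ against $\beta$ — taking $2^n \sim \e^{\beta}$ so that the marked and unmarked cases have constant-order-different overlap structure — yields $T = \Omega(n) = \Omega(\beta)$.

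The cleanest route, which I would actually write up, is probably the following: fix $n$ with $2^n = \Theta(\e^{\beta})$, let $\vH_0 = 0$ (so $\vrho_{\beta,\vH_0}$ is maximally mixed and $\ket{\sqrt{\vrho_{\beta,\vH_0}}}$ is maximally entangled) and $\vH_x = -\ketbra{x}{x}$; then $\ket{\sqrt{\vrho_{\beta,\vH_x}}}$ has a constant-order component along $\ket{x}\ket{x^*}$, so $\vR_{\beta,\vH_x}$ and $\vR_{\beta,\vH_0}$ act very differently. A single use of each reflection, composed with a fixed state-independent reflection, gives a one-query distinguisher between ``$\vH = \vH_0$'' and ``$\vH = \vH_x$ for random $x$,'' hence an algorithm that, amplified $\bigO{2^{n/2}}$ times, identifies whether a marked point exists and locates it; the quantum search lower bound then forces the per-call Hamiltonian simulation time to be $\Omega(2^{n/2}/2^{n/2}) \cdot (\text{correction})$ — and it is exactly this bookkeeping, relating ``total query cost $\ge 2^{n/2}$'' back to ``cost per reflection'' through the number of amplification rounds, that I expect to be the main obstacle, since a naive argument only yields $\Omega(1)$. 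The resolution is to not amplify at all but to directly compare the \emph{output states} of the candidate $\vR$-circuit on the maximally entangled input for $\vH_0$ versus $\vH_x$: the hybrid/Zeno argument bounds the trace distance of these outputs by $\bigO{T/2^{n/2}}$-type quantities under a single use, yet the definition of $\vR_\beta$ forces this distance to be $\Omega(1)$, giving $T = \Omega(2^{n/2})$ if one forbids superpolynomial $n$, or — tuning $2^n=\Theta(\e^\beta)$ and using a weaker, information-theoretic (not search) argument about how much a cheap circuit can ``know'' about the $\e^{\beta}$-weighted direction — the advertised $T = \Omega(\beta)$.
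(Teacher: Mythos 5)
Your proposal takes a genuinely different route from the paper, and it does not ultimately close the argument. The paper's proof is a \emph{first-order sensitivity} argument on a fixed two-dimensional example: take $\vH = \ketbra{0}{0} + \ketbra{1}{1}$ and $\vH' = \ketbra{0}{0} + (1+\epsilon)\ketbra{1}{1}$, compute that the angle between $\ket{\sqrt{\vrho_{\beta,\vH}}}$ and $\ket{\sqrt{\vrho_{\beta,\vH'}}}$ is $\Theta(\beta\epsilon)$ for small $\epsilon$, conclude that $\nrm{\vR_{\beta,\vH}-\vR_{\beta,\vH'}}=\Omega(\beta\epsilon)$, and compare this to the fact that a circuit using Hamiltonian simulation time $T$ can only shift by $\nrm{\vH-\vH'}T=\epsilon T$ when $\vH\to\vH'$. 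Taking $\epsilon\to 0$ gives $T=\Omega(\beta)$ on a single qubit, with no averaging, no counting argument, and no scaling of the Hilbert-space dimension.

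Your approach — needle-in-haystack Hamiltonians $\vH_x=-\ketbra{x}{x}$ with $2^n\sim\e^{\beta}$, a Grover/hybrid argument, and tracking amplitude on the marked component — is a fundamentally different tool, and you correctly sense that it is the wrong one for this particular statement. Two concrete issues. First, your hybrid bound ``distance $\leq O(T/2^{n/2})$ under a single use'' is not valid for a fixed $x$: that $2^{-n/2}$ factor only appears after averaging over $x$ (BBBV-style), because the circuit does not know $x$ and its amplitude on $\ket{x}$ is only $\Theta(2^{-n/2})$ \emph{on average}. Your write-up glosses over this averaging step. Second, if you do carry it through, the conclusion is $T=\Omega(2^{n/2})=\Omega(\e^{\beta/2})$ — an exponential-in-$\beta$ bound, not the modest $\Omega(\beta)$ claimed. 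That exponential lower bound is internally consistent (it matches the fact that the needle-in-haystack Hamiltonian at $\beta\sim n$ has exponentially small \Lword{} gap), but it is a \emph{gap-dependent} statement in a scaling regime where the dimension grows with $\beta$, which is a different beast from the unconditional $\Omega(\beta)$ in fixed dimension that the proposition actually claims. You acknowledge this mismatch yourself at the end, and the step you hope will rescue $\Omega(\beta)$ from the search framing (``a weaker, information-theoretic argument'') is not supplied. In short: you have the right intuition that the reflection must be sensitive to $\vH$, but you reach for the counting/search hammer when the intended argument is a one-parameter perturbation in $\vH$ at fixed dimension, exploiting only that the derivative of $\ket{\sqrt{\vrho_{\beta,\vH}}}$ with respect to $\vH$ scales linearly with $\beta$.
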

\begin{proof}
The idea is to argue that the reflection operator is sensitive to $\beta$ and the Hamiltonian $\vH$, so the Hamiltonian simulation time $T$ cannot be too short.
First, we control the norm change of the reflection operator. Let $\vrho_{\beta,\vH}=: \vrho$ and $\vrho_{\beta,\vH'}=: \vrho'$, then 
\begin{align}
    \norm{\vR_{\beta, \vH }-\vR_{\beta, \vH'}} &= \norm{ \ketbra{\sqrt{\vrho}}{\sqrt{\vrho}} - \ketbra{\sqrt{\vrho'}}{\sqrt{\vrho'}} }\\
    & \ge  \labs{\bra{\sqrt{\vrho}} \L(\ketbra{\sqrt{\vrho}}{\sqrt{\vrho}} - \ketbra{\sqrt{\vrho'}}{\sqrt{\vrho'}}\R) \ket{\sqrt{\vrho}^{\perp}}}\\
    & = \labs{\braket{\sqrt{\vrho}| \sqrt{\vrho'}}\braket{\sqrt{\vrho'}|\sqrt{\vrho}^{\perp}}}\\
    & = \sqrt{ 1 - \theta^2 } \cdot \theta
\end{align}
where $\theta:= \sqrt{1- \labs{\braket{\sqrt{\vrho}|\sqrt{\vrho'}}}^2}$ and
\begin{align}
 \ket{\sqrt{\vrho}^{\perp}} \in Span\{ \ket{\sqrt{\vrho}}, \ket{\sqrt{\vrho'}}\} \quad \text{such that}\quad \braket{\sqrt{\vrho}^{\perp}|\sqrt{\vrho}} = 0.
\end{align}
Now, for infinitesimal $\epsilon \rightarrow 0$, let
\begin{align}
    \vH &= \ketbra{0}{0} + \ketbra{1}{1}\\
    \vH'&= \ketbra{0}{0} + (1+\epsilon)\ketbra{1}{1}
\end{align}
such that
\begin{align}
    \ket{\sqrt{\vrho}} &\propto \ket{00} + \ket{11}\\
    \ket{\sqrt{\vrho'}} &\propto \ket{00} + (1+ \beta \epsilon/2)\ket{11} + \CO(\epsilon^2).
\end{align}
Direct calculation gives $\theta = \Omega(\beta \epsilon)$ for small $\epsilon\rightarrow 0$.
To conclude the proof, suppose the block-box circuit uses only Hamiltonian simulation time $T$. Then, the resulting circuits for $\vH$ and $\vH'$ can only differ by $\norm{\vH-\vH'}T = \epsilon T$. Therefore,  for small $\epsilon\rightarrow 0$,
\begin{align}
    T \epsilon\geq \sqrt{ 1 - \theta^2 }\cdot \theta = \Omega(\beta \epsilon ),
\end{align}
proving the advertised result.
\end{proof}
The above sensitivity argument similarly applies to algorithms preparing the Gibbs state using black-box Hamiltonian simulation.
\begin{restatable}[Lower-bound on simulation time]{prop}{betaLowerBoundmixed}\label{prop:beta_lower_bound_mixed}
	A circuit preparing the Gibbs state $\vrho_{\beta}\propto \e^{-\beta \vH}$ using Hamiltonian simulation for $\vH$ as a black-box must use Hamiltonian simulation time $T = \Omega(\beta)$.
\end{restatable}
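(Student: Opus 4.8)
The plan is to mimic the sensitivity argument of \autoref{prop:beta_lower_bound} but now at the level of the prepared density matrix rather than a reflection operator. The key observation is that the Gibbs state $\vrho_\beta \propto \e^{-\beta\vH}$ changes by an amount proportional to $\beta$ under an infinitesimal perturbation of $\vH$, whereas a black-box circuit using total Hamiltonian simulation time $T$ can only produce outputs that differ by an amount proportional to $\epsilon T$ when the Hamiltonian is perturbed by $\epsilon$ in operator norm. Comparing the two forces $T = \Omega(\beta)$.

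Concretely, first I would fix the two-dimensional example already used in \autoref{prop:beta_lower_bound}: take $\vH = \ketbra{0}{0} + \ketbra{1}{1}$ and $\vH' = \ketbra{0}{0} + (1+\epsilon)\ketbra{1}{1}$, so $\norm{\vH - \vH'} = \epsilon$. Then compute the two Gibbs states explicitly; one finds $\vrho_\beta = \frac{1}{2}(\ketbra{0}{0}+\ketbra{1}{1})$ and $\vrho_{\beta}' \propto \ketbra{0}{0} + \e^{-\beta\epsilon}\ketbra{1}{1}$, so that $\normp{\vrho_\beta - \vrho_\beta'}{1} = \Omega(\beta\epsilon)$ for small $\epsilon\rightarrow 0$. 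Next I would invoke the standard black-box / query-complexity continuity fact: if a circuit $\vC$ makes calls to a black box implementing $\e^{\ri\vH s}$ with total simulation time $\sum |s| \le T$, then replacing $\vH$ by $\vH'$ changes the output state of the circuit (in trace distance, after tracing out ancillas) by at most $\CO(\norm{\vH - \vH'}\, T) = \CO(\epsilon T)$. This follows by a hybrid argument over the individual controlled-$\e^{\ri\vH s}$ gates together with $\normp{\e^{\ri \vH s} - \e^{\ri \vH' s}}{} \le |s|\norm{\vH-\vH'}$ and the fact that trace distance is contractive under channels and subadditive under composition.

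Combining the two estimates: if the circuit prepares $\vrho_\beta$ exactly for $\vH$ and (by the black-box guarantee instantiated at $\vH'$) prepares $\vrho_\beta'$ for $\vH'$, then $\Omega(\beta\epsilon) = \normp{\vrho_\beta - \vrho_\beta'}{1} \le \CO(\epsilon T)$, hence $T = \Omega(\beta)$. The same works if the circuit is only required to prepare the Gibbs state to some small constant trace-distance error, by absorbing that error into the constants and taking $\epsilon$ small enough. I would present this as a short paragraph-style proof.

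The main obstacle — really the only subtle point — is stating the black-box continuity lemma cleanly: one must be careful that "Hamiltonian simulation time $T$" is interpreted as the accumulated evolution time $\sum_i |s_i|$ over all (controlled) queries $\e^{\ri \vH s_i}$, and that the hybrid argument is applied to controlled versions of these unitaries (which only increases the distance bound by a constant). Everything else is the same two-level calculation already carried out in \autoref{prop:beta_lower_bound}, so I do not expect any genuine difficulty; the proof is essentially a one-line corollary of that sensitivity analysis once the output object is changed from the reflection $\vR_{\beta,\vH}$ to the state $\vrho_\beta$ and the operator norm is replaced by the trace norm.
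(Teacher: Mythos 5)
Your proof is correct and follows essentially the same approach as the paper: a two-level sensitivity argument comparing Gibbs states for two nearby Hamiltonians, combined with the standard hybrid-argument bound on how much a black-box circuit's output can change under an $\epsilon$-perturbation of the oracle. The paper's proof is in fact terser than yours — it leaves the hybrid argument implicit (``which implies $T=\Omega(\beta)$ as advertised'') — so your explicit statement of the black-box continuity lemma is a welcome addition, not a departure.
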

\begin{proof}
Again, consider $\vH$ and $\vH'$ as above and their Gibbs states
\begin{align}
    \vrho &\propto \ket{0}\bra{0} + \ket{1}\bra{1}\\
    \vrho' &\propto \ket{0}\bra{0} + (1+ \beta \epsilon)\ket{1}\bra{1} + \CO(\epsilon^2).
\end{align}
Then, for infinitesimal $\epsilon\rightarrow 0$, 
\begin{align}
    \norm{\vrho - \vrho'}_1 \ge \Omega(\beta\epsilon),
\end{align}
which implies $T=\Omega(\beta)$ as advertised. 
\end{proof}

\section{Impossibility of boosted shift-invariant in-place phase estimation}\label{sec:impossible}

In this section, we include the proof that certain ``boosted shift-invariant in place phase estimation'' utilized in~\cite{temme2009QuantumMetropolis} is impossible. The impossibility result was developed in parallel with this work; we reproduce the main argument here with the permission of András Gilyén and Dávid Matolcsi until their manuscript becomes publicly available.

We begin by reviewing the phase estimation assumptions made by~\cite{temme2009QuantumMetropolis}. First, they assume~\cite[Eqn.(11), Supplemental Information]{temme2009QuantumMetropolis} the phase estimation map is shift-invariant in the sense that
\begin{align}
    \Phi := \sum_{\bnu} \sum_{\bmu}\vM^{\bmu}_{\bnu} \otimes \ket{\bnu} \bra{\bmu}\quad \text{where} \quad \vM^{\bmu}_{\bnu} &:= \sum_i \alpha(E_i,\bnu-\bmu) \ket{\psi_i}\bra{\psi_i}\quad \text{and}\quad \bnu, \bmu \in \{\BZ \bomega\} \label{eq:M_nu_mu}.
\end{align} 
In~\cite{temme2009QuantumMetropolis}, the shift-invariance was shown to hold for an unboosted phase estimation unitary~\cite[Eqn.(12), Supplemental Information]{temme2009QuantumMetropolis}, which has a slowly decaying tail when $E_i$ deviates substantially from $\bnu - \bmu$. 

Second, to prove the correctness of the fixed point, they impose~\cite[Eqn.(10), Supplemental Information]{temme2009QuantumMetropolis} that the profile $\alpha(E_i,\bnu-\bmu)$ can be boosted:\footnote{This is explicitly stated in the last paragraph of page 13~\cite[Supplemental Information]{temme2009QuantumMetropolis}: ``According to (10) we can replace the function $f(E_j, k - p)$ by its enhanced counterpart $\alpha_{E_j}(k - p)$, which acts as a binary amplitude for the two closest $r$-bit integers to the actual energy $E_j$.''} the only nonzero matrix elements are such that
\begin{align}\label{eq:boosted}
 |\bnu - \bmu - E_i| < \bomega,
\end{align}
where $\bomega$ is the energy resolution for the phase estimation readout registers. Unfortunately, these two assumptions are not compatible with each other, as argued by the following.

\begin{prop}[Impossibility for shift-invariant boosting]
There exist no continuous family of ``boosted shift-invariant in place phase estimation'' unitaries. More precisely, for every constant $k$ for large enough $N$ there exists no profile $\alpha\colon \mathbb{R}\times \frac{2\pi}{N}\cdot\{0,1,2,\ldots, N-1\}\rightarrow \mathbb{C}$ that simultaneously satisfies:
\begin{itemize}
	\item (almost) unitarity: $\vM_E := \sum_{i,j=0}^{N-1} \alpha(E,\frac{2\pi}{N}(i-j\mod N))  \ketbra{i}{j}$ is close to some unitary $\vU_E$ for all $E\in \mathbb{R}\colon$ $\nrm{\vM_E - \vU_E}\leq \frac{1}{2}$
	\item boosting: $\alpha(E,\bnu-\bmu)=0$ if $|\bnu - \bmu - E \mod 2\pi| > k\frac{2\pi}{N}$
	\item continuity: $\alpha(E,\bnu)$ depends continuously on $E$
\end{itemize}
\end{prop}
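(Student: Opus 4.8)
The plan is to exploit the tension between the boosting condition, which forces $\vM_E$ to be banded around the diagonal shift by $E$, and shift-invariance plus continuity, which forces the band to ``wrap around'' as $E$ sweeps through a full period $2\pi$. Concretely, I would first observe that by the boosting hypothesis, for each fixed $E$ the matrix $\vM_E$ is supported only on the entries $(i,j)$ with $i-j \equiv m \pmod N$ for integers $m$ satisfying $|m\cdot\frac{2\pi}{N} - E \bmod 2\pi| \le k\frac{2\pi}{N}$; that is, at most $2k+1$ diagonals (cyclically) are nonzero, and \emph{which} diagonals these are is determined by rounding $\frac{NE}{2\pi}$ to the nearest integer. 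So as $E$ increases past a half-integer multiple of $\frac{2\pi}{N}$, the set of active diagonals shifts by one. This is the discrete ``monodromy'' that will clash with continuity.

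The key quantitative step is a rank/near-unitarity argument. Since $\nrm{\vM_E-\vU_E}\le\frac12$ for a unitary $\vU_E$, all singular values of $\vM_E$ lie in $[\frac12,\frac32]$, so in particular $\vM_E$ has full rank $N$ and $\nrm{\vM_E^{-1}}\le 2$. Now I would compare $\vM_0$ and $\vM_{E^*}$ where $E^* = \frac{2\pi k'}{N}$ for a suitable integer $k'$ with $k < k' < N-k$ (possible once $N > 2k+1$): $\vM_0$ is supported on cyclic diagonals $\{-k,\dots,k\}$ while $\vM_{E^*}$ is supported on cyclic diagonals $\{k'-k,\dots,k'+k\}$, and these two diagonal-sets are disjoint modulo $N$. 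Consider the intermediate values $E$ along the path from $0$ to $E^*$. By continuity of $\alpha$ in $E$, the function $E\mapsto \vM_E$ is continuous; hence the set of $E$ for which a \emph{given} off-diagonal $m$ carries nonzero weight is relatively closed, and the ``hand-off'' between consecutive diagonal-windows must happen through intermediate $E$ where \emph{both} windows are (at least partially) active or where some diagonal fades to zero. I would formalize this by picking, for each half-integer threshold $E_\ell := \frac{2\pi}{N}(\ell+\tfrac12)$, the limit from below and from above: just below $E_\ell$ the active window is centered at $\ell$, just above it is centered at $\ell+1$, and by continuity $\vM_{E_\ell}$ must be supported on the \emph{union} of these two windows, i.e.\ on diagonals $\{\ell-k,\dots,\ell+1+k\}$ — a band of width $2k+2$.

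The contradiction then comes from a counting/norm estimate on such a narrow cyclic band. A matrix supported on $2k+2$ cyclic diagonals, with all entries bounded in modulus (by near-unitarity, each entry is at most $\nrm{\vM_E}\le\frac32$), can be split as $\vM_{E_\ell} = \vB + \vC$ where $\vB$ is the ``non-wrapping'' part (genuine band) and $\vC$ collects the wrap-around corners; when $2k+2 \ll N$, the genuinely banded part $\vB$ of a near-unitary forces, via the standard argument that a banded approximately-unitary matrix is approximately a banded \emph{exact} unitary (which for a band of fixed width $2k+2$ on an $N\times N$ circulant-type structure is impossible to be invertible with $\nrm{\cdot^{-1}}\le 2$ once $N$ is large, because such matrices have determinant constrained and their Gershgorin-type spectral spread is too small) — here I would instead run the cleanest version: take the vector $\ket{v} = \frac{1}{\sqrt N}\sum_j \omega^{j}\ket{j}$ for a primitive $N$-th root of unity $\omega = e^{2\pi i /N}$, a Fourier mode; applying a matrix supported on diagonals in a window $W$ of size $2k+2$ sends $\ket{v}$ into the span of at most $|W|$ Fourier modes near this one, and by examining two well-separated thresholds $E_\ell$ and $E_{\ell'}$ whose windows are disjoint mod $N$, I get two approximate-unitaries $\vM_{E_\ell},\vM_{E_{\ell'}}$ whose images of the \emph{same} low-frequency subspace are nearly orthogonal, yet both must be within $\frac12$ of unitaries — summing dimensions of nearly-orthogonal nearly-full-rank images around the cyclic group exceeds $N$, the contradiction.

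\textbf{Main obstacle.} The delicate point I expect to be hardest is making the ``continuity forces the band to widen only to $2k+2$ at the hand-off, never to jump discontinuously'' step fully rigorous, and then converting the resulting fixed-width-band-plus-near-unitarity into a clean dimension count that definitely exceeds $N$ for $N$ large relative to $k$. The cleanest route is probably the Fourier-mode / approximate-orthogonality packing argument sketched last: show that for $\Theta(N/k)$ values of $\ell$ the subspaces $\vM_{E_\ell}(V_0)$ (for $V_0$ a fixed $\Theta(1)$-dimensional low-frequency subspace) are pairwise at angle bounded below by a constant, while each $\vM_{E_\ell}$ being $\tfrac12$-close to a unitary makes each such subspace have dimension $\dim V_0$ and small pairwise overlap — so $\Theta(N/k)\cdot\dim V_0 \le N(1+o(1))$, which is fine, so one actually needs the stronger statement that these images \emph{tile} almost all of $\mathbb{C}^N$; I would instead track the support structure directly: each $\vM_{E_\ell}$ restricted to the single Fourier mode $\ket{v_p}$ (frequency $p$) lands in $\mathrm{span}\{\ket{v_{p'}} : |p'-p|\le k+1 \text{ or wrapping}\}$ only if $\vM_{E_\ell}$ were circulant, which it is \emph{not} in general — so the genuinely careful version must use that the deviation of $\vM_{E_\ell}$ from circulant is controlled and conclude that a width-$(2k+2)$ band near-unitary has smallest singular value $o(1)$ when $N\to\infty$, contradicting the $\ge\frac12$ bound. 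Establishing that last spectral fact (narrow cyclic band $\Rightarrow$ small $\sigma_{\min}$ for large $N$) is the technical crux and is where I would spend the real work, likely via an explicit test vector localized in Fourier space that the band matrix nearly annihilates.
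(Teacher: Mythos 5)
There is a genuine gap, and it is more fundamental than the one you flag. The technical crux you identify --- ``a cyclic band matrix of width $2k+2$ that is $\tfrac12$-close to unitary must have $\sigma_{\min}=o(1)$ when $N\to\infty$'' --- is simply false. The cyclic shift matrix $\vS=\sum_j\ketbra{j+1 \bmod N}{j}$ is supported on a single cyclic diagonal and is exactly unitary; more generally, any sharply peaked convolution kernel whose DFT has unit modulus gives a narrow-band circulant unitary, for arbitrarily large $N$. So no argument that only uses ``the band is narrow and $\vM_E$ is almost unitary'' for a single value of $E$ can succeed. Note also that $\vM_E$ as defined is \emph{exactly} circulant (the $(i,j)$ entry depends only on $i-j \bmod N$), so the worry about ``deviation of $\vM_{E_\ell}$ from circulant'' is moot, and the ``hand-off widening'' picture is off as well: the boosting hypothesis is a hard constraint that holds at every $E$, including the half-integer thresholds, where it forces the band to have exactly $2k$ active diagonals, not $2k+2$ --- continuity does not override boosting, it only controls how the nonzero coefficients fade.

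The missing idea is to convert the continuity-in-$E$ into a statement about a \emph{polynomial}. Because $\vM_E$ is circulant, its spectral data are captured by the symbol polynomial $p_E(z)=\sum_{j}\alpha(E,j\tfrac{2\pi}{N})z^j$, which one may take over the $O(k)$ consecutive diagonals that can ever be active along the path from $E_0=k\tfrac{2\pi}{N}$ to $E_1=(3k+1)\tfrac{2\pi}{N}$. At $E_0$ the boosting condition makes $p_{E_0}$ a degree-$\le 2k$ polynomial, hence with at most $2k$ roots inside the unit disk; at $E_1$ boosting forces $p_{E_1}(z)=z^{2k+1}q(z)$, so $0$ is a root of multiplicity $\geq 2k+1$ and there are at least $2k+1$ roots inside the disk. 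Since $E\mapsto\alpha(E,\cdot)$ is continuous, the multiset of roots of $p_E$ moves continuously, so some root must cross the unit circle at some intermediate $E'$: $p_{E'}(\e^{2\pi\ri\varphi})=0$. The corresponding ``off-lattice'' plane wave $\sum_j z^{-j}\ket{j}$ with $z=\e^{2\pi\ri\varphi}$ is then annihilated by all rows $i\leq N-4k-2$ of $\vM_{E'}$, since those rows have no wrap-around and the inner product is $z^{-i}p_{E'}(z)=0$; the remaining $O(k)$ rows contribute only $O(k^3)$ to $\lVert\vM_{E'}\sum_j z^{-j}\ket{j}\rVert^2$, versus $N$ for the vector's norm, contradicting $\sigma_{\min}(\vM_{E'})\geq\tfrac12$ once $N\gg k^3$. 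So the obstruction is not ``narrow band'' per se but an intermediate-value argument on the roots of the circulant symbol as $E$ sweeps through a band's width; that step is absent from your outline.
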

\begin{proof}
We prove the statement by contradiction. Let us assume that such a profile exists for $N\gg k^3$.
We will track how the profile changes as we increase the energy from $E_0=k\frac{2\pi}{N}$ to $E_1=(3k+1)\frac{2\pi}{N}$. 

Now, consider the polynomial $p_E(z)=\sum_{j=0}^{4k+1}\alpha(E,j\frac{2\pi}{N}) z^{j}$ (whose physical meaning will become clear). Due to boosting the polynomial $p_{E_0}(z)$ has a degree at most $2k$, so it has at most $2k$ roots, and in particular, at most $2k$ roots are situated within the complex unit circle. Due to continuity, the polynomial $p_E(z)$ is continuously transformed to $p_{E_1}(z)$ whose smallest nonzero coefficient comes with a power of $z$ at least $2k+1$ due to boosting. This implies that $0$ is a root with multiplicity at least $2k+1$, and in particular, we have at least $2k+1$ roots within the unit circle. Since the polynomial changes continuously, its multi-set of roots also changes continuously, which means that at some point, a root must enter the unit circle\footnote{The idea of tracking the roots of this polynomial is due to Dávid Matolcsi.} (here we acknowledge that some roots enter from infinity when the degree of the polynomial increases but that does not affect our argument -- one can make this precise by tracking roots on the surface of the Riemann sphere). Thus, there is some energy $E'$ for which a complex unit number $z=\e^{2\pi i\varphi}$ is a root of the corresponding polynomial $p_{E'}(z)$. 

We show that this implies that the plane wave with quasi-momentum $\varphi$ is (almost) in the kernel of the shift-invariant matrix $\vM_{E'}$: for every $i\leq N-4k-2$ we have
\begin{align}
	\bra{i}\vM_{E'}\sum_{j=0}^{N-1}\ket{j} z^{-j}=\sum_{j=0}^{N-1}\alpha(E',\frac{2\pi}{N}(i-j\mod N))z^{-j}=z^{-i}p_{E'}(z)=0,
\end{align}
implying
\begin{align}
\nrm{\vM_{E'}\sum_{j=0}^{N-1}\ket{j} z^{-j}}^2=\bigO{k^3} \ll N =\nrm{\vU_E\sum_{j=0}^{N-1}\ket{j} z^{-j}}^2
\end{align}
a contradiction.
\end{proof}

\end{document}